\numberwithin{equation}{section}
\theoremstyle{plain}
\newtheorem{theorem}{Theorem}[section]
\newtheorem{lemma}[theorem]{Lemma}
\newtheorem{corollary}[theorem]{Corollary}
\newtheorem{question}[theorem]{Question} 
\newtheorem{remark}[theorem]{Remark}
\theoremstyle{definition}
\newtheorem{definition}[theorem]{Definition}
\newtheorem{example}[theorem]{Example}
\newtheorem{exercise}{E}[section]
 \DeclareMathOperator{\tr}{Tr}
\definecolor{orange}{rgb}{1,0.5,0}
\newcommand\C{\mathbb C}         
\newcommand\R{\mathbb R}         
\newcommand\Z{\mathbb Z}         
\newcommand\N{\mathbb N}         
\newcommand\Ha{\mathbb H}       
\newcommand\D{\mathbb D}
\renewcommand\Im{\operatorname{Im}}
\renewcommand\Re{\operatorname{Re}}
      \newcommand{\eps}{\varepsilon}
\newcommand{\Landauo}{{\scriptstyle\mathcal{O}}}
\newcommand{\diam}{\operatorname{diam}}
\newcommand{\Pro}{\mathbb P}
\newcommand{\supp}{\operatorname{supp}} 
\newcommand{\E}{\mathbb{E}}
\date{}
\begin{document}
\parindent 0pt 

\setcounter{section}{0}

\backgroundsetup{
scale=1,
angle=0,
opacity=1,
contents={\begin{tikzpicture}[remember picture,overlay]
 \path [left color = blue,middle color = white, right color = white] (current page.south west)rectangle (current page.north east);   
\end{tikzpicture}}
}

 \title{\vspace{-0.5cm}\textbf{{\Large A quantum invitation to probability theory}}\\[3.5cm]
\begin{tikzpicture}
\node[label={[label distance=0.5cm]90:\large \textit{Markov process}}] (A) at (-3,-2) [circle,shade,draw,top color=green] {$M_t$};
\node[label={[label distance=0.8cm]90:\large \textit{Quantum process}}] (B) at (3,-2) [fill=red!50] {$X_t$};
\node (C) at (9,-2) [fill=red!50]   {$X_t$};
\node[label={[label distance=0.3cm]270:\large \textit{Loewner chain}}] (D) at (9,-6) [circle,shading=ball, color=white] {$F_t$};
\node[radius=3mm,label={[label distance=0.1cm]268:\large \textit{Growing graphs}}] (E1) at (4,-8) [circle, fill=red] {};
\node[radius=3mm] (E2) at (4,-7.2) [circle, fill=red] {};
\node[radius=3mm] (E3) at (4,-6.4) [circle, fill=red] {};
\node[radius=3mm] (E4) at (4.8,-8) [circle, fill=red] {};
\node[radius=3mm] (E5) at (4.8,-7.2) [circle, fill=red] {};
\node[radius=3mm] (E6) at (4.8,-6.4) [circle, fill=red] {};
\coordinate[label={[label distance=-2mm]270: $\rightarrow$}] (AAA) at (2.8,-7.2);
\coordinate (BBB) at (2.8,-6.4);
\node[radius=3mm] (F1) at (1.8,-8) [circle, fill=red] {};
\node[radius=3mm] (F2) at (1.8,-7.2) [circle, fill=red] {};
\node[radius=3mm] (F3) at (1.8,-6.4) [circle, fill=red] {};
\node[radius=2mm] (H1) at (-3,-5) [circle, fill=white] {};
\node[radius=2mm] (H2) at (-3.8,-6) [circle, fill=white] {};
\node[radius=2mm,label={[label distance=0.2cm]270:\large \textit{Markov chain}}] (H3) at (-3,-7) [circle, fill=white] {};
\node[radius=2mm] (H4) at (-1.8,-5.5) [circle, fill=white] {};
\node[radius=2mm] (H5) at (-2,-7) [circle, fill=white] {};
\draw[->, very thick] (A) to[bend right=20]node[above] {$\mathbb{E}[\cdot|\mathcal{F}_t]$} (B);
\draw[->, very thick] (B) to node[above]{\footnotesize \textit{monotonically/freely}} node[below]{\footnotesize \textit{independent increments}}(C);
\draw[->, very thick] (C) to node[sloped,rotate=90]{\large $\displaystyle\varphi\left(\left(z-X_t\right)^{-1}\right)=\int_\R \frac{\mu_t(dx)}{z-x}$\hspace{1.4cm}${}$} (D) ;
\draw[->, very thick] (C) to (D);
\draw[->, very thick] (B) to[bend right=20] (BBB);
\draw[->, very thick] (B) to[bend right=20] (A);
\draw[-] (E1) to (E2);
\draw[-] (E2) to (E3);
\draw[-] (E1) to (E4);
\draw[-] (E2) to (E5);
\draw[-] (E3) to (E6);
\draw[-] (F1) to (F2);
\draw[-] (F2) to (F3);
\draw[->, thick] (A) to  (H1);
\draw[->, thick] (H1) to[bend right=20]node[left] {\small$0.7$} (H2);
\draw[->, thick] (H2) to[bend right=20]node[left] {\small$0.9$} (H3);
\draw[->, thick] (H1) to[bend left=20]node[above]  {\small$0.3$} (H4);
\draw[->, thick, thick] (H2) to[bend right=20]node[above] {\small$0.1$} (H4);
\draw[->, thick] (H5) to[bend right=20]node[right] {\small$0.1$} (H4);
\draw[->, thick] (H5) to[bend right=0]node[above]  {\small$0.9$} (H3);
\path[->, thick] (H4)   edge [loop right] node {\small$1.0$} ();
\end{tikzpicture}\\[3.5cm]
 \textbf{{\large Sebastian Schlei{\ss}inger}}
}

 \maketitle

\thispagestyle{empty}

\backgroundsetup{
scale=1,
angle=0,
opacity=1,
contents={\begin{tikzpicture}[remember picture,overlay]
 \path [left color = white,middle color = white, right color = white] (current page.south west)rectangle (current page.north east);   
\end{tikzpicture}}
}


\vspace*{\fill}
\textbf{\textit{\today}}

\tableofcontents

 \newpage

\textbf{\underline{Notation}}\\

\begin{center}
\begin{tabular}{l|l}
$2^M$& power set of the set $M$\\[2mm]
$\N$& $=\{1,2,...\}$, set of natural numbers\\[2mm]
$\N_0$& $=\{0,1,2,...\}$, set of non-negative integers\\[2mm]
$\R$& set of real numbers\\[2mm]
$\C$& set of complex numbers\\[2mm]
$\Ha$& $ = \{z\in\C\,|\, \Im(z)>0\}$, upper half-plane\\[2mm]
$\Ha^-$ & $= \{z\in\C\,|\, \Im(z)<0\}$, lower half-plane\\[2mm]
$\Pro$& probability measure\\[2mm]
$\mathbb{E}[X]$& classical expectation\\[2mm]
$\varphi(X)$& quantum expectation\\[2mm]
$\mathcal{B}(M)$& Borel $\sigma$-algebra of a metric space $M$\\[2mm]
$\mathcal{P}(\R)$&  set of all Borel probability measures on $\R$\\[2mm]
$\mathcal{P}_c(\R)$&  set of all $\mu\in\mathcal{P}(\R)$ with compact support\\[2mm]
$\varphi_{\mu}, \varphi_X$& characteristic function\\[2mm]
$\mathcal{N}(\mu,\sigma^2)$& normal distribution with mean $\mu$ and variance $\sigma^2$\\[2mm]
$A(\mu,\sigma^2)$& arcsine distribution with mean $\mu$ and variance $\sigma^2$\\[2mm]
$W(\mu,\sigma^2)$& semicircle distribution with mean $\mu$ and variance $\sigma^2$\\[2mm]
$G_\mu$& Cauchy transform\\[2mm]
$F_\mu$& $F$-transform\\[2mm]
$R_\mu$& $R$-transform\\[2mm]
$B_\mu$& Boolean transform\\[2mm]
$\mathcal{H}_\mu$&Hilbert transform\\[2mm]
$*$& classical (tensor) convolution / free product\\[2mm]
$\rhd$& monotone convolution / comb product\\[2mm]
$\lhd$& anti-monotone convolution\\[2mm]
$\boxplus$& free convolution\\[2mm]
$\uplus$& Boolean convolution\\[2mm]
$a^{\star n}$&the $n$th power of $a$ with respect to the operation $\star$, $a^{\star n}=a\star \ldots \star a$\\[2mm]
$\textbf{1}_A(x)$& the function $x\mapsto 1$ if $x\in A$, $x\mapsto 0$ if $x\not\in A$\\[2mm]
``a.s.''& almost surely\\[2mm]
``\textit{iid}'' & independent and identically distributed\\[2mm]
``variance $\sigma^2$''& always implies that $\sigma$ is the standard deviation, i.e.\ $\sigma\geq 0$
\end{tabular}
\end{center}

\chapter{Introduction}

These notes try to shine some light on the relations between the following three subjects:

\begin{center}
\begin{tikzpicture}
\node (A) at (0,0) [fill=red!20] {Quantum probability theory};
\node (B) at (-3,-2) [fill=red!20] {Classical probability theory};
\node (C) at (3,-2) [fill=red!20]   {${}$\qquad Complex analysis\qquad ${}$};
\draw[<->, very thick] (A) to (B);
\draw[<->, very thick] (A) to (C);
\draw[<->, very thick] (B) to (C);
\end{tikzpicture}\\[0cm]
\end{center}

The complex analyst might enjoy how holomorphic functions, in particular conformal mappings, appear as actors on the stage of quantum random variables, while the probabilist might discover that quantum probability offers a powerful framework that is far away from being used only in quantum mechanics. \\

Minimal required knowledge: any two of the three topics \emph{complex analysis}, \emph{classical probability theory}, \emph{Hilbert spaces}.\\

\underline{\textbf{Quantum probability spaces}}\\

According to M. Gromov, speaking of ``randomness'' in fields such as philosophy, psychology, natural evolution, and human history is completely meaningless:\footnote{Lecture on ``Probability, symmetry, linearity'', Institut des Hautes \'{E}tudes Scientifiques (IH\'{E}S).}

 \begin{center}``My point is, whenever you speak ``random'' [...] you have to have in mind a mathematical model. [...]
 Everything else is... I don't know what it is.''\end{center}

We adopt this point of view and immediately dive into mathematical models of randomness. We look for a mathematical model that is able to explain a number of observations \[x_1,...,x_N,\] which are real numbers as outputs of a black box function of reality. We might think of repeated observations from a physical experiment. A mathematical model of randomness should give us a probability distribution $\mu$ on $\R$, which we might use to make statistical predictions on some future observations.\\

\textbf{Classical probability theory.} In the 1930s, A. Kolmogorov developed the definition of a probability space as a triple 
\[ (\Omega, \mathcal{F}, \mathbb{P}), \] 
where $\Omega$ is a sample space, $\mathcal{F}$ is a set of events, which is a $\sigma$-algebra consisting of subsets of $\Omega$, and $\mathbb{P}$ is a function from $ \mathcal{F}$ into the interval $[0,1]$, yielding probabilities of events. \\
We often think of $\Omega$ as a mysterious, large set of outcomes working in the background.

In order to model the real numbers $x_1,...,x_N$, we would use a random variable on the probability space, which is simply a measurable function $X:\Omega \to \R$. Such a quantity works like a coordinate that projects $\Omega$ into $\R$ and it induces a probability measure $\mu$ on $\R$ by 
\[\mu(A)=\Pro[X\in A],\quad \text{for any Borel subset $A\subset \R$.  }\]

\textbf{Quantum mechanics.} Assume that $x_1, x_2,..., x_n$ are measurements of some property of a particle, e.g.\ the position or momentum of an electron. In quantum mechanics we also have a  mathematical model for these measurements yielding a probability distribution on $\R$, instead of a single point prediction. Let $H$ be a Hilbert space with inner product $\left<\cdot,\cdot\right>$  (we assume it is linear in the second argument) and let $X:H\to H$ be a (possibly unbounded) self-adjoint operator. Furthermore, let $\Psi \in H$ be a unit vector, the state of the quantum system. Then there is a unique probability distribution $\mu$, the spectral distribution of $X$,  defined  as  
\[\mu(A) = \left<\Psi, E_X(A) \Psi\right>,\quad \text{for any Borel subset $A\subset \R$},\]
where $E_X$ is the associated projection-valued measure.\\

Both models give a prediction $\mu$ for the observed values $x_1,...,x_n$. The differences of the models become clear when 
there are at least two measurements $x_1,..., x_n$ and $y_1,..., y_n$ of a different kind, which we model by adding a second random variable $Y:\Omega \to \R$ or a second operator $Y:H\to H$ respectively. Both frameworks also put a meaning to composed variables such as 
\[X+Y, \quad X\cdot Y, \quad X\cdot Y\cdot X+Y^2.\]
While  the product in the classical case (pointwise multiplication) is commutative, i.e.\ $X\cdot Y=Y\cdot X$, the product of operators on $H$ (the composition) is not commutative in general.  \\

Fortunately, \emph{quantum probability theory} (or \emph{noncommutative probability theory}) offers a framework for a more general probability theory which contains both models as special cases. 
In both cases, we have an algebra $\mathcal{A}$ of (bounded) random variables and an expectation functional $\varphi:\mathcal{A}\to\R$ with 
$\varphi(X)=\mathbb{E}[X]$ in the classical case and $\varphi(X)=\left<\Psi, X \Psi\right>$ in the quantum case. The distribution $\mu$ of $X$ can be defined via $\varphi(X^n) = \int_\R x^n \mu(dx)$ in both cases.\\ 
In quantum probability theory, one defines an abstract quantum probability space as an algebra $\mathcal{A}$ (with more or less additional structures), together with an expectation functional $\varphi:\mathcal{A} \to \R$.\\
The important notion of \emph{independence} of random variables can now still be defined in this framework. Interestingly, there are now five different possible definitions of independence (and here the theory splits into five branches). In particular, there are 
\begin{itemize}
\item five central limit theorems for independent and identically distributed quantum random variables,
\item five Poisson limit theorems,
	\item five classes of quantum stochastic processes with independent increments, etc.
\end{itemize}

\vspace{2mm}
 
\underline{\textbf{Complex analysis}}\\

Instead of real algebras we will rather consider complex algebras $\mathcal{A}$ with a linear mapping $\varphi:\mathcal{A} \to \C$. 
This is more helpful and more elegant even if we are only interested in real random variables. In fact, this trick is applied already in classical probability theory. Consider the characteristic function or Fourier transform of a real random variable $X$, given by
\[ \varphi_X(t) = \int_\R e^{itX(\omega)} {\rm d}\Pro(\omega), \quad t\in\R.\]
This function is simply the expected value of the complex random variable $e^{itX}$, i.e.\ $\varphi_X(t) = \mathbb{E}[e^{itX}].$
It plays an important role in classical probability theory, as it encodes not only distributions as functions from $\R$ to $\C$, but also the weak convergence of distributions, which corresponds to pointwise convergence of the characteristic functions. Furthermore, the independence of random variables can be simply expressed by higher dimensional characteristic functions.
In some parts of quantum probability theory, the role of $\varphi_X$ is replaced by the Cauchy transform
\[ G_X(z) = \mathbb{E}\left[\left(z-X\right)^{-1}\right], \quad \Im(z)>0. \]
 $G_X$ is a holomorphic function on the upper half-plane  (and maps it into the lower half-plane).
A complex analyst could ask the following fun question. How does $X$ have to be distributed such that $G_X$ is injective on $\Ha$, and thus maps $\Ha$ conformally onto a simply connected subdomain of the lower half-plane? Due to the Riemann mapping theorem, we know that there are many conformal mappings on $\Ha$, so maybe some of them are indeed Cauchy transforms? 
 It turns out that the answer to this question has in fact a deeper meaning and can be formulated via quantum probability theory: these distributions are precisely the distributions appearing in additive processes with monotonically independent increments. \\
Furthermore, the evolution $t\mapsto \mu_t$ of the distributions in such processes are Loewner chains, which are a standard tool in the theory of conformal mappings and describe a decreasing family of simply connected domains in $\C$.\\

\underline{\textbf{Outline}}

\begin{itemize}
	\item Sections 2 and 3: classical random variables, independence, central limit theorem, Markov processes.
	\item Sections 4--7: quantum random variables, independence, central limit theorems, additive processes. 
	\item Sections 8 and 9: Loewner chains and distributions of monotone and free additive processes.
	\item Sections 10-13: a selection of applications. 
\end{itemize}

\part[Theory]{Theory} 

\chapter{Classical probability theory}\label{sec_2}

In this section, we recall several basic notions and theorems from classical probability theory. Roughly speaking, it is the road from independent random variables to the central limit theorem. \\
Standard references for classical probability theory are the books \cite{bill} and \cite{kall}. 
We also refer to \cite{Bil99}, where the convergence of probability measures is treated comprehensively.

\section{The basic notions}

\begin{definition}[Classical probability space] A probability space is a triple $(\Omega, \mathcal{F}, \mathbb P)$, where $\Omega$ is a non-empty set (sample space), $\mathcal{F}\subset 2^{\Omega}$ is a $\sigma$-algebra (the set of events), and $\mathbb P: \mathcal{F}\to[0,1]$ is a probability measure.
\end{definition}

We recall that a set $\mathcal{F}\subset 2^{\Omega}$ is called $\sigma$-algebra if the following conditions are satisfied:
\begin{itemize}
	\item $\Omega\in \mathcal{F}$,
	\item if $A\in \mathcal{F}$, then $\Omega \setminus A \in \mathcal{F}$,
	\item if $A_1,A_2,... \in \mathcal{F}$, then $\cup_{n=1}^\infty A_n \in \mathcal{F}$.
\end{itemize}
A function $\mathbb P: \mathcal{F}\to[0,1]$ is called probability measure if 
\begin{itemize}
	\item $\mathbb P(\Omega)=1$ and
	\item if $A_1,A_2,... \in \mathcal{F}$ are pairwise disjoint, then $\mathbb P(\cup_{n=1}^\infty A_n)=\sum_{n=1}^\infty \mathbb P(A_n)$.

\end{itemize}

The $\sigma$-algebra $\mathcal{F}$ encodes the observable information of our random model.  

\begin{example}In the trivial case $\Omega=\{\emptyset, \Omega\}$, we only observe one event, namely $\Omega$, and we are completely blind to any further details (the elements of $\Omega$). \\
If $\mathcal{F}=2^\Omega$, then every subset of $\Omega$ is an event and we have the largest possible amount of information. This is the usual choice in case $\Omega$ is finite. If $\Omega=\{\omega_1,...,\omega_n\}$ and $\mathcal{F}=2^\Omega$, then $\Pro$ is completely determined by the probabilities $\Pro(\{\omega_1\}), ..., \Pro(\{\omega_n\})$.\hfill $\blacksquare$
\end{example}

\begin{example}Let us think of a fair dice with six faces. We could choose 
\[\Omega = \{1,2,3,4,5,6\}, \quad \mathcal{F}=2^\Omega, \quad \text{and} \quad \Pro(\{n\})=\frac1{6}, \quad n=1,...,6.\]
Compare this with the probability space
\[(\Omega,\quad \mathcal{F}_2=\{\emptyset,\{3\}, \{6\}, \{3,6\}, \Omega\setminus\{3\}, \Omega\setminus\{6\}, \Omega\setminus\{3,6\},\Omega\},\quad \Pro|_{\mathcal{F}_2}).\] Nothing changed, except for the smaller $\sigma$-algebra $\mathcal{F}_2$. We can still observe an event like ``The outcome can be divided by $3$'', i.e.\ $\{3,6\}$, but not anymore ``The outcome is a prime number'', i.e.\ $\{2,3,5\}$. In this second case, we can imagine that someone removed the numbers from the faces $1,2,4,5$. They are still possible outcomes of a random experiment, but they have become indistinguishable to us.\hfill $\blacksquare$
\end{example}

For any subset $M\subset 2^\Omega$, one can construct the $\sigma$-algebra $\mathcal{F}$ generated by $M$, i.e.\ $\mathcal{F}$ is defined as  the smallest $\sigma$-algebra that contains $M$, in other words
\[\mathcal{F}=\bigcap_{\substack{\text{$\sigma$-algebra $\mathcal{A}$}\\ M\subset \mathcal{A}}} \mathcal{A}.\]

\begin{example}If $\Omega$ is a metric space and $\mathcal{F}$ is the $\sigma$-algebra generated by all open subsets of $\Omega$, then 
$\mathcal{F}=\mathcal{B}(\Omega)$ is called the \emph{Borel $\sigma$-algebra} of $\Omega$ and a probability measure $\mathbb{P}:\mathcal{B}(\Omega)\to[0,1]$ is called a \emph{Borel probability measure}.\\
For example, endow $[0,1]$ with the usual Euclidean metric. Then there exists a unique probability measure $\lambda$ on $\mathcal{B}([0,1])$, the Lebesgue measure, such that $\lambda([a,b])=\lambda((a,b))=b-a$ for all intervals, see \cite[Sections 2 and 3]{bill}.\hfill $\blacksquare$
\end{example}

\begin{definition}[Random variable]
Let $M$ be a metric space endowed with its Borel $\sigma$-algebra. Then a measurable function $X:\Omega\to M$ is called an \emph{$M$-valued random variable}. 
\begin{itemize}
	\item  The push-forward of $\Pro$ with respect to $X$ yields the Borel probability measure $\mu(A)=\Pro(X\in A)$, which is called the \emph{distribution} of $X$. (We write $\Pro(X\in A)$ short for $\Pro(\{\omega\in \Omega\,|\, X(\omega)\in A\})$.)
	\item Likewise, the pullback 	$\sigma(X) = \{X^{-1}(A)\,|\, A \in \mathcal{B}(M)\}$ of the  Borel $\sigma$-algebra gives a $\sigma$-algebra on $\Omega$, consisting of all the information encoded by $X$.
\end{itemize}
Mostly, we will deal with \emph{real-valued random variables}, i.e.\ $S=\R$. If not stated otherwise, we will always assume that a random variable is real-valued.\\
However, in our study of real-valued random variables, we will also encounter the cases $S=\C$ and $S=\R^n$. In all these cases, we use the usual Euclidean metric.\\

Let $X$ be a real- or complex-valued random variable. 
Provided the integral exists, the expectation $\mathbb{E}[X]$ is defined by \[\mathbb{E}[X] = \int_\Omega X(\omega) d\mathbb{P}(\omega).\]
If the corresponding integrals exist, then 
\begin{itemize}
	\item $\mathbb{E}[X^n]$ is called the $n$-th moment of $X$ and
	\item $Var(X)=\mathbb{E}[(X-\mathbb{E}[X])^2]=\mathbb{E}[X^2]-(\mathbb{E}[X])^2$ is called the variance of $X$.
\end{itemize}
 If $\mu$ is the distribution of $X$, then $\mathbb{E}[X]$ is also called the \emph{mean}, $\mathbb{E}[(X-\mathbb{E}[X])^2]$ the \emph{variance}, and  $\sqrt{\mathbb{E}[(X-\mathbb{E}[X])^2]}$ the \emph{standard deviation of $\mu$}.
\end{definition}

\begin{definition}We denote by $\mathcal{P}(\R)$ the set of 
all Borel probability measures on $\R$. The support $\supp(\mu)$ 
of $\mu$ is defined as $\supp(\mu)=\{x\in\R\,|\, x\in U\subset\R,\; U\; \text{open} \Longrightarrow \mu(U)>0\}$. The support is always a closed subset of $\R$. We denote by $\mathcal{P}_c(\R)$ the set of all $\mu\in\mathcal{P}(\R)$ with compact support.
\end{definition}

\begin{example}The most important element of $\mathcal{P}(\R)$, and the most important from all distributions, is the Gaussian normal distribution $\mathcal{N}(c,\sigma^2)$, with mean $c$ and variance $\sigma^2$, given by 
\[\mathcal{N}(c,\sigma^2)(A) =  \frac1{\sigma\sqrt{2\pi}}\int_A e^{-\frac1{2}(\frac{x-c}{\sigma})^2} dx,\]
where $A\subset \R$ is a Borel subset.\hfill $\blacksquare$
\end{example}

 \begin{figure}[H]
 \begin{center}
 \includegraphics[width=0.9\textwidth]{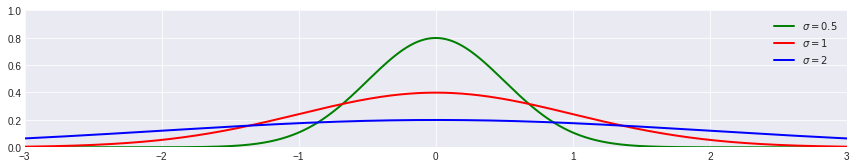}
 \caption{Densities of the Gaussian normal distribution.}
 \end{center}
 \end{figure}

\begin{remark}\label{Hilbert_space}Let $(\Omega, \mathcal{F}, \Pro)$ be a probability space. Call two complex-valued random variables equivalent if they differ only on sets of measure $0$. Fix some $p\geq1$. The space $L^p(\Omega, \mathcal{F}, \Pro)$ consists of all equivalence classes $[X]$ of random variables $X$ with $\mathbb{E}[|X|^p]<\infty$. Note that this definition does not depend on the choice of representative. This space becomes a complex Banach space with the norm $\|[X]\|_p=\mathbb{E}[|X|^p]$, see \cite[Chapter 1]{kall}.   \\
For $p=2$, the norm is induced by an inner product and $L^2(\Omega, \mathcal{F}, \Pro)$ becomes a complex Hilbert space via
\[\left<X,Y\right> := \mathbb{E}[X\overline{Y}]=\int_\Omega X(\omega)\overline{Y}(\omega) {\rm d}\Pro(\omega). \] 
\end{remark}

\begin{remark}On $\Omega=[0,1]$, the Lebesgue measure also yields a probability space\\ $([0,1], \mathcal{F}_\lambda, \lambda)$, where $\mathcal{F}_\lambda$ consists of all Lebesgue measurable subsets of $[0,1]$ (the completion of the Borel $\sigma$-algebra $\mathcal{B}([0,1])$). We have 
\[ \mathcal{B}([0,1]) \subsetneq \mathcal{F}_\lambda \subsetneq 2^{[0,1]}. \]

In probability theory, speaking of the ``uniform distribution'' on $[0,1]$ usually refers to \\
$([0,1], \mathcal{B}([0,1]), \lambda)$, and we also use the Borel $\sigma$-algebra to define random variables rather than $\mathcal{F}_\lambda$. The reason is that we obtain more random variables in this way. Compare this to the notion of Lebesgue measurable functions in analysis. A function $f:[0,1]\to[0,1]$ is called Lebesgue measurable if it is measurable as a function from $([0,1],  \mathcal{F}_\lambda)$ to $([0,1],  \mathcal{B}([0,1]))$. \\
Furthermore, it should be noted that the proof of $\mathcal{F}_\lambda \subsetneq 2^{[0,1]}$, i.e.\ the existence of a non-Lebesgue measurable subset of $[0,1]$, due to Vitali, requires the axiom of choice, see \cite[p.45]{bill}. One might look at a different kind of  mathematics by excluding the axiom of choice (and keeping the Zermelo-Fraenkel set theory) and adding the axiom that all subsets of $\R$ are Lebesgue measurable (the Solovay model). 
\end{remark}

\section{Independence}

Let us fix a probability space $(\Omega, \mathcal{F}, \mathbb P)$. 
If we are interested in only one random variable $X$ on $\Omega$, this whole setup might seem too cumbersome as we could simply look at the push-forward measure of $X$ on $\R$. This changes when we regard at least two random variables $X$ and $Y$ on $\Omega$, whose behavior might be interlocked in more or less complicated ways. A simple measure to analyze their interaction is the covariance.

 \begin{definition}[Covariance] Let $X$ and $Y$ be square-integrable random variables. Then the \emph{covariance $cov(X,Y)$} is defined as
\[ cov(X,Y) = \mathbb{E}[(X-\mathbb{E}[X])(Y-\mathbb{E}[Y])]=\mathbb{E}[XY]-\mathbb{E}[X]\mathbb{E}[Y].\] 
If $cov(X,Y)=0$, $X$ and $Y$ are called \emph{uncorrelated}. Under the assumption of positive variances, the correlation coefficient $\rho(X,Y)$ is defined as \[\rho(X,Y)=\frac{cov(X,Y)}{\sigma(X)\sigma(Y)}\in [-1,1].\]
\end{definition}     

The fact that $\rho(X,Y)\in[-1,1]$ follows from the Cauchy-Schwarz inequality $|cov(X,Y)|^2\leq \sigma^2(X)\sigma^2(Y)$ (recall the Hilbert space from Remark \ref{Hilbert_space}).\footnote{In the case of complex-valued random variables, we would define 
$cov(X,Y)=\mathbb{E}[(X-\mathbb{E}[X])\overline{(Y-\mathbb{E}[Y])}]$.}\\

The notion of independence of random variables is of paramount importance in probability theory. (See the next section for the implication ``independent $\Rightarrow$ uncorrelated''.)

\begin{definition}[Independence of events and $\sigma$-algebras] Let $(A_j)_{j\in J}\subset \mathcal{F}$ be a family of events with a non-empty index set $J$. Then $(A_j)_{j\in J}$ is called \emph{independent} if for any distinct $j_1,...,j_n \in J$, we have \[\Pro\left(\bigcap_{k=1}^n A_{j_k}\right) = \prod_{k=1}^n \Pro(A_{j_k}).\]
A family $(\mathcal{A}_j)_{j\in J}$ of subsets $\mathcal{A}_j\subset \mathcal{F}$ is called \emph{independent} if any family 
$(A_j)_{j\in J}$ with $A_j \in \mathcal{A}_j$ is independent.
\end{definition}

\begin{definition}[Independence of random variables] Random variables $X_1,...,X_n$ are called \emph{independent} if the $\sigma$-algebras $\sigma(X_1),...,\sigma(X_n)$ are independent. \\Equivalently,  the joint distribution of $X_1,...,X_n$ is the product distribution, i.e.\ \[\Pro[X_1\leq x_1 \wedge ... \wedge X_n\leq x_n] = \Pro[X_1\leq x_1] \cdots  \Pro[X_n\leq x_n]\] for all $x_1,...,x_n\in \R$.
An infinite sequence $X_1,X_2,...$ of random variables is called independent if every finite collection $X_{k_1},..., X_{k_n}$, for $n$ distinct indices $k_1,...,k_n$, is independent. 
\end{definition}

If $X_1, X_2,...$ are independent and identically distributed random variables, we will simply call them ``\textit{iid}'' random variables. 

\section{The characteristic function}

\begin{definition}[Characteristic function] Let $X$ be a random variable. The function 
\[\varphi_X(t):\R\to\C, \quad \varphi_X(t)=\mathbb{E}[e^{itX}]\] is called the \emph{characteristic function} of $X$. If $\mu$ is the distribution of $X$, we also write $\varphi_\mu$ instead of $\varphi_X$.
\end{definition}

The characteristic function is also called the \emph{Fourier transform}. In contrast to other transforms like 
\[\text{$\mathbb{E}[e^{-tX}]$ (Laplace transform)\quad or\quad $\mathcal{H}_\mu(t) := 
\lim_{\eps\downarrow 0} \frac {1}{\pi }\int_{|t-x|> \eps}\frac {1}{t-x}\,\mu(dx)$ (Hilbert transform)}
\]
the Fourier transform always exists for all $t\in\R$. We have the following simple properties.

\begin{theorem}Let $X$ be a random variable. Then
 $|\varphi_X(t)|\leq 1$ for all $t\in\R$, $t\mapsto \varphi_X(t)$ is continuous, and $\varphi_{X}(-t)=\overline{\varphi_X(t)}$ for all $t\in\R$.\\
Furthermore, for any $n\in\N$, complex numbers $c_1,...,c_n$ and real numbers $t_1,...,t_n$,
\begin{equation}\label{charara}\sum_{1\leq k,l\leq n}  c_k \overline{c_l} \varphi_X(t_k-t_l)\geq 0.\end{equation}
\end{theorem}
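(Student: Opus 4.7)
The plan is to verify each of the four assertions in turn; all four follow from elementary properties of complex integration applied to the bounded function $\omega \mapsto e^{itX(\omega)}$, whose modulus is identically $1$.

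First I would prove the pointwise bound: since $|e^{itX(\omega)}|=1$ for every $\omega\in\Omega$, the random variable $e^{itX}$ is bounded and in particular integrable, so $|\varphi_X(t)| = |\mathbb{E}[e^{itX}]| \leq \mathbb{E}[|e^{itX}|] = 1$. For the symmetry $\varphi_X(-t)=\overline{\varphi_X(t)}$, I would simply observe that $e^{-itX}=\overline{e^{itX}}$ and that complex conjugation commutes with the expectation, since the latter is computed componentwise on real and imaginary parts.

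For continuity, I would invoke the dominated convergence theorem. Given a sequence $t_n\to t$ in $\R$, the pointwise continuity of the exponential gives $e^{it_n X(\omega)}\to e^{itX(\omega)}$ for every $\omega$, and the constant function $1$ serves as an integrable dominant since $|e^{it_n X(\omega)}|=1$. Hence $\varphi_X(t_n)\to\varphi_X(t)$.

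The positive-definiteness inequality \eqref{charara} is the most interesting step, but it also comes out cleanly. The idea is to pull the finite double sum inside the expectation and recognise a modulus squared. Concretely,
\begin{align*}
\sum_{1\leq k,l\leq n} c_k\overline{c_l}\,\varphi_X(t_k-t_l)
&= \sum_{1\leq k,l\leq n} c_k\overline{c_l}\,\mathbb{E}\!\left[e^{i(t_k-t_l)X}\right] \\
&= \mathbb{E}\!\left[\sum_{1\leq k,l\leq n} c_k e^{it_k X}\,\overline{c_l e^{it_l X}}\right]
= \mathbb{E}\!\left[\left|\sum_{k=1}^n c_k e^{it_k X}\right|^{2}\right] \geq 0,
\end{align*}
where interchanging the finite sum with the expectation is justified by linearity. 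I do not anticipate any genuine obstacle; the only subtle point worth flagging is the use of dominated convergence for continuity, which requires the boundedness $|e^{itX}|=1$ and would fail for transforms like the Laplace or Hilbert transforms mentioned just before the theorem.
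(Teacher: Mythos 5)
Your proof is correct and follows essentially the same path as the paper: triangle inequality for the modulus bound, conjugation inside the integral for the symmetry, dominated convergence for continuity (the paper phrases it through the intermediate bound $|e^{i(t-s)x}-1|\leq 2$, which yields uniform continuity, but the idea is identical), and pulling the double sum inside the expectation to exhibit a square modulus for \eqref{charara}. No gaps.
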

\begin{proof}Let $\mu$ be the distribution of $X$. Clearly, $|\varphi_X(t)|\leq \int_\R |e^{itx}|\mu(dx)=1$ and 
$\varphi_{X}(-t)=\int_\R e^{-itx}\mu(dx)=\int_\R \overline{e^{itx}}\mu(dx)=\overline{\int_\R e^{itx}\mu(dx)}=\overline{\varphi_X(t)}$ for all $t\in\R$. Furthermore 
\[|\varphi_X(t)-\varphi_X(s)|=\left|\int_\R e^{itx}-e^{isx}\mu(dx)\right|\leq \int_\R |e^{i(t-s)x}-1|\mu(dx).\] As $|e^{i(t-s)x}-1|\leq 2$, the dominated convergence theorem implies that the last integral converges to $0$ as $s\to t$ for fixed $t$. (In fact, we even have uniform continuity of $\varphi_X(t)$).\\

Finally, 
\begin{eqnarray*}&&\sum_{1\leq k,l\leq n}  c_k \overline{c_l} \varphi_X(t_k-t_l) = 
\sum_{1\leq k,l\leq n}  c_k \overline{c_l} \mathbb{E}[e^{i(t_k-t_l)X}] \\
&=&  \mathbb{E}\left[\sum_{k=1}^n c_k e^{it_kX}  \cdot \overline{\sum_{l=1}^n c_l e^{it_lX}}\right] =  \mathbb{E}\left[\left|\sum_{k=1}^n c_k e^{it_kX}\right|^2 \right] \geq 0.\end{eqnarray*}

\end{proof}

\begin{remark}Bochner's theorem states that any continuous function $\varphi:\R\to \C$ with $\varphi(0)=1$ and property \eqref{charara} is in fact a characteristic function, see \cite[Section 1.4.3]{Rud62}.
\end{remark}

\begin{example}\label{ex_1}We calculate $\varphi_\mu$ for some distributions $\mu$.
\begin{itemize} 
\item[(a)] If $\mu=\delta_{x_0}$, then $\varphi_\mu(t)=e^{itx_0}$. More generally, if $\mu = p\delta_1 + (1-p)\delta_{-1}$ for some $p\in[0,1]$, then $\varphi_\mu(t)=\cos(t) + i(2p-1)\sin(t)$. The image curve is a (possibly degenerate) ellipse.
\item[(b)] Let $\mu$ be the Poisson distribution $\mu(\{k\})=\frac{\lambda^k e^{-\lambda}}{k!}$, $k\in\N_0$, where $\lambda>0$ is the first moment as well as the variance of $\mu$. We have 
\[ \varphi_{\mu}(t) = e^{-\lambda} \sum_{k=0}^\infty \frac{\lambda^k}{k!} e^{itk}= 
e^{-\lambda} \sum_{k=0}^\infty \frac{(\lambda e^{it})^k}{k!}  = e^{\lambda (e^{it}-1)}. \]
\item[(c)] Let $\mu$ be the uniform distribution on $[a,b]$, $a<b$. Then $\varphi_\mu(0) = 1$ and 
$\varphi_\mu(t) = \frac{e^{itb}-e^{ita}}{it(b-a)}$ for $t\not=0$.
\item[(d)]  If $\mu=\mathcal{N}(0,\sigma^2)$, then 
\[\varphi_\mu(t)= e^{-\sigma^2t^2/2}.\]
This can be shown by looking at $\frac{d}{dt}\varphi_\mu(t)$. Via exchanging differentiation and integration, a calculation shows that 
\[\frac{d}{dt}\varphi_\mu(t)=-\sigma^2 t\varphi_\mu(t), \quad \varphi_\mu(0)=1.\] 
The unique solution to this initial value problem is given by $\varphi_\mu(t)=e^{-\sigma^2 t^2/2}$.
\end{itemize}\hfill $\blacksquare$
\end{example}

The characteristic function really encodes the distribution $\mu$ completely and we can recover $\mu$ from $\varphi_\mu$ by an inversion formula. 

\begin{theorem}Let $\mu\in \mathcal{P}(\R)$.
\begin{itemize}
\item[(a)] We have the inversion formula 
	\[ \mu((a,b))+\frac{1}{2}\mu(\{\alpha\})+\frac{1}{2}\mu(\{\beta\}) = \frac1{2\pi}\lim_{T\to\infty} \int_{-T}^T \frac{e^{-ita}-e^{-itb}}{it}\varphi_\mu(t) dt \]
	for all $a< b$. 
	In particular, if $\varphi_{\mu}=\varphi_{\nu}$ for two probability measures $\mu$ and $\nu$, then $\mu=\nu$.
	\item[(b)](Inverse Fourier transform) Assume that $\int_\R |\varphi_\mu(t)|dt<\infty$.
	Then $\mu$ is absolutely continuous with respect to the Lebesgue measure $\lambda$ and its density $f$ is given by 
	\[ f(x) = \frac1{2\pi}\int_\R  e^{-itx}\varphi_\mu(t)dt, \quad x\in\R. \]
\end{itemize}
\end{theorem}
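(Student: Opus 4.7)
The plan for (a) is to apply Fubini's theorem to move the $\mu$-integral outside, reduce the inner integral to a Dirichlet kernel computation, and pass to the limit $T\to\infty$ via dominated convergence. Writing $\varphi_\mu(t)=\int_\R e^{itx}\mu(dx)$ and interchanging integrals (valid for each finite $T$ because the integrand is bounded on the compact strip $[-T,T]\times \R$), the right-hand side becomes
\[
\int_\R \left(\frac{1}{2\pi}\int_{-T}^T \frac{e^{it(x-a)}-e^{it(x-b)}}{it}\,dt\right)\mu(dx).
\]
The inner integral simplifies, since its cosine part is odd in $t$ and integrates to zero, to $\frac{1}{\pi}\int_0^T \bigl[\frac{\sin(t(x-a))}{t}-\frac{\sin(t(x-b))}{t}\bigr]\,dt$.

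Next I would invoke the classical Dirichlet integral $\int_0^\infty \frac{\sin(tu)}{t}\,dt=\frac{\pi}{2}\operatorname{sgn}(u)$ to identify the pointwise limit as $T\to\infty$. This gives $\tfrac{1}{2}(\operatorname{sgn}(x-a)-\operatorname{sgn}(x-b))$, which equals $1$ on $(a,b)$, $\tfrac{1}{2}$ at $x=a$ and $x=b$, and $0$ elsewhere — precisely the indicator pattern producing the claimed left-hand side.

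The main obstacle is justifying the passage to the limit under the $\mu$-integral, since the integrand is not absolutely integrable. For this I would use the universal bound $\sup_{T>0,\,u\in\R}\bigl|\int_0^T \frac{\sin(tu)}{t}\,dt\bigr|\leq C<\infty$ (the partial sine integrals are uniformly bounded independently of $u$ and $T$), yielding a constant dominating function that is trivially $\mu$-integrable; dominated convergence then delivers the inversion formula. For the uniqueness assertion, if $\varphi_\mu=\varphi_\nu$ then for all $a<b$ lying outside the at-most-countable atom set of $\mu+\nu$ the inversion formula gives $\mu((a,b))=\nu((a,b))$; monotone limits through such continuity points extend this to all half-lines, forcing $\mu=\nu$ on $\mathcal{B}(\R)$.

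For (b), the assumption $\varphi_\mu\in L^1(\R)$ allows me to drop the $\lim_{T\to\infty}$ in (a) by dominated convergence, since the integrand is bounded in modulus by $|b-a|\cdot|\varphi_\mu(t)|$. The resulting expression is continuous in $(a,b)$ and vanishes as $b\to a$, so $\mu$ is atomless and $\mu((a,b))=\frac{1}{2\pi}\int_\R \frac{e^{-ita}-e^{-itb}}{it}\varphi_\mu(t)\,dt$. Writing $\frac{e^{-ita}-e^{-itb}}{it}=\int_a^b e^{-itx}\,dx$ and applying Fubini once more (now legitimately, since $\varphi_\mu\in L^1$ and $|e^{-itx}|=1$), I obtain $\mu((a,b))=\int_a^b f(x)\,dx$ with $f(x)=\frac{1}{2\pi}\int_\R e^{-itx}\varphi_\mu(t)\,dt$, identifying $f$ as the Lebesgue density of $\mu$.
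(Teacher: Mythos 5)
Your proof is correct and follows essentially the same route as the paper: Fubini to pass the Fourier integral inside the $\mu$-integral, reduction via the odd/even split to Dirichlet integrals, the uniform bound $\sup_{T,u}\bigl|\int_0^T \sin(tu)/t\,dt\bigr|<\infty$ to justify dominated convergence, and countability of atoms for uniqueness; part (b) likewise matches the paper's argument. One small slip in wording: $[-T,T]\times\R$ is not a compact strip, but Fubini still applies because the integrand is uniformly bounded by $|b-a|$ and $\lambda|_{[-T,T]}\otimes\mu$ is a finite measure, which is exactly what you need.
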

\begin{proof}${}$
\begin{itemize}
\item[(a)]
As $\int_{a}^b e^{-ity}dy = \frac{e^{-ita}-e^{-itb}}{it}$, the right integral can be written, by Fubini's theorem, as 
\[F(a,b,T)  = \int_{[-T,T]\times [a,b]} e^{-ity}\varphi_\mu(t)dydt.\]
We have, again by Fubini's theorem:
 \begin{eqnarray*}
F(a,b,T) &=& 
\int_{[-T,T]\times [a,b]\times \R}e^{-ity}e^{itx}dydt\mu(dx) = 
\int_\R\left(\int_{[-T,T]} \int_{[a,b]} e^{-it(y-x)}  dydt \right)\mu(dx)\\
&=&
\int_\R\left(\int_{[-T,T]} \frac{1}{it}(e^{-it(a-x)}-e^{-it(b-x)})   dt \right)\mu(dx).
\end{eqnarray*}
Put $f(a,b,T,x)=\int_{-T}^T  \frac{1}{it}(e^{-it(a-x)}-e^{-it(b-x)})dt$. As sine is an odd and cosine an even function we have 
 \begin{eqnarray*}f(a,b,T,x) &=&  \int_{-T}^T  \frac{\sin(-t(a-x))}{t}- \frac{\sin(-t(b-x))}{t}dt \\&=& 
2\int_{0}^T  \frac{\sin(t(x-a))}{t}- \frac{\sin(t(x-b))}{t}dt.\end{eqnarray*}

Now, \[\lim_{T\to\infty}\int_0^T \frac{\sin(ct)}{t}= 
\begin{cases}
\frac{\pi}{2} & \text{if $c>0$,}\\
0 & \text{if $c=0$,}\\
-\frac{\pi}{2} & \text{if $c<0$.}\\
\end{cases}\]
Consequently, 
\[\lim_{T\to\infty}f(a,b,T,x)= 
\begin{cases}
\pi & \text{if $x=a$ or $x=b$,}\\
0 & \text{if $x<a$ or $x>b$,}\\
2\pi & \text{if $x\in(a,b)$.}\\
\end{cases}\]
As $\sup\{|f(a,b,t,x)| \,|\, x\in\R, T\geq 0\}<\infty$, we can use the dominated convergence theorem to obtain
 \begin{eqnarray*}
\lim_{T\to\infty}\frac1{2\pi}F(a,b,T) &=& 
\lim_{T\to\infty}\frac1{2\pi} \int_\R f(a,b,T,x) \mu(dx) = 
\frac1{2\pi} \int_\R \lim_{T\to\infty} f(a,b,T,x) \mu(dx) = \\
&=&
\mu((a,b)) +\frac1{2}\mu(\{a\})+ \frac1{2}\mu(\{b\}).
\end{eqnarray*}
Assume that $\varphi_\mu = \varphi_\nu$. The set $S$ of all $x$ with $\mu(\{x\})>0$ or $\nu(\{x\})>0$ is at most countably infinite, see Exercise \ref{atmdjfgh}. For all $x_0,x \in\R\setminus S$, $x_0<x$, we have $\mu((x_0,x)) = \nu((x_0,x))$ by the inversion formula. With $x_0\to-\infty$ we get $\mu((-\infty,x)) = \nu((-\infty,x))$. If $x\in S$, then we can write 
$(-\infty,x)=\cup_{n\in\N}(-\infty,x_n)$ for an increasing sequence $(x_n)\subset \R\setminus S$ and we obtain 
\begin{eqnarray*} \mu((-\infty,x)) &=& \mu(\cup_{n\in\N}(-\infty,x_n)) = \lim_{n\to\infty}\mu((-\infty,x_n))\\
&=& \lim_{n\to\infty}\nu((-\infty,x_n))= \nu(\cup_{n\in\N}(-\infty,x_n)) = \nu((-\infty,x)).\end{eqnarray*}
 Now we can show that $\mu(I)=\nu(I)$ for all open and closed intervals $I$ and thus $\mu=\nu$.
\item[(b)] The assumptions allow us to define the function $f(x) = \frac1{2\pi} \int_\R e^{-itx}\varphi_\mu(t)dt$. For $a<b$ we have
 \begin{eqnarray*}\int_a^b f(x)dx &=&
\frac1{2\pi}\int_a^b\int_\R  e^{-itx} \varphi_\mu(t)dtdx=
\frac1{2\pi}\int_\R \varphi_\mu(t) \int_a^b  e^{-itx} dx dt\\
&=& \frac1{2\pi}\int_\R \varphi_\mu(t) \frac{e^{-ita}-e^{-itb}}{it} dt = \frac1{2\pi}\lim_{T\to\infty}\int_{-T}^T \varphi_\mu(t) \frac{e^{-ita}-e^{-itb}}{it} dt\\
&=& \mu((a,b))+\frac1{2}\mu(\{a,b\}).
\end{eqnarray*}
As $\int_a^b f(x)dx$ varies continuously with respect to $a$ and $b$, we have $\mu(\{a\})=\mu(\{b\})=0$. Thus $\mu((a,b))=\int_a^b f(x)dx$ and we concldue that $\mu(A) = \int_A f(x)dx$ for any Borel subset $A\subset \R$. 
\end{itemize} 
\end{proof}

\begin{theorem}\label{inv_moments_classicl}Let $X$ be a random variable with $\mathbb{E}[|X|^n]<\infty$. Then 
$\varphi_X$ is $n$ times differentiable at every $t\in \R$ with 
\[ \frac{d^n}{(dt)^n}\varphi_X(t) = \mathbb{E}[e^{itX}(iX)^n].  \quad \text{In particular,}\quad 
 \mathbb{E}[X^n] = (-i)^n\frac{d^n}{(dt)^n}\varphi_X(0).\]
\end{theorem}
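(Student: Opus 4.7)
The plan is to prove the formula by induction on $n$, differentiating under the expectation via the dominated convergence theorem at each step.

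For the base case $n=0$ the statement is just the definition of $\varphi_X$. For the inductive step, assume that $\varphi_X$ is $(n-1)$ times differentiable with $\varphi_X^{(n-1)}(t)=\mathbb{E}[e^{itX}(iX)^{n-1}]$. To compute the $n$-th derivative at $t\in\R$, I would look at the difference quotient
\[
\frac{\varphi_X^{(n-1)}(t+h)-\varphi_X^{(n-1)}(t)}{h} = \mathbb{E}\left[e^{itX}(iX)^{n-1}\cdot\frac{e^{ihX}-1}{h}\right],
\]
which converges pointwise (in $\omega$) to $e^{itX}(iX)^n$ as $h\to 0$.

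The key estimate needed to apply dominated convergence is $|e^{iy}-1|\leq |y|$ for $y\in\R$, which follows from $e^{iy}-1=\int_0^y i e^{is}\,ds$. This yields $|(e^{ihX}-1)/h|\leq |X|$ pointwise, so the integrand in the difference quotient is bounded in absolute value by $|X|^n$. Since $\mathbb{E}[|X|^n]<\infty$ by assumption, this is an integrable dominating function, and the dominated convergence theorem gives
\[
\varphi_X^{(n)}(t)=\mathbb{E}[e^{itX}(iX)^n].
\]
Setting $t=0$ gives $\varphi_X^{(n)}(0)=i^n\mathbb{E}[X^n]$, and multiplying by $(-i)^n$ yields the stated moment formula.

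The main thing to be careful about is that the induction requires $\mathbb{E}[|X|^k]<\infty$ for every $k\leq n$, but this follows from $\mathbb{E}[|X|^n]<\infty$ by Jensen's inequality (or H\"older), since $|X|^k\leq 1+|X|^n$. Beyond that the argument is straightforward; the only real obstacle is producing the pointwise bound $|e^{ihX}-1|\leq |h||X|$, which as noted above is immediate from writing $e^{ihX}-1$ as an integral of its derivative.
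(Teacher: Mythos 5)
Your proof is correct and follows essentially the same approach as the paper: induction on $n$, writing the difference quotient of the $(n-1)$-th derivative, and applying the dominated convergence theorem. The only cosmetic difference is that you use the sharper pointwise bound $|e^{iy}-1|\le|y|$ (via the fundamental theorem of calculus), whereas the paper bounds $|(e^{ihx}-1)/h|\le 2|x|$ by treating real and imaginary parts separately; both give an integrable dominating function $C|X|^{k}$ with $k\le n$, so the choice is immaterial. Your explicit remark that the lower moments $\mathbb{E}[|X|^k]$, $k\le n$, are finite (needed for the induction to run) is a small but genuine improvement in rigor over the paper, which leaves that point implicit.
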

\begin{proof}
The statement holds trivially for $n=0$. So we prove the statement by induction starting at $n=0$. Let $\mu$ be the distribution of $X$. 
Now assume that $ \frac{d^n}{(dt)^n}\varphi_X(t) = \mathbb{E}[e^{itX}(iX)^n]$ holds for some $n\in \N$ and all $t\in\R$, and that 
$\mathbb{E}[|X|^{n+1}]<\infty$. Then 
\begin{eqnarray*}
&&\lim_{h\to0}\frac{\frac{d^n}{(dt)^n}\varphi_X(t+h)-\frac{d^n}{(dt)^n}\varphi_X(t)}{h}=
\lim_{h\to0}(\mathbb{E}[(e^{i(t+h)X}(iX)^n]-\mathbb{E}[e^{itX}(iX)^n])/h \\
&=&\lim_{h\to0}\int_\R \frac{e^{i(t+h)x}(ix)^n - e^{itx}(ix)^n}{h} \mu(dx)=
\int_\R (ix)^n \lim_{h\to0} \frac{e^{i(t+h)x} -  e^{itx}}{h} \mu(dx)\\
&=&\int_\R (ix)^{n+1} e^{itx} \mu(dx) = \mathbb{E}[e^{itX}(iX)^{n+1}].
\end{eqnarray*}
The exchange of the limit and the integral is justified by the estimate\footnote{Note that $|\frac{\sin(h)}{h}|\leq 1$ and $|\frac{\cos(h)-1}{h}|\leq 1$ for all $h\in \R\setminus\{0\}$. Thus $|\frac{\sin(hx)}{h}|\leq |x|$ and $|\frac{\cos(hx)-1}{h}|\leq |x|$ for all $h,x\in \R\setminus\{0\}$. Hence
$\left|\frac{e^{ihx} -  1}{h}\right|\leq \left|\frac{\cos(hx) -  1}{h}\right| + \left|\frac{\sin(hx)}{h}\right|\leq 2|x|$.} 
\[\left|(ix)^n\frac{e^{i(t+h)x} -  e^{itx}}{h}\right| = |x|^n \left|\frac{e^{ihx} -  1}{h}\right|\leq 2|x|^{n+1},\]
the fact that $\int_\R |x|^{n+1} \mu(dx) < \infty$, and by the dominated convergence theorem.\\
\end{proof}

\begin{remark}Many further relations between $\varphi_\mu$ and $\mu$ are known, e.g.\ concerning the regularity of $\varphi_\mu$ at $t=0$, analytic extension to the whole complex plane $\C$, or the limit behavior of $\varphi_\mu(t)$ as $|t|\to \infty$, see \cite{BS00}.
\end{remark}

\begin{theorem}\label{independence_classical}Let $X_1,...,X_n$ be random variables on a common probability space. 
Then $X_1,...,X_n$ are independent if and only if for any bounded measurable functions $f_1,...,f_n:\R\to\C$, we have
\begin{equation}\label{class_inde} \mathbb{E}[f_1(X_1)\cdots f_n(X_n)] =  \mathbb{E}[f_1(X_1)]\cdots  \mathbb{E}[f_n(X_n)]. \end{equation}
If all $X_1,...,X_n$ are bounded, then they are independent if and only if 
\begin{equation}\label{class_inde2} \mathbb{E}[X_1^{k_1}\cdots X_n^{k_n}] =  \mathbb{E}[X_1^{k_1}]\cdots  \mathbb{E}[X_n^{k_n}] \end{equation}
for all $k_1,...,k_n\in \N_0$.
\end{theorem}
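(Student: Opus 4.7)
The plan is to treat the two equivalences in sequence, first establishing the bounded measurable function criterion, then using it together with a power-series argument to handle the moment criterion for bounded random variables.

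For the first equivalence, I would start with the $(\Rightarrow)$ direction. The key idea is standard approximation through the hierarchy indicators $\to$ simple functions $\to$ bounded measurable functions. First, take $f_i=\mathbf{1}_{A_i}$ for Borel sets $A_i\subset\R$; then $\mathbb{E}[\prod_i \mathbf{1}_{A_i}(X_i)] = \Pro(\bigcap_i\{X_i\in A_i\})$, which factors as $\prod_i \Pro(X_i\in A_i) = \prod_i\mathbb{E}[\mathbf{1}_{A_i}(X_i)]$ directly from the independence of $\sigma(X_1),\ldots,\sigma(X_n)$. Expanding each $f_i$ as a linear combination of indicators and distributing the $n$-fold product yields \eqref{class_inde} for simple functions by linearity of $\mathbb{E}$. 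Finally, for bounded measurable $f_i$ with $\|f_i\|_\infty\leq M_i$, approximate each $f_i$ uniformly on $\R$ by simple functions $f_i^{(k)}$ with $\|f_i^{(k)}\|_\infty\leq M_i$, and pass to the limit using the dominated convergence theorem applied with dominating function $\prod_i M_i$.

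For the $(\Leftarrow)$ direction, I would specialize \eqref{class_inde} to $f_i = \mathbf{1}_{(-\infty,x_i]}$, which is bounded measurable. This immediately gives the joint distribution factorization $\Pro[X_1\leq x_1,\ldots,X_n\leq x_n] = \prod_i \Pro[X_i\leq x_i]$, which is precisely independence.

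For the second equivalence, $(\Rightarrow)$ follows by applying \eqref{class_inde} with (truncated) $f_i(x)=x^{k_i}$, which agrees with $x^{k_i}$ on the support of $X_i$ since each $X_i$ is bounded. For $(\Leftarrow)$, the idea is to upgrade the moment factorization to a characteristic function factorization. Fix $M$ with $|X_i|\leq M$ for all $i$ and real $t_1,\ldots,t_n$. Expanding $e^{it_j X_j}$ as a power series, the multiple series
\[\prod_{j=1}^n e^{it_j X_j} = \sum_{k_1,\ldots,k_n\geq 0}\prod_{j=1}^n \frac{(it_j)^{k_j}}{k_j!}X_j^{k_j}\]
converges absolutely and is dominated by the summable constant $\prod_j e^{|t_j|M}$; so Fubini/dominated convergence allows interchanging expectation with the summation. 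Applying \eqref{class_inde2} termwise and refactoring the sum yields
\[\mathbb{E}\Bigl[\,\prod_{j=1}^n e^{it_j X_j}\,\Bigr] = \prod_{j=1}^n\sum_{k_j\geq 0}\frac{(it_j)^{k_j}}{k_j!}\mathbb{E}[X_j^{k_j}] = \prod_{j=1}^n \varphi_{X_j}(t_j),\]
i.e., the joint characteristic function of $(X_1,\ldots,X_n)$ equals the characteristic function of the product measure $\mu_{X_1}\otimes\cdots\otimes\mu_{X_n}$. The multidimensional version of the uniqueness statement of the previous inversion theorem then forces the joint law to be this product measure, which is independence.

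The main obstacle is really the justification of the interchanges of limits: one must check uniform boundedness and apply dominated convergence carefully (both in the approximation argument for the first equivalence and in the power-series computation for the second). Everything else is essentially bookkeeping once the approximation scheme is in place.
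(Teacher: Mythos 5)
Your proof is correct and takes a genuinely different route from the paper in two of the four directions. For the implication ``independent $\Rightarrow$ \eqref{class_inde}'' the paper simply invokes Fubini's theorem on the product measure, while you build up from indicators to simple functions to bounded measurable functions by uniform approximation and dominated convergence; both are standard, with Fubini being somewhat slicker. More interesting is the converse: the paper proves ``\eqref{class_inde} $\Rightarrow$ independent'' by specializing to $f_k(x)=e^{it_kx}$, deducing that the multivariate characteristic function factors, and appealing to the uniqueness theorem for multivariate characteristic functions, whereas you specialize to $f_i=\mathbf{1}_{(-\infty,x_i]}$ and land directly on the joint-distribution factorization that the paper's Definition of independence uses; this is strictly more elementary and makes no appeal to Fourier-analytic machinery. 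The most substantial difference is in the second equivalence's ``$\Leftarrow$'' direction: the paper obtains \eqref{class_inde} for polynomials and then asserts that ``by approximation of arbitrary bounded measurable functions on $[-M,M]$ by polynomials'' one recovers \eqref{class_inde} in general, a step that is imprecise as stated (uniform polynomial approximation requires continuity, so one would need an additional $L^2$ or monotone-class argument to reach all bounded measurable $f_i$). Your power-series expansion of $\prod_j e^{it_jX_j}$, justified by the uniform bound $\prod_j e^{|t_j|M}$ and Fubini/Tonelli, applies \eqref{class_inde2} termwise and directly yields the multivariate characteristic-function factorization, neatly sidestepping that approximation step; you do still rely on the multivariate uniqueness theorem, but the paper uses that same tool in its own converse argument and explicitly references it, so you are not importing anything new. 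Overall your route is slightly longer in the forward direction but more careful in the second converse.
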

\begin{proof}
If $X_1,...,X_n$ are independent, then the joint distribution of $(X_1,...,X_n)$ is the product distribution. 
Let $\mu_1,...,\mu_n$ be the corresponding distributions. Then Fubini's theorem yields
\[ \mathbb{E}[f_1(X_1)\cdots f_n(X_n)]=\int_\R f_1(x_1) \mu_1(dx_1) \cdots \int_\R f_n(x_n) \mu_n(dx_n)= \mathbb{E}[f_1(X_1)]\cdots  \mathbb{E}[f_n(X_n)].\]
If $X_1,...,X_n$ are bounded, then this holds also for $f_k(x)=x^{m_k}$, $m_k\in\N_0$.\\

Now consider arbitrary random variables $X_1,...,X_n$ on a common probability space. The multivariate characteristic function $\varphi_{(X_1,...,X_n)}$ is defined by \[\varphi_{(X_1,...,X_n)}: \R^n\to\C, \quad 
\varphi_{(X_1,...,X_n)}(t_1,...,t_n)=\mathbb{E}[e^{i(t_1X_1+...+t_nX_n)}].\]
 If $Y_1,...,Y_N$ are random variables with $\varphi_{(X_1,...,X_n)}=\varphi_{(Y_1,...,Y_n)}$, then the distribution of $(X_1,...,X_n)$ and $(Y_1,...,Y_n)$ are identical. This can be shown as in the univariate case. The inversion formula for the multivariate characteristic function can be found in \cite[Theorem 10.6.1]{Muk11}.\\
If $X_1,...,X_n$ are independent, then $\varphi_{(X_1,...,X_n)}(t_1,...,t_n) = \varphi_{X_1}(t_1)\cdots\varphi_{X_n}(t_n)$ for all $t_1,...,t_n\in\R$.\\
Now assume that the random variables $X_1,...,X_n$ satisfy \eqref{class_inde}. For $f_k(x)=e^{it_k x}$  we obtain that $\varphi_{(X_1,...,X_n)}(t_1,...,t_n) = 
\varphi_{X_1}(t_1)\cdots\varphi_{X_n}(t_n)$ for all $t_1,...,t_n\in\R$. Hence, $X_1,...,X_n$ are independent. \\

Now assume that $X_1,...,X_n$ are bounded and satisfy \eqref{class_inde2}. We find $M>0$ such that the support of the distribution of $X_k$ is contained in $[-M,M]$ for all $k=1,...,n$. Then $X_1,...,X_n$ satisfy \eqref{class_inde} for any polynomials $f_1,...,f_n:\R\to\C$. By approximation of arbitrary bounded measurable functions on $[-M,M]$ by polynomials, we obtain that \eqref{class_inde} also holds for bounded measurable functions.
\end{proof}

Theorem \ref{independence_classical} is important for the goal of defining independence in noncommutative probability theory. We see that the independence of bounded random variables can be expressed by an algebraic property involving only the expectation $\mathbb{E}$ and products of random variables.

\begin{theorem}\label{subindependence_classical}
Let $X_1,...,X_n$ be random variables on a common probability space. If $X_1,...,X_n$ are independent, then 
\begin{equation}\label{class_subinde} \mathbb{E}[e^{it (X_1+...+X_n)}] =  \mathbb{E}[e^{it X_1}]\cdots  \mathbb{E}[e^{it X_n}] \quad 
\text{for all} \quad t\in\R. \end{equation}
Furthermore, if two square-integrable random variables $X_1,X_2$ satisfy \eqref{class_subinde}, then they are uncorrelated.
\end{theorem}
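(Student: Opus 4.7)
For the first assertion, the idea is to rewrite the joint exponential as a product and reduce to Theorem \ref{independence_classical}. Concretely, since $e^{it(X_1+\ldots+X_n)}=e^{itX_1}\cdots e^{itX_n}$, I would apply \eqref{class_inde} with the bounded measurable choice $f_k(x)=e^{itx}$ for each $k$. The hypothesis of independence is exactly what is needed, and the conclusion \eqref{class_subinde} falls out immediately. There is no real obstacle here; it is a one-line application.

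For the second assertion, the natural approach is to exploit the regularity of characteristic functions at $0$ coming from square integrability. Since $X_1,X_2\in L^2$, so is $X_1+X_2$, and Theorem \ref{inv_moments_classicl} guarantees that $\varphi_{X_1}$, $\varphi_{X_2}$, and $\varphi_{X_1+X_2}$ are all twice differentiable on $\mathbb{R}$, with $\varphi'(0)=i\,\mathbb{E}[\cdot]$ and $\varphi''(0)=-\mathbb{E}[(\cdot)^2]$. The plan is to differentiate both sides of \eqref{class_subinde} twice and evaluate at $t=0$.

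On the left, differentiating twice yields $\varphi_{X_1+X_2}''(0)=-\mathbb{E}[(X_1+X_2)^2]=-\mathbb{E}[X_1^2]-2\mathbb{E}[X_1X_2]-\mathbb{E}[X_2^2]$. On the right, applying the Leibniz product rule to $\varphi_{X_1}(t)\varphi_{X_2}(t)$ and using $\varphi_{X_k}(0)=1$, $\varphi_{X_k}'(0)=i\,\mathbb{E}[X_k]$, $\varphi_{X_k}''(0)=-\mathbb{E}[X_k^2]$ gives
\begin{equation*}
\bigl(\varphi_{X_1}\varphi_{X_2}\bigr)''(0)=-\mathbb{E}[X_1^2]-\mathbb{E}[X_2^2]-2\,\mathbb{E}[X_1]\mathbb{E}[X_2].
\end{equation*}
Equating the two expressions cancels the second moments and leaves $\mathbb{E}[X_1X_2]=\mathbb{E}[X_1]\mathbb{E}[X_2]$, i.e.\ $\mathrm{cov}(X_1,X_2)=0$.

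The main thing to be careful about is justifying that one may differentiate the right-hand side of \eqref{class_subinde} twice. This is not an obstacle: the product rule applies in a neighbourhood of $0$ because each factor is twice differentiable there by Theorem \ref{inv_moments_classicl}. No continuity or integrability beyond the $L^2$ assumption is needed.
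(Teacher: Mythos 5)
Your proposal is correct and follows essentially the same route as the paper: both deduce \eqref{class_subinde} directly from Theorem \ref{independence_classical} with $f_k(x)=e^{itx}$, and both prove the second claim by differentiating \eqref{class_subinde} twice at $t=0$ using Theorem \ref{inv_moments_classicl} and the Leibniz rule, cancelling the second moments to obtain $\mathbb{E}[X_1X_2]=\mathbb{E}[X_1]\mathbb{E}[X_2]$.
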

\begin{proof}
The first statement follows directly from Theorem \ref{independence_classical}, equation \eqref{class_inde}.\\
Let $X_1,X_2$ be square-integrable random variables which satisfy \eqref{class_subinde}. Then Theorem \eqref{inv_moments_classicl} implies 
 \begin{eqnarray*} 
&&\mathbb{E}[X_1^2]+2\mathbb{E}[X_1X_2]+ \mathbb{E}[X_2^2]=
\mathbb{E}[(X_1+X_2)^2] = -\frac{d^2}{(dt)^2}\varphi_{X_1+X_2}(0)\\&=&-\frac{d^2}{(dt)^2}(\varphi_{X_1}\cdot \varphi_{X_2})(0) = -\left(\frac{d^2}{(dt)^2}\varphi_{X_1}(0)+2\frac{d}{dt}\varphi_{X_1}(0)\frac{d}{dt}\varphi_{X_2}(0)+\frac{d^2}{(dt)^2}\varphi_{X_2}(0)\right)\\
&=& \mathbb{E}[X_1^2]+2\mathbb{E}[X_1]\mathbb{E}[X_2]+\mathbb{E}[X_2^2],
\end{eqnarray*}
 and we see that $\mathbb{E}[X_1X_2]=\mathbb{E}[X_1]\mathbb{E}[X_2]$.
\end{proof}

Two random variables satisfying \eqref{class_subinde} are called \emph{subindependent}. Thus we have 
\[\text{$X,Y$ independent $\Longrightarrow$ $X,Y$ subindependent }\]
\[\text{ $\Longrightarrow$ $X,Y$ uncorrelated (if $X,Y$ are square-integrable).}\]

In Exercise \ref{ind_sub_cor}, we see that these implications cannot be reversed.

\begin{definition}If $X$ and $Y$ are subindependent, the distribution $\alpha$ of $X+Y$ only depends on the distributions $\mu$ and $\nu$ of $X$ and $Y$. The probability measure $\mu * \nu :=\alpha$ is called the (classical) additive convolution of $\mu$ and $\nu$.
\end{definition}

Finally, we turn to the weak convergence of probability measures.

\begin{definition}
Let $\mu$ and $\mu_1,\mu_2,... \in \mathcal{P}(\R)$. We say that 
$\mu_n$ converges weakly to $\mu$ if 
\[ \int_\R f(x) \mu_n(dx) \to  \int_\R f(x) \mu(dx) \]
for every bounded and continuous $f:\R\to\C$.
For random variables $X,X_1,X_2,...$, we say that $X_n$ converges in distribution to $X$ if the distribution of $X_n$ converges weakly to the distribution of $X$.
\end{definition}

The random variables do not need to be defined on the same probability space. 
Again, we can use the characteristic function to translate the notion of weak convergence into a very simple condition.

\begin{theorem}[L\'{e}vy's continuity theorem]
Let $\mu, \mu_1,\mu_2,...\in\mathcal{P}(\R)$. Then $\mu_n\to \mu$ weakly if and only if $\varphi_{\mu_n}(t)\to \varphi_\mu(t)$ for all $t\in\R$. 
\end{theorem}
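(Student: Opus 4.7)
The plan is to prove the two implications separately, with the forward direction being a one-line consequence of the definition of weak convergence, and the reverse direction requiring a tightness argument together with the uniqueness property of the characteristic function established earlier.

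For the forward direction, I would simply note that if $\mu_n \to \mu$ weakly, then for each fixed $t \in \R$ the functions $x \mapsto \cos(tx)$ and $x \mapsto \sin(tx)$ are bounded and continuous, so $\int \cos(tx)\, \mu_n(dx) \to \int \cos(tx)\, \mu(dx)$ and similarly for the sine, giving $\varphi_{\mu_n}(t) \to \varphi_\mu(t)$.

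For the converse, my plan is first to establish tightness of $(\mu_n)$, i.e.\ that for every $\eps > 0$ there exists $T > 0$ with $\mu_n([-T,T]^c) < \eps$ for all $n$. The key analytic estimate is the elementary inequality
\[
\frac{1}{\delta}\int_{-\delta}^{\delta}\bigl(1-\Re \varphi_{\mu_n}(t)\bigr)\,dt \;=\; 2\int_\R \left(1-\frac{\sin(\delta x)}{\delta x}\right)\mu_n(dx) \;\geq\; \mu_n\bigl(\{|x|\geq 2/\delta\}\bigr),
\]
obtained from Fubini and the bound $|\sin(\delta x)/(\delta x)|\leq 1/2$ when $|x|\geq 2/\delta$. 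Since $\varphi_\mu$ is continuous at $0$ with $\varphi_\mu(0)=1$, the right-hand side with $\mu$ in place of $\mu_n$ can be made arbitrarily small by choosing $\delta$ small. The pointwise convergence $\varphi_{\mu_n}\to \varphi_\mu$ together with $|1 - \Re\varphi_{\mu_n}|\leq 2$ and the dominated convergence theorem then lets me transfer this smallness uniformly in $n$ (for $n$ large, with the finitely many remaining $\mu_n$ handled individually), yielding tightness.

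With tightness in hand, I would invoke Helly's selection theorem (or Prokhorov's theorem): any subsequence of $(\mu_n)$ admits a further subsequence $\mu_{n_k}$ converging weakly to some probability measure $\nu$. Applying the already-proved forward direction to this subsequence gives $\varphi_{\mu_{n_k}}\to \varphi_\nu$, while by hypothesis $\varphi_{\mu_{n_k}}\to \varphi_\mu$, so $\varphi_\nu=\varphi_\mu$. By the uniqueness statement in the inversion theorem proved above, $\nu=\mu$. Since every subsequence of $(\mu_n)$ has a further subsequence converging weakly to $\mu$, the whole sequence converges weakly to $\mu$.

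The main obstacle is the tightness step: one must take care that the integral bound is controlled uniformly in $n$, not just pointwise. The trick is that continuity of the limit $\varphi_\mu$ at $0$ (which is guaranteed here because $\varphi_\mu$ is the characteristic function of an actual probability measure) lets one fix $\delta$ first, make the limit integral small, and then use dominated convergence on $[-\delta,\delta]$ to conclude that $\frac{1}{\delta}\int_{-\delta}^\delta(1-\Re\varphi_{\mu_n}(t))\,dt$ is eventually small as well. The rest of the argument is a clean application of the uniqueness of characteristic functions and a standard subsequence principle.
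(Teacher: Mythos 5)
Your proof is correct and follows the standard route: the forward direction is immediate from the definition of weak convergence, and for the converse you establish tightness of $(\mu_n)$ via the truncation inequality $\mu_n(\{|x|\geq 2/\delta\})\leq \frac{1}{\delta}\int_{-\delta}^{\delta}\bigl(1-\Re\varphi_{\mu_n}(t)\bigr)\,dt$, then combine Helly/Prokhorov with the uniqueness of characteristic functions and the subsequence principle. The paper does not give its own proof of this theorem -- it simply cites Kallenberg's textbook -- so there is no distinct approach to compare against; the argument you outline is essentially the one that reference uses, and all the ingredients you invoke (the inversion-theorem uniqueness, Helly's selection theorem, the metrizability of weak convergence that justifies the subsequence principle) are already available earlier in the paper.
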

\begin{proof}See \cite[Theorem 4.3]{kall}.
\end{proof}

\begin{remark}
The weak convergence turns $\mathcal{P}(\R)$ into a topological space. The topology is in fact induced by a metric. Define the L\'{e}vy distance for $\mu,\nu\in\mathcal{P}(\R)$ by
 \[d_{\text{L\'{e}vy}}(\mu, \nu)=\inf\{\delta>0\,|\, \mu((-\infty, x-\delta])-\delta \leq \mu((-\infty, x]) \leq \mu((-\infty, x+\delta])+\delta 
\quad \text{for all $x\in\R$}\}.\]
Then $\mu_n \to \mu$ if and only if $d_{\text{L\'{e}vy}}(\mu_n, \mu)\to0$, see \cite[Section 7]{Bil99}.
\end{remark}

\section{Central limit theorem}

Assume that the expectation $c=\mathbb{E}[X]\in\R$ of a random variable exists. The law of large numbers states that the arithmetic average of independent samples of $X$, the sample mean, converges to $c$ as the number of the samples tends to $\infty$. This clarifies in which sense we should \emph{expect} $c$ if $X$ is our model for some random numbers. 

\begin{theorem}[Strong law of large numbers]  If $X_1, X_2, ...$ are \textit{iid} random variables 
with finite mean $c=\mathbb{E}[X_k]$, then $(X_1+...+X_n)/n$ converges with probability 1 to $c$.
\end{theorem}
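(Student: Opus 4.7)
My plan is to follow an Etemadi-style proof, which needs only Chebyshev, Borel--Cantelli, truncation, and a subsequence-plus-sandwich argument. First I would reduce to the non-negative case: write $X_k = X_k^+ - X_k^-$; both $(X_k^+)$ and $(X_k^-)$ are iid non-negative sequences with finite mean, so the general result follows by subtraction. So assume $X_k \geq 0$, set $S_n = X_1+\cdots+X_n$, and aim to prove $S_n/n \to c$ a.s.

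Next, truncate at level $k$: put $Y_k = X_k \mathbf{1}_{\{X_k \leq k\}}$ and $T_n = Y_1+\cdots+Y_n$. Since $\sum_{k\geq 1} \mathbb{P}(X_k \neq Y_k) = \sum_{k\geq 1}\mathbb{P}(X_1 > k) \leq \mathbb{E}[X_1] < \infty$, Borel--Cantelli gives $X_k = Y_k$ eventually a.s., so $S_n/n$ and $T_n/n$ have the same a.s.\ limit (if any). By monotone convergence $\mathbb{E}[Y_k] \to c$, so Ces\`aro yields $\mathbb{E}[T_n]/n \to c$. Thus it suffices to show $(T_n - \mathbb{E}[T_n])/n \to 0$ a.s.

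For the latter, fix $\alpha > 1$ and the subsequence $u_n = \lfloor \alpha^n \rfloor$. The key estimate is
\[\sum_{k=1}^\infty \frac{\mathrm{Var}(Y_k)}{k^2} \leq \sum_{k=1}^\infty \frac{\mathbb{E}[Y_k^2]}{k^2} = \sum_{k=1}^\infty \frac{1}{k^2}\int_0^k 2x\,\mathbb{P}(X_1 > x)\,dx \leq C\,\mathbb{E}[X_1] < \infty,\]
obtained by Fubini and the tail bound $\sum_{k \geq \lceil x \rceil} k^{-2} \leq 2/x$. A geometric-series calculation then shows $\sum_n \mathrm{Var}(T_{u_n})/u_n^2 < \infty$, so Chebyshev plus Borel--Cantelli give $(T_{u_n}-\mathbb{E}[T_{u_n}])/u_n \to 0$ a.s., hence $T_{u_n}/u_n \to c$ a.s.

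To fill the gaps, I exploit monotonicity of $T_n$ (this is where non-negativity pays off). For $u_n \leq m < u_{n+1}$,
\[\frac{u_n}{u_{n+1}}\cdot\frac{T_{u_n}}{u_n} \;\leq\; \frac{T_m}{m} \;\leq\; \frac{u_{n+1}}{u_n}\cdot\frac{T_{u_{n+1}}}{u_{n+1}},\]
so $\liminf_m T_m/m \geq c/\alpha$ and $\limsup_m T_m/m \leq c\alpha$ almost surely; taking a countable sequence $\alpha \downarrow 1$ yields $T_n/n \to c$ a.s., which combined with the truncation step gives $S_n/n \to c$ a.s. The main technical obstacle is the variance bound and its combination with the geometric subsequence; the truncation and Borel--Cantelli steps are routine, and the monotonicity sandwich is clean but relies essentially on having reduced to $X_k \geq 0$ at the very start.
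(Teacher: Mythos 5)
Your proof is correct: it is the classical argument of Etemadi, and it is essentially self-contained apart from routine measure-theoretic facts. Two small remarks on the writing. First, the identity
\[
\mathbb{E}[Y_k^2] = \int_0^k 2x\,\mathbb{P}(X_1 > x)\,dx
\]
should really be an inequality ``$\leq$'' (since $\mathbb{P}(Y_k>x)=\mathbb{P}(x<X_1\le k)\le \mathbb{P}(X_1>x)$), but as you only need an upper bound this does not affect the argument. Second, the tail estimate $\sum_{k>x}k^{-2}\le C/\max(1,x)$ needs the case $x<1$ handled separately (where the sum is the constant $\pi^2/6$), but again this is harmless because $x\,\mathbb{P}(X_1>x)$ is integrable near $0$.

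The paper itself does not prove the theorem: it simply refers to \cite[Theorem 3.23]{kall}, where the statement is proved via the classical Kolmogorov route (maximal inequality, three-series theorem, and Kronecker's lemma). Your Etemadi-style proof is genuinely different and in several respects lighter: it replaces the maximal-inequality and random-series machinery with nothing more than Chebyshev's inequality and Borel--Cantelli along a geometric subsequence, and it buys an extra generalization for free, since the only place independence is used is in the additivity of variances, so the whole argument goes through under \emph{pairwise} independence rather than full independence. The Kolmogorov approach, by contrast, develops tools (maximal inequalities, convergence of random series) that are reused extensively elsewhere in probability theory, which is presumably why Kallenberg and hence this paper take that route. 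Both are valid; yours is the more elementary and the more self-contained for the purpose of a single statement like this one.
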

\begin{proof}
See \cite[Theorem 3.23]{kall}.
\end{proof}

In particular, the distribution $\mu_n$ of $(X_1+...+X_n)/n$ 
converges weakly to the distribution $\delta_c$. This convergence is further refined by the famous central limit theorem.
We first need a small auxiliary lemma.

\begin{lemma}\label{help_clt}Let $z_1,...,z_n,w_1,...,w_n$ be complex numbers with $|z_k|\leq 1, |w_k|\leq 1$ for all $k$. Then 
\[\left|\prod_{k=1}^n z_k -  \prod_{k=1}^n w_k \right| \leq \sum_{k=1}^n |z_k-w_k|.\]
\end{lemma}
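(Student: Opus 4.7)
The plan is to proceed by induction on $n$. The base case $n=1$ is just the trivial identity $|z_1 - w_1| = |z_1-w_1|$, so no work is needed there.

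For the inductive step, I would use the standard telescoping trick: split off the first factor and write
\[
\prod_{k=1}^{n} z_k - \prod_{k=1}^{n} w_k = z_1\left(\prod_{k=2}^{n} z_k - \prod_{k=2}^{n} w_k\right) + (z_1 - w_1)\prod_{k=2}^{n} w_k.
\]
Then I would apply the triangle inequality, use $|z_1|\leq 1$ and $\bigl|\prod_{k=2}^{n} w_k\bigr|\leq 1$ (which follows from $|w_k|\leq 1$ for each $k$), and finally invoke the induction hypothesis on the $(n-1)$-fold products involving $z_2,\dots,z_n$ and $w_2,\dots,w_n$. This yields
\[
\left|\prod_{k=1}^{n} z_k - \prod_{k=1}^{n} w_k\right| \leq \sum_{k=2}^{n} |z_k-w_k| + |z_1-w_1| = \sum_{k=1}^{n} |z_k-w_k|,
\]
closing the induction.

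There is no real obstacle here; the whole argument is just the telescoping identity combined with the two norm bounds $|z_1|\leq 1$ and $\bigl|\prod_{k\geq 2} w_k\bigr|\leq 1$. The only thing to be careful about is making sure one applies the hypothesis $|z_k|,|w_k|\leq 1$ to bound the factor one keeps outside the absolute value in each step, so that no stray constant larger than $1$ appears.
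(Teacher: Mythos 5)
Your proof is correct and takes essentially the same approach as the paper: induction plus the telescoping decomposition and triangle inequality. The only cosmetic difference is that you peel off the first factor $z_1$, $w_1$, whereas the paper peels off the last factor $z_{n+1}$, $w_{n+1}$; both are the same argument up to relabelling.
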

\begin{proof}This is proved by induction. For $n=1$ we have equality. So assume the statement holds for some $n\in \N$. 
Then 
 \begin{eqnarray*}&&\left|\prod_{k=1}^{n+1} z_k -  \prod_{k=1}^{n+1} w_k \right| =  \left|(z_{n+1}-w_{n+1})\left(\prod_{k=1}^{n} z_k\right) + w_{n+1} \left(\prod_{k=1}^n z_k -  \prod_{k=1}^n w_k \right) \right|\\
&\leq& |z_{n+1}-w_{n+1}| +  \left|\prod_{k=1}^n z_k -  \prod_{k=1}^n w_k \right| \leq \sum_{k=1}^{n+1} |z_k-w_k|.
 \end{eqnarray*}
\end{proof}

\begin{theorem}[Central limit theorem] 
 If $X_1, X_2, ...$ are \textit{iid} random variables 
with finite mean $c$ and finite positive variance $\sigma^2$, then \[S_n:=(X_1+...+X_n-nc)/(\sigma \sqrt{n})\] converges in distribution to the normal distribution $\mathcal{N}(0,1)$.
\end{theorem}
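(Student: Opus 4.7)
The plan is to apply Lévy's continuity theorem, exploiting that the characteristic function of $\mathcal{N}(0,1)$ is $e^{-t^2/2}$, as computed in Example \ref{ex_1}(d). After replacing each $X_k$ by $(X_k-c)/\sigma$, I may assume without loss of generality that $c=0$ and $\sigma^2=1$, so that the task reduces to showing $\varphi_{S_n}(t)\to e^{-t^2/2}$ pointwise in $t\in\R$, where $S_n=(X_1+\cdots+X_n)/\sqrt{n}$.

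First I would collapse the sum into a single characteristic function. Setting $\varphi:=\varphi_{X_1}$, the iid assumption combined with Theorem \ref{independence_classical} (equivalently, the product formula \eqref{class_subinde}) gives
\[\varphi_{S_n}(t) \;=\; \E\!\left[e^{i(t/\sqrt{n})(X_1+\cdots+X_n)}\right] \;=\; \varphi(t/\sqrt{n})^n.\]
Next I would Taylor-expand $\varphi$ at the origin. Because $\E[X_1^2]<\infty$, Theorem \ref{inv_moments_classicl} ensures that $\varphi$ is twice differentiable at $0$ with $\varphi(0)=1$, $\varphi'(0)=i\E[X_1]=0$, and $\varphi''(0)=-\E[X_1^2]=-1$, hence
\[\varphi(s)\;=\;1-\tfrac{s^2}{2}+r(s),\qquad \text{with } r(s)/s^2\to 0 \text{ as } s\to 0.\]

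The final step is to compare $\varphi(t/\sqrt{n})^n$ with $(1-t^2/(2n))^n$ via Lemma \ref{help_clt}. For fixed $t$ and $n\geq t^2/4$, both base quantities have modulus at most $1$, so the lemma (with all $z_k=\varphi(t/\sqrt{n})$ and all $w_k=1-t^2/(2n)$) yields
\[\Bigl|\varphi(t/\sqrt{n})^n-(1-\tfrac{t^2}{2n})^n\Bigr|\;\leq\; n\,|r(t/\sqrt{n})|\;=\;t^2\cdot\frac{|r(t/\sqrt{n})|}{(t/\sqrt{n})^2}\;\longrightarrow\;0.\]
Since $(1-t^2/(2n))^n\to e^{-t^2/2}$ by elementary calculus, this gives $\varphi_{S_n}(t)\to e^{-t^2/2}$, and Lévy's continuity theorem delivers the claimed convergence in distribution.

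The main technical subtlety is the Taylor expansion with Peano-type remainder: one must use that the existence of $\varphi''(0)$ alone (without any pointwise control on $\varphi''$ away from $0$) already implies $\varphi(s)=\varphi(0)+\varphi'(0)s+\tfrac12\varphi''(0)s^2+o(s^2)$. Beyond that, the proof is largely assembly of pieces already in place—the product formula for iid variables, Lemma \ref{help_clt}, and the standard limit $(1-a/n)^n\to e^{-a}$; the condition $n\geq t^2/4$ needed for the modulus hypothesis in Lemma \ref{help_clt} is a minor bookkeeping point, not a genuine obstacle.
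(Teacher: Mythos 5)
Your proof is correct and follows essentially the same route as the paper: normalize to mean $0$ and variance $1$, convert the iid sum into $\varphi(t/\sqrt{n})^n$ via the product formula, expand $\varphi$ to second order at the origin, compare with $(1-t^2/(2n))^n$ using Lemma \ref{help_clt}, and finish with L\'{e}vy's continuity theorem. The one small remark I'd add is that the ``subtlety'' you flag about Peano-type remainders is not actually needed here: Theorem \ref{inv_moments_classicl} gives $\varphi''(t)=\mathbb{E}[e^{itX}(iX)^2]$ for \emph{all} $t$, and dominated convergence shows this is continuous, so $\varphi\in C^2$ and the second-order Taylor expansion follows from the standard (Lagrange/integral remainder) form of Taylor's theorem, without invoking the finer result about a second derivative existing only at a point.
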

\begin{proof}
Put $Y_k=(X_k-c)/\sigma$ and let $\varphi$ be the characteristic function of $Y_k$, which does not depend on $k$ as all $Y_k$ are \textit{iid}.
By Theorem \ref{subindependence_classical} we have 
\[\varphi_{S_n}(t) = \varphi(t/\sqrt{n})^n.\]
Fix $t\in\R\setminus\{0\}$. Then $\varphi(t/\sqrt{n})=1-\frac{t^2}{2n}+\Landauo(1/n)$ by Theorem \ref{inv_moments_classicl}. Assume that $n$ is so large that $|\varphi(t/\sqrt{n})|\leq 1$ and $|1-\frac{t^2}{2n}|\leq 1$. Then Lemma \ref{help_clt} implies
 \begin{equation*} \left|  \left(1-\frac{t^2}{2n}+\Landauo(1/n)\right)^n - \left(1-\frac{t^2}{2n}\right)^n  \right| 
\leq n\Landauo(1/n)\to 0
\end{equation*} 
as $n\to \infty.$ As $\left(1-\frac{t^2}{2n}\right)^n\to e^{-t^2/2}$, we have $\varphi_{S_n}(t) \to e^{-t^2/2}$ as $n\to\infty$,
which is the characteristic function of the normal distribution, see Example \ref{ex_1} (d). L\'{e}vy's continuity theorem implies that $S_n$ converges in distribution to $\mathcal{N}(0,1)$.
\end{proof}

\begin{example}\label{ex_classroom}
Imagine there are $N$ lectures being held at a university in $N$ different classrooms. Also, assume that the numbers of students attending these lectures are all positive. (Rumor has it that there have been math professors teaching in front of $0$ students, talking to the blackboard as they would also with audience of positive size.)
We may assume that the height of the students are \textit{iid} random variables. So, if we go to each classroom and calculate the average  height, we obtain $N$ numbers whose histogram will have the shape of a normal distribution by the central limit theorem. 
The following histograms are derived from simulations with $N=100$ and $N=100,000$ classrooms respectively. We put 100 students in each classroom and model the height by independent random variables with a uniform distribution between $1.5$ and $1.7$ (meters).
  \begin{figure}[H]
 \begin{center}
 \includegraphics[width=0.9\textwidth]{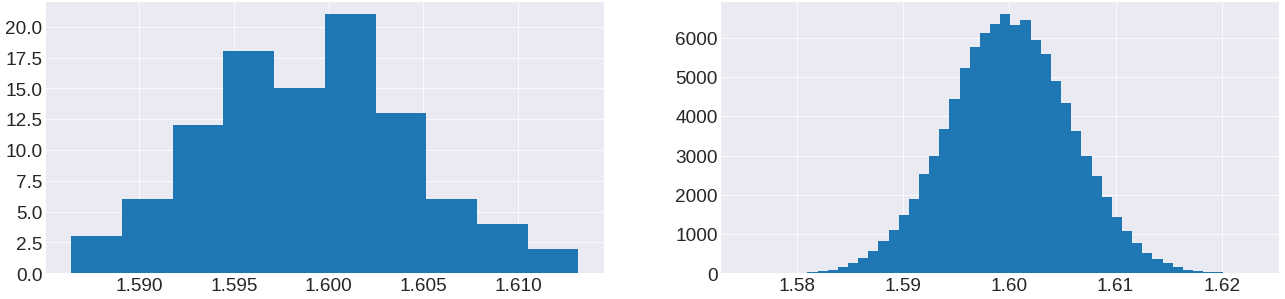}
 \caption{Histograms for the average heights.}
 \end{center}
 \end{figure}
\vspace{-8mm}\hfill $\blacksquare$
\end{example}

The mean of $X$ is $1.6$ and the variance is calculated as $\sigma^2=\frac{(1.7-1.5)^2}{12}$. Then we have $\frac{X_1+...+X_n-n\cdot 1.6}{\sigma\sqrt{n}} \to \mathcal{N}(0,1)$ and thus 
\[
\text{Distribution of } \; \frac{X_1+...+X_n}{n} \approx \mathcal{N}(1.6,\sigma^2 /n).
\]

\textit{The normal distribution everywhere:}\\[-1mm]

The normal distribution appears in many statistical models and real-world samplings. The central limit theorem provides an explanation  if we are dealing with random numbers that arise as the sum of many independent noise terms. \\ 
The assumption in the previous example that the height of a full-grown human is a uniform distribution was artificial. It turns out that the height of (only male or only female) full-grown humans is rather also a normal distribution. One could explain it as follows. 
The human height depends on genetic (e.g.\ male/female) and environmental factors. Assume that the genetic factors don't play any role for a fixed population (let's say all females of a population that has evolved on an isolated island for thousands of years). 
Say an individual of this population is full-grown at $18$ years. Then the size $X$ can be seen as a random variable which is the sum of all 
$18\cdot 12=216$ height gains per month: \[X = G_1 + ... + G_{216}.\] The assumption that all $G_k$'s are \textit{iid} (with finite mean and variance) might appear too strict when we think of the difference between age $3$ months and age $15$ years. But by simplifying things and assuming that they are indeed \textit{iid}, we see why the distribution of $X$ comes close to a normal distribution. \\

\textit{The normal distribution not everywhere:}\\[-1mm]

At the same time, not all random numbers around us are normally distributed. 
Let us go back to the example of measuring student heights in classrooms. Instead of collecting the $N$ arithmetic averages of the classroom heights, we could instead write down all $N$ maximum heights, medians, the 0.75-quantiles, etc.\\
How are these numbers distributed? It is not always possible to give nice analytic characterizations. In case of the maximum (or minimum), there are well-known limit theorems available (\emph{extreme value theory}). Of course, for practical purposes, simulations might be sufficient to gain some reasonable insight. Here is the result of the simulation from Example \ref{ex_classroom} when we replace the average by the maximum height:

 \begin{figure}[h]
 \begin{center}
 \includegraphics[width=0.9\textwidth]{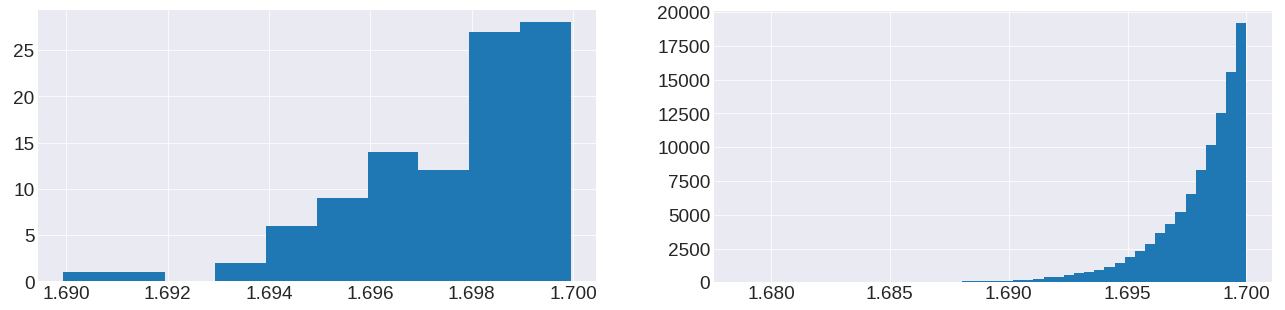}
 \caption{Histograms for the largest heights.}
 \end{center}
 \end{figure}

Apparently, there should be a limit distribution of the maximum with a nice exponential shape. In case of our uniform distribution, this limit can be calculated quite easily. For $a\in \R$ and $b>0$, let $\mu_{Wei}(a,b)$ be the distribution given by the density $be^{b(x-a)}\textbf{1}_{(-\infty,a]}(x)$ (a shifted Weibull distribution).
Then, see Exercise \ref{exercise_classroom}, 
$n\cdot \frac{\max\{X_1,...,X_n\}-1.7}{0.2} \to \mu_{Wei}(0,1)$. Thus
\[
\text{Distribution of } \; \max\{X_1,...,X_n\} \approx  \mu_{Wei}(1.7, n/0.2).
\]

\begin{remark}We end this section by noting that the Poisson distribution $\Pro[X=k]=\frac{\lambda^k e^{-\lambda}}{k!}$, $k\in\N_0$, is also an important limit distribution. It appears as the limit of a binomial distribution as follows (see \cite[Theorem 4.7]{kall} for a more general Poisson limit theorem).\\

For $n\in\N$, let $(X_{k,n})_{1\leq k\leq n}$ be \textit{iid} random variables with $\Pro[X_{k,n} = 1] = 1-\Pro[X_{k,n} = 0]=c_n$ and assume that $nc_n \to c>0$ as $n\to\infty$. 
We can determine the limit distribution of $X_{1,n}+...+X_{n,n}$ as in the proof of the central limit theorem.
We have $\psi_{X_{k,n}}(t)=1-c_n + c_n e^{it}$. Let $\psi_n(t)$ be the characteristic function of $X_{1,n}+...+X_{n,n}$. Then 
\[\psi_n(t) = (1-c_n + c_n e^{it})^n =
 \left(1+\frac{nc_n e^{it}-nc_n}{n}\right)^n = 
\left(1+\frac{c e^{it}-c}{n} + \Landauo(1/n)\right)^n.  \]
By Lemma \ref{help_clt}, $\lim_{n\to \infty} \psi_n(t) = e^{c(e^{it}-1)}$, which is the characteristic function of the Poisson distribution with mean $c$. L\'{e}vy's continuity theorem implies that $X_{1,n}+...+X_{n,n}$ converges in distribution to the Poisson distribution with mean $c$.
\end{remark}

\section{Conditional expectation}

Let $X$ be a random variable on $P_1=(\Omega, \mathcal{F}, \Pro)$.
If $\mathcal{G}\subset \mathcal{F}$ is a sub-$\sigma$-algebra, we can pass to the probability space $P_2=(\Omega, \mathcal{G}, \Pro)$. We can think of this process as simplifying the model $P_1$, for the information encoded by $\mathcal{F}$ is now reduced to that 
of $\mathcal{G}$. Can we also ``simplify'' $X$ to a $\mathcal{G}$-measurable random variable $\hat{X}$? If $X$ is integrable, we would like to have
\[ \int_A  \hat{X}(\omega) {\rm d}\Pro(\omega) = \int_A X(\omega) {\rm d}\Pro(\omega) \]  
for all $A\in \mathcal{G}$. Indeed, this is possible and 
$\hat{X}=\mathbb{E}[X|\mathcal{G}]$ is called the \emph{conditional expectation} with respect to the $\sigma$-algebra $\mathcal{G}$.

\begin{theorem}\label{condi} There exists an almost surely unique linear operator $\mathbb{E}[\cdot|\mathcal{G}]: L^1(P_1)\to L^1(P_2)$ such that 
\begin{equation}\label{cond_exp} \mathbb{E}[\mathbb{E}[X|\mathcal{G}] \cdot \textbf{1}_A] =\mathbb{E}[X \cdot \textbf{1}_A] 
\end{equation}
for all $X\in L^1(P_1)$ and $A\in \mathcal{G}$.
Furthermore, the following properties hold:
\begin{itemize}
	\item[(a)] ($L^1$-contractivity) $\mathbb{E}[|\mathbb{E}[X|\mathcal{G}]|] \leq \mathbb{E}[|X|]$ for all $X\in L^1(P_1)$.
	\item[(b)] (Positivity) If $X\in L^1(P_1)$ and $X\geq 0$, then $\mathbb{E}[X|\mathcal{G}]\geq 0$ a.s.
	\item[(c)] ($L^1(P_2)$-linearity) If $X\in L^1(P_1)$ and $Y\in L^1(P_2)$, then
	$\mathbb{E}[XY|\mathcal{G}]=	Y\cdot \mathbb{E}[X|\mathcal{G}]$	a.s.
\end{itemize}
\end{theorem}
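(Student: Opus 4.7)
The plan is to construct $\mathbb{E}[X|\mathcal{G}]$ via the Radon--Nikodym theorem and then derive the three properties from the defining relation \eqref{cond_exp} together with almost sure uniqueness. For a non-negative $X\in L^1(P_1)$, define the finite measure $\nu$ on $(\Omega,\mathcal{G})$ by $\nu(A)=\int_A X\,d\Pro$. Since $\Pro(A)=0$ forces $\nu(A)=0$, $\nu$ is absolutely continuous with respect to $\Pro|_\mathcal{G}$, and Radon--Nikodym supplies a $\mathcal{G}$-measurable density $\hat X$ with $\int_A \hat X\,d\Pro=\int_A X\,d\Pro$ for every $A\in\mathcal{G}$; taking $A=\Omega$ shows $\hat X\in L^1(P_2)$. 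Setting $\mathbb{E}[X|\mathcal{G}]:=\hat X$ and extending to signed $X$ via $X=X^+-X^-$ defines the operator on $L^1(P_1)$, and linearity is immediate from \eqref{cond_exp}. For uniqueness, if $\hat X_1$ and $\hat X_2$ both satisfy \eqref{cond_exp}, then the sets $\{\hat X_1>\hat X_2\}$ and $\{\hat X_1<\hat X_2\}$ belong to $\mathcal{G}$, and \eqref{cond_exp} applied to each forces them to have $\Pro$-measure zero.

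Property (b) then follows directly: if $X\geq 0$ but $\Pro(\hat X<0)>0$, some $A_n=\{\hat X\leq -1/n\}\in\mathcal{G}$ has positive measure, which gives $\int_{A_n}\hat X\,d\Pro<0\leq \int_{A_n}X\,d\Pro$, contradicting \eqref{cond_exp}. Property (a) is a short consequence: positivity and linearity applied to the decomposition $X=X^+-X^-$ yield
\[
\bigl|\mathbb{E}[X|\mathcal{G}]\bigr|\leq \mathbb{E}[X^+|\mathcal{G}]+\mathbb{E}[X^-|\mathcal{G}]=\mathbb{E}\bigl[|X|\,\big|\,\mathcal{G}\bigr]\quad\text{a.s.},
\]
and taking $\Pro$-expectations, combined with \eqref{cond_exp} at $A=\Omega$, delivers $\mathbb{E}[|\mathbb{E}[X|\mathcal{G}]|]\leq \mathbb{E}[|X|]$.

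The main work lies in (c), which I would approach via the standard measure-theoretic ladder. For $Y=\textbf{1}_B$ with $B\in\mathcal{G}$ and any $A\in\mathcal{G}$, using $A\cap B\in\mathcal{G}$ one obtains
\[
\mathbb{E}[XY\cdot\textbf{1}_A]=\mathbb{E}[X\cdot\textbf{1}_{A\cap B}]=\mathbb{E}[\hat X\cdot\textbf{1}_{A\cap B}]=\mathbb{E}[Y\hat X\cdot\textbf{1}_A],
\]
so uniqueness gives $\mathbb{E}[XY|\mathcal{G}]=Y\mathbb{E}[X|\mathcal{G}]$, and linearity extends this to $\mathcal{G}$-measurable simple $Y$. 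The genuine obstacle is the passage from bounded to integrable $Y$, because $XY$ need not lie in $L^1(P_1)$ for arbitrary $Y\in L^1(P_2)$. The cleanest route is first to take $X,Y\geq 0$, truncate by $Y_n=Y\wedge n$, apply the bounded case to $Y_n$, and pass to the limit using monotone convergence on both sides (monotonicity of $\mathbb{E}[\cdot|\mathcal{G}]$ from (b) handles the left-hand side $XY_n\uparrow XY$, while $Y_n\mathbb{E}[X|\mathcal{G}]\uparrow Y\mathbb{E}[X|\mathcal{G}]$ on the right); then splitting both $X$ and $Y$ into positive and negative parts gives the identity whenever both sides are defined.
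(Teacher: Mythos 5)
Your proof is correct, but it takes a genuinely different route from the paper. You construct $\mathbb{E}[\cdot|\mathcal{G}]$ via the Radon--Nikodym theorem (for $X\geq 0$, take the density of $A\mapsto\int_A X\,d\Pro$ against $\Pro|_\mathcal{G}$, then extend by $X=X^+-X^-$), whereas the paper first defines $\mathbb{E}[X|\mathcal{G}]$ for $X\in L^2(P_1)$ as the orthogonal projection of $X$ onto the closed subspace $L^2(P_2)$, and then extends to $L^1$ by $L^1$-contractivity and density of $L^2$ in $L^1$. Consequently the dependency of the parts differs: the paper proves (a) directly from the defining relation using a clever choice of $A$ and then needs (a) to carry out the $L^2\!\to L^1$ extension, while you obtain the $L^1$ operator all at once and can afford to derive (a) as a consequence of (b) via the pointwise bound $|\mathbb{E}[X|\mathcal{G}]|\leq\mathbb{E}[|X||\mathcal{G}]$. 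For (c), the paper argues on $L^2$ with the inner product identity $\langle Y\,\mathbb{E}[X|\mathcal{G}],f\rangle=\langle XY,f\rangle$ and waves at ``approximation'' for the $L^1$ case, whereas you run the standard measure-theoretic ladder (indicators $\to$ simple $\to$ truncated $\to$ monotone limit) and, usefully, make explicit the real subtlety the paper elides: for $X\in L^1(P_1)$ and $Y\in L^1(P_2)$ the product $XY$ need not be integrable, so the identity should be read ``whenever both sides are defined.'' The Hilbert-space construction is in keeping with the paper's broader quantum-probability theme and gives the projection picture essentially for free; your Radon--Nikodym construction is the more classical route, requires no density argument, and hands you positivity and hence (a), (b) with minimal work. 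One small redundancy worth noting: since the Radon--Nikodym density of a non-negative measure is non-negative a.e., property (b) is already built into your construction, so the argument you give for it is only needed to re-derive positivity from \eqref{cond_exp} alone, which you may or may not want depending on whether you phrase the uniqueness statement first.
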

\begin{proof}
First, assume that $X\in L^2(P_1)$. Then $L^2(P_2)$ is a closed linear subspace of $L^2(P_1)$ and we can define 
$\mathbb{E}[X|\mathcal{G}]$ as the projection of $X$ to $L^2(P_2)$. Then $\mathbb{E}[X|\mathcal{G}]$ 
is a $\mathcal{G}$-measurable random variable on $\Omega$, uniquely defined almost everywhere.\\
We have 
$\left<\mathbb{E}[X|\mathcal{G}], f\right> = \left<X, f\right>$ 
for all $f\in L^2(P_2)$, which implies \eqref{cond_exp}.\\ 
For $A=\{\omega\in\Omega\,|\, \mathbb{E}[X|\mathcal{G}](\omega)\geq0\}$, we get
\[  \mathbb{E}[|\mathbb{E}[X|\mathcal{G}]|] = 
 \mathbb{E}[\mathbb{E}[X|\mathcal{G}]\textbf{1}_A]-
 \mathbb{E}[\mathbb{E}[X|\mathcal{G}]\textbf{1}_{\Omega\setminus A}]  = 
\mathbb{E}[X\textbf{1}_A]-
 \mathbb{E}[X\textbf{1}_{\Omega\setminus A}]\leq \mathbb{E}[|X|],
 \]
which is property (a). It follows that the mapping 
$\mathbb{E}[\cdot|\mathcal{G}]$ is uniformly $L^1(P_1)$-continuous on $L^2(P_1)$. As $L^2(P_1)$ is dense in $L^1(P_1)$, we can extend 
$\mathbb{E}[\cdot|\mathcal{G}]$ uniquely to a linear and continuous mapping on $L^1(P_1)$ and (a) holds for all $X\in L^1(P_1)$.\\
Assume that $X\in L^1(P_1)$ with $X\geq 0$. Put $A=\{\omega\in\Omega\,|\, \mathbb{E}[X|\mathcal{G}](\omega)<0\}$. 
Then 
\[  \mathbb{E}[\mathbb{E}[X|\mathcal{G}] \cdot \textbf{1}_A] =\mathbb{E}[X \cdot \textbf{1}_A] \geq  0. \]
This implies $\Pro(A)=0$ and thus $\mathbb{E}[X|\mathcal{G}]\geq0$ a.s., which proves (b).\\
Finally, let $X\in L^2(P_1)$ and $Y\in L^2(P_2)$. We have 
$\left<\mathbb{E}[XY|\mathcal{G}], f\right> = \left<XY, f\right>$ and also 
\[\left<Y\mathbb{E}[X|\mathcal{G}], f\right> = \left<\mathbb{E}[X|\mathcal{G}], \overline{Y}f\right> =  \left<X, \overline{Y}f\right> = \left<XY, f\right>\] for all $f\in L^2(P_2)$. This implies that 
$\mathbb{E}[XY|\mathcal{G}]=Y\mathbb{E}[X|\mathcal{G}]$ a.s. The general case $X\in L^1(P_1)$, $Y\in L^1(P_2)$ follows by approximation.
\end{proof}

\begin{example}The extreme cases are $\mathcal{G}=\mathcal{F}$ and $\mathcal{G}=\{\emptyset, \Omega\}$ for which we obtain 
\[ \text{$\mathbb{E}[X|\mathcal{F}]=X$ a.s. \quad and \quad  
$\mathbb{E}[X|\{\emptyset, \Omega\}]=\mathbb{E}[X]$ a.s.}\]\hfill $\blacksquare$
\end{example}

\begin{example}\label{cond_B}Let $B\in \mathcal{F}$ with $p:=\Pro(B)\in(0,1)$. Another simple case is $\mathcal{G}=\sigma(B)=\{\emptyset, B, \Omega\setminus B, \Omega\}$. We have 
\[ \text{$\mathbb{E}[X|\mathcal{G}]=p^{-1}\mathbb{E}[X\textbf{1}_B]\textbf{1}_B + (1-p)^{-1}\mathbb{E}[X\textbf{1}_{\Omega\setminus B}]\textbf{1}_{\Omega\setminus B}$ \quad a.s.}\]
Note again the similarity to the projection in vector spaces: $\textbf{1}_B/\sqrt{p} =\textbf{1}_B/\|\textbf{1}_B\| :=v$ and  $\textbf{1}_{\Omega\setminus B}/\sqrt{1-p} =\textbf{1}_{\Omega\setminus B}/\|\textbf{1}_{\Omega\setminus B}\| :=w$ are vectors of norm $1$ and 
\[\text{$\mathbb{E}[X|\mathcal{G}]=\mathbb{E}[Xv]v + \mathbb{E}[Xw]w = 
\left< X, w\right>v + \left< X, w\right>w $  \quad a.s.}\]\hfill $\blacksquare$
\end{example}

If $X,Y$ are two random variables with $X\in L^1$, then we define 
\[\mathbb{E}[X|Y]=\mathbb{E}[X|\sigma(Y)].\] 

More generally, if $(Y_j)_{j\in J}$ is a family of random variables, then we define 
\[\mathbb{E}[X|(Y_j)_{j\in J}]=\mathbb{E}[X|\sigma((Y_j)_{j\in J})].\] 

The conditional probability of an event $A\in \mathcal{F}$, given a sub-$\sigma$-algebra $\mathcal{G}\subset \mathcal{F}$, is defined as 
\[\Pro[A|\mathcal{G}]=\mathbb{E}[\textbf{1}_A|\mathcal{G}].\]  
Then $\Pro[A|\mathcal{G}]$ is a random variable with $0\leq \Pro[A|\mathcal{G}] \leq 1$ a.s.\ and, for all $B\in \mathcal{G}$, 
\[ \mathbb{E}[\Pro[A|\mathcal{G}]\cdot \textbf{1}_B] = 
\mathbb{E}[\textbf{1}_A\cdot \textbf{1}_B] = \Pro(A\cap B).\] 

\begin{example}For $X=\textbf{1}_A$ in Example \ref{cond_B} we obtain 
\[ \text{$\Pro[A|\mathcal{G}]=
\frac{\mathbb{E}[\textbf{1}_A\textbf{1}_B]\textbf{1}_B}{\Pro(B)} + 
\frac{\mathbb{E}[\textbf{1}_A\textbf{1}_{\Omega\setminus B}]\textbf{1}_{\Omega\setminus B}}{\Pro(\Omega\setminus B)}=
\frac{\Pro(A\cap B)\textbf{1}_B}{\Pro(B)} + 
\frac{\Pro(A\cap (\Omega\setminus B))\textbf{1}_{\Omega\setminus B}}{\Pro(\Omega\setminus B)}$ \quad a.s.}\]\hfill $\blacksquare$
\end{example}

For two events $A,B$ with $\Pro(B)>0$, \[\Pro(A|B):=\frac{\Pro(A\cap B)}{\Pro(B)}\]
is called the \emph{conditional probability} of $A$ given $B$.

\begin{definition}Let $\mathcal{F}_1, ..., \mathcal{F}_n, \mathcal{G}$ be sub-$\sigma$-algebras of $\mathcal{F}$. Then $\mathcal{F}_1, ..., \mathcal{F}_n$ are called 
\emph{conditionally independent given $\mathcal{G}$} if 
\[ \Pro[\cap_{k=1}^n A_k|\mathcal{G}] = \prod_{k=1}^n 
\Pro[A_k|\mathcal{G}]  \quad \text{a.s.}, \quad A_k\in \mathcal{F}_k.\]
\end{definition}

\begin{example}Let $A_1,A_2\in\mathcal{F}$ and  
$\mathcal{F}_1=\sigma(A_1), \mathcal{F}_2=\sigma(A_2)$. We choose $\mathcal{G}$ as in Example \ref{cond_B}. Then 
 \begin{eqnarray*}\Pro[A_1\cap A_2|\mathcal{G}]&=&
\frac{\Pro(A_1\cap A_2\cap B)\textbf{1}_B}{\Pro(B)} + 
\frac{\Pro(A_1\cap A_2\cap (\Omega\setminus B))\textbf{1}_{\Omega\setminus B}}{\Pro(\Omega\setminus B)},\\
\Pro[A_1|\mathcal{G}]\cdot \Pro[A_2|\mathcal{G}]&=&
\frac{\Pro(A_1\cap  B)\Pro(A_2\cap  B)\textbf{1}_B}{\Pro(B)^2} + 
\frac{\Pro(A_1\cap (\Omega\setminus B))\Pro(A_2\cap (\Omega\setminus B))\textbf{1}_{\Omega\setminus B}}{\Pro(\Omega\setminus B)^2}, \end{eqnarray*}
and $\mathcal{F}_1$ and $\mathcal{F}_2$ are conditionally independent given $\mathcal{G}$ if and only if 
\[ \Pro(A_1\cap A_2|B) = \Pro(A_1|B) \cdot \Pro(A_2|B). \]\hfill $\blacksquare$
\end{example}

\newpage

\section{Exercises}

\begin{exercise}\label{atmdjfgh}Let $\mu$ be a probability measure on $\R$. Show that $S=\{x\in\R \,|\, \mu(\{x\})>0\}$ is at most countably infinite.
\end{exercise}

\begin{exercise}\label{umdhg}Let $X$ be a random variable with values in $\N_0$ and $\mathbb{E}[X]<\infty$. Show that 
\[\mathbb{E}[X] = \sum_{n=0}^\infty\Pro(X>n).\]
\end{exercise}

\begin{exercise}Let $\mu\in \mathcal{P}(\R)$ and $a\in \R$. Show:
\[ \mu(\{a\}) = \lim_{T\to\infty} \frac1{2T}\int_{-T}^T e^{-iat}\varphi_{\mu}(t)dt. \]
\end{exercise}

\begin{exercise}[Injective and non-injective characteristic functions]${}$
\begin{itemize}
	\item[(a)] Show: If $\mu$ is symmetric, i.e.\ $\mu(A)=\mu(-A)$ for every Borel subset $A\subset\R$, then $t\mapsto \varphi_\mu(t)$ is not injective.
	\item[(b)] Let $\lambda>0$ and let $\mu$ be the exponential distribution defined by the density $\lambda e^{-\lambda x}$, $x\geq0$. Show that 
$\varphi_\mu$ is an injective function. How does the image $\varphi_\mu(\R)$ look like?
\end{itemize}
\end{exercise}

\begin{exercise}\label{cdwhjkdsffg}Compute the characteristic function of the random variable $X$.
\begin{itemize}
	\item[(a)] $X$ has a (centered) Cauchy distribution given by the density 
	$\frac1{\pi} \frac{\gamma}{x^2+\gamma^2}$ with scale $\gamma>0$. \\
	(Hint: consider a complex integral along the path 
	$\Gamma_R=[-R,R] \cup [R,R+iR]\cup [R+iR,-R+iR]\cup [-R+iR, -R].$)
	\item[(b)] Let $X_1,X_2,...$ be \textit{iid} random variables with $\Pro[X_1=-1]=\Pro[X_1=1]=\frac1{2}$ and put $X=\frac1{2}+\sum_{k=1}^\infty \frac{X_k}{3^k}$. 
\end{itemize} 
\end{exercise}

\begin{exercise}\label{var_adds}Let $X,Y$ be independent, square-integrable random variables.\\
 Show that $Var(X+Y) = Var(X) + Var(Y)$. 
\end{exercise}

\begin{exercise}\label{ind_sub_cor}Recall Theorem \ref{subindependence_classical}.
\begin{itemize}
	\item[(a)] Construct random variables $X,Y$ which are subindependent but not independent. 
	\item[(b)] Construct random variables $X,Y$ which are uncorrelated but not subindependent.
\end{itemize} 
\end{exercise}

\begin{exercise}Let $\mu$ and $\nu$ be probability measures. 
Show that Parseval's identity holds:
\[\int_\R e^{-its}\varphi_{\mu}(t) \nu(dt) = \int_\R \varphi_{\nu}(t-s) \mu(dt) \quad \text{for all $s\in\R$}.\]

\end{exercise}

\begin{exercise}Consider the metric space $M=(\mathcal{P}(\R),d_{\text{L\'{e}vy}})$.
\begin{itemize}
	\item[(a)] Is $M$ sequentially compact? (Does every sequence $(\mu_n)_{n\in\N}\subset\mathcal{P}(\R)$ have a convergent subsequence?)
	\item[(b)] Is $M$ connected?
\end{itemize}
\end{exercise}

\begin{exercise}\label{exercise_classroom}Consider \textit{iid} random variables $X_1,X_2,...$ with a uniform distribution on $[-1,0]$. Calculate the limit distribution of 
$Y_n=\max\{X_1,...,X_n\}\cdot n$ as $n\to\infty$.
\end{exercise}

\newpage

\chapter{A crash course on Markov processes}

A stochastic process is simply a family $(X_t)_{t\in T}$ of random variables on a common probability space $(\Omega, \mathcal{F}, \mathbb{P})$  for a non-empty index set $T\subset \R$. It can also be seen as a random function via the \emph{sample paths}
\[ t\mapsto X_t(\omega).\]

For $t\in T$, we denote by $\sigma((X_s)_{s\leq t})$ the $\sigma$-algebra generated by the set $\cup_{s\in T,  s\leq t} \sigma(X_s)$, which encodes all information described by the stochastic process up to time $t$.

\begin{remark}
A stochastic process $(X_t)_{t\in T}$ often comes together with a filtration $(\mathcal{F}_t)_{t\in T}$, which is a family of $\sigma$-subalgebras of $\mathcal{F}$ such that $\mathcal{F}_s\subset \mathcal{F}_t$ whenever $s\leq t$. It describes an increasing amount or history of information. $(X_t)$ is called \emph{adapted} to $(\mathcal{F}_t)$ if $X_t$ is $\mathcal{F}_t$-measurable for every $t\in T$, which means that $X_t$ cannot see into the future of our available information. Every stochastic process is adapted to its natural filtration $\mathcal{F}_t=\sigma((X_s)_{s\leq t})$. For our purposes, it will be enough to consider this filtration only.
\end{remark}

The distributions of all $X_t$ might be interdependent as complicated as one might wish. If we require that all $X_t$ are independent, we end up with processes that are much too simple. So another property is needed to define a class of stochastic processes which are both tameable and interesting enough. Markov processes turn out to have that dream property.

\section{Markov Processes}

\begin{definition}Let $T=[0,\infty)$ or $T=\N_0$ and let $S\subset \R$ be a non-empty Borel subset, the state space. An $S$-valued stochastic process $(X_t)_{t\in T}$ on $(\Omega, \mathcal{F}, \mathbb{P})$ 
is called a \emph{Markov process} if, for all $s,t\in T$ with $s\leq t$, $\sigma((X_\tau)_{\tau\leq s})$ and $\sigma(X_t)$ are conditionally independent given $X_s$.\\ 
In the case $T=\N_0$, a Markov process is also called a \emph{Markov chain}.
\end{definition}

\begin{remark}The conditional independence in the definition of the Markov property can also be stated as follows, see e.g.\ \cite[Proposition 2.3]{CD17}:\\
For all $s,t\in T$ with $s\leq t$ and every bounded and Borel measurable $f:\R\to\C$, we have
\begin{equation}\label{uaa} \mathbb{E}[f(X_{t})|(X_\tau)_{\tau\leq s})] = \mathbb{E}[f(X_{t})|X_s] \quad a.s. \end{equation}
\end{remark} 

A Markov process is a stochastic process where, given the present state, the future is independent of the past.
The expectation of some property of $X_t$, i.e.\ $f(X_t)$, conditioned on the whole history of the process up to time $s$ is equal to the expectation conditioned on knowing the process only at the time $s$.

\begin{definition}
A \emph{probability kernel} $k$ on $(S, \mathcal{B}(S))$ is a map $k:S\times\mathcal{B}(S)\to[0,1]$ such that
\begin{itemize}
\item[(i)]
$B \mapsto k(x,B)$ is a probability measure for each $x\in S$,
\item[(ii)]
$x\mapsto k(x,B)$ is a measurable function for each $B\in\mathcal{B}(S)$.
\end{itemize}
For two probability kernels $k_1$ and $k_2$ we can define its composition
$$
(k\star k_2)(x, B) = \int_S k_1(x,{\rm d}y) k_2(y,B)\qquad \mbox{ for }x\in S, B\in\mathcal{B}(S).
$$
\end{definition}

Two $S$-valued random variables $X,Y$ produce a kernel $k$ such that $\Pro[Y\in B| X] = k(X, B)$ almost surely, see \cite[Theorem 5.3]{kall}.\\
A Markov process thus produces a family $k_{s,t}$ of probability kernels, called \emph{transition kernels}, where $s,t\in T$ with $s\leq t$, such that 
\[k_{s,t}(X_s,B) = \mathbb{P}[X_t \in B|X_s]=\mathbb{P}[X_t \in B|\{X_\tau\,|\, \tau\leq s\}]\]
 almost surely, $B\subset \R$. 

\begin{lemma}[Chapman-Kolmogorov relation] Let $s,t,u\in T$ with $s\leq t\leq u$. Then 
\begin{equation}\label{CKR}k_{s,u} = k_{s,t} \star k_{t,u}.\end{equation}
\end{lemma}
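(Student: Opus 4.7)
The plan is to verify equation \eqref{CKR} by evaluating both sides at $X_s$ (so the equality should be interpreted as holding for $\mathbb{P}\circ X_s^{-1}$-almost every $x\in S$, which is how the kernels $k_{s,t}$ are pinned down in the first place). Specifically, I would show that for each fixed $B\in\mathcal{B}(S)$,
\[ k_{s,u}(X_s,B) \;=\; (k_{s,t}\star k_{t,u})(X_s,B) \qquad \text{a.s.} \]

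First I would use the tower property of conditional expectation together with the Markov property. Since $\sigma(X_s)\subset \sigma((X_\tau)_{\tau\leq t})$, one has
\[ \mathbb{P}[X_u\in B\,|\, X_s] \;=\; \mathbb{E}\bigl[\mathbb{P}[X_u\in B\,|\,(X_\tau)_{\tau\leq t}]\,\big|\, X_s\bigr] \quad \text{a.s.} \]
by Theorem \ref{condi}. Then applying the Markov property in the form \eqref{uaa} (with $f=\mathbf{1}_B$) to the inner conditional probability collapses the conditioning to just $X_t$:
\[ \mathbb{P}[X_u\in B\,|\,(X_\tau)_{\tau\leq t}] \;=\; \mathbb{P}[X_u\in B\,|\,X_t] \;=\; k_{t,u}(X_t,B) \quad \text{a.s.} \]
Combining, the left-hand side $k_{s,u}(X_s,B)$ equals $\mathbb{E}[k_{t,u}(X_t,B)\,|\,X_s]$ almost surely.

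The remaining step is to recognize this conditional expectation as the integral appearing in the definition of $\star$. Because $y\mapsto k_{t,u}(y,B)$ is a bounded measurable function (kernel property (ii)), and because $k_{s,t}(X_s,\cdot)$ is a regular conditional distribution of $X_t$ given $X_s$, a standard monotone-class argument gives
\[ \mathbb{E}[k_{t,u}(X_t,B)\,|\,X_s] \;=\; \int_S k_{s,t}(X_s,dy)\,k_{t,u}(y,B) \;=\; (k_{s,t}\star k_{t,u})(X_s,B) \quad \text{a.s.} \]
Concretely, the identity holds for $k_{t,u}(\cdot,B)=\mathbf{1}_C$ directly from the defining property $k_{s,t}(X_s,C)=\mathbb{P}[X_t\in C\,|\,X_s]$, extends to simple functions by linearity, and then to all bounded measurable integrands by monotone convergence.

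The main obstacle is the last step, since it requires that $k_{s,t}$ is genuinely a \emph{regular} version of the conditional distribution of $X_t$ given $X_s$ (not merely a family of conditional probabilities indexed by $B$). This is precisely what the reference to \cite[Theorem 5.3]{kall} guarantees on the Borel state space $S$, and it is what lets us push the expectation inside the $dy$-integral via Fubini/monotone class rather than having to manipulate $\sigma(X_s)$-measurable representatives for every $B$ separately. Once that regularity is in hand, the chain of equalities above is essentially a bookkeeping exercise.
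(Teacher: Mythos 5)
The paper itself gives no proof of this lemma, referring instead to Kallenberg (\cite[Corollary 7.3]{kall}). Your argument is the standard one for this statement and is essentially what that reference does: tower property of conditional expectation, collapse the inner conditioning via the Markov property \eqref{uaa}, then disintegrate the resulting conditional expectation $\mathbb{E}[k_{t,u}(X_t,B)\,|\,X_s]$ through the regular conditional distribution $k_{s,t}(X_s,\cdot)$ by a monotone-class argument. You also correctly flag both the place where regularity of the kernel matters and the fact that the identity can only hold for a.e.\ $x$ (with the null set a priori depending on $B$), which is the right level of care. The one small imprecision is attributing the tower property to Theorem \ref{condi}: that theorem establishes existence and some basic properties of $\mathbb{E}[\cdot|\mathcal{G}]$ but does not explicitly state $\mathbb{E}[\mathbb{E}[X|\mathcal{G}_1]|\mathcal{G}_2]=\mathbb{E}[X|\mathcal{G}_2]$ for $\mathcal{G}_2\subset\mathcal{G}_1$; it is an immediate consequence of the defining identity \eqref{cond_exp}, so this is a citation nit rather than a gap.
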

\begin{proof}
See \cite[Corollary 7.3]{kall}.
\end{proof}

\begin{example}\label{finite_MP} The simplest example of a Markov process is the case $T=\N_0$ and the state space $S$ is finite: $S=\{x_1,...,x_N\}$. The transition kernel can now be represented by a transition matrix. 
 For $s,t \in \N_0$ with $s \leq t$, we define the $N\times N$-matrix $P_{s,t}$ by 
\begin{equation*}
P_{s,t} = (p_{j,k,s,t})_{1\leq j,k \leq N}\quad \text{with} \quad p_{j,k,s,t} = \Pro[X_t = x_j|X_s= x_k] \quad \text{a.s.}\end{equation*}
(If $\Pro[X_s= x_k]=0$, we let $\Pro[X_t = x_j|X_s= x_k]$ be arbitrary probabilities that sum up to $1$, such that $P_{s,t}$ is a kernel.) The product $\star$ now simply becomes the matrix product and the 
Chapman Kolmogorov relation reads as 
\[P_{s,u} = P_{t,u} \cdot P_{s,t}.\]\hfill $\blacksquare$
\end{example}

On the one hand, the Chapman Kolmogorov relation is simply a consistency condition for the transition kernels of a Markov process. On the other hand, a family of kernels satisfying this relation, together with an initial distribution, already determine a Markov process completely.

\begin{theorem}\label{existence_Markov}
Let $T=[0,\infty)$ or $T=\N_0$. Let $S\subset \R$ be a non-empty Borel subset and let $\nu$ be a probability measure on $S$. Furthermore, let $k_{s,t}$ be a family of transition kernels on $S$, $s,t\in T$ with $s\leq t$, which satisfies \eqref{CKR}. Then there exists a Markov process 
$(X_t)_{t\in T}$ on $S$ with transition kernels $k_{s,t}$ and initial distribution $\nu$, i.e.\ $\nu$ is the distribution of $X_0$.
\end{theorem}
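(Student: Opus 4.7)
The plan is to apply the Kolmogorov extension theorem to finite-dimensional distributions constructed from $\nu$ and the kernels $k_{s,t}$, and then to define $X_t$ as coordinate projections on the resulting path space. Concretely, for any finite collection $0=t_0<t_1<\ldots<t_n$ in $T$ (extending the index set by $0$ if necessary) and Borel sets $A_0,\ldots,A_n\subset S$, I would define
\[
\mu_{t_0,\ldots,t_n}(A_0\times\cdots\times A_n) = \int_{A_0}\nu(dx_0)\int_{A_1}k_{t_0,t_1}(x_0,dx_1)\cdots\int_{A_n}k_{t_{n-1},t_n}(x_{n-1},dx_n),
\]
and extend to a Borel probability measure on $S^{n+1}$ by Carath\'eodory's theorem (using measurability of $x\mapsto k(x,B)$ to apply Fubini).

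The next step is to verify the Kolmogorov consistency conditions: if we drop one time $t_j$, the resulting marginal on $S^n$ agrees with $\mu_{t_0,\ldots,t_{j-1},t_{j+1},\ldots,t_n}$. This reduces, after Fubini, to the Chapman--Kolmogorov relation \eqref{CKR} applied to the pair $(t_{j-1},t_j,t_{j+1})$. Permutation invariance (the other consistency requirement in the usual formulation) is automatic once we define the distributions on ordered tuples and transport them via the symmetric group. Having checked consistency, the Kolmogorov extension theorem (e.g.\ \cite[Theorem 6.16]{kall}) provides a probability space $(\Omega,\mathcal{F},\mathbb{P})$ with $\Omega=S^T$ and a family of coordinate projections $X_t(\omega)=\omega(t)$ whose joint distributions are exactly the $\mu_{t_0,\ldots,t_n}$.

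It remains to check three things: that $X_0$ has distribution $\nu$ (immediate from the $n=0$ case of the construction), that the $k_{s,t}$ are really the transition kernels of $(X_t)$, and that the Markov property holds. For $s<t$ and $B\in\mathcal{B}(S)$, the construction gives
\[
\mathbb{E}[\mathbf{1}_{\{X_t\in B\}}\cdot g(X_{t_0},\ldots,X_{t_k},X_s)] = \mathbb{E}[k_{s,t}(X_s,B)\cdot g(X_{t_0},\ldots,X_{t_k},X_s)]
\]
for every bounded measurable $g$ and every finite set of times $t_0<\cdots<t_k\le s$; by a monotone class argument this identifies $\mathbb{P}[X_t\in B\,|\,(X_\tau)_{\tau\le s}]=k_{s,t}(X_s,B)$ almost surely, which is simultaneously the transition-kernel identity and the Markov property in the form \eqref{uaa}.

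I expect the main obstacle to be the careful verification that the right-hand side of the display above depends only on $X_s$, so that the conditional expectation collapses as claimed; this is a standard but slightly fiddly monotone class / $\pi$-$\lambda$ argument, together with the measurability claims needed to invoke Fubini inside the kernel integrals. A secondary subtlety, relevant only for $T=[0,\infty)$, is that the path space $S^T$ with the product $\sigma$-algebra is large and crude, but no regularity of sample paths is claimed in the statement, so the raw Kolmogorov construction suffices.
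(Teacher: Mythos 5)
The paper does not give a proof of this theorem; it simply cites Kallenberg (Theorem 7.4 of \cite{kall}). Your sketch is exactly the standard Kolmogorov-extension-theorem argument that Kallenberg uses, and it is correct: the finite-dimensional distributions are built from $\nu$ and the kernels, consistency reduces to Chapman--Kolmogorov, and the Markov property plus the identification of the transition kernels follow from a monotone-class argument as you describe.
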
 
\begin{proof}See \cite[Theorem 7.4]{kall}.
\end{proof}

\section{Time-homogeneous Markov processes}

\begin{definition} A Markov process with transition kernels $k_{s,t}$ is called \emph{time-homogeneous} if $k_{s,t}=k_{0,t-s}$ for all $0\leq s\leq t \in T$.
\end{definition}

A time-homogeneous Markov chain, $T=\N_0$, is uniquely determined by the initial distribution $\nu$ and by $k_{0,1}$, as $k_{s,t}=k_{0,t-s}=k_{0,1}\star ... \star k_{0,1}$.

\begin{example}[Random walk on $\Z$] The (Bernoulli) random walk on $\Z$ is the Markov chain $(X_n)_{n\in\N_0}$ on $S=\Z$ with initial distribution $\nu = \delta_0$ and $k_{0,1}$ is given by $k_{0,1}(m,\{m+1\})=k_{0,1}(m,\{m-1\})=\frac1{2}$, i.e.\
 $\Pro[X_{n+1}=m+1|X_n = m]=\Pro[X_{n+1}=m-1|X_n = m]=\frac1{2}$ a.s.\hfill $\blacksquare$
\end{example}

For the rest of this section we consider a time-homogeneous Markov chain $(X_n)_{n\in\N_0}$ on a finite state space $S=\{x_1,...,x_N\}$. We will represent a distribution $\mu$ on $S$ simply as the vector $(\mu\{x_1\},...,\mu(\{x_N\}))^T$.\\

 With the notation of Example \ref{finite_MP}, we then have $P_{s,t}=P_{0,t-s}=P^{t-s}$ with $P=P_{0,1}$.
Thus the Markov chain is uniquely determined by its initial distribution $v\in\R^N$ and the \emph{transition matrix} $P=(P_{j,k})_{1\leq j,k\leq N}$. The distribution of $X_n$ is thus given by 
$P^n v$.

\begin{example}Consider the finite state space $S = \{1,2\}$ with initial distribution $\delta_1$ and the transition matrices
 \[ P_1 = \begin{pmatrix}
	0 & 1 \\ 1 & 0\end{pmatrix},\quad P_2 = \begin{pmatrix}
	\frac1{2} & 0\\ \frac1{2} & 1\end{pmatrix}, \quad P_3 = \begin{pmatrix}
	\frac1{2} & \frac1{2}\\ \frac1{2} & \frac1{2}\end{pmatrix}.\]
	A Markov process $(X_n)_{n\in\N_0}$ in the first case simply switches between $1$ and $2$, i.e.\ $\Pro[X_n=1]=1$ if $n$ is even and 
	$\Pro[X_n=1]=0$ if $n$ is odd.\\
	In the second case, the state $2$ is ``absorbing'' and we have 	$\Pro[X_n=1] = \frac1{2^n}$ for all $n\in\N_0$.\\
	In the third case, $\Pro[X_n=1] = \Pro[X_n=2]=\frac1{2}$ for all $n\in\N$.\hfill $\blacksquare$
\end{example}

We now consider the question whether $X_n$ converges in distribution as $n\to\infty$, which is equivalent to the existence of the limit 
\[ \lim_{n\to\infty} P^nv. \] 
From the previous example we see that the limit does not exist in general. 

In the following we will abuse notation and write $P^n_{j,k}$ for the $(j,k)$-element of the matrix $P^n$.

\begin{definition}The transition matrix $P$ is called \emph{irreducible} if for all states $x_j,x_k\in S$,  there exists $m\in \N$ such that the $P^m_{j,k}$ is positive, i.e.\ the probability of getting from state $x_k$ to $x_j$ in $m$ steps is positive.\\
A distribution $\pi\in \R^N$ on $S$ is called \emph{stationary} if $P\pi = \pi$. 
\end{definition}

\begin{lemma}Let $h:S \to \R$ such that 
\[ h(x_k) = \sum_{j=1}^N P_{j,k} h(x_j) \quad \text{for all $k=1,...,N$. \quad ($h$ is also called harmonic.)}\]
If $P$ is irreducible, then $h$ is constant.
\end{lemma}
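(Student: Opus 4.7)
The plan is to run a maximum-principle argument, exploiting the fact that irreducibility lets a maximum at one vertex propagate to every vertex of $S$.

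First I would observe that the transition matrix $P$ is column-stochastic, i.e.\ $\sum_{j=1}^N P_{j,k}=1$ for every $k$, because $P_{j,k}=\Pr[X_{n+1}=x_j\mid X_n=x_k]$ are the probabilities of where the chain moves in one step starting from $x_k$. Next, since $S$ is finite, $h$ attains its maximum value $M:=\max_{1\le k\le N} h(x_k)$ at some state $x_{k^*}$. Using the harmonicity hypothesis at $x_{k^*}$ and the bound $h(x_j)\le M$, I get
\[
M \;=\; h(x_{k^*}) \;=\; \sum_{j=1}^N P_{j,k^*}\,h(x_j) \;\le\; M\sum_{j=1}^N P_{j,k^*} \;=\; M,
\]
so equality must hold throughout. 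Since each $P_{j,k^*}\ge 0$ and $\sum_j P_{j,k^*}=1$, this forces $h(x_j)=M$ for every $j$ with $P_{j,k^*}>0$.

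The second step is to iterate this one-step propagation. I claim that if $h(x_k)=M$, then $h(x_j)=M$ for every $j$ with $P^m_{j,k}>0$, for any $m\in\N$. This follows by induction on $m$: $P^m=P\cdot P^{m-1}$ as a product of column-stochastic matrices is again column-stochastic, and the identity $h(x_k)=\sum_j P^m_{j,k}h(x_j)$ (obtained by applying the harmonicity relation $m$ times, or equivalently by noting that $h(X_n)$ is a martingale under the chain with transition matrix $P$) allows me to repeat the equality argument above with $P^m$ in place of $P$.

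Finally, I would invoke irreducibility: for every state $x_j\in S$, there exists some $m\in\N$ such that $P^m_{j,k^*}>0$. Combined with the previous step, this gives $h(x_j)=M$ for every $j\in\{1,\dots,N\}$, so $h$ is constant. The only step that requires a moment of care is the bookkeeping that $P^m$ is column-stochastic so that the inequality $\sum_j P^m_{j,k^*}h(x_j)\le M\sum_j P^m_{j,k^*}=M$ goes through cleanly; everything else is just the discrete maximum principle.
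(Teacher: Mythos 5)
Your proof is correct and uses essentially the same maximum-principle argument as the paper: locate the maximum, show it propagates to every state reachable with positive probability, and close with irreducibility. The only cosmetic difference is that the paper iterates the one-step equality along a path $x_{k_0}\to x_{j_m}\to\cdots\to x_{j_0}$ of states with positive one-step transition probabilities, while you pass directly to the $m$-step transition matrix $P^m$ (noting that $h$ is also $P^m$-harmonic); these are equivalent formulations of the same iteration.
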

\begin{proof}As $S$ is finite, $h$ attains its maximum at some $x_{k_0}\in S$. Let $x_{j_0}\in S$ be some state with $P_{j_0,k_0}>0$. Assume that $h(x_{j_0})< h(x_{k_0})$. Then 
\begin{eqnarray*}
h(x_{k_0}) = \sum_{j=1}^N P_{j,k_0} h(x_j) 
 < \sum_{j=1}^N P_{j,k_0} h(x_{k_0}) = h(x_{k_0}),
\end{eqnarray*}
a contradiction and thus $h(x_{j_0})= h(x_{k_0})$.\\
Now let $x_{j_0}\in S$ be any state. As $P$ is irreducible, there exists a path $x_{k_0}, x_{j_m},..., x_{j_1}, x_{j_0}$ such that $P_{j_m, k_0}, P_{j_{m-1}, j_m}, ..., P_{j_0,j_1}$ are all positive. We now conclude inductively that $h(x_{k_0})=h(x_{j_m}) = ... = h(x_{j_0})$. Hence, $h$ is constant.
\end{proof}

\begin{lemma}\label{atm_stat}
If $P$ is irreducible, then $P$ has at most one stationary 
distribution $\pi$.
\end{lemma}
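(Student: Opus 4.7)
The plan is to reduce the uniqueness of stationary distributions to the preceding lemma on harmonic functions. Viewing a distribution on $S$ as a column vector $\pi\in\R^N$, the stationarity condition $P\pi=\pi$ says exactly that $\pi\in\ker(P-I)$. The harmonic equation of the previous lemma, $h(x_k)=\sum_{j=1}^N P_{j,k}\,h(x_j)$, reads in matrix form $P^T h = h$, so the previous lemma tells us that $\ker(P^T-I)$ consists only of constant vectors and is therefore one-dimensional.

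The bridge between the two is the elementary linear-algebra identity $\dim\ker A = N-\mathrm{rank}(A)=N-\mathrm{rank}(A^T)=\dim\ker A^T$, valid for any square matrix $A$. Applying this to $A=P-I$ yields
\[
\dim\ker(P-I)=\dim\ker(P^T-I)=1,
\]
so the set of vectors $\pi\in\R^N$ satisfying $P\pi=\pi$ forms a line in $\R^N$.

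Finally, if $\pi_1$ and $\pi_2$ are two stationary distributions, then $\pi_1-\pi_2\in\ker(P-I)$, so $\pi_1-\pi_2=c\,\pi_1$ for some $c\in\R$. Summing all entries and using that both distributions have total mass $1$ gives $0=c\cdot 1$, hence $c=0$ and $\pi_1=\pi_2$.

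The only real step where one could stumble is the bookkeeping of which side of $P$ one is on: the previous lemma pins down the \emph{left} $1$-eigenspace of $P$, while stationary distributions sit in the \emph{right} $1$-eigenspace. Once this is bridged by the rank identity, the proof collapses to one line. An alternative, self-contained route would first show via irreducibility that any stationary $\pi$ has strictly positive entries (using $\pi_j=\sum_k P^n_{j,k}\pi_k\geq P^n_{j,k_0}\pi_{k_0}>0$ for a suitable $n$), then argue that for two stationary distributions $\pi_1,\pi_2$ the vector $\pi_1-c\pi_2$ with $c=\min_j \pi_{1,j}/\pi_{2,j}$ is nonnegative, stationary, and has a zero entry, hence vanishes; but the transpose argument above is shorter and exploits the lemma just proved.
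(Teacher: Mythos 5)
Your proof is correct and follows essentially the same route as the paper: rewrite the harmonic-function lemma as the statement that $\ker(P^T-I)$ is one-dimensional, use the rank identity to transfer this to $\ker(P-I)$, and then normalize by total mass to force uniqueness. The small cosmetic difference (you subtract the two stationary distributions before invoking one-dimensionality, whereas the paper directly writes one as a scalar multiple of the other) does not change the argument.
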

\begin{proof}The previous lemma can also be written as follows. 
If $w\in \R^N$ with $w^TP = w^T$, then $w^T=(c,...,c)$ for some $c\in \R$, or $(P^T-I)w = 0$ implies $w=(c,...,c)^T$. Thus the dimension of the kernel of $P^T-I$ is equal to $1$ and its rank is equal to $N-1$. As $P^T-I$ and $(P^T-I)^T=P-I$ have the same rank, we conclude that $P-I$ has rank $N-1$ and the dimension of its kernel is equal to $1$. Hence, if $\pi$ and $\pi'$ are stationary distributions, then $(P-I)\pi = (P-I)\pi'=0$ and either $\pi=c\cdot \pi'$ or $\pi'=c\cdot \pi$ for some $c\in\R$. But this implies $\pi=\pi'$.
\end{proof}

For $x\in S$ we define the random (hitting) time $\tau(x) = \min \{n\geq 0\,|\, X_n = x\}$ and the first return time 
$\tau^+(x) = \min \{n>0 \,|\, X_n = x\}$.\\

If the initial distribution is equal to $\delta_{x}$ for some $x\in S$, we will denote the probability by $\Pro_x$ and the expectation by $\mathbb{E}_x$.

 \begin{lemma}Let $P$ be irreducible and consider the initial distribution $\delta_{x}$ for some $x\in S$. Then $\mathbb{E}_x[\tau^+(y)]$ is finite for every $y\in S$.
\end{lemma}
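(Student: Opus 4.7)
The plan is to combine irreducibility with finiteness of $S$ to obtain a uniform geometric tail bound on $\tau^+(y)$ under every starting distribution, and then use the tail formula $\mathbb{E}[X]=\sum_{n\geq 0}\Pro(X>n)$ from Exercise \ref{umdhg}.

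First, I would fix the target state $y\in S$ and, for each $x\in S$, invoke irreducibility to choose $m_x\in\N$ with $P^{m_x}_{y,x}>0$ (note $m_x\geq 1$ since $\N=\{1,2,\ldots\}$; this is important for the case $x=y$, where we need to reach $y$ at a \emph{positive} time). Because $S$ is finite, I can set
\[ M:=\max_{x\in S} m_x\in\N,\qquad \eps:=\min_{x\in S} P^{m_x}_{y,x}>0. \]
I would then observe that for every $x\in S$,
\[ \Pro_x[\tau^+(y)\leq M]\;\geq\;\Pro_x[X_{m_x}=y]\;=\;P^{m_x}_{y,x}\;\geq\;\eps, \]
which handles both $x\neq y$ (where $\tau^+(y)$ coincides with the hitting time) and $x=y$ (where $m_x\geq 1$ guarantees the event lies in the support of $\tau^+$).

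Next, I would bootstrap this to the geometric tail bound
\[ \Pro_x[\tau^+(y)>kM]\;\leq\;(1-\eps)^k,\qquad k\in\N_0, \]
by induction on $k$. The inductive step uses the Markov property at the deterministic time $kM$: on the event $\{\tau^+(y)>kM\}$ the random variable $X_{kM}$ takes values in $S\setminus\{y\}$, and conditionally on $X_{kM}=z$ the remaining path is distributed as a fresh chain started at $z$, so
\[ \Pro_x[\tau^+(y)>(k+1)M\mid X_{kM}=z,\,\tau^+(y)>kM]=\Pro_z[\tau(y)>M]\leq 1-\eps. \]
Summing over $z\neq y$ and multiplying by the inductive hypothesis yields the bound for $k+1$.

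Finally, I would convert this into finiteness of the expectation: since $\Pro_x[\tau^+(y)>n]$ is non-increasing in $n$, it is bounded by $\Pro_x[\tau^+(y)>M\lfloor n/M\rfloor]\leq (1-\eps)^{\lfloor n/M\rfloor}$, so
\[ \mathbb{E}_x[\tau^+(y)]=\sum_{n=0}^\infty\Pro_x[\tau^+(y)>n]\;\leq\;M\sum_{k=0}^\infty(1-\eps)^k=\frac{M}{\eps}<\infty. \]
I expect the only subtlety to be the bookkeeping in step one, specifically making sure that the same uniform constant $\eps$ works for the starting state $x=y$ (which requires $m_x\geq 1$) so that the geometric iteration in step two starts cleanly; once that is in place, the remaining arguments are routine.
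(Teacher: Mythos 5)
Your proof is correct and follows essentially the same route as the paper: irreducibility plus finiteness of $S$ gives a uniform $\eps>0$ lower bound on hitting $y$ within a bounded time window, the Markov property then yields a geometric tail bound, and the tail formula from Exercise \ref{umdjfgh} (the paper actually cites Exercise \ref{umdhg}) converts this into a finite expectation. The only cosmetic difference is that you make the window length $M$ and the constant $\eps$ depend on the target $y$, whereas the paper takes a single global maximum time $r$ and minimum probability $\eps$ over all ordered pairs of states; both choices work, and your explicit note that $m_x\geq 1$ handles the case $x=y$ is a point the paper glosses over.
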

\begin{proof}For all $s_j,s_k\in S$ there exists $s(j,k)>0$ such that  $P^s_{j,k}>0$. Let $r$ be the maximum of all such $s(j,k)$  and $\eps=\min\{P^{s(j,k)}_{j,k}\,|\, j,k=1,...,N\}$. 
The probability of the Markov chain going to some fixed state $y\in S$  between times $t$ and $t+r$ is at least $\eps$, or
\[ \Pro_x[X_k\not= y \;\text{for all}\; t< k \leq t+r | X_t=y'] \leq 1-\eps \]
a.s.\ for every $t\in\N_0$ and $y, y'\in S$. Now, if $\tau^+(y)>n$, then 
$X_k\not =y$ for all $0<k \leq n$. Hence, for $m\in \N$,
\begin{eqnarray*}
\Pro_x[\tau^+(y)>mr] &=& \Pro_x[X_k\not= y \;\text{for all}\; 0< k \leq mr] \\
&\leq& \Pro_x[X_k\not= y \;\text{for all}\; 0< k \leq (m-1)r]
\cdot (1-\eps)
 \leq \ldots \leq (1-\eps)^m.
\end{eqnarray*}
Now Exercise \ref{umdhg} implies 
\[ \mathbb{E}_x[\tau^+_y] = \sum_{n=0}^\infty \Pro_x[\tau^+_y>n] \leq \sum_{m=0}^\infty r\Pro_x[\tau^+_y>mr]\leq
 r \sum_{m=0}^\infty (1-\eps)^m,\]
which is a convergent sum because $\eps>0$.
\end{proof}

 \begin{theorem}\label{uniq_statuin}
If $P$ is irreducible, then it has a unique stationary distribution \\
$\pi=(\pi(s_1),...,\pi(s_N))^T\in \R^N$ given by 
\[ \pi(x) = \frac1{\mathbb{E}_x[\tau^+(x)]}. \]
\end{theorem}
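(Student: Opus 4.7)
The plan is to prove \emph{existence} of a stationary distribution with the claimed formula (uniqueness being already supplied by Lemma~\ref{atm_stat}). The strategy is the classical excursion construction: fix a reference state $z\in S$ and use the (a.s.\ finite) return time $\tau^+(z)$ to define a measure that counts expected visits during one excursion from $z$.

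Concretely, assume the chain starts in $z$ (so we use $\Pro_z$, $\mathbb{E}_z$), and for each $y\in S$ set
\[
\tilde\pi(y)\;=\;\mathbb{E}_z\!\left[\sum_{n=0}^{\tau^+(z)-1}\mathbf{1}_{\{X_n=y\}}\right]
\;=\;\sum_{n=0}^{\infty}\Pro_z[X_n=y,\ \tau^+(z)>n].
\]
Two easy observations: $\tilde\pi(z)=1$ (since $X_0=z$ and $X_n\ne z$ for $0<n<\tau^+(z)$), and $\sum_{y\in S}\tilde\pi(y)=\mathbb{E}_z[\tau^+(z)]<\infty$ by the previous lemma. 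In particular, after normalizing, $\pi:=\tilde\pi/\mathbb{E}_z[\tau^+(z)]$ is a probability distribution with $\pi(z)=1/\mathbb{E}_z[\tau^+(z)]$.

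Next I would verify that $\tilde\pi$ (and hence $\pi$) is stationary, i.e.\ $(P\tilde\pi)(y)=\tilde\pi(y)$ for every $y\in S$. Using that $X_0=z=X_{\tau^+(z)}$, the sum can be reindexed as $\tilde\pi(y)=\sum_{n=1}^{\infty}\Pro_z[X_n=y,\ \tau^+(z)\ge n]$. Splitting on the value $x$ of $X_{n-1}$ and invoking the Markov property (which applies because $\{\tau^+(z)\ge n\}\in\sigma(X_0,\ldots,X_{n-1})$) gives
\[
\Pro_z[X_n=y,\ X_{n-1}=x,\ \tau^+(z)\ge n]\;=\;P_{y,x}\,\Pro_z[X_{n-1}=x,\ \tau^+(z)\ge n].
\]
Summing over $n\ge 1$ and rewriting $\{\tau^+(z)\ge n\}=\{\tau^+(z)>n-1\}$ yields
\[
\tilde\pi(y)\;=\;\sum_{x\in S} P_{y,x}\sum_{m=0}^{\infty}\Pro_z[X_m=x,\ \tau^+(z)>m]\;=\;\sum_{x\in S} P_{y,x}\,\tilde\pi(x),
\]
which is precisely $P\tilde\pi=\tilde\pi$. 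Together with irreducibility, Lemma~\ref{atm_stat} now tells us that $\pi$ is \emph{the} stationary distribution.

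To obtain the formula $\pi(y)=1/\mathbb{E}_y[\tau^+(y)]$ for \emph{every} $y\in S$, I would simply repeat the construction with $y$ in place of $z$: for each $y$, the above procedure produces a stationary distribution whose value at $y$ is $1/\mathbb{E}_y[\tau^+(y)]$. By the already-established uniqueness, this distribution coincides with $\pi$, so $\pi(y)=1/\mathbb{E}_y[\tau^+(y)]$. The main obstacle is the stationarity verification: one has to perform the reindexing and apply the Markov property on the event $\{\tau^+(z)\ge n\}$ (which is the reason this event is introduced rather than $\{\tau^+(z)>n\}$), but once the index bookkeeping is done correctly the algebra collapses in one line.
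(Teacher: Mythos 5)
Your proposal is correct and follows essentially the same approach as the paper: the excursion construction of $\tilde\pi$ from one reference state, the use of the preceding lemma to ensure finiteness, the Markov property on the event $\{\tau^+(z)\ge n\}$ to verify $P\tilde\pi=\tilde\pi$, and finally Lemma~\ref{atm_stat} plus repetition of the argument at each state to extract the formula $\pi(y)=1/\mathbb{E}_y[\tau^+(y)]$. The only difference is cosmetic: you verify stationarity by first reindexing $\tilde\pi(y)=\sum_{n\ge 1}\Pro_z[X_n=y,\ \tau^+(z)\ge n]$ (using $X_0=z=X_{\tau^+(z)}$) and then decomposing over $X_{n-1}$, which collapses the calculation in one line, whereas the paper computes $(P\tilde\pi)(y)-\tilde\pi(y)$ directly and shows it telescopes to $0$; the two computations are equivalent.
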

\begin{proof}Let us start the Markov chain in some state $s\in S$, and let $n(s,y)$ be the random variable ``number of visits to $y\in S$ before returning to $s$'', where $n(s,s)=1$. Then 
\[\tilde{\pi}(s,y) := \mathbb{E}_s[n(s,y)] = \sum_{n=0}^\infty \Pro_s[X_n = y, \tau_s^+ > n],\] where $\tilde{\pi}(s,s)=1$. 
Clearly, $n(s,y)\leq \tau^+_s$ and thus $\tilde{\pi}(s,y)\leq \mathbb{E}_s[\tau^+(s)]$. The previous lemma implies that $\tilde{\pi}(s,y)<\infty$ for all $y\in S$.\\
Since $P$ is irreducible, the probability to visit $y$ at least once before returning to $s$ must be positive and thus $\tilde{\pi}(s,y)>0$.\\
Let $v(s)=(\tilde{\pi}(s,s_1),...,\tilde{\pi}(s,s_N))^T$. 
We now show that  $Pv(s)=v(s)$.\\
We have 
\[ \sum_{k=1}^N \tilde{\pi}(s,s_k) P_{j,k} = 
\sum_{k=1}^N \sum_{n=0}^\infty \Pro_s[X_n = s_k, \tau_s^+ > n] P_{j,k}. \]
The event $\tau_s^+ > n$ is only determined by $X_0, X_1,..., X_n$, and thus it is independent of $X_{n+1}=y$ when conditioned on $X_n=x$, i.e.\
\begin{eqnarray*} &&\Pro_s[X_n=s_k,X_{n+1}=s_j,\tau^+_s > n]=
 \Pro_s[X_n=s_k,\tau^+_s > n]\cdot \Pro_s[X_{n+1}=s_j|X_n=s_k]\\
&=& \Pro_s[X_n=s_k,\tau^+_s > n]\cdot P_{j,k}.
\end{eqnarray*}
Hence,
\begin{eqnarray*}\sum_{k=1}^N \tilde{\pi}(s,s_k) P_{j,k} &=& 
 \sum_{n=0}^\infty \sum_{k=1}^N\Pro_s[X_n = s_k, \tau_s^+ > n] P_{j,k} = \sum_{n=0}^\infty \sum_{k=1}^N \Pro_s[X_n=s_k,X_{n+1}=s_j,\tau^+_s > n]\\
&=& \sum_{n=0}^\infty \Pro_s[X_{n+1}=s_j,\tau^+_s > n] = 
\sum_{n=1}^\infty \Pro_s[X_{n}=s_j,\tau^+_s > n-1].
\end{eqnarray*}
So 
\begin{eqnarray*}
&&\sum_{k=1}^N \tilde{\pi}(s,s_k) P_{j,k} - \tilde{\pi}(s,s_j)  =
\sum_{n=1}^\infty \Pro_s[X_{n}=s_j,\tau^+_s > n-1] - \sum_{n=0}^\infty \Pro_s[X_n = s_j, \tau_s^+ > n]\\
&=&  \sum_{n=1}^\infty \Pro_s[X_{n}=s_j,\tau^+_s = n] - \Pro_s[X_0 = s_j, \tau_s^+ > 0].
\end{eqnarray*}
In case $s_j=s$, this expression becomes $1-1=0$, and otherwise it is $0-0=0$. Hence, we obtain a stationary distribution $(\pi(s_1),...,\pi(s_N))^T\in \R^N$ by 
$\pi(s_j) = v(s)/\sum_{j=1}^N v(s_j)$. We have 
$\sum_{j=1}^N v(s_j)= \mathbb{E}_s[ \sum_{j=1}^N n(s,s_j)] =\mathbb{E}_s[ \tau^+(s)]$. In particular, $\pi(s)=1/\mathbb{E}_s[ \tau^+(s)]$. We obtain a further stationary distribution by choosing another $s\in S$. However, due to Lemma \ref{atm_stat}, there is only one stationary distribution. Hence, we have 
\[ \pi(s)=\frac1{\mathbb{E}_s[ \tau^+(s)]} \quad \text{for all $s\in S$}. \]
\end{proof}

\begin{definition}For $j=1,...,N$ let $\mathcal{T}(s_j)=\{n\geq 1\,|\, P^n_{j,j}>0\}$. The \emph{period} of $s_j$ is defined to be  $\operatorname{gcd} \mathcal{T}(s_j)$, the greatest common divisor of $\mathcal{T}(s_j)$.
\end{definition}

\begin{lemma}If $P$ is irreducible, then $\operatorname{gcd} \mathcal{T}(s_j)=\operatorname{gcd} \mathcal{T}(s_k)$ for all $j,k=1,...,N$.
\end{lemma}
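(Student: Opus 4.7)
The plan is to show that both $\gcd \mathcal{T}(s_j)$ and $\gcd \mathcal{T}(s_k)$ divide each other, using the fact that irreducibility lets us move back and forth between the two states and then concatenate any cycle based at one of them into a cycle based at the other.

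First I would fix indices $j$ and $k$ and exploit irreducibility to pick $a, b \in \mathbb{N}$ with $P^a_{k,j} > 0$ and $P^b_{j,k} > 0$. Because multiplying probability kernels corresponds to summing over intermediate states with nonnegative weights, I get the bound $P^{a+b}_{j,j} \geq P^b_{j,k} P^a_{k,j} > 0$, so $a+b \in \mathcal{T}(s_j)$ and hence $d_j := \gcd \mathcal{T}(s_j)$ divides $a+b$.

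Next, the key step: for any $n \in \mathcal{T}(s_k)$, I concatenate the three paths to get $P^{a+n+b}_{j,j} \geq P^b_{j,k} P^n_{k,k} P^a_{k,j} > 0$, so $a+n+b \in \mathcal{T}(s_j)$, and therefore $d_j \mid a+n+b$. Combined with $d_j \mid a+b$, subtracting yields $d_j \mid n$. Since $n \in \mathcal{T}(s_k)$ was arbitrary, $d_j$ is a common divisor of $\mathcal{T}(s_k)$, hence $d_j \mid d_k := \gcd \mathcal{T}(s_k)$. The same argument with the roles of $j$ and $k$ reversed gives $d_k \mid d_j$, so $d_j = d_k$.

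I don't anticipate a main obstacle here; the only subtle point is observing that the matrix inequality $P^{m+n}_{j,j} \geq P^n_{j,k} P^m_{k,j}$ follows from dropping all but one intermediate term in the nonnegative sum defining matrix multiplication, which is what allows the cycle concatenation argument to work.
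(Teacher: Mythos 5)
Your proof is correct and follows essentially the same approach as the paper: both fix the two states, use irreducibility to route between them, concatenate an arbitrary cycle at one state into a cycle at the other, and subtract the base length $a+b$ (the paper's $m+n$) to conclude the gcd at one state divides every element of the other return-time set, whence the two gcds divide each other.
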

\begin{proof}Fix two states $s_j$ and $s_k$. As $P$ is irreducible, we find $m,n \in \N$ such that $P^m_{j,k}>0$ and $P^n_{k,j}>0$. Let $k=m+n$. Then $k\in \mathcal{T}(s_j)\cap \mathcal{T}(s_k)$ and $\mathcal{T}(s_j)+m\subset \mathcal{T}(s_k)$ and thus 
$\operatorname{gcd} \mathcal{T}(s_k)$ divides all elements of $\mathcal{T}(s_j)$. 
We conclude that $\operatorname{gcd} \mathcal{T}(s_k) \leq \operatorname{gcd} \mathcal{T}(s_j)$. 
In the same way, we also have $\operatorname{gcd} \mathcal{T}(s_j) \leq \operatorname{gcd} \mathcal{T}(s_k)$. Hence $\operatorname{gcd} \mathcal{T}(s_j) = \operatorname{gcd} \mathcal{T}(s_k)$.
\end{proof}

\begin{definition}If $\operatorname{gcd} \mathcal{T}(s_j)=1$ for all $j=1,...,N$, then $P$ is called \emph{aperiodic}. Otherwise, $P$ is called \emph{periodic}.
\end{definition}

\begin{example}
Consider a Markov chain on the state space $\{0,...,n-1\}$, $n\geq 2$, with transition probabilities $\Pro[X_{n+1}=m+1|X_n = m]=\Pro[X_{n+1}=m-1|X_n = m]=\frac1{2}$, where we identify $-1$ with $n-1$ and $n$ with $0$ (a random walk on $\Z_n$). Then the transition matrix $P$ is irreducible and we see that $ \mathcal{T}(s)=\{2,4,6,...\}$ whenever $n$ is even.  In this case $P$ is periodic with period $\operatorname{gcd} \mathcal{T}(s)=2$. If $n$ is odd, then $P$ is aperiodic.\hfill $\blacksquare$
\end{example}

If $P$ is aperiodic, then, for each $j=1,...,N$, we find some $m\in \N$ such that $\mathcal{T}(s_j)$ contains all natural numbers $\geq m$. This follows as $\mathcal{T}(s_j)$ is closed under addition and from a simple number theoretic argument, see \cite[Lemma 1.27]{LPW09}.

\begin{lemma}\label{ianda}If $P$ is irreducible and aperiodic, there exists $n\in\N$ such that $P^n_{j,k}>0$  for all $j,k=1,...,N$.
\end{lemma}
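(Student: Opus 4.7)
The plan is to combine aperiodicity (which gives many powers fixing each state) with irreducibility (which gives at least one power connecting any pair of states) via the elementary inequality
\[ P^{a+b}_{j,k} \;=\; \sum_{i=1}^{N} P^{a}_{j,i}\, P^{b}_{i,k} \;\geq\; P^{a}_{j,k}\, P^{b}_{k,k}, \]
obtained by dropping all terms in the Chapman--Kolmogorov sum except $i=k$. This reduces the statement to producing a single $n$ that works for every pair $(j,k)$ simultaneously.

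First I would invoke the number-theoretic fact cited just before the lemma: since $\mathcal{T}(s_k)$ is closed under addition and has greatest common divisor $1$, there exists $m_k \in \N$ with $\{m_k, m_k+1, m_k+2, \ldots\} \subset \mathcal{T}(s_k)$, that is, $P^{b}_{k,k} > 0$ for all $b \geq m_k$. Set $M = \max_{k} m_k$. Next, by irreducibility, for each ordered pair $(j,k)$ I can choose $r_{j,k} \in \N$ with $P^{r_{j,k}}_{j,k} > 0$; let $R = \max_{j,k} r_{j,k}$.

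Finally I would fix any $n \geq R + M$ and any pair $(j,k)$. Writing $n = r_{j,k} + (n - r_{j,k})$ with $a = r_{j,k} \leq R$ and $b = n - r_{j,k} \geq M \geq m_k$, the displayed inequality gives
\[ P^{n}_{j,k} \;\geq\; P^{r_{j,k}}_{j,k}\, P^{n - r_{j,k}}_{k,k} \;>\; 0, \]
and this holds for every $(j,k)$, which is exactly the conclusion. I don't anticipate any real obstacle here: the only non-trivial input is the number-theoretic fact that the aperiodic return-time set eventually contains all sufficiently large integers, and this has already been quoted from \cite[Lemma 1.27]{LPW09}; the rest is a uniform choice of constants over the finitely many pairs $(j,k)$.
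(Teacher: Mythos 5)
Your proof is correct and follows essentially the same route as the paper: both use the Chapman--Kolmogorov inequality $P^{a+b}_{j,k} \geq P^{a}_{j,k}\,P^{b}_{k,k}$ (dropping all intermediate states except $k$), the number-theoretic fact quoted from \cite[Lemma 1.27]{LPW09} that $\mathcal{T}(s_k)$ eventually contains all large integers, and then take a maximum over the finitely many pairs $(j,k)$ to get a single uniform $n$. The only cosmetic difference is that you define the uniform constants $R$ and $M$ upfront, whereas the paper maximizes per-$k$ at the end.
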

\begin{proof}For $k=1,...,N$ let $t(s_k)\in\N$ be such that $n\geq t(s_k)$ implies $n\in\mathcal{T}(s_k)$. As $P$ is irreducible, there exists $m(j,k)\in\N$ such that $P^m_{j,k}>0$ for every $j=1,...,N$. For $n\geq t(s_k)+m$ we have 
\[P^n_{j,k}= \sum_{p=1}^N P^{m}_{j,p}\cdot P^{n-m}_{p,k}\geq P^{m}_{j,k}\cdot P^{n-m}_{k,k}>0.\]
Thus $P^n_{j,k}>0$ for all $j=1,...,N$ and all $n\geq t'(s_k):=t(s_k)+ \max_{j=1,...,N}m(j,k)$, and finally $P^n_{j,k}>0$ for all $j,k=1,...,N$ and all $n\geq \max_{k=1,...,N}t'(s_k)$.
\end{proof}

Finally we can prove the following convergence result.

\begin{theorem}\label{limit_Markov}Assume that $P$ is irreducible and aperiodic with stationary distribution $\pi\in \R^N$. Then, for any initial distribution $v\in\R^N$, $X_n$ converges in distribution to $\pi$. In other words, 
\[ \lim_{n\to\infty}P^nv \to \pi.\]
\end{theorem}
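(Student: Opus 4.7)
The plan is to leverage Lemma \ref{ianda} to reduce the convergence of $P^n$ to a quantitative contraction property of some power $Q:=P^{n_0}$ whose entries are all strictly positive. First I would apply Lemma \ref{ianda} to fix $n_0\in\N$ and then set $\delta:=\min_{j,k}(P^{n_0})_{j,k}>0$, so that $Q_{j,k}\geq \delta$ for all $j,k$. The unique stationary distribution $\pi$ supplied by Theorem \ref{uniq_statuin} satisfies $P\pi=\pi$ and hence $Q\pi=\pi$ too.

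The key step is the $\ell^1$ contraction estimate
\[ \|Qw\|_1 \leq (1 - N\delta)\,\|w\|_1 \qquad \text{for every } w\in\R^N \text{ with }\sum_{j=1}^N w_j=0. \]
Because $\sum_j w_j=0$, I can subtract $\delta$ from every coefficient in the $i$th row and write $(Qw)_i = \sum_j (Q_{i,j}-\delta)\,w_j$; since each $Q_{i,j}-\delta\geq 0$, the triangle inequality gives $|(Qw)_i|\leq \sum_j (Q_{i,j}-\delta)|w_j|$. Summing over $i$ and using that the columns of $Q$ sum to $1$ yields $\|Qw\|_1\leq \sum_j(1-N\delta)|w_j|=(1-N\delta)\|w\|_1$. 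Applied to $w=v-\pi$ (which has zero coordinate sum since both $v$ and $\pi$ are probability vectors) and iterated, this gives
\[ \|Q^m v-\pi\|_1 = \|Q^m(v-\pi)\|_1 \leq (1-N\delta)^m\,\|v-\pi\|_1 \xrightarrow[m\to\infty]{} 0. \]

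To pass from the subsequence $\{P^{mn_0}\}_{m\in\N}$ to the full sequence, I would write an arbitrary $n\in\N$ as $n=m n_0+r$ with $0\leq r<n_0$. Since $P^r$ is a bounded linear map on $\R^N$ and $P^r\pi=\pi$ (as $\pi$ is stationary for $P$), $P^n v = P^r Q^m v$ tends to $P^r\pi=\pi$ in $\ell^1$, hence coordinatewise, which is exactly convergence in distribution on the finite state space $S$.

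The only real obstacle is the contraction estimate in the second step; the signs and the column-stochastic convention have to be tracked carefully, and the crucial cancellation $\sum_j w_j=0$ must be exploited to replace $Q_{i,j}$ by the nonnegative quantity $Q_{i,j}-\delta$. Aperiodicity enters the argument \emph{only} through Lemma \ref{ianda}; without it one can only hope for convergence along residue classes modulo the period, as the earlier random-walk-on-$\Z_n$ example already illustrates.
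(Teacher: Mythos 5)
Your proof is correct. The key contraction estimate is sound: for $w$ with $\sum_j w_j=0$ one has $(Qw)_i=\sum_j(Q_{i,j}-\delta)w_j$, the coefficients $Q_{i,j}-\delta$ are nonnegative, and summing $|(Qw)_i|$ over $i$ and using column-stochasticity yields $\|Qw\|_1\leq(1-N\delta)\|w\|_1$. The remainder step also works, though you should note that the constant bounding $P^r$ must be uniform over $r\in\{0,\ldots,n_0-1\}$; this is automatic since $P^r$ is column-stochastic and hence $\ell^1$-nonexpansive, so $\|P^nv-\pi\|_1=\|P^r(Q^mv-\pi)\|_1\leq\|Q^mv-\pi\|_1$.

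Your route is genuinely different from the paper's. The paper also invokes Lemma \ref{ianda}, but then performs a \emph{minorization} by the stationary measure: it finds $\delta$ with $P^m_{j,k}\geq\delta\pi(s_j)$, writes $P^m=(1-\theta)\Pi+\theta Q$ with $\Pi$ having every column equal to $\pi$ and $\theta=1-\delta$, proves by induction that $P^{mn}=(1-\theta^n)\Pi+\theta^n Q^n$, and then bounds $\|P^{mn+j}-\Pi\|_{\max}\leq 2\theta^n$. Your argument is the Dobrushin-contraction variant: instead of decomposing $P^{n_0}$ against $\Pi$, you subtract the flat matrix $\delta J$ (with $J$ the all-ones matrix) and exploit $Jw=0$ on the zero-sum hyperplane. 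Both proofs give geometric decay with rate $1-N\delta$ (or $\theta$). Your version avoids the induction on matrix powers and the verification that the remainder $Q$ is again stochastic, at the mild cost of tracking the $\ell^1$ norm on the zero-sum subspace; the paper's version produces a clean matrix identity for $P^{mn}$ directly.
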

\begin{proof} The previous lemma implies that there exists $m\in\N$ be such that $P^m_{j,k}>0$  for all $j,k=1,...,N$. Let $\Pi\in \R^{N\times N}$ such that each column is equal to $\pi=(\pi(s_1), ..., \pi(s_N))^T$. 
We need to show that $P^j-\Pi\to 0$ as $j\to\infty$ (with respect to some norm on $\R^{N\times N}$).\\

We find some $\delta \in (0,1)$ such that $P^n_{j,k} \geq \delta \pi(s_j)$ for all $j,k=1,...,N$. Let $\theta=1-\delta$ and let $Q\in\R^{N\times N}$ be defined by the equation $P^m = (1-\theta) \Pi + \theta Q$. By induction we prove that $P^{mn} = (1-\theta^{n}) \Pi + \theta^n Q^n$ for all $n\in\N$. Assume that this is true for some 
$n\in\N$. Then
$$ P^{m(n+1)}=P^m P^{mn}= P^m ((1-\theta^{n}) \Pi + \theta^{n} Q^{n}) =   (1-\theta^{n}) P^m\Pi + (1-\theta) \theta^{n} \Pi Q^{n} +  \theta^{n+1} Q^{n+1}.$$
As $P^m\Pi=\Pi$ and $\Pi Q^{n}=\Pi$ (as the sum of the elements in each column of $Q$ is equal to $1$), we obtain 
$$ P^{m(n+1)}=   (1-\theta^{n}) \Pi + (1-\theta) \theta^{n} \Pi  +  \theta^{n+1} Q^{n+1} = (1-\theta^{n+1}) \Pi  +  \theta^{n+1} Q^{n+1}.$$
Hence 
\[  P^{mn+j} -\Pi= \theta^{n} (P^jQ^n-\Pi). \]
Consider the maximum norm $\|\cdot\|_{max}$ on $\R^{N\times N}$. Then $\|P^jQ^n-\Pi\|_{max} \leq 
 \|P^jQ^n\|_{max} +\|\Pi\|_{max} = 1+1=2$. Thus
\[ \|P^{mn+j} -\Pi\|_{max} = \theta^{n}\|P^jQ^n-\Pi\|_{max} \leq 
2\theta^{n} \to 0.\]
\end{proof}

\section{Space-homogeneous Markov processes}

Now let $S=\Z$ or $S=\R$ (or, more generally, a metric space which is also an abelian group).

\begin{definition} A Markov process $(X_t)_{t\in T}$ on $S$ with transition kernels $k_{s,t}$ is called \emph{space-homogeneous} if 
$k_{s,t}(x,B)=k_{s,t}(0,B-x)$ for all $s,t\in T$ with $s\leq t$ and all $x\in S$, $B\in \mathcal{B}(S)$.
\end{definition}

\begin{definition} A stochastic process $(X_t)_{t\in T}$ has \emph{independent increments} if the $\sigma$-algebras 
$\sigma(X_t-X_s)$ and $\sigma(\{X_\tau\,|\, \tau\leq s\})$ are independent for all $s,t\in T$ with $s\leq t$.
\end{definition}

\begin{remark}The independence of  $\sigma(X_t-X_s)$ and $\sigma(\{X_\tau\,|\, \tau\leq s\})$ is equivalent to
the independence of $X_0, X_{t_1}-X_0, X_{t_1}-X_{t_0}, ..., X_{t_n}-X_{t_{n-1}}$ for any choice of $n\geq 1$ and $0\leq t_1 \leq... \leq t_n$ because
\begin{eqnarray*} \sigma(\{X_\tau\,|\, \tau\leq s\})&=&\sigma(\{(X_0, X_{t_1}, ..., X_{t_n})\,|\, 0\leq t_1 \leq ... t_n \leq s\}) \\
&=& 
\sigma(\{(X_0,X_{t_1}-X_0, ..., X_{t_n}-X_{t_{n-1}})\,|\, 0 \leq t_1\leq ... \leq t_n \leq s\}). 
\end{eqnarray*}
If $X_0=0$ a.s., then $(X_t)_{t\in T}$ has \emph{independent increments}  if and only if the $n$ increments $X_{t_1}-X_0, ..., X_{t_n}-X_{t_{n-1}}$ are independent.
\end{remark}

\begin{theorem}[Proposition 7.5 in \cite{kall}]\label{weneed}A stochastic process $(X_t)_{t\in T}$ is a space-homogeneous Markov process if and only if it has independent increments.
In this case, the transition kernels are given by 
\[ k_{s,t}(x,B) = \Pro[X_t-X_s \in B -x],  \quad x\in S, B\in \mathcal{B}(S),  s,t\in T, s\leq t. \]
\end{theorem}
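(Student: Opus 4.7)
The plan is to prove both implications through the conditional-expectation formulation \eqref{uaa} of the Markov property together with the disintegration formula $\mathbb{E}[f(X_t)\mid\mathcal{F}_s] = \int f(y)\,k_{s,t}(X_s, dy)$ which follows from Theorem \ref{condi} and the defining property of the transition kernel. Throughout I write $\mathcal{F}_s := \sigma(\{X_\tau\,|\, \tau\leq s\})$.

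For the forward direction, I would fix $s\leq t$ and a Borel set $B\in\mathcal{B}(S)$, and compute the conditional distribution of the increment. Writing $\mathbf{1}_B(X_t-X_s) = \mathbf{1}_{B+X_s}(X_t)$ and applying the disintegration formula gives
\[
\Pro[X_t-X_s\in B\mid\mathcal{F}_s]
= \int \mathbf{1}_B(y-X_s)\,k_{s,t}(X_s,dy).
\]
Using space-homogeneity, $k_{s,t}(X_s,\cdot) = k_{s,t}(0,\,\cdot-X_s)$ as measures, so the change of variable $z = y-X_s$ turns the right-hand side into $k_{s,t}(0,B)$, a deterministic constant. Taking expectations shows $k_{s,t}(0,B) = \Pro[X_t-X_s\in B]$, so $\Pro[X_t-X_s\in B\mid\mathcal{F}_s] = \Pro[X_t-X_s\in B]$ for every Borel $B$. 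This is precisely the statement that $\sigma(X_t-X_s)$ is independent of $\mathcal{F}_s$, and it also identifies the kernel as claimed.

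For the reverse direction, I would assume independent increments and, for a bounded measurable $f\colon S\to\C$, decompose $X_t = X_s + (X_t-X_s)$. Since $X_s$ is $\mathcal{F}_s$-measurable and $X_t-X_s$ is independent of $\mathcal{F}_s$, a standard conditioning lemma (Fubini for independent $\sigma$-algebras, cf.\ \cite[Theorem 5.4]{kall}) gives
\[
\mathbb{E}[f(X_t)\mid\mathcal{F}_s] \;=\; g(X_s),\qquad g(x):=\mathbb{E}[f(x+(X_t-X_s))].
\]
The right-hand side is $\sigma(X_s)$-measurable, so it equals $\mathbb{E}[f(X_t)\mid X_s]$, which yields the characterization \eqref{uaa} of the Markov property. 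Specializing to $f=\mathbf{1}_B$ identifies the transition kernel as $k_{s,t}(x,B)=\Pro[X_t-X_s\in B-x]$, which manifestly depends only on $B-x$, hence $k_{s,t}(x,B)=k_{s,t}(0,B-x)$: the process is space-homogeneous.

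The main subtlety I expect is the forward direction, specifically the step of plugging a random $\mathcal{F}_s$-measurable set $B+X_s$ into the deterministic identity $\Pro[X_t\in A\mid\mathcal{F}_s]=k_{s,t}(X_s,A)$; this needs to be justified via the disintegration/Fubini-type identity for regular conditional distributions rather than by naive substitution. Once that is formalized, both directions and the explicit kernel formula follow cleanly from the definitions.
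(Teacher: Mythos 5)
Your proof is correct. The paper itself gives no argument for this theorem---it is stated with a bare citation to Kallenberg (Proposition 7.5 in \cite{kall})---so there is no internal proof to compare against; your two-sided argument via the conditional-expectation form \eqref{uaa} of the Markov property, the freezing lemma $\mathbb{E}[h(X_s,X_t)\mid\mathcal{F}_s]=\int h(X_s,y)\,k_{s,t}(X_s,dy)$ (extended from product functions by a monotone class argument, which justifies the substitution you flag as the main subtlety), and the decomposition $X_t=X_s+(X_t-X_s)$ in the converse is exactly the standard proof and is what Kallenberg does. The only detail worth spelling out in a final write-up is that in the reverse direction the map $x\mapsto\Pro[X_t-X_s\in B-x]$ is Borel measurable, so that $k_{s,t}$ really is a probability kernel; this is routine since it is the convolution of the law of $X_t-X_s$ against $\mathbf{1}_B$.
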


The most important space-homogeneous Markov process, and the maybe most important stochastic process in all of probability theory, is the Brownian motion.

\begin{definition}[Brownian motion] A stochastic process $(B_t)_{t\geq 0}$ is called a \emph{Brownian motion} if
\begin{itemize}
	\item[(1)] $B_0=0$ a.s.,
	\item[(2)] the distribution of $B_t$ is $\mathcal{N}(0,t)$ for all $t> 0$,
	\item[(3)] the increments $B_{t_1}, B_{t_2}-B_{t_1}, ..., B_{t_n}-B_{t_{n-1}}$ are independent 
 for any choice of $n\geq 1$ and $0\leq t_1 \leq ... \leq t_n$.
	\item[(4)]  the sample paths $t\mapsto B_t$ are continuous a.s.
\end{itemize}
\end{definition}

A Brownian motion can be constructed in several ways. The first proof of its existence is due to N. Wiener, 1923. We follow L\'{e}vy's construction of the Brownian motion (from \cite[Theorem 1.3]{MP10}).

\begin{theorem}The Brownian motion exists.
\end{theorem}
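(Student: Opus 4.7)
The plan is to carry out Lévy's construction: build $B_t$ first on the dyadic rationals of $[0,1]$ by successively refining a random piecewise-linear interpolation, show that the interpolants converge uniformly almost surely, and then extend to $[0,\infty)$ by pasting independent copies.

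First I would fix a probability space carrying a countable family $(Z_q)_{q\in D}$ of \textit{iid} $\mathcal{N}(0,1)$ random variables indexed by the dyadic rationals $D=\bigcup_{n\geq 0} D_n$ with $D_n=\{k/2^n : 0\leq k\leq 2^n\}$; such a sequence exists by Theorem \ref{existence_Markov} applied to the product measure, or directly on $([0,1],\mathcal{B}([0,1]),\lambda)$. Set $B_0:=0$, $B_1:=Z_1$, and inductively on $n\geq 1$ define, for each $q=(2k+1)/2^n\in D_n\setminus D_{n-1}$,
\[ B_q := \tfrac{1}{2}\bigl(B_{(2k)/2^n}+B_{(2k+2)/2^n}\bigr) + 2^{-(n+1)/2}\, Z_q. \]
Let $F_n:[0,1]\to\R$ be the piecewise-linear function agreeing with $B$ on $D_n$. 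A straightforward induction on $n$ shows that for every $n$ the increments $B_{k/2^n}-B_{(k-1)/2^n}$ are \textit{iid} $\mathcal{N}(0,2^{-n})$: indeed, if $A=B_q-B_p\sim \mathcal{N}(0,2^{-(n-1)})$ and $Z$ is the independent centered Gaussian of variance $2^{-(n+1)}$ used at the midpoint $m$, then $B_m-B_p=A/2+Z$ and $B_q-B_m=A/2-Z$ are jointly Gaussian with variance $2^{-n}$ each and covariance $\operatorname{Var}(A)/4-\operatorname{Var}(Z)=0$, hence independent. By grouping, non-overlapping increments on $D_n$ are independent as well.

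The main technical step, and the one I expect to be the obstacle, is the uniform convergence of $(F_n)$. The difference $F_n-F_{n-1}$ is piecewise linear and vanishes on $D_{n-1}$, so
\[ \|F_n-F_{n-1}\|_\infty = 2^{-(n+1)/2}\max_{q\in D_n\setminus D_{n-1}}|Z_q|. \]
Using the Gaussian tail bound $\Pro[|Z|\geq x]\leq 2e^{-x^2/2}$ and a union bound over the $2^{n-1}$ new dyadics,
\[ \Pro\!\left[\max_{q\in D_n\setminus D_{n-1}}|Z_q|\geq c\sqrt{n}\right]\leq 2^n\cdot 2 e^{-c^2 n/2}, \]
which is summable for any fixed $c>\sqrt{2\log 2}$. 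By the Borel-Cantelli lemma, almost surely $\|F_n-F_{n-1}\|_\infty\leq c\sqrt{n}\,2^{-(n+1)/2}$ for all large $n$, so $(F_n)$ is almost surely Cauchy in $C([0,1])$ and converges uniformly to a continuous random function which I take as $B$ on $[0,1]$. Uniform convergence forces agreement with the inductive construction at every $q\in D$ and, combined with L\'{e}vy's continuity theorem applied increment-by-increment (each Gaussian increment on $D$ is a limit of Gaussians with correct variance), yields property (2) and independence of increments on arbitrary $0\leq t_1\leq\ldots\leq t_n$ in $[0,1]$.

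Finally, to obtain a Brownian motion on $[0,\infty)$, I would take a sequence $(B^{(j)})_{j\geq 1}$ of independent copies of the process just constructed on $[0,1]$ (using a product space) and define, for $t\in[n,n+1]$,
\[ B_t := \sum_{j=1}^{n} B^{(j)}_{1} + B^{(n+1)}_{t-n}. \]
Continuity is preserved at each integer since $B^{(n+1)}_0=0$ a.s., and independence of the copies together with Theorem \ref{weneed} (or a direct check on characteristic functions) gives properties (1)--(4) on $[0,\infty)$, completing the proof.
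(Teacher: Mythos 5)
Your proposal is correct and follows essentially the same route as the paper: both carry out L\'{e}vy's construction on dyadics with the $2^{-(n+1)/2}Z_q$ midpoint refinement, control $\|F_n-F_{n-1}\|_\infty$ via the Gaussian tail bound with $c>\sqrt{2\log 2}$ and Borel--Cantelli, and glue independent copies of $B|_{[0,1]}$ to get the process on $[0,\infty)$. The only cosmetic differences are that you work with the cumulative interpolants $F_n$ rather than the paper's increment functions (so your $F_n-F_{n-1}$ is the paper's $F_n$) and you spell out the covariance computation for the independence of the $D_n$-increments, which the paper leaves as "easy to verify."
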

\begin{proof}
We first we construct a Brownian motion on the interval $[0,1]$.\\
 For $n\in \N_0$ let \[D_n = \{k2^{-n}\,|\, 0\leq k\leq 2^n\} \quad \text{and}\quad  D = \cup_{n\in\N}D_n.\]
 Then $D$ is countable and there exists a probability space $(\Omega, \mathcal{F}, \Pro)$ and a collection $(Z_t)_{t\in D}$ of \textit{iid} random variables with distribution $\mathcal{N}(0,1)$. 
First we define random variables $(B_d)_{d\in D}$ by induction with respect to $D_n$. For $n=0$, we let $B_0=0$ and $B_1=Z_1$. 
Now let $n\geq 1$ and $d\in D_n\setminus D_{n-1}$. We let 
\[ B_d = \frac{B_{d-2^{-n}}+B_{d+2^{-n}}}{2} + \frac{Z_d}{2^{(n+1)/2}}. \]
It is easy to verify that the differences $B_d-B_{d-2^{-n}}$, $d\in D_n\setminus\{0\}$, are independent and have normal distribution 
$\mathcal{N}(0,2^{-n})$.\\
Next we define random functions $F_n:[0,1]\to\R$ for each $n\in \N_0$. For $n=0$, let $F_0(0)=0, F_0(1)=B_1=Z_1$, and define $F_0(t)$ on $(0,1)$ by linear interpolation.\\
For $n\geq 1$, let $F_n(t)=0$ for $t\in D_{n-1}$, $F_n(t)=2^{-(n+1)/2}Z_t$ for $t\in D_n\setminus D_{n-1}$ and $F_n(t)$ is defined by linear interpolation for $t \not\in D_n$.\\
Then we have \[B_d = \sum_{k=0}^n F_k(d) = \sum_{k=0}^\infty F_k(d), \quad d\in D_n. \]

Fix $c>1$. As each $Z_d$ is a $\mathcal{N}(0,1)$ random variable, we have $\Pro[|Z_d|\geq c\sqrt{n}]\leq \exp(-c^2n/2)$ for $n\in\N_0$. We conclude
\[ \sum_{n=0}^\infty \Pro[\exists d\in D_n: |Z_d|\geq c\sqrt{n}] \leq \sum_{n=0}^\infty (2^n+1)\exp(-c^2 n/2) < \infty \] 
if $c>\sqrt{2\log 2}$. Fix such a $c$. Then the Borel-Cantelli lemma (\cite[Theorem 4.3]{bill}) implies that there exists a random $N\in \N$ which is a.s.\ finite such that $|Z_d|<c\sqrt{n}$ for all $d\in D_n$ and $n>N$. In particular, \[ \sup_{t\in[0,1]} |F_n(t)| < c\sqrt{n}2^{-(n+1)/2} \]
for all $n>N$ and we see that, almost surely, 
\[B_t:=\lim_{n\to \infty}\sum_{k=0}^n F_k(t),\]
exists with respect to uniform convergence on $[0,1]$. Furthermore, $t\mapsto B_t$ is continuous a.s. The distribution of each $B_d$, $d\in D$, is $\mathcal{N}(0,d)$ and thus, by approximation, $B_t$ is $\mathcal{N}(0,t)$ distributed for all $t\in[0,1]$.
Similarly, one can show that $B$ has independent increments.\\

Finally, we construct a Brownian motion on $[0,\infty)$ by gluing together independent copies of $B|_{[0,1]}$.
\end{proof}

 \begin{figure}[h]
 \begin{center}
 \includegraphics[width=0.5\textwidth]{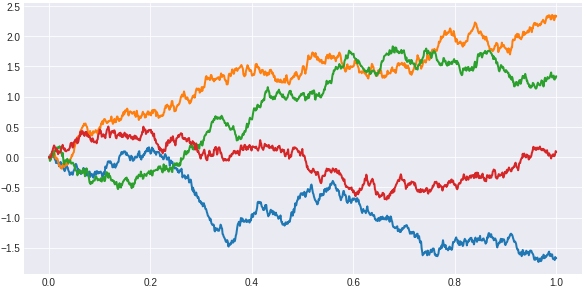}
 \caption{Four sample paths of a Brownian motion.}
 \end{center}
 \end{figure}

\begin{remark}Endow $C([0,1],\R)$ with the topology induced by the maximum norm. The \emph{Wiener measure} is the probability measure on $(C([0,1],\R), \mathcal{B}(C([0,1],\R)))$ induced by a Brownian motion and it can be seen as a normal distribution for functions.\\ Let $X_1,X_2,...$ be a sequence of \textit{iid} random variables with mean $0$ and finite, positive variance $\sigma^2$.
Define the random function $f_n\in C([0,1],\R)$ by 
\[\text{$f_{n}(k/n) = (X_1+...+X_k)/(\sigma \sqrt{n})$ for $k=0,...,n$}\]
 and by linear interpolation between the points $k/n$. \emph{Donsker's Theorem} states that, as $n\to\infty$, the distribution of $f_n$ converges weakly to the Wiener measure, see \cite[Theorem 8.2]{Bil99}.
\end{remark}

\begin{remark}The Brownian motion has many interesting properties, for which we refer to the book \cite{MP10}. For example, for any $t_0\geq0$, $B_t$ is not differentiable at $t_0$ a.s. More precisely, the modulus of continuity of a Brownian motion is given 
by \[ \limsup_{h\downarrow 0} \sup_{0\leq t\leq 1-h}\frac{|B_{t+h}-B_t|}{\sqrt{2h\log(1/h)}}=1, \]
see \cite[Theorem 1.14]{MP10}. 
\end{remark}

\section{Additive processes}

Space-homogeneous Markov processes with a continuity property lead to additive processes and L\'evy processes.

 \begin{definition}\label{addi_def}
A stochastic process $(X_t)_{t\geq 0}$ is called an \emph{additive process} if the following three conditions are satisfied.
\begin{enumerate} 
 \item[(1)] $X_0=0$ almost surely.
 \item[(2)] The increments $X_{t_0}, X_{t_1}-X_{t_0}, ..., X_{t_n}-X_{t_{n-1}}$ are independent 
 for any choice of $n\geq 1$ and $0\leq t_0 \leq t_1 \leq... \leq t_n$.
 \item[(3)] Continuity in probability: for any $\eps>0$ and $s\geq 0$, $\mathbb{P}[|X_{s+t}-X_s|>\eps]\to 0$ as $t\to 0$.
\end{enumerate}
Such a process is called a \emph{L\'evy process} if, in addition, 
 \begin{enumerate}
 \item[(4)] the distribution of $X_{t+s}-X_s$ does not depend on $s$.
\end{enumerate}
 \end{definition}
 
Let us look at some examples of L\'evy processes.

\begin{example}The deterministic process $X_t = ta$, $a\in\R$, is clearly a L\'evy process.\hfill $\blacksquare$
\end{example}

\begin{example}A Brownian motion is clearly a L\'evy process as sample path continuity implies continuity in probability. \\
L\'{e}vy processes are not only space-homogeneous, but also time-homogeneous. Thus the transition kernels satisfy have $k_{s,t}(x,B) = k_{0,t-s}(0,B-x)$. In case of a Brownian motion we have $k_{0,t}(0,B)= \frac1{\sqrt{2\pi t}} \int_B  e^{-y^2/(2t)} {\rm d}y$ for $t>0$ and thus
\[ k_{s,t}(x,B) = \frac1{\sqrt{2\pi (t-s)}} \int_B  e^{-(x-y)^2/(2(t-s))} {\rm d}y, \quad 0\leq s < t.\]\hfill $\blacksquare$
\end{example}

\begin{example}Fix $\lambda>0$ and let $Y_1, Y_2, ...$ be \textit{iid} random variables such that the distribution of 
$Y_k$ is the exponential distribution with parameter $\lambda$. The random variable $T_n=Y_1+...+Y_n$ has a $\Gamma(n,\lambda)$ distribution given by the density 
$\frac{\lambda^n e^{-\lambda x}x^{n-1}}{(n-1)!}$, 
$x\geq 0$.\\
Define the stochastic process $(X_t)_{t\geq 0}$ by 
$X_t = \sum_{n=1}^\infty 1_{[0,t]}(T_n)$. $(X_t)$ is called a \emph{Poisson process} and it can be seen as a counting process for events happening at arrival times $T_n$. If $T_n$ is the time of the $n$th occurrence of an event, then $X_t$ counts the number of such events in the time interval $[0,t]$. With $T_0:=0$, we can also write
\[ X_t = \sup\{n\geq 0\,|\, T_n \leq t\}. \] 

The Poisson process is a L\'{e}vy process with
\begin{equation*}
\Pro[X_t - X_s = n] = \frac{(\lambda(t-s))^n e^{-\lambda(t-s)}}{n!},\quad n\in\N_0,
\end{equation*}
i.e.\ the increments are stationary with a Poisson distribution. 
 \begin{figure}[H]
 \begin{center}
 \includegraphics[width=0.8\textwidth]{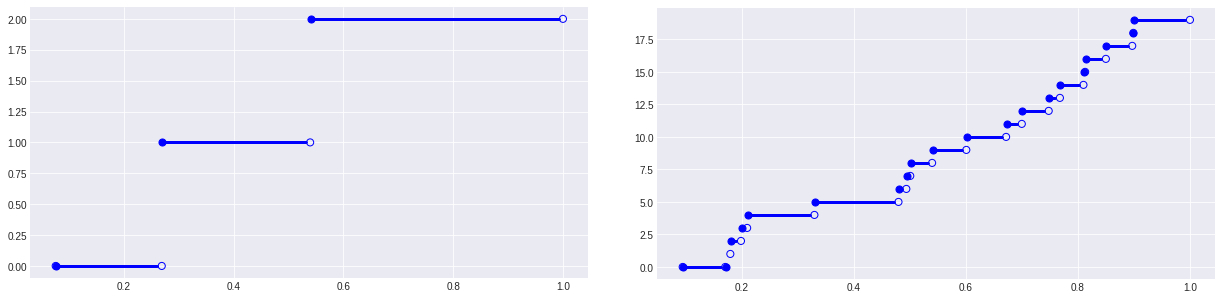}
 \caption{Sample paths of Poisson processes for $t\in[0,1]$  with $\lambda=2$ 
(left) and  $\lambda=20$ (right).}
 \end{center}
 \end{figure}
We can modify the Poisson process to obtain further examples of L\'{e}vy processes. Let $D_1,D_2,...$ be \textit{iid} random variables which are  independent of $(X_t)_{t\geq 0}$. The L\'{e}vy process
\[ Z_t := \sum_{n=1}^{X_t} D_n \]
is called a  \emph{compound Poisson process}. \hfill $\blacksquare$
\end{example}

The L\'{e}vy-It\^{o} decomposition theorem states that 
every L\'{e}vy process can be written as a sum of four elementary processes: a deterministic function $a\cdot t$, a Gaussian process, a compound Poisson process, and a pure jump process. Additive processes can be thought of processes that look locally like L\'{e}vy processes and can be decomposed into the non-stationary variations of these processes, see \cite[Section 19]{Sat99} or \cite[Chapter 13]{bill}.\\

We note the remarkable consequence that a L\'{e}vy process with continuous sample paths must be a Brownian motion plus a drift term.

\begin{theorem}\label{cont_levy}If $X_t$ is a L\'evy process with continuous sample paths, then $X_t = at + c \cdot B_t$ for some $a\in\R$, $c\geq0$, and a Brownian motion $B_t$.
\end{theorem}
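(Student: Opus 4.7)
My plan is to first identify the one-dimensional distributions of $X_t$ as Gaussian with linearly growing parameters, and then to verify that $(X_t-at)/c$ satisfies the four axioms defining a Brownian motion. The key input is the decomposition, valid for any L\'{e}vy process and any $n\in\N$,
$$X_t=\sum_{k=1}^n Y_{k,n},\qquad Y_{k,n}:=X_{kt/n}-X_{(k-1)t/n},$$
where the $Y_{k,n}$ are iid with common law equal to that of $X_{t/n}$. Sample-path continuity on $[0,t]$ forces the random modulus of continuity of $X_\cdot$ to vanish along the partition, so $\max_{k}|Y_{k,n}|\to 0$ in probability. Since $\Pro[\max_k|Y_{k,n}|\le\eps]=(1-p_n)^n$ with $p_n:=\Pro[|X_{t/n}|>\eps]$, this yields the crucial uniform-smallness estimate $np_n\to 0$ for every $\eps>0$.

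Next I turn to the characteristic function. Writing $\chi_n(u):=\E[e^{iuX_{t/n}}]$ and $\phi_t(u):=\E[e^{iuX_t}]$, independence of the increments gives $\phi_t(u)=\chi_n(u)^n$, and uniform smallness makes $\chi_n(u)\to 1$, so $\log\phi_t(u)=\lim_{n\to\infty}n(\chi_n(u)-1)$. Splitting at level $\eps$,
$$n(\chi_n(u)-1)=n\int_{|x|>\eps}(e^{iux}-1)\,\mu_{t/n}(dx)+n\int_{|x|\le\eps}(e^{iux}-1)\,\mu_{t/n}(dx),$$
where $\mu_{t/n}$ is the law of $X_{t/n}$. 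The first integral is bounded in modulus by $2np_n\to 0$, while on $\{|x|\le\eps\}$ a Taylor expansion of $e^{iux}-1$ reduces the second integral to
$$iu\cdot n\int_{|x|\le\eps}x\,\mu_{t/n}(dx)\;-\;\tfrac{u^2}{2}\cdot n\int_{|x|\le\eps}x^2\,\mu_{t/n}(dx),$$
plus a remainder of order $|u|^3\eps\cdot n\int_{|x|\le\eps}x^2\,\mu_{t/n}(dx)$. Sending first $n\to\infty$ and then $\eps\to 0$ identifies $\log\phi_t(u)=iua(t)-c(t)^2u^2/2$ for some $a(t)\in\R$ and $c(t)^2\ge 0$, so $X_t\sim\mathcal{N}(a(t),c(t)^2)$.

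Stationarity and independence of the increments make $a$ and $c^2$ additive, and continuity in probability (which entails pointwise convergence of the characteristic functions, hence of $a$ and $c^2$) forces them to be continuous, hence linear: $a(t)=at$ and $c(t)^2=c^2t$ for constants $a\in\R$ and $c\ge 0$. If $c=0$ then $X_t=at$ almost surely and the formula $X_t=at+cB_t$ holds with any Brownian motion $B$; otherwise $B_t:=(X_t-at)/c$ satisfies $B_0=0$, $B_t\sim\mathcal{N}(0,t)$, inherits independent increments from $X$, and has continuous sample paths, so by Definition of the Brownian motion $B$ is one and $X_t=at+cB_t$. The main obstacle is the characteristic-function calculation of the second paragraph: no moment assumption on $X_t$ is available a priori, so the rigorous justification of the double limit $\lim_{\eps\to 0}\lim_{n\to\infty}$ requires showing that the truncated quantities $n\int_{|x|\le\eps}x\,\mu_{t/n}(dx)$ and $n\int_{|x|\le\eps}x^2\,\mu_{t/n}(dx)$ stabilize as $n\to\infty$ and behave tamely as $\eps\to 0$; the uniform-smallness estimate from the first paragraph is exactly what guarantees that no L\'{e}vy--Khintchine ``jump'' term can survive this limit, leaving only the Gaussian part.
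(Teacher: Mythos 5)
The paper does not actually supply a proof of Theorem~\ref{cont_levy}; it is presented as a ``remarkable consequence'' of the L\'{e}vy--It\^{o} decomposition sketched in the paragraph immediately before, with the detailed argument delegated to \cite[Section~19]{Sat99} and \cite[Chapter~13]{bill}. Your attempt is therefore a genuinely different, self-contained route: rather than decomposing the process and observing that continuity annihilates the compound-Poisson and pure-jump parts, you work directly with the one-dimensional characteristic function. The outline is the right one. Deducing from almost-sure uniform continuity on $[0,t]$ that $\max_k|Y_{k,n}|\to 0$, translating this via $\Pro[\max_k|Y_{k,n}|\le\eps]=(1-p_n(\eps))^n$ into $n\,p_n(\eps)\to 0$ for every $\eps>0$, killing the integral over $\{|x|>\eps\}$, and then finishing with additivity of the Gaussian parameters, the Cauchy functional equation, and the definition of Brownian motion: all of this is correct and is exactly how one would prove the one-dimensional version of the statement from scratch.

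The genuine gap is the one you flag but do not close. You reduce the problem to the double limit $\lim_{\eps\to 0}\lim_{n\to\infty}$ of
$iu\,n\!\int_{|x|\le\eps}x\,\mu_{t/n}(dx)-\tfrac{u^2}{2}\,n\!\int_{|x|\le\eps}x^2\,\mu_{t/n}(dx)+R_n$ with $|R_n|\lesssim |u|^3\eps\cdot n\!\int_{|x|\le\eps}x^2\,\mu_{t/n}(dx)$, but you do not show that the truncated second moment $\beta_n(\eps):=n\!\int_{|x|\le\eps}x^2\,\mu_{t/n}(dx)$ is even bounded, nor that the two truncated quantities converge. Your own phrasing, that the uniform-smallness estimate ``is exactly what guarantees'' the conclusion, is misleading here: $n\,p_n(\eps)\to 0$ controls the contribution from $\{|x|>\eps\}$, but by itself it says nothing about the truncated moments. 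What is missing is the standard real-part estimate: take real parts of $n(\chi_n(u)-1)$, use $-\theta^2/2\le\cos\theta-1\le -\tfrac{2}{\pi^2}\theta^2$ for $|\theta|\le\pi$ to get, for $u\eps\le\pi$,
\[
-\tfrac{u^2}{2}\,\beta_n(\eps)\;\le\; n\!\int_{|x|\le\eps}\!\bigl(\cos ux-1\bigr)\mu_{t/n}(dx)\;\le\;-\tfrac{2u^2}{\pi^2}\,\beta_n(\eps),
\]
while the $\{|x|>\eps\}$ contribution is $o(1)$; since the left side of $n(\chi_n(u)-1)$ converges to $\log\phi_t(u)$, this gives a uniform bound $\limsup_n\beta_n(\eps)\le -\tfrac{\pi^2}{2u^2}\Re\log\phi_t(u)$. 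Boundedness then allows a Helly-type selection of a subsequence along which $n\,x^2\mathbf{1}_{\{|x|\le\eps_0\}}\mu_{t/n}(dx)$ converges weakly, and a similar (slightly more delicate, because of cancellations) treatment of the truncated first moment; only after this compactness step can the double limit be passed. Without this, the central paragraph of your argument is a plausible sketch but not a proof. Everything before and after the gap, in particular the key observation that path continuity forces $n\,p_n(\eps)\to 0$, is correct and is the right way to circumvent the full L\'{e}vy--It\^{o} machinery.
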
 

Which distributions can occur in L\'{e}vy processes? This question turns out to have a  nice answer. 

\begin{definition}A probability measure $\mu\in \mathcal{P}(\R)$ is called $*$-infinitely divisible if, for each $n\in\N$, there exists 
$\mu_n\in  \mathcal{P}(\R)$ with $\mu = (\mu_n)^{* n}$.
\end{definition}

The following result characterizes all distributions appearing in additive processes, see 
\cite[Theorems 1.1-1.3]{BNMR01} and \cite[\S 24, Theorem 2]{GK54} for (d).

\begin{theorem}\label{measures_incre_class}
 Let $\mu$ be a probability measure on $\R$. The following statements are equivalent:
 \begin{enumerate}
 \item[(a)] There exists an additive process $(X_t)_{t\geq0}$ 
 such that $\mu$ is the distribution of $X_1$.
 \item[(b)] There exists a L\'evy process $(X_t)_{t\geq0}$ 
 such that $\mu$ is the distribution of $X_1$.
 \item[(c)] $\mu$ is infinitely divisible.
\item[(d)] There exist random variables $(X_{j,n})_{n\in\N,1\leq j \leq k_n}$ with 
$k_n\to\infty$ as $n\to\infty$, $X_{n,1},...,X_{n,k_n}$ are independent, $\lim_{n\to\infty}\max_{j=1,...,k_n}\Pro[|X_{j,n}|\geq \eps]=0$ for every $\eps>0$, such that $\mu$ is the limit of the distribution of $X_{n,1}+...+X_{n,k_n}$.
 \item[(e)] (L\'evy-Khintchine representation) 
 There exist $a\in \R$, $\sigma \geq 0$, and a non-negative measure $\nu$ with $\nu(\{0\})=0$ and 
 $\int_\R (1\wedge t^2) \nu(dt)<\infty$ such that
 \begin{equation} \label{khin}
\varphi_\mu(t) =\exp\left(iat - \frac1{2}\sigma^2t^2 + \int_{\R} \left(e^{it s}-1- its 1_{\{|t|<1\}}
\right)\nu(ds)\right), \qquad t\in \R.
\end{equation}
The data $(a,\sigma, \nu)$ is also called the \emph{L\'{e}vy triple} of the infinitely divisible distribution $\mu$.
\end{enumerate}
\end{theorem}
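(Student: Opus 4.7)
The plan is to close the cycle $(b) \Rightarrow (a) \Rightarrow (d) \Rightarrow (c) \Rightarrow (e) \Rightarrow (b)$, which establishes the equivalence of all five conditions. The implication $(b) \Rightarrow (a)$ is immediate since every L\'{e}vy process is by definition an additive process. For $(a) \Rightarrow (d)$, given an additive process $(X_t)_{t\geq 0}$ with $X_1$ distributed as $\mu$, I would take $k_n = n$ and $X_{j,n} = X_{j/n} - X_{(j-1)/n}$; then $X_{1,n} + \ldots + X_{n,n} = X_1$, independence of $X_{1,n},\ldots,X_{n,n}$ is condition (2) of Definition \ref{addi_def}, and continuity in probability upgraded to uniform continuity in probability on the compact interval $[0,1]$ (via a finite subcover argument) yields $\max_j \Pro[|X_{j,n}|\geq \eps] \to 0$. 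For $(d) \Rightarrow (c)$ I would invoke the classical triangular-array theorem: by grouping the $k_n$ independent summands into $m$ consecutive blocks and using tightness together with the given weak convergence, one extracts a subsequential weak limit $\tilde\mu$ with $\tilde\mu^{*m} = \mu$ for each fixed $m \in \N$, i.e.\ $\mu$ is infinitely divisible.

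The main obstacle is $(c) \Rightarrow (e)$, the L\'{e}vy-Khintchine representation. First I would show $\varphi_\mu$ never vanishes on $\R$: writing $|\varphi_\mu|^{2/n} = |\varphi_{\mu_n * \tilde\mu_n}|$ where $\tilde\mu_n$ is the reflection of $\mu_n$, and taking $n\to\infty$ pointwise, any zero of $\varphi_\mu$ would yield a discontinuous limit, contradicting that the limit is itself a characteristic function. Next, define the continuous logarithm $\psi(t) = \log \varphi_\mu(t)$ with $\psi(0) = 0$; then $\varphi_{\mu_n}(t) = e^{\psi(t)/n}$ and a Taylor expansion gives $n(\varphi_{\mu_n}(t) - 1) \to \psi(t)$ locally uniformly. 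Rewrite
\[ n(\varphi_{\mu_n}(t)-1) = \int_\R \left(e^{itx}-1 - itx\,\textbf{1}_{\{|x|<1\}}\right) n\mu_n(dx) + it \int_\R x\,\textbf{1}_{\{|x|<1\}}\, n\mu_n(dx). \]
Setting $\nu_n(dx) = n\mu_n(dx)\big|_{\R\setminus\{0\}}$ and $a_n = \int_{|x|<1} x\, n\mu_n(dx)$, a tightness argument controlling $\int(1\wedge x^2)\,\nu_n(dx)$ produces a limiting L\'{e}vy measure $\nu$ and drift $a$; a residual mass concentrating at the origin coming from the second-order Taylor remainder provides the Gaussian variance $\sigma^2$. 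The most delicate step is keeping track of the contributions at the boundary $|x|=1$ so that the limit takes exactly the form \eqref{khin}.

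Finally, for $(e) \Rightarrow (b)$ I would carry out the L\'{e}vy-It\^o construction. Given a triple $(a,\sigma,\nu)$, form four independent components on a suitable product space: the deterministic drift $at$; the Gaussian part $\sigma B_t$ using the Brownian motion constructed in the previous section; a compound Poisson process $J^{\text{big}}_t$ carrying the jumps of size $\geq 1$ with intensity measure $\nu|_{\{|x|\geq 1\}}$; and a compensated process $J^{\text{small}}_t$ built as the $L^2$-limit of compensated compound Poisson processes supported on annuli $\{2^{-k}\leq |x| < 2^{-k+1}\}$, whose convergence is guaranteed by $\int_{|x|<1} x^2\,\nu(dx) < \infty$. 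Independence of the components, stationarity and independence of their increments, and continuity in probability are all inherited from the building blocks, so the sum is a L\'{e}vy process; a direct computation of the characteristic function at $t=1$ matches \eqref{khin}, closing the cycle.
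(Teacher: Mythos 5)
Note first that the paper itself does not give a proof of this theorem; it simply cites the literature (Barndorff-Nielsen et al.\ and Gnedenko--Kolmogorov), so your proposal is reconstructing an argument from scratch rather than tracking one in the text. Most of your cycle $(b)\Rightarrow(a)\Rightarrow(d)\Rightarrow(c)\Rightarrow(e)\Rightarrow(b)$ is sound: $(b)\Rightarrow(a)$ is trivial; for $(a)\Rightarrow(d)$ the partition $X_{j,n}=X_{j/n}-X_{(j-1)/n}$ is the right choice, and upgrading stochastic continuity to uniform stochastic continuity on the compact interval $[0,1]$ is a standard compactness argument; your non-vanishing argument for $(c)\Rightarrow(e)$ via $|\varphi_\mu|^{2/n}$ and L\'{e}vy's continuity theorem is correct, as is the identification $\varphi_{\mu_n}=e^{\psi/n}$ by a connectedness/continuity argument; the extraction of $(a,\sigma,\nu)$ from $n\mu_n$ is standard though technically heavy; and the L\'{e}vy--It\^{o} construction for $(e)\Rightarrow(b)$ is exactly the right tool.

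The genuine gap is in $(d)\Rightarrow(c)$. Grouping the $k_n$ summands into $m$ consecutive blocks and passing to a subsequence only yields a factorization $\mu=\nu_1*\cdots*\nu_m$; you claim one can take $\nu_1=\cdots=\nu_m=\tilde\mu$, but the $X_{j,n}$ are not assumed identically distributed, so different blocks can converge to genuinely different laws (already for deterministic $X_{j,n}=c_j/n$ with the constants $c_j$ split unevenly among blocks). An $m$-fold factorization with distinct factors is trivial and carries no information; infinite divisibility requires a single $m$-th convolution root. The usual argument works with characteristic functions instead: infinitesimality gives $\max_j|\varphi_{X_{j,n}}(t)-1|\to0$ locally uniformly, which justifies replacing each $X_{j,n}$ by its accompanying compound Poisson law with characteristic function $\exp\big(\varphi_{X_{j,n}}(t)-1\big)$; one checks that this replacement does not alter the weak limit, observes that a product of compound Poisson characteristic functions is again compound Poisson and hence infinitely divisible, and concludes using the fact that the class of infinitely divisible laws is closed under weak limits. (Alternatively one can prove $(d)\Rightarrow(e)$ directly by analyzing $\sum_j\big(\varphi_{X_{j,n}}-1\big)$, very much in the spirit of your $n\big(\varphi_{\mu_n}-1\big)$ analysis in the $(c)\Rightarrow(e)$ step.)
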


\section{Further reading}

\begin{itemize}
\item More on the mathematics of Markov processes can be found in the books \cite{bill} and \cite{kall}. For more on the theory of Markov chains, see \cite{LPW09}.
\item 
Markov chains are of paramount importance for stochastic modeling and the interested reader will find an 
abundance of applications of Markov chains, e.g.\ modeling processes in physics, chemistry, biology (\cite{Par08}, \cite{Tam98}), stochastic algorithms (\cite{BGJM11}), models in queuing theory, models for reinforcement learning (\cite{Put05}, \cite{SB18}), \ldots
\item A lovely book on Brownian motion is \cite{MP10}. The Brownian motion is a basic process for building up more complicated processes, e.g.\ via stochastic differential equations, see \cite{Oks03}.
\end{itemize}

\section{Exercises}
\begin{exercise}Consider two time-homogeneous Markov chains $(X_n)_{n\in\N_0}$ on the state space $S=\{s_1,s_2,s_3\}$ with initial distribution $\delta_{s_1}$ and transition matrices
\[ P_1=\begin{pmatrix}
	1/4& 1/4&1/2\\
	1/4& 1/2&1/4\\
	1/2& 1/4& 1/4
\end{pmatrix}, \quad P_2=\begin{pmatrix}
	0& 1/2&0\\
	1& 1/2&0\\
	0&0&1
\end{pmatrix}. \]
Recall that $\tau^+(s) = \min \{n>0 \,|\, X_n = s\}$, $s\in S$.
Compute $\mathbb{E}[\tau^+(s_1)]$ for both cases.
\end{exercise}

\begin{exercise}\label{balanced} Consider a transition matrix $P=(P_{j,k})_{1\leq j,k\leq N}$ of a time-homogeneous Markov chain on a finite state space and assume that the distribution $v=(v_1,...,v_N)^T\in \R^N$ satisfies $P_{j,k}v_k  =P_{k,j}v_j$ for all $j,k$. Show that $v$ is a stationary distribution.
\end{exercise}

\begin{exercise}\label{cov_brown} Let $B_t$ be a Brownian motion. Show that 
$cov(B_t, B_s)=\min(s,t)$ for all $s,t\geq 0$. 
\end{exercise}

\begin{exercise}[Ornstein--Uhlenbeck process] Let $B_t$ be a Brownian motion and consider the process $X_t = e^{-t}B_{e^{2t}}$. Show that the process $(X_t)_{t\geq 0}$ is a Markov process where all $X_t$, $t\geq0$, have the same distribution.
Is $X_t$ space-homogeneous?
\end{exercise}

\begin{exercise}${}$
\begin{itemize}
\item[(a)] Show that $\delta_c$, $c\in\R$, is infinitely divisible.
	\item[(b)] Show that if $\mu$ is infinitely divisible with compact support, then $\mu=\delta_c$, $c\in\R$.
	\item[(c)] Show that a centered Cauchy distribution with scale $\gamma>0$ is infinitely divisible.

\end{itemize}
\end{exercise}
\newpage

\chapter{Quantum probability theory}

\section{The algebraization of probability theory}

The path to quantum probability theory starts with the fundamental observation that working with a probability space $(\Omega, \mathcal{F}, \Pro)$  is equivalent to working with the set of all its bounded complex-valued random variables $\mathcal{A}=\{X:\Omega\to \C\,|\, \text{$X$ measurable and bounded}\}$ together with their expectations, i.e.\ the function $\varphi:\mathcal{A}\to\C$, $\varphi(X)=\mathbb{E}[X]$:
\[ (\Omega, \mathcal{F}, \Pro) \quad \Longleftrightarrow \quad (\mathcal{A}, \varphi). \]
Using the complex numbers instead of $\R$ is not necessary, but it makes life easier. Recall that the characteristic function of a real-valued random variable $X$ is the expectation of the complex-valued random variable $e^{itX}$. \\
The same is true for the boundedness. Of course, we are typically also interested in unbounded random variables $X$, but their expectation might not exist. Instead, $X$ can be replaced by 
the set of all bounded random variables of the form $f(X)$, where $f:\C\to\C$ is bounded and continuous.\\

Notions in $(\Omega, \mathcal{F}, \Pro)$ can be translated into 
notions in $(\mathcal{A}, \varphi)$:

\begin{itemize}
\item An event $A\in \mathcal{F}$ can be represented by $\textbf{1}_A \in \mathcal{A}$ and a $\sigma$-subalgebra $\mathcal{G}\subset \mathcal{F}$ induces a subalgebra $\mathcal{B}\subset \mathcal{A}$ (consisting of all $\mathcal{G}$-measurable elements in $\mathcal{A}$).\\[-6mm]
\item The intersection and union of events corresponds to multiplication and addition:\\[-4mm]
 \[\textbf{1}_A \cdot \textbf{1}_B = \textbf{1}_{A\cap B}, \quad  
\textbf{1}_A + \textbf{1}_B - \textbf{1}_A \cdot \textbf{1}_B = \textbf{1}_{A\cup B}.\]\\[-11mm]
\item Real-valued $X_1,...,X_n\in \mathcal{A}$ are independent if and only if (see Theorem \ref{independence_classical})
\begin{equation}\label{ind_?}
 \mathbb{E}[X_1^{k_1}\cdots X_n^{k_n}] =  \mathbb{E}[X_1^{k_1}]\cdots  \mathbb{E}[X_n^{k_n}] \quad \text{for all $k_1,...,k_n\in \N_0$.}
\end{equation}\\[-15mm]
\end{itemize}

In $\mathcal{A}$, we can add and multiply elements and we have a scalar multiplication, i.e.\ we can form $X+Y, X\cdot Y, \lambda \cdot X$, where $\lambda\in\C$ and $X,Y\in \mathcal{A}$. These operations obey certain rules, in particular commutativity of the product: $X\cdot Y = Y\cdot X$. \\

The quantum probabilist is now confronted with the following task: which algebraic properties does $(\mathcal{A},\varphi)$ have and how can they be formalized and generalized? 
This leads to more abstract probability spaces. In particular, the product will not be necessarily commutative anymore, which is why quantum probability theory is also called ``noncommutative probability theory''.\\ 

Let us consider the space $\C^{n\times n}$ of all $n\times n$-matrices as a possible version of noncommutative complex-valued random variables. Let $A,B\in \C^{n\times n}$. When should we call $A$ and $B$ ``independent''? In different contexts, with might come up with very different notions of independence. We give some possible examples.  

\begin{itemize}
	\item Idea 1: In quantum mechanics, we could call $A$ and $B$ independent if $A$ and $B$ commute, i.e.\ $AB=BA$. 
Thus the commutator $[A,B]=AB-BA$ measures how far $A,B$ are away from being independent. If $A$ is the position operator and $B$ the momentum operator of a particle, then $A$ and $B$ satisfy the ``canonical commutation relation'' $[A,B] =  i \hbar I$, where $I$ is the identity. In this case, $A$ and $B$ are stuck together in some strange sort of dependence, which indeed has strong implications on $A$ and $B$ (e.g.\  Heisenberg's uncertainty relation and that, in fact, both $A$ and $B$ have to be unbounded operators).\\
Note that $AB=BA$ if and only if there is a common eigenbasis for $A$ and $B$, i.e.\ we find an invertible $P$ such that 
both $PAP^{-1}$ and $PBP^{-1}$ are diagonal matrices. Hence, this notion of independence is strongly related to the eigenvectors of $A$ and $B$. 
\item Idea 2: In contrast to eigenvectors, we might concentrate  on the eigenvalues of $A$ and $B$ only, which might have a certain  meaning in an application we are interested in. We might call $A$ and $B$ independent if the eigenvalues of $A$ and $B$ are disjoint, or if the convex hulls of the eigenvalues are disjoint.
\item Idea 3:  Let us define an expectation on $\C^{n\times n}$ by $\varphi(X)=\frac1{n}\tr(X)$, which is the arithmetic average of all eigenvalues of $X$. Consider two diagonal matrices
\begin{equation}\label{two_matrices}A=\begin{pmatrix}\alpha_1 & 0\\ 0 & \alpha_2\end{pmatrix},  \quad B=\begin{pmatrix}\beta_1 & 0\\ 0 & \beta_2\end{pmatrix}, \quad \alpha_1,\alpha_2,\beta_1,\beta_2\in\R.\end{equation}
Are $A$ and $B$ be (sub)independent in the sense of \eqref{ind_?}? 
We have 
\[\frac1{2}\tr(AB)=\frac1{2}(\alpha_1\beta_1+\alpha_2\beta_2), \quad \text{but} \quad 
\frac1{2}\tr(A)\cdot \frac1{2}\tr(B) = \frac1{4}(\alpha_1+\alpha_2)(\beta_1+\beta_2).\]
 Both numbers are equal only in special cases, e.\ g.\ if $\alpha_1=\alpha_2$.\\
 As $\varphi(A^n)=\frac1{2}\alpha_1^n + \frac1{2}\alpha_2^n$ and $\varphi(B^n)=\frac1{2}\beta_1^n + \frac1{2}\beta_2^n$, we might say that the distributions of $A$ and $B$ are $\mu_A = \frac{1}{2}\delta_{\alpha_1}+\frac{1}{2}\delta_{\alpha_2}$, $\mu_B = \frac{1}{2}\delta_{\beta_1}+\frac{1}{2}\delta_{\beta_2}$. 
The convolution of the distributions is given by \begin{equation}\label{teblkkk}\mu_A * \mu_B=\frac1{4}\delta_{\alpha_1+\beta_1}+\frac1{4}\delta_{\alpha_1+\beta_2}+\frac1{4}\delta_{\alpha_2+\beta_1}+\frac1{4}\delta_{\alpha_2+\beta_2}.\end{equation} In general, these are four different eigenvalues, and $A+B$ obviously has at most only $2$ eigenvalues. However, we can force $A$ and $B$ to be subindependent by embedding them into the higher dimensional space $\C^{4\times 4}$. Consider the following two tensor products:
\[  A\otimes I=
\begin{pmatrix}
\alpha_1 & 0&0&0\\ 0 &\alpha_1&0&0 \\
0&0&\alpha_2 & 0\\ 0&0&0& \alpha_2\end{pmatrix},
  \quad 
	  I\otimes B=\begin{pmatrix}
	\beta_1 & 0&0&0\\ 0 &\beta_2&0&0\\
0&0&\beta_1 & 0\\ 0&0&0& \beta_2
	\end{pmatrix}.\]
	We have $\mu_{A\otimes I}=\mu_A, \mu_{I\otimes B}=\mu_B$ and now the convolution $*$ appears: 
	$\mu_{A\otimes I+I\otimes B} = \mu_A * \mu_B$.
\end{itemize}

We will see that in quantum probability theory, independence is defined in the spirit of the last example.

\section{Quantum probability spaces}

We are now heading for the definition of a more abstract probability space with possibly noncommutative random variables.
We start with a complex Hilbert space $H$. Recall that $H$ is a vector space over the field $\C$ together with an inner product $\left<\cdot,\cdot \right>$ (we use inner products which are linear in the second argument: $\left<v,\lambda w\right>=\lambda \left<v, w\right>=\left<\overline{\lambda}v, w\right>$) which is complete with respect to the norm $\|x\| = \sqrt{\left<x,x\right>}$, i.e.\ every Cauchy sequence in $H$ converges to an element in $H$. We let $B(H)$ be the set of all linear and bounded mappings $A:H\to H$ and endow this space with the operator norm defined by 
\[ \|A\| = \sup_{v\in H, \|v\|=1} \|Av\|  = \sup_{v\in H, \|v\|=1}  |\left<v,Av\right>|. \]

The identity operator will be denoted by $I$.
We recall some further facts about operators, see e.g.\ \cite{Con94} for an introduction to functional analysis. 

\begin{itemize}
	\item If $A\in B(H)$, we can define an adjoint $A^*\in B(H)$, which is defined by the property
\[ \left<A^*v,w \right> = \left<v,Aw \right>\]
for all $v,w\in H$. If $A=A^*$, then $A$ is called \emph{self-adjoint}.
\item  We write $A\geq 0$ (and $A>0$) if $\left<v, Av\right> \geq 0$ ($\left<v, Av\right> > 0$) for all $v\in H\setminus \{0\}$. 
Clearly, $AA^*\geq 0$ for all $A\in B(H)$. 
\end{itemize}

\begin{example}If $H=\C^n$ with the usual inner product $\left<w,v\right> = v_1\overline{w_1}+...+v_n\overline{w_n}$, then 
$B(H)=\C^{n\times n}$ is the  space of all complex $n\times n$-matrices. 
For $A\in\C^{n\times n}$, the adjoint $A^*$ is given by the conjugate transpose $A^* = \overline{A}^T$.
The operator norm can also be expressed by 
\[ \|A\|=\sqrt{\lambda_{max}(AA^*)}, \quad A\in \C^{n\times n}, \] 
where $\lambda_{max}$ is the largest eigenvalue of the positive semi-definite 
matrix $AA^*$. \hfill $\blacksquare$
\end{example}

\begin{definition}\label{hjklpppp}Let $H$ be a complex Hilbert space. We define a \emph{quantum probability space} as a pair $(B(H), \varphi)$, where 
 $\varphi$ is an \emph{expectation} on $B(H)$, defined as a linear functional $\varphi:B(H)\to\C$ with 
\[\varphi(A)\geq 0 \quad \text{whenever $A\geq 0$}\quad \text{and} \quad \varphi(I)=1.\]
The self-adjoint elements of $B(H)$ will be called (real) \emph{random variables}. 
\end{definition}

If $X\in B(H)$ is self-adjoint, then there exists a unique Borel probability measure $\mu_X\in \mathcal{P}_c(\R)$ such that 
\begin{equation}\label{fdssamm}\varphi(p(X)) = \int_\R p(x) \mu_X(dx)\end{equation}
for all polynomials $p:\R\to \R$.\footnote{If $p:\R\to\R$, $p(x)=a_0+a_1x+...+a_nx^n$, then $p(X):=a_0 I + a_1 X +...+a_nX^n$.} This follows from the Riesz-Markov theorem. (Alternatively, one can use the spectral theorem.)\\
For $p(x)=x^n$ we see that $\varphi(X^n)$ is equal to the $n$-th moment of the measure $\mu$:
\[\varphi(X^n) = \int_\R x^n \mu_X(dx).\]
Thus $\varphi(X)$ can be seen as the first moment of $X$, $\varphi((X-\varphi(X))^2) = \varphi(X^2)-\varphi(X)^2$ as the variance of $X$, etc.

\begin{definition}The distribution of a random variable $X$ is defined as the unique Borel probability measure $\mu_X$ satisfying \eqref{fdssamm}.
\end{definition}

\begin{example}If $H$ is a Hilbert space and $v\in H$ with $\|v\|=1$, then \[\varphi(X)=\left<v,Xv\right>\] defines an expectation. 
In quantum mechanics, such a vector $v$ is also called  \emph{state} and due to this example, expectations are also called \emph{states} in quantum probability theory.\hfill $\blacksquare$
\end{example}

\begin{example}\label{999}In the case of $H=\C^n$ we have $B(H)=\C^{n\times n}$, the set of all $n\times n$-matrices. 
We obtain an quantum probability space by defining the expectation as
\[ \varphi(A) = \frac1{n} \tr(A). \]
Indeed, $\varphi(I)=1$ and as $\tr(A)$ is the sum of all eigenvalues of $A$, we have $\tr(A)\geq 0$ whenever $A\geq 0$.
Denote by $\lambda_1,...,\lambda_n\in\R$ the eigenvalues of a self-adjoint element $X\in B(H)$.  Then 
$\varphi(p(X)) = \int_\R p(x) \mu_X(dx)$ for any polynomial $p:\R\to\R$, where \[\mu_X  = \frac1{n}\delta_{\lambda_1} + ... + \frac1{n}\delta_{\lambda_n}.\]
	Other examples of expectations can be obtained as follows. Let $\rho\in \C^{n\times n}$ be self-adjoint, non-negative, and $\tr(\rho)=1$, a so called \emph{density matrix}. We now obtain an expectation via
\[\varphi(A) = \tr(A \rho).\]
If $\rho=\frac1{n}I$, then $\tr(A \rho)=\frac1{n} \tr(A)$.
For $v\in \C^n$ with $\|v\|=1$ and $\rho= v\overline{v}^T$, we have $\tr(A \rho)= \tr(A v\overline{v}^T)=\tr(\overline{v}^TA v)=\left<v,Av\right>$.  In fact, all expectations on $\C^{n\times n}$ can be written as $\tr(A \rho)$ for a density matrix $\rho$, see Exercise \ref{density}.\hfill $\blacksquare$
\end{example}

\begin{example}Let $P=(\Omega, \mathcal{F}, \Pro)$ be a classical probability space. Then $H=L^2(\Omega, \mathcal{F}, \Pro)$ is a Hilbert space, see Remark \ref{Hilbert_space}. Let $\textbf{1}\in H$ be the function constant 1 and consider the quantum probability space $(B(H), \varphi)$ with \[\varphi(A) = \left<\textbf{1},A(\textbf{1})\right> = \int_\Omega (A(\textbf{1}))(\omega) {\rm d}\Pro(\omega).\]
Then every bounded classical random variable $X:\Omega \to \R$ can be identified with a random variable $A_X\in B(H)$, namely as the multiplication operator $A_X(v) = X\cdot v$, $v\in H$. The distribution of $A_X$ is equal to the distribution $\mu$ of $X$ as 
\[ \varphi(p(A_X)) = \int_\Omega (p(X))(\omega) {\rm d}\Pro(\omega)= \int_\R p(x) \mu(dx) = \mathbb{E}[p(X)], \]
for any polynomial $p:\R\to\R$.\hfill $\blacksquare$
\end{example}

\begin{example}\label{444} We can combine the two previous examples to obtain random matrices as quantum random variables. 
Let $(\Omega, \mathcal{F}, \Pro)$ be a classical probability space and let $N\in \N$. We take the Hilbert space $H =L^2(\Omega, \mathcal{F}, \Pro) \otimes \C^{N}$.  Let $E_{j,k}\in \C^{N\times N}$ be the matrix with $(j,k)$-entry $1$ and $0$ entries otherwise. We construct an expectation on $B(L^2(\Omega, \mathcal{F}, \Pro) \otimes \C^{N\times N})$ by defining $\varphi$ for $A\otimes E_{j,k}$ as
$\varphi(A\otimes E_{j,k}) = \frac1{N}\int_\Omega (A(\textbf{1}))(\omega) {\rm d}\Pro(\omega)$ if $j=k$ and $0$ otherwise.\\

A bounded random matrix can be seen as a bounded random variable $(X_{j,k})_{1\leq j,k\leq N}:\Omega \to \C^{N\times N}$, which can be identified with the operator $X = \sum_{j,k}  A_{X_{j,k}}\otimes E_{j,k}$ and thus 
\[ \varphi(X) = \sum_{j,k} \varphi(A_{X_{j,k}}\otimes E_{j,k}) = \frac1{N} \sum_{j} \mathbb{E}[X_{j,j}] = 
\frac1{N} \mathbb{E}[\tr(X)]. \]\hfill $\blacksquare$
\end{example}

\section{Independence}

Just as in \eqref{teblkkk}, classical independence can be identified algebraically with the tensor product of quantum probability spaces. This leads to the question whether other products of quantum probability spaces might yield reasonable notions of independence. In the 1980's D. Voiculescu discovered the free independence. His works stimulated a systematic study of quantum probability spaces.\\
 One can single out certain properties of the tensor product (in particular a certain universality property), and define axioms for products that represent an abstract notion of independence. It turns out that there are five notions of independence satisfying these axioms: 
\emph{classical} or \emph{tensor independence}, \emph{Boolean independence}, \emph{free independence},  \emph{monotone} and \emph{anti-monotone independence}, see \cite{Spe97}, \cite{BGS99}, \cite{MR2016316}, and the section ``The Five Universal Independences'' in \cite{barndorff-nielsen+al}. 

\begin{definition}\label{111}Let $(B(H), \varphi)$ be a quantum probability space and let $X_1,...,X_n\in B(H)$ be random variables. 
We now define five different notions of independences for these random variables.. For $j=1,...,n$, let
\[\mathcal{A}_j=\{p(X_j)\,|\, p:\R\to\R\; \text{is a polynomial}\}.\] 
\begin{itemize}
\item[(1)] Tensor independence: $X_1,...,X_n$ are tensor independent if 
\[ \varphi(Y_1\cdots Y_m) = \prod_{j\in \{1,...,n\}} \varphi(\prod_{\substack{k\in\{1,...,m\}\\ Y_k\in \mathcal{A}_j}} Y_k) \]
for all $m\in\N$ and $Y_k\in \mathcal{A}_{j_k}$.
\item[(2)] Free independence:  $X_1,...,X_n$ are freely independent if 
\[\varphi(Y_{1}\cdots Y_{m})=0 \quad \text{whenever} \quad \varphi(Y_1)=...=\varphi(Y_m)=0,\]
for all $m\in\N$ and $Y_k\in \mathcal{A}_{j_k}$, where $j_k\not= j_{k+1}$ for all $k=1,...,m-1$.
\item[(3)] Boolean independence:  $X_1,...,X_n$ are Boolean independent if 
\[\varphi(Y_1\cdots Y_m)=\varphi(Y_1)\cdots \varphi(Y_m)\]
for all $m\in\N$ and $Y_k\in \mathcal{A}_{j_k}$, where $j_k\not= j_{k+1}$ for all $k=1,...,m-1$.
\item[(4)] Monotone independence: The tuple $(X_1,...,X_n)$ is monotonically independent if 
\[\varphi(Y_1 \cdots Y_k \cdots Y_m) = \varphi(Y_k)\varphi(Y_1\cdots Y_{k-1}Y_{k+1}\cdots Y_m) \]
for all $m\in\N$ and $Y_k\in \mathcal{A}_{j_k}$, whenever $j_{k-1} <j_k$ and $j_k >j_{k+1}$ \\(and if $k=1$ or $k=n$, the first or second equality resp. can be ignored).
\item[(5)] Anti-monotone independence: $(X_1,...,X_n)$ is anti-monotonically independent if \\
   $(X_n,...,X_1)$ is monotonically independent.
\end{itemize}

\end{definition}

Note that the tensor independence corresponds to the classical independence as in \eqref{ind_?}. 
Monotone and anti-monotone independence also depend on the order of the random variables. So $(X,Y)$ might be monotonically independent while 
$(Y,X)$ is not.

\begin{example}Let $X$ and $Y$ be two random variables. In all cases of independence of $X$ and $Y$, we have \[\varphi(XY)=\varphi(YX)=\varphi(X)\varphi(Y).\]
 Thus $X$ and $Y$ are uncorrelated. In the case of free independence, note that $\varphi((X-\varphi(X))(Y-\varphi(Y))=0$ and  $\varphi((Y-\varphi(Y))(X-\varphi(X))=0$. \hfill $\blacksquare$
\end{example}

\begin{example}Let $X$ and $Y$ be two random variables. If $(X,Y)$ is monotonically independent we have 
\[\varphi(XYX)=\varphi(Y)\varphi(X^2) \quad \text{but} \quad \varphi(YXY)=\varphi(X)\varphi(Y)^2,\] while the case of Boolean independence  gives
\[\varphi(XYX)=\varphi(Y)\varphi(X)^2, \quad \varphi(YXY)=\varphi(X)\varphi(Y)^2.\]
\hfill $\blacksquare$
\end{example}

\begin{remark}${}$\label{def-mon}In \cite{M01}, Muraki defines monotone independence by the following two stronger conditions:
\begin{itemize}
\item[(i)]
For all $i,j,k\in \{1,...,n\}$ with $i<j>k$ and any $X\in\mathcal{A}_i$, $Y\in\mathcal{A}_j$, $Z\in\mathcal{A}_k$ we have
\[
XYZ = \varphi(Y) XZ.
\]
\item[(ii)] \label{monotone_indep1}
For any $r,s\in\mathbb{N}\cup\{0\}$, $i_1,\ldots,i_r,j,k_1\ldots,k_s\in \{1,...,n\}$ with $i_1>\cdots>i_r>j < k_s<\cdots<k_1$ and
any $X_1\in\mathcal{A}_{i_1},\ldots,X_r\in\mathcal{A}_{i_r}$, $Y\in\mathcal{A}_j$, $Z_1\in\mathcal{A}_{k_1},\ldots,Z_s\in\mathcal{A}_{k_s}$, we have
\[
\varphi(X_1\cdots X_r Y Z_s\cdots Z_1)=\varphi(X_1)\cdots \varphi(X_r)\varphi(Y) \varphi(Z_s)\cdots \varphi(Z_1).
\]
\end{itemize}
The conditions (i) and (ii) imply monotone independence as defined in Definition \ref{111}. \cite[Remark 3.2 (c)]{franz07b} gives a condition under which these definitions are equivalent if $\varphi=\left<v,\cdot v\right>$.
\end{remark}

Similar to the tensor products $A\otimes I$ and $I\otimes B$ of the matrices \eqref{two_matrices}, we can construct models for the five independences.

Let $H_1$ and $H_2$ be Hilbert spaces with expectations $\varphi_1:B(H_1)\to\C$, $\varphi_1(A)=\left<v_1,Av_1\right>$, and $\varphi_2:B(H_2)\to\C$, $\varphi_2(A)=\left<v_2,Av_2\right>$ for some unit vectors $v_1\in H_1$ and $v_2\in H_2$. We can now produce a new probability space. Consider the tensor product $H=H_1\otimes H_2$. The inner product on $H$ is defined via $\left<(a\otimes b), (c\otimes d)\right> =\left<a, c\right> \cdot \left<b,d\right>$, and by linear and continuous extension. 
Let $v = v_1\otimes v_2$ and $\varphi:B(H)\to\C$,  $\varphi(A)=\left<v,Av\right>$.

\begin{theorem}\label{models}
Let $X\in B(H_1)$ and $Y\in B(H_2)$. Denote by $I_j$ the identity on $H_j$ and by $P_j$ the projection in $H_j$ onto the space spanned by $v_j$, $j=1,2$.  Then 
\begin{itemize}
	\item $X\otimes I_2$ and $I_1\otimes Y$ are tensor independent.
	\item $X\otimes P_2$ and $P_1\otimes Y$ are Boolean independent.
	\item $(X\otimes P_2, I_1\otimes Y)$ is monotonically independent.
		\item $(X\otimes I_2, P_1\otimes Y)$ is anti-monotonically independent.
\end{itemize}
\end{theorem}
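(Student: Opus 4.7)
The plan is to verify each of the four assertions by direct computation on $H=H_1\otimes H_2$, relying on two elementary identities valid for every $A\in B(H_j)$: (a) $P_j v_j = v_j$, and (b) $P_j A P_j = \varphi_j(A)\,P_j$ (which follows from $P_jAv_j=\langle v_j,Av_j\rangle v_j$). Together with the multiplicativity law $(A\otimes B)(C\otimes D)=AC\otimes BD$ and the tensor factorization $\varphi(A\otimes B)=\varphi_1(A)\,\varphi_2(B)$, these reduce every mixed moment on $H$ to a product of marginal expectations.

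For the tensor-independence assertion the argument is immediate: $X\otimes I_2$ and $I_1\otimes Y$ commute, so any alternating product of polynomials in them rearranges and collapses to a single tensor $p(X)\otimes q(Y)$, and applying $\varphi$ then yields $\varphi_1(p(X))\,\varphi_2(q(Y))$, which is precisely the condition in Definition \ref{111}(1).

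For the Boolean and monotone assertions I would run a systematic projection-collapse argument on a general alternating word $Y_1\cdots Y_m$ with $Y_k\in\mathcal{A}_{j_k}$ and $j_k\neq j_{k+1}$. In the Boolean case, after writing the pure-tensor parts of the factors as $A_k\otimes P_2$ or $P_1\otimes B_k$, repeated application of identity (b) collapses each intermediate sandwich $P_j\cdot(\cdot)\cdot P_j$ to the scalar $\varphi_j(\cdot)$, while the outermost $P_j$'s are absorbed by the vacuum via (a); the surviving numerical factor is exactly $\prod_k\varphi(Y_k)$. In the monotone case, every element of $\mathcal{A}_2$ has the form $I_1\otimes q_k(Y)$, and in a subword $Y_{k-1}Y_kY_{k+1}$ with $j_{k-1}=j_{k+1}=1<2=j_k$ the central operator is framed by two $P_2$'s, so by (b)
\begin{equation*}
Y_{k-1}Y_kY_{k+1} \;=\; \varphi(Y_k)\,Y_{k-1}Y_{k+1}
\end{equation*}
as operators on $H$ — this is Muraki's strong condition (i) from Remark \ref{def-mon}. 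An analogous calculation, in which the $P_j$'s at the ends of a word $X_1\cdots X_rYZ_s\cdots Z_1$ meet the vacuum vectors rather than further operators, verifies Muraki's condition (ii); by the remark, Definition \ref{111}(4) follows. Anti-monotone independence is then obtained from the monotone case by interchanging the roles of $H_1$ and $H_2$, which reverses the total order on the index set.

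The main obstacle is the bookkeeping in the Boolean and monotone cases. A generic $p(A_{j_k})\in\mathcal{A}_{j_k}$ decomposes as a scalar-identity part $p(0)\,I$ plus a "pure" tensor part (of the form $A\otimes P_2$ or $P_1\otimes B$), and the clean projection-collapse sketched above only applies directly to the pure parts. One must check that in every moment the scalar-identity contributions either cancel among themselves or assemble into the correct product of $\varphi(Y_k)$'s; the cleanest route is to first prove each independence for the non-unital subalgebras generated by the variables, where every factor is already of pure tensor form, and then extend to the unital $\mathcal{A}_j$ by linearity in each argument.
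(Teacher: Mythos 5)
Your approach is essentially the one the paper uses: the whole content of the proof is the pair of identities $P_jv_j=v_j$ and $P_jAP_j=\varphi_j(A)P_j$ combined with the factorization $(A\otimes B)(C\otimes D)=AC\otimes BD$, and the paper proves exactly the Boolean case by this projection-collapse and leaves the others to the reader. Your handling of the tensor case via commutativity and of the monotone case via Muraki's conditions (i) and (ii) from Remark \ref{def-mon} is sound; for (i), after reducing to pure tensors, one gets $(A_1\otimes P_2)(I_1\otimes B)(A_2\otimes P_2)=A_1A_2\otimes P_2BP_2=\varphi_2(B)\,A_1A_2\otimes P_2$, and for (ii) the outer $P_2$'s hit $v_2$ directly; the anti-monotone case is indeed just the same computation with $H_1$ and $H_2$ swapped.

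However, the final paragraph's proposed repair does not work. The multilinear extension from the non-unital algebras to the unital $\mathcal{A}_j$ fails for Boolean and monotone independence: putting a factor equal to $I\in\mathcal{A}_{j_k}$ shortens a length-$m$ alternating word to a length-$(m-1)$ word in which two \emph{adjacent} factors may now lie in the same algebra, so the resulting identity is not an instance of anything you have proved. And indeed the unital version of the claim is simply false for these operators: with $Y_1=Y_3=X\otimes P_2\in\mathcal{A}_1$ and $Y_2=I\in\mathcal{A}_2$,
\begin{equation*}
\varphi\bigl((X\otimes P_2)\,I\,(X\otimes P_2)\bigr)=\varphi_1(X^2)
\quad\text{while}\quad
\varphi(X\otimes P_2)\,\varphi(I)\,\varphi(X\otimes P_2)=\varphi_1(X)^2,
\end{equation*}
which differ unless $X$ has zero variance. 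The correct reading (standard in the Boolean and monotone literature, and implicit in the paper's Definition \ref{111}) is that $\mathcal{A}_j$ should be the \emph{non-unital} algebra generated by $X_j$, i.e.\ polynomials $p$ with $p(0)=0$. Under that convention your pure-tensor computation is already the whole proof and no extension step is needed; the last paragraph should be deleted.
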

\begin{proof}We only consider the Boolean case. 
Let $k\in \N_0$. Then $(X\otimes P_2)^{k} = X^k\otimes P_2$ and $(P_1\otimes Y)^{k} = P_1\otimes Y^k$. Instead of arbitrary polynomials $p(X), p(Y)$, it suffices to consider powers of $X$ and $Y$. Thus, for $k_1,...,k_m,l_1,...,l_m\in \N_0$, we obtain
\[ (X^{k_1}\otimes P_2) \cdot (P_1\otimes Y^{l_1})\cdots (X^{k_m}\otimes P_2)\cdot (P_1\otimes Y^{l_m}) = 
 (X^{k_1} P_1\cdots X^{k_m}P_1 \otimes  P_2 Y^{l_1}\cdots P_2 Y^{l_m}).
 \]
Hence,
\begin{eqnarray*} &&\left<(v_1\otimes v_2), (X^{k_1} P_1\cdots X^{k_m}P_1 \otimes  P_2 Y^{l_1}\cdots P_2 Y^{l_m})(v_1\otimes v_2)\right> \\
&=& 
 \left<v_1 , X^{k_1} P_1\cdots X^{k_m}P_1 v_1\right> \cdot 
 \left<v_2,  P_2 Y^{l_1}\cdots P_2 Y^{l_m} v_2\right> \\
&=&
\left<v_1 , X^{k_1}v_1\right> \cdots \left<v_1 , X^{k_m}v_1\right> \cdot 
 \left<v_2, Y^{l_1}v_2\right> \cdots \left<v_2,Y^{l_m} v_2\right>.
\end{eqnarray*}

\end{proof}

\begin{remark}Free independence is modeled in a different way. Let $K_j$ be the orthocomplement of $v_j$ in $H_j$ and define the free product

\[ H_1 * H_2 = \C v \oplus \bigoplus_{n\geq 1} \bigoplus_{j_1\not= j_2 \not= ... \not= j_n} K_{j_1}\otimes \cdots \otimes K_{j_n}, \]
where $\C v$ is a one-dimensional Hilbert space with $\|v\|=1$. Let $X_j\in B(H_j)$. We identify $X_j$ with an operator $\hat{X}_j\in B( H_1 * H_2)$ via $\hat{X}_j(v) = X_j(v_j)$ and 
\[  \hat{X}_j(k_j \otimes k_{j_1}\otimes ... \otimes k_{j_n})=
	X_j(k_j) \otimes k_{j_1}\otimes ... \otimes k_{j_n},\quad 
	 \hat{X}_j(k_{j_1}\otimes ... \otimes k_{j_n})=
	X_j(v_j) \otimes k_{j_1}\otimes ... \otimes k_{j_n},
 \]
where $k_j\in K_j$ and $k_{j_1} \in K_{3-j}$.
Now $\hat{X}_1$ and $\hat{X}_2$ are freely independent in $(B(H_1 * H_2), X\mapsto \left<v, X v \right>)$, see \cite{AN06}.
\end{remark}

\begin{example}Consider the two matrices $A$ and $B$ from \eqref{two_matrices}. We choose $v_1=v_2=(1,0)^T$. Then 
$\mu_{A} = \delta_{\alpha_1}$ and $\mu_{B} = \delta_{\beta_1}$. Let $v=v_1\otimes v_2 = (1,0,0,0)^T$.
 Then the following two matrices are monotonically independent in $(\C^{4\times 4}, \left<v,\cdot v\right>)$:
\[  A\otimes P_2=
\begin{pmatrix}
\alpha_1 & 0&0&0\\ 0 &0&0&0 \\
0&0&\alpha_2 & 0\\ 0&0&0& 0\end{pmatrix},
  \quad 
	  I_1\otimes B=\begin{pmatrix}
	\beta_1 & 0&0&0\\ 0 &\beta_2&0&0\\
0&0&\beta_1 & 0\\ 0&0&0& \beta_2
	\end{pmatrix}.\]
	 Boolean independence is realized by
\[  A\otimes P_2=
\begin{pmatrix}
\alpha_1 & 0&0&0\\ 0 &0&0&0 \\
0&0&\alpha_2 & 0\\ 0&0&0& 0\end{pmatrix},
  \quad 
	  P_1\otimes B=\begin{pmatrix}
	\beta_1 & 0&0&0\\ 0 &\beta_2&0&0\\
0&0&0 & 0\\ 0&0&0& 0
	\end{pmatrix}.\]\hfill $\blacksquare$
\end{example}

\begin{remark}Let us consider the monotone case in Theorem \ref{models} from a quantum mechanical perspective. Assume we make a quantum measurement in $H_2$. Then the state $v_2$ changes to some $w\in H_2$, $\|w\|=1$. Say $\left<w,v_2\right>=0$. (For example $w=(0,1)^T$ in the previous example.) \\
 Then $v=v_1\otimes v_2$ changes to $v'=v_1\otimes w$ ($v'=(0,1,0,0)^T$) and 
 \[\text{$\left<v', (A\otimes P_2)^n v'\right>=0$ for all $n\in\N$.}\] 
Conversely, let us change $v_1$ to some $w\in H_1$, $\|w\|=1$, with $\left<w,v_1\right>=0$. (For example $w=(0,1)^T$ in the previous example.) 
Then $v$ changes to $v'=w\otimes v_2$ ($v'=(0,0,1,0)^T$) and 
\[\text{$\left<v', (I_1\otimes B)^n v'\right>= \left<v_2, B^n v_2\right> $ for all $n\in\N$.}\]
We see that the measurement in $H_2$ ``annihilates'' the observable $A\otimes P_2$, while the measurement in $H_1$ leaves $I_1\otimes B$ unaffected. 
\end{remark}
\newpage
\section{Further reading}

A quantum probability space is often defined as a (unital) $C^*$-algebra $\mathcal{A}$ together with a linear functional $\varphi:\mathcal{A}\to\C$ which is non-negative and has norm $1$, see \cite[Def. 3.7]{NS10}. A $C^*$-algebra is a Banach algebra $\mathcal{A}$ over $\C$ together with a map $x\mapsto x^*$ on $\mathcal{A}$ such that 
\begin{itemize}
	\item[(1)] $(x^*)^* = x$ for all $x\in\mathcal{A}$,
	\item[(2)] $(x+y)^*=x^*+y^*$ and $(xy)^*=y^*x^*$ for all $x,y\in\mathcal{A}$,
	\item[(3)] $(\lambda x)^*=\overline{\lambda}x^*$ for all $\lambda\in \C$ and $x\in\mathcal{A}$,
	\item[(4)] $\|xx^*\|=\|x\|\|x^*\|$ for all $x\in\mathcal{A}$ ($C^*$-property).
\end{itemize}
[The term $C^*$-algebra was introduced by I. E. Segal in 1947 in \cite{seg47}, where ``C'' stands for 
``closed''; \cite[p. 75]{seg47}: ``Let $\mathcal{A}$  be  a  $C^*$-algebra,  by  which  we  mean  a  uniformly  closed,  self-adjoint  algebra  of  bounded  operators  on  a  Hilbert space.''] Our Definition \ref{hjklpppp} yields an example of such abstract probability spaces. However, the Gelfand-Naimark theorem (\cite[Section II.6.4]{Bla06}) states that in fact every $C^*$-algebra is isomorphic to a $C^*$-algebra consisting of bounded operators on a Hilbert space.\\

One could also make an even more abstract definition by dropping the Banach space structure, see \cite[Def. 1.1]{NS10}, or by considering  von Neumann algebras instead, see \cite{Voi97, Att, one, barndorff-nielsen+al}. \\
Furthermore, one can generalize the $\C$-valued expectation to an expectation $\varphi:A\to B$, where both $A$ and $B$ are $C^*$-algebras. The Cauchy transforms are now ``noncommutative holomorphic functions'' living on the (matricial) upper half-plane of $B$, see \cite{KVV14} for the theory of these functions.

\section{Exercises}

\begin{exercise}\label{csu}
Let $A, B\in B(H)$. Prove the Cauchy-Schwarz inequality
\[ |\varphi(A^*B)|^2  \leq \varphi(B^*B)\varphi(A^*A).\]
\end{exercise}

\begin{exercise}\label{use_con}Let $(B(H),\varphi)$ be a quantum probability space and let $A\in B(H)$. Show that 
\[ |\varphi(A)| \leq \|A\|. \]
\end{exercise}

\begin{exercise}Let $(B(H),\varphi)$ be a quantum probability space and denote by $I\in B(H)$ the identity. 
\begin{itemize}
	\item[(a)] For which self-adjoint $X\in B(H)$ is $(X,I)$ monotonically independent? 
	\item[(b)] For which self-adjoint $X\in B(H)$ is $(I,X)$ monotonically independent? 
\end{itemize}

\end{exercise}

\begin{exercise}Let $(B(H),\varphi)$ be a quantum probability space with $\varphi(AB)=\varphi(BA)$ for all $A,B\in B(H)$. 
Let $X,Y\in B(H)$ be random variables. Prove that if $(X,Y)$ is monotonically independent, then the distribution of $X$ or of $Y$ is a point measure.
\end{exercise}

\begin{exercise}\label{45555}Let $X,Y$ be random variables in a quantum probability space $(B(H),\varphi)$. 
Assume that $X$ and $Y$ are both classically and freely independent. Show that the distribution of $X$ or of $Y$ has to be a 
 point measure.
\end{exercise}

\begin{exercise}\label{density}Show that any expectation $\varphi$ on $B(\C^{N\times N})$ can be written as 
$\varphi(X) = \tr(X \rho)$ for a density matrix $\rho$.
\end{exercise}

\chapter{The complex toolbox}

In the case of the non-classical independences, the characteristic function is basically replaced by the Cauchy transform, which is a holomorphic function in the upper half-plane. Complex analysis offers us several powerful theorems that will help us to deal with these functions. 

\section{The Cauchy transform}

\begin{definition}
Let $\mu \in \mathcal{P}(\R)$. The Cauchy transform $G_\mu$ of $\mu$ is defined as the holomorphic function
\begin{equation}\label{eq:Cauchy}
G_\mu(z):= \int_{\R}\frac{1}{z-x} \mu(dx), \quad z\in \Ha:=\{w\in\C\,|\, \Im(w)>0\}.
\end{equation}
\end{definition}

We could regard $G_\mu$ also on the lower half-plane $\Ha^-:=\{w\in\C\,|\, \Im(w)<0\}$, but here we simply get the conjugation as $G_\mu(\overline{z})=\overline{G_\mu(z)}$. If $\mu$ has compact support, $G_\mu$ can in fact be extended analytically to $\hat{\C}\setminus \supp(\mu)$. The power series extension of $G_\mu$ at $z=\infty$ yields the moments of $\mu$ in this case:

\begin{equation}\label{Cauchy_moments}
\begin{aligned}
    G_\mu(z) &= \frac1{z} \int_\R  \frac1{1-\frac{x}{z}} \mu(dx)=
\frac1{z} \int_\R  \sum_{k=0}^\infty x^k/z^k \mu(dx) \\
    &= \sum_{k=0}^\infty \left(\int_\R x^k\mu(dx)\right)/z^{k+1}=\frac1{z}+\frac{m_1(\mu)}{z^2}+\frac{m_2(\mu)}{z^3}+...
\end{aligned}
\end{equation}

We will regard $G_\mu$ mostly as a function on the upper half-plane, but we keep the analytic extension \eqref{Cauchy_moments} for compactly supported measures in mind.

\begin{theorem}[Stieltjes-Perron inversion formula]\label{sp} Let $\mu \in \mathcal{P}(\R)$.
\begin{align}
\frac{1}{2}\mu(\{\alpha\})+\frac{1}{2}\mu(\{b\})+\mu((a,b)) 
&= -\frac{1}{\pi} \lim_{y\downarrow0}\int_a^b \Im(G_\mu(x+i y))\, dx, \quad a,b\in\R, a<b, \label{Stieltjes2} \\
\mu(\{a\}) &= \lim_{y\downarrow0} iy G_\mu(a+iy), \quad a \in \R. \label{eq:atom2}
\end{align}
In particular, if $G_\mu=G_\nu$ for $\mu,\nu\in\mathcal{P}(\R)$, then $\mu=\nu$.
\end{theorem}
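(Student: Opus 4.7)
The plan is to start by computing the imaginary part of $G_\mu$ explicitly and recognizing it as (minus $\pi$ times) a Poisson integral of $\mu$. Writing
\[ G_\mu(x+iy) = \int_\R \frac{x-t-iy}{(x-t)^2+y^2}\,\mu(dt), \]
we get $-\tfrac{1}{\pi}\Im(G_\mu(x+iy)) = \int_\R P_y(x-t)\,\mu(dt)$ where $P_y(u) = \frac{1}{\pi}\frac{y}{u^2+y^2}$ is the Poisson kernel. This is the standard starting point, and every statement in the theorem will be read off from properties of $P_y$ as $y\downarrow 0$.

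For part \eqref{Stieltjes2}, I would swap the order of integration by Fubini (the integrand is non-negative, so no integrability issue), giving
\[ -\frac{1}{\pi}\int_a^b \Im(G_\mu(x+iy))\,dx = \int_\R \Phi_y(t)\,\mu(dt), \qquad \Phi_y(t) := \frac{1}{\pi}\Bigl(\arctan\tfrac{b-t}{y} - \arctan\tfrac{a-t}{y}\Bigr). \]
Then I would compute the pointwise limit $\lim_{y\downarrow 0}\Phi_y(t)$, which equals $1$ on $(a,b)$, $\tfrac{1}{2}$ at $t=a$ and at $t=b$, and $0$ elsewhere. Since $0 \leq \Phi_y(t) \leq 1$, dominated convergence gives the claimed value $\mu((a,b)) + \tfrac{1}{2}\mu(\{a\}) + \tfrac{1}{2}\mu(\{b\})$. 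The main (minor) technical point is the pointwise arctan limit at the endpoints $t=a$ and $t=b$, which I would handle by direct computation.

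For \eqref{eq:atom2}, I split $iyG_\mu(a+iy)$ into real and imaginary parts:
\[ \Re\bigl(iyG_\mu(a+iy)\bigr) = \int_\R \frac{y^2}{(a-t)^2+y^2}\,\mu(dt), \qquad \Im\bigl(iyG_\mu(a+iy)\bigr) = \int_\R \frac{y(a-t)}{(a-t)^2+y^2}\,\mu(dt). \]
The real-part integrand is bounded by $1$ and converges pointwise to $\mathbf{1}_{\{a\}}(t)$, so dominated convergence yields $\mu(\{a\})$. The imaginary-part integrand is bounded by $\tfrac{1}{2}$ (by $|y(a-t)| \leq \tfrac{1}{2}((a-t)^2+y^2)$) and converges pointwise to $0$, so it vanishes in the limit.

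Finally, uniqueness follows from \eqref{Stieltjes2} essentially as in the proof for characteristic functions. Let $S$ be the (at most countable, by Exercise \ref{atmdjfgh}) set of atoms of $\mu$ or $\nu$. If $G_\mu = G_\nu$, then \eqref{Stieltjes2} gives $\mu((a,b)) = \nu((a,b))$ for all $a<b$ with $a,b\in\R\setminus S$. Approximating arbitrary half-lines $(-\infty,x)$ by unions of such intervals and using continuity of measures from below, we deduce $\mu((-\infty,x)) = \nu((-\infty,x))$ for all $x\in\R$, and hence $\mu=\nu$. I expect the main obstacle in the whole proof to be bookkeeping the pointwise arctan limits at $t=a,b$ correctly to get the factors $\tfrac{1}{2}$; everything else is a straightforward application of Fubini and dominated convergence.
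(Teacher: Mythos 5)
Your proof is correct and follows essentially the same route as the paper's: identify $-\tfrac{1}{\pi}\Im G_\mu$ as the Poisson integral of $\mu$, swap integrals by Tonelli/Fubini, evaluate the inner integral in closed form via $\arctan$, and pass to the limit by dominated convergence; the atom formula and uniqueness follow by the same pattern. You actually supply more detail than the paper does for the atom formula (where the paper says "similarly we derive") and for the uniqueness claim (which the paper states as "in particular" without a separate argument), and your details are right — in particular the bound $|y(a-t)|\le\tfrac12\bigl((a-t)^2+y^2\bigr)$ correctly kills the imaginary part, and restricting the endpoints to the complement of the countable atom set before approximating is exactly the needed care.
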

\begin{proof}
\begin{eqnarray*}
&&\int_{a}^b \Im(G_\mu(x+iy)) dx = 
-\int_{a}^b \int_\R \frac{y}{(x-u)^2+y^2} \mu(du)dx = 
- y \int_\R \int_{a}^b \frac{1}{(x-u)^2+y^2} dx\mu(du) \\
&=& - y \int_\R \frac1{y}\left(\arctan\left(\frac{b-u}{y}\right)-\arctan\left(\frac{a-u}{y}\right)\right) \mu(du)\\
 &=& 
- \int_\R \arctan\left(\frac{b-u}{y}\right)-\arctan\left(\frac{a-u}{y}\right)\mu(du).
\end{eqnarray*}
Denote the integrand of the last integral by $s(u)$. Then 
\[\lim_{y\downarrow 0} s(u)= 
\begin{cases}
\frac{\pi}{2} & \text{if $u=a$ or $u=b$,}\\
0 & \text{if $u<a$ or $u>b$,}\\
\pi & \text{if $u\in(a,b)$.}\\
\end{cases}\]
By the dominated convergence theorem,
\begin{eqnarray*}
&&-\frac1{\pi}\lim_{y\to0}\int_{a}^b \Im(G_\mu(x+iy)) dx =
 \frac1{\pi}\lim_{y\to0} \int_\R   s(u) \mu(du) \\
&=& 
\frac1{\pi}\int_\R  \lim_{y\to0}s(u) \mu(du) = \frac1{2}\mu(\{a\}) +  \frac1{2}\mu(\{b\})  + \mu((a,b)).
\end{eqnarray*}
Furthermore,  
\[ iy G_\mu(a+iy) = 
 \int_\R \frac{iy}{a-u+iy} \mu(du) = 
 \int_\R \frac{1}{(a-u)/(iy)+1} \mu(du)
\]
and similarly we derive $\lim_{y\downarrow0} iy G_\mu(a+iy) = \mu(\{a\})$.
\end{proof}

Mappings from the upper half-plane into itself have a useful integral representation. It can be obtained from the Herglotz representation formula for holomorphic functions in the unit disc with non-negative real part, or from the Poisson integral representation of harmonic functions. A proof can be found in \cite[Theorem 1]{cau32}.

\begin{theorem}[Nevanlinna representation formula]\label{thm_nevanlinna}
Every holomorphic mapping $f$ from $\Ha$ into $\Ha\cup \R$ can be written as 
\begin{equation}\label{nevanlinna}
f(z)=a + bz + \int_{\mathbb{R}}\frac{1+z x}{x-z} \gamma(dx) ,\qquad z\in \mathbb{H},
\end{equation}
with  $a\in \R, b\geq 0$, and a non-negative  measure $\gamma$ on $\R$. The numbers $a$ and $b$ can be calculated via 
$a=\Re f(i)$, $b = \lim_{y\to\infty}f(iy)/(iy)$. Conversely, every such triple $(a, b, \gamma)$ produces a holomorphic mapping from $\Ha$ into $\Ha\cup \R$.
\end{theorem}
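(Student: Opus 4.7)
The natural plan is to transfer the problem from the upper half-plane to the unit disc via the Cayley transform, apply the classical Herglotz representation there, and pull the formula back. I will then identify the boundary mass at the point corresponding to $\infty$ as the source of the linear term $bz$.

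\textbf{Step 1: Reduction via Cayley transform.} Let $\phi(w)=i\frac{1+w}{1-w}$ be the Cayley map, a conformal isomorphism $\mathbb{D}\to\mathbb{H}$ with inverse $\phi^{-1}(z)=\frac{z-i}{z+i}$. Given $f:\mathbb{H}\to\mathbb{H}\cup\mathbb{R}$, define $g(w)=f(\phi(w))$ on $\mathbb{D}$. Since $\operatorname{Im}f\geq 0$, the function $h=-ig$ satisfies $\operatorname{Re}h\geq 0$ on $\mathbb{D}$.

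\textbf{Step 2: Herglotz on the disc.} By the classical Herglotz representation, every holomorphic $h:\mathbb{D}\to\{\operatorname{Re}\geq 0\}$ admits a unique representation
\[
h(w)=ic+\int_{\partial\mathbb{D}}\frac{\zeta+w}{\zeta-w}\,d\mu(\zeta),
\]
for some $c\in\mathbb{R}$ and a finite non-negative Borel measure $\mu$ on $\partial\mathbb{D}$ (this is derived from the Poisson integral for the harmonic function $\operatorname{Re}h$). Multiplying by $i$ and using $-ig=h$ gives $f(z)=-c+i\int_{\partial\mathbb{D}}\frac{\zeta+w}{\zeta-w}d\mu(\zeta)$ with $w=\phi^{-1}(z)$.

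\textbf{Step 3: Change of variables.} Parametrise $\partial\mathbb{D}\setminus\{1\}$ by $\mathbb{R}$ via $\zeta=\frac{x-i}{x+i}$, which corresponds to $\phi(\zeta)=x$. A direct computation using $w=\frac{z-i}{z+i}$ yields
\[
\frac{\zeta+w}{\zeta-w}=\frac{2(1+xz)}{2i(x-z)}=-i\,\frac{1+xz}{x-z}.
\]
Let $\tilde\mu$ be the push-forward of $\mu$ under $\zeta\mapsto\phi(\zeta)\in\mathbb{R}\cup\{\infty\}$; set $\gamma=\tilde\mu|_{\mathbb{R}}$ and $b=\tilde\mu(\{\infty\})\geq 0$. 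For $x\in\mathbb{R}$ the computation above produces $\frac{1+xz}{x-z}$, whereas for the single point $\zeta=1$ (i.e.\ $x=\infty$) the expression $\frac{\zeta+w}{\zeta-w}=\frac{1+w}{1-w}=-iz$ contributes a multiple of $z$. Combining and setting $a=-c$, we obtain
\[
f(z)=a+bz+\int_{\mathbb{R}}\frac{1+xz}{x-z}\,\gamma(dx).
\]
Since $\mu$ is finite, so is $\gamma$, and the integral converges absolutely.

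\textbf{Step 4: Identifying $a$ and $b$, and the converse.} A direct calculation gives $\frac{1+ix}{x-i}=i$ for all $x\in\mathbb{R}$, hence $f(i)=a+i(b+\gamma(\mathbb{R}))$ and $\operatorname{Re}f(i)=a$. Writing $\frac{1+ixy}{iy(x-iy)}=\frac{1}{iy(x-iy)}+\frac{x}{iy(x-iy)}$ and using that $|x/(x-iy)|\leq 1$ with pointwise limit $0$, the dominated convergence theorem gives $\lim_{y\to\infty}f(iy)/(iy)=b$. The converse direction is an immediate verification: the integrand has positive imaginary part $\frac{(1+x^2)\operatorname{Im}z}{|x-z|^2}$, so $\operatorname{Im}\!\int\frac{1+xz}{x-z}\gamma(dx)\geq 0$, and $\operatorname{Im}(a+bz)\geq 0$ for $a\in\mathbb{R}$, $b\geq 0$, $z\in\mathbb{H}$.

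\textbf{Main obstacle.} The only delicate point is the bookkeeping at the boundary: one must justify separating off the single atom of $\mu$ at $\zeta=1\in\partial\mathbb{D}$ (corresponding to $\infty$), recognise that it produces exactly the $bz$ term, and confirm that finiteness of $\mu$ on $\partial\mathbb{D}$ translates into finiteness of $\gamma$ on $\mathbb{R}$ with integrability of $\frac{1+xz}{x-z}$. Uniqueness of $(a,b,\gamma)$ then follows from uniqueness of the Herglotz measure $\mu$ together with the explicit formulas for $a$ and $b$ above.
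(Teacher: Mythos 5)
Your argument is correct and follows exactly the route the paper points to. The paper itself does not write out a proof of Theorem~\ref{thm_nevanlinna}; it states in the preceding paragraph that the formula ``can be obtained from the Herglotz representation formula for holomorphic functions in the unit disc with non-negative real part, or from the Poisson integral representation'' and then cites \cite[Theorem 1]{cau32}. Your Cayley-transform-plus-Herglotz approach, with the careful isolation of the boundary mass at $\zeta=1$ giving the linear term $bz$, is precisely that derivation carried out in full.

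One small slip worth fixing: in Step~4 the split should read $\frac{1+ixy}{iy(x-iy)}=\frac{1}{iy(x-iy)}+\frac{x}{x-iy}$ (the second summand has no extra $iy$ in the denominator); the estimate $|x/(x-iy)|\le 1$ that you then invoke shows you had the correct expression in mind, so the argument is not affected.
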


\begin{remark}\label{nontgh}The formula $b = \lim_{y\to\infty}f(iy)/(iy)$ can easily be verified. In fact, 
\[ b = \lim_{n\to \infty}\frac{f(z_n)}{z_n} \]
for any sequence $(z_n)\subset \Ha$ with $\Im(z_n)\to\infty$ and $\Im(z_n) > c|\Re(z_n)|$ for some $c>0$ (non-tangential approach to $\infty$).
\end{remark}

We now obtain a very simple characterization of Cauchy transforms.

\begin{theorem}\label{cor_nevanlinna1}Let $f:\Ha\to \Ha^-$ be holomorphic.
\begin{itemize}
\item[(a)] $f=G_\mu$ for some $\mu\in\mathcal{P}(\R)$ if and only if 
  \[\lim_{y\to \infty}(iy)f(iy)=1.\]
\item[(b)] $f=G_\mu$ for some $\mu\in\mathcal{P}_c(\R)$ if and only if 
 $f$ extends analytically to $\infty$ with the expansion 
\[f(z)=\frac{1}{z}+\frac{c_2}{z^2}+\frac{c_3}{z^3}+... \quad \text{as $y\to\infty$.}\]
\end{itemize}
\end{theorem}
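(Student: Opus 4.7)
The plan is to establish part (a) via the Nevanlinna representation (Theorem \ref{thm_nevanlinna}) and then deduce (b) from (a) together with the Stieltjes--Perron inversion formula (Theorem \ref{sp}). For the forward direction of (a), I will write
\[iy\,G_\mu(iy) = \int_\R \frac{1}{1-x/(iy)}\,\mu(dx),\]
observe that the integrand converges pointwise to $1$ and is uniformly bounded by $1$ (since $|iy/(iy-x)|=y/\sqrt{x^2+y^2}\leq 1$), and invoke dominated convergence to obtain the limit $\mu(\R)=1$.

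For the converse of (a), the key observation is that $-f$ maps $\Ha$ into $\Ha$, so Nevanlinna gives
\[-f(z) = a + bz + \int_\R \frac{1+xz}{x-z}\,\gamma(dx), \qquad z\in\Ha,\]
with $a\in\R$, $b\geq 0$, and $\gamma$ a finite non-negative Borel measure on $\R$. I will pin down these three data from the single scalar hypothesis $\lim_{y\to\infty}iy\,f(iy)=1$. Using the decomposition
\[\frac{1+ixy}{x-iy} = \frac{x(1-y^2)}{x^2+y^2} + i\,\frac{y(1+x^2)}{x^2+y^2},\]
the real part of $iy\,(-f)(iy)$ equals $-by^2 - \int_\R \frac{y^2(1+x^2)}{x^2+y^2}\,\gamma(dx)$. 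Since the integrand increases monotonically in $y$ to $1+x^2$, monotone convergence forces $b=0$ and $\int_\R(1+x^2)\,\gamma(dx)=1$; in particular $\gamma$ has finite second moment. The imaginary part is handled by dominated convergence using the AM--GM bound $|xy|/(x^2+y^2)\leq 1/2$, yielding $a=\int_\R x\,\gamma(dx)$. Substituting these identifications and applying the algebraic identity $\frac{1+xz}{x-z} = -x + \frac{1+x^2}{x-z}$, the terms $a$ and $-\int_\R x\,\gamma(dx)$ cancel, leaving
\[f(z) = \int_\R \frac{1}{z-x}\,(1+x^2)\,\gamma(dx) = G_\mu(z),\]
where $\mu(dx):=(1+x^2)\,\gamma(dx)$ is a probability measure since $\mu(\R)=1$.

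For part (b), the forward direction is precisely the content of \eqref{Cauchy_moments}. For the converse, the convergent Laurent series $f(z)=1/z+c_2/z^2+\ldots$ on $\{|z|>R\}$ yields $\lim_{y\to\infty}iy\,f(iy)=1$, so part (a) produces $\mu\in\mathcal{P}(\R)$ with $f=G_\mu$ on $\Ha$. Since the coefficients $c_k$ are real (so that $f$ takes real values on $\R\cap\{|x|>R\}$), the atom formula \eqref{eq:atom2} combined with continuity of $f$ at any $x_0$ with $|x_0|>R$ gives $\mu(\{x_0\}) = \lim_{y\downarrow 0} iy\,f(x_0+iy) = 0$. For any interval $[a,b]\subset(R,\infty)$ (or in $(-\infty,-R)$), uniform convergence $f(x+iy)\to f(x)$ on $[a,b]$ together with $\Im f(x)=0$ and the Stieltjes--Perron formula \eqref{Stieltjes2} yields
\[\mu((a,b)) = -\frac{1}{\pi}\int_a^b \Im f(x)\,dx = 0,\]
so $\supp\mu\subset[-R,R]$. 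The main technical hurdle is the converse of (a): extracting from the single scalar condition $\lim iy\,f(iy)=1$ all three pieces of Nevanlinna data ($b=0$, $\int(1+x^2)\,d\gamma=1$, and $a=\int x\,d\gamma$), and then recognizing via the rewrite $\frac{1+xz}{x-z}=-x+\frac{1+x^2}{x-z}$ that the Nevanlinna measure $\gamma$ visibly encodes a probability measure $\mu=(1+x^2)\gamma$ with $f=G_\mu$.
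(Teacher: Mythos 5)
Your proof is correct and follows essentially the same route as the paper's: the Nevanlinna representation (Theorem \ref{thm_nevanlinna}) for the converse of (a), and the Stieltjes--Perron inversion (Theorem \ref{sp}) for (b). That said, your execution of the converse of (a) is somewhat cleaner and actually patches a small logical gap in the paper's argument. You keep the Nevanlinna data in terms of $\gamma$ and compute $\Re\bigl(iy\,(-f)(iy)\bigr) = by^2 + \int_\R \tfrac{y^2(1+x^2)}{x^2+y^2}\,\gamma(dx)$, whose integrand increases monotonically to $1+x^2$; monotone convergence then delivers $b=0$ and $\int_\R (1+x^2)\,\gamma(dx)=1<\infty$ in a single stroke, and only at that point do you set $\mu=(1+x^2)\gamma$, which is then visibly a probability measure. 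The paper instead declares $\mu:=(1+x^2)\gamma$ to be a ``finite, non-negative measure'' at the outset, runs the boundedness argument with integrals taken against $\mu$, and confirms $\mu(\R)=1$ only at the end --- so finiteness of $\mu$ (equivalently, that $\gamma$ has finite second moment) is used before it is established. Your ordering removes that circularity. Both of you treat the imaginary part via the bound $|xy|/(x^2+y^2)\leq 1/2$ and dominated convergence, in the same way.

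In (b) you explicitly invoke that the Laurent coefficients $c_k$ are real in order to conclude $\Im f(x)=0$ on $\R\setminus[-R,R]$. This hypothesis is genuinely needed: the function $f(z)=1/(z+i)$ maps $\Ha$ into $\Ha^-$, extends analytically to $\infty$ with $f(z)=\tfrac1z-\tfrac{i}{z^2}-\tfrac1{z^3}+\cdots$, and is the Cauchy transform of the standard Cauchy distribution, which has full support. The paper uses the equivalent assertion $\overline{f(z)}=f(\overline z)$ near $\infty$ without further comment, so both proofs rest on the same implicit assumption that the $c_k$ are real (as they are for a bona fide Cauchy transform, where $c_{k+1}=m_k(\mu)$); you simply make that assumption explicit.
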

\begin{proof}${}$
\begin{itemize}
\item[(a)] 
Let $f=G_\mu$. Then
\[ iyf(iy)=\int_\R \frac{iy}{iy-x} \mu(dx) = \int_\R \frac{y^2}{y^2+x^2}\mu(dx) -i \int_\R \frac{xy}{y^2+x^2} \mu(dx).  \]
The dominated convergence theorem implies   $\lim_{y\to \infty}(iy)f(iy)=1$.\\
Now assume that
  $\lim_{y\to \infty}(iy)f(iy)=1$. We have $-f(z)=a + bz + \int_{\mathbb{R}}\frac{1+z x}{x-z} \gamma(dx)$
for a non-negative  measure $\gamma$ on $\R$ and some $a\in \R, b\geq 0$. Next we write 
\[ \frac{1+z x}{x-z}  = \left(\frac1{x-z}-\frac{x}{1+x^2} \right)(1+x^2).\] 
Put $\mu(dx)=(1+x^2)\gamma(dx)$. Then $\mu$ is a finite, non-negative measure and we have 
\begin{equation}\label{sunny1} f(z) = -a - bz - \int_\R\frac{1}{x-z} \mu(dx) + \int_\R  \frac{x}{1+x^2}  \mu(dx). \end{equation}
Now we calculate $iyf(iy)$:
 \begin{equation}\label{sunny0}
\begin{aligned}&& iyf(iy) = -iya + by^2 - \int_\R\frac{iy}{x-iy} \mu(dx) + \int_\R  \frac{iyx}{1+x^2}  \mu(dx)
\\
&=&by^2 + \int_\R\frac{y^2}{x^2+y^2} \mu(dx) + i\left(-ya +\int_\R\frac{yx}{x^2+y^2} \mu(dx)+\int_\R  \frac{yx}{1+x^2} \mu(dx)\right).
\end{aligned}
\end{equation}
We know that $iyf(iy)$ is bounded as $y\to\infty$. Hence $\Im(iyf(iy))/y\to 0$, which gives us 
\begin{equation}\label{sunny2}-a +\int_\R\frac{x}{x^2+y^2} \mu(dx)+\int_\R  \frac{x}{1+x^2} \mu(dx)\to 
-a + \int_\R  \frac{x}{1+x^2} \mu(dx)=0,\quad y\to\infty.\end{equation}
Furthermore, $\Re(iyf(iy))/y^2\to 0$, which yields
\begin{equation}\label{sunny3} b + \int_\R\frac{1}{x^2+y^2} \mu(dx) \to b =0,\quad y\to\infty.\end{equation}
Equation \eqref{sunny1} together with \eqref{sunny2} and \eqref{sunny3} give
\[ f(z) = \int_\R\frac{1}{z-x} \mu(dx).\]
It remains to show that $\mu$ is a probability measure, i.e.\ $\mu(\R)=1$:\\
We have 
\[iyf(iy) = \int_\R\frac{y^2}{x^2+y^2} \mu(dx) - i\int_\R\frac{yx}{x^2+y^2} \mu(dx) =
 \int_\R\frac{1}{1+(x/y)^2} \mu(dx) - i\int_\R\frac{yx}{x^2+y^2} \mu(dx).\]
Consider the integrand of the imaginary part. We have $\frac{|yx|}{x^2+y^2}\leq \frac{\sqrt{x^2+y^2} |x|}{x^2+y^2}= \frac{|x|}{\sqrt{x^2+y^2}}$.
The dominated convergence theorem implies $\Im (iyf(iy))\to 0$ as $y\to\infty$. Because $iyf(iy)\to 1$, we must have $\Re(iyf(iy))\to 1$. Hence 
\[\int_\R\frac{1}{1+(x/y)^2} \mu(dx)\to \int_\R 1 \mu(dx) = \mu(\R) = 1.\]

\item[(b)]  If $f=G_\mu$, then we obtain the Laurent expansion by \eqref{Cauchy_moments}. 
Conversely, if $f$ extends analytically to $\infty$ with Laurent expansion $f(z)=1/z+...$, then $\lim_{y\to \infty}(iy)f(iy)=1$ and 
(a) implies that $f=G_\mu$ for some $\mu\in\mathcal{P}(\R)$. As $\overline{f(z)}=f(\overline{z})$ and $f$ extends analytically to $\hat{\C}\setminus[-M,M]$ for some $M>0$, we see that $f(x)\in \R$ for all $x\in \R\setminus [-M,M]$. The Stieltjes inversion formula implies that $\mu\in \mathcal{P}_c(\R)$.
\end{itemize}
\end{proof}

\begin{remark}\label{sunny00}
Let $f:\Ha\to\Ha\cup \R$ with Nevanlinna triple $(a,b,\gamma)$. 
Equation \eqref{sunny0} shows that 
\begin{equation}\label{Nevanlinna_b}
b = \lim_{y\to \infty}\frac{f(iy)}{iy}.
\end{equation}
As $\frac{1+ix}{x-i}=i$ for all $x\in\R$, we have 
$f(i)=a+ib+i\gamma(\R)$ and thus 
\[a = \Re(f(i)).\]
Finally, we can recover $\gamma$ from \eqref{sunny1} via the Stieltjes inversion formula. Let  
$\mu(dx)=(1+x^2)\gamma(dx)$. Then
\begin{align}
\frac{1}{2}\mu(\{a\})+\frac{1}{2}\mu(\{b\})+\mu((a,b)) &= \frac{1}{\pi} \lim_{y\downarrow0}\int_a^b \Im(F(x+i y))\, {\rm d}x, \label{Stieltjes3} \\
\mu(\{a\})&=-\lim_{y\downarrow0} iy F(a+iy). \label{eq:atom3}
\end{align}

\end{remark}

\begin{corollary}\label{cor_nevanlinna2}Let $f:\Ha\to \Ha\cup \R$ be holomorphic with Nevanlinna triple $(a,b,\gamma)$. Then 
\[\text{$\Im(f(z))\geq \Im(z)$ for all $z\in\Ha$ if and only if $b\geq 1$.}\]
Furthermore, if $b\geq 1$ and $\Im(f(z_0))=\Im(z_0)$ for some $z_0\in \Ha$, then 
$f(z)=z+a$ for some $a\in\R$.
\end{corollary}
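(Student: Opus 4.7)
The central idea is to extract $\Im(f(z))$ directly from the Nevanlinna representation \eqref{nevanlinna}. Writing $z=u+iv$ with $v=\Im(z)>0$ and $x\in\R$, a short algebraic manipulation (multiplying numerator and denominator by $x-\bar z$) gives
\[
\Im\!\left(\frac{1+zx}{x-z}\right)=\frac{v(1+x^2)}{|x-z|^2}.
\]
Combined with $\Im(a+bz)=b\,v$, this yields the clean identity
\[
\frac{\Im f(z)}{\Im z}=b+\int_{\R}\frac{1+x^2}{|x-z|^2}\,\gamma(dx),\qquad z\in\Ha.
\]
Since the integral is non-negative, the ``if'' direction of the first statement is immediate from $b\geq 1$.

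For the ``only if'' direction, specialize to $z=iy$, so the identity becomes $\Im f(iy)/y=b+\int_\R \tfrac{1+x^2}{x^2+y^2}\gamma(dx)$. Note that $\gamma$ is necessarily a finite measure (otherwise the integral defining $f$ would diverge, as $\lvert 1+ix\rvert/\lvert x-i\rvert=1$), so the integrand is bounded by $1$ for $y\geq 1$ and tends pointwise to $0$ as $y\to\infty$. Dominated convergence (or equivalently the already-noted identity \eqref{Nevanlinna_b}) gives $\Im f(iy)/y\to b$; if $\Im f(z)\geq \Im z$ everywhere then the left side is $\geq 1$, forcing $b\geq 1$.

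For the equality case, suppose $b\geq 1$ and $\Im f(z_0)=\Im z_0$ for some $z_0\in\Ha$. Then
\[
1=\frac{\Im f(z_0)}{\Im z_0}=b+\int_{\R}\frac{1+x^2}{|x-z_0|^2}\,\gamma(dx)\;\geq\;b+0\;\geq\;1,
\]
so equality must hold throughout: $b=1$, and the strictly positive integrand forces the non-negative measure $\gamma$ to be the zero measure. Plugging $b=1$ and $\gamma\equiv 0$ back into \eqref{nevanlinna} gives $f(z)=z+a$, as required.

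I do not anticipate a real obstacle here: the whole proof reduces to the one imaginary-part calculation above, after which both directions and the rigidity statement follow by inspection. The only place requiring mild care is confirming that $\gamma$ is finite so that the dominated convergence argument for $y\to\infty$ is justified; this is built into the representation \eqref{nevanlinna} itself.
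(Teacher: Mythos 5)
Your proof is correct, and the key imaginary-part computation checks out:
\[
\Im\!\left(\frac{1+zx}{x-z}\right)=\frac{v(1+x^2)}{|x-z|^2},\qquad z=u+iv,\ v>0,
\]
giving the identity $\Im f(z)/\Im z=b+\int_\R\frac{1+x^2}{|x-z|^2}\gamma(dx)$, from which everything falls out. Your route is, however, genuinely different from the paper's. The paper never expands the imaginary part; it instead observes that $f(z)-z$ has Nevanlinna triple $(a,b-1,\gamma)$, so the condition $\Im f(z)\geq\Im z$ is equivalent to $f(z)-z$ being a self-map of $\Ha\cup\R$, which happens exactly when $b-1\geq 0$. (The converse direction is extracted from the limit formula \eqref{Nevanlinna_b} applied to $g=f-z$, giving $b=1+b'$ with $b'\geq 0$.) For the rigidity part, the paper applies the open mapping theorem to $g=f(z)-z$: since $g(z_0)\in\R$ sits on the boundary of the image domain $\Ha\cup\R$, $g$ must be constant. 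Your argument avoids the open mapping theorem entirely: at $z_0$ the identity forces $b=1$ and the integral term to vanish, and since the integrand is strictly positive, $\gamma\equiv 0$, so $f(z)=a+z$. This is a bit more elementary and self-contained, at the cost of one extra algebraic computation; the paper's version is shorter and reuses machinery (the limit formula, open mapping) already set up elsewhere. Both are sound.
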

\begin{proof}If $b\geq 1$, then the function $f(z)-z$ is of the form \eqref{nevanlinna} with Nevanlinna triple $(a,b-1,\gamma)$. Thus $f(z)-z$ maps $\Ha$ into $\Ha\cup \R$. Conversely, if $f(z)=z+g(z)$ for a holomorphic function $g:\Ha\to\Ha\cup \R$ with Nevanlinna triple $(a',b',\gamma')$, then $b=1+b'$ by \eqref{Nevanlinna_b}, thus $b\geq 1$.\\
Now assume that $b\geq 1$ and $\Im(f(z_0))=\Im(z_0)$ for some $z_0\in \Ha$. Then 
$g(z)=f(z)-z$ is a holomorphic mapping into $\Ha\cup \R$ with $g(z_0)=a\in\R$. The open mapping theorem 
implies that $g$ is constant.
\end{proof}

\begin{example}The translations $z\mapsto z+a$ are automorphisms of $\Ha$. Denote by $\operatorname{Aut}(\Ha)$ the set of all holomorphic automorphisms of $\Ha$. 
All these mappings are M\"obius transforms and we have the nice characterization (see, e.g.\ \cite[Theorem 13.17]{BN10})
\[ \operatorname{Aut}(\Ha)  = \left\{z\mapsto \frac{az+b}{cz+d}\,|\, a,b,c,d\in\R, ad-bc>0\right\}.\]
\hfill $\blacksquare$
\end{example}

\section{Convergence}

Let $f:\Ha\to\Ha\cup \R$ be holomorphic with Nevanlinna triple $(a,b,\gamma)$. We can think of $\R$ being embedded in the 
circle $\hat{\R}=\R\cup \{\infty\}$ in the Riemann sphere and $\gamma$ extends to a Borel measure on $\hat{\R}$. As  $\lim_{x\to\infty}\frac{1+xz}{x-z}=z$ for every $z\in \Ha$, we can write 
\[ f(z)=a + bz + \int_\R \frac{1+xz}{x-z}\gamma(dx)=a + \int_{\hat{\R}} \frac{1+xz}{x-z}\hat{\gamma}(dx),\]  
where $\hat{\gamma} = \gamma + b\delta_\infty$ is a finite non-negative Borel measure on $\hat{\R}$. Let us call $(a,\hat{\gamma})$ the Nevanlinna pair of $f$.

\begin{theorem}\label{lemmaconvergence0}
Let $f,f_1,f_2,...$ be holomorphic mappings from $\Ha$ into $\Ha\cup \R$ with Nevanlinna pairs $(a,\hat{\gamma}), (a_1,\hat{\gamma}_1),...$ Then the following statements are equivalent:
\begin{enumerate}[\rm(1)]
\item $f_n\to f$ converges locally uniformly in $\Ha$.
\item $a_n\to a$ and $\hat{\gamma}_n$ converges weakly to $\hat{\gamma}$.
\end{enumerate}
\end{theorem}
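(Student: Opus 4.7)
The plan is to work throughout on the compactified line $\hat{\mathbb{R}}$ and exploit continuity of the kernel
\[
K_z(x) := \frac{1+xz}{x-z}, \qquad x\in\hat{\mathbb{R}},\; z\in\mathbb{H},
\]
which extends continuously to $\hat{\mathbb{R}}$ by $K_z(\infty)=z$ and is bounded on $\hat{\mathbb{R}}$ uniformly for $z$ in compact subsets of $\mathbb{H}$. A central observation, already computed in the proof of Theorem \ref{cor_nevanlinna1}, is that $\Im K_i(x)=1$ for every $x\in\hat{\mathbb{R}}$, so that $f(i)=a+i\hat{\gamma}(\hat{\mathbb{R}})$. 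Thus $a$ is read off from $\Re f(i)$ and the total mass from $\Im f(i)$, giving direct control of the Nevanlinna data in terms of a single evaluation.

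For the implication (2)$\Rightarrow$(1), I would fix $z\in\mathbb{H}$ and use that $K_z$ is continuous and bounded on the compact space $\hat{\mathbb{R}}$: weak convergence $\hat{\gamma}_n\to\hat{\gamma}$ then yields $\int K_z\,d\hat{\gamma}_n\to\int K_z\,d\hat{\gamma}$, and combined with $a_n\to a$ this gives pointwise convergence $f_n(z)\to f(z)$. To upgrade to locally uniform convergence, I would check that $(f_n)$ is locally uniformly bounded: testing weak convergence against the constant function $1$ gives $\hat{\gamma}_n(\hat{\mathbb{R}})\to\hat{\gamma}(\hat{\mathbb{R}})$, hence the masses are bounded, while $\sup_{x\in\hat{\mathbb{R}}}|K_z(x)|$ is bounded for $z$ in any compact subset of $\mathbb{H}$ (since $|x-z|$ is bounded below there). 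Vitali's (or Montel's) theorem then converts pointwise convergence of this normal family into locally uniform convergence.

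For (1)$\Rightarrow$(2), local uniform convergence at the single point $z=i$ gives $f_n(i)\to f(i)$, whence $a_n=\Re f_n(i)\to\Re f(i)=a$ and $\hat{\gamma}_n(\hat{\mathbb{R}})=\Im f_n(i)\to\Im f(i)=\hat{\gamma}(\hat{\mathbb{R}})$, so the total masses are bounded. Since $\hat{\mathbb{R}}$ is compact, Prokhorov's theorem (equivalently Helly's selection principle) provides, from any subsequence of $(\hat{\gamma}_n)$, a further subsequence $(\hat{\gamma}_{n_k})$ converging weakly to some finite non-negative measure $\tilde{\gamma}$ on $\hat{\mathbb{R}}$. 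Applying the direction (2)$\Rightarrow$(1) already established to this subsequence shows that $f_{n_k}$ converges locally uniformly to the holomorphic function with Nevanlinna pair $(a,\tilde{\gamma})$; since $f_{n_k}\to f$ as well, that function equals $f$. Uniqueness of the Nevanlinna representation (Theorem \ref{thm_nevanlinna} together with Remark \ref{sunny00}) then forces $\tilde{\gamma}=\hat{\gamma}$, and since every subsequence has a further subsequence with the same limit, the entire sequence $\hat{\gamma}_n$ converges weakly to $\hat{\gamma}$.

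The main obstacle is conceptual rather than computational: one must make the passage to $\hat{\mathbb{R}}$ really work, so that the point at $\infty$ (which encodes the coefficient $b$) is treated on equal footing with the finite part. Once the kernel $K_z$ is recognized as continuous on $\hat{\mathbb{R}}$ with $K_z(\infty)=z$, everything reduces to standard weak convergence on a compact metric space together with the uniqueness of the Nevanlinna pair; the normal-families step is routine thanks to the explicit bound on $K_z$.
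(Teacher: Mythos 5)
Your proposal is correct and follows essentially the same scheme as the paper's proof: weak convergence and continuity of the kernel $K_z$ on the compact space $\hat{\R}$ give pointwise convergence, normality plus Vitali upgrades to locally uniform convergence, evaluation at $z=i$ controls $a_n$ and the total masses, and Helly's selection theorem together with uniqueness of the Nevanlinna pair and a subsequence argument finishes the reverse direction. The only difference is cosmetic: you spell out the local uniform boundedness of $(f_n)$ explicitly via bounds on $|K_z|$ and the masses before invoking Montel/Vitali, whereas the paper simply cites that holomorphic maps $\Ha\to\Ha\cup\R$ form a normal family; and you phrase the identification of the subsequential limit via the already-established implication (2)$\Rightarrow$(1) plus uniqueness of the Nevanlinna representation, whereas the paper appeals directly to the Stieltjes inversion formula, which is the same mechanism.
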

\begin{proof}
Assume that $a_n\to a$ and $\hat{\gamma}_n \to \hat{\gamma}$. The definition of weak convergence implies that 
$f_n(z)\to f(z)$ for all $z\in\Ha$. As the set of all holomorphic mappings from $\Ha$ into $\Ha\cup \R$ forms a normal family, pointwise convergence already implies locally uniform convergence by Vitali's theorem, see \cite[p.9]{Dur83}.\\

Now assume that $f_n\to f$ locally uniformly. Then, by Remark \ref{sunny00}, $f_n(i)=a_n+i\hat{\gamma}_n(\hat{\R})\to a+i\hat{\gamma}(\hat{\R})=f(i)$, and we see that $a_n\to a$ and $\hat{\gamma}_n(\hat{\R})\to \hat{\gamma}(\hat{\R})$. 
Now we use Helly's selection theorem, see \cite[Theorem 25.9]{bill}, which implies that there is a subsequence $(\hat{\gamma}_{n_k})_k$ which converges weakly to 
some finite, non-negative measure $\delta$ on $\hat{\R}$. As $f_{n_k}\to f$, we conclude that $\delta = \hat{\gamma}$ by the Stieltjes inversion formula. Thus every convergent subsequence of $(\hat{\gamma}_n)_n$ has the same limit $\hat{\gamma}$ and we conclude that 
$\hat{\gamma}_n\to \hat{\gamma}$ as $n\to\infty$.
\end{proof}

The same proof applies to Cauchy transforms.

\begin{lemma}\label{lemmaconvergence}
Let $\mu$ and $\mu_1,\mu_2...$ be probability measures on $\R$.
Then the following statements are equivalent:
\begin{enumerate}[\rm(1)]
\item $G_{\mu_n}$ converges to $G_\mu$ locally uniformly in $\Ha$.
\item $\mu_n$ converges weakly to $\mu$. 
\end{enumerate}
\end{lemma}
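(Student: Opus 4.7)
The plan is to follow the same strategy as in Theorem \ref{lemmaconvergence0}, exploiting that Cauchy transforms are a special case of the Nevanlinna class. For the direction $(2)\Rightarrow(1)$, I would fix $z\in\Ha$ and note that $x\mapsto 1/(z-x)$ is a bounded continuous function on $\R$, so weak convergence $\mu_n\to\mu$ immediately yields pointwise convergence $G_{\mu_n}(z)\to G_\mu(z)$. Since $|G_{\mu_n}(z)|\leq 1/\Im(z)$ uniformly in $n$, the family $\{G_{\mu_n}\}$ is locally uniformly bounded on $\Ha$, hence normal; Vitali's theorem then upgrades pointwise convergence to locally uniform convergence on $\Ha$.

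For $(1)\Rightarrow(2)$, I would argue by subsequences and Helly's selection theorem. Given any subsequence of $(\mu_n)$, extract a further subsequence $(\mu_{n_k})$ whose distribution functions converge at continuity points of a non-decreasing right-continuous limit; this corresponds to a vague limit $\nu$ on $\R$ with $\nu(\R)\leq 1$. Since $x\mapsto 1/(z-x)$ is bounded, continuous, and vanishes at $\pm\infty$, vague convergence gives $G_{\mu_{n_k}}(z)\to G_\nu(z)$ for each $z\in\Ha$, so by hypothesis $G_\nu = G_\mu$ on $\Ha$. The Stieltjes--Perron inversion formula (Theorem \ref{sp}), which applies equally to finite non-negative measures, then forces $\nu = \mu$ as measures on $\R$ once we know $\nu$ is a probability measure.

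The main obstacle is precisely to rule out mass escaping to $\infty$, i.e., to establish $\nu(\R)=1$. For this I would invoke the criterion from Theorem \ref{cor_nevanlinna1}(a): since $G_\mu$ is the Cauchy transform of a probability measure we have $\lim_{y\to\infty}iy\, G_\mu(iy)=1$, while a direct computation using the dominated convergence theorem gives $\lim_{y\to\infty}iy\, G_\nu(iy)=\nu(\R)$ (the integrand $\frac{y^2}{x^2+y^2}$ is bounded by $1$ and tends to $1$ pointwise, and the imaginary part tends to $0$ as in the proof of Theorem \ref{cor_nevanlinna1}). Since $G_\nu=G_\mu$, this forces $\nu(\R)=1$, so $\nu$ is a probability measure and equals $\mu$. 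As every subsequence of $(\mu_n)$ admits a further subsequence converging weakly to the same limit $\mu$, the full sequence converges weakly to $\mu$.
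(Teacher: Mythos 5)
Your proof is correct and follows the same broad strategy as the paper — Vitali's theorem for $(2)\Rightarrow(1)$, Helly's selection theorem plus Stieltjes inversion for $(1)\Rightarrow(2)$. The one genuine difference is how you handle the possible escape of mass to infinity. The paper (in Theorem~\ref{lemmaconvergence0}, to which Lemma~\ref{lemmaconvergence} refers) compactifies by passing to the Nevanlinna pair on $\hat\R = \R\cup\{\infty\}$, so Helly on a compact space automatically produces a weak limit with no leakage; the special case of Cauchy transforms then corresponds to the Nevanlinna pair being $(a,\hat\gamma)$ with $\hat\gamma(\{\infty\})=0$. You instead stay on $\R$, take a vague (sub-subsequential) limit $\nu$ with $\nu(\R)\le 1$, and rule out mass loss directly via the normalization $\lim_{y\to\infty}iy\,G(iy)$, which recovers the total mass and forces $\nu(\R)=1$. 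This is a perfectly good alternative that avoids the Nevanlinna-pair machinery entirely and is in that sense more elementary.

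One small logical tangle worth smoothing out: the Stieltjes--Perron inversion formula recovers a finite non-negative measure from its Cauchy transform unconditionally, so $G_\nu = G_\mu$ already gives $\nu = \mu$ without needing to know $\nu(\R)=1$ in advance. The place where $\nu(\R)=1$ is actually needed is in the final step, to upgrade vague convergence $\mu_{n_k}\to\nu$ to weak convergence: for probability measures $\mu_{n_k}$, vague convergence to a measure $\nu$ with $\nu(\R)=1$ does imply weak convergence, and that is what lets you conclude. So either identify $\nu = \mu$ first by Stieltjes inversion and read off $\nu(\R)=1$, or establish $\nu(\R)=1$ first via $iy\,G(iy)$ and then invoke Stieltjes inversion — both orders work, but the sentence ``forces $\nu=\mu$ once we know $\nu$ is a probability measure'' conflates the two and should be untangled.
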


\section{Discrete semigroups}
 
Consider a (compositional) semigroup $(F_n)_{n\in \N}$ of holomorphic self-mappings of $\Ha$, i.e.\ $F_n = F^{n}=F\circ...\circ F$ with $F=F_1$. Such semigroups can be classified by looking at the behavior of $F^n$ as $n\to \infty$. This classification is usually stated for the unit disc $\D$, but we can simply pass from $\Ha$ to $\D$ via the Cayley transform $C:\Ha\to\D,$ $C(z)=\frac{z-i}{z+i}$. Note that 
$C\circ F_n \circ C^{-1} = (C\circ F \circ C^{-1})^n$ is a semigroup on $\D$.\\

To state the result, we need two further notions.\\

For a function $f:\D\to \C$ and $p\in\partial\D$, the existence of the non-tangential limit $\angle \lim_{z\to p}f(z)=c$ means that 
$\lim_{n\to \infty}f(z_n)=c$ for every sequence $(z_n)\subset \D$ which converges to $p$ within a sector, i.e.\ the angle of 
the vector $p-z_n$ must be in $(\arg p-\eps, \arg p + \eps)$ for some $\eps\in(0,\pi/2)$ and all $n$.

A horodisc in $\D$ at $p=1$ is the image of a set of the form $\{z\in\Ha\,|\, \Im(z)> c\}$, $c>0$, under the Cayley transform. 
By rotating, we define horodiscs in $\D$ at every other $p\in \partial \D$.

 \begin{figure}[h]
 \begin{center}
 \includegraphics[width=0.35\textwidth]{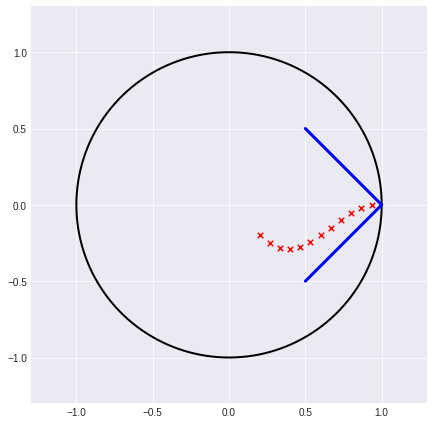}
 \includegraphics[width=0.35\textwidth]{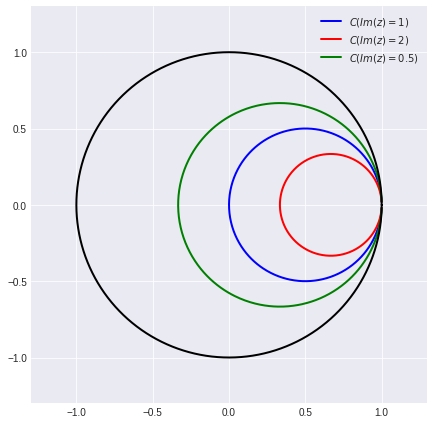}
 \caption{Left: non-tangential approach to $1$. Right: horodiscs at $1$.}
 \end{center}
 \end{figure}

\begin{theorem}\label{grand}
Let $F:\D\to \D$ be a holomorphic self-map which is not an elliptic automorphism, i.e.\ 
$F$ is not conjugated to a rotation $z\mapsto e^{i\alpha}z$, $\alpha \in \R$.
\begin{itemize}
\item[(a)] If $F$ has a fixed point $p\in \D$, then $F^n\to p$ locally uniformly as $n\to\infty$ and $|F'(p)|<1$.
\item[(b)] If $F$ has no fixed points in $\D$, then there is a point $p\in\partial\D$ such that $F^n\to p$ as $n\to\infty$. 
Furthermore, $\angle \lim_{z\to p}F(z)=p$ and $\angle \lim_{z\to p} F'(z)\in (0,1]$.
\end{itemize}
The point $p$ in (a) or (b) is called the \emph{Denjoy-Wolff point} of $F$.
\begin{itemize}
\item[(c)] Conversely, if there exists $p\in \partial \D$ with $\angle \lim_{z\to p}F(z)=p$ and $\angle \lim_{z\to p} F'(z)\in (0,1]$, then $p$ is the Denjoy-Wolff point of $F$.
\item[(d)] If $F$ has no fixed points in $\D$, then $p\in \partial \D$ is the Denjoy-Wolff point of $F$ if and only if 
$F$ maps every horodisc at $p$ into itself.
\end{itemize}
\end{theorem}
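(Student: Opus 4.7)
The plan is to transfer the entire problem to $\D$ via the Cayley transform (as stated) and invoke the classical Denjoy-Wolff theorem. For part (a), I would conjugate $F$ by an automorphism of $\D$ that moves the interior fixed point $p$ to the origin; the resulting map $\tilde F$ fixes $0$ and is not a rotation (otherwise $F$ would be an elliptic automorphism), so Schwarz's lemma gives $|\tilde F'(0)|<1$. Since $\tilde F$ strictly contracts near $0$ in Euclidean distance and $|\tilde F(z)|\leq |z|$ on $\D$ by Schwarz, iteration forces $\tilde F^n\to 0$ locally uniformly on $\D$; translating back gives $F^n\to p$ locally uniformly, with $|F'(p)|=|\tilde F'(0)|<1$.

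For (b), the core tool is Wolff's invariance lemma. I would prove it by approximation: for $0<r<1$ the map $rF$ sends $\overline{\D}$ to a compact subset of $\D$, so Brouwer's theorem produces a fixed point $z_r\in\D$. Along a subsequence $r_n\to 1$ the points $z_{r_n}$ cluster at some $p\in\overline{\D}$, and the assumption that $F$ has no interior fixed point forces $p\in\partial\D$ (an interior cluster point would be a fixed point of $F$ by continuity). Plugging $z=z_{r_n}$ and an arbitrary $w\in\D$ into the Schwarz-Pick inequality and re-expressing it in terms of the Julia quotient $(1-|F(w)|^2)/|p-F(w)|^2$ against $(1-|w|^2)/|p-w|^2$, and then letting $r_n\to 1$, yields invariance of every horodisc at $p$ under $F$. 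Horocycle invariance, together with the fact that any subsequential locally uniform limit of $\{F^n\}$ would by Hurwitz be either constant or a self-map of $\D$ without interior fixed points mapping into arbitrarily small horodiscs, forces every such limit to be the constant $p$; hence $F^n\to p$ locally uniformly. The non-tangential statements $\angle\lim_{z\to p} F(z)=p$ and $\angle\lim_{z\to p} F'(z)\in(0,1]$ then follow from the Julia-Wolff-Carath\'eodory theorem applied at $p$, noting that the derivative limit cannot exceed $1$ (by Julia's inequality) and must be positive (by univalence of $F$ on each horodisc).

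Parts (c) and (d) are then short corollaries. For (c), if a boundary point $p$ satisfies $\angle\lim F(z)=p$ and $\angle\lim F'(z)=\alpha\in(0,1]$, the Julia-Carath\'eodory theorem produces a family of horodiscs at $p$ invariant under $F$; this prevents the existence of an interior fixed point (since by (a) the iterates would otherwise converge to that interior point, contradicting that they are trapped in shrinking horodiscs at $p$) and, by the uniqueness of the boundary point produced in (b), identifies $p$ as the Denjoy-Wolff point. For (d), one direction is exactly Wolff's lemma from (b); conversely, if $q\in\partial\D$ has all horodiscs at $q$ invariant under $F$, then every orbit is trapped in horodiscs at $q$ of arbitrarily small Euclidean diameter, so $F^n\to q$, and by uniqueness of the locally uniform limit $q=p$.

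The main obstacle is Wolff's invariance lemma in (b): the Schwarz-Pick inequality degenerates as $z_{r_n}\to p\in\partial\D$, and one must rewrite it in terms of the Julia quotient before passing to the limit, otherwise only a trivial estimate survives. Once this lemma is established, the convergence $F^n\to p$ reduces to a normal-family argument, and parts (c) and (d) are essentially bookkeeping against the Julia-Wolff-Carath\'eodory theorem.
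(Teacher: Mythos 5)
The paper does not prove this theorem itself; it simply cites \cite[The Grand Iteration Theorem]{shapiro} for parts (a)--(c) and Wolff's theorem in \cite[Section 5.3]{shapiro} for part (d). What you have written is a self-contained sketch of the standard classical argument, which is exactly the proof Shapiro gives: Schwarz's lemma for the interior case, the Wolff point via Brouwer approximation with the fixed points $z_r$ of $rF$ and the Julia-quotient limit for horodisc invariance, and Julia--Wolff--Carath\'eodory for the boundary derivative statements and the converse. So your route is mathematically the same as the one the paper points to, just unpacked; the difference is only that the paper delegates the details to the reference.

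Two minor corrections. In (a), the chain ``$|\tilde F(z)|\le|z|$ and strict contraction near $0$, hence $\tilde F^n\to 0$ locally uniformly'' hides a step: one needs that on each compact disc $\overline{B_r}\subset\D$ the ratio $|\tilde F(z)|/|z|$ (extended by $|\tilde F'(0)|$ at the origin) is bounded by some $\lambda(r)<1$, which follows from compactness and the strict Schwarz inequality; then $|\tilde F^n(z)|\le\lambda(r)^n|z|$ and you are done. In (b), the claim that $\angle\lim_{z\to p}F'(z)>0$ follows ``by univalence of $F$ on each horodisc'' is misattributed --- $F$ need not be univalent anywhere, and univalence plays no role here. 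The positivity is automatic from the Julia--Wolff--Carath\'eodory theorem itself: the angular derivative equals $\liminf_{z\to p}\frac{1-|F(z)|}{1-|z|}$, which is a ratio of positive quantities and, once it is shown to be finite (which follows from horodisc invariance), is strictly positive; the upper bound $\le 1$ is what you get from Wolff's lemma. Everything else is sound.
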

\begin{proof}The statements (a)-(c) can be found in \cite[The Grand Iteration Theorem]{shapiro}. Statement (d) follows from Wolff's theorem, see \cite[Section 5.3]{shapiro}.
\end{proof}

\begin{corollary}\label{deefdg}Let $f:\Ha\to\Ha\cup \R$ be holomorphic with Nevanlinna triple $(a,b,\gamma)$ and assume that $f$ is not the identity.
Then $\infty$ is the Denjoy-Wolff point of $f$, i.e.\  $f^n \to \infty$ as $n\to\infty$, if and only if $b\geq 1$.
\end{corollary}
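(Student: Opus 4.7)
The plan is to reduce the claim to Theorem \ref{grand}(d) via the Cayley transform $C:\Ha\to\D$, $C(z)=(z-i)/(z+i)$, which sends $\infty$ to the boundary point $1\in\partial\D$. Setting $g:=C\circ f\circ C^{-1}$, the horodiscs at $\infty$ in $\Ha$ (namely the half-planes $\{z\in\Ha\,|\,\Im z>c\}$, $c>0$) correspond under $C$ to the horodiscs at $1$ in $\D$. By Corollary \ref{cor_nevanlinna2}, $b\geq 1$ is equivalent to $\Im f(z)\geq \Im z$ on $\Ha$, which is in turn equivalent to $f$ mapping every half-plane $\{\Im z>c\}$ into itself, i.e.\ to $g$ mapping every horodisc at $1$ into itself.

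The main obstacle is that Theorem \ref{grand}(d) presupposes that the self-map has no interior fixed point, so I must deal with this separately. For the direction $b\geq 1 \Rightarrow f^n\to\infty$, I would first rule out an interior fixed point of $f$: if $f(z_0)=z_0$ for some $z_0\in\Ha$, then $\Im f(z_0)=\Im z_0$, and the equality clause of Corollary \ref{cor_nevanlinna2} forces $f(z)=z+a$ for some $a\in\R$, whence $f(z_0)=z_0$ gives $a=0$ and $f$ is the identity, contradicting the hypothesis. With no interior fixed point $g$ is in particular not an elliptic automorphism, so Theorem \ref{grand}(d) applied to $g$ on $\D$ yields that $1$ is the Denjoy-Wolff point of $g$; hence $g^n\to 1$ locally uniformly on $\D$, which transfers to $f^n\to\infty$ locally uniformly on $\Ha$.

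For the converse $f^n\to\infty \Rightarrow b\geq 1$, the absence of interior fixed points is automatic, since any $z_0\in\Ha$ with $f(z_0)=z_0$ would satisfy $f^n(z_0)=z_0 \not\to\infty$. Theorem \ref{grand}(d), applied in the opposite direction, then shows that $g$ maps every horodisc at $1$ into itself, so $f$ maps every $\{\Im z>c\}$ into itself; equivalently $\Im f(z)\geq \Im z$ throughout $\Ha$, and Corollary \ref{cor_nevanlinna2} delivers $b\geq 1$.
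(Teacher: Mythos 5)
Your proof is correct and follows essentially the same route as the paper: both arguments reduce the statement to Theorem \ref{grand}(d) together with Corollary \ref{cor_nevanlinna2}, using the equivalence (for $b\geq 1$) between the Nevanlinna condition, the growth of the imaginary part, and the invariance of the horodiscs at $\infty$. You are a bit more explicit than the paper about two points that the paper leaves implicit — the Cayley-transform transfer to $\D$ (since Theorem \ref{grand} is stated on the disc) and the verification, in the direction $f^n\to\infty\Rightarrow b\geq 1$, that $f$ has no interior fixed point before invoking part (d) — but the substance is identical.
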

\begin{proof}
If $\infty$ is the Denjoy-Wolff point of $f$, then Theorem \ref{grand} (d) implies that $\Im(f(z))\geq \Im(z)$ for all $z\in \Ha$, 
which implies $b\geq 1$ by Corollary \ref{cor_nevanlinna2}. \\
Conversely, assume that $b\geq 1$. Then Corollary \ref{cor_nevanlinna2} implies $\Im(f(z))\geq \Im(z)$ for all $z\in \Ha$. 
If $f$ has a fixed point $z_0$, then $f(z)=z$ for all $z\in\Ha$ by Corollary \ref{cor_nevanlinna2}. But as $f$ is not the identity, $f$ has no fixed points and Theorem \ref{grand} (d) implies that $\infty$ is the Denjoy-Wolff point of $f$.
\end{proof}

\begin{remark}\label{parabolic}In the extreme case $b=1$, the Denjoy-Wolff point $\infty$ of $f$ is also called \emph{parabolic}.
\end{remark}

\section{Continuous semigroups}\label{continuous_semigroup}

\begin{definition}Let $D\subset \C$ be a domain. A \emph{continuous semigroup} on $D$ is a family 
 $(F_t)_{t\geq0}$ of holomorphic self-mappings $F_t:D\to D$ such that 
\begin{itemize}
	\item[(1)] $F_0(z)=z$ for all $z\in D$,
	\item[(2)] $t\mapsto F_t$ is continuous with respect to locally uniform convergence,
	\item[(3)] $F_{t+s}=F_t\circ F_s$ for all $s,t \geq 0$.
\end{itemize}
\end{definition}

We will only need the case where $D\subsetneq \C$ is simply connected. Note that if $F_t$ is a continuous semigroup on $D$ and $C:E\to D$ is a biholomorphic mapping, then  $\hat{F}_t=C^{-1}\circ F_t \circ C$ is a continuous semigroup on $E$.
So, due to the Riemann mapping theorem, we may assume that $D=\D$.  \\ (In fact, all other domains, even all other Riemann surfaces, lead to rather boring continuous semigroups, see \cite[Section 1.4.3]{Aba89}.)\\

Remarkably, the continuity of $t\mapsto F_t$, assumption (2), makes a semigroup automatically differentiable.

\begin{theorem}\label{dsemi}Let  $(F_t)_{t\geq0}$ be a continuous semigroup on $\D$.
Then the locally uniform
limit \[\lim_{t\searrow 0}\frac{F_t(z)-z}{t}=:G(z)\]
exists and the function $G:\D\to\C$, called the \emph{infinitesimal generator} of the semigroup is holomorphic. 
Furthermore, $t\mapsto F_t(z)$ is the unique solution to the initial value problem
\begin{equation}\label{semigropu_init} \frac{\partial}{\partial t}F_t(z) = G(F_t(z)), \quad F_0(z)=z. \end{equation}

\end{theorem}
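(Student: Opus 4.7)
The proof splits into two parts: (i) establish that an infinitesimal generator $G$ exists as a holomorphic function on $\D$, and (ii) deduce the ODE and uniqueness from the semigroup law.

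First, the hypotheses give joint continuity of $(t,z)\mapsto F_t(z)$ and in particular $F_t\to\operatorname{id}$ locally uniformly as $t\downarrow 0$. Moreover the family $\{F_t\}_{t\geq 0}$ is normal by Montel's theorem, since $|F_t|<1$.

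Second, and this is the main obstacle, I need to show that
\[
G(z) \;:=\; \lim_{h \downarrow 0} \frac{F_h(z) - z}{h}
\]
exists locally uniformly on $\D$ and defines a holomorphic function. A direct normality argument fails here because $(F_h - \operatorname{id})/h$ need not be locally bounded as $h\downarrow 0$. My plan is to exploit the common fixed-point structure provided by Theorem \ref{grand}: apply the Denjoy--Wolff theorem to the iterates $(F_1^n)_{n\in\N}=(F_n)_{n\in\N}$ to obtain a Denjoy--Wolff point $p\in\overline{\D}$, and then use the commutation $F_s\circ F_t=F_t\circ F_s$ (immediate from the semigroup law) to conclude that $p$ is the Denjoy--Wolff point of every $F_t$ with $t>0$. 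In the interior case $p\in\D$ with $F_1'(p)\neq 0$, Koenigs' linearization produces a holomorphic $\sigma$ on $\D$ with $\sigma(p)=0$, $\sigma'(p)=1$, and $\sigma(F_t(z))=e^{\alpha t}\sigma(z)$ for a suitable $\alpha\in\C$ with $\operatorname{Re}\alpha\le 0$ (determined by $e^\alpha=F_1'(p)$ and continuity in $t$). The formula $F_t(z)=\sigma^{-1}(e^{\alpha t}\sigma(z))$ is then manifestly smooth in $t$ with generator $G(w)=\alpha\,\sigma(w)/\sigma'(w)$, holomorphic on $\D$ after a standard removable-singularity check at $p$. The boundary case $p\in\partial\D$ is handled by an analogous conjugation to a translation semigroup on a half-plane (Valiron's normal form in the hyperbolic subcase, a parabolic model otherwise), which again yields a holomorphic $G$.

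Third, once $G$ is in hand, the semigroup identity $F_{t+h}=F_h\circ F_t$ gives
\[
\frac{F_{t+h}(z) - F_t(z)}{h} \;=\; \frac{F_h(F_t(z)) - F_t(z)}{h} \;\longrightarrow\; G(F_t(z)) \qquad (h\downarrow 0),
\]
locally uniformly in $z$, so $t\mapsto F_t(z)$ has right derivative $G(F_t(z))$; applying the same argument to $F_t=F_{t-h}\circ F_h$ yields the matching left derivative. Hence $F_t(z)$ solves the Cauchy problem \eqref{semigropu_init}. Uniqueness is Picard--Lindel\"of: $G$ is holomorphic on $\D$, hence locally Lipschitz in $z$, and the flow through any initial point is therefore unique.
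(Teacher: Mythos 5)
Your proposal takes a genuinely different route from the paper, but it has a circularity that needs to be addressed before it can work.

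The paper's proof of Theorem~\ref{dsemi} is a direct hard-analysis argument that makes no use of iteration theory or normal forms. It establishes the quantitative estimate $|F_t(z)-z|\le Mt^{2/3}$ locally uniformly (by comparing $F_t$ with $F_{2t}$ via a mean-value integral and iterating a contraction-type inequality), upgrades this by the Cauchy inequalities to a similar estimate for $F_t'-1$, deduces $|F_{2t}-2F_t+\operatorname{id}|=O(t^{4/3})$, and then recognizes $\sum_n\bigl(2^{n+1}(F_{2^{-n-1}}-\operatorname{id})-2^n(F_{2^{-n}}-\operatorname{id})\bigr)$ as an absolutely convergent telescoping series to produce the generator $G$; the ODE and its integral form then follow by a Riemann-sum argument. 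Crucially, univalence of the $F_t$ is \emph{deduced} afterward (Corollary~\ref{fdsaKOERTZU}) from the uniqueness statement of the ODE.

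Your plan reverses this logic: it tries to build $F_t$ from an explicit normal form $F_t=\sigma^{-1}(e^{\alpha t}\sigma)$, where $\sigma$ is a Koenigs/Valiron linearizer of $F_1$, and then read off $G=\alpha\sigma/\sigma'$. The issue is that you need $\sigma$ to be invertible, which is equivalent to $F_1$ being univalent; and in the paper's logical ordering, univalence of the semigroup elements is a consequence of the theorem you are trying to prove, not a hypothesis. Without knowing injectivity in advance, Koenigs' $\sigma$ is only a semi-conjugacy, the inversion $\sigma^{-1}$ is unavailable, and $\sigma'$ may vanish, so the formula for $G$ breaks down at those zeros. You would either have to prove univalence independently of the ODE (possible, but it is a nontrivial result that your sketch does not supply), or restructure the argument to avoid inverting $\sigma$. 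Two secondary points: (i) you invoke the Denjoy--Wolff theorem for $F_1$, but the statement being proved covers the elliptic-automorphism case as well (e.g.\ $F_t(z)=e^{it}z$, where $F_1$ has no Denjoy--Wolff point), so that case must be treated separately; and (ii) ``a parabolic model otherwise'' is a significant gap --- the boundary-parabolic case is governed by Abel's functional equation and the Baker--Pommerenke/Cowen model theory, which is technically much heavier than the interior and hyperbolic cases and cannot simply be waved at. The second and third parts of your plan (right/left derivative via $F_{t+h}=F_h\circ F_t$, uniqueness via Picard--Lindel\"of) are fine and match what the paper would do once $G$ exists.
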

\begin{proof}
Let $K\subset \D$ be a compact subset. Choose some $\alpha > 0$. Then the compact set $\cup_{0\leq t\leq \alpha} F_t(K)$ has its convex hull $H$ (which is again a compact set) contained in $\D$. \\

Let $z\in K$ and consider the integral $\int_{z}^{F_t(z)} \frac{d}{dw} (F_t(w)-w) dw$, where we integrate along the line segment from $z$ to $F_t(z)$. This line segment is contained in $H$ and we find $\eta \in (0,\min(1/2,\alpha)]$ such that the integrand has modulus $\leq 1/10$ on $H$ for all $t\in[0,\eta]$. Thus 
\[\left|\int_{z}^{F_t(z)} \frac{d}{dw} (F_t(w)-w) dw\right| \leq \frac1{10}|F_t(z)-z|\]
for all $0\leq t\leq \eta$ and $z\in K$.
Furthermore,  
\[\int_{z}^{F_t(z)} \frac{d}{dw} (F_t(w)-w) dw = F_t(F_t(z)) - F_t(z) - (F_t(z)-z) = F_{2t}(z) - 2F_t(z) + z\]
 and we obtain
\[\frac1{10}|F_t(z)-z| \geq |F_{2t}(z) - 2F_t(z) + z| = 
|F_{2t}(z) - z -2F_t(z) + 2z|\geq 
 2|F_t(z)- z| - |F_{2t}(z) - z|\]
 and thus
\[ |F_t(z)-z| \leq \frac{10}{19}|F_{2t}(z) - z| \leq 2^{-2/3}|F_{2t}(z) - z|\] for all $0\leq t\leq \eta$ and $z\in K$. ($\left(\frac{10}{19}\right)^{3/2}=0.38... < 1/2$.)\\

Let $k\in\N$ be such that $2^k\eta \geq 1$ and put 
\[M = 2^{2k/3}\sup\{|F_t(z)-z| \,|\, z\in K, t\in[2^{-k}, 1]\}.\]
For $t\in[2^{-k},1]$, we have $Mt^{2/3} \geq M2^{-2k/3}\geq |F_t(z)-z|$. Now let $t\in(0,2^{-k}]$ and let $m\in\N$ be the smallest natural number with $2^m t > \eta \geq 2^{-k}$. Now we iterate the inequality above to get 
\begin{eqnarray*}|F_t(z)-z| &\leq& 2^{-2/3}|F_{2t}(z) - z| \leq 2^{-2\cdot 2/3}|F_{2^2t}(z) - z| \\
&\leq& ... \leq 2^{-2\cdot m/3}|F_{2^m t}(z) - z| 
\leq  2^{-2\cdot m/3}2^{-2\cdot k/3}M < t^{2/3}M.
\end{eqnarray*} 
(Here we used that $2^m t\leq 1$. As we assumed that $\eta \leq 1/2$, we have indeed $2^mt \leq \eta \cdot 2\leq 1$.)\\
All in all,  
\[ |F_t(z)-z| \leq M t^{2/3}\]
for all $0\leq t\leq 1$ and $z\in K$. (Note that for $t=0$, we have $F_0(z)=z$.)\\

Now we repeat the same argument for a compact set $K_1\subset \D$ which contains $H$, and we obtain a constant $M_1>0$ such that 
\[ |F_t(z)-z| \leq M_1 t^{2/3}\]
for all $0\leq t\leq 1$ and $z\in K_1$. The Cauchy inequalities show that there exists a constant $M_2>0$ such that 
\[ |F'_t(z)-1| \leq M_2 t^{2/3}\]
for all $0\leq t\leq 1$ and $z\in K_1$.\\

By using this estimate, our previous argument shows that 
\[ |F_{2t}(z)-2F_t(z)+z| \leq M_2 t^{2/3} |F_t(z)-z| \leq MM_2 t^{4/3} \]
for all $t\in[0,\alpha]$ and $z\in K$. Thus
\[ \left| \frac{F_{2t}(z)-z}{2t} - \frac{F_{t}(z)-z}{t} \right| \leq MM_2 \frac{t^{1/3}}{2}\]
for all $t\in[0,\alpha]$ and $z\in K$. We see that $\sum_{n=0}^\infty \frac{F_{2^{-n-1}}(z)-z}{2^{-n-1}} - \frac{F_{2^{-n}}(z)-z}{2^{-n}}$ 
is converging uniformly on $K$. From being a telescoping series, we see that 
\[ \lim_{n\to\infty} \frac{F_{2^{-n}}(z)-z}{2^{-n}} \]
exists uniformly on $K$. Hence this limit exists locally uniformly on $\D$ and defines a holomorphic function $G:\D\to\C$. \\
Fix $z_0\in \D$ and $t_0>0$. Then $\{F_t(z_0)\,|\, t\in[0,t_0]\}$ is a compact subset of $\D$. As $n\to\infty$, the function 
$2^n(F_{t+2^{-n}}(z_0)-F_t(z_0))=2^n(F_{2^{-n}}(F_{t}(z_0))-F_t(z_0))$ converges uniformly to $G(F_t(z_0))$ for each $t\in[0,t_0]$.\\
Let $t\in(0, t_0)$. Then, for $n$ large enough, we define $G_n:[0,t]\to\C$, $G_n(0)=0$, 
\[G_n(s)= 2^n(F_{(k+1)/2^{n}}(z_0)-F_{k/2^{n}}(z_0)),\; s\in (k/2^{n}, (k+1)/2^{n}],\; k=0,1,..., \lfloor t2^n \rfloor.\]
Then $G_n$ converges uniformly to $G(F_s(z_0))$ and $\int_0^{t} G_n(s) ds \to \int_0^t G(F_s(z_0)) ds$. We have 
\begin{eqnarray*}&&\int_0^{\lfloor t2^n+1 \rfloor/2^{n}} G_n(s) ds = \sum_{k=0}^{\lfloor t2^n \rfloor} F_{(k+1)/2^{n}}(z_0)-F_{k/2^{n}}(z_0)\\
&=& F_{\lfloor t2^n+1 \rfloor/2^{n} }(z_0)-F_{0}(z_0) \to  F_{t}(z_0)-F_{0}(z_0)=F_{t}(z_0)-z_0.
\end{eqnarray*}
This implies 
\[ F_t(z) = z + \int_0^t G(F_s(z)) ds \]
for all $z\in \D$ and $t\geq 0$. 
\end{proof}

Equation \eqref{semigropu_init} has a simple but interesting consequence. If we look at the initial value problem \eqref{semigropu_init} for two different initial values $z_0,w_0\in \D$, $z_0\not=w_0$, then 
$F_t(z_0)\not= F_t(w_0)$ for all $t\geq 0$. Otherwise, if $F_T(z_0)=F_T(w_0)$ for some $T>0$, we could solve the differential equation at time $T$ and go backward in time: 
\begin{equation*} \frac{\partial}{\partial t}v(t) = -G(v(t)), \quad v_0=F_T(z_0). \end{equation*}
This initial value problem would have two solutions for $t\in[0,T]$, namely $v(t)=F_{T-t}(z)$ and $v(t)=F_{T-t}(w)$, which would contradict the Picard-Lindel\"of uniqueness theorem. So $z\mapsto F_t(z)$ is injective. An injective holomorphic function is also called \emph{univalent}.

\begin{corollary}\label{fdsaKOERTZU}All elements $F_t$ of a continuous semigroup are univalent functions.
\end{corollary}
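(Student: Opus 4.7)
The plan is to argue by contradiction using the ODE characterization of the semigroup obtained in Theorem \ref{dsemi}, combined with the Picard-Lindel\"of uniqueness theorem applied to the reverse-time flow. Fix $t_0 > 0$ and suppose, contrary to the claim, that $F_{t_0}$ fails to be injective, so there exist two distinct points $z_0 \neq w_0$ in $\D$ with $F_{t_0}(z_0) = F_{t_0}(w_0) =: p$.

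First, I would invoke Theorem \ref{dsemi} to conclude that both curves $\gamma_z(t) := F_t(z_0)$ and $\gamma_w(t) := F_t(w_0)$ solve the same ODE $\dot{\gamma}(t) = G(\gamma(t))$ on $[0,t_0]$ with the holomorphic (hence locally Lipschitz) vector field $G$. Next I would reverse time: set $v_z(s) := F_{t_0-s}(z_0)$ and $v_w(s) := F_{t_0-s}(w_0)$ for $s \in [0,t_0]$. Both $v_z$ and $v_w$ then solve the initial value problem
\begin{equation*}
 \dot v(s) = -G(v(s)), \qquad v(0) = p.
\end{equation*}
Since $G$ is holomorphic on $\D$, the vector field $-G$ is locally Lipschitz on a neighborhood of the compact curve traced by either solution, so the Picard-Lindel\"of theorem forces $v_z \equiv v_w$ on $[0,t_0]$. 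Evaluating at $s = t_0$ yields $z_0 = v_z(t_0) = v_w(t_0) = w_0$, contradicting $z_0 \neq w_0$.

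This shows $F_{t_0}$ is injective for every $t_0 > 0$; and $F_0$ is the identity, which is trivially injective. The only thing to watch out for is that the reverse-time trajectory $s \mapsto F_{t_0-s}(z_0)$ stays inside $\D$ (which it does by construction, since the forward trajectory does), so the Lipschitz hypothesis of Picard-Lindel\"of is genuinely available along both solutions. The argument transfers verbatim to a continuous semigroup on $\Ha$ via the Cayley transform, as noted just before Theorem \ref{dsemi}. I expect no serious obstacle here, since the heavy analytic lifting (existence and holomorphy of the generator $G$, together with the ODE representation) has already been done in Theorem \ref{dsemi}; the corollary is essentially a one-line consequence of uniqueness for ODEs with locally Lipschitz right-hand side.
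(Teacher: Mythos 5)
Your argument is essentially identical to the paper's: assume $F_{t_0}(z_0)=F_{t_0}(w_0)$, run the generator ODE from Theorem~\ref{dsemi} backwards in time, and invoke Picard--Lindel\"of uniqueness for the IVP $\dot v=-G(v)$, $v(0)=p$, to conclude $z_0=w_0$. The only additions are the (correct and welcome) explicit remark about local Lipschitz continuity of the holomorphic $G$ along the compact trajectories and the trivial case $F_0=\mathrm{id}$, neither of which changes the substance of the proof.
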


Infinitesimal generators can be represented by the so called Berkson-Porta formula.

\begin{theorem}\label{grand_iteration}Let  $(F_t)_{t\geq0}$ be a continuous semigroup on $\D$.
\begin{itemize}
\item[(a)]  Every generator $G$ on $\D$ has the following form (Berkson-Porta formula)
 \begin{equation}\label{Berkson-Porta}
 G(z) = (\tau-z)(1-\overline{\tau}z)p(z),
 \end{equation}
  where $\tau\in \overline{\D}$ and $p:\D\to\C$ is holomorphic with $\Re(p)\geq0$. \\
If the corresponding semigroup $(F_t)_{t\geq0}$ does not consist of elliptic automorphisms, then $\tau$ is the Denjoy-Wolff point of the semigroup, i.e.\ $F_t\to \tau$ as $t\to\infty$.
\item[(b)] Assume that $F_1$ is not an elliptic automorphism.\\ Then $F_1$ has a fixed point $p\in\D$ if and only if $G(p)=0$ and 
$F'_t(p)=e^{tG'(p)}$ for all $t\geq0$.\\
 Furthermore, $p\in\partial \D$ is the Denjoy-Wolff point of $F_1$ if and only if $\angle \lim_{z\to p} G(z)=0$ and $\angle \lim_{z\to p} G'(z)\in (-\infty,0]$. In this case, $\angle \lim_{z\to p} F_t'(z)=\angle \lim_{z\to p} e^{tG'(z)}$ for all $t\geq0$.
\end{itemize}
\end{theorem}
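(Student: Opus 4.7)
The plan is to separate the two parts, with (a) being the substantive content (the Berkson-Porta representation) and (b) essentially an application of the defining ODE \eqref{semigropu_init}.

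For part (a), first I would use Theorem \ref{grand} to extract a Denjoy-Wolff point $\tau\in\overline{\D}$ for $F_1$; the commutativity $F_s\circ F_1=F_1\circ F_s$ makes $\tau$ the Denjoy-Wolff point of every $F_t$. The candidate Berkson-Porta factor is $q(z):=G(z)/((\tau-z)(1-\overline{\tau}z))$, and two things must be shown: $q$ is holomorphic on $\D$ and $\Re q\geq 0$. In the interior case ($\tau\in\D$), commutativity forces $F_s(\tau)=\tau$ for every $s$ by uniqueness of the interior fixed point of $F_1$, so $G(\tau)=0$ and the zero in the denominator is cancelled; in the boundary case ($\tau\in\partial\D$) the denominator has no zero in $\D$, so holomorphy is automatic.

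For the sign condition $\Re q\geq 0$ I would exploit Schwarz-Pick invariance of $F_t$. In the interior case, a M\"obius conjugation to $\tau=0$ turns the Schwarz lemma $|F_t(z)|^2\le|z|^2$ into an $\R$-valued inequality; differentiating at $t=0$ using $F_t(z)=z+tG(z)+\Landauo(t)$ yields $\Re(\overline{z}G(z))\le 0$, which is precisely $\Re q(z)\geq 0$ in that normalization. In the boundary case, Theorem \ref{grand}(d) provides horodisc invariance $(1-|z|^2)/|\tau-z|^2 \leq (1-|F_t(z)|^2)/|\tau-F_t(z)|^2$; differentiating this inequality at $t=0$ gives $\Re\big((1-\overline{z}\tau)G(z)/(\tau-z)\big)\geq 0$, and substituting the trial factorization together with $|\tau|=1$ (so that $(1-\overline{z}\tau)(1-\overline{\tau}z)=|1-\overline{\tau}z|^2$) converts this to $\Re q(z)\geq 0$. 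The identification of $\tau$ as the Denjoy-Wolff point of the whole semigroup then follows directly from Theorem \ref{grand} applied to each $F_t$.

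For the interior part of (b), the fixed-point equivalence reduces to the ODE: if $G(p)=0$ then the constant curve $t\mapsto p$ solves \eqref{semigropu_init}, and Picard-Lindel\"of uniqueness gives $F_t(p)=p$; conversely, if $F_1(p)=p$ with $p\in\D$, Theorem \ref{grand}(a) makes $p$ the Denjoy-Wolff point of $F_1$, and commutativity forces each $F_s(p)$ to be another interior fixed point of $F_1$, hence $F_s(p)=p$ for all $s$ and so $G(p)=0$. Differentiating \eqref{semigropu_init} in $z$ gives $\partial_t F_t'(z)=G'(F_t(z))\,F_t'(z)$; evaluating at $z=p$ produces a linear scalar ODE in $t$ with $F_0'(p)=1$, whose solution is $F_t'(p)=e^{tG'(p)}$. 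The boundary statement runs in parallel, but is technically more delicate: I would combine the Berkson-Porta factorization from (a) with the Julia-Carath\'eodory theorem to transfer the boundary regularity of $F_t$ at $p\in\partial\D$ (guaranteed by Theorem \ref{grand}(b,c)) into $\angle\lim_{z\to p}G(z)=0$ and $\angle\lim_{z\to p}G'(z)\in(-\infty,0]$, and back. I expect this last transfer to be the main obstacle: one cannot evaluate the ODE at a boundary point, so the angular estimates must be propagated through the flow using locally uniform convergence of $(F_t-\mathrm{id})/t$ to $G$ (Theorem \ref{dsemi}) together with uniform bounds on $F_t'$ inside Stolz angles supplied by the Julia lemma.
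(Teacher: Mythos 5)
The paper itself gives no internal proof of this theorem; it refers to \cite{MR0480965} for (a) and to \cite[Theorem 1]{CDMP06} for (b), so there is no in-paper argument to compare against. Your proposal is a reasonable attempt at a self-contained proof built out of Theorem \ref{grand} and the ODE \eqref{semigropu_init}, and the interior computations check out: differentiating Schwarz--Pick contraction (resp.\ horodisc invariance from Theorem \ref{grand}(d)) at $t=0$ really does yield $\Re q(z)\ge 0$ --- the algebraic identity $-\overline{z}\,|\tau-z|^2+(1-|z|^2)\,\overline{(\tau-z)}=(1-\overline{z}\tau)\,\overline{(\tau-z)}$ is what makes the boundary computation collapse to $|1-\overline{\tau}z|^2\,\Re q(z)\ge 0$ --- and the linearized flow equation $\partial_t F_t'(p)=G'(p)F_t'(p)$ with $F_0'(p)=1$ gives $F_t'(p)=e^{tG'(p)}$ at an interior fixed point exactly as you say. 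However, two points need attention.

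First, the representation claim in (a) is stated unconditionally, so it covers semigroups of elliptic automorphisms, yet your argument opens by extracting a Denjoy--Wolff point for $F_1$ from Theorem \ref{grand}, which excludes elliptic automorphisms (and for a periodic group like $F_t(z)=e^{2\pi it}z$ the map $F_1$ is the identity, so one should in any case pick some $F_{t_0}$ that is not an automorphism, not insist on $F_1$). The elliptic case is easy to patch --- after conjugation the generator is $i\alpha z$, so $p\equiv -i\alpha$ has $\Re p=0$ --- but it must be said.

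Second, and more substantively, the boundary half of (b) is precisely the content of \cite[Theorem 1]{CDMP06} and your proposal only outlines a plan for it. You correctly locate the obstacle: $(F_t-\mathrm{id})/t\to G$ locally uniformly in $\D$, but the quantities $\angle\lim_{z\to p}G(z)$ and $\angle\lim_{z\to p}G'(z)$ concern nontangential behavior at the boundary, where the flow ODE cannot be evaluated directly. The semigroup law and Theorem \ref{grand}(b)--(c) do give $\angle\lim_{z\to p}F_t'(z)=e^{-\lambda t}$ for some $\lambda\ge 0$, but proving that $\angle\lim_{z\to p}G'(z)$ exists and equals $-\lambda$ requires a quantitative interchange of the limits $z\to p$ (in Stolz angles) and $t\to 0$, and your sketch simply asserts that Julia-lemma bounds will supply the needed uniformity. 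That interchange is the hard analysis of the cited result; as written it is a program, not a proof.
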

\begin{proof}For (a) we refer to  \cite{MR0480965} and (b) can be found in \cite[Theorem 1]{CDMP06}.
\end{proof}

\begin{remark}If $F_1$ is not an elliptic automorphism, then $\tau$ and thus also $p$ are uniquely determined. This is also true if $F_1$ is an elliptic automorphism which is not the identity, because then $\tau$ is the unique fixed point of $F_1$ in $\D$. Only if $F_t(z)=z$ for all $z$ and $t$, we have $G(z)=0$ for all $z$ and $\tau$ is not unique.
\end{remark}

\begin{example}$F_t(z)=e^{it}z$ is a continuous semigroup, where $F_1$ is an elliptic automorphism. We obtain the generator $G(z)=iz$, which corresponds to $\tau = 0$ and $p(z)\equiv-i$. The continuous semigroup $F_t(z)=e^{-t}z$ corresponds to $G(z)=-z$, i.e.\ $p(z)\equiv 1$ and $\tau=0$, which is the Denjoy-Wolff point of the semigroup.  \hfill $\blacksquare$
\end{example}

We see that any continuous semigroup can be described by its generator, which is a function of the form \eqref{Berkson-Porta} and should be seen as a vector field on $\D$. Now problems concerning semigroups can be translated to problems concerning generators.

\newpage

\section{Exercises}

\begin{exercise}Let \[f(z)=\frac{Az+B}{Cz+D}\] be an automorphism of $\Ha$ with $A,B,C,D\in \R$, $AD-BC>0$. Determine the Nevanlinna triple $(a,b,\gamma)$ of $f$.
\end{exercise}

\begin{exercise}\label{boundary_auto}
Consider three points $p_1,p_2,p_3\in\partial \D$ and three points $q_1,q_2,q_3\in \partial\D$, both in counter-clockwise order. Show that there exists a unique $f\in\operatorname{Aut}(\D)$ with $f(p_j)=q_j$ for all $j=1,2,3.$
\end{exercise}

\begin{exercise}\label{semi_exercie}The semicircle distribution $W(\mu,\sigma)$, also called Wigner's law, is given by the density 
\[\frac{1}{2\pi\sigma^2} \sqrt{4\sigma^2 -(x-\mu)^2}, \quad  x\in [\mu-2\sigma,\mu+2\sigma].\] 
It has mean $\mu$ and variance $\sigma^2$. 
\begin{itemize}
	\item[(a)] Let $G(z)=\frac{z-\sqrt{z^2-4}}{2}$, where the branch of the square root is chosen such that $\sqrt{\cdot}$ maps $\C^2\setminus [-4,\infty)$ into the upper half-plane. Prove that 
	$G=G_\mu$ for some $\mu\in \mathcal{P}(\R)$.
	\item[(b)] Use the Stieltjes inversion formula to show that 
	$\mu = W(0,1)$.
\end{itemize}
\end{exercise}

\begin{exercise}[Sokhotski-Plemelj formula]\label{equal_Hilbert}
For a probability measure $\mu$ on $\R$, consider the following limits (which may or may not exist):

\begin{equation*}\hat{\mathcal{H}}_\mu(x) := \lim_{\eps\downarrow 0} \hat{\mathcal{H}}_{\eps,\mu}(x), \quad
\hat{\mathcal{H}}_{\eps,\mu}(x):=\frac {1}{\pi }\Re\; G_\mu(x+i\eps),
\end{equation*}
and
\begin{equation*}\mathcal{H}_\mu(x) := 
\lim_{\eps\downarrow 0}\mathcal{H}_{\eps,\mu}(x), \quad 
\mathcal{H}_{\eps,\mu}(x) := \frac {1}{\pi }\int_{|x-t|> \eps}\frac {1}{x-t}\,\mu(dt) \quad  \text{(Hilbert transform)}.
\end{equation*}

Let $\mu$ be an absolutely continuous probability measure with compact support and continuous density $f(x)dx$. 
Let $x\in\R$. Show:
$\mathcal{H}_\mu(x)$ exists if and only if
 $\hat{\mathcal{H}}_\mu(x)$ exists. \\
If these limits exist, then $\mathcal{H}_\mu(x)=\hat{\mathcal{H}}_\mu(x)$.
\end{exercise}

\begin{exercise}\label{muhaha}${}$
\begin{itemize}
	\item[(a)]  If $F_t$ is a continuous semigroup on a simply connected domain $D\subsetneq \C$ with generator $G$ and $C:E\to D$ is a biholomorphic mapping, then 
 $\hat{F}_t=C^{-1}\circ F_t \circ C$ is a continuous semigroup on $E$.\\
 Use the Cayley transform $C:\Ha\to\D$, $C(z)=(z-i)/(z+i)$, to obtain a formula for generators on the upper half-plane $\Ha$.
\item[(b)] Prove the following statement: Every holomorphic mapping from $\Ha$ into $\Ha\cup \R$ is an infinitesimal generator on $\Ha$.
The set of these generators minus the generator $G(z)\equiv 0$ is exactly the set of all generators of continuous semigroups 
whose Denjoy-Wolff point is $\infty$.
\end{itemize}
\end{exercise}

\begin{exercise}\label{muhaha2}
Let $F:\Ha\to\Ha$ be holomorphic with Nevanlinna triple $(a,1,\gamma)$. Show that $F(iy), y>0,$ belongs to the set $\{x+iy\in\Ha\,|\, |x|<y\}$ for all $y$ large enough.
\end{exercise}

\newpage

\chapter{Convolutions and limit theorems}

In this chapter we will look at convolutions for the four non-classical independences and we will prove the corresponding central limit theorems, which can be summarized as follows:\\
\renewcommand{\arraystretch}{1.5}
\begin{center}
\begin{tabular}{|l|l|l|}\hline
Independence& Central limit law & Transform\\ \hline\hline
Tensor&Gaussian normal distribution $ \frac1{\sqrt{2\pi}} e^{-x^2/2} dx$& characteristic function \\\hline
Boolean& $\frac1{2}\delta_1 + \frac1{2}\delta_{-1}$& $B$-transform\\\hline
Free& semicircle distribution $\frac{1}{2\pi} \sqrt{4 -x^2} \textbf{1}_{[-2,2]} dx$& $R$-transform\\\hline
Monotone& arcsine distribution $\frac{1}{\pi\sqrt{2 -x^2}} \textbf{1}_{(-\sqrt{2},\sqrt{2})}  dx$& $F$-transform\\\hline
Anti-monotone& arcsine distribution $\frac{1}{\pi\sqrt{2 -x^2}} \textbf{1}_{(-\sqrt{2},\sqrt{2})}  dx$& $F$-transform\\ \hline
\end{tabular}\vspace{3mm}
\end{center}

The case of monotone independence will be handled with all details and proofs, while the other cases will be treated more relaxingly.\\

We will also see how the infinitely divisible distributions for the non-classical cases look like. Interestingly, these classes are all characterized by the set of holomorphic functions $f:\Ha\to\Ha\cup\R$ with 
\[f(z) = a + \int_\R \frac{1+xz}{x-z} \gamma(dx),\]
where $a\in\R$ and $\gamma$ is a non-negative finite measure.

\section{$F$-transform and monotone convolution}

In monotone probability theory, the role of the characteristic function is played by the $F$-transform, which is simply the multiplicative inverse of the Cauchy transform.

\begin{definition}For $\mu\in \mathcal{P}(\R)$, the $F$-transform $F_\mu$ is defined as the holomorphic function 
\[F_\mu:\Ha\to\Ha, \quad F_\mu(z)=\frac1{G_\mu(z)} = \left(\int_\R \frac{1}{z-x} \mu(dx)\right)^{-1}.\] 
\end{definition}

\begin{example}For $\mu=\frac1{2}\delta_a+\frac1{2}\delta_b$, $a,b\in\R$, we obtain the rational function
 $F_{\mu}(z) = \frac{(z-a)(z-b)}{z-(a+b)/2}$. If $a=b$, then $F_{\delta_a}(z)=z-a$ is a simple translation of the upper half-plane. \hfill $\blacksquare$
\end{example}

The arcsine distribution will play a special role for monotone independence. 

\begin{example}\label{ex_arcsine}The arcsine distribution $A(\mu,\sigma^2)$ with mean $\mu$ and variance $\sigma^2$ 
is given by the density 
\[\frac{1}{\pi\sqrt{2\sigma^2-(x-\mu)^2}}, \quad  x\in \left(\mu-\sqrt{2}\sigma, \mu+2\sqrt{2}\sigma\right).\] 
 \begin{figure}[H]
 \begin{center}
 \includegraphics[width=0.6\textwidth]{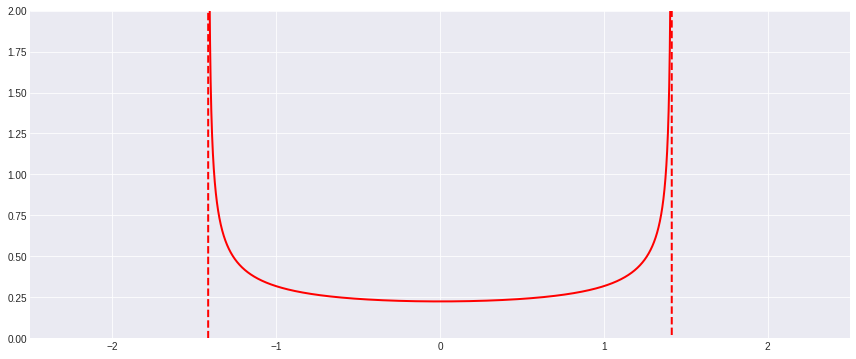}
 \caption{Density of the arcsine distribution $A(0,1)$.}
 \end{center}
 \end{figure}

Similarly to the semicircle distribution, see Exercise \ref{semi_exercie}, we can show that $G_{A(\sigma^2)}(z) = \frac1{\sqrt{(z-\mu)^2-2\sigma^2}}$ and thus \[F_{A(\mu, \sigma^2)}(z) =\sqrt{(z-\mu)^2-2\sigma^2}.\]
This function is a conformal mapping from $\Ha$ onto $\Ha$ minus the vertical line segment from $0$ to $\sqrt{2}\sigma i$.
 \begin{figure}[H]
 \begin{center}
 \includegraphics[width=0.9\textwidth]{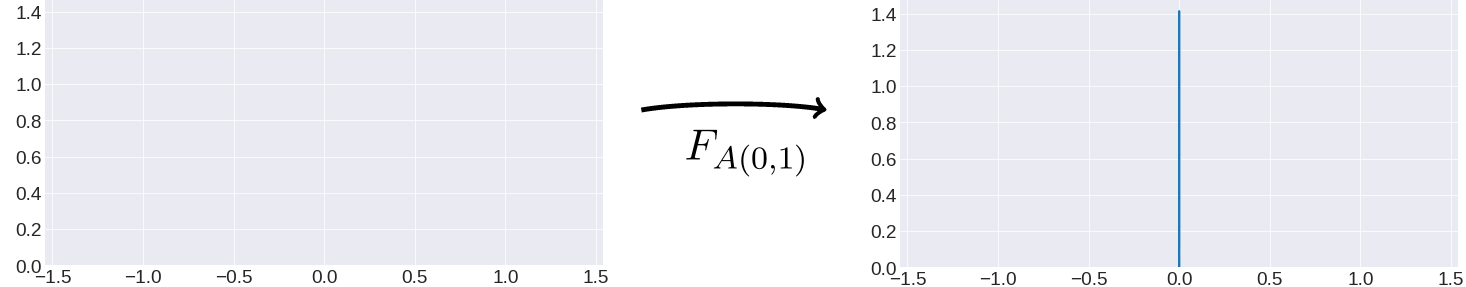}
 \caption{$F$-transform of the arcsine distribution $A(0,1)$.}
 \end{center}
 \end{figure}
\vspace{-10mm}\hfill $\blacksquare$
\end{example}

\begin{example}
Let $\mu=W(0,1)$ be the standard semicircle distribution. Then Exercise \ref{semi_exercie} shows that $F_\mu(z)=\frac{2}{z-\sqrt{z^2-4}}$. 
As $F_\mu$ extends continuously to $\Ha\cup \R$, $F_\mu(\Ha)$ is the unbounded complement of the curve $\{F_\mu(x)\,|\, x\in(-2,2)\}$ in $\Ha$. A simple calculation (see Exercise \ref{semi_exercie}) shows that $\Re(G_\mu(x))=\frac{x}{2}$ and $\Im(G_\mu(x))= \frac{-\sqrt{4-x^2}}{2}$ for $x\in[-2,2]$. Thus, for $x\in(-2,2)$, we have 
\[F_\mu(x) = \frac1{\frac{x}{2}+i\frac{-\sqrt{4-x^2}}{2}}=
\frac{2}{x-i\sqrt{4-x^2}}=
\frac{2x+2i\sqrt{4-x^2}}{4}=\frac{x}{2}+i\sqrt{1-\left(\frac{x}{2}\right)^2},\]
and we see that the curve is the semicircle $(\partial \D)\cap \Ha$.

 \begin{figure}[H]
 \begin{center}
 \includegraphics[width=0.9\textwidth]{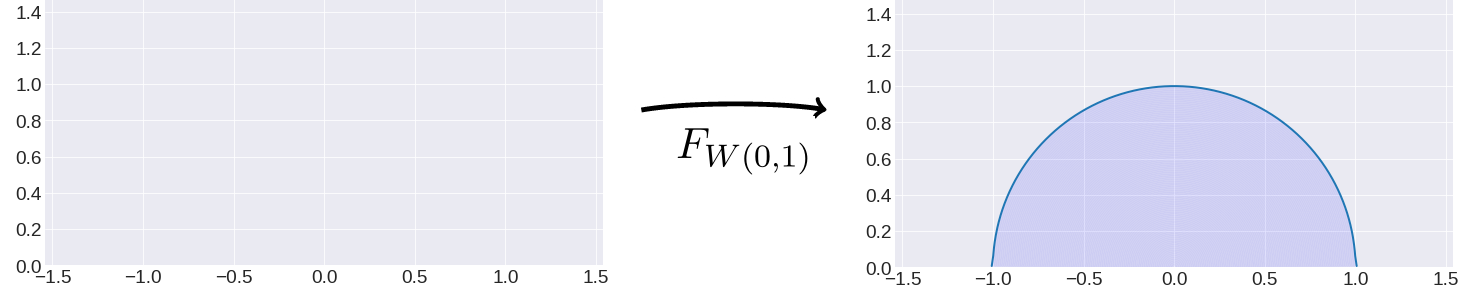}
 \caption{$F$-transform of the semicircle distribution $W(0,1)$.}
 \end{center}
 \end{figure}
\vspace{-8mm}\hfill $\blacksquare$

\end{example}

$F$-transforms of probability measures can be characterized as follows.
\begin{theorem}\label{Julia} Let $F\colon \Ha \to \Ha\cup\R$ be holomorphic. Then the following are equivalent.
\begin{enumerate}
\item[(a)] There exists a probability measure $\mu$ on $\R$ such that $F=F_\mu$.
\item[(b)] $F$ has the Pick-Nevanlinna representation
\begin{equation}\label{EV_eq:2}
F(z) =  z + b + \int_{\R}\frac{1+xz}{x-z}\rho({\rm d}x),
\end{equation}
where $b \in \R$ and $\rho$ is a finite, non-negative measure on $\R$.
\item[(c)]\label{Julia2} $\lim_{y\to \infty}\frac{F(i y)}{i y}=1$. 
\item[(d)] $F$ is the identity or the Denjoy-Wolff point of $F$ is $\infty$ and parabolic.
\end{enumerate}
If these equivalent conditions hold, then $\Im(F(z)) \geq \Im(z)$ for all $z\in\Ha$.\\
Furthermore, $\mu$ has compact support, mean $0$ and variance $\sigma^2$ if and only if there exists a
finite non-negative measure $\tau$ on $\mathbb{R}$ with compact support and $\tau(\R)=\sigma^2$
such that
\begin{equation}\label{EV_finite_var}
F_\mu(z) = z + \int_\mathbb{R} \frac1{x-z}\tau({\rm d}x).
\end{equation}
\end{theorem}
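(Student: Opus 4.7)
My plan is to chain the equivalences $(c)\Leftrightarrow(a)\Leftrightarrow(b)\Leftrightarrow(d)$ using the Nevanlinna representation (Theorem \ref{thm_nevanlinna}) and the Cauchy-transform characterization (Corollary \ref{cor_nevanlinna1}) as the two main tools. For $(a)\Leftrightarrow(c)$: if $F=F_\mu=1/G_\mu$, then Corollary \ref{cor_nevanlinna1}(a) gives $F_\mu(iy)/(iy) = 1/((iy)G_\mu(iy))\to 1$. Conversely, given a holomorphic $F\colon\Ha\to\Ha\cup\R$ satisfying (c), the open mapping theorem forces $F$ to take no real values (else $F$ would be a real constant, contradicting $F(iy)/(iy)\to 1$); then $G:=1/F\colon\Ha\to\Ha^-$ is holomorphic with $(iy)G(iy)=1/(F(iy)/(iy))\to 1$, and Corollary \ref{cor_nevanlinna1}(a) yields $\mu\in\mathcal{P}(\R)$ with $1/F=G_\mu$, i.e.\ $F=F_\mu$. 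For $(b)\Leftrightarrow(c)$, apply Theorem \ref{thm_nevanlinna} to $F$; the formula \eqref{Nevanlinna_b} identifies the coefficient of $z$ in the Nevanlinna representation with $\lim_{y\to\infty}F(iy)/(iy)$, so (c) is exactly the condition that this coefficient equals $1$, which is precisely (b).

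For $(b)\Leftrightarrow(d)$: if $F$ is the identity then (b) holds trivially; if $F$ is not the identity, Corollary \ref{deefdg} identifies ``Denjoy--Wolff point $\infty$'' with the Nevanlinna $z$-coefficient being $\geq 1$, and Remark \ref{parabolic} identifies ``parabolic'' with equality, so (d) again collapses to (b). The inequality $\Im F(z)\geq\Im z$ follows immediately from Corollary \ref{cor_nevanlinna2}.

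For the final claim, the algebraic identity $\frac{1+xz}{x-z}=\frac{1+x^2}{x-z}-x$ converts (b) into
\[F(z) = z + \Bigl(b-\int x\,\rho(dx)\Bigr) + \int\frac{1}{x-z}\,\tau(dx), \quad \tau(dx):=(1+x^2)\rho(dx),\]
whenever $\rho$ has compact support; hence \eqref{EV_finite_var} corresponds to $\rho$ being compactly supported together with the centering condition $b=\int x\,\rho(dx)$. The ``$\Leftarrow$'' direction is routine: expansion at $\infty$ gives $F_\mu(z)=z-\tau(\R)/z+\LandauO(1/z^2)$, so $G_\mu=1/F_\mu$ extends analytically to a neighborhood of $\infty$ with Laurent expansion $G_\mu(z)=1/z+0/z^2+\tau(\R)/z^3+\LandauO(1/z^4)$, and Corollary \ref{cor_nevanlinna1}(b) yields compact support of $\mu$, mean $0$, and variance $\sigma^2=\tau(\R)$. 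The ``$\Rightarrow$'' direction is the main obstacle: starting from $\mu$ with compact support, mean $0$, and variance $\sigma^2$, one must show that the Nevanlinna measure $\rho$ of $F_\mu$ has compact support. Because $G_\mu(x)>0$ for $x>\sup\supp\mu$ and $G_\mu(x)<0$ for $x<\inf\supp\mu$, any zeros of $G_\mu$ in $\R\setminus\supp\mu$ (i.e.\ any real poles of $F_\mu$) lie in the compact interval $[\inf\supp\mu,\sup\supp\mu]$; hence $F_\mu$ is meromorphic on $\hat\C\setminus\supp\mu$ with all poles in that bounded interval, and the Stieltjes--Perron inversion of Remark \ref{sunny00} forces $(1+x^2)\rho$, and therefore $\tau$, to be compactly supported. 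Matching the $1/z$ coefficient at $\infty$ then gives $\tau(\R)=\sigma^2$.
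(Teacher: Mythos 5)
Your proof is correct, and the four-way equivalence, the inequality $\Im F\geq\Im z$, and the ``$\Leftarrow$'' direction of the final characterization all run along the same lines as the paper's, citing the same ingredients (Theorem \ref{thm_nevanlinna}, Theorem \ref{cor_nevanlinna1}, Corollary \ref{cor_nevanlinna2}, Corollary \ref{deefdg}, Remark \ref{parabolic}). The one place where you genuinely diverge is the ``$\Rightarrow$'' direction of the last claim. The paper passes to $H(z)=z-F_\mu(z)$, observes $H(z)=\sigma^2/z+\dots$ at $\infty$, and applies Theorem \ref{cor_nevanlinna1}(b) to $H/\sigma^2$; since that result produces a probability measure $\mu'\in\mathcal{P}_c(\R)$ already with compact support, setting $\tau=\sigma^2\mu'$ delivers \eqref{EV_finite_var} in one stroke. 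You instead work with the Nevanlinna measure $\rho$ of $F_\mu$ directly, argue that $G_\mu$ cannot vanish outside $[\inf\supp\mu,\sup\supp\mu]$, hence $F_\mu$ extends analytically past $\R$ off that compact interval, and then invoke Stieltjes--Perron to localize $(1+x^2)\rho$. Both routes are valid and rest on the same analytic fact; the paper's version is slightly slicker because compact support of $\mu'$ falls out of Theorem \ref{cor_nevanlinna1}(b) with no extra argument, whereas your version must spell out the meromorphic continuation. One small caveat: when you ``match coefficients at $\infty$'' you should explicitly record that matching the constant term forces $b=\int x\,\rho(dx)$ (the centering condition you listed earlier), not just that the $1/z$ coefficient gives $\tau(\R)=\sigma^2$ --- otherwise \eqref{EV_finite_var} could still carry a spurious additive constant.
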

\begin{proof}
The equivalence between (a)  and (c) follows from Theorem \ref{cor_nevanlinna1}, the equivalence between (b) and (c) follows from Theorem \ref{thm_nevanlinna}, and the equivalence between (c) and (d) follows from Corollary \ref{deefdg} and Remark \ref{parabolic}.\\

Now let $F=F_\mu$. Then Corollary \ref{cor_nevanlinna2} implies that $\Im(F(z)) \geq \Im(z)$ for all $z\in\Ha$.\\

Now assume that $\mu$ has  compact support, mean $0$ and variance $\sigma^2$. The inversion formula \eqref{Stieltjes3} shows that also 
$\tau$ has compact support. As $G_\mu(z)=\frac1{z}+\frac{\sigma^2}{z^3} + \ldots$, see \eqref{Cauchy_moments}, a simple computation shows that 
$F_\mu(z)=z-\frac{\sigma^2}{z}+ \ldots$ at $\infty$. Now $H(z)=z-F_\mu(z)$ maps $\Ha$ into $\Ha^-\cup \R$. 
 Either $H$ is constant $c\in\R$ or $H:\Ha\to \Ha^-$. If $H(z)\equiv c$, then $H(z)=\frac{\sigma^2}{z}+ \ldots$, which implies $\sigma^2=0$ and thus $\mu=\delta_0$ and  \eqref{EV_finite_var} holds for $\tau=0$. Otherwise, $\sigma^2>0$ and $H(z)/\sigma^2$ maps $\Ha$ into $\Ha^-$ with 
$H(z)/\sigma^2=\frac{1}{z}+ \ldots$\\
Theorem \ref{cor_nevanlinna1} shows that $H(z)/\sigma^2$ is the Cauchy transform of some probability measure $\mu'$. With $\tau = \mu'\cdot \sigma^2$, we obtain \eqref{EV_finite_var}. These considerations can also be reversed: if $F_\mu$ has the form \eqref{EV_finite_var}, then $\mu$ has compact support, mean $0$ and variance $\sigma^2$.
\end{proof}

\begin{remark}\label{first_moment}
With Theorem \ref{cor_nevanlinna1} we can also say that $\mu\in\mathcal{P}(\R)$ has compact support with mean $0$ and variance $\sigma^2$ if and only if $F_\mu$
extends analytically to $\infty$ with the expansion 
\[F_\mu(z)=z-\frac{\sigma^2}{z}+... \quad \text{at $\infty$.}\]
  \end{remark}

If $X$ is a random variable in a quantum probability space with distribution $\mu\in \mathcal{P}_c(\R)$, then 
$\varphi(p(X)) = \int_\R p(x) \mu(dx)$ for all polynomials $p:\R\to \C$. For $z\in\Ha$ we find a sequence $p_n:\R\to\C$ of polynomials such that $p_n(x)\to \frac1{z-x}$ uniformly on the support of $\mu$. Then also $p_n(X)\to (z-X)^{-1}$ with respect to the operator norm and Exercise \ref{use_con} implies the continuity of $\varphi$, thus $\varphi\left((z-X)^{-1}\right) = \int_\R \frac{1}{z-x}\mu(dx)=G_\mu(z)$.

\begin{theorem}[Monotone convolution]\label{monotone_convolution} Let $(X,Y)$ be monotonically independent with distributions $\mu,\nu\in \mathcal{P}_c(\R)$. Let $\alpha$ be the distribution of $X+Y$. Then 
\[\text{$F_{\alpha}(z)=(F_\mu\circ F_\nu)(z)$ \quad for all $z$ in $\Ha$.}\]
\end{theorem}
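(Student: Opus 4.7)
The plan is to expand the resolvent $(z-X-Y)^{-1}$ as a geometric series around $(z-Y)^{-1}$, apply $\varphi$ term by term using the monotone independence of $(X,Y)$, sum to recognize $G_\mu\circ F_\nu$, and then invert. Since $\mu,\nu\in\mathcal{P}_c(\R)$, the operators $X,Y$ are bounded and self-adjoint, so for $z\in\Ha$ with $|z|>\|X\|+\|Y\|$ we have $\|X(z-Y)^{-1}\|<1$ and the Neumann series
\[
(z-X-Y)^{-1} = \sum_{n=0}^\infty (z-Y)^{-1}\bigl(X(z-Y)^{-1}\bigr)^n
\]
converges in operator norm. The expectation $\varphi$ is continuous with respect to the operator norm (Exercise \ref{use_con}), so we may apply it termwise.

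The combinatorial heart of the proof is the claim that for every $n\ge 0$,
\[
\varphi\bigl((z-Y)^{-1}\underbrace{X(z-Y)^{-1}X(z-Y)^{-1}\cdots X(z-Y)^{-1}}_{n\text{ copies of }X}\bigr) = G_\nu(z)^{\,n+1}\,m_n(\mu),
\]
where $m_n(\mu)=\varphi(X^n)$. To see this, regard the product as an alternating word in $\overline{\mathcal{A}_2}$ (containing the factors $(z-Y)^{-1}$, approximable by polynomials in $Y$) and $\mathcal{A}_1$ (containing $X$), so its index sequence is $2,1,2,1,\ldots,2$. Every occurrence of $(z-Y)^{-1}$ sits at a peak: an interior $2$ has a neighbouring $1$ on each side, and the two boundary $2$'s count as peaks by the endpoint convention of Definition \ref{111}. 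Successively applying the peak-removal rule to each $(z-Y)^{-1}$ pulls out a factor $\varphi((z-Y)^{-1})=G_\nu(z)$ at every step; after each interior removal the two flanking $X$-factors become adjacent, and since $\mathcal{A}_1$ is closed under products they merge into a single element of $\mathcal{A}_1$. After $n+1$ removals nothing remains but a single $\varphi(X^n)=m_n(\mu)$ factor, confirming the claim.

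Summing the resulting series,
\[
G_\alpha(z) = \sum_{n=0}^\infty m_n(\mu)\,G_\nu(z)^{n+1} = \sum_{n=0}^\infty \frac{m_n(\mu)}{F_\nu(z)^{\,n+1}} = G_\mu\bigl(F_\nu(z)\bigr),
\]
where in the last equality I use the moment expansion $G_\mu(w)=\sum_n m_n(\mu)/w^{n+1}$, valid for $|w|>\|X\|$, and the fact that $F_\nu(z)=z-m_1(\nu)+\LandauO(1/z)$ at infinity, so $|F_\nu(z)|$ is large when $|z|$ is large. Taking reciprocals, $F_\alpha(z)=F_\mu(F_\nu(z))$ for all $z\in\Ha$ with $|z|$ sufficiently large. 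By Theorem \ref{Julia}, $F_\nu$ maps $\Ha$ into $\Ha$ (it cannot be a real constant), so $F_\mu\circ F_\nu$ is holomorphic on all of $\Ha$; $F_\alpha$ is likewise holomorphic on $\Ha$, and since the two agree on a non-empty open subset they coincide on all of $\Ha$ by the identity theorem.

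The main obstacle is the combinatorial factorization in the second paragraph: Definition \ref{111} only permits removing one peak at a time, so the argument must be carried out as an induction in which each removal is justified, the resulting adjacency of same-algebra factors is exploited, and the boundary-peak convention is applied. Once this bookkeeping is pinned down, the rest of the proof is a formal series manipulation plus analytic continuation.
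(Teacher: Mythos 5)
Your proof is correct and rests on the same combinatorial engine as the paper's --- iterated peak removal applied to an alternating word in $X$ and functions of $Y$, using the endpoint convention of Definition~\ref{111} --- but you organize it differently. The paper expands $(z-(X+Y))^{-1}$ in powers of $1/z$ and keeps monomials $Y^{j_0}XY^{j_1}\cdots XY^{j_k}$, so it never leaves the polynomial algebras $\mathcal{A}_j$ appearing in the definition, and it then computes $G_\mu(1/G_\nu(z))$ as a separate power series and matches coefficients of the two double sums. You instead do a Neumann expansion around $(z-Y)^{-1}$, apply peak removal to words whose $\mathcal{A}_2$-factors are the resolvent $(z-Y)^{-1}$, and the resulting single sum $\sum_n m_n(\mu) G_\nu(z)^{n+1}$ is literally $G_\mu(F_\nu(z))$ --- no coefficient comparison needed. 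The payoff is a shorter, cleaner argument; the cost is that Definition~\ref{111} is stated for polynomials in $X_j$, so one must justify using $(z-Y)^{-1}$ in place of a polynomial in $Y$. You note it is ``approximable by polynomials in $Y$,'' which is the right idea (uniform polynomial approximation of $w\mapsto(z-w)^{-1}$ on the compact $\supp\nu$ gives operator-norm convergence $p_m(Y)\to(z-Y)^{-1}$, and the factorization passes to the limit because $\varphi$ is operator-norm continuous by Exercise~\ref{use_con} and the $m$-fold product is jointly norm-continuous on bounded sets), but this approximation step is doing real work and should be spelled out as a lemma rather than left implicit. The merging of adjacent $\mathcal{A}_1$-factors you mention is harmless but also unnecessary: Definition~\ref{111} places no constraint that consecutive indices differ, so the peak rule already applies without merging. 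The rest --- convergence of the Neumann series for $|z|>\|X\|+\|Y\|$, the Laurent expansion of $G_\mu$ being valid since $|F_\nu(z)|\to\infty$, and the identity theorem on $\Ha$ --- is exactly the same finishing move the paper uses.
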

\begin{proof}For $z\in \Ha$ we have
\begin{eqnarray*}
(z-(X+Y))^{-1} = \sum_{n=0}^\infty (X+Y)^n/z^{n+1} = 
\sum_{n=0}^\infty \sum_{k=0}^{n} 
\sum_{\substack{j_0+j_1+...+j_k=n-k \\ 0\leq j_1,...,j_k}}Y^{j_0}XY^{j_1}X\dots XY^{j_k}/z^{n+1}.
\end{eqnarray*}

Now we take the expectation and use the monotone independence of $(X,Y)$:

\begin{align}
G_\alpha(z)=\varphi((z-(X+Y))^{-1}) &=& 
\sum_{n=0}^\infty \sum_{k=0}^{n} \nonumber
\sum_{\substack{j_0+j_1+...+j_k=n-k \\ 0\leq j_1,...,j_k}}\varphi(Y^{j_0}XY^{j_1}X\cdots XY^{j_k})/z^{n+1}\\ 
&=& \sum_{n=0}^\infty \sum_{k=0}^{n} 
\sum_{\substack{j_0+j_1+...+j_k=n-k \\ 0\leq j_1,...,j_k}}\varphi(X^k)\varphi(Y^{j_0})\cdots \varphi(Y^{j_k})/z^{n+1}.\label{n00}
\end{align}

Furthermore, we have
\begin{eqnarray*}
G_{\nu}(z)(1-G_{\nu}(z)X)^{-1} = \sum_{k=0}^\infty G_{\nu}(z)^{k+1}X^k = \sum_{k=0}^\infty \left(\sum_{n=0}^\infty \varphi(Y^n)/z^{n+1}\right)^{k+1}X^k.
\end{eqnarray*}
Taking the expectation yields $G_\mu(1/G_\nu(z))$, so

\begin{equation}\label{n11}
G_\mu(1/G_{\nu}(z)) = \varphi(G_{\nu}(z)(1-G_{\nu}(z)X)^{-1}) =
\sum_{k=0}^\infty \left(\sum_{n=0}^\infty \varphi(Y^n)/z^{n+1}\right)^{k+1}\varphi(X^k). 
\end{equation}

By comparing the coefficients in \eqref{n00} and \eqref{n11}, we see that both sums are identical. Note that all the expansions only hold when $|z|$ is big enough. So $G_\alpha(z)=G_\mu(1/G_{\nu}(z))$  in a neighborhood of $\infty$. The identity theorem for holomorphic functions implies that $G_\alpha(z)=G_\mu(1/G_{\nu}(z))$ for all $z\in\Ha$.
\end{proof}

Let $\mu,\nu\in \mathcal{P}(\R)$. Then $F_\mu\circ F_\nu$ is again an $F$-transform, which can be seen as follows. Due to Exercise \ref{muhaha2}, the curve $y\mapsto F_{\nu}(iy)$ belongs to $\{x+iy\in\Ha\,|\, |x|<y\}$ for all $y$ large enough, and $\Im(F_{\nu}(iy))\to\infty$. With Remark \ref{nontgh} we obtain

\[ \lim_{y\to\infty}\frac{(F_\mu\circ F_\nu)(iy)}{iy} = \lim_{y\to\infty}\frac{F_\mu(F_\nu(iy))}{F_\nu(iy)} \frac{F_\nu(iy)}{iy}= 1\cdot 1=1,    \]
and Theorem \ref{Julia} implies that $F_\mu\circ F_\nu$ is an $F$-transform. We can thus make the following definition.

\begin{definition}[Monotone convolution] For $\mu,\nu\in\mathcal{P}(\R)$ we define the \emph{monotone convolution} $\mu \rhd \nu\in \mathcal{P}(\R)$ via \[F_{\mu\rhd\nu} = F_\mu \circ F_\nu.\]
A probability measure $\mu\in \mathcal{P}(\R)$ is called \emph{monotonically infinitely divisible} if, for every $n\in\N$, there exists $\mu_n\in \mathcal{P}(\R)$ such that 
$\mu = \mu_n \rhd  \cdots \rhd  \mu_n$ ($n$-fold convolution).
\end{definition}

\begin{lemma}For $\lambda>0$, denote by $\mu_\lambda$ the distribution $\mu(\cdot/\lambda)$. Then $F_{\mu_{\lambda}}(z)= \lambda F_{\mu_1}(z/\lambda)$ for all $z\in\Ha$.
\end{lemma}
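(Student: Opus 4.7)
The plan is to reduce the statement to a direct computation on the Cauchy transform, since $F_\mu = 1/G_\mu$ by definition. First I would unpack the meaning of $\mu_\lambda$: if $X$ is a random variable with distribution $\mu_1 = \mu$, then $\lambda X$ has distribution $\mu_\lambda$, i.e.\ $\mu_\lambda(A) = \mu(A/\lambda)$ for every Borel set $A\subset\R$. This is the natural push-forward under the dilation $x\mapsto \lambda x$.

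Next I would compute $G_{\mu_\lambda}$ by a change of variables. Writing
\[
G_{\mu_\lambda}(z) = \int_\R \frac{1}{z-x}\,\mu_\lambda(dx) = \int_\R \frac{1}{z-\lambda y}\,\mu(dy) = \frac{1}{\lambda}\int_\R \frac{1}{z/\lambda - y}\,\mu(dy) = \frac{1}{\lambda} G_{\mu_1}(z/\lambda),
\]
which is valid for every $z\in\Ha$ since $z/\lambda\in\Ha$ when $\lambda>0$. Taking reciprocals yields
\[
F_{\mu_\lambda}(z) = \frac{1}{G_{\mu_\lambda}(z)} = \frac{\lambda}{G_{\mu_1}(z/\lambda)} = \lambda F_{\mu_1}(z/\lambda),
\]
as claimed.

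There is essentially no obstacle: the only thing to be careful about is confirming that the substitution $x=\lambda y$ is legitimate (it is, as $\lambda>0$ makes the map $y\mapsto\lambda y$ a measurable bijection of $\R$ onto itself whose push-forward of $\mu$ is precisely $\mu_\lambda$), and that $z/\lambda$ remains in the upper half-plane so that $G_{\mu_1}$ may be evaluated there. Thus the proof is a one-line change of variables combined with the definition $F=1/G$.
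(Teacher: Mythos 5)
Your proof is correct and takes essentially the same approach as the paper: compute $G_{\mu_\lambda}(z) = G_{\mu_1}(z/\lambda)/\lambda$ by the change of variables $x=\lambda y$, then take reciprocals. The paper states the Cauchy-transform identity without spelling out the substitution, which you do explicitly; otherwise the arguments coincide.
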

\begin{proof}We have $G_{\mu_\lambda}(z)=G_{\mu_1}(z/\lambda)/\lambda$ and thus 
 $F_{\mu_{\lambda}}(z)= \lambda F_{\mu_1}(z/\lambda)$.
\end{proof}

\begin{theorem}[Monotone central limit theorem] Let $X_1,X_2,...$ be a sequence of monotonically independent identically distributed random variables with mean $\mu$ and variance $\sigma^2>0$. Let $\mu_n$ be the distribution of $S_n=(X_1+...+X_n-n\mu)/(\sigma\sqrt{n})$. Then $\mu_n$ converges weakly to the arcsine distribution $A(0,1)$ given by the density
\begin{equation*}
\frac{1}{\pi\sqrt{2 -x^2}}, \quad x\in (-\sqrt{2},\sqrt{2}).
\end{equation*}
\end{theorem}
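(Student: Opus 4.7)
The plan is to reduce to mean zero and unit variance by replacing each $X_k$ with $(X_k-\mu)/\sigma$, and let $\nu$ denote the common distribution. Iterating Theorem \ref{monotone_convolution} shows that the $F$-transform of $X_1+\cdots+X_n$ is the $n$-fold composition $F_\nu^{\circ n}$, and the scaling lemma $F_{\rho_\lambda}(z) = \lambda F_\rho(z/\lambda)$ applied with $\lambda = 1/\sqrt n$ gives
\[
F_n(z) := F_{\mu_n}(z) = \frac{1}{\sqrt n}\, F_\nu^{\circ n}\!\bigl(z\sqrt n\bigr).
\]
Since $F_{A(0,1)}(z) = \sqrt{z^2-2}$ by Example \ref{ex_arcsine}, Lemma \ref{lemmaconvergence} (applied to the Cauchy transforms $1/F_n$) reduces the problem to showing $F_n(z) \to \sqrt{z^2-2}$ locally uniformly on $\Ha$, using the branch that sends $\Ha$ into $\Ha$.

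The first main step is the non-tangential asymptotic
\[
F_\nu(z) \;=\; z - \tfrac{1}{z} + o(1/|z|), \qquad z\to\infty \text{ within any Stolz angle.}
\]
Using $\int x\,d\nu = 0$ one checks algebraically that $G_\nu(z) - 1/z = z^{-2}\int x^2/(z-x)\,\nu(dx)$; inside a fixed Stolz angle the quotient $|z|/|z-x|$ is uniformly bounded, so dominated convergence (with envelope $x^2 \in L^1(\nu)$) yields $\int x^2/(z-x)\,d\nu = (1+o(1))/z$. Inverting $G_\nu$ then produces the stated expansion of $F_\nu$.

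For the iteration, set $w_0 = z\sqrt n$ and $w_{k+1}=F_\nu(w_k)$, so that $F_n(z) = w_n/\sqrt n$. Since the Nevanlinna coefficient of $F_\nu$ is $b=1$, Corollary \ref{cor_nevanlinna2} gives $\Im w_k \geq \Im(z)\sqrt n$. A short induction using $|w_{k+1}-w_k| = O(1/|w_k|)$ then shows $|w_k - w_0| = O(\sqrt n)$, so the iterates remain in a fixed Stolz angle at $\infty$ with $|w_k| \asymp \sqrt n$ uniformly in $k\leq n$. Writing $\epsilon_k := F_\nu(w_k) - w_k + 1/w_k = o(1/|w_k|)$, I would then compute
\[
w_{k+1}^2 - w_k^2 = \bigl(-\tfrac{1}{w_k} + \epsilon_k\bigr)\bigl(2w_k - \tfrac{1}{w_k} + \epsilon_k\bigr) = -2 + r_k,
\]
with $r_k = 2\epsilon_k w_k + O(1/|w_k|^2) + O(\epsilon_k/|w_k|)$ satisfying $\max_{k<n} |r_k| = o(1)$ as $n\to\infty$. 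Summing from $k=0$ to $n-1$ telescopes to $w_n^2 = z^2 n - 2n + o(n)$ and hence $F_n(z)^2 \to z^2 - 2$.

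Finally, every $F_n$ sends $\Ha$ into $\Ha \cup \R$, which forces the branch of $\sqrt{z^2-2}$ with values in $\Ha$, so $F_n(z) \to \sqrt{z^2-2} = F_{A(0,1)}(z)$ pointwise on $\Ha$. Holomorphic maps $\Ha \to \Ha\cup\R$ form a normal family, so Vitali's theorem upgrades pointwise to locally uniform convergence, and Lemma \ref{lemmaconvergence} concludes $\mu_n\to A(0,1)$ weakly. The principal technical hurdle is the iteration step: one must verify uniformly in $k$ that the orbit $w_k$ stays inside a single Stolz angle with $|w_k|\asymp \sqrt n$ and that the errors $\epsilon_k w_k$ are uniformly $o(1)$, so that the telescoping sum actually yields the conservation law $w_n^2 + 2n \approx w_0^2$ underlying the emergence of the arcsine distribution.
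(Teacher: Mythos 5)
Your proof is correct, and it shares the paper's skeleton: normalize to mean zero and unit variance, express $F_{\mu_n}$ as the $n$-fold iterate of the rescaled map $F_n(z)=n^{-1/2}F_\nu(\sqrt{n}z)$, control the telescoping sum of squared iterates, and upgrade pointwise to locally uniform convergence via Vitali's theorem. The genuine difference lies in how the error terms are handled. The paper writes $F_\nu(z)=z+\int_\R \frac{1}{x-z}\,\tau({\rm d}x)$ via Theorem~\ref{Julia}, with $\tau$ compactly supported and $\tau(\R)=1$, and the final estimate on the telescoping sum uses $\int_\R |x|\,\tau({\rm d}x)<\infty$ — effectively a third-moment bound, which holds here because bounded quantum random variables have compactly supported distributions. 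You instead derive the non-tangential expansion $F_\nu(z)=z-1/z+o(1/|z|)$ directly from mean zero and finite second moment, and run the iteration inside a Stolz angle. This buys you a slightly more general and more conceptual argument (it would extend verbatim to unbounded monotonically independent variables with finite variance), at the price of the uniformity issue you correctly flag: one must show that the orbit $w_k$ stays in a fixed Stolz angle with $|w_k|\asymp\sqrt{n}$ for all $k\le n$. This closes by a bootstrap induction — Corollary~\ref{cor_nevanlinna2} gives the lower bound $\Im w_k\ge \Im(z)\sqrt{n}$ a priori, the induction hypothesis $|w_j|\le C\sqrt{n}$ places $w_j$ in a Stolz angle where $|w_{j+1}-w_j|\le 2/|w_j|=O(n^{-1/2})$, and summing over $j<k\le n$ returns $|w_k|\le |w_0|+O(\sqrt{n})$, closing the induction with $C=|z|+O(1/\Im z)$. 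With that point made explicit, your proof is complete.
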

\begin{proof}
Let $Y=(X_1-\mu)/\sigma$ and denote its distribution by $\nu$. By the previous lemma and Theorem \ref{monotone_convolution} we have  
\[F_{\mu_n}(z)=\frac1{\sqrt{n}} F_{\nu^{\rhd n}}(\sqrt{n}z) = (\frac1{\sqrt{n}} F_{\nu}\left(\sqrt{n}z)\right)^{\circ n}.\]
We need to show that $F_{\mu_n}(z)\to \sqrt{z^2-2}$ by Example \ref{ex_arcsine} and Lemma \ref{lemmaconvergence}. This can also be stated as
\[ \psi_1\circ F_{\mu_n} \circ  \psi_2 \to z-2\]
with $\psi_1(z)=z^2$ and $\psi_1(z)=\sqrt{z}$. 
Let  $F_n(z):=\frac1{\sqrt{n}} F_{\nu}(\sqrt{n}z)$ and write  $(\psi_1\circ F_{n} \circ  \psi_2)(z)=z+R_n(\psi_2(z))$, $z\in\C\setminus [0,\infty)$. We will show that \begin{equation}\label{Rsss}
 \lim_{n\to\infty} \sum_{j=0}^{n-1}R_n(F^{\circ j}_n(iy)) = -2 \end{equation}
for all $10<y<11$. Vitali's theorem implies locally uniform convergence of the sum in $\Ha$ and hence 
\[\psi_1\circ F_{\mu_n} \circ  \psi_2=
\psi_1\circ F_{n}^{\circ n} \circ  \psi_2=(\psi_1\circ F_{n} \circ  \psi_2)^{\circ n} =(z+R_n(\psi_2(z)))^{\circ n}=z+\sum_{j=0}^{n-1} R_n(F^{\circ j}_n(\sqrt{z}))\to z-2.\]

By \eqref{EV_finite_var}, we can write $F_\nu(z)=z+\int_\mathbb{R} \frac1{x-z}\tau({\rm d}x)$ for a non-negative measure $\tau$ with compact support and $\tau(\R)=1$ (the variance of $\nu$). We have 
\[F_n(z)=z+\frac1{\sqrt{n}}\int_\mathbb{R} \frac1{x-z\sqrt{n}}\tau({\rm d}x),\] and for $z\in\Ha$ we obtain
\begin{eqnarray*}
|F_n(z)-z|=\left|\frac1{\sqrt{n}}\int_\mathbb{R} \frac1{x-z\sqrt{n}}\tau({\rm d}x) \right| \leq 
\frac1{\sqrt{n}}\int_\R\left|  \frac1{x-\sqrt{n}z}\tau({\rm d}x) \right| \leq \frac{\tau(\R)}{\sqrt{n}} \frac1{\Im(z\sqrt{n})} = 
\frac{\sigma}{n\Im(z)}. 
\end{eqnarray*}

We have
\[R_n(z)=F_n(z)^2-z^2= \frac2{\sqrt{n}}\int_\mathbb{R} \frac{z}{x-\sqrt{n}z}\tau({\rm d}x)+\left(\frac1{\sqrt{n}}\int_\mathbb{R} \frac{1}{x-\sqrt{n}z}\tau({\rm d}x)\right)^2.\]

Let $y>10$. Then $\Im(F^{\circ j}_n(iy))>10$ for all $j=0,...,n-1$ and 
\begin{equation}\label{Rsss1} \sum_{j=0}^{n-1} \left(\frac1{\sqrt{n}}\int_\mathbb{R} \frac{1}{x-\sqrt{n}F^{\circ j}_n(iy)}\tau({\rm d}x)\right)^2 \leq  n \cdot \frac{1}{100 n^2}\to 0\end{equation}
as $n\to\infty$.

Let $y>10$ and put $z_j = F^{\circ j}_n(iy)$, $j=0,...,n-1$. Then also $10<\Im(z_j)$ for all $j=0,...,n-1$. We have
\begin{eqnarray*}&&\left|2+\sum_{j=0}^{n-1} \frac2{\sqrt{n}}\int_\mathbb{R} \frac{z_j}{x-\sqrt{n}z_j}\tau({\rm d}x)\right|=
\left|\sum_{j=0}^{n-1} \frac2{\sqrt{n}}\int_\mathbb{R} \frac1{\sqrt{n}}+ \frac{z_j}{x-\sqrt{n}z_j}\tau({\rm d}x)\right|\\
&=& \left|
\sum_{j=0}^{n-1} \frac2{\sqrt{n}}\int_\mathbb{R} \frac{x/\sqrt{n}}{x-\sqrt{n}z_j}\tau({\rm d}x)\right| \leq 
\sum_{j=0}^{n-1} \frac2{\sqrt{n}}\int_\mathbb{R} \frac{|x|/\sqrt{n}}{|x-\sqrt{n}z_j|}\tau({\rm d}x) \\
&\leq& \sum_{j=0}^{n-1} \frac2{\sqrt{n}}\int_\mathbb{R} \frac{|x|}{10n}\tau({\rm d}x)= \frac2{\sqrt{n}}\int_\mathbb{R} \frac{|x|}{10}\tau({\rm d}x) \to 0
\end{eqnarray*}
as $n\to \infty$. This shows that 
\[ \sum_{j=0}^{n-1} \frac2{\sqrt{n}}\int_\mathbb{R} \frac{z_j}{x-\sqrt{n}z_j}\tau({\rm d}x) \to -2\]
and, together with \eqref{Rsss1}, we conclude \eqref{Rsss}.
\end{proof}

\begin{theorem}\label{inf_div_mon}Let $\mu\in \mathcal{P}(\R)$. The following statements are equivalent:
 \begin{enumerate}
\item[(a)] $\mu$ is monotonically infinitely divisible.
\item[(b)] There exists a continuous $\rhd$-semigroup $(\mu_t)_{t\geq0}\subset \mathcal{P}(\R)$ ($\mu_0=\delta_0$, $\mu_{s+t}=\mu_s\rhd \mu_t$, $t\mapsto \mu_t$ is continuous) such that $\mu_1=\mu$.
\item[(c)] There exists a continuous semigroup $(F_t)_{t\geq0}$ of holomorphic functions $F:\Ha\to\Ha$ such that $F_1 = F_\mu$ and the generator $G:\Ha\to\Ha\cup \R$ of $(F_t)$ has the form 
\[G(z) = a + \int_\R \frac{1+xz}{x-z} \gamma(dx),\]
where $a\in\R$ and $\gamma$ is a non-negative finite measure. (Recall that $\frac{d}{dt}F_t = G(F_t)$.)
\end{enumerate}
\end{theorem}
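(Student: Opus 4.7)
The implication (b) $\Rightarrow$ (a) is immediate: from $\mu = \mu_1 = \mu_{1/n}^{\rhd n}$ for every $n\in\N$, $\mu$ is $\rhd$-infinitely divisible.

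For (b) $\Leftrightarrow$ (c), the plan is to translate the semigroup of measures into a semigroup of $F$-transforms via Theorem \ref{monotone_convolution}. Setting $F_t:=F_{\mu_t}$, the relation $\mu_{s+t}=\mu_s\rhd\mu_t$ becomes $F_{s+t}=F_s\circ F_t$, and weak continuity of $t\mapsto\mu_t$ is equivalent by Lemma \ref{lemmaconvergence} to locally uniform continuity of $(F_t)$. Given such $(F_t)$, Theorem \ref{dsemi} yields an infinitesimal generator $G$. Since $\Im F_t(z)\geq\Im z$ by Theorem \ref{Julia}, differentiation at $t=0^+$ gives $\Im G\geq 0$, so $G$ maps $\Ha$ into $\Ha\cup\R$ and, by Theorem \ref{thm_nevanlinna}, has a Nevanlinna triple $(a,b,\gamma)$. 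Each $F_t$ has parabolic Denjoy-Wolff point $\infty$ by Theorem \ref{Julia}(d); transferring Theorem \ref{grand_iteration}(b) from $\D$ to $\Ha$ via the Cayley transform (cf.\ Exercise \ref{muhaha}) forces $\angle\lim_{z\to\infty}G(z)/z=0$, i.e.\ $b=0$, which is exactly the form in (c). Conversely, given $G$ of the stated form, Exercise \ref{muhaha}(b) produces a continuous semigroup $(F_t)$ on $\Ha$ with Denjoy-Wolff point $\infty$; the vanishing of the $z$-term together with Remark \ref{nontgh} ensures $\lim_{y\to\infty}F_t(iy)/(iy)=1$ for every $t\geq 0$, so each $F_t$ is an $F$-transform by Theorem \ref{Julia}(c), and defining $\mu_t$ via $F_{\mu_t}=F_t$ recovers (b).

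The substantive step is (a) $\Rightarrow$ (b). Starting from $F_\mu=F_{\mu_n}^{\circ n}$, my plan is to set $G_n(z):=n(F_{\mu_n}(z)-z)$. Using the representation \eqref{EV_eq:2} for $F_{\mu_n}$, each $G_n$ is a holomorphic map $\Ha\to\Ha\cup\R$ with Nevanlinna triple of the form $(nb_n,0,n\rho_n)$, hence already in the class appearing in (c). The steps are then: (i) show $F_{\mu_n}\to\mathrm{id}$ locally uniformly on $\Ha$, so that the iterate formula $F_{\mu_n}^{\circ n}=F_\mu$ can be analysed perturbatively; (ii) extract, by Vitali's theorem and the Helly-type Theorem \ref{lemmaconvergence0}, a locally uniform subsequential limit $G_{n_k}\to G$ of the desired Nevanlinna form; (iii) identify $G$ as the generator of a continuous semigroup $(F_t)$ with $F_1=F_\mu$, using that $(\mathrm{id}+G_n/n)^{\circ n}=F_\mu$ is a noncommutative Euler scheme for the flow of $G$ and hence converges to its time-$1$ map.

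The main obstacle is the tightness required in (i)–(iii): one must prevent the Nevanlinna data $nb_n$ and the measures $n\rho_n$ on $\hat{\R}$ from escaping to $\infty$. This is the monotone analogue of showing, in the classical L\'evy–Khintchine theorem, that $n(\varphi_{\mu_n}-1)$ converges to a L\'evy exponent, but the non-commutativity of $\circ$ makes the estimates considerably more delicate: rather than multiplying scalars, one must iterate non-commuting maps of $\Ha$ and use Schwarz–Pick–Julia type bounds to control the displacement of $F_{\mu_n}$ from the identity in terms of the fixed composition $F_\mu$, thereby ruling out escape of mass to $\infty\in\hat{\R}$ and producing the limit $G$ of the asserted form.
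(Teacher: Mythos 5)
Your handling of $(b)\Leftrightarrow(c)$ and of the trivial implication $(b)\Rightarrow(a)$ follows the paper's own route: identify the measure semigroup with an $F$-transform semigroup, obtain $\Im G\ge 0$ by differentiating $\Im F_t\ge\Im z$, and use the parabolic Denjoy--Wolff point at $\infty$ together with Theorem~\ref{grand_iteration}(b) to force the $z$-coefficient in the Nevanlinna triple of $G$ to vanish (and, conversely, to recognize each $F_t$ as an $F$-transform). Nothing to add there.

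For $(a)\Rightarrow(b)$ you take a genuinely different path from the paper. The paper defers to Muraki and Belinschi, whose argument rests on proving that the $n$-th monotone root $\nu_n$ of $\mu$ is \emph{unique}; this makes $\mu_{m/n}:=\nu_n^{\rhd m}$ well-defined and consistent across denominators, giving a semigroup over $\mathbb{Q}_{\ge0}$ that is then extended continuously to $[0,\infty)$. You instead propose a Chernoff/Euler-scheme route: set $G_n:=n(F_{\mu_n}-\mathrm{id})$, show $F_{\mu_n}\to\mathrm{id}$, extract a locally uniform subsequential limit $G$ via Theorem~\ref{lemmaconvergence0}, and argue that the time-$1$ flow of $G$ reproduces $F_\mu$. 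This avoids proving uniqueness of roots (only existence of an embedding is needed), but the step you flag as ``the main obstacle'' is a genuine gap and not just bookkeeping: you must show that the Nevanlinna data of $G_n$ (namely $n b_n$ and $n\rho_n$ from the representation~\eqref{EV_eq:2} of $F_{\mu_n}$) stay bounded and do not leak mass to $\infty\in\hat{\R}$, since any escaping mass re-introduces the $bz$ term you need to exclude. The natural a priori bound is the telescoping identity
\[
\Im F_\mu(iy_0)-y_0=\sum_{k=0}^{n-1}\Im(w_k)\int_\R\frac{1+x^2}{|x-w_k|^2}\,\rho_n(dx), \qquad w_k:=F_{\mu_n}^{\circ k}(iy_0),
\]
but the weight $(1+x^2)/|x-w_k|^2$ collapses if $\Re(w_k)$ drifts along the orbit, so you need a separate estimate controlling that drift; none is supplied. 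Moreover, the final claim that $(\mathrm{id}+G_n/n)^{\circ n}$ equals the time-$1$ map of the flow of $G$ is also only asserted and requires its own Gronwall-type comparison, using that the discrete orbits stay in compact subsets of $\Ha$. As written, $(a)\Rightarrow(b)$ is a plausible plan rather than a proof; either carry out the tightness and Chernoff estimates, or switch to the paper's uniqueness-of-roots strategy.
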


\begin{remark}The pair $(a,\gamma)$ can be seen as the monotone analogue of the L\'{e}vy triple, see Theorem \ref{measures_incre_class}. 
\end{remark}

\begin{proof}[Sketch of the proof of Theorem \ref{inf_div_mon}]
If $(\mu_t)_{t\geq0}$ is a $\rhd$-semigroup as in (b), then $F_{\mu_t}$ is a continuous semigroup on $\Ha$. Exercise \ref{muhaha} implies that its generator $G$ maps $\Ha$ into $\Ha\cup \R$. Theorem \ref{grand_iteration} (b) and Theorem \ref{Julia} imply that $G$ has the form as in (c). Furthermore, $\mu_1 = \mu = \mu_{1/n} \rhd  \cdots \rhd  \mu_{1/n}$ for all $n\in\N$. Hence, $\mu$ is monotonically infinitely divisible.\\
If $(F_t)_{t\geq0}$ is a continuous semigroup as in (c), then Theorem \ref{Julia} and Theorem \ref{grand_iteration} (b) imply that 
all $F_t$ are $F$-transforms of probability measures $(\mu)_t$ and they clearly satisfy the conditions in (b).  \\
The difficult part is to show that (a) implies (b). This has been proven in \cite[Proposition 5.4]{Mur00} for compactly supported measures and later on in \cite[Proposition 3.8]{Bel05} for the general case. One can show that for each $n\in\N$ the measure $\nu_n$ with $(\nu_n)^{\rhd n}=\mu$ is uniquely determined. This allows us to define $\mu_{m/n}$, $n\in\N$, $m\in\N_0$, by $\mu_{m/n}=(\nu_{n})^{\rhd m}$ and now $\mu_t$ can be extended from all rational $t\geq0$  to all real $t\geq0$.
\end{proof}

\begin{remark}\label{oepf}The generators in (c) are exactly those  holomorphic $f:\Ha\to \Ha\cup\R$ whose Nevanlinna triple has the form $(a,0,\gamma)$. 
With Theorem \ref{cor_nevanlinna1} we see that, in particular, every $G(z) = \int_\R\frac1{x-z} \nu(dx)$, where $\nu\in \mathcal{P}(\R)$, is a generator of the form as in (c). 
\end{remark}

\section{$B$-transform and Boolean convolution}

\begin{definition}
Let $\mu \in \mathcal{P}(\R)$. The $B$-transform $B_\mu$ of $\mu$ is defined as 
\[B_\mu(z)=z-\frac1{G_\mu(z)}=z-F_\mu(z).\]
\end{definition}

\begin{example}\label{bhuuu}Let $\mu=\delta_0$. Then $B_\mu(z)=0$. More important is the example $\mu=\frac1{2}\delta_1+\frac1{2}\delta_{-1}$ for which we get 
\[B_\mu(z)=z-\frac2{1/(z-1)+1/(z+1)}=z-\frac{z^2-1}{z}=\frac{1}{z}.\]\hfill $\blacksquare$
\end{example}

\begin{theorem}\label{B_transform}Every $B$-transform maps $\Ha$ holomorphically into $\Ha^-\cup \R$.\\
Conversely, if $B:\Ha\to \Ha^-\cup \R$ is holomorphic, then the following statements are equivalent:
\begin{itemize}
	\item[(a)] $B=B_\mu$ for a probability measure $\mu\in\mathcal{P}(\R)$.
	\item[(b)] There exist  $a \in \R$ and a finite, non-negative measure $\gamma$ such that 
	\begin{equation*}
B(z) =   a + \int_{\R}\frac{1+xz}{z-x}\gamma(dx).
\end{equation*}
\end{itemize}
\end{theorem}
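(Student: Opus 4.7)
The whole theorem reduces, via the identity $B_\mu(z)=z-F_\mu(z)$, to the corresponding statements about $F$-transforms in Theorem \ref{Julia}.

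First I would dispose of the initial assertion. If $\mu\in\mathcal{P}(\R)$, then Theorem \ref{Julia} tells us $\Im F_\mu(z)\geq \Im z$ for every $z\in\Ha$, hence $\Im B_\mu(z)=\Im z-\Im F_\mu(z)\leq 0$, so $B_\mu$ is a holomorphic map from $\Ha$ into $\Ha^-\cup\R$.

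For the equivalence, assume first that $B=B_\mu$ for some $\mu\in\mathcal{P}(\R)$. By Theorem \ref{Julia}(b), $F_\mu$ admits a Pick--Nevanlinna representation
\[
F_\mu(z)=z+b+\int_{\R}\frac{1+xz}{x-z}\rho(dx)
\]
with $b\in\R$ and $\rho$ a finite non-negative Borel measure. Substituting into $B_\mu=z-F_\mu$ and flipping the sign of the kernel $\frac{1+xz}{x-z}=-\frac{1+xz}{z-x}$, I obtain
\[
B_\mu(z)=-b+\int_{\R}\frac{1+xz}{z-x}\rho(dx),
\]
which is (b) with $a:=-b$ and $\gamma:=\rho$.

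Conversely, assume (b). Set $F(z):=z-B(z)=z-a+\int_{\R}\frac{1+xz}{x-z}\gamma(dx)$. Then $F$ is a holomorphic mapping of $\Ha$ into $\Ha\cup\R$ (indeed, each kernel $\frac{1+xz}{x-z}$ takes values in $\Ha\cup\R$, and one adds $z$), and its Nevanlinna triple in the sense of Theorem \ref{thm_nevanlinna} is $(-a,1,\gamma)$. Since the coefficient of $z$ is $1$, we have $\lim_{y\to\infty}F(iy)/(iy)=1$, so Theorem \ref{Julia}(c)$\Rightarrow$(a) gives $F=F_\mu$ for some $\mu\in\mathcal{P}(\R)$, hence $B=z-F_\mu=B_\mu$. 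No step here is an obstacle: the content is the bookkeeping of the Nevanlinna data, and the real work was already done in proving Theorem \ref{Julia}.
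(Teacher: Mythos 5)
Your proof is correct and follows essentially the same route as the paper's: both rely on the identity $B_\mu = z - F_\mu$ together with the Pick--Nevanlinna characterization of $F$-transforms in Theorem \ref{Julia}, the only cosmetic difference being that the paper invokes Corollary \ref{cor_nevanlinna2} directly for the first assertion while you cite the last sentence of Theorem \ref{Julia} (which is itself a consequence of that corollary). The bookkeeping of signs and the identification of the Nevanlinna triple $(-a,1,\gamma)$ are carried out correctly.
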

\begin{proof}Consider the function $-B_\mu(z)=\frac1{G_\mu(z)}-z$. As the Nevanlinna triple $(a,b,\gamma)$ of $1/G_\mu$ satisfies 
$b=1$, Corollary \ref{cor_nevanlinna2} implies that $-B_\mu$ maps $\Ha$ into $\Ha\cup \R$.\\
Thus, if (a) holds, then $-B$ has a Nevanlinna triple $(a,0,\gamma)$, which shows (b). \\
If (b) holds, then Theorem \ref{Julia} implies that $z-B(z)$ is the $F$-transform for some $\mu\in\mathcal{P}(\R)$. Hence $B=B_\mu$.
\end{proof}

\begin{theorem}[Boolean convolution]\label{Boolean_convolution} Let $X$, $Y$ be Boolean independent with distributions $\mu, \nu\in \mathcal{P}_c(\R)$.  Let $\alpha$ be the distribution of $X+Y$. Then 
\[\text{$B_{\alpha}(z)=B_\mu(z)+B_\nu(z)$ for all $z\in \Ha$.}\]
\end{theorem}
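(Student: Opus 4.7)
I would mimic the moment-expansion strategy used in the proof of Theorem \ref{monotone_convolution}. For $|z|$ large (say $|z| > \|X\|+\|Y\|$), expand the resolvent
\[
(z-(X+Y))^{-1} = \sum_{n=0}^\infty (X+Y)^n/z^{n+1}
\]
and unfold each $(X+Y)^n$ into a sum of words in the letters $X,Y$. Every non-empty such word decomposes uniquely into maximal alternating blocks $Z_1^{k_1}Z_2^{k_2}\cdots Z_r^{k_r}$ with $Z_i\in\{X,Y\}$, $Z_i\neq Z_{i+1}$, and $k_i\geq 1$. Boolean independence (Definition \ref{111}(3)) then factorizes the expectation:
\[
\varphi(Z_1^{k_1}Z_2^{k_2}\cdots Z_r^{k_r}) = \varphi(Z_1^{k_1})\varphi(Z_2^{k_2})\cdots\varphi(Z_r^{k_r}).
\]

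Setting $u := zG_\mu(z)-1 = \sum_{k\geq 1}\varphi(X^k)z^{-k}$ and $v := zG_\nu(z)-1$, words starting with an $X$-block contribute $u + uv + uvu + uvuv + \cdots = u(1+v)/(1-uv)$ (a geometric series in $uv$), and analogously $v(1+u)/(1-uv)$ for words starting with a $Y$-block. Adding the constant $1$ from the empty word yields
\[
zG_\alpha(z) \;=\; 1 + \frac{u(1+v)+v(1+u)}{1-uv} \;=\; \frac{(1+u)(1+v)}{1-uv} \;=\; \frac{z^2 G_\mu(z)G_\nu(z)}{z(G_\mu(z)+G_\nu(z))-z^2 G_\mu(z)G_\nu(z)},
\]
which simplifies to $1/G_\alpha(z) = 1/G_\mu(z)+1/G_\nu(z)-z$, equivalently $B_\alpha(z) = B_\mu(z)+B_\nu(z)$, in a punctured neighborhood of $\infty$. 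Since $B_\alpha, B_\mu, B_\nu$ are holomorphic on $\Ha$ and the identity holds on a set with an accumulation point, the identity theorem extends it to all of $\Ha$.

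The main obstacle is purely combinatorial bookkeeping: verifying that the maximal-block decomposition gives a bijection between non-empty words and the terms being summed (so nothing is double-counted, in particular the empty word is accounted for exactly once), and that the geometric series in $uv$ converges in a neighborhood of $\infty$, which follows from $u,v = \LandauO(1/z)$ as $|z|\to\infty$. Once this bookkeeping is in place, the identity $B_\alpha = B_\mu + B_\nu$ drops out algebraically and the extension to $\Ha$ is routine.
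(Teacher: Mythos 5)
Your argument is correct. The paper itself does not give a proof of this theorem --- it only cites Speicher--Woroudi \cite{SW97} --- so your contribution is a genuinely self-contained proof. Your strategy of expanding $(z-(X+Y))^{-1}$ at $\infty$, decomposing each word in $X,Y$ into maximal alternating blocks, and applying the Boolean factorization rule is the direct analogue of how the paper treats the monotone case in Theorem \ref{monotone_convolution}; it is also essentially the argument used in \cite{SW97}. The bookkeeping you flag is fine: the maximal-block decomposition is a genuine bijection between non-empty words and sequences $(Z_1^{k_1},\ldots,Z_r^{k_r})$ with alternating $Z_i$ and $k_i\geq 1$; the double series $\sum_n\sum_{|w|=n}\varphi(w)/z^{n+1}$ converges absolutely for $|z|>\|X\|+\|Y\|$ (dominated by $\sum_n(\|X\|+\|Y\|)^n/|z|^{n+1}$), which justifies regrouping by starting letter; and $u,v=\LandauO(1/z)$ ensures $|uv|<1$ so the geometric series converges. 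The algebra
\[
zG_\alpha=\frac{(1+u)(1+v)}{1-uv}=\frac{z^2G_\mu G_\nu}{zG_\mu+zG_\nu-z^2G_\mu G_\nu}
\quad\Longrightarrow\quad
\frac{1}{G_\alpha}=\frac{1}{G_\mu}+\frac{1}{G_\nu}-z
\]
checks out, and passing from a neighborhood of $\infty$ to all of $\Ha$ by the identity theorem is legitimate since the three $B$-transforms are holomorphic on $\Ha$. What your approach buys, compared to merely citing the source, is a proof visibly parallel to the monotone computation already in the text, so the reader sees at a glance that the only change from $\rhd$ to $\uplus$ is swapping Muraki's factorization rule for the Boolean one, which replaces the nested resolvent identity $F_\alpha=F_\mu\circ F_\nu$ by the additive one $B_\alpha=B_\mu+B_\nu$.
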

\begin{proof}
See \cite{SW97}.
\end{proof}

Theorem \ref{B_transform} shows that $B_\mu + B_\nu$ is again a $B$-transform for any $\mu,\nu\in\mathcal{P}(\R)$.

\begin{definition}For $\mu,\nu\in\mathcal{P}(\R)$ we define the \emph{Boolean convolution} $\mu \uplus \nu\in \mathcal{P}(\R)$ via 
\[  B_{\mu \uplus \nu} = B_\mu + B_\nu  .\]
A probability measure $\mu\in \mathcal{P}(\R)$ is called \emph{Boolean infinitely divisible} if, for every $n\in\N$, there exists $\mu_n\in \mathcal{P}(\R)$ such that 
$\mu = \mu_n \uplus  \cdots \uplus  \mu_n$ ($n$-fold convolution).
\end{definition}

\begin{lemma}For $\lambda>0$, denote by $\mu_\lambda$ the distribution $\mu(\cdot/\lambda)$. Then $B_{\mu_{\lambda}}(z)=\lambda B_{\mu}(z/\lambda)$ for all $z\in\Ha$.
\end{lemma}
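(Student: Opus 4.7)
The plan is to reduce everything to the Cauchy transform via a change of variables, then read off the $B$-transform from its definition $B_\mu(z)=z-F_\mu(z)=z-1/G_\mu(z)$.

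First I would observe that the defining relation $\mu_\lambda(A)=\mu(A/\lambda)$ means precisely that $\mu_\lambda$ is the pushforward of $\mu$ under the map $x\mapsto \lambda x$. Applying the change-of-variables formula gives
\[
G_{\mu_\lambda}(z)=\int_\R \frac{1}{z-x}\,\mu_\lambda(dx)=\int_\R\frac{1}{z-\lambda y}\,\mu(dy)=\frac{1}{\lambda}\int_\R\frac{1}{z/\lambda-y}\,\mu(dy)=\frac{1}{\lambda}\,G_\mu(z/\lambda),
\]
which is valid for all $z\in\Ha$ since $\lambda>0$ implies $z/\lambda\in\Ha$.

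Next I would invert to obtain the $F$-transform. Since $F_\mu=1/G_\mu$ wherever $G_\mu$ does not vanish (and $G_\mu$ is nonzero on $\Ha$ because $\Im G_\mu<0$ there), it follows that
\[
F_{\mu_\lambda}(z)=\frac{1}{G_{\mu_\lambda}(z)}=\frac{\lambda}{G_\mu(z/\lambda)}=\lambda F_\mu(z/\lambda),
\]
which is the auxiliary identity already recorded for the monotone convolution. Finally, using the definition of the Boolean transform,
\[
B_{\mu_\lambda}(z)=z-F_{\mu_\lambda}(z)=z-\lambda F_\mu(z/\lambda)=\lambda\bigl(z/\lambda-F_\mu(z/\lambda)\bigr)=\lambda B_\mu(z/\lambda),
\]
as required. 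There is no real obstacle here; the lemma is a direct book-keeping consequence of the scaling behavior of $G_\mu$, and the only point worth noting is that $\lambda>0$ is needed so that $z/\lambda$ stays in $\Ha$ and the change of variables is legitimate.
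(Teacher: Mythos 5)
Your proof is correct and follows exactly the paper's route: derive the scaling $G_{\mu_\lambda}(z)=\frac{1}{\lambda}G_\mu(z/\lambda)$ by change of variables, then read off the $B$-transform from $B_\mu=z-1/G_\mu$. The extra intermediate step through $F_\mu$ is harmless book-keeping and does not change the argument.
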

\begin{proof}We have $G_{\mu_\lambda}(z)=G_{\mu_1}(z/\lambda)/\lambda$ and thus 
$B_{\mu_\lambda}(z)=z-\lambda/G_{\mu_1}(z/\lambda)=\lambda B_{\mu_1}(z/\lambda)$.
\end{proof}

\begin{theorem}[Boolean central limit theorem] Let $X_1,X_2,...$ be a sequence of Boolean independent identically distributed random variables with mean $\mu$ and variance $\sigma^2>0$. Let $\mu_n$ be the distribution of $S_n=(X_1+...+X_n-n\mu)/(\sigma\sqrt{n})$. As $n\to\infty$,  $\mu_n$ converges weakly to the distribution
 \[\frac1{2}\delta_1 + \frac1{2}\delta_{-1}.\]
\end{theorem}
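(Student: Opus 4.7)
The plan is to mirror the proof of the monotone central limit theorem, but exploiting the fact that Boolean convolution \emph{linearizes} the $B$-transform (Theorem \ref{Boolean_convolution}), which makes the argument substantially simpler than in the monotone case.

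First I would normalize, setting $Y_k = (X_k - \mu)/\sigma$, so that the $Y_k$ are Boolean independent with common distribution $\nu \in \mathcal{P}_c(\R)$ of mean $0$ and variance $1$. Writing $\alpha_n = \nu^{\uplus n}$, the distribution $\mu_n$ of $S_n$ is the rescaling of $\alpha_n$ by the factor $1/\sqrt n$. Combining Theorem \ref{Boolean_convolution} with the preceding scaling lemma yields
\[
B_{\mu_n}(z) \;=\; \tfrac{1}{\sqrt n}\,B_{\alpha_n}(\sqrt n\, z) \;=\; \tfrac{n}{\sqrt n}\,B_\nu(\sqrt n\, z) \;=\; \sqrt n\;B_\nu(\sqrt n\, z).
\]

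Next I would use the Nevanlinna representation of $F_\nu$ for a compactly supported measure of mean $0$ and variance $1$: by Theorem \ref{Julia}, equation \eqref{EV_finite_var}, there is a compactly supported non-negative measure $\tau$ with $\tau(\R) = 1$ such that
\[
F_\nu(z) = z + \int_\R \frac{1}{x-z}\,\tau(dx), \qquad \text{hence} \qquad B_\nu(z) = \int_\R \frac{1}{z-x}\,\tau(dx).
\]
Substituting into the formula above gives
\[
B_{\mu_n}(z) \;=\; \sqrt n \int_\R \frac{1}{\sqrt n\, z - x}\,\tau(dx) \;=\; \int_\R \frac{1}{z - x/\sqrt n}\,\tau(dx).
\]
For any $z \in \Ha$ and any $x$ in the compact set $\supp(\tau)$ we have $|z - x/\sqrt n| \geq \Im(z) > 0$, so the integrand is bounded by $1/\Im(z)$ and converges pointwise to $1/z$. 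Dominated convergence gives $B_{\mu_n}(z) \to 1/z$ pointwise on $\Ha$; boundedness on compact subsets of $\Ha$ plus Vitali's theorem upgrades this to locally uniform convergence.

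Finally, by Example \ref{bhuuu}, $1/z = B_{\nu_\infty}(z)$ with $\nu_\infty = \tfrac{1}{2}\delta_1 + \tfrac{1}{2}\delta_{-1}$. Since $F_{\mu_n}(z) = z - B_{\mu_n}(z)$ converges locally uniformly to $z - 1/z = F_{\nu_\infty}(z)$, and $F_{\nu_\infty}$ has no zeros on $\Ha$, the Cauchy transforms $G_{\mu_n} = 1/F_{\mu_n}$ converge locally uniformly on $\Ha$ to $G_{\nu_\infty}$. Lemma \ref{lemmaconvergence} then yields $\mu_n \to \nu_\infty$ weakly, as desired. There is no real obstacle in this proof: once one realizes that linearity of the $B$-transform replaces the intricate composition-of-semigroup analysis used in the monotone CLT, everything reduces to a one-line dominated-convergence computation; the only mild care required is to ensure local uniformity (so that one may pass between the $B$-, $F$- and $G$-transforms freely).
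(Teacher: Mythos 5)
Your proof is correct, but it takes a genuinely different route from the paper's. The paper expands the $B$-transform into a formal power series at $\infty$, $B_\nu(z)=\sum_{k\geq 1}\kappa_k/z^{k-1}$ with Boolean cumulants $\kappa_1=0$, $\kappa_2=1$, and reads the limit $1/z$ directly off the coefficients of $\sqrt n\,B_\nu(\sqrt n\,z)=1/z+\sum_{k\geq3}\kappa_k\,n^{1-k/2}/z^{k-1}$ — exactly the same strategy it uses for the free CLT. You instead invoke the integral representation $F_\nu(z)=z+\int_\R\frac{1}{x-z}\tau(dx)$ from Theorem \ref{Julia} (which is the tool the paper deploys for the \emph{monotone} CLT), so that $B_{\mu_n}(z)=\int_\R\frac{1}{z-x/\sqrt n}\,\tau(dx)$, and then finish with dominated convergence plus Vitali. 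Your version avoids the mild issue that the cumulant power series only converges for $|z|$ large (which the paper papers over silently with normality), and it explicitly carries out the $B\to F\to G$ passage — noting that $z-1/z$ is zero-free on $\Ha$ so that $G_{\mu_n}\to G_{\nu_\infty}$ locally uniformly — which the paper leaves implicit when it cites Lemma \ref{lemmaconvergence}. The paper's version is shorter and makes the cumulant structure visible. Two small polish points on your write-up: the pointwise limit of the integrand is $1/z$ \emph{as a constant in $x$}, so the integral gives $\tau(\R)/z=1/z$; and you may as well observe that $B_{\mu_n}:\Ha\to\Ha^-\cup\R$ form a normal family by Theorem \ref{B_transform}, which is what licenses the application of Vitali's theorem.
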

\begin{proof}
Let $Y=(X_1-\mu)/\sigma$ and denote its distribution by $\nu$. Let $B_\nu(z)=\sum_{k=1}^\infty \kappa_{k}/z^{k-1}$. Then $\kappa_1=0$ and $\kappa_2=1$. By the previous lemma and Theorem \ref{Boolean_convolution} we have  
\[B_{\mu_n}(z)=\frac1{\sqrt{n}}B_{\nu^{\uplus n}}(\sqrt{n}z)=\frac{n}{\sqrt{n}}B_{\nu}(z/\sqrt{n})=
1/z + \sum_{k=3}^\infty \kappa_{k}n^{1-k/2}/z^{k-1}\to \frac1{z}\]
as $n\to\infty$ locally uniformly in $\Ha$. By Lemma \ref{lemmaconvergence} and Example \ref{bhuuu} we have $\mu_n\to\frac1{2}\delta_1 + \frac1{2}\delta_{-1}$ as $n\to\infty$.
\end{proof}

\begin{theorem}\label{measures_incre_Boolean} Every probability measure $\mu\in \mathcal{P}(\R)$ is infinitely divisible with respect to Boolean convolution.
\end{theorem}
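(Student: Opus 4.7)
The plan is to reduce the claim to the additive structure of Boolean convolution and then apply the characterization of $B$-transforms given by Theorem \ref{B_transform}. Since $B_{\mu \uplus \nu} = B_\mu + B_\nu$, finding an $n$th Boolean root $\mu_n$ of $\mu$ amounts to finding a probability measure whose $B$-transform is $\tfrac{1}{n} B_\mu$. So the whole question is: for any $n \in \N$, is $\tfrac{1}{n} B_\mu$ itself a $B$-transform?

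To answer this, I would invoke Theorem \ref{B_transform}, which tells us that $B_\mu$ can be written in the form
\[
B_\mu(z) = a + \int_\R \frac{1+xz}{z-x}\, \gamma(dx),
\]
with $a \in \R$ and $\gamma$ a finite non-negative Borel measure on $\R$. Dividing this representation by $n$ gives
\[
\frac{1}{n} B_\mu(z) = \frac{a}{n} + \int_\R \frac{1+xz}{z-x}\, \frac{\gamma(dx)}{n},
\]
which is visibly again of the form in Theorem \ref{B_transform}(b) (with $a' = a/n \in \R$ and $\gamma' = \gamma/n$ still finite and non-negative). By the converse direction of Theorem \ref{B_transform}, there exists a probability measure $\mu_n \in \mathcal{P}(\R)$ with $B_{\mu_n} = \tfrac{1}{n} B_\mu$.

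It then follows from the definition of $\uplus$ that $B_{\mu_n^{\uplus n}} = n B_{\mu_n} = B_\mu$, and since the $B$-transform determines the measure (via $F_\mu = z - B_\mu$ and the Stieltjes–Perron inversion formula for $G_\mu = 1/F_\mu$), we conclude $\mu_n^{\uplus n} = \mu$. Since $n$ was arbitrary, $\mu$ is Boolean infinitely divisible. There is really no obstacle here: the whole point is that the set of $B$-transforms is a convex cone closed under scaling by positive reals, which makes every probability measure infinitely divisible for $\uplus$ — in sharp contrast to the classical, free, and monotone cases.
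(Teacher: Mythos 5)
Your proof is correct and is essentially the same as the paper's: both reduce the problem to showing that $\tfrac{1}{n}B_\mu$ is again a $B$-transform, which follows immediately from the Nevanlinna-type representation in Theorem \ref{B_transform}(b) being closed under scaling by positive reals. You spell out the representation step in slightly more detail, but the underlying idea and key cited result are identical.
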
 
\begin{proof}For $n\in\N$, let $B_n(z)=B_\mu(z)/n$. Then Theorem \ref{B_transform} implies that $B_n=B_{\mu_n}$ for 
some $\mu_n\in\mathcal{P}(\R)$. Thus $\mu = \mu_n^{\uplus n}$.
\end{proof}

\section{$R$-transform and free convolution}

Let $\mu\in \mathcal{P}_c(\R)$. The Cauchy transform $G_\mu$ is analytic in a neighborhood of $\infty$ and $G_\mu'(\infty)\not=0$. So we can define its compositional right inverse $G_\mu^{-1}$ in a neighborhood $U_\mu$ of $0$, i.e.\ $(G_\mu \circ G_\mu^{-1})(z)=z$ for all 
$z\in U_\mu$. For all $z\in U_\mu$, we define the $R$-transform $R_\mu$ as $R_\mu(z)=G_\mu^{-1}(z)-\frac1{z}$. \\

For a general $\mu\in \mathcal{P}(\R)$, $G_\mu$ can be inverted within a set of the form $\Gamma_{\alpha,\beta}=\{z\in \Ha\,|\, \Re(z)<\alpha\Im(z), |z|>\beta \}$ (or in the corresponding set in the lower half-plane), for some $\alpha, \beta>0$ and $R_\mu(z)=G_\mu^{-1}(z)-\frac1{z}$ is defined as a holomorphic function that maps some subdomain of the lower half-plane into the lower half-plane or into $\R$; see \cite[Section 5]{BV93}.

\begin{remark}The transform $R_\mu(1/z)$ is usually called Voiculescu transform and often denoted by $\varphi_\mu$. 
\end{remark}

\begin{example}\label{ex_semicircle}Recall the semicircle distribution $W(0,\sigma^2)$ given by the density 
\[\frac{1}{2\pi\sigma^2} \sqrt{4\sigma^2 -x^2}, \quad  x\in [-2\sigma,2\sigma].\] 
We have $G_{W(0,1)}(z)=\frac{z-\sqrt{z^2-4}}{2}$, where the branch of the square root is chosen such that $\sqrt{\cdot}$ maps $\C^2\setminus [-4,\infty)$ into the upper half-plane. 
Solving the equation $\frac{z-\sqrt{z^2-4}}{4}=w$ gives $z=w+\frac{1}{w}$ and thus \[R_{W(0,1)}(z)= z.\]\hfill $\blacksquare$
\end{example}

 \begin{figure}[H]
 \begin{center}
 \includegraphics[width=0.9\textwidth]{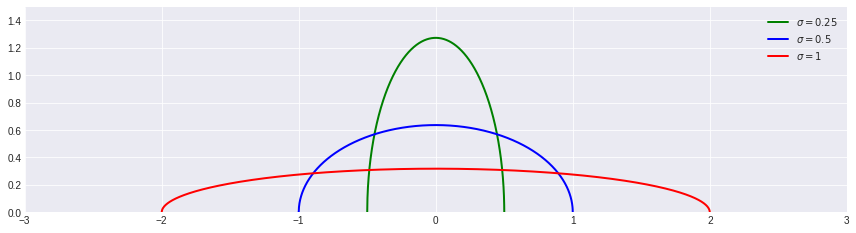}
 \caption{Densities of some centered semicircle distributions.}
 \end{center}
 \end{figure}

\begin{theorem}[Free convolution]\label{free_convolution}
Let $X, Y$ be freely independent with distributions  $\mu, \nu\in \mathcal{P}_c(\R)$. Let $\alpha$ be the distribution of $X+Y$. Then 
\[R_{\alpha}(z)=R_\mu(z)+R_\nu(z)\]
 for all $z$ in some neighborhood of $0$.
\end{theorem}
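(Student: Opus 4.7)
The plan is to prove this via the theory of free cumulants and non-crossing partitions. Since $\mu, \nu \in \mathcal{P}_c(\R)$, the Cauchy transforms $G_\mu, G_\nu$ admit Laurent expansions at $\infty$ of the form $1/z + O(1/z^2)$, see \eqref{Cauchy_moments}, so their compositional inverses exist near $0$ and $R_\mu, R_\nu, R_\alpha$ are well-defined analytic germs. First, I would introduce the free cumulants $(\kappa_n(\mu))_{n \geq 1}$ by setting
\[ R_\mu(z) = \sum_{n \geq 1} \kappa_n(\mu) z^{n-1}, \]
and translating the identity $G_\mu(1/z + R_\mu(z)) = z$ via a formal Lagrange inversion into the classical moment-cumulant formula
\[ m_n(\mu) = \sum_{\pi \in NC(n)} \prod_{V \in \pi} \kappa_{|V|}(\mu), \]
where $NC(n)$ denotes the lattice of non-crossing partitions of $\{1, \ldots, n\}$. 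Extending multilinearly, I would define multivariate cumulants $\kappa_n(Y_1, \ldots, Y_n)$ of random variables in $(B(H), \varphi)$ by inverting the analogous relation $\varphi(Y_1 \cdots Y_n) = \sum_{\pi \in NC(n)} \kappa_\pi(Y_1, \ldots, Y_n)$, where $\kappa_\pi$ denotes the product over blocks of $\pi$ of cumulants of the corresponding sub-tuples.

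With this machinery, the theorem is equivalent to the additivity $\kappa_n(X+Y) = \kappa_n(X) + \kappa_n(Y)$ for every $n \geq 1$. Multilinear expansion gives
\[ \kappa_n(\underbrace{X+Y, \ldots, X+Y}_{n}) = \sum_{\eps \in \{X,Y\}^n} \kappa_n(\eps_1, \ldots, \eps_n), \]
so it suffices to show that every \emph{mixed} cumulant, in which both $X$ and $Y$ appear among the arguments, vanishes. The two pure cumulants $\kappa_n(X, \ldots, X)$ and $\kappa_n(Y, \ldots, Y)$ are by construction $\kappa_n(\mu)$ and $\kappa_n(\nu)$, so additivity would follow, yielding $R_\alpha(z) = R_\mu(z) + R_\nu(z)$ on a common neighborhood of $0$.

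The main obstacle is the vanishing of mixed cumulants, which is where freeness enters. I would prove this by induction on $n$. The base case $n = 2$ is immediate: $\kappa_2(X, Y) = \varphi(XY) - \varphi(X)\varphi(Y) = 0$ by freeness. For the inductive step on a strictly alternating mixed word $Y_1, \ldots, Y_n$, I would apply the freeness axiom to the centered product $(Y_1 - \varphi(Y_1)) \cdots (Y_n - \varphi(Y_n))$, which has expectation zero; expanding this expectation via the moment-cumulant formula and using the induction hypothesis to kill every $\kappa_\pi$ whose partition has a block of mixed type, the only surviving contribution turns out to be the single-block partition term, which equals $\kappa_n(Y_1, \ldots, Y_n)$ itself. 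Hence this cumulant vanishes. A general mixed word reduces to the alternating case by grouping consecutive factors of the same type into a single variable and reapplying the moment-cumulant inversion. Once all mixed cumulants vanish, the additivity of cumulants and hence of the $R$-transform follows on a common neighborhood of $0$ where all relevant series converge.
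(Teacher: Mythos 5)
The paper gives no proof of its own: it simply points to Voiculescu's original article \cite{Voi86}, where the additivity of the $R$-transform is established by analytic/operator-theoretic means. Your proposal instead follows the combinatorial route due to Speicher (as developed in \cite{NS10}): introduce free cumulants via the moment--cumulant formula over non-crossing partitions, reformulate the statement as additivity of cumulants, and reduce that to the vanishing of mixed cumulants under freeness. This is a genuinely different, and now quite standard, approach; it buys you a purely algebraic proof that applies to arbitrary non-commutative probability spaces without analytic machinery, and it fits naturally with the remark in the paper's ``Further reading'' that free convolution acts additively on cumulants. Voiculescu's argument, by contrast, is more directly tied to the functional-analytic setup and historically came first.

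One step in your sketch is compressed to the point of being misleading. You say a general mixed word ``reduces to the alternating case by grouping consecutive factors of the same type into a single variable and reapplying the moment-cumulant inversion.'' One cannot simply replace two adjacent arguments $X,X$ by a single $X^{2}$ inside $\kappa_n$ and invoke the alternating case; the multivariate cumulant of a shorter word with a product as one entry is \emph{not} the cumulant of the longer word, nor does the centering carry over ($\varphi(X^{2})\neq 0$ in general). What is actually needed is the formula for cumulants with products as arguments (Theorem 11.12 in \cite{NS10}), which expresses $\kappa_m(a_1\cdots a_{i_1},\,a_{i_1+1}\cdots a_{i_2},\,\ldots)$ as a sum of $\kappa_\pi(a_1,\ldots,a_n)$ over non-crossing partitions $\pi$ that join the blocks of the interval partition induced by the grouping. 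With that lemma in hand the reduction to alternating centered words is indeed routine, and the rest of your inductive argument is sound: for a strictly alternating centered word, freeness gives $\varphi(Y_1\cdots Y_n)=0$, singleton blocks contribute $\kappa_1=0$, and every non-singleton block in a proper non-crossing partition of an alternating word is mixed (hence zero by induction), leaving only the full-block term $\kappa_n(Y_1,\ldots,Y_n)$.
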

\begin{proof}See \cite{Voi86}.
\end{proof}

For general $\mu,\nu\in\mathcal{P}(\R)$, it is shown in \cite[Corollary 5.8]{BV93} that $R_\mu(z)+R_\nu(z)$ is again an $R$-transform of some probability measure.

\begin{definition}For $\mu,\nu\in\mathcal{P}(\R)$ we define the \emph{free convolution} $\mu \boxplus \nu\in \mathcal{P}(\R)$ via 
\[  R_{\mu \boxplus \nu} = R_\mu + R_\nu  .\]
A probability measure $\mu\in \mathcal{P}(\R)$ is called \emph{freely infinitely divisible} if, for every $n\in\N$, there exists $\mu_n\in \mathcal{P}(\R)$ such that 
$\mu = \mu_n \boxplus  \cdots \boxplus  \mu_n$ ($n$-fold convolution).
\end{definition}

\begin{lemma}Let $X$ be a quantum random variable with distribution $\mu$. For $\lambda>0$, denote by $\mu_\lambda$ the distribution of $\lambda X$. Then $R_{\mu_{\lambda}}(z)= \lambda R_{\mu_1}(\lambda z)$ for all $z$ in a neighborhood of $0$.
\end{lemma}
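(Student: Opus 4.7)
The plan is to reduce everything to a change of variables in the Cauchy transform and then invert. First I would compute $G_{\mu_\lambda}$ directly from its integral representation: substituting $x=\lambda y$ (which is the pushforward relation between $\mu_\lambda$ and $\mu_1=\mu$) yields
\begin{equation*}
G_{\mu_\lambda}(z)=\int_\R\frac{1}{z-x}\,\mu_\lambda(dx)=\int_\R\frac{1}{z-\lambda y}\,\mu(dy)=\frac{1}{\lambda}G_{\mu_1}\!\left(\frac{z}{\lambda}\right).
\end{equation*}

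Next I would invert this identity. Since $G_{\mu_1}$ admits a right inverse $G_{\mu_1}^{-1}$ on a suitable domain near $0$ (as discussed just before the statement, using that $G_{\mu_1}$ is analytic near $\infty$ with nonvanishing derivative there, or on a truncated cone $\Gamma_{\alpha,\beta}$ in the general case), solving $G_{\mu_\lambda}(w)=z$ for $w$ gives $G_{\mu_1}(w/\lambda)=\lambda z$, hence $w=\lambda G_{\mu_1}^{-1}(\lambda z)$. Thus on a neighborhood of $0$ (shrunk if necessary so that $\lambda z$ lies in the domain of $G_{\mu_1}^{-1}$) we have
\begin{equation*}
G_{\mu_\lambda}^{-1}(z)=\lambda\,G_{\mu_1}^{-1}(\lambda z).
\end{equation*}

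Finally I would substitute into the definition $R_\nu(z)=G_\nu^{-1}(z)-1/z$:
\begin{equation*}
R_{\mu_\lambda}(z)=G_{\mu_\lambda}^{-1}(z)-\frac{1}{z}=\lambda\,G_{\mu_1}^{-1}(\lambda z)-\frac{1}{z}=\lambda\!\left(R_{\mu_1}(\lambda z)+\frac{1}{\lambda z}\right)-\frac{1}{z}=\lambda R_{\mu_1}(\lambda z),
\end{equation*}
the singular terms cancelling by design. There is no real obstacle here; the only thing to be careful about is the domain of validity, but since $G_{\mu_1}^{-1}$ is defined on some neighborhood of $0$ and $z\mapsto\lambda z$ is a homeomorphism of $\C$, the identity automatically holds on a (possibly smaller) neighborhood of $0$, which is all that is asserted.
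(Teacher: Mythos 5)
Your proposal is correct and follows essentially the same route as the paper: both start from the scaling identity $G_{\mu_\lambda}(z)=\tfrac{1}{\lambda}G_{\mu_1}(z/\lambda)$ and then pass to the compositional inverse to read off the $R$-transform relation. The paper verifies directly that $\lambda R_{\mu_1}(\lambda z)+1/z$ inverts $G_{\mu_\lambda}$, whereas you first write out the identity $G_{\mu_\lambda}^{-1}(z)=\lambda G_{\mu_1}^{-1}(\lambda z)$ and then substitute; this is a cosmetic reorganization of the same argument.
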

\begin{proof}We have $G_{\mu_\lambda}(z)=G_{\mu_1}(z/\lambda)/\lambda=:G(z)$ and from
$G_{\mu_1}(R_{\mu_1}(\lambda z)+1/(\lambda z)) = \lambda z$ we see that 
$G(\lambda R_{\mu_1}(\lambda z)+1/z)=z$. Hence $R_{\mu_{\lambda}}(z)= \lambda R_{\mu_1}(\lambda z)$.
\end{proof}

\begin{theorem}[Free central limit theorem] Let $X_1,X_2,...$ be a sequence of freely independent identically distributed random variables with mean $\mu$ and variance $\sigma^2>0$. Let $\mu_n$ be the distribution of $S_n=(X_1+...+X_n-n\mu)/(\sigma\sqrt{n})$. Then $\mu_n$ converges weakly to the semicircle distribution $W(0,1)$ as $n\to\infty$.
\end{theorem}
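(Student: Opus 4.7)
The plan is to follow the template of the Boolean and monotone central limit theorem proofs, but with the $R$-transform playing the role of the linearizing transform for $\boxplus$. First I would normalize by setting $Y_k = (X_k-\mu)/\sigma$, a sequence of freely independent identically distributed variables with mean $0$ and variance $1$; call their common distribution $\nu$. Then $S_n$ has distribution $\mu_n$ equal to the scaling by $1/\sqrt{n}$ of $\nu^{\boxplus n}$. By the preceding scaling lemma for the $R$-transform together with Theorem \ref{free_convolution}, we have
\[ R_{\mu_n}(z) \;=\; \tfrac{1}{\sqrt{n}}\, R_{\nu^{\boxplus n}}\!\left(\tfrac{z}{\sqrt{n}}\right) \;=\; \tfrac{n}{\sqrt{n}}\, R_\nu\!\left(\tfrac{z}{\sqrt{n}}\right) \;=\; \sqrt{n}\, R_\nu\!\left(\tfrac{z}{\sqrt{n}}\right). \]

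Next I would compute the Taylor expansion of $R_\nu$ at $0$. Since $\nu$ has mean $0$ and variance $1$, the expansion \eqref{Cauchy_moments} gives $G_\nu(z) = 1/z + 1/z^3 + O(1/z^4)$ near $\infty$, and inverting this Laurent series gives $G_\nu^{-1}(w) = 1/w + w + O(w^2)$ near $0$, hence
\[ R_\nu(w) \;=\; G_\nu^{-1}(w) - \tfrac{1}{w} \;=\; w + O(w^2). \]
Substituting yields $R_{\mu_n}(z) = z + \sqrt{n}\cdot O(1/n) = z + O(1/\sqrt{n})$, so $R_{\mu_n}(z) \to z$ locally uniformly near $0$. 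By Example \ref{ex_semicircle}, $R_{W(0,1)}(z) = z$, so $R_{\mu_n} \to R_{W(0,1)}$.

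To finish I would translate this into convergence of the Cauchy transforms themselves. The identity $G_{\mu_n}^{-1}(w) = R_{\mu_n}(w) + 1/w$ gives locally uniform convergence of the compositional inverses $G_{\mu_n}^{-1}$ to $G_{W(0,1)}^{-1}$ on a common punctured neighborhood of $0$; inverting (using Hurwitz's theorem to control univalence) yields $G_{\mu_n} \to G_{W(0,1)}$ locally uniformly on a neighborhood of $\infty$ in $\Ha$. Because the Cauchy transforms form a normal family on $\Ha$, Vitali's theorem promotes this to locally uniform convergence on all of $\Ha$. An application of Lemma \ref{lemmaconvergence} then gives $\mu_n \to W(0,1)$ weakly.

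The main obstacle is the step from convergence of $R$-transforms to convergence of Cauchy transforms on $\Ha$, since $R_\mu$ is only a priori defined on a truncated Stolz-type cone $\Gamma_{\alpha,\beta}$ near $0$ when $\mu$ does not have compact support. For compactly supported $\nu$, everything is convergent power series and the argument above is routine; for the general finite-variance case, one needs the Bercovici--Voiculescu framework to ensure that the domains $\Gamma_{\alpha,\beta}$ can be chosen uniformly in $n$ (this is essentially the content of their continuity theorem for the Voiculescu transform), and to justify the asymptotic $R_\nu(w) = w + o(w)$ as $w \to 0$ non-tangentially under only a finite-variance assumption. Once this is in place, the rest of the proof goes through verbatim.
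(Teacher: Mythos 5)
Your proof follows the paper's exactly: normalize to $\nu$, apply the $R$-transform scaling lemma and $\boxplus$-additivity to obtain $R_{\mu_n}(z)=\sqrt{n}\,R_\nu(z/\sqrt{n})\to z$ locally uniformly near $0$, and conclude weak convergence to $W(0,1)$ via Lemma \ref{lemmaconvergence} and Example \ref{ex_semicircle}. You are more explicit than the paper about converting convergence of $R$-transforms near $0$ into convergence of Cauchy transforms on all of $\Ha$ (the paper cites Lemma \ref{lemmaconvergence} without spelling this out), and since the paper's random variables are bounded operators with compactly supported distributions, the power-series argument you give suffices and the closing caveat about the Bercovici--Voiculescu framework for unbounded support is not needed in this setting.
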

\begin{proof}
Let $Y=(X_1-\mu)/\sigma$ and denote its distribution by $\nu$. Let $R_\nu(z)=\sum_{k=0}^\infty \kappa_{k+1}z^k$. Then $\kappa_1=0$ and $\kappa_2=1$. By the previous lemma and Theorem \ref{free_convolution} we have  
\[R_{\mu_n}(z)=\frac1{\sqrt{n}}R_{\nu^{\boxplus n}}(z/\sqrt{n})=\frac{n}{\sqrt{n}}R_{\nu}(z/\sqrt{n})=z + \sum_{k=2}^\infty \kappa_{k+1}z^k/n^{(k-1)/2}\to z\]
as $n\to\infty$ locally uniformly in a neighborhood of $0$. By Lemma \ref{lemmaconvergence} and Example \ref{ex_semicircle} we have $\mu_n\to W(0,1)$ as $n\to\infty$.
\end{proof}

\begin{theorem}[Theorem 5.10 in \cite{BV93}] \label{thmBV93}
For a probability measure $\mu$ on $\R$, the following statements are equivalent.
\begin{enumerate}[\rm(1)]
\item $\mu$ is freely infinitely divisible.
\item $\mu=\mu_1$  for a $\boxplus$-semigroup $\{\mu_t\}_{t\geq0}$
 (i.e.\ $\mu_0=\delta_0$, $\mu_{t+s}=\mu_t \boxplus \mu_s$ for all $s,t\geq0$ and $t\mapsto\mu_t$ is continuous with respect to weak convergence).
\item $R_\mu(1/z)$ extends to a holomorphic function from $\Ha$ into $\Ha^- \cup \R$.
\item\label{FLK1} There exist $a \in \R$ and a finite, non-negative measure $\gamma$ on $\R$ such that
\begin{equation}\label{form0}
\R_\mu(1/z)=a +\int_{\mathbb{R}}\frac{1+z x}{z-x} \gamma({\rm d}x) ,\qquad z\in \mathbb{H}.
\end{equation}
\end{enumerate}
\end{theorem}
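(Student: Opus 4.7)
The plan is to prove the cyclic implications (4)$\Rightarrow$(3)$\Rightarrow$(2)$\Rightarrow$(1)$\Rightarrow$(4), paralleling the monotone analogue Theorem~\ref{inf_div_mon}.

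For (4)$\Rightarrow$(3), the sign identity $\frac{1+zx}{z-x}=-\frac{1+xz}{x-z}$ lets me rewrite (4) as a Nevanlinna representation of $-R_\mu(1/z)$ with triple $(-a,0,\gamma)$. By Theorem~\ref{thm_nevanlinna}, $-R_\mu(1/z)$ maps $\Ha$ into $\Ha\cup\R$, so $R_\mu(1/z)$ maps $\Ha$ into $\Ha^-\cup\R$, which is (3).

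For (3)$\Rightarrow$(2), set $\phi(z):=R_\mu(1/z)$ and, for each $t\geq 0$, construct $\mu_t\in\mathcal{P}(\R)$ with $R_{\mu_t}(1/z)=t\phi(z)$. Concretely, I would show that the equation $w+t\phi(w)=z$ has, for $z$ in a truncated cone $\Gamma_{\alpha,\beta}\subset\Ha$, a unique univalent solution $F_t(z)$, and that $F_t$ extends to a holomorphic self-map of $\Ha$ which is an $F$-transform via the criterion $\lim_{y\to\infty}F_t(iy)/(iy)=1$ from Theorem~\ref{Julia}. Additivity of $\phi$ then yields $\mu_s\boxplus\mu_t=\mu_{s+t}$, with weak continuity of $t\mapsto\mu_t$ from Lemma~\ref{lemmaconvergence}. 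The step (2)$\Rightarrow$(1) is immediate, since $\mu=\mu_1=\mu_{1/n}^{\boxplus n}$ for every $n\in\N$.

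The principal direction is (1)$\Rightarrow$(4). Given $\mu=\nu_n^{\boxplus n}$, additivity forces $R_{\nu_n}(1/z)=\phi(z)/n$ on the common cone where both sides are defined. Since $\phi_{\nu_n}\to 0$ uniformly on compact subsets, $\nu_n\to\delta_0$ weakly by Lemma~\ref{lemmaconvergence}, and the natural domains of the $\phi_{\nu_n}$ enlarge as $n$ grows; multiplying by $n$ and passing to the limit extends $\phi$ to a holomorphic function on all of $\Ha$ with values in $\Ha^-\cup\R$, giving (3). Now apply Theorem~\ref{thm_nevanlinna} to $-\phi$ to obtain a Nevanlinna triple $(a',b,\gamma)$. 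The coefficient $b$ vanishes because $\phi(z)=R_\mu(1/z)$ has a finite non-tangential limit $R_\mu(0)$ at $\infty$, forcing $b=\lim_{y\to\infty}(-\phi(iy))/(iy)=0$ via Remark~\ref{nontgh}. Rewriting $\int\frac{1+xz}{x-z}\gamma(dx)$ as $-\int\frac{1+zx}{z-x}\gamma(dx)$ then yields the form (4) with $a=-a'$.

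The main obstacle lies in (1)$\Rightarrow$(4), specifically the domain-enlargement step: proving that the natural domains of the Voiculescu transforms $\phi_{\nu_n}$ expand to exhaust $\Ha$ as $n\to\infty$, so that the identity $\phi=n\phi_{\nu_n}$ can be promoted to a holomorphic extension of $\phi$ to all of $\Ha$. This requires uniform control of the cones $\Gamma_{\alpha_n,\beta_n}$ on which $\phi_{\nu_n}$ is defined, which in turn hinges on uniform estimates for the $F$-transforms $F_{\nu_n}$ near $\infty$ arising from $\nu_n\to\delta_0$. Once this extension is in hand, the rest of the argument is a direct application of the Nevanlinna machinery of Chapter~4 together with the asymptotic behavior of $R_\mu$ near $0$.
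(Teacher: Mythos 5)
The paper offers no proof of this theorem --- it is stated as a citation to Theorem 5.10 of \cite{BV93} --- so there is no ``paper's proof'' to compare against, and your proposal must stand on its own.

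The cyclic structure (4)$\Rightarrow$(3)$\Rightarrow$(2)$\Rightarrow$(1)$\Rightarrow$(4) is sensible and your (4)$\Rightarrow$(3) and (2)$\Rightarrow$(1) steps are fine. However, the two steps you treat most casually are precisely where the real work lives, and at present they are genuine gaps rather than routine verifications. In (3)$\Rightarrow$(2), asserting that $w + t\phi(w)=z$ has a unique solution $F_t(z)\in\Ha$ for every $z\in\Ha$ and every $t\geq0$ is not automatic on the unbounded domain $\Ha$: Example~\ref{ex_H} shows the resolvent equation for a self-map generator can fail to be globally solvable. What saves you here is the additional asymptotic $\phi(z)/z\to 0$ non-tangentially (i.e.\ $b=0$ in the Nevanlinna data of $-\phi$), but you need to actually prove that this constraint forces $w\mapsto w+t\phi(w)$ to be a bijection of $\Ha$ onto a subdomain and that its inverse is an $F$-transform; this is essentially the content of Corollary 5.8 in \cite{BV93} and is not a one-line consequence of Theorem~\ref{Julia}.

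In (1)$\Rightarrow$(4) the logic as written is slightly circular. You infer $\nu_n\to\delta_0$ from ``$\phi_{\nu_n}\to 0$ uniformly on compact subsets,'' but the only thing you know a priori is $\phi_{\nu_n}=\phi_\mu/n$ on a fixed truncated cone $\Gamma_{\alpha,\beta}$, not on compact subsets of $\Ha$ --- the enlargement of the domains is exactly what you are trying to establish, and Lemma~\ref{lemmaconvergence} needs $G_{\nu_n}\to G_{\delta_0}$ locally uniformly on all of $\Ha$. The correct order is: first prove that convolution $n$-th roots of a fixed measure converge weakly to $\delta_0$ (this uses a separate compactness/uniformity argument about the $F_{\nu_n}$), then use that to show the domains of the $\phi_{\nu_n}$ exhaust $\Ha$, then conclude the extension of $\phi_\mu$. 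You correctly identify this as the central obstacle, but acknowledging the gap does not close it. Finally, a minor slip: $\phi(z)=R_\mu(1/z)$ need not have a finite non-tangential limit at $\infty$ (that would require $\mu$ to have a first moment); what is true, and what suffices for $b=0$ via Remark~\ref{nontgh}, is the weaker fact $\phi(iy)/(iy)\to 0$, which holds because $F_\mu^{-1}(iy)/(iy)\to 1$.
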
 

\begin{example}The semicircle distribution $W(0,1)$ is infinitely divisible as $R_{W(0,1)}(1/z)=1/z$ extends analytically to the whole upper half-plane. Thus also $F^{-1}_{W(0,1)}(z)=z+\frac1{z}$ extends analytically to the upper half-plane. While $F_{W(0,1)}(\Ha)$ is the complement of a half-disc in $\Ha$, $F^{-1}_{W(0,1)}(\Ha)=\C\setminus ((-\infty, -2] \cup [2,\infty))$. 
 \begin{figure}[H]
 \begin{center}
 \includegraphics[width=0.9\textwidth]{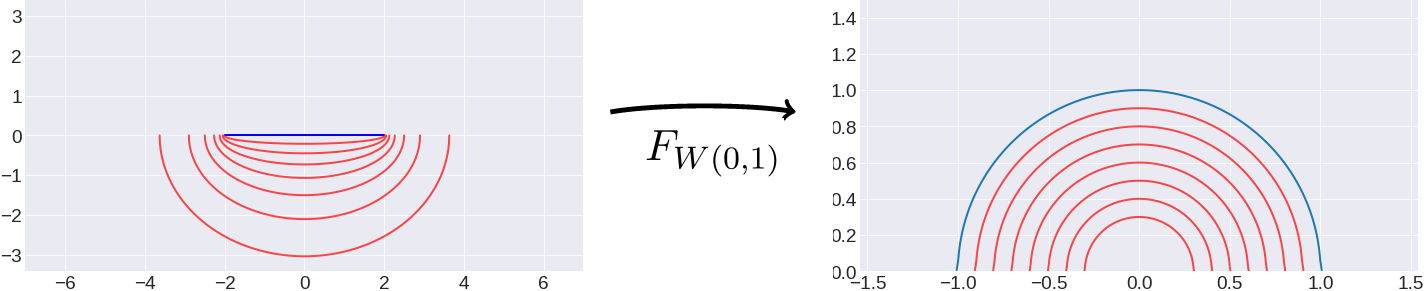}
 \caption{$F$-transform of the semicircle distribution $W(0,1)$.}
 \end{center}
 \end{figure}
\vspace{-8mm}\hfill $\blacksquare$
\end{example}

\begin{example}
The normal distribution $\mathcal{N}(0,1)$ has been shown to be freely infinitely divisible in \cite{BBLS11}.\hfill $\blacksquare$
\end{example}

\section{Further reading}

Many further examples of $\rhd$- and $\boxplus$-infinitely divisible distributions can be found in \cite{FHS}.\\

The five convolutions can also be approached from a combinatorial point of view. Suppose $\mu$ has compact support, then all moments $m_1,m_2,...$ of $\mu$ exist. Furthermore, we can expand the $R$-transform as $R_\mu(z)=\sum_{k=0}^\infty \kappa_{n+1}z^n$ and the numbers $\kappa_1, \kappa_2,...$ are called the \emph{free cumulants} of $\mu$. One can show that $\kappa_n$ can be written as a polynomial in $m_1,...,m_n$. For example,
\[\kappa_1=m_1, \quad \kappa_2=m_2-m_1^2, \quad \kappa_3=m_3-3m_2m_1+2m_1^3.\]
The free convolution of two probability measures $\mu,\nu\in\mathcal{P}_c(\R)$ can now be expressed as 
\[\text{$\kappa_{n}(\mu \boxplus \nu) = \kappa_{n}(\mu)+\kappa_{n}(\nu)$ for all $n\in\N$.}\]

In the classical case, cumulants of a bounded random variable $X$ are defined via the power series expansion of the function $\log \mathbb{E}[e^{tX}]$, see \cite[p.147]{bill}. We refer to \cite{NS10} for the free cumulants and to \cite{SW97} for the Boolean cumulants. The monotone case is studied in \cite{HS11}. 

\newpage

\section{Exercises}

\begin{exercise}Let $\star\in\{\boxplus, \uplus\}$. Prove that if $\mu, \nu$ are $\star$-infinitely divisible, then also 
$\mu\star\nu$ is $\star$-infinitely divisible.
\end{exercise}

\begin{exercise}Determine all $\mu\in \mathcal{P}(\R)$ having  a Boolean L\'{e}vy pair of the form $(0,b\cdot \delta_c)$ with $c\in \R$, $b\geq 0$, i.e.\ $B_\mu(z)=b\frac{1+cz}{z-c}$.
\end{exercise}

\begin{exercise}Determine $F_\mu$ for all $\rhd$-infinitely divisible $\mu$ such that the generator $G$ in Theorem \ref{inf_div_mon} has the form $G(z)=\frac{1}{x-z}$, $x\in\R$ (see Remark \ref{oepf}).
\end{exercise}

\begin{exercise}[Boolean Poisson limit theorem] 
For $n\in\N$, let $(X_{k,n})_{1\leq k\leq n}$ be Boolean \textit{iid} random variables with distribution $c_n\delta_{1} + (1-c_n)\delta_0$ and assume that $nc_n \to c>0$ as $n\to\infty$.  Determine the limit distribution of $X_{1,n}+...+X_{n,n}$.
\end{exercise}

\begin{exercise}[Free Poisson limit theorem] 
For $n\in\N$, let $(X_{k,n})_{1\leq k\leq n}$ be freely \textit{iid} random variables with distribution $c_n\delta_{1} + (1-c_n)\delta_0$ and assume that $nc_n \to c>0$ as $n\to\infty$.  Show that that limit distribution $\mu$ of $X_{1,n}+...+X_{n,n}$ exists and satisfies 
\[ G_\mu(z) = \frac{z+1-c - \sqrt{c^2-2c(z+1)+(z-1)^2}}{2z}. \]
(A \emph{Marchenko--Pastur} distribution.)
\end{exercise}

The monotone Poisson limit theorem can be found in \cite{Mur01}.

\newpage 

\chapter{Quantum stochastic processes}

Let $(B(H), \varphi)$ be a quantum probability space. We define a quantum stochastic process simply as a collection 
$(X_t)_{t\in T}$ of random variables in $B(H)$, where $T=\N_0$ or $T=[0,\infty)$.

\begin{example}
 Any sequence $X_0, X_1, X_2 , ... \in \C^{N\times N}$ of matrices can be regarded as a quantum process via Example \ref{999}, which may  or may not be a fruitful idea, depending on the concrete application. As matrices are really everywhere, there is at least a lot of potential of applying quantum probability theory in this way.  \hfill $\blacksquare$
\end{example}

\begin{example}
 Consider a multivariate time series $x_0, x_1, ..., x_n \in \R^{N}$, which might come from $N$ sensors for a certain machine.   Then we might construct a model for the time series by a random vector $X \in \R^N$. Fix some $m\in \N$ and let $R_{n}$ be sample covariance matrix calculated from the observations $x_n,...x_{n+m}$. According to our model, we obtain random matrices $R_0, R_1, ...$, which can be considered as a quantum process by Example \ref{444}. The matrices can be used for monitoring the functioning of the machine.
 If some pattern of the eigenvalues, which we derive from our model, suddenly breaks down, we might have registered some signal predicting an imminent failure of the machine. \hfill $\blacksquare$
\end{example}

\begin{example}
In quantum mechanics, the state of a particle is represented as a unit vector $\psi_0$ in some Hilbert space. 
	An observable quantity is represented by a (possibly unbounded) self-adjoint operator $A$. One obtains a dynamical description of this particle by choosing a Hamiltonian $H$, which is again a self-adjoint operator, and now the state $\psi_t$ depends on $t$ 
	(Schr\"odinger picture) and satisfies  the Schr\"odinger equation 
\[\displaystyle i\hbar {\frac {\partial }{\partial t}}\psi_t =H \psi_t.\]
In a mathematically equivalent way, one can leave the state $\psi_0$ constant and let the observable $A_t$ depend on $t$ (Heisenberg picture). In this way the dynamics can be described by the quantum process $(A_t)_{t\geq 0}$, which satisfies the Heisenberg equation 
\[\displaystyle {\frac {d}{dt}}A_t={\frac {i}{\hbar }}[H,A_t] = {\frac {i}{\hbar }}(HA_t-A_tH).\]
More on quantum stochastic processes in quantum mechanics can be found in \cite{one, barndorff-nielsen+al, Mey95, Att}. 
See also \cite{Kem03} for an overview article on quantum random walks.\hfill $\blacksquare$
\end{example}

\section{Additive processes}

\begin{definition}[Additive processes]
Let $(B(H), \varphi)$ be a quantum probability space and 
$(X_t)_{t\geq 0}\subset B(H)$ a family of random variables. 
 We call  $(X_t)_{t\geq 0}$ a (free, Boolean, monotone, anti-monotone) \emph{additive process} if the following conditions are satisfied:
\begin{itemize} 
 \item[(1)] $X_0 = 0$.
	\item[(2)]
	In case of Boolean or free independence: The random variables 
	 \[
X_{t_1},X_{t_2}-X_{t_1},\ldots,X_{t_n}-X_{t_{n-1}}
\]
are independent for all $n\in\mathbb{N}$ and all $t_1,\ldots,t_n\in\mathbb{R}$ with \ $0\leq t_1\leq t_2\leq\cdots\leq t_n$.\\
	In case of monotone or anti-monotone independence: The tuple
    \[
(X_{t_1},X_{t_2}-X_{t_1},\ldots,X_{t_n}-X_{t_{n-1}})
\]
is independent for all $n\in\mathbb{N}$ and all $t_1,\ldots,t_n\in\mathbb{R}$ with \ $0\leq t_1\leq t_2\le\cdots\leq t_n$.
\item[(3)] The mapping $(s,t)\mapsto \mu_{s,t}$ is continuous with respect to \ weak convergence, where $\mu_{s,t}$ denotes the distribution of the increment $X_t-X_s$, $0\leq s\leq t$.
\end{itemize}
Such a process is called a (free, Boolean, monotone, anti-monotone) \emph{L\'evy process} if, in addition, 
 \begin{enumerate}
 \item[(4)] the distribution of $X_{t+s}-X_s$ does not depend on $s$.
\end{enumerate}
\end{definition}

\subsection{Hemigroup distributions}

Consider an additive process $(X_t)_{t\geq 0}$ and let $\star \in \{\uplus, \boxplus,\rhd,\lhd \}$ be the convolution associated to the independence. By writing 
$X_{u}-X_s =  (X_t-X_s) + (X_u - X_t)$ for $0\leq s\leq t\leq u$, we see that
\[ \text{$\mu_{s,u} = \mu_{s,t}\star \mu_{t,u}$.} \]
 If the increments are also stationary, then we have $\mu_{s,t}=\mu_{0,t-s}=:\mu_{t-s}$ and 
\[ \mu_{s+t} = \mu_{t} \star \mu_s. \]

This motivates the following definition.

\begin{definition}Fix a convolution $\star \in \{*, \uplus, \boxplus,\rhd,\lhd \}$. A family $(\mu_{s,t})_{0\leq s\leq t}\subset \mathcal{P}(\R)$ is called a (continuous) \emph{$\star$-hemigroup} if the following conditions are satisfied:
\begin{itemize}
	\item[(1)] $\mu_{s,s}=\delta_0$ for all $s\geq 0$,
	\item[(2)] $(s,t)\mapsto \mu_{s,t}$ is continuous with respect to weak convergence,
	\item[(3)] $\mu_{s,u} = \mu_{s,t} \star \mu_{t,u}$ for all $0\leq s\leq t\leq u$.
\end{itemize} 
If, in addition, $\mu_{s,t}=\mu_{0,t-s}$ for all $0\leq s\leq t$, then $(\mu_{t})_{t\geq0}$ is called a 
(continuous) \emph{$\star$-semigroup}.
\end{definition}

\begin{lemma}\label{vvgg}Let $(X_t)_{t\geq 0}$ be an additive process and let $\mu_{s,t}$ be the distribution of $X_t-X_s$. Then 
 $(\mu_{s,t})_{0\leq s\leq t}\subset \mathcal{P}(\R)$  is a \emph{$\star$-hemigroup}, where 
$\star \in \{\uplus, \boxplus,\rhd,\lhd \}$ is the convolution associated to the notion of independence. 
If $(X_t)$ is a L\'evy process, then $(\mu_{t})_{t\geq0}$ is a \emph{$\star$-semigroup}.
\end{lemma}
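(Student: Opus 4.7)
The plan is to verify the three defining axioms of a $\star$-hemigroup for the family $(\mu_{s,t})_{0\leq s\leq t}$ one by one, each time invoking the corresponding axiom in the definition of an additive process. The key observation is that bounded operators in $B(H)$ have compactly supported distributions, so Theorems \ref{monotone_convolution}, \ref{Boolean_convolution} and \ref{free_convolution} apply directly without any need to extend the convolutions to the non-compact case.

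First I would verify that $\mu_{s,s}=\delta_0$: since $X_s-X_s=0$, every moment satisfies $\varphi(0^n)=0$ for $n\geq 1$ while $\varphi(I)=1$, and by the moment characterization in \eqref{fdssamm} the associated distribution must be $\delta_0$. The continuity condition $(s,t)\mapsto \mu_{s,t}$ is literally condition (3) in the definition of additive process, so nothing needs to be done. The substance of the proof is therefore axiom (3) of a hemigroup, namely the decomposition $\mu_{s,u}=\mu_{s,t}\star \mu_{t,u}$ for $0\leq s\leq t\leq u$. Here one writes
\[
X_u-X_s \;=\; (X_t-X_s)+(X_u-X_t),
\]
observes that condition (2) of the additive-process definition makes the pair $(X_t-X_s, X_u-X_t)$ independent in the required sense (ordered, in the monotone and anti-monotone cases), and then reads off the identity from the very definition of the convolution $\star$ associated with the given independence: in the Boolean and free cases this is \eqref{Boolean_convolution}, \eqref{free_convolution}, in the monotone case it is \eqref{monotone_convolution}, and in the anti-monotone case the reversed tuple gives the same conclusion with $\lhd$ in place of $\rhd$.

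For the L\'evy statement, stationarity of the increments, condition (4) in the L\'evy definition, yields $\mu_{s,t}=\mu_{0,t-s}$, so writing $\mu_t:=\mu_{0,t}$ is well defined. Applying the already established hemigroup property with the three times $0\leq s\leq s+t$ gives
\[
\mu_{s+t} \;=\; \mu_{0,s+t} \;=\; \mu_{0,s}\,\star\,\mu_{s,s+t} \;=\; \mu_s\,\star\,\mu_t,
\]
and the continuity of $t\mapsto \mu_t$ is inherited from the joint continuity of $(s,t)\mapsto \mu_{s,t}$.

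The only point that requires real care, and which I would flag as the main obstacle, is the bookkeeping for the non-commutative convolutions $\rhd$ and $\lhd$: one must keep straight that the ordering of the increment pair $(X_t-X_s,\,X_u-X_t)$ inherited from the time direction matches the ordering conventions fixed by Definition \ref{111} (4)--(5) and by Theorem \ref{monotone_convolution}, so that the identity really is $\mu_{s,u}=\mu_{s,t}\rhd\mu_{t,u}$ and not the reverse. Once this convention is unpacked, the computation itself is mechanical and needs no analytic input beyond what has already been collected.
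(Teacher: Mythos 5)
Your proof is correct and follows exactly the line of reasoning the paper leaves implicit: Lemma~\ref{vvgg} is stated without an explicit proof, but the paragraph just before the hemigroup definition gives the identical decomposition $X_u-X_s=(X_t-X_s)+(X_u-X_t)$ and reads off the convolution identity from the independence of increments. Your additional remarks — verifying $\mu_{s,s}=\delta_0$ via moments, observing that bounded self-adjoint operators have compactly supported distributions so that Theorems~\ref{monotone_convolution}, \ref{Boolean_convolution}, \ref{free_convolution} apply without extension to $\mathcal{P}(\R)$, and the flag about ordering conventions for $\rhd$/$\lhd$ — are exactly the right points to spell out and are all handled correctly.
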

 
We have already seen that the $\star$-semigroup distributions correspond to the $\star$-infinitely divisible distributions, which can be encoded by L\'{e}vy triples/pairs. (In the Boolean case, also every probability measure $\mu\in \mathcal{P}(\R)$ can be embedded into a semigroup, and thus also into a hemigroup, see the proof of Theorem \ref{measures_incre_Boolean}.) \\

The following natural question arises:\\
\[\text{Which probability measures arise in $\star$-hemigroups?}\]

\vspace{2mm}

\begin{theorem}\label{measures_incre_free}
Let $\star \in \{*,\uplus, \boxplus\}$. If $(\mu_{s,t})_{t\geq0}$ is a $\star$-hemigroup, then each $\mu_{s,t}$ is $\star$-infinitely divisible.
\end{theorem}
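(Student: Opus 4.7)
The case $\star = \uplus$ requires no work: by Theorem \ref{measures_incre_Boolean}, every probability measure on $\R$ is Boolean infinitely divisible. So assume $\star \in \{*, \boxplus\}$ and fix $0 \leq s \leq t$.

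The strategy is to exhibit $\mu_{s,t}$ as the $\star$-convolution of $n$ uniformly small factors for every $n$, and then invoke the triangular-array characterization of $\star$-infinite divisibility. Concretely, for each $n \in \N$ set $t_k^{(n)} = s + k(t-s)/n$ and iterate hemigroup axiom (3) to obtain
\[
\mu_{s,t} \;=\; \mu_{t_0^{(n)}, t_1^{(n)}} \,\star\, \mu_{t_1^{(n)}, t_2^{(n)}} \,\star\, \cdots \,\star\, \mu_{t_{n-1}^{(n)}, t_n^{(n)}}.
\]
The continuity of $(s',t') \mapsto \mu_{s',t'}$, restricted to the compact triangle $\{(s',t') : s \leq s' \leq t' \leq t\}$, gives uniform continuity in the L\'evy metric; combined with $\mu_{s',s'} = \delta_0$, this yields $\max_{1 \leq k \leq n} d_{\text{L\'{e}vy}}(\mu_{t_{k-1}^{(n)}, t_k^{(n)}}, \delta_0) \to 0$, and in particular $\max_{k} \mu_{t_{k-1}^{(n)}, t_k^{(n)}}(\{|x| \geq \eps\}) \to 0$ for every $\eps > 0$.

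In the classical case I then choose classically independent $(X_{n,k})_{1 \leq k \leq n}$ with law $\mu_{t_{k-1}^{(n)}, t_k^{(n)}}$; then $X_{n,1} + \cdots + X_{n,n}$ has law exactly $\mu_{s,t}$, so the infinitesimality condition of Theorem \ref{measures_incre_class}(d) is satisfied and $\mu_{s,t}$ is $*$-infinitely divisible. In the free case the argument is structurally identical, but one must appeal to the Bercovici--Pata analog stating that the $\boxplus$-infinitely divisible laws are precisely the weak limits of free triangular arrays with uniformly infinitesimal summands. The main obstacle is therefore not the partition argument, which is almost purely algebraic and continuity-based, but the availability of this free triangular-array theorem: a nontrivial result proved via the Voiculescu transform together with subordination techniques, which must be cited from the free probability literature rather than rederived on the spot.
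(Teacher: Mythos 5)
Your proof is correct and follows essentially the same route as the paper: dispatch the Boolean case via Theorem \ref{measures_incre_Boolean}, split $[s,t]$ into $n$ equal pieces and iterate the hemigroup relation, use uniform continuity on the compact triangle to obtain infinitesimality, then invoke Theorem \ref{measures_incre_class}(d) for $*$ and the Bercovici--Pata free analogue of Hin\v{c}in's theorem for $\boxplus$. The paper cites the same reference \cite{BP00} for the free case that you identify as the necessary external ingredient.
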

\begin{proof}In case $\star = \uplus$, every $\mu\in \mathcal{P}(\R)$ is infinitely divisible, see Theorem \ref{measures_incre_Boolean}.\\

Now fix $0\leq s \leq t$ and let $\mu_{j,n}=\mu_{s+(j-1)\cdot (t-s)/n, s+j\cdot (t-s)/n}$ for $n\in\N$, $j=1,...,n$. Then 
$\mu_{1,n} \star \ldots \star \mu_{n,n} = \mu_{s,t}$.
 As $(r,\tau)\mapsto \mu_{r,\tau}$ is continuous, and as $S=\{(r,\tau)\,|\, s \leq r \leq \tau \leq t\}$ is a compact set, 
$(r,\tau)\mapsto \mu_{r,\tau}$ is uniformly continuous on $S$. Hence, for $\eps>0$ and $\delta\in(0,1)$, we find $N\in \N$ such that 
$\mu_{j,n}((-\eps, \eps))>1-\delta$ for all $n\geq N$ and $j=1,...,n$.\\

In the case $\star=*$, Theorem \ref{measures_incre_class}, direction (d) $\rightarrow$ (c), implies that $\mu_{s,t}$ is $*$-infinitely divisible. The same conclusion is possible in the case $\star=\boxplus$, see \cite[Theorem 1]{BP00}. 
\end{proof}

The $\rhd$-hemigroups will be handled in Chapter \ref{sec_8} as they require more work and lead us deeper into complex analysis. We will see that there are more hemigroup distributions than semigroup distributions in this case. Furthermore, the evolution of the measure-valued process $t\mapsto \mu_{0,t}$ turns out to have a nice geometric interpretation, namely growth processes in $\Ha$ described by univalent functions. \\
Surprisingly, this geometric interpretation is also true for $\boxplus$-hemigroups, which we show in Chapter \ref{sec9}.

\subsection{Construction of monotone additive processes}

We now construct a monotone additive process from a given compactly supported hemigroup. 

\begin{definition}\label{def_equi_Markov}
A probability kernel $k$ on $\R$ is called \emph{$\rhd$-homogeneous} if it satisfies
\[
\delta_x\rhd k(y,\,\cdot\,) = k(x+y,\,\cdot\,)
\]
for all $x,y\in\mathbb{R}$. A classical Markov process $(M_t)_{t\geq0}$ on $\R$ is called a \emph{$\rhd$-homogeneous Markov process} if its transition kernels $(k_{s,t})_{0\leq s\leq t}$
 satisfy the following two conditions:
\begin{itemize}
\item[(a)] The mapping $(s,t)\mapsto k_{s,t}(x,\cdot)$ is continuous with respect to weak convergence for all $x\in\R$.
\item[(b)] The kernel $k_{s,t}$ is \emph{$\rhd$-homogeneous} for all $0\leq s\leq t$.
\end{itemize}
\end{definition}

\begin{theorem}\label{inf}${}$
 \begin{enumerate}[\rm(1)]
\item Let $(\mu_{s,t})_{0\leq s\leq t}$ be a $\rhd$-hemigroup. Then there exists a $\rhd$-homogeneous Markov process $(M_t)_{t\geq0}$ with transition kernels $k_{s,t}$ such that $k_{s,t}(x,\cdot) = \delta_x\rhd \mu_{s,t}(\cdot)$, i.e.
\begin{equation}\label{eq:additive_equivariance}
\int_\mathbb{R} \frac{1}{z-y}\,k_{s,t}(x,{\rm d}y) = \frac{1}{F_{\mu_{s,t}}(z)-x}.
\end{equation}
\item Let $(M_t)_{t\geq 0}$ be a $\rhd$-homogeneous Markov process such that $M_0=0$ and let $(\mathcal{F}_t)_{t\geq 0}$ be its natural filtration ($\mathcal{F}_t=\sigma((M_s)_{s\leq t})$). Assume that all distributions of $M_t$ have compact support. 
 Denote by $(\Omega,\mathcal F, \mathbb P)$ the underlying probability space and by $P_t$ the conditional expectation
\[
P_t = \mathbb{E}[\,\cdot\,|\mathcal{F}_t], \qquad t\geq 0.
\]
Define the operators $(X_t)_{t\geq 0}$ by
\begin{align}\label{def_SAIP}
X_t=P_t M_t. 
\end{align}
Then $(X_t)_{t\geq 0}$ is a monotone increment process on $(B(L^2(\Omega,\mathcal{F},\mathbb P)), \left<\mathbf{1}_\Omega,\cdot \mathbf{1}_\Omega\right>)$, where $M_t$ acts by multiplication on $L^2(\Omega,\mathcal{F},\mathbb P)$ and $\mathbf{1}_\Omega$ is the constant function with value $1$ on $\Omega$.
\end{enumerate}
\end{theorem}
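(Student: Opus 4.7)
For part (1), my approach is to define $k_{s,t}(x, B) := (\delta_x \rhd \mu_{s,t})(B)$ for $x \in \R$ and $B \in \mathcal{B}(\R)$, and to check that these are Markov transition kernels with the claimed properties. The formula $F_{\delta_x}(z) = z - x$ gives $F_{\delta_x \rhd \mu_{s,t}}(z) = F_{\mu_{s,t}}(z) - x$, so \eqref{eq:additive_equivariance} holds by definition of the Cauchy transform, and measurability of $x \mapsto k_{s,t}(x, B)$ is obtained from this via Stieltjes--Perron. For the Chapman--Kolmogorov relation, Fubini gives
\[
\int_\R\!\int_\R \frac{1}{z-w}\, k_{t,u}(y, dw)\, k_{s,t}(x, dy) = \int_\R \frac{1}{F_{\mu_{t,u}}(z) - y}\, k_{s,t}(x, dy) = \frac{1}{F_{\mu_{s,t}}(F_{\mu_{t,u}}(z)) - x},
\]
which matches $G_{k_{s,u}(x,\cdot)}(z) = 1/(F_{\mu_{s,u}}(z) - x)$ by the hemigroup identity $F_{\mu_{s,u}} = F_{\mu_{s,t}} \circ F_{\mu_{t,u}}$. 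Theorem \ref{existence_Markov} with initial distribution $\delta_0$ then produces the desired Markov process. The $\rhd$-homogeneity $\delta_x \rhd k_{s,t}(y,\cdot) = k_{s,t}(x+y,\cdot)$ is immediate from associativity of $\rhd$ and $\delta_x \rhd \delta_y = \delta_{x+y}$, and weak continuity of $(s,t) \mapsto k_{s,t}(x,\cdot)$ follows from that of $(s,t) \mapsto \mu_{s,t}$ via Lemma \ref{lemmaconvergence}.

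For part (2), I first note that $M_t$ is $\mathcal{F}_t$-measurable, so Theorem \ref{condi}(c) gives $P_t M_t = M_t P_t$, whence $X_t = P_t M_t$ is self-adjoint; boundedness follows from the compact support assumption. A simple induction using $P_t^2 = P_t$ shows $X_t^k = M_t^k P_t$ for $k \geq 1$, so for every polynomial $p$,
\[
p(X_t) = p(0)\,(I - P_t) + p(M_t)\, P_t.
\]
In particular $X_t^k \mathbf{1}_\Omega = M_t^k$, which shows that $X_t$ has distribution $\mu_{0,t}$ under $\varphi = \langle \mathbf{1}_\Omega, \cdot\, \mathbf{1}_\Omega\rangle$. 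A similar moment computation together with $\rhd$-homogeneity identifies the law of each increment $Y_j := X_{t_j} - X_{t_{j-1}}$ as $\mu_{t_{j-1}, t_j}$, yielding condition (3) of the definition of an additive process at once.

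The main technical step is the verification of monotone independence of the tuple $(Y_1, \ldots, Y_n)$ for $0 = t_0 < t_1 < \cdots < t_n$. I would compute an arbitrary alternating product $\varphi(p_{j_1}(Y_{j_1}) \cdots p_{j_m}(Y_{j_m}))$ by applying it to $\mathbf{1}_\Omega$ and peeling off factors from the right, using the identity above to rewrite each $p_{j_k}(Y_{j_k})$ and then invoking the Markov property. The decisive input is the $\rhd$-homogeneity in the explicit form
\[
(P_s\, q(M_t))(\omega) = \int_\R q(y)\, k_{s,t}(M_s(\omega), dy) = \int_\R q(M_s(\omega) + x)\, \mu_{s,t}(dx),
\]
which, at a local peak index $j_{k-1} < j_k > j_{k+1}$, converts a conditional expectation taken at time $t_{j_k}$ into an integral against $\mu_{t_{j_k - 1}, t_{j_k}}$: the substitution $M_{t_{j_k}} \mapsto M_{t_{j_k - 1}} + x$ furnishes exactly the centring needed to factor out $\varphi(p_{j_k}(Y_{j_k}))$ and collapse the peak, after which iteration yields the factorisation of Definition \ref{111}(4).

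The main obstacle I anticipate is the combinatorial bookkeeping in this induction: the expansion of each $p_{j_k}(Y_{j_k})$ produces several cross terms involving the projections $P_{t_{j_k}}$ and $P_{t_{j_k - 1}}$ acting at different time levels, and one has to track carefully how the translation by $M_{t_{j_k - 1}}$ inside the kernel absorbs every non-centred contribution so that only the expectation against $\mu_{t_{j_k - 1}, t_{j_k}}$ survives at the peak. A cleaner alternative I would consider is to realise the cyclic subspace of $L^2(\Omega)$ generated by $\mathbf{1}_\Omega$ under the algebra generated by the $Y_j$'s unitarily as a comb-type tensor product of the spaces $L^2(\R, \mu_{t_{j-1},t_j})$ with $M_{t_j} - M_{t_{j-1}}$ acting by multiplication in the $j$-th factor; this places the increments into the canonical Muraki model (Remark \ref{def-mon}) directly and bypasses the explicit moment recursion.
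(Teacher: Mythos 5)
Your treatment of part (1) is correct and matches the paper step for step: same kernel $k_{s,t}(x,\cdot)=\delta_x\rhd\mu_{s,t}$, same Stieltjes--Perron argument for measurability, same Fubini computation for Chapman--Kolmogorov, same invocation of Theorem \ref{existence_Markov} and Lemma \ref{lemmaconvergence}.

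Part (2) has a genuine gap exactly where you locate it. Your identity $p(X_t)=p(0)(I-P_t)+p(M_t)P_t$ and the consequence $X_t^k\mathbf{1}_\Omega=M_t^k$ are correct for a single operator $X_t=M_tP_t$, but they do \emph{not} extend to an increment $Y_j=M_{t_j}P_{t_j}-M_{t_{j-1}}P_{t_{j-1}}$, because the two projections act at different time levels; even your claim that ``a similar moment computation identifies the law of each increment as $\mu_{t_{j-1},t_j}$'' already requires an iterated use of the Markov property and $\rhd$-homogeneity that you do not supply, and it only gets worse when you insert such increments into alternating products. You correctly flag this as ``the main obstacle'' and leave it open, so the verification of Muraki's conditions --- which is the theorem --- is not carried out. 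The paper resolves the bookkeeping once and for all by proving the resolvent sandwich identity $P_s\,(z-(X_t-X_s))^{-1}P_s=G_{\mu_{s,t}}(z)P_s$ (a generating-function version of all the moment identities you would need), from which the two algebraic relations \eqref{lem-mon-relations} follow; Muraki's conditions (i) and (ii) from Remark \ref{def-mon} then drop out in two short steps. To make your route rigorous you would need an analogue of that resolvent formula --- or, at a minimum, a careful induction establishing that, for $s\leq t\leq u$, $P_s\,q(X_u-X_t)\,P_s$ is scalar and that $p(X_t-X_s)$ commutes with and is fixed by $P_u$, which is precisely the content of \eqref{lem-mon-relations}.

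Your proposed alternative --- realising the cyclic subspace generated by $\mathbf{1}_\Omega$ as a comb-type tensor product and reading off monotone independence from Muraki's canonical model --- is a reasonable idea, but as stated it is only a strategy: exhibiting the required unitary amounts to matching exactly the same mixed moments you are trying to avoid computing, so it does not obviously shorten the argument. It would be worth pursuing, but you would have to supply the isometry explicitly and verify it intertwines the actions, which is comparable in difficulty to the paper's resolvent computation.
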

\begin{proof}${}$
 \begin{enumerate}[\rm(1)]
\item We define $k_{s,t}(x,\cdot):=\delta_x \rhd \mu_{s,t}$.  We would like to apply Theorem \ref{existence_Markov}, which gives the existence of the Markov process. Thus we need to verify that $x\mapsto k_{s,t}(x,B)$ is a measurable function for all $B\in\mathcal{B}(\R)$, $0\leq s\leq t$, and that $(k_{s,t})$ satisfies the Chapman-Kolmogorov equation.\\

Concerning the measurability of $x\mapsto k_{s,t}(x,B)$, by the inversion formulas \eqref{Stieltjes2} and \eqref{eq:atom2}, we have
\begin{align*}
k_{s,t}(x,\{\alpha\}) &= \lim_{\epsilon\downarrow0} \frac{i\epsilon}{F_{s,t}(\alpha+i\epsilon)-x}, \\
\frac{1}{2}k_{s,t}(x,\{\alpha\})+ \frac{1}{2}k_{s,t}(x,\{\beta\})+k_{s,t}(x,(\alpha,\beta)) &= -\frac{1}{\pi}\lim_{\epsilon\downarrow0}\int_\alpha^\beta \frac{1}{F_{s,t}(y+i\epsilon)-x}\,{\rm d}y, 
\end{align*}
which implies that $x\mapsto k_{s,t}(x,B)$ is measurable for open and closed intervals $B$. By using the monotone convergence theorem, we obtain the measurability for all Borel sets.\\

 For $0 \leq s \leq t \leq u$, by using \eqref{eq:additive_equivariance}, we get
\begin{align*}
&\int_{\R} \frac{1}{z-w} \int_{\R} k_{s,t}(x,{\rm d}y) k_{t,u}(y,{\rm d}w)
= \int_{\R} \frac{1}{F_{t,u}(z)-y}\, k_{s,t}(x,{\rm d}y)\\ &= \frac{1}{F_{s,t}(F_{t,u}(z))-x}
=  \frac{1}{F_{s,u}(z)-x}
=  \int_{\R} \frac{1}{z-w}\, k_{s,u}(x,{\rm d}w).
\end{align*}
 Now we obtain the Chapman-Kolmogorov relation via the Stieltjes-Perron inversion.\\  

The $\rhd$-homogeneity of the transition kernels follows from the calculation
$$
F_{\delta_x \rhd k_{s,t}(y,\cdot)}(z) = F_{k_{s,t}(y,\cdot)}(z)-x =  F_{s,t}(z)-(x+y) = F_{k_{s,t}(x+y,\cdot)}(z).
$$
Finally, Lemma \ref{lemmaconvergence} implies the weak continuity of $(s,t)\mapsto k_{s,t}(x,\cdot)$ for all $x\in\R$.

\item The Markov property and \eqref{eq:additive_equivariance} imply
\begin{equation*}
\mathbb E\left[\frac{1}{z-M_t} \Bigg| \mathcal F_s\right] =\frac{1}{F_{s,t}(z)-M_s}~~a.s.,
\end{equation*}
which can also be stated as 
\begin{equation}\label{eq:additive_markovianity}
P_s \frac{1}{z-M_t}P_s = \frac{1}{F_{s,t}(z)-M_s}P_s. 
\end{equation}
As $M_0=0$, we have $X_0=0$.\\

Let $\mu_{s,t}:=k_{s,t}(0,\cdot)$. Next we show that
\begin{equation}\label{eq:additive_sandwich}
P_s \left(z-(X_t-X_s)\right)^{-1} P_s = G_{\mu_s,t}(z) P_s
\end{equation}
for $0 \leq s \leq t$ and $z\in \Ha$. By applying \eqref{eq:additive_sandwich} to the constant function $\mathbf1_\Omega$ and taking the expectation, we obtain that the distribution of $X_t-X_s$ with respect to the expectation $\langle \mathbf{1}_\Omega, \cdot \mathbf{1}_\Omega \rangle$ is equal to $\mu_{s,t}$.\\

 Using properties of the conditional expectation and property \eqref{eq:additive_markovianity}, for $\psi \in L^2(\Omega,\mathcal{F},\mathbb P)$ we obtain
\begin{align*}
&P_s \big(z-(P_tM_t-P_sM_s)\big)^{-1}P_s\psi \\
&= \frac{1}{z}P_s\psi + P_s\frac{M_t}{z(z-M_t)}P_s\psi - P_s\frac{ M_s\big(F_{\mu_{s,t}}(z)-M_s\big)}{F_{\mu_{s,t}}(z)(z-M_t)}P_s\frac{1}{z-M_t}P_s\psi\\
&=  \frac{1}{z}P_s\psi + \frac{1}{z}\left( P_s\psi \right) \cdot \left( P_s\frac{M_t}{z-M_t} \mathbf{1}_\Omega\right) - \left(\frac{M_s(F_{\mu_{s,t}}(z)-M_s)}{F_{\mu_{s,t}}(z)} P_s\psi \right) \cdot \left(P_s\frac{1}{z-M_t} \mathbf{1}_\Omega \right)^2 \\
&=  \frac{1}{z}P_s\psi
+\frac{1}{z}\left( P_s\psi \right) \cdot \left( - \mathbf{1}_\Omega+ \frac{z}{F_{\mu_{s,t}}(z)-M_s} \mathbf{1}_\Omega\right) \\&\qquad- \left(\frac{M_s(F_{\mu_{s,t}}(z)-M_s)}{F_{\mu_{s,t}}(z)}P_s\psi \right) \cdot \left(\frac{1}{F_{\mu_{s,t}}(z)-M_s} \mathbf{1}_\Omega \right)^2\\
&=  
\frac{1}{F_{\mu_{s,t}}(z)-M_s} P_s\psi - \frac{M_s}{F_{\mu_{s,t}}(z)(F_{\mu_{s,t}}(z)-M_s)}P_s\psi =  \frac1{F_{\mu_{s,t}}(z)}P_s \psi=  G_{\mu_s,t}(z)P_s \psi.
\end{align*}

The mapping $(s,t)\mapsto k_{s,t}(0,\cdot)$ is continuous by assumption. 
It remains to show that $(X_t)$ has monotonically independent increments.\\

For $0\leq s\leq t\leq u$ and a polynomial $p:\R\to \R $  we have

\begin{equation}\label{lem-mon-relations}
\begin{aligned}
p(X_t-X_s)P_u = P_u p(X_t-X_s) = p(X_t-X_s),\\
P_s p(X_u-X_t) P_s = \left\langle \mathbf{1}_\Omega, p(X_u-X_t)\mathbf{1}_\Omega \right\rangle P_s.
\end{aligned}
\end{equation}

Step 1. 
Let $t,s,t',s',t'',s''\in\mathbb{R}$ such that $0\leq s'\leq t'\leq s\leq t$ and $0\leq s''\leq t''\leq s\leq t$. Let $f,g,h:\R\to \R$ be polynomials and set $X=f(X_{t'}-X_{s'})$, $Y=g(X_t-X_s)$, $Z=h(X_{t''}-X_{s''})$. From \eqref{lem-mon-relations} we get
\[
XYZ = XP_{t'}YP_{t''}Z = \left\langle \mathbf{1}_\Omega, Y\mathbf{1}_\Omega \right\rangle_{L^2(\Omega)}XZ.
\]
This shows condition (i) from Remark \ref{def-mon}.\\

Step 2.
Let $t_1,\ldots,t_p, s_1,\ldots, s_p, t'_1,\ldots,t'_q, s'_1,\ldots, s'_q,t,s\geq0$ be such that
\[
t_1\geq s_1\geq t_2\geq  \cdots \geq t_p\geq s_p\geq t\geq s \leq t \leq t_q' \leq \cdots \leq t_2' \leq s_1' \leq t_1',
\]
let $f_1,\ldots, f_p,g, h_1,\ldots,h_q:\R\to\R$ be polynomials and set
\begin{gather*}
W_1=f_1(X_{t_1}-X_{s_1}), \ldots,\, W_p=f_p(X_{t_p}-X_{s_p}),\, Y=g(X_t-X_s), \\
Z_1=h_1(X_{t'_1}-X_{s'_1}),\ldots,\, Z_q=h_q(X_{t'_q}-X_{s'_q}).
\end{gather*}
Then we obtain from \eqref{lem-mon-relations} that
\begin{align*}
&\langle \mathbf{1}_\Omega, W_1\cdots W_pYZ_q\cdots Z_1 \mathbf{1}_\Omega \rangle\\
&= \langle \mathbf{1}_\Omega, P_{t_2}W_1 P_{t_2} W_2 W_3 \cdots W_p Y Z_q  \cdots Z_2 P_{t'_2} Z_1 P_{t'_2} \mathbf{1}_\Omega \rangle \\
&= \langle \mathbf{1}_\Omega, W_1\mathbf{1}_\Omega\rangle
\langle \mathbf{1}_\Omega, W_2 W_3 \cdots W_p Y Z_q  \cdots Z_2 \mathbf{1}_\Omega \rangle \langle \mathbf{1}_\Omega, Z_1 \mathbf{1}_\Omega \rangle \\
&= \cdots \\
&=  \langle \mathbf{1}_\Omega, W_1\mathbf{1}_\Omega\rangle  \cdots \langle\mathbf{1}_\Omega , W_p\mathbf{1}_\Omega\rangle \langle\mathbf{1}_\Omega ,Y\mathbf{1}_\Omega\rangle \langle\mathbf{1}_\Omega Z_q\mathbf{1}_\Omega\rangle \cdots \langle \mathbf{1}_\Omega, Z_1 \mathbf{1}_\Omega \rangle,
\end{align*}
and we have shown condition (ii) from Remark \ref{def-mon}. Hence, $X_t$ has monotonically independent increments.

\end{enumerate}
\end{proof}

\begin{remark}The same construction holds for unbounded additive processes from hemigroups with unbounded support, see \cite{FHS}.\\
 In \cite{MR1462227}, Muraki constructed a monotone Brownian motion, i.e.\ an additive process such that $X_t$ is $A(0,t)$-distributed, on a monotone Fock space, see also \cite{MR1483010, MR1455615}. Monotone L\'evy processes consisting of bounded self-adjoint operators have been constructed in \cite[Theorem 4.1]{franz+muraki04}. Monotone additive processes (bounded, but with operator-valued expectation) have also been constructed in \cite{Jek17}. 
\end{remark}

The classical case of the construction of (unbounded) additive processes is handled by Theorem \ref{existence_Markov}. For the free and Boolean case, we refer to \cite[p.112]{barndorff-nielsen+al}. 

\begin{remark}
A free additive process with distributions $\mu_t = W(0,t)$ is called a free Brownian motion. It can be constructed as follows. 
For a Hilbert space $H$ and $\Omega\in H$, $\|\Omega\|=1$, let $\mathcal{F}(H)$ be the full Fock space defined by 
\[ \mathcal{F}(H) = \C \Omega \oplus \bigoplus_{n=1}^\infty H^{\otimes n}, \]
which becomes a Hilbert space via the product 
\[\text{$\left<f_1\otimes ... \otimes f_n, g_1\otimes ... \otimes g_n \right> = \left<f_1,g_1\right> \cdots \left<f_n,g_n\right>$,}\]
\[\text{$\left<f_1\otimes ... \otimes f_n, g_1\otimes ... \otimes g_m \right> =0$ if $n\not=m$,}\]
\[\text{$\left<f_1\otimes ... \otimes f_n, v \right> = 0$, \quad 
$\left<v,v\right>=1$.}\]
For $f\in H$, let $a(f)$ (left annihilation operator) and $a^*(f)$ (left creation operator) be the elements from $B(\mathcal{F}(H))$ defined via 
\[ a(f)(f_1\otimes \ldots \otimes f_{n+1}) = \left<f,f_1\right>f_2\otimes \ldots \otimes f_{n+1},\quad 
a(f)(f_1) = \left<f,f_1\right>\Omega, \quad 
a(f)(\Omega) = 0, \]
\[  a^*(f)(f_1\otimes \ldots \otimes f_n) = f\otimes f_1\otimes f_2\otimes \ldots \otimes f_n, \quad  a^*(f)(\Omega) = f. \]
Now take $H=L^2([0,\infty),\C)$. For $t\geq0$, $\textbf{1}_{[0,t]}\in H$. Then 
\[ t\mapsto a(\textbf{1}_{[0,t]}) + a^*(\textbf{1}_{[0,t]})\]
is a free Brownian motion within $(B(\mathcal{F}(H)), X\mapsto \left<\Omega, X\Omega\right>)$, see \cite{Spe90}.
\end{remark}

\section{Quantum Markov chains}

In order to define quantum Markov processes, we would need a  notion of conditional independence (sometimes also called ``independence with amalgamation'') or conditional expectation  for quantum probability spaces. Due to Theorem \ref{condi}, we could define a conditional expectation as a linear mapping $\varphi(\cdot,|\mathcal{B}):\mathcal{A}\to \mathcal{B}$ between two quantum probability spaces  $\mathcal{A},\mathcal{B}$, where $\mathcal{B}$ is embedded in $\mathcal{A}$, which is a completely positive contraction and $\mathcal{B}$-linear. For this purpose, we would have to generalize our notion of quantum probability spaces and we would face some further technicalities. For the general theory, we refer instead to the literature; see \cite[II.6.10]{Bla06}, \cite{Ske04}, the Section ``Towards Markov Processes'' in \cite{barndorff-nielsen+al}. In this section, we will see how the most common quantum version of finite time-homogeneous Markov chains look like.\\

Instead of defining a Markov chain as a collection $(X_n)_{n\in\N_0}$ of random variables on a fixed quantum probability space $(B(H),\varphi)$, it is now more convenient to work with the Schr\"odinger picture, i.e.\ we have one fixed operator $X_0$ and change the expectation $\varphi_n$ with $n$.\\

Recall that every expectation $\varphi$ on $B(\C^{N\times N})$ can be written as 
$\varphi(X) = \tr(X \rho)$ for a density matrix $\rho$, see Exercise \ref{density}.

\begin{example}
Consider a state space $S=\{s_1,...,s_N\}\subset \R$ and a time-homogeneous Markov chain $(M_n)_{n\in\N_0}$ on $S$ with transition matrix 
 $P=(p_{j,k})_{1\leq j,k \leq N}$ and initial distribution $\nu$. Put $v=(v_1,...,v_N)^T=(\nu(\{s_1\}), ..., \nu(\{s_N\}))^T$. (Recall that $\mathbb{P}[M_{n+1} = s_{j}| M_n=s_k]=p_{j,k}$ a.s.)\\

This process can be modeled by a quantum stochastic process as follows. The diagonal matrix $X_0=\operatorname{diag}(s_1,...,s_N)$ encodes the state space and the distribution of $X_0$ with respect to the expectation $\varphi(X)= \left<v, Xv \right>$ is equal to $\nu$. We can write this expectation also as $\varphi(X) = \tr(X \rho_0)$ with $\rho_0 = \operatorname{diag}(v_1,...,v_N)$.

Now let $e_k$ be the $k$-th unit vector in $\C^N$ and let 
$Q_{j,k}=\sqrt{p_{j,k}} e_je_k^T$. Then  $Q_{j,k} \rho_0 Q_{j,k}^T = v_k p_{j,k} e_je_j^T$.\\

Define the matrix $\rho_1:=\sum_{1\leq j,k\leq N} Q_{j,k} \rho_0 Q_{j,k}^T$, which is a diagonal matrix of the form 
\[\rho_1=\operatorname{diag}\left(\sum_{k=1}^N v_k p_{1,k}, ..., \sum_{k=1}^N v_k p_{N,k}\right).\] 
Thus $\rho_1$ is a density matrix which corresponds to the probabilities given by the vector $Pv$ in the classical case. So the probabilities of $P^n v$ correspond to the density matrix $\rho_n$ defined recursively by $\rho_{n+1} = \sum_{1\leq j,k\leq N} Q_{j,k} \rho_n Q_{j,k}^T$. 
We conclude that the distribution of $M_n$ is equal to the distribution of $X_0$ with respect to $X\mapsto \tr(X \rho_n)$.\hfill $\blacksquare$
\end{example}

The key in the previous example is the transform $\mathcal{T}:X \mapsto \sum_{j,k}Q_{j,k}XQ_{j,k}^T$, as we have \[
\rho_{n} = \mathcal{T}^{n}(\rho_0).\]
This leads to the following definition. 

\begin{definition}
A linear mapping $\mathcal{T}:\C^{N\times N} \to \C^{N\times N}$  is called a \emph{quantum channel} if 
\[\mathcal{T}(X) = \sum_{j=1}^M E_j X E_j^*\] for matrices $E_1,...,E_M\in \C^{N\times N}$ with $\sum_{j=1}^M E_j^* E_j = I$.
\end{definition}

If $\rho\in\C^{N\times N}$ is a density matrix, then $\mathcal{T}(\rho)$ is again a density matrix as 
\[\tr(\mathcal{T}(\rho))=\tr(\sum_{j} E_j\rho E_j^* )=\sum_{j}\tr( E_j\rho E_j^* )=\sum_{j}\tr( E_j^*E_j\rho  )=\tr((\sum_{j}E_j^* E_j)\rho )=\tr(\rho )=1,\]
and as each $E_j\rho E_j^*$ is self-adjoint and non-negative, also the sum is self-adjoint and non-negative.\\

\begin{remark}One can define quantum channels (on separable  Hilbert spaces) as completely positive, Hermiticity preserving, trace-preserving linear mappings, and then prove that quantum channels can be represented as a (possibly infinite) sum $\sum_{j} E_j X E_j^*$ for bounded operators  $(E_j)_j$ with $\sum_{j} E_j^* E_j = I$ (Kraus representation); see \cite[Lecture 6]{Att}, \cite{Cho75}.
\end{remark}

\begin{definition}
A self-adjoint $X_0\in \C^{N\times N}$ together with a family $(\rho_n)_{n\in\N_0}\subset \C^{N\times N}$ of density matrices is called a \emph{(time-homogeneous) quantum Markov chain} if there exists a quantum channel $\mathcal{T}$ such that
\[ \rho_{n+1} = \mathcal{T}(\rho_n) \quad \text{for all $n\in\N_0$.} \]
\end{definition}

While $X_0$ encodes the ``state space'', the initial expectation $\rho_0$ represents the initial distribution of the state space. 
By comparing this setting with the classical case, we should say that a quantum Markov chain has a limit state if there exists a density matrix $\rho_\infty$ such that $\rho_\infty = \lim_{n\to \infty} \rho_n = (\mathcal{T})^n(\rho_0)$. \\

For a density operator $\rho$, let $\supp(\rho)$ be the subspace of $\C^n$ spanned by its eigenvectors for non-zero eigenvalues.

\begin{definition}A quantum channel $\mathcal{T}$ is called \emph{aperiodic} if for any density matrix $\rho\in\C^{N\times N}$, 
\[\operatorname{gcd}\{m\in \N\,|\, \supp(\rho)\subseteq \supp((\mathcal{T})^m(\rho))\}=1.\]
$\mathcal{T}$ is called \emph{irreducible} if, for any density matrix $\rho$, $\operatorname{span}\left(\cup_{m=0}^\infty \supp((\mathcal{T})^m(\rho))\right)=\C^N$.
\end{definition}

One can now prove the following analogue of Lemma \ref{ianda}:

\begin{theorem}[Corollary 1 in \cite{GFY18}] If $\mathcal{T}$ is irreducible and aperiodic, then, for any density matrix $\rho$, there exists $M>0$ such that $\supp((\mathcal{T})^m(\rho))=\C^N$ for all $m\geq M$.  
\end{theorem}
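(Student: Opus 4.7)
The plan is to mirror the classical argument of Lemma \ref{ianda}, with the subspaces $V_m:=\supp(\mathcal{T}^m(\rho))\subset\C^N$ playing the role of the positive entries of the transition matrix. Two structural facts will drive everything: positivity of $\mathcal{T}$ makes support inclusion monotone under $\mathcal{T}$, and trace-preservation ensures that each $\mathcal{T}^k(\rho)$ is again a density matrix, so the definitions of aperiodicity and irreducibility may be applied to it.

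First I would establish the monotonicity. If $\supp(\sigma_1)\subseteq\supp(\sigma_2)$ for density matrices $\sigma_1,\sigma_2$, then $\sigma_2$ is strictly positive on its (finite-dimensional) support, so $\sigma_1\le C\sigma_2$ for some $C>0$; positivity of $\mathcal{T}$ gives $\mathcal{T}(\sigma_1)\le C\mathcal{T}(\sigma_2)$, hence $\supp(\mathcal{T}(\sigma_1))\subseteq\supp(\mathcal{T}(\sigma_2))$. From this it follows that $A(\sigma):=\{m\in\N : \supp(\sigma)\subseteq\supp(\mathcal{T}^m(\sigma))\}$ is closed under addition: if $m,k\in A(\sigma)$, applying $\mathcal{T}^m$ to the inclusion $\supp(\sigma)\subseteq\supp(\mathcal{T}^k(\sigma))$ yields $\supp(\mathcal{T}^m(\sigma))\subseteq\supp(\mathcal{T}^{m+k}(\sigma))$, which chained with $\supp(\sigma)\subseteq\supp(\mathcal{T}^m(\sigma))$ gives $m+k\in A(\sigma)$. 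Aperiodicity provides $\gcd A(\sigma)=1$ for every density matrix $\sigma$, so by the number-theoretic fact already used in Lemma \ref{ianda} (a sub-semigroup of $\N$ with gcd $1$ contains every sufficiently large integer), there exists $M_\sigma$ with $[M_\sigma,\infty)\subset A(\sigma)$. Applied to the density matrix $\mathcal{T}^k(\rho)$, and writing $M_k$ for the corresponding threshold, one obtains $V_k\subseteq V_n$ for every $n\ge k+M_k$.

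To close, I would invoke irreducibility of $\mathcal{T}$ applied to $\rho$: $\operatorname{span}\bigl(\bigcup_{k\ge 0}V_k\bigr)=\C^N$. Since $\C^N$ is finite-dimensional, already finitely many indices $k_1<\cdots<k_J$ satisfy $V_{k_1}+\cdots+V_{k_J}=\C^N$. Setting $M:=\max_{1\le j\le J}(k_j+M_{k_j})$, every $n\ge M$ satisfies $V_n\supseteq V_{k_j}$ for all $j$, so $V_n$ contains their sum and is therefore all of $\C^N$.

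The main obstacle will be the monotonicity-and-semigroup step: one has to check carefully that quantum channels preserve support inclusion, which rests on the simple but easily overlooked observation that among density matrices, $\supp(\sigma_1)\subseteq\supp(\sigma_2)$ forces $\sigma_1\le C\sigma_2$ for some scalar $C$. Once this is secured, the combination of aperiodicity producing thresholds $M_k$ and irreducibility reducing the span condition to a finite sum is a direct translation of the classical Markov-chain argument.
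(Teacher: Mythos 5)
The paper does not prove this theorem; it simply cites Corollary 1 of \cite{GFY18}, so there is no in-paper argument to compare against. Your proposal, however, stands on its own and appears to be correct. The crucial observation you identify — that for density matrices $\supp(\sigma_1)\subseteq\supp(\sigma_2)$ forces $\sigma_1\leq C\sigma_2$ for some $C>0$, because $\sigma_2$ has a positive smallest eigenvalue on its support and $\sigma_1\leq\|\sigma_1\|P_{\supp(\sigma_1)}\leq\|\sigma_1\|P_{\supp(\sigma_2)}$ — is exactly what makes quantum channels (positive, trace-preserving maps) monotone on supports, and this is the point where the argument could otherwise have stalled. From there the rest is a faithful translation of the classical Lemma \ref{ianda} argument: closure of $A(\sigma)$ under addition (which you derive correctly by iterating the monotonicity), the Frobenius-type number-theoretic fact that a sub-semigroup of $\N$ with gcd $1$ is cofinite, aperiodicity applied to each $\mathcal{T}^k(\rho)$ to get thresholds $M_k$ with $V_k\subseteq V_n$ for $n\geq k+M_k$, and then irreducibility plus finite-dimensionality of $\C^N$ to extract finitely many $k_j$ whose supports already span, so that $V_n=\C^N$ for $n\geq\max_j(k_j+M_{k_j})$. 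All steps check out; the only cosmetic point worth making explicit is that each $\mathcal{T}^k(\rho)$ is itself a density matrix (so aperiodicity genuinely applies to it), which you do mention via trace-preservation.
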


...and of the classical convergence result, Theorem \ref{limit_Markov}:

\begin{theorem}[Theorem 3 in \cite{GFY18}] If $\mathcal{T}$ is irreducible and aperiodic, then there exists a density matrix $\rho_\infty$ 
such that \[\rho_\infty = \lim_{n\to \infty} (\mathcal{T})^n(\rho)\]
  for any initial density matrix $\rho$. 
\end{theorem}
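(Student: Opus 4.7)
The plan is to mimic the proof of the classical Theorem \ref{limit_Markov}, replacing entry-wise positivity of $P^n$ by the operator-positivity statement that is built into the preceding theorem (Corollary~1 in \cite{GFY18}). First I would establish that $\mathcal{T}$ has a unique stationary density matrix $\rho_\infty$ (the quantum analogue of Lemma \ref{atm_stat} and Theorem \ref{uniq_statuin}). Existence follows from a fixed-point argument: the set $\mathcal{D}$ of density matrices in $\C^{N\times N}$ is convex and compact, and $\mathcal{T}$ restricts to a continuous self-map, so the Brouwer (or Schauder) fixed point theorem gives some $\rho_\infty \in \mathcal{D}$ with $\mathcal{T}(\rho_\infty)=\rho_\infty$. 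Uniqueness, and the fact that $\supp(\rho_\infty)=\C^{N}$, both follow from irreducibility together with the preceding theorem applied to any candidate fixed point.

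The analogue of Lemma \ref{ianda} is exactly the theorem stated just before: there is an $M \in \N$ such that $\mathcal{T}^{M}(\rho)$ has full support for every density matrix $\rho$. This is the technical ingredient that lets us mimic the classical averaging trick. Define the (completely positive, trace-preserving) linear map $\Pi: \C^{N\times N}\to \C^{N\times N}$ by $\Pi(X)=\tr(X)\rho_\infty$. Clearly $\mathcal{T}\circ \Pi = \Pi$ (as $\mathcal{T}(\rho_\infty)=\rho_\infty$) and $\Pi\circ\Pi=\Pi$. Because $\rho_\infty$ is strictly positive on $\C^{N}$ and $\{\mathcal{T}^{M}(\rho)\,|\,\rho\in \mathcal{D}\}$ is a compact subset of strictly positive matrices, a compactness argument produces $\delta\in(0,1]$ with
\[
\mathcal{T}^{M}(\rho)\;\geq\;\delta\,\rho_\infty \qquad \text{for every }\rho\in \mathcal{D}.
\]
Setting $\theta=1-\delta$, we may then write $\mathcal{T}^{M}=(1-\theta)\Pi+\theta\,\mathcal{Q}$ where $\mathcal{Q}$ is linear and sends $\mathcal{D}$ into $\mathcal{D}$ (it is completely positive and trace-preserving because both $\mathcal{T}^M$ and $\Pi$ are, and the coefficients sum to $1$).

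Just as in the proof of Theorem \ref{limit_Markov}, induction on $n$ combined with the identities $\mathcal{T}^{M}\Pi=\Pi$ and $\Pi\,\mathcal{Q}^{n}=\Pi$ (the latter because $\mathcal{Q}^n$ is trace-preserving) gives
\[
\mathcal{T}^{Mn}=(1-\theta^{n})\,\Pi + \theta^{n}\,\mathcal{Q}^{n}.
\]
Writing an arbitrary $k\in \N$ as $k=Mn+j$ with $0\leq j<M$, we get $\mathcal{T}^{k}-\Pi\,\mathcal{T}^{j}=\theta^{n}(\mathcal{T}^{j}\mathcal{Q}^{n}-\Pi\,\mathcal{T}^{j})$; but $\Pi\,\mathcal{T}^{j}=\Pi$ since $\mathcal{T}^{j}$ is trace-preserving, so on the set of density matrices $\|\mathcal{T}^{k}(\rho)-\rho_\infty\|\leq 2\theta^{n}\to 0$ in any norm on $\C^{N\times N}$. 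Taking $k\to\infty$ yields $\mathcal{T}^{k}(\rho)\to \rho_\infty$ for every density matrix $\rho$, as required. The main obstacle will be the compactness step producing the uniform constant $\delta$; handling it requires carefully combining aperiodicity (to ensure fullness of support from some fixed $M$ onward, uniformly in $\rho$) with continuous dependence of the minimum eigenvalue on $\rho\in\mathcal{D}$.
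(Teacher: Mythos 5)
The paper itself does not prove this theorem---it simply cites \cite{GFY18}---so there is no internal proof to compare against. Your proposal is a correct quantum adaptation of the classical proof of Theorem \ref{limit_Markov} given earlier in the paper, and the strategy is sound: Brouwer gives a fixed density matrix $\rho_\infty$, Corollary~1 of \cite{GFY18} forces it to have full support, compactness of $\{\mathcal{T}^{M}(\rho)\,|\,\rho\in\mathcal{D}\}$ yields the uniform bound $\mathcal{T}^{M}(\rho)\geq\delta\rho_\infty$ (the smallest eigenvalue of $\rho_\infty^{-1/2}\mathcal{T}^M(\rho)\rho_\infty^{-1/2}$ is a continuous, strictly positive function on a compact set), and the decomposition $\mathcal{T}^{M}=(1-\theta)\Pi+\theta\mathcal{Q}$ gives geometric convergence by the same induction as in the classical case, using $\mathcal{T}^M\Pi=\Pi=\Pi\mathcal{Q}^n=\Pi\mathcal{T}^j$.

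Two small inaccuracies, neither fatal. First, the claim that $\mathcal{Q}=\theta^{-1}\big(\mathcal{T}^{M}-(1-\theta)\Pi\big)$ is completely positive because $\mathcal{T}^{M}$ and $\Pi$ are and the coefficients sum to $1$ is not a valid deduction: complete positivity is preserved under non-negative linear combinations, not under affine combinations with a negative coefficient. This does not damage the argument, since all that is used is that $\mathcal{Q}$ maps $\mathcal{D}$ into $\mathcal{D}$, and that follows directly from $\mathcal{T}^{M}(\rho)\geq\delta\rho_\infty$ (so $\mathcal{Q}(\rho)\geq0$) together with $\mathcal{Q}$ being trace-preserving. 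Second, uniqueness of the stationary state is asserted as following from irreducibility plus Corollary~1 but not actually argued. It is in fact a consequence of the convergence result rather than a prerequisite: any fixed point $\rho_1$ satisfies $\rho_1=\mathcal{T}^{n}(\rho_1)\to\rho_\infty$. If you want a direct argument, note that every fixed density matrix has full support by Corollary~1, so two distinct fixed points $\rho_1\neq\rho_2$ would yield the fixed density matrix $(\rho_1-t\rho_2)/(1-t)$ with $t=\lambda_{\min}\big(\rho_2^{-1/2}\rho_1\rho_2^{-1/2}\big)\in(0,1)$, which has a nontrivial kernel, contradicting Corollary~1. With these two adjustments your proof is complete.
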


\chapter{Univalent functions}\label{sec_8}

In this chapter we will see that $\mu \in \mathcal{P}(\R)$ can be embedded into a $\rhd$-hemigroup if and only if 
$F_\mu$ is univalent.

\begin{theorem}${}$\label{main_sec_8}
\begin{itemize}
	\item[(a)] Let $(\mu_{s,t})_{0\leq s\leq t}$ be a $\rhd$-hemigroup. Then $F_{\mu_{s,t}}$ is univalent for all $0\leq s\leq t$.
	\item[(b)] Let $\mu \in  \mathcal{P}(\R)$ such that  $F_\mu$ is univalent. Then there exists a $\rhd$-hemigroup $(\mu_{s,t})_{0\leq s\leq t}$ such that $\mu_{0,1}=\mu$.
\end{itemize}
\end{theorem}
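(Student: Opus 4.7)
The whole argument hinges on the fact that $\mu_{s,u}=\mu_{s,t}\rhd\mu_{t,u}$ translates, by the definition of $\rhd$, to the composition identity
\[
F_{\mu_{s,u}}=F_{\mu_{s,t}}\circ F_{\mu_{t,u}},\qquad 0\le s\le t\le u.
\]
Since compositions of univalent maps are univalent, (a) reduces to a small-time univalence statement, and (b) amounts to constructing a continuous chain of univalent $F$-transforms interpolating between $\mathrm{id}$ and $F_\mu$.

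For part (a), I would first use Lemma \ref{lemmaconvergence} together with $\mu_{s,s}=\delta_0$ and the continuity assumption to see that $F_{\mu_{s,t}}\to\mathrm{id}$ locally uniformly on $\Ha$ as $t-s\to 0$, uniformly in $s$ on each compact interval $[0,T]$. Reading off the Pick--Nevanlinna data in Theorem \ref{Julia}, the identity $\Im F_{\mu_{s,t}}(i)=1+\rho_{s,t}(\R)$ forces $\rho_{s,t}(\R)\to 0$, and likewise $b_{s,t}\to 0$. The core claim is then that an $F$-transform with sufficiently small Nevanlinna mass is univalent. From the elementary computation
\[
F(z)-F(w)=(z-w)\Bigl[1+\int_\R\frac{1+x^2}{(x-z)(x-w)}\rho(dx)\Bigr],
\]
the kernel $(1+x^2)/[(x-z)(x-w)]$ extends continuously to $\R\cup\{\infty\}$ with value $1$ at infinity, and is therefore bounded by a function $M(z,w)$ locally bounded on $\Ha\times\Ha$; as soon as $M(z,w)\rho(\R)<1$ the bracket cannot vanish, giving injectivity on compact subsets. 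Passing from compacta to all of $\Ha$ is the main obstacle: my plan is to combine $\Im F(z)\ge\Im z$ (Theorem \ref{Julia}) and horodisc preservation from Theorem \ref{grand}(d)---each $F_{\mu_{s,t}}$ has parabolic Denjoy--Wolff point $\infty$---with Hurwitz's theorem applied to the convergent family $F_{\mu_{s,t}}-w\to z-w$, ruling out collisions near $\partial\Ha$. Once small-time univalence is secured, any $[s,t]$ can be partitioned into $n$ pieces of mesh below the threshold, and $F_{\mu_{s,t}}$ becomes an $n$-fold composition of univalent $F$-transforms.

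For part (b), the plan is to build a decreasing Loewner chain of simply connected domains interpolating $\Ha$ and $F_\mu(\Ha)$. The image $\Omega:=F_\mu(\Ha)\subseteq\Ha$ is a simply connected subdomain with $\infty\in\partial\Omega$ (by Theorem \ref{Julia} and the hydrodynamic asymptotics), and its complement $K:=\Ha\setminus\Omega$ is a closed hull. I would construct a continuous increasing family $(K_t)_{t\in[0,1]}$ of such hulls with $K_0=\varnothing$ and $K_1=K$; the cleanest way is to parametrize by half-plane capacity so that $\mathrm{hcap}(K_t)=t\cdot\mathrm{hcap}(K)$, first exhausting $K$ by bounded hulls if necessary, the Carath\'eodory kernel theorem then yielding continuity of $\Omega_t:=\Ha\setminus K_t$. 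Establishing existence and continuity of this hull family is the main technical obstacle. For each $t$, let $F_t\colon\Ha\to\Omega_t$ be the unique conformal map with the hydrodynamic normalization $\lim_{y\to\infty}F_t(iy)/(iy)=1$; then $F_0=\mathrm{id}$, $F_1=F_\mu$, and Theorem \ref{Julia} identifies each $F_t$ as the $F$-transform of a probability measure.

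Finally, for $0\le s\le t\le 1$ I would define $F_{\mu_{s,t}}:=F_s^{-1}\circ F_t$, well-defined because $\Omega_t\subseteq\Omega_s$ and $F_s$ is a conformal bijection onto $\Omega_s$. The asymptotic $F_t(iy)=iy+o(y)$ propagates to $\lim_{y\to\infty}F_{\mu_{s,t}}(iy)/(iy)=1$, so Theorem \ref{Julia} yields $\mu_{s,t}\in\mathcal{P}(\R)$ whose $F$-transform is $F_s^{-1}\circ F_t$; by construction $\mu_{0,1}=\mu$. The hemigroup identity
\[
F_{\mu_{s,t}}\circ F_{\mu_{t,u}}=F_s^{-1}\circ F_t\circ F_t^{-1}\circ F_u=F_s^{-1}\circ F_u=F_{\mu_{s,u}}
\]
together with the injectivity of the $F$-transform gives $\mu_{s,t}\rhd\mu_{t,u}=\mu_{s,u}$, and weak continuity of $(s,t)\mapsto\mu_{s,t}$ follows from Carath\'eodory continuity of $t\mapsto\Omega_t$ combined with Lemma \ref{lemmaconvergence}.
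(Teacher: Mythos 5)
Your plan for part (a) rests on a false lemma, so there is a genuine gap. You claim that ``an $F$-transform with sufficiently small Nevanlinna mass is univalent'' and propose to prove small-time univalence of $F_{\mu_{s,t}}$ from the smallness of the mass, then compose. But smallness of the Nevanlinna mass does \emph{not} imply univalence. Take $a\ne 0$ and
\[
\mu_p = p\,\delta_a + (1-p)\,\delta_0,\qquad p\in(0,1).
\]
Then $G_{\mu_p}(z)=\frac{z-(1-p)a}{z(z-a)}$, so $F_{\mu_p}(z)=\frac{z(z-a)}{z-c}$ with $c=(1-p)a$, and the identity you wrote gives, after factoring, $F_{\mu_p}(z)=F_{\mu_p}(w)$ iff $z=w$ or $(z-c)(w-c)=-p(1-p)a^2<0$. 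Choosing $z=c+ir$, $w=c+is$ with $rs=p(1-p)a^2$ and $r\ne s$ produces distinct points of $\Ha$ with equal image, so $F_{\mu_p}$ is \emph{never} univalent, even though $F_{\mu_p}\to\mathrm{id}$ locally uniformly and the Nevanlinna mass $\rho_p(\R)=p(1-p)a^2/(c^2+1)\to 0$ as $p\to 0$. In these examples the collision points degenerate towards $\R$ as $p\to 0$, which is exactly the boundary region your argument cannot reach: the quantity $M(z,w)$ you introduce blows up as $\Im z\Im w\to 0$, and there is no uniform threshold. The suggested repair (horodisc preservation plus Hurwitz) does not help --- Hurwitz controls zeros of $F(\cdot)-F(z_0)$ only near $z_0$, not collisions from far away, and $\Im F\ge\Im$ gives no contradiction for pairs near $\R$. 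The counterexample shows the gap is not a missing detail but a wrong route: the hemigroup structure must enter the proof of univalence in a more essential way than ``small time $\Rightarrow$ small mass.'' The paper's proof of (a) does exactly this by passing to (radial) Loewner chains and deducing univalence from the well-posedness of Loewner's ODE/PDE (Corollary \ref{mult_univ}, Theorem \ref{EV_univalence}); injectivity comes from backward uniqueness of solutions of the ODE, not from any pointwise smallness estimate.

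For part (b), the overall skeleton (interpolate $\Ha$ and $F_\mu(\Ha)$ by simply connected domains $\Omega_t$, parametrized so that capacity grows linearly, set $F_{\mu_{s,t}}=F_s^{-1}\circ F_t$ and check the hydrodynamic normalization) is sound and is essentially what the cited result achieves. However, what you call ``the main technical obstacle'' --- constructing a \emph{continuous increasing} family of hulls $K_t$ exhausting $K=\Ha\setminus F_\mu(\Ha)$ with controlled half-plane capacity --- is essentially the content of the theorem; none of the obvious attempts (scaling, truncation) give a monotone family for a general hull. The paper avoids re-proving this by invoking \cite[Thm.\ 1.2]{bracci+al2015} to embed $F_\mu$ directly into an evolution family with a boundary regular fixed point at $\infty$ and derivative $1$ there, and then reads off that each $f_{s,t}$ is an $F$-transform. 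Your plan would require supplying a proof of an equivalent embedding result, so as written it does not close.
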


$\rhd$-hemigroups are in one-to-one correspondence with certain Loewner chains, which are a useful method to study conformal mappings. 
We will first consider only very special Loewner chains, which allow us to prove Theorem \ref{main_sec_8} (a) in Section \ref{sec_univalent_F}. 
We then look at more general Loewner chains in Section \ref{general_Loewner_chains}, where we basically refer to the literature for proofs.   
A result from \cite{bracci+al2015} allows us to conclude Theorem \ref{main_sec_8} (b).

\section{Loewner chains}

Let $D\subsetneq \C$ be a simply connected domain. The following definition generalizes the notion of continuous semigroups on $D$, see Section \ref{continuous_semigroup}. 

\begin{definition}\label{EV_def:evolution_family}${}$
\begin{enumerate}[\rm (1)]
\item  Let $(f_{s,t})_{0\leq s \leq t}$ be a family of non-constant holomorphic self-mappings $f_{s,t}:D\to D$ satisfying
 \begin{enumerate}[\quad(a)]
\item $f_{s,s}(z)=z$ for all $z\in D$ and $s\geq 0$,
\item $f_{s,u} = f_{s,t} \circ f_{t,u}$ for all $0\leq s \leq t \leq u$,
\item $(s,t)\mapsto f_{s,t}$ is continuous with respect to locally uniform convergence.
\end{enumerate}
The family $(f_t)_{t\geq 0}:=(f_{0,t})_{t\geq 0}$ is called a \index{Loewner chain}\emph{(decreasing) Loewner chain on $D$}. We will call the mappings $f_{s,t}$ the \index{transition mappings}\emph{transition mappings} of the Loewner chain.

\item We call a Loewner chain $(f_{t})_{t\geq 0}$ an \index{Loewner chain!additive}\index{F-transform@$F$-transform}\emph{additive Loewner chain} if $D=\Ha$ and \\
$\lim_{y \to \infty} f_{s,t}(iy)/ (iy)=1$, or equivalently
$$f_{s,t} = F_{\mu_{s,t}}$$
 for all $0\leq s\leq t,$ where each $\mu_{s,t}$ is a probability measure on $\R$. 
\end{enumerate}

\end{definition}

Due to property (b), the domains $f_t(D)$ are decreasing, i.e.\ $f_t(D)\subseteq f_s(D)$ for all $s\leq t$. Usually, the literature focuses on increasing Loewner chains, where $f_s(D) \subseteq f_t(D)$ whenever $0\leq s\leq t$. Clearly, 
if $(f_t)_{t\in [0,T]}$ is an increasing Loewner chain, then 
$(f_{T-t})_{t\in [0,T]}$ is (a part of) a decreasing Loewner chain.\\

 \begin{figure}[H]
 \begin{center}
 \includegraphics[width=0.9\textwidth]{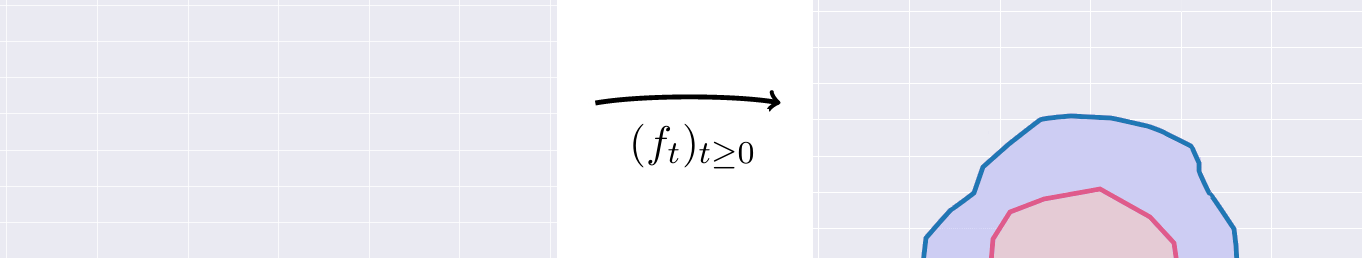}
 \caption{The complements $D\setminus f_t(D)$ of a Loewner chain $(f_t)_{t\geq 0}$ are growing subsets of $D$.}
 \end{center}
 \end{figure}

In 1923, C. Loewner introduced a differential equation for univalent functions that form an increasing Loewner chain 
 to attack the so called Bieberbach conjecture (\cite{Loewner:1923}), which we briefly explain in Appendix \ref{sec_Bie}. Afterwards, his ideas have been extended to more general settings, with recent applications in stochastic geometry (Schramm-Loewner evolution, see Section \ref{schramm}). We refer to \cite{AbateBracci:2010} for an historical overview of Loewner theory.

\begin{remark}\label{EV_remark_ev}
In Loewner theory, a family $(\phi_{s,t})_{0\leq s\leq t}$ of holomorphic mappings $\phi_{s,t}:D\to D$ is called an \emph{evolution family on $D$}
if
 \begin{itemize}
\item[(a)] $\phi_{s,s}(z)=z$ for all $z\in D$ and all $s \geq 0$,
\item[(b)] $\phi_{s,u} = \phi_{t,u} \circ \phi_{s,t}$ whenever $0\leq s\leq t\leq u$,
\item[(c)] $(s,t)\mapsto \phi_{s,t}$ is continuous with respect to locally uniform convergence.
\end{itemize}
If (b) is replaced by $\phi_{s,u} = \phi_{s,t} \circ \phi_{t,u}$, then the family is usually called a
\emph{reverse evolution family}. Thus the transition mappings of a decreasing Loewner chain form a reverse evolution family. 
\end{remark}

We have chosen the name \emph{additive} for the special Loewner chains in part (2) of Definition \ref{EV_def:evolution_family} as they are in one-to-one correspondence with $\rhd$-hemigroups.

\begin{lemma}\label{hem_loe}Let $(\mu_{s,t})_{0\leq s\leq t}$ be a $\rhd$-hemigroup. Then $F_{\mu_{0,t}}$ is an additive Loewner chain with transition mappings 
$F_{\mu_{s,t}}$. Conversely, if $(f_{t})_{t\geq 0}$ is an additive Loewner chain with transition mappings $f_{s,t}$, then $f_{s,t}=F_{\mu_{s,t}}$ for a 
$\rhd$-hemigroup $(\mu_{s,t})_{0\leq s\leq t}$. 
\end{lemma}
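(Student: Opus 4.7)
The plan is to verify the two directions by translating back and forth between the defining properties of a $\rhd$-hemigroup and of an additive Loewner chain, using two key facts already established: Theorem \ref{monotone_convolution}, which gives $F_{\mu \rhd \nu} = F_\mu \circ F_\nu$, and Theorem \ref{Julia}, which characterizes $F$-transforms of probability measures by the asymptotic condition $\lim_{y\to\infty} F(iy)/(iy)=1$.

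For the forward direction, assume $(\mu_{s,t})_{0\leq s\leq t}$ is a $\rhd$-hemigroup and set $f_{s,t}:=F_{\mu_{s,t}}$. Since each $\mu_{s,t}$ is a probability measure on $\R$, each $F_{\mu_{s,t}}$ maps $\Ha$ holomorphically into $\Ha$ (as $\Im G_{\mu_{s,t}}<0$ on $\Ha$) and is non-constant because $\lim_{y\to\infty}F_{\mu_{s,t}}(iy)/(iy)=1$ by Theorem \ref{Julia}. The normalization $\mu_{s,s}=\delta_0$ gives $f_{s,s}(z)=z$, and the convolution identity $\mu_{s,u}=\mu_{s,t}\rhd\mu_{t,u}$ combined with Theorem \ref{monotone_convolution} yields $f_{s,u}=F_{\mu_{s,t}}\circ F_{\mu_{t,u}}=f_{s,t}\circ f_{t,u}$. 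It remains to check continuity: weak convergence $\mu_{s_n,t_n}\to \mu_{s,t}$ is equivalent, by Lemma \ref{lemmaconvergence}, to locally uniform convergence $G_{\mu_{s_n,t_n}}\to G_{\mu_{s,t}}$ on $\Ha$; since none of these Cauchy transforms vanish on $\Ha$ (they take values in $\Ha^-$), the reciprocals $F_{\mu_{s_n,t_n}}=1/G_{\mu_{s_n,t_n}}$ converge locally uniformly on $\Ha$ as well. Hence $(f_t)_{t\geq 0}$ is an additive Loewner chain with the stated transition mappings.

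For the converse, let $(f_t)_{t\geq 0}$ be an additive Loewner chain with transition mappings $f_{s,t}$. Each $f_{s,t}$ is a holomorphic self-map of $\Ha$ with $\lim_{y\to\infty} f_{s,t}(iy)/(iy)=1$, so Theorem \ref{Julia} produces a unique $\mu_{s,t}\in\mathcal{P}(\R)$ with $f_{s,t}=F_{\mu_{s,t}}$. The identities $f_{s,s}=\operatorname{id}=F_{\delta_0}$ and $f_{s,u}=f_{s,t}\circ f_{t,u}=F_{\mu_{s,t}}\circ F_{\mu_{t,u}}=F_{\mu_{s,t}\rhd\mu_{t,u}}$ combined with the uniqueness in Theorem \ref{Julia} force $\mu_{s,s}=\delta_0$ and $\mu_{s,u}=\mu_{s,t}\rhd\mu_{t,u}$. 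Continuity of $(s,t)\mapsto\mu_{s,t}$ in the weak topology follows by running the same equivalence as above in reverse: locally uniform convergence of $f_{s_n,t_n}$ implies locally uniform convergence of $G_{\mu_{s_n,t_n}}=1/f_{s_n,t_n}$ (again using that $f_{s,t}$ does not vanish on $\Ha$), which by Lemma \ref{lemmaconvergence} is equivalent to weak convergence of the $\mu_{s_n,t_n}$.

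The argument is essentially a routine dictionary translation; the only small technical point is confirming that passing between $G_\mu$ and $F_\mu=1/G_\mu$ preserves locally uniform convergence, which is immediate because $G_\mu$ and $F_\mu$ are nowhere zero on $\Ha$.
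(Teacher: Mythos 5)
Your proof is correct and follows the same route as the paper, which also reduces the entire content of the lemma to the observation that condition (c) of Definition \ref{EV_def:evolution_family} is equivalent, via Lemma \ref{lemmaconvergence}, to weak continuity of $(s,t)\mapsto\mu_{s,t}$. The only minor slip is attributing the identity $F_{\mu\rhd\nu}=F_\mu\circ F_\nu$ to Theorem \ref{monotone_convolution} (which is the operator-level statement); here it is simply the definition of the monotone convolution $\rhd$ that is being used.
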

\begin{proof}
By Lemma \ref{lemmaconvergence}, condition (c) is equivalent to the continuity of $(s,t)\mapsto \mu_{s,t}$ with respect to weak convergence. 
\end{proof}

\section{Radial Loewner chains}\label{sec_radialssddf}

\begin{definition}${}$\begin{itemize}
	\item[(1)] A Loewner chain  $(f_t)_{t\geq0}$ on $\D$ with $f_{s,t}(0)=0$ for all $0\leq s\leq t$ will be called \emph{radial}.
	\item[(2)] A  \emph{normalized radial Loewner chain} is a radial Loewner chain with $f_{s,t}'(0)=e^{s-t}$ for all $0\leq s\leq t$.
\end{itemize}
\end{definition}

Radial Loewner chains have a probabilistic interpretation just like additive Loewner chains. We can define $\rhd$-hemigroups also for distributions on the unit circle $\partial\D$, which arise by replacing self-adjoint random variables by unitary random variables. See Exercises \ref{exxxa} and \ref{exxxb}.\\

For the rest of this section, we now fix a normalized radial Loewner chain $(f_t)_{t\geq0}$. 
We will see that $t\mapsto f_t$ is differentiable almost everywhere, and that each $f_{s,t}:\D\to\D$ is a univalent function. We basically follow \cite[Section 3.4]{Dur83} and \cite[Section 6.1]{P75} with slight modifications.\\

In order to obtain this result, we need the class $\mathcal{P}$ defined as
\[\mathcal{P} = \{p:\D\to\C \,|\, p\; 
 \text{holomorphic},\; \Re(p)> 0,\; p(0)=1 \}. \]
By the Herglotz representation formula, see Exercise \ref{Herglotz}, every $p\in \mathcal{P}$ can be written as 
\[p(z) = \int_{\mathbb{\partial \D}}\frac{u+z}{u-z} \rho(du) \]
for a probability measure $\rho$ on $\partial \D$.

\begin{lemma}\label{Herglotz_est}Let $p\in\mathcal{P}$. Then 
\[ |p(z)| \leq \frac{1+|z|}{1-|z|}\quad \text{and}\quad 
  |p'(z)| \leq \frac{2}{(1-|z|)^2} \quad \text{for all $z\in \D$.}\]
\end{lemma}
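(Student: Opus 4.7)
The plan is to exploit the Herglotz representation mentioned just above the lemma, namely
\[
p(z) = \int_{\partial\D} \frac{u+z}{u-z}\,\rho(du),
\]
where $\rho$ is a probability measure on $\partial\D$. Both bounds will then reduce to estimating the integrand $|u+z|/|u-z|$ (and its derivative) pointwise for $|u|=1$, $z\in\D$, and then pulling the estimate out of the integral using $\rho(\partial\D)=1$.

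First I would prove the bound on $|p(z)|$. For $|u|=1$ and $z\in\D$, the triangle inequality and the reverse triangle inequality give $|u+z|\leq 1+|z|$ and $|u-z|\geq 1-|z|>0$. Hence
\[
\left|\frac{u+z}{u-z}\right|\leq \frac{1+|z|}{1-|z|},
\]
and integrating against the probability measure $\rho$ yields $|p(z)|\leq (1+|z|)/(1-|z|)$.

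Next I would differentiate the Herglotz representation inside the integral. Since $\frac{d}{dz}\frac{u+z}{u-z}=\frac{2u}{(u-z)^2}$, we obtain
\[
p'(z) = \int_{\partial\D}\frac{2u}{(u-z)^2}\,\rho(du).
\]
The exchange of differentiation and integration is justified because the integrand is jointly continuous and holomorphic in $z$ with a uniform bound on compact subsets of $\D$. Using $|u|=1$ and $|u-z|\geq 1-|z|$ again,
\[
|p'(z)|\leq \int_{\partial\D}\frac{2}{(1-|z|)^2}\,\rho(du)=\frac{2}{(1-|z|)^2}.
\]

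There is no real obstacle here: the lemma is a direct consequence of the Herglotz representation together with the elementary inequalities $|u\pm z|\leq 1+|z|$ and $|u-z|\geq 1-|z|$. The only point to take care of is justifying differentiation under the integral sign, which is routine since $\rho$ is a finite measure and the integrand is holomorphic in $z$ with locally uniformly bounded derivative.
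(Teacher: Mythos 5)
Your proof is correct and matches the paper's argument almost verbatim: both start from the Herglotz representation, bound the integrand using $|u+z|\leq 1+|z|$ and $|u-z|\geq 1-|z|$, and for $p'$ differentiate under the integral to get the kernel $2u/(u-z)^2$. The only (cosmetic) difference is that the paper first divides numerator and denominator by $u$, which changes nothing since $|u|=1$.
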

\begin{proof} The Herglotz representation formula yields
\[|p(z)| \leq
 \int_{\mathbb{\partial \D}}\left|\frac{1+z/u}{1-z/u}\right| \rho(du) \leq 
\int_{\mathbb{\partial \D}}\frac{1+|z/u|}{1-|z/u|} \rho(du) =
\int_{\mathbb{\partial \D}}\frac{1+|z|}{1-|z|} \rho(du)=\frac{1+|z|}{1-|z|}.\]
Furthermore,
\[|p'(z)| = \left|\int_{\mathbb{\partial \D}}\frac{2u}{(u-z)^2} \rho(du)\right|\leq \int_{\mathbb{\partial \D}}\left|\frac{2u}{(u-z)^2}\right| \rho(du)\leq \int_{\mathbb{\partial \D}}\frac{2}{(1-|z|)^2} \rho(du)=\frac{2}{(1-|z|)^2}.\]
\end{proof}

The set of all holomorphic $f:\D\to \D$ forms a normal family and hence also the set of its derivatives $f'$ is a normal family. In the following we let $m(r)$, $r\in(0,1)$, be such that 
\[\text{$|f'(z)|\leq m(|z|)$ for all $z\in\D$ and all holomorphic $f:\D\to \D$ with $f(0)=0$.}\]
One can derive an explicit bound $m$. Dieudonne\'{e} has shown that one can choose  $m(r)=1$ for $r\in[0,\sqrt{2}-1]$ and 
$m(r)=\frac{(1+r^2)^2}{4r(1-r^2)}$ for $r\in[\sqrt{2}-1,1)$, see \cite[p.352]{Die31}, \cite{Bea97}.

\begin{lemma}Let $s\geq 0$. Then
\begin{equation}\label{estiii}
|f_{s,u}(t)-f_{s,t}(z)|  \leq 2|z|\frac{1+|z|}{1-|z|} m(|z|) |e^t-e^{u}|
\end{equation}
for all $t,u\geq s$ and all $z\in \D$.
\end{lemma}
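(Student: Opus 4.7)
The plan is to use the composition identity $f_{s,u} = f_{s,t}\circ f_{t,u}$ (valid when $t\leq u$, which we may assume by symmetry) and to estimate
\[|f_{s,u}(z) - f_{s,t}(z)| = |f_{s,t}(f_{t,u}(z)) - f_{s,t}(z)|\]
by splitting the right-hand side into an ``outer'' Lipschitz bound for $f_{s,t}$ and an ``inner'' closeness-to-identity bound for $f_{t,u}$. WLOG the bound function $m$ is non-decreasing (otherwise replace it by $r\mapsto \sup_{\rho\leq r} m(\rho)$).

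For the outer bound, the Schwarz lemma applied to $f_{t,u}\colon \D\to\D$ (which fixes $0$) gives $|f_{t,u}(z)|\leq |z|$, so the line segment from $z$ to $f_{t,u}(z)$ lies in $\overline{\D(0,|z|)}$. Along this segment $|f'_{s,t}(\zeta)|\leq m(|\zeta|)\leq m(|z|)$, and the fundamental theorem of calculus then yields
\[|f_{s,t}(f_{t,u}(z)) - f_{s,t}(z)| \leq m(|z|)\,|f_{t,u}(z) - z|.\]

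For the inner bound, the idea is to push the problem into the Carath\'eodory class $\mathcal{P}$ and apply Lemma \ref{Herglotz_est}. Set $g(z) = f_{t,u}(z)/z$, which extends holomorphically across $0$ with $g(0) = e^{t-u}$ and satisfies $|g|\leq 1$ on $\D$. The auxiliary function
\[h(z) := \frac{1 - g(z)}{1 + g(z)}\]
is well-defined on $\D$ (if $g$ is non-constant then $|g|<1$ by the maximum principle; if $g$ is constant then it equals $e^{t-u}>-1$), and a short computation gives $\Re h(z) = (1-|g(z)|^2)/|1+g(z)|^2\geq 0$. Assuming $t<u$ (otherwise both sides of \eqref{estiii} vanish), $h(0) = (1-e^{t-u})/(1+e^{t-u})>0$, so $p:=h/h(0)\in\mathcal{P}$, and Lemma \ref{Herglotz_est} gives $|h(z)|\leq h(0)(1+|z|)/(1-|z|)$.

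To finish, I would invert the definition of $h$ to get $1-g = 2h/(1+h)$, so that $|1-g(z)|\leq 2|h(z)|$ using $|1+h|\geq 1+\Re h\geq 1$. Hence
\[|f_{t,u}(z) - z| = |z|\,|1 - g(z)| \leq 2|z|\,\frac{1-e^{t-u}}{1+e^{t-u}}\,\frac{1+|z|}{1-|z|} \leq 2|z|\,\frac{1+|z|}{1-|z|}\,|e^t-e^u|,\]
where the last inequality uses $(1-e^{t-u})/(1+e^{t-u})\leq 1-e^{t-u} = e^{-u}(e^u-e^t)\leq |e^t-e^u|$ (valid since $u\geq 0$). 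Combined with the outer bound this gives \eqref{estiii}. The main obstacle is the inner bound: picking the right auxiliary function to channel the normalization $f'_{t,u}(0)=e^{t-u}$ into a Carath\'eodory estimate. The M\"obius trick via $h=(1-g)/(1+g)$ is exactly what converts the multiplicative datum $g(0)=e^{t-u}$ into an $\mathcal{P}$-class function whose size is controlled by Lemma \ref{Herglotz_est}.
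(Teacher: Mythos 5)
Your proposal is correct and follows essentially the same route as the paper's: both split $|f_{s,u}(z)-f_{s,t}(z)|$ via $f_{s,u}=f_{s,t}\circ f_{t,u}$, bound the outer map by $m(|z|)$ using FTC and the Schwarz lemma, and handle the inner deviation $|f_{t,u}(z)-z|$ by transforming $g=f_{t,u}(z)/z$ under $(1-g)/(1+g)$ into a Carath\'eodory-class function and applying Lemma \ref{Herglotz_est}. The only cosmetic difference is that you bound $|1+h|\geq 1$ after inverting the M\"obius map, while the paper bounds $|1+g|\leq 2$ before inverting; both yield the same factor of $2$, and your explicit remarks about $m$ being non-decreasing and the non-degeneracy of $g$ are harmless details the paper leaves implicit.
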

\begin{proof}
We have $f_t = f_s \circ f_{s,t}$ and $|f_{s,t}(z)|\leq |z|$ due to the Schwarz lemma. 
Hence $f_{s,t}(z)/z$ is an analytic function in $\D$ with modulus $\leq 1$. Define 
\begin{equation}\label{p_est}
p(z,s,t) = \frac{1+e^{s-t}}{1-e^{s-t}} \frac{1-f_{s,t}(z)/z}{1+f_{s,t}(z)/z}. 
\end{equation}
Then $p(\cdot,s,t)$ belongs to the class $\mathcal{P}$.
Since $|p(z,s,t)|\leq \frac{1+|z|}{1-|z|}$ by Lemma \ref{Herglotz_est}, \eqref{p_est} gives us
\[ |z-f_{s,t}(z)| \leq  2|z|(1-e^{s-t})\frac{1+|z|}{1-|z|}.\]
 Let $t\leq u$. Then
\begin{eqnarray*}
|f_{s,t}(t)-f_{s,u}(z)|&=&\left |\int_{f_{t,u}(z)}^z f'_{s,t}(w) dw\right| \leq 
|z-f_{t,u}(z)| m(|z|) \\
&\leq& 2|z|\frac{e^u-e^{t}}{e^u}\frac{1+|z|}{1-|z|} m(|z|)\leq 2|z|\frac{1+|z|}{1-|z|} m(|z|)(e^u-e^{t}).
\end{eqnarray*}
If $u\leq t$, then
\begin{eqnarray*}
|f_{s,u}(t)-f_{s,t}(z)| &=& \left|\int_{f_{u,t}(z)}^z f'_{s,u}(w) dw\right| \leq 
|z-f_{u,t}(z)| m(|z|) \\
&\leq& 2|z|\frac{e^t-e^{u}}{e^t}\frac{1+|z|}{1-|z|} m(|z|)\leq 2|z|\frac{1+|z|}{1-|z|} m(|z|) (e^t-e^{u}).
\end{eqnarray*}
\end{proof}

Recall that a function $f:[a,b]\to\C$, where $a,b\in\R$ with $a<b$, is called \emph{absolutely continuous} if for every $\eps>0$ 
there is $\delta>0$ such that $\sum_{k=1}^n|f(y_k)-f(x_k)|<\eps$ whenever $[x_1,y_1],...,[x_n,y_n]\subset [a,b]$ are disjoint subintervals with $\sum_{k=1}^n|y_k-x_k|<\delta$.\\
$f:[a,b]\to \C$ is absolutely continuous if and only if there exists a Lebesgue integrable function $g$ such that $g=f'$ almost everywhere and 
$f(x)=f(a)+\int_a^x g(y){\rm d}y$ for all $x\in[a,b]$.\\

We now obtain Loewner's partial differential equation. 

\begin{theorem}The function $t\mapsto f_{t}(z)$ is locally absolutely continuous for every $z\in \D$ and there exists a function $p:\D\times[0,\infty)\to\C$ such that $p(\cdot,t)\in \mathcal{P}$ for almost all $t\geq 0$ and $t\mapsto p(z,t)$ is measurable for all $z\in\D$ such that 
\begin{equation}\label{LE_1} \frac{\partial}{\partial t}f_{t}(z) = 
-z\frac{\partial}{\partial z}f_{t}(z) p(z,t) \quad 
\text{for almost all $t\geq 0$ and all $z\in\D$.}
  \end{equation} 
\end{theorem}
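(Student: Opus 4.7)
My plan is to use estimate \eqref{estiii} as the engine and the relation \eqref{p_est} as the mechanism for identifying $p(z,t)$. First, I would observe that \eqref{estiii} applied with $s=0$ gives, for any $T>0$ and $z\in\D$,
\[
|f_u(z)-f_t(z)|\;\leq\;2|z|\tfrac{1+|z|}{1-|z|}m(|z|)\,|e^t-e^u|\;\leq\;C(|z|)e^T\,|t-u|,\qquad t,u\in[0,T].
\]
Thus $t\mapsto f_t(z)$ is locally Lipschitz, hence locally absolutely continuous, and the partial derivative $\partial_t f_t(z)$ exists for $t$ in a full-measure subset of $[0,\infty)$ for each fixed $z$. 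To make the exceptional set independent of $z$, I would invoke the following Vitali-type argument: pick a countable dense subset $\{z_k\}\subset\D$ and let $E$ be the set of $t$ where $\partial_t f_t(z_k)$ exists for every $k$; this set has full measure. Because the difference quotients $h\mapsto (f_{t+h}-f_t)/h$ form a normal family on compact subsets of $\D$ (uniformly bounded by the Lipschitz constant above), pointwise convergence on $\{z_k\}$ upgrades to locally uniform convergence in $z$ by Vitali's theorem, so $\partial_t f_t(z)$ exists and is holomorphic in $z$ for every $t\in E$ and every $z\in\D$.

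Next, at $t\in E$ I would compute the derivative using the semigroup relation $f_{t+h}=f_t\circ f_{t,t+h}$, valid for $h\geq 0$. Writing
\[
\frac{f_{t+h}(z)-f_t(z)}{h}\;=\;\left(\int_0^1 f_t'\bigl(z+\tau(f_{t,t+h}(z)-z)\bigr)\,d\tau\right)\cdot\frac{f_{t,t+h}(z)-z}{h},
\]
and noting $f_{t,t+h}(z)\to z$ locally uniformly as $h\downarrow 0$ (again by \eqref{estiii}), the bracketed integral tends to $f_t'(z)$. Meanwhile, rearranging \eqref{p_est} with $s$ replaced by $t$ and $t$ replaced by $t+h$ gives
\[
\frac{z-f_{t,t+h}(z)}{h}\;=\;\bigl(z+f_{t,t+h}(z)\bigr)\cdot\frac{1-e^{-h}}{h(1+e^{-h})}\cdot p(z,t,t+h),
\]
where $p(\cdot,t,t+h)\in\mathcal{P}$. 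The non-$p$ factor converges to $2z\cdot\tfrac12=z$ as $h\downarrow 0$. Since the left-hand side of the previous display converges (because $\partial_t f_t(z)$ exists at $t\in E$ and $f_t'(z)\neq 0$ near $z=0$, say, with the general $z$ handled by a short holomorphy argument), the limit $p(z,t):=\lim_{h\downarrow 0}p(z,t,t+h)$ must exist. The class $\mathcal{P}$ is closed under locally uniform limits by the Herglotz representation, so $p(\cdot,t)\in\mathcal{P}$, and combining the two factors yields
\[
\partial_t f_t(z)\;=\;-z\,f_t'(z)\,p(z,t)\quad\text{for every }t\in E,\ z\in\D.
\]

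For the measurability of $t\mapsto p(z,t)$, I would fix $z$ and a sequence $h_n\downarrow 0$; since $(s,t)\mapsto f_{s,t}$ is jointly continuous, each function $t\mapsto p(z,t,t+h_n)$ is continuous, and $p(z,t)$ is the pointwise limit on $E$, hence Borel measurable (set $p(z,t)=1$ off $E$). The main obstacle I anticipate is the uniformity step in the first paragraph: promoting a.e.\ differentiability from a dense set of $z$ to a statement uniform in $z$ so that the same set $E$ works for all $z\in\D$, and then justifying the interchange of $\lim_{h\downarrow 0}$ with the evaluation $z\mapsto p(z,t,t+h)$ inside the normal family $\mathcal{P}$. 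Once this is in place, the remaining steps are essentially manipulations of \eqref{estiii} and \eqref{p_est}.
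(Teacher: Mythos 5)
Your proposal is correct and follows essentially the same route as the paper: local Lipschitz continuity from \eqref{estiii}, a countable dense set of $z$-values plus Vitali's theorem to get a common exceptional $t$-set $E$, and then factoring the difference quotient via \eqref{p_est} and passing to the limit using normality of $\mathcal{P}$. The concerns you flag at the end (promoting a.e.\ differentiability to a $z$-uniform statement, and identifying the limit inside $\mathcal{P}$) are precisely the steps your own Vitali-plus-normal-family argument already resolves, so no gap remains.
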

\begin{proof}
\eqref{estiii} implies that  $t\mapsto f_{t}(z)$ is locally absolutely continuous. It follows that $\frac{\partial}{\partial t}f_{t}(z)$ exists for almost all $t\geq s$. Since the union of countably many sets of Lebesgue measure $0$  is again a set of Lebesgue measure $0$, the derivative exists for all $z=\frac1{k}$, $k=2,3,...$, and all $t\not\in E$ for some set $E\subset [0,\infty)$ of Lebesgue measure $0$. The bound \eqref{estiii} and Vitali's theorem imply that $\frac{\partial}{\partial t}f_{t}(z)$ exists for all $z\in\D$ and all $t\not\in E$.
Let $t\not\in E$ and let $u> t$. Then we can write
\[ \frac{f_{t}(z)-f_{u}(z)}{t-u} = 
\frac{f_{t}(z)-f_{u}(z)}{z-f_{t,u}(z)} \frac{z-f_{t,u}(z)}{t-u}=
\frac{f_{t}(z)-f_{t}(f_{t,u}(z))}{z-f_{t,u}(z)} \frac{1-e^{t-u}}{t-u}\frac{z+f_{t,u}(z)}{1+e^{t-u}} p(z,t,u). \] 
\eqref{estiii} implies that $f_{t,u}(z)\to z$ locally uniformly as $u\to t$. Hence
\[ \frac{\partial}{\partial t}f_{t}(z) = 
-z\frac{\partial}{\partial z}f_{t}(z) p(z,t),
  \] 
	for some $p(\cdot,t)\in \mathcal{P}$, as $\mathcal{P}$ is closed with respect to locally uniform convergence.	
	As $\frac{\partial}{\partial t}f_{t}(z)$ and $\frac{\partial}{\partial z}f_{t}$ are both measurable in $t$, also $p(z,t)$ is measurable in $t$.
	\end{proof}
	
\begin{remark}
Fix $s\geq 0$, let $t\in E \cap [s,\infty)$ and let $u> t$. Then 
\begin{eqnarray*} \frac{f_{s,t}(z)-f_{s,u}(z)}{t-u} &=& 
\frac{f_{s,t}(z)-f_{s,u}(z)}{z-f_{t,u}(z)} \frac{z-f_{t,u}(z)}{t-u} \\
&=& \frac{f_{s,t}(z)-f_{s,t}(f_{t,u}(z))}{z-f_{t,u}(z)} \frac{1-e^{t-u}}{t-u}\frac{z+f_{t,u}(z)}{1+e^{t-u}} p(z,t,u).\end{eqnarray*}
We know that the right side converges as $u\to t$. Thus $\frac{\partial}{\partial t}f_{s,t}(z)$ exists with
\[ \frac{\partial}{\partial t}f_{s,t}(z) = -z\frac{\partial}{\partial z}f_{s,t}(z) p(z,t) \quad 
\text{for almost all $t\in E\cap [s,\infty)$ and all $z\in\D$.}  \]
\end{remark}

In order to study Loewner's partial differential equation, it is helpful to consider a related ordinary differential equation.

\begin{theorem}
Let $p:\D\times[0,\infty)\to\C$ be such that $p(\cdot,t)\in \mathcal{P}$ for almost all $t\geq 0$ and $t\mapsto p(z,t)$ is measurable for all $z\in\D$. Then there exists a unique solution of holomorphic functions $g_{s,t}:\D\to\D$, locally absolutely continuous with respect to $t$, to the initial value problem
\begin{equation}\label{LE_0} \frac{\partial}{\partial t}g_{s,t} = -g_{s,t} p(g_{s,t},t), \quad \text{for a.e.\ $t\geq s$, $g_{s,s}(z)=z$}. \end{equation}
Furthermore, each $g_{s,t}:\D\to \D$ is univalent and 
$g_{s,u} = g_{t,u} \circ g_{s,t}$ for all $0\leq s\leq t\leq u$.
\end{theorem}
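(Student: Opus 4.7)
The plan is to solve, for each fixed $s\geq 0$ and each fixed $z\in\D$, the initial value problem
\[
\dot w(t)= -w(t)\,p(w(t),t),\qquad w(s)=z,
\]
by Carathéodory's theorem, and then to assemble all these solution curves into the two-parameter family $g_{s,t}(z):=w(t;s,z)$. The vector field $F(w,t):=-w\,p(w,t)$ is holomorphic (hence locally Lipschitz) in $w$ for a.e.\ $t$, measurable in $t$ for every fixed $w$, and on any compact disc $\overline{\D_r}\subset \D$ is dominated by $r(1+r)/(1-r)$ thanks to the Herglotz bound $|p(\cdot,t)|\leq (1+|w|)/(1-|w|)$ from Lemma \ref{Herglotz_est}. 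Carathéodory existence and (Lipschitz) uniqueness then produce a unique locally absolutely continuous solution on some right neighborhood of $s$. The crucial a~priori estimate that lets me extend it to all $t\ge s$ is
\[
\frac{d}{dt}|w(t)|^2 = 2\,\Re\bigl(\overline{w(t)}\dot w(t)\bigr)=-2|w(t)|^2\,\Re p(w(t),t)\le 0,
\]
so $|w(t)|\le|z|<1$ for all $t\ge s$; the solution never approaches $\partial\D$ and therefore does not blow up in finite time.

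Next I would show that $z\mapsto g_{s,t}(z)$ is holomorphic on $\D$. The quickest route is to formally differentiate the integral equation
\[
g_{s,t}(z)=z-\int_s^t g_{s,r}(z)\,p(g_{s,r}(z),r)\,dr
\]
in $z$: the candidate $\partial_z g_{s,t}(z)$ satisfies a linear Carathéodory ODE whose unique solution, by Grönwall, is an absolutely convergent exponential integral. To make this rigorous without technical fuss I would instead fix a compact $K\subset\D$, note that $(z,t)\mapsto g_{s,t}(z)$ is jointly continuous (from the uniform Lipschitz bound and continuity of $t\mapsto g_{s,t}(z)$), and verify the Cauchy--Morera condition $\oint_\gamma g_{s,t}(z)\,dz=0$ for any small loop $\gamma\subset\D$ by interchanging $\oint_\gamma$ and $\int_s^t$ via Fubini and using that each $g_{s,r}$ is a uniform limit of the Picard iterates, which are holomorphic in $z$.

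The evolution/univalence structure then falls out of uniqueness. For $0\le s\le t\le u$ and fixed $z$, both $r\mapsto g_{s,r}(z)$ (restricted to $[t,u]$) and $r\mapsto g_{t,r}(g_{s,t}(z))$ solve the same ODE with the same value at $r=t$; by Carathéodory uniqueness they agree on $[t,u]$, evaluation at $r=u$ giving $g_{s,u}=g_{t,u}\circ g_{s,t}$. For univalence, suppose $g_{s,t}(z_1)=g_{s,t}(z_2)=:y$. Then $r\mapsto g_{s,r}(z_1)$ and $r\mapsto g_{s,r}(z_2)$ are two solutions of the ODE on $[s,t]$ that agree at $r=t$; running the local Lipschitz/Grönwall uniqueness argument backward from $t$ to $s$ (applicable because $F$ is locally Lipschitz in $w$ and both solutions stay in the compact disc $\overline{\D_{\max(|z_1|,|z_2|)}}$) forces them to coincide on all of $[s,t]$, whence $z_1=z_2$.

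I expect the main obstacle to be the holomorphic dependence of $g_{s,t}(z)$ on $z$: one has to be a little careful because $p(\cdot,t)$ is only prescribed for a.e.\ $t$, so a naive appeal to ``smooth dependence on initial data'' is not quite in the standard form. The cleanest workaround is probably the Picard iteration scheme $g^{(n+1)}_{s,t}(z)=z-\int_s^t g^{(n)}_{s,r}(z)\,p(g^{(n)}_{s,r}(z),r)\,dr$: each iterate is visibly holomorphic in $z$ (since $p(\cdot,r)\in\mathcal{P}$), the contraction estimate coming from the uniform Lipschitz bound on $F$ over $\overline{\D_r}$ gives locally uniform convergence, and the limit inherits holomorphy in $z$ via Weierstrass. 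Everything else — the semigroup relation, the bound $g_{s,t}(\D)\subset\D$, and univalence — then follows purely from the one-dimensional uniqueness lemma.
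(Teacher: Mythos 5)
Your overall strategy coincides with the paper's: establish existence and uniqueness via a Picard/Carathéodory argument powered by the Herglotz bounds from Lemma~\ref{Herglotz_est}, obtain holomorphy in $z$ as a locally uniform limit of holomorphic iterates, and deduce the cocycle property and univalence from forward/backward Grönwall uniqueness. The univalence arguments are essentially identical (the paper's computation that $e^{Kt}|g_{s,t}(z_1)-g_{s,t}(z_2)|$ is nondecreasing is the same backward-uniqueness idea you run from $t$ down to $s$).

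The genuine difference is the form of the Picard iteration, and here the paper's choice buys real simplification. The paper iterates the \emph{multiplicative} fixed-point equation
$g^{n+1}_{s,t}(z)=z\exp\bigl(-\int_s^t p(g^{n}_{s,\tau}(z),\tau)\,d\tau\bigr)$
starting from $g^0\equiv 0$. Because $\Re p\ge 0$, each iterate automatically satisfies $|g^{n}_{s,t}(z)|\le|z|\le r$, so the whole sequence is trapped in $\overline{\D_r}$ for all $t$, and the estimate $|e^{-a}-e^{-b}|\le|a-b|$ combined with $|p'|\le 2/(1-r)^2$ delivers the factorial bound $|g^{n+1}-g^n|\le 2^n(t-s)^n/((1-r)^{2n}n!)$, which gives locally uniform convergence on $[s,T]$ for \emph{every} $T$ at once. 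In your additive form $g^{n+1}(z)=z-\int_s^t g^{n}(z)\,p(g^{n}(z),r)\,dr$ there is no such a~priori containment: already $g^{1}(z)=z\bigl(1-\int_s^t p(z,\tau)\,d\tau\bigr)$ can have modulus $>|z|$ once $t-s$ is of order $1$, so the iterates may escape $\overline{\D_r}$ and the Lipschitz bound on $F$ over that disc ceases to apply. This is repairable — restrict to a short time interval on which the iterates provably remain in a slightly larger disc, concatenate using the flow property, and use your $\tfrac{d}{dt}|w|^2\le 0$ estimate for the limiting solution — but it adds exactly the kind of technical fuss the paper's multiplicative ansatz is designed to avoid. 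Also note that your Morera-plus-Fubini sketch is circular as written (proving $\oint_\gamma g_{s,t}(z)\,dz=0$ requires already knowing that $z\mapsto g_{s,r}(z)$ is holomorphic for a.e.\ $r<t$), though you correctly identify the Picard route as the clean way to obtain holomorphy by Weierstrass; if you adopt that route you should fix the initialization and address the containment issue above.
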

\begin{proof}Let $s\geq 0$ and $r\in(0,1)$. 
Let $g_{s,t}^0(z)\equiv 0$ and for $n\in \N$ and $|z|\leq r$ let 
\[g_{s,t}^n(z):= z \exp\left(-\int_s^t p(g_{s,\tau}^{n-1}(z),\tau)d\tau\right).\]
As $\Re p(\cdot,t)>0$ in $\D$ for almost all $t$, we see by induction that $|g_{s,t}^n(z)\leq r|$, which justifies the definition. Since $|e^{-a}-e^{-b}|\leq |a-b|$ for complex numbers 
$a,b$ with $\Re(a),\Re(b)\geq0$, it follows together with the second estimate from Lemma \ref{Herglotz_est} that
\[ |g_{s,t}^{n+1}(z)-g_{s,t}^n(z)|\leq \int_s^t |p(g_{s,\tau}^{n}(z),\tau)- p(g_{s,\tau}^{n-1}(z),\tau)| d\tau \leq \frac{2}{(1-r)^2} \int_s^t |g_{s,t}^{n+1}(z)-g_{s,t}^n(z)|d\tau. \]
We can now show by induction that 
\[ |g_{s,t}^{n+1}(z)-g_{s,t}^n(z)|\leq \frac{2^n(t-s)^n}{(1-r)^{2n}n!} \]
for all $n=0,1,...$ and all $|z|\leq r$. Consequently, 
$\lim_{n\to\infty}g_{s,t}^n(z)=:g_{s,t}(z)$ exists uniformly in $|z|\leq r$, $s\leq t\leq T$ for every $T>s$. Thus $g_{s,t}(z)$ is defined for all $t\geq s$ and all $z\in\D$ and $z\mapsto g_{s,t}(z)$ is holomorphic with $g_{s,t}(\D)\subset \D$. Furthermore, 
\[g_{s,t}(z) = z \exp\left(-\int_s^t p(g_{s,\tau}(z),\tau)d\tau\right),\]
which shows that $g_{s,t}$ solves \eqref{LE_0} and that $t\mapsto g_{s,t}(z)$ is locally absolutely continuous.\\

Now let $h_{s,t}$ be another solution to \eqref{LE_0}. Fix $z\in\D$. Then we obtain the estimate 
\[|g_{s,t}(z) - h_{s,t}(z)| \leq \frac{2|z|}{(1-|z|)^2} \int_s^t |g_{s,\tau}(z) - h_{s,\tau }(z)|d\tau,
\]
which implies 
\[|g_{s,t}(z) - h_{s,t}(z)| \leq  \frac{2^{n+1}(t-s)^n}{(1-|z|)^{2n}n!}.\]
As $n\to\infty$, we obtain $g_{s,t}(z)= h_{s,t}(z)$.\\

Let $0\leq s\leq \tau$. Consider the function $t\mapsto g_{\tau, t} \circ g_{s,\tau}$ for $t\geq \tau$. It coincides with $g_{s,\tau}$ for $t=\tau$ and a simple calculation shows that it satisfies the differential equation \eqref{LE_0}. Hence, $g_{s,t} = g_{\tau, t} \circ g_{s,\tau}$ for all $0\leq s\leq \tau \leq t$.\\

Finally, we show that $g_{s,t}$ is univalent. Suppose that 
$g_{s,T}(z)=g_{s,T}(w)$ for some $z,w\in \D$ with $z\not=w$ and some $T>s$. Let $s\leq t\leq T$. Then 
\begin{eqnarray*}
&&\frac{d}{dt}(g_{s,t}(z)-g_{s,t}(w))=g_{s,t}(z)p(g_{s,t}(z),t) - g_{s,t}(w)p(g_{s,t}(w),t)\\
&=&
g_{s,t}(z) (p(g_{s,t}(z),t)-p(g_{s,t}(w),t) )+ p(g_{s,t}(w),t)(g_{s,t}(z)-g_{s,t}(w)).
\end{eqnarray*}
Thus, with Lemma \ref{Herglotz_est},
\[|\frac{d}{dt}|g_{s,t}(z) - g_{s,t}(w)|| \leq 
|\frac{d}{dt}(g_{s,t}(z) - g_{s,t}(w))| \leq  
K |g_{s,t}(z) - g_{s,t}(w)|\]
for some constant $K$. In particular, 
\[ \frac{d}{dt}|g_{s,t}(z) - g_{s,t}(w)| \geq 
-K |g_{s,t}(z) - g_{s,t}(w)|, \]
which implies $\frac{d}{dt}(e^{Kt}|g_{s,t}(z) - g_{s,t}(w)| \geq 
0.$ Integration from $s$ to $T$ yields
\[ |g_{s,T}(z) - g_{s,T}(w)| - |g_{s,s}(z) - g_{s,s}(w)| = - |z - w| \geq 0. \]
Thus $z=w$, a contradiction. 
\end{proof}

\begin{corollary}\label{EV_continutiy_transition}
 Let $T>0$. Then $s\mapsto g_{s,T}(z)$ is differentiable for almost every $s\in [0,T]$ with 
\[ \frac{\partial}{\partial s} g_{s,T}(z) = z \frac{\partial}{\partial z}g_{s,T}(z) \cdot  p(z, T-s).\]
\end{corollary}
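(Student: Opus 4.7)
The key is to differentiate the cocycle relation $g_{s,T}=g_{s+h,T}\circ g_{s,s+h}$ (valid for $0\leq s\leq s+h\leq T$) at $h=0$. Since the left-hand side is independent of $h$, the chain rule will transfer the known $t$-derivative $\partial_t g_{s,t}(z)|_{t=s}=-zp(z,s)$, coming from the defining ODE, into a formula for $\partial_s g_{s,T}$.

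\textbf{Execution.} From the cocycle I would write
\[ g_{s+h,T}(z)-g_{s,T}(z)=g_{s+h,T}(z)-g_{s+h,T}(g_{s,s+h}(z))=\int_{g_{s,s+h}(z)}^{z}g'_{s+h,T}(w)\,dw. \]
Because $0$ is a fixed point of the ODE, $g_{u,T}(0)=0$ for all $u$, and the Dieudonn\'e bound $|g'_{u,T}(w)|\leq m(|w|)$ gives $|g_{s+h,T}(z)-g_{s,T}(z)|=O(h)$ locally uniformly in $z$. Hence $s\mapsto g_{s,T}$ is locally uniformly continuous and $g'_{s+h,T}\to g'_{s,T}$ locally uniformly by the Weierstrass theorem. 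From the integrated ODE,
\[ \frac{z-g_{s,s+h}(z)}{h}=\frac{1}{h}\int_s^{s+h}g_{s,\tau}(z)\,p(g_{s,\tau}(z),\tau)\,d\tau. \]
The bounds $|p(w,\tau)|\leq(1+r)/(1-r)$ and $|\partial_w p(w,\tau)|\leq 2/(1-r)^2$ on $|w|\leq r$ from Lemma \ref{Herglotz_est} make the integrand uniformly bounded and locally Lipschitz in $w$. Since $g_{s,\tau}(z)\to z$ as $\tau\downarrow s$, an application of Lebesgue's differentiation theorem to the bounded measurable function $\tau\mapsto z\,p(z,\tau)$ shows that this quotient tends to $z\,p(z,s)$ for a.e.\ $s\in[0,T)$. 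Factoring
\[ \frac{g_{s+h,T}(z)-g_{s,T}(z)}{h}=\left(\frac{1}{z-g_{s,s+h}(z)}\int_{g_{s,s+h}(z)}^{z}g'_{s+h,T}(w)\,dw\right)\cdot\frac{z-g_{s,s+h}(z)}{h}, \]
and passing to the limit (the first bracket converges to $g'_{s,T}(z)$ by locally uniform convergence of $g'_{s+h,T}$) produces the right derivative $g'_{s,T}(z)\cdot z\,p(z,s)$. The left-sided limit follows identically from $g_{s+h,T}=g_{s,T}\circ g_{s+h,s}$ for $h<0$, by applying the same analysis to $g_{s+h,s}(z)-z=\int_{s+h}^{s}(-g_{s+h,\tau}(z)p(g_{s+h,\tau}(z),\tau))\,d\tau$.

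\textbf{Main obstacle.} The delicate point is producing a single null set of exceptional $s\in[0,T]$ on which the claimed derivative exists for every $z\in\D$ simultaneously. This would be handled by proving the Lebesgue-type limit for each $z$ in a countable dense subset of $\D$, and then upgrading to all $z\in\D$ via Vitali's theorem; the uniform bounds from Lemma \ref{Herglotz_est} guarantee the required normality of the families of holomorphic difference quotients involved.
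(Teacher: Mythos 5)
Your proposal is correct and takes a genuinely different route from the paper. You differentiate the local cocycle $g_{s,T}=g_{s+h,T}\circ g_{s,s+h}$; the paper instead differentiates the anchoring relation $g_{s,T}\circ g_{0,s}=g_{0,T}$. The anchoring trick lets the paper reuse the already-established a.e.\ differentiability of $s\mapsto g_{0,s}(z)$, so no new Lebesgue-point analysis is needed, at the price of obtaining the identity only at points $w\in g_{0,s}(\D)$ and then needing a final holomorphic-extension step to recover it on all of $\D$. Your local decomposition works directly for every $z\in\D$, but you must re-run the Lebesgue differentiation for $\tfrac{z-g_{s,s+h}(z)}{h}\to z\,p(z,s)$ and then deal with the $z$-dependence of the exceptional null set; your fix (a countable dense set plus Vitali's theorem) is correct and is exactly the device already used in the paper's proof of the preceding Loewner PDE theorem, so the two routes carry comparable technical overhead. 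Your use of $g_{u,T}(0)=0$ together with the Dieudonn\'e derivative bound to get the $O(h)$ modulus of continuity is a clean touch.

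You should, however, flag explicitly that your computation yields $p(z,s)$ rather than the stated $p(z,T-s)$. Your time argument is the mathematically correct one: the right derivative of $s\mapsto g_{s,T}(z)$ at a Lebesgue point $s$ is governed by the Herglotz data at time $s$, and near $s=T$ (where $g'_{s,T}(z)\to 1$) the derivative approaches $z\,p(z,T)$, not $z\,p(z,0)$. The paper's own proof is internally inconsistent on this point: the text computes the relevant factor as $p(g_{0,s}(z),s)$, and then the displayed conclusion silently replaces the time argument by $T-s$; this is a typo. Your blind argument in fact corrects it, but a reader would expect you to notice and remark on the mismatch between the formula you prove and the formula you were asked to prove, rather than pass over it without comment.
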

\begin{proof}
We have $g_{s,T} \circ g_{0,s} = g_{0,T}$ for all $s\in [0,T]$. Thus

\[\frac{g_{s+h,T}(g_{0,s}(z))-g_{s,T}(g_{0,s}(z))}{h} =  -\frac{g_{s+h,T}(g_{0,s+h}(z))-g_{s+h,T}(g_{0,s}(z))}{h} \]
for all $h$ small enough. 
 If $\tau \mapsto g_{0,\tau}$ is differentiable at $s$, then we see that the right side converges to 
$\frac{\partial}{\partial z}g_{s,T}(g_{0,s}(z)) \cdot (- g_{0,s}(z) p(g_{0,s}(z),s))$ as $h\to 0$. 
This shows that $\tau \mapsto g_{\tau,T}(g_{0,s}(z))$ is differentiable at $\tau=s$ with
\[\frac{\partial}{\partial s} g_{s,T}(g_{0,s}(z)) =  \frac{\partial}{\partial z}g_{s,T}(g_{0,s}(z)) \cdot  g_{0,s}(z) p(g_{0,s}(z), T-s),\]
and we conclude that
\[ \frac{\partial}{\partial s} g_{s,T}(w) = w \frac{\partial}{\partial w} g_{s,T}(w) \cdot p(w, T-s)\]
for all $w\in g_{0,s}(\D)$ and almost all $s\in [0,T]$. However, the right side (and thus its integral with respect to $s$) can be extended holomorphically to $\D$ and thus 
$\frac{\partial}{\partial s} g_{s,T}(w)$ satisfies the PDE for all $w\in \D$.
\end{proof}

\begin{corollary}\label{mult_univ}
The initial value problem \eqref{LE_1} has exactly one solution 
$(h_t)_{t\geq0}$ of holomorphic mappings $h_t:\D\to\D$, locally absolutely continuous in $t$. The family $(h_t)_{t\geq0}$ is a normalized radial Loewner chain and and each $h_t$ is univalent.
\end{corollary}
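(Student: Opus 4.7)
The plan is to produce the candidate $h_t$ directly from the ODE of Theorem \ref{LE_0} by a time-reversal trick, and to verify the Loewner-chain structure and uniqueness afterward. The guiding idea is the method of characteristics for \eqref{LE_1}: the characteristic ODE $\dot z(\tau) = z(\tau)\, p(z(\tau),\tau)$ has $\frac{d|z|^2}{d\tau} = 2|z|^2 \Re p(z,\tau) \geq 0$, so $|z|$ is non-decreasing forward in time and the \emph{backward} characteristic from any $(t,z)\in [0,\infty)\times\D$ stays in $\D$; along characteristics $\tau\mapsto h_\tau(z(\tau))$ is constant, so $h_t(z)$ equals the basepoint of the backward characteristic ending at $z$ at time $t$.

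To make this rigorous within the framework already in place, I would, for each $t \geq 0$, introduce the time-reversed weight $\tilde p_t(z,\sigma):=p(z,t-\sigma)$, which lies in $\mathcal{P}$ for a.e.\ $\sigma\in[0,t]$ and is measurable in $\sigma$. Theorem \ref{LE_0} applied to $\tilde p_t$ then produces a univalent flow $\tilde g^{(t)}_{0,t}:\D\to\D$, and I would set
\[
h_t(z) := \tilde g^{(t)}_{0,t}(z).
\]
The substitution $v(\sigma):=z(t-\sigma)$ converts the backward characteristic ODE of \eqref{LE_1} into exactly \eqref{LE_0} with weight $\tilde p_t$, so this definition captures the characteristic endpoint. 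From this representation I would immediately read off $h_0=\mathrm{id}$, $h_t(\D)\subseteq\D$, univalence, and (using $p(0,\cdot)\equiv 1$) the normalization $h_t(0)=0$, $h_t'(0)=e^{-t}$. Local absolute continuity of $t\mapsto h_t$ and differentiability for a.e.\ $t$ would come from a Gronwall-type estimate analogous to the one in the proof of Theorem \ref{LE_0}, and the PDE \eqref{LE_1} itself is recovered by differentiating $h_\tau(z(\tau))=\text{const.}$ in $\tau$ along the characteristic through $(t,z)$.

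For the Loewner-chain structure I would define the transition mappings $h_{s,t}$ as the backward characteristic flow from $(t,\cdot)$ to time $s$ (equivalently, by repeating the time-reversal on $[s,t]$). Each $h_{s,t}$ is then univalent from $\D$ to $\D$ with $h_{s,t}(0)=0$ and $h_{s,t}'(0)=e^{s-t}$, and the composition identity $h_t = h_s\circ h_{s,t}$ follows from the flow (semigroup) property of the characteristic ODE. Continuity of $(s,t)\mapsto h_{s,t}$ follows from continuous dependence of ODE solutions on parameters, so $(h_t)_{t\geq 0}$ is a normalized radial Loewner chain. Uniqueness is then cheap: if $\hat h_t$ is any locally absolutely continuous family of holomorphic mappings $\hat h_t:\D\to\D$ solving \eqref{LE_1} with $\hat h_0=\mathrm{id}$, then $\frac{d}{d\tau}\hat h_\tau(z(\tau))=0$ for a.e.\ $\tau$ along every characteristic, so $\hat h_t(z(t))=\hat h_0(z(0))=z(0)=h_t(z(t))$, and since backward characteristics from time $t$ cover $\D$, we conclude $\hat h_t=h_t$.

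The main obstacle will be the existence/regularity step, specifically verifying the PDE \eqref{LE_1} for a.e.\ $t$: the construction depends on $t$ not only through the integration length but also through the $t$-dependent weight $\tilde p_t$, so the differentiation of $\tilde g^{(t)}_{0,t}(z)$ in $t$ has to be handled carefully (essentially by treating it as a joint ODE/characteristic derivative and invoking Corollary \ref{EV_continutiy_transition} applied to $\tilde p_t$, which already gives differentiation of the transition maps in the starting time). Once this is done, the remaining steps (univalence, normalization, composition, uniqueness) are all immediate corollaries of Theorem \ref{LE_0} and the characteristic formulation.
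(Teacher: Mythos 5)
Your construction---fix $t$, time-reverse the Herglotz field to $\tilde p_t(z,\sigma):=p(z,t-\sigma)$, run the ODE of Theorem \ref{LE_0} for this field on $[0,t]$, and set $h_t:=\tilde g^{(t)}_{0,t}$---produces the same family of maps as the paper, and the characteristic description, the normalizations $h_0=\mathrm{id}$, $h_t(0)=0$, $h_t'(0)=e^{-t}$, univalence, the composition law, and the backward-characteristic uniqueness argument are all correct.

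The one step that remains open is exactly the one you flag: verifying that $(h_t)$ satisfies \eqref{LE_1}. As you note, your candidate depends on $t$ both through the integration interval $[0,t]$ and through the weight $\tilde p_t$, and the proposed remedy of ``invoking Corollary \ref{EV_continutiy_transition} applied to $\tilde p_t$'' does not work directly, because that corollary differentiates a \emph{single} flow in its starting time, with the weight held fixed. The observation that closes the gap is a shift identity: fix any $T\geq t$ and note $\tilde p_t(z,\sigma)=p(z,t-\sigma)=\tilde p_T(z,\sigma+(T-t))$, so by uniqueness in Theorem \ref{LE_0} the two flows satisfy $\tilde g^{(t)}_{0,\sigma}=\tilde g^{(T)}_{T-t,\,T-t+\sigma}$, hence $h_t=\tilde g^{(T)}_{T-t,T}$ for all $t\in[0,T]$. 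Now $t$ enters only through the starting time of one flow with fixed weight $\tilde p_T$, and Corollary \ref{EV_continutiy_transition} applied to $\tilde p_T$ gives
$\partial_t h_t(z)=-\partial_s\tilde g^{(T)}_{s,T}(z)\big|_{s=T-t}=-z\,\partial_z h_t(z)\,\tilde p_T(z,T-t)=-z\,\partial_z h_t(z)\,p(z,t)$
for a.e.\ $t\in[0,T]$, which is \eqref{LE_1}; it also shows directly that the definition is independent of the cut-off $T$, so the chain extends to all $t\geq 0$. This is precisely the device the paper uses (it writes $h_t=g_{T-t,T}$ for a fixed $T$); once you add it, your argument is a complete proof along the same route as the paper's, and without it the key PDE verification is asserted rather than established.
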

\begin{proof}
Let $g_{s,t}$ be the solution to \eqref{LE_0} and fix some $T>0$.
Now define $h_t = g_{T-t,T}$, $0\leq t\leq T$. 
We have $h_t = g_{T-t,T} = g_{T-s, T} \circ g_{T-t,T-s}=h_s \circ g_{T-t,T-s}$ whenever $s\leq t$. Hence $(h_t)_{0\leq t\leq T}$ is (a part of) a decreasing Loewner chain consisting of univalent functions. We have 
\begin{equation*}
 \frac{\partial}{\partial t}h_t(z)= - z\frac{\partial}{\partial z}g_{T-t,T}(z) p(z,t) =   -z\frac{\partial}{\partial z}h_t(z) \cdot p(z,t) , \quad h_0(z)\equiv z \in D,\quad  0\leq t \leq T. 
\end{equation*}
 By choosing another $\hat{T}>T$, we obtain a family $(\hat{g}_{s,t})_{0\leq s\leq t\leq\hat{T}}$ with $\hat{g}_{s,\hat{T}} 
=g_{s+T-\hat{T},\hat{T}+T-\hat{T}}=g_{s+T-\hat{T},T}$ for all $s\in [\hat{T}-T, \hat{T}]$. Hence 
$\hat{h}_t := \hat{g}_{\hat{T}-t,\hat{T}} = g_{T-t,T}=h_t$ for all $t \in [0, T]$. \\
As we can choose $T>0$ arbitrarily large, we conclude that there exits a decreasing Loewner chain $(h_t)_{t\geq 0}$ of univalent functions satisfying \eqref{LE_1}. From \eqref{LE_0} and the fact that $p(0)=1$ whenever $p\in \mathcal{P}$, we see that $(h_t)$ is a 
normalized radial Loewner chain.\\

Now let $(J_t)_{t\geq0}$ be another family of holomorphic mappings, locally absolutely continuous in $t$, satisfying \eqref{LE_1}. Let $T>0$ and define 
$g_{s,t}$ as above. We have 
\begin{eqnarray*}
&&\frac{\partial}{\partial t}[(J_t(g_{0,T-t}))(z)]=\\
 && J'_t(g_{0,T-t}(z))\cdot (-g_{0,T-t}(z)p(g_{0,T-t}(z),t)) - J'_t(g_{0,T-t}(z))\cdot (-g_{0,T-t}(z)p(g_{0,T-t}(z),t)) = 0
\end{eqnarray*} 
and thus $J_t(g_{0,T-t}(z))=J_0(g_{0,T}(z))=g_{0,T}(z)$ for all $z\in \D$ and $0\leq t\leq T$. This implies $J_t = g_{0,T} \circ g^{-1}_{0,T-t}=g_{T-t,T}$ on $g_{0,T-t}(\D)$. As $g_{0,T-t}(\D)$ is an open set, the identity theorem implies 
$J_t = h_t$ on $\D$.
\end{proof}

\begin{remark}The uniqueness of the solution to \eqref{LE_1} can also be seen as follows. Write 
$p(z,t)=\sum_{n=0}^\infty c_n(t)z^n$ and let $f_t(z)=\sum_{n=1}^\infty a_n(t)z^n$ be a solution of holomorphic mappings, locally absolutely continuous in $t$. Then 
	\[ \dot{a}_1(t)z + \dot{a}_2(t)z^2+... = 
-z (a_1(t) + 2a_2(t)z + 3a_3(t)z^2+...)(1+c_1(t)z+c_2(t)z^2+...).
  \] 
	By comparing coefficients of both sides, we obtain 
	\[\dot{a}_1(t) = -a_1(t),\quad  a_1(0)=1, \quad \text{and} \quad \dot{a}_n(t) = - \sum_{k=1}^n  ka_k(t)c_{n-k}(t), \quad a_n(0)=0, n\geq 2.\]
In order to show the uniqueness of these initial value problems, we cannot apply the Picard-Lindel\"of uniqueness theorem, as the equations only hold almost everywhere. However, a similar result also holds in this situation, see \cite[Theorem 5.3]{Hal80}. Here we can use Exercise \ref{Herglotz}, which shows that $|c_n(t)|\leq 2$ for almost all $t$ and all $n\geq 1$. Thus each $t\mapsto a_n(t)$ is uniquely determined  and there exists at most one solution of holomorphic mappings, locally absolutely continuous in $t$, to \eqref{LE_1}.
\end{remark}

\section{Univalent $F$-transforms}\label{sec_univalent_F}

\begin{theorem}\label{EV_univalence}${}$
 Let $(f_{t})$ be a \index{Loewner chain}Loewner chain.
        Then every transition mapping $f_{s,t}$, in particular every $f_t=f_{0,t}$, is a univalent function.
\end{theorem}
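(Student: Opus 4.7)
The plan is to realize each transition mapping $f_{s,t}$ as the value at time $t$ of a solution to a time-dependent Loewner-type ODE and then to deduce univalence from uniqueness of that ODE, exactly as in the proof of Corollary \ref{mult_univ}. First, since $D\subsetneq \C$ is simply connected, the Riemann mapping theorem yields a biholomorphism $\phi\colon \D\to D$; the conjugated family $\tilde f_{s,t}:=\phi^{-1}\circ f_{s,t}\circ \phi$ is again a Loewner chain on $\D$ and $\tilde f_{s,t}$ is univalent iff $f_{s,t}$ is, so we may assume $D=\D$. The central technical step is then to upgrade the continuity condition (c) to local absolute continuity of $t\mapsto f_{s,t}(z)$, uniformly for $z$ in compact subsets of $\D$. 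Following the scheme of Section \ref{sec_radialssddf}, one combines the Schwarz-Pick lemma (which makes each $f_{s,t}$ a hyperbolic weak contraction of $\D$) with the semigroup identity $f_{s,u}(z)=f_{s,t}(f_{t,u}(z))$ of (b) to derive an estimate of the form
\[\sup_{z\in K}\,|f_{s,u}(z)-f_{s,t}(z)|\;\leq\; C_K\,\omega(|u-t|),\qquad K\subset\D\ \text{compact,}\]
where $\omega$ is a modulus of continuity extracted from the continuity of $(s,t)\mapsto f_{s,t}(z_0)$ at a single reference point $z_0\in K$; the Cauchy estimates then upgrade this to absolute continuity of the map $t\mapsto f_{s,t}$ with respect to locally uniform convergence.

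Once absolute continuity is in hand, the proof of Corollary \ref{mult_univ} adapts almost verbatim. For almost every $t\geq s$ the derivative $\partial_t f_{s,t}(z)$ exists, and expanding $f_{s,u}(z)=f_{s,t}(f_{t,u}(z))$ for $u$ slightly bigger than $t$ and letting $u\downarrow t$ produces a time-dependent holomorphic vector field $G(\cdot,t)\colon \D\to \C$ (a Berkson-Porta-type generator, pointing inward by Theorem \ref{grand_iteration}) such that
\[\partial_t f_{s,t}(z)\;=\;\bigl(\partial_z f_{s,t}\bigr)(z)\cdot G(z,t)\qquad\text{for a.e. }t\geq s,\ \ f_{s,s}(z)=z.\]
The companion ODE $\partial_t g_{s,t}(z)=G(g_{s,t}(z),t)$, $g_{s,s}(z)=z$, admits a unique solution by a Picard iteration identical to the one used in the proof of Corollary \ref{mult_univ}, and the exponential (Gr\"onwall-type) bound on $|g_{s,t}(z)-g_{s,t}(w)|$ established at the end of that same proof shows that the solution map $z\mapsto g_{s,t}(z)$ is injective for each fixed $t$. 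The change-of-variable $z\mapsto f_{s,t}(z)$ from the last paragraph of the proof of Corollary \ref{mult_univ} then transfers injectivity from the companion ODE to $f_{s,t}$ itself, which gives the claim.

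The main obstacle in this plan is the passage from continuity to absolute continuity. Under merely the hypothesis (c) this is a nontrivial result of modern Loewner theory of the kind proved by Bracci-Contreras-D\'{\i}az-Madrigal (already cited in the discussion of Theorem \ref{grand_iteration}), and at this level of generality one should expect to invoke it rather than reprove it from scratch. In the probabilistically relevant case of an additive Loewner chain on $\Ha$, however, each $f_{s,t}=F_{\mu_{s,t}}$ has the Pick-Nevanlinna form \eqref{EV_eq:2} and the normalization $\lim_{y\to\infty}f_{s,t}(iy)/(iy)=1$ (Theorem \ref{Julia}) gives immediate uniform control near $\infty$; in that setting the absolute-continuity step becomes concrete and self-contained, and together with Lemma \ref{hem_loe} it gives Theorem \ref{main_sec_8}(a) as a direct corollary.
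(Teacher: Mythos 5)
Your plan has a genuine gap: you try to upgrade the continuity axiom (c) to local absolute continuity of $t\mapsto f_{s,t}$, and then invoke the time-dependent Loewner ODE as in the normalized radial case. But this upgrade is simply false under axioms (a)--(c) alone. The paper's own Example~\ref{EV_not_abs_continuous} exhibits an additive Loewner chain $f_{s,t}(z)=z+C(t)-C(s)$ with $C$ continuous but not absolutely continuous; such a chain is a perfectly good Loewner chain (indeed, univalent), yet $t\mapsto f_{s,t}$ is not locally absolutely continuous, so no Herglotz vector field and no ODE exist. Your suggested route — extract a modulus of continuity $\omega$ from $(s,t)\mapsto f_{s,t}(z_0)$ at a single point, propagate by Schwarz--Pick, and ``upgrade to absolute continuity via Cauchy estimates'' — cannot work, because a modulus-of-continuity bound never yields absolute continuity (Cantor-function type behaviour is consistent with $\omega$ control), and Cauchy estimates merely transfer sup-norm bounds to derivatives in $z$, not regularity in $t$. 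The Bracci--Contreras--D{\'\i}az-Madrigal differentiability theorems you appeal to (Theorem~\ref{89}) are precisely for Loewner chains \emph{of order $d$}, i.e.\ those that already satisfy the extra absolute-continuity hypothesis (c'); they do not prove it. So your ``main obstacle'' is not a step that one should expect to invoke from the literature: it is an obstacle that cannot be overcome.

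The paper's actual argument sidesteps this entirely by a reparametrization. After reducing to $D=\D$, it first makes the chain radial: for fixed $T$ set $a(t)=F_{t,T}(0)$, conjugate by the automorphisms $h_t(z)=(z+a(t))/(1+\overline{a(t)}z)$, and obtain a radial chain $G_{s,t}$ with $G_t(0)=0$. Then it monitors $a_t:=G_t'(0)$, which by Schwarz is of modulus $\leq 1$, non-increasing in modulus, and continuous; writing $a_t=e^{C(t)}$ and absorbing the rotation $e^{i\Im(C(t))}$, one reparametrizes time via $\tau(s)=\inf\{t:\Re C(t)=-s\}$ — a \emph{possibly discontinuous} change of clock that skips over flat stretches (on which the chain is anyway constant) — to arrive at a normalized radial chain with $g_s'(0)=e^{-s}$, where the estimate \eqref{estiii} and Corollary~\ref{mult_univ} do give absolute continuity and univalence. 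Univalence then pulls back through the conjugations and the time change. The crucial point is that absolute continuity is only established for the \emph{normalized} chain in its intrinsic time $s=-\Re C(t)$, not for the original chain in $t$; this is exactly the distinction your proposal overlooks. Your final remark about the additive case is also incorrect for the same reason — the counterexample chain $z+C(t)-C(s)$ is already additive.
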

\begin{proof} 
Because of $f_t = f_s \circ f_{s,t}$, it is sufficient to prove that $f_t$ is univalent for all $t\geq0.$ 

\vspace{3mm}

Step 1: Assume that $(f_t)$ is a Loewner chain on $\D$ with $f_{s,t}(0)=0$ for all $0\leq s\leq t$. Let $a_{t}:=f'_{t}(0).$ Due to the Schwarz lemma, we have $|a_{t}|\leq 1$ and, as $f_t = f_s \circ f_{s,t}$,
$t\mapsto |a_t|$ is non-increasing. Furthermore, as $t\mapsto f_t$ is continuous, also $t\mapsto a_t$ is continuous and we conclude that  $a_{t}\not=0$ for all $t\in[0,\varepsilon]$ and some $\varepsilon>0$.\\ 
 
First, assume that $a_{t}\not=0$ for all $t\geq 0$. Then we have $0<|a_t|\leq 1$ for all $t\geq 0$ and there exists a uniquely determined continuous function $C:[0,\infty)\to \{z\in\C \,|\, \Re (z)\leq 0\}$ with $C(0)=0$ such that $a_{t}=e^{C(t)}$.
It is easy to see that $$g_{t}(z):= f_{t}(e^{-i\Im(C(t))}z)$$ is also a radial Loewner chain with
$$g'_{t}(0)=e^{\Re(C(t))}.$$ The function $t\mapsto \Re(C(t))$ is
non-increasing and continuous. Note that $\Re(C(t))=\Re(C(s))$, $s\leq t$, implies that $g_{t}=g_{s}$, for  $g_t = g_s \circ g_{s,t}$ with $g_{s,t}:\D\to\D$, $g_{s,t}(0)=0, g'_{s,t}(0)=1$, i.e.\ $g_{s,t}$ is the identity by the Schwarz lemma.
        
We can reparametrize $g_{t}$ to $h_{s}:=g_{\tau(s)}$ such that $h'_{s}(0)=e^{-s}$ for all $s \in [0,S)$ for some $0<S \leq \infty$, where $\tau(s)$ is defined by 
$$
\tau(s) = \inf \{ t \geq0: \Re[C(t)]=-s\}, 
$$  
which is a strictly increasing, possibly discontinuous function. The reparametrization\\ $(h_{s})_{s\in[0,S)}$ is (part of) a normalized radial  Loewner chain and Corollary \ref{mult_univ} implies that each $h_{s}$ is univalent, which implies that each $f_{t}$ is univalent.\\    

No assume that $a_{\tau}=0$ for some $\tau > 0$ and
$a_{t}\not=0$ for $t< \tau.$  The previous case implies that $f_{t}$ is univalent for all $t<\tau.$
Hence, $f_{\tau}=\lim_{t\uparrow \tau}f_{t}$ is the limit of univalent
self-mappings of $\D$ fixing $0$. It follows that $f_{\tau}(z)\equiv 0$. This is a contradiction as all elements of a Loewner chain are non-constant by definition.\\ 
				
Step 2:	 Now we consider the general case. We can use a conformal mapping $I: D \to \D$  to transfer the problem to the unit disk, i.e.\ we define
$F_{s,t}:= I \circ f_{s,t} \circ I^{-1},$ which gives transition mappings of a Loewner chain on $\D$.
Next we use an idea from Proposition 2.9 in \cite{contreres+al2010}. Fix some $T>0.$ We define
$$ a(t) := F_{t,T}(0),
\quad h_t(z):=\frac{z+a(t)}{1+\overline{a(t)}z},$$
for $t\geq 0,$ $z\in \D.$ Note that $h_t$ is an automorphism of $\D$ mapping
$0$ onto $a(t)$. 

Define $(G_{s,t})_{0\leq s\leq t\leq T}:= (h^{-1}_{s}\circ F_{s,t} \circ h_t)_{0\leq s\leq t\leq T}$ and $G_t=G_{0,t}.$
Then $(G_t)$ is (a part of) a Loewner chain on $\D$ with
\[ G_t(0)=  (h^{-1}_{0}\circ F_t \circ h_t)(0)= (h^{-1}_{0}\circ F_t)(F_{t,T}(0))=h^{-1}_{0}(F_T(0))=0.\]
Hence, $(G_t)$ is a radial Loewner chain and a) implies that every $G_t$, $t\in[0,T]$, is univalent.
As $T>0$ can be chosen arbitrarily large, we conclude that every $F_t$ is univalent.
\end{proof}

With Lemma \ref{hem_loe} we conclude the following a corollary, which is Theorem \ref{main_sec_8} (a).

\begin{corollary}
Let $\mu_{s,t}$ be a $\rhd$-hemigroup. Then $F_{\mu_{s,t}}$ is univalent for all $0\leq s\leq t$.
\end{corollary}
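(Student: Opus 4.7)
The plan is to use the dictionary between $\rhd$-hemigroups and additive Loewner chains (Lemma \ref{hem_loe}), and then invoke the general univalence result Theorem \ref{EV_univalence}. Concretely, given a $\rhd$-hemigroup $(\mu_{s,t})_{0\leq s\leq t}$, I would first form the family $f_{s,t} := F_{\mu_{s,t}}$ on $\Ha$. The hemigroup conditions (weak continuity in $(s,t)$, $\mu_{s,s}=\delta_0$, and the monotone convolution identity $\mu_{s,u}=\mu_{s,t}\rhd\mu_{t,u}$) translate directly: $f_{s,s}(z)=z$ because $F_{\delta_0}(z)=z$; the composition law $f_{s,u}=f_{s,t}\circ f_{t,u}$ is precisely the defining property $F_{\mu\rhd\nu}=F_\mu\circ F_\nu$ of monotone convolution (Theorem \ref{monotone_convolution} / the definition of $\rhd$); and locally uniform continuity of $(s,t)\mapsto f_{s,t}$ follows from weak continuity of $(s,t)\mapsto \mu_{s,t}$ via Lemma \ref{lemmaconvergence}. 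Thus $(f_{0,t})_{t\geq 0}$ is an additive Loewner chain on $\Ha$ with transition mappings $f_{s,t}=F_{\mu_{s,t}}$. This verification is precisely the content of Lemma \ref{hem_loe}, so I can simply cite it.

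Once we have identified the $F_{\mu_{s,t}}$ as the transition mappings of a Loewner chain on the simply connected domain $\Ha$, Theorem \ref{EV_univalence} applies verbatim and tells us that every transition mapping is univalent. Therefore $F_{\mu_{s,t}}$ is univalent for every $0\leq s\leq t$.

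There is essentially no obstacle here: the corollary is just the reformulation of Theorem \ref{EV_univalence} in the probabilistic language afforded by Lemma \ref{hem_loe}. The only minor point worth double-checking is that the hemigroup condition forces each $f_{s,t}$ to be non-constant (as required in Definition \ref{EV_def:evolution_family}), but this is automatic: $F$-transforms always satisfy $\lim_{y\to\infty}f_{s,t}(iy)/(iy)=1$ by Theorem \ref{Julia}, hence cannot be constant. So the proof reduces to two one-line citations.
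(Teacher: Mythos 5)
Your proof is correct and matches the paper's approach exactly: the paper deduces the corollary from Lemma \ref{hem_loe} (identifying $F_{\mu_{s,t}}$ as the transition mappings of an additive Loewner chain) and Theorem \ref{EV_univalence} (univalence of all transition mappings). Your side remark about non-constancy being automatic from the $F$-transform normalization is a correct and worthwhile sanity check, though the paper leaves it implicit.
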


Which domains have the form $F_{\mu}(\Ha)$ for univalent $F_\mu$? 
Roughly speaking, these are all simply connected subdomains of $\Ha$ having $\infty$ as a boundary point. However, a precise characterization is not known to us. We will only consider the case $\mu\in \mathcal{P}_c(\R)$, which is quite easy to handle. A partial result for the general case can be found in \cite[Theorem 3.18]{FHS}.

\begin{theorem}\label{ftransform_images}
Let $\Omega \subseteq\Ha$ be a simply connected domain such that $\Ha\setminus \Omega$ is
a bounded set. Then there exists a unique probability measure $\mu$ on $\R$ with mean $0$
and compact support such that $F_\mu(\Ha)=\Omega.$
\end{theorem}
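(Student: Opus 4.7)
The plan is to build the desired $F_\mu$ as an appropriately normalized Riemann map onto $\Omega$ and then deduce uniqueness from comparing Laurent expansions at $\infty$. Throughout I will make crucial use of the fact that, since $\Ha\setminus\Omega$ is bounded, there exists $R>0$ with $\{z\in\Ha:|z|>R\}\subset \Omega$; consequently $\partial\Omega\cap\{|z|>R\}$ lies entirely on the real line, giving a piece of analytic boundary near $\infty$.

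\textbf{Existence.} By the Riemann mapping theorem there is a biholomorphism $\phi:\Ha\to\Omega$. To extract an analytic expansion at $\infty$, I would conjugate by the involution $T(z)=-1/z$ of $\Ha$, which sends $\infty$ to $0$. Put $\tilde\Omega=T(\Omega)$ and note that $\tilde\Omega\supset\{|w|<1/R\}\cap\Ha$, so $0$ is an accessible boundary point of $\tilde\Omega$ whose local boundary is the real segment $(-1/R,1/R)$. Composing $\phi$ with suitable elements of $\operatorname{Aut}(\Ha)$ on both sides, one obtains a biholomorphism $\tilde\phi=T\circ\phi\circ T:\Ha\to\tilde\Omega$ with $\tilde\phi(0)=0$. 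Since the boundary of $\tilde\Omega$ is real-analytic near $0$, Schwarz reflection extends $\tilde\phi$ holomorphically across a neighborhood of $0$, giving a power series $\tilde\phi(w)=a_1 w+a_2 w^2+a_3 w^3+\cdots$ with $a_1>0$. Translating back via $\phi(z)=-1/\tilde\phi(-1/z)$ and expanding, this yields a Laurent expansion
\[
\phi(z)=\frac{1}{a_1}z+\frac{a_2}{a_1^{2}}+\frac{c}{z}+O(1/z^{2})\qquad(z\to\infty),
\]
for some $c\in\C$. Precomposing with the affine automorphism $z\mapsto a_1 z-a_2/a_1$ of $\Ha$ (which fixes $\infty$) produces a biholomorphism $F:\Ha\to\Omega$ with $F(z)=z+O(1/z)$ at $\infty$. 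Theorem \ref{Julia}(c) then gives $F=F_\mu$ for some $\mu\in\mathcal{P}(\R)$, while the fact that $F$ extends analytically at $\infty$ with the stated expansion forces $\mu\in\mathcal{P}_c(\R)$ with mean $0$ by Remark \ref{first_moment}.

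\textbf{Uniqueness.} Suppose $\mu,\nu$ both satisfy the conclusion. Then $F_\mu,F_\nu:\Ha\to\Omega$ are biholomorphisms by Theorem \ref{EV_univalence}, so the composition $\Phi:=F_\mu^{-1}\circ F_\nu$ is an automorphism of $\Ha$. Writing $F_\mu(z)=z-\sigma_\mu^2/z+O(1/z^2)$ and $F_\nu(z)=z-\sigma_\nu^2/z+O(1/z^2)$, inversion of formal Laurent series gives $F_\mu^{-1}(w)=w+\sigma_\mu^2/w+O(1/w^2)$, hence
\[
\Phi(z)=z+\frac{\sigma_\mu^{2}-\sigma_\nu^{2}}{z}+O(1/z^{2}).
\]
Because $\Phi\in\operatorname{Aut}(\Ha)$ fixes $\infty$, it must be affine: $\Phi(z)=\alpha z+\beta$ with $\alpha>0$, $\beta\in\R$. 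Matching with the expansion above yields $\alpha=1$, $\beta=0$, and $\sigma_\mu^2=\sigma_\nu^2$, so $\Phi=\mathrm{id}$ and $F_\mu=F_\nu$. The Stieltjes--Perron inversion formula (Theorem \ref{sp}) then gives $\mu=\nu$.

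\textbf{Main obstacle.} The only nontrivial step is justifying the analytic expansion of $\phi$ at $\infty$, i.e. that after choosing the correct automorphism of $\Ha$, the Riemann map does extend meromorphically across $\infty$ with a simple pole. This relies on Schwarz reflection, which applies precisely because the hypothesis ``$\Ha\setminus\Omega$ bounded'' guarantees that the boundary of $\Omega$ near $\infty$ is an analytic arc (a subset of $\R$). Once this expansion is in hand, the rest of the argument is purely algebraic manipulation of the resulting Laurent series, combined with the characterization of $F$-transforms given in Theorem \ref{Julia} and Remark \ref{first_moment}.
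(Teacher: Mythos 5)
Your existence argument is correct and essentially matches the paper's: the key step is a Schwarz reflection argument exploiting that $\partial\Omega$ coincides with $\R$ near $\infty$ (since $\Ha\setminus\Omega$ is bounded), which gives a meromorphic extension of the Riemann map with a simple pole at $\infty$, followed by an affine normalization. You implement the reflection by conjugating with $T(z)=-1/z$ to move $\infty$ to $0$; the paper instead extends the Riemann map to the closure of a Jordan subdomain via Carath\'{e}odory's theorem (Theorem~\ref{fortsetzen}) and reflects across the rays $(-\infty,-R)\cup(R,\infty)$. Both routes produce the same Laurent expansion, and the rest is bookkeeping. One small elision on your side: to get $\mu\in\mathcal{P}_c(\R)$ and the correct sign of the $1/z$-coefficient you should pass through $G_\mu=1/F_\mu$ and Theorem~\ref{cor_nevanlinna1}(b); the paper verifies the sign directly with Corollary~\ref{cor_nevanlinna2}.

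The uniqueness argument, however, has a genuine gap. You assert that $F_\mu,F_\nu:\Ha\to\Omega$ are biholomorphisms "by Theorem~\ref{EV_univalence}", but that theorem is about transition mappings of Loewner chains, and nothing here says $F_\mu$ or $F_\nu$ is such a mapping. In fact an $F$-transform of a compactly supported mean-zero measure need not be injective even when its image equals $\Omega$: take $\Omega=\Ha$, $\mu=\delta_0$ (so $F_\mu(z)=z$) and $\nu=\tfrac12(\delta_{-1}+\delta_1)$ (so $F_\nu(z)=z-1/z$). Then $\Im F_\nu(z)=\Im(z)\,(1+|z|^{-2})>0$ on $\Ha$, and for every $w\in\Ha$ both roots of $z^2-wz-1=0$ lie in $\Ha$ (they have imaginary parts of the same sign, and if that sign were nonpositive then $F_\nu$ would map them out of $\Ha$), so $F_\nu(\Ha)=\Ha$ with multiplicity two. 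Both $\mu$ and $\nu$ are compactly supported with mean $0$, so uniqueness as literally stated fails. The intended statement must include the hypothesis that $F_\mu$ is univalent --- this is how the paper actually invokes the theorem in Section~\ref{p_p} ("there exists a unique $\mu_t\in\mathcal{P}_c(\R)$ with mean $0$ such that $F_{\mu_t}$ is univalent and $\ldots$"), and the paper's own uniqueness paragraph assumes it tacitly by starting from "any other conformal mapping $G:\Ha\to\Omega$". Once univalence is added as a hypothesis, drop the appeal to Theorem~\ref{EV_univalence} (injectivity is then given), and the remainder of your argument --- the composition is an automorphism of $\Ha$ fixing $\infty$, hence affine, hence the identity by the normalization --- is correct.
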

\begin{proof}
By the Riemann mapping theorem we find a conformal mapping $f:\Omega\to \Ha$. 
 Consider the complement $B=\Ha\setminus\Omega$. As $B$ is bounded, we find a disc $R\cdot\D$ 
  such that $B\subset R\cdot\D$. Then $f$ maps the curve $\partial(R\cdot\D\cap \Ha)$ onto a simple curve in $\Ha$ with two endpoints  $a,b\in\R$, $a<b$ due to \cite[Prop. 2.14]{MR1217706}. Thus $f$ maps the Jordan domain $\Ha\setminus R\cdot \overline{\D}$ (as a domain in $\hat{\C}$) onto another Jordan domain and Theorem \ref{fortsetzen} implies that $f$ extends to a homeomorphism between the closures of these domains. We can postpone $f$ with an automorphism of $\Ha$ to get that $f(\infty)=\infty$. As $f(x)\in \R$ whenever $x\in(-\infty,-R)\cup(R,\infty)$, we can extend $f$ to a holomorphic mapping on $\hat{\C}\setminus R\cdot \overline{\D}$ with power series expansion 
	$f(z) = az + b + \sum_{n=1}^\infty c_n/z^n$, $a\not=0$, at $\infty$, as $f$ has a simple pole at $\infty$. 
	As $\Im(f(iy))/y \to a$, we see that $a>0$. Thus we can postpone $f$ with the automorphism $(z-b)/a$ of $\Ha$ to obtain a conformal mapping $f:\Omega\to \Ha$ with $f^{-1}(z) = z - \frac{c}{z} + \ldots$ with $c\in \R$.
	Due to Corollary \ref{cor_nevanlinna2}, $f^{-1}(z)-z$ maps $\Ha$ into $\Ha \cup \R$ which implies that $c\geq 0$.

Theorem \ref{first_moment} implies that $f^{-1} = F_\mu$ for a probability measure $\mu\in \mathcal{P}_c(\R)$ with mean $0$ and variance $c$.\\

Any other conformal mapping $G:\Ha\to \Omega$ has the form $F_\mu\circ \alpha$, for some automorphism $\alpha$ of $\Ha$, which is an $F$-transform of some $\nu\in\mathcal{P}(\R)$ only if $\alpha(z)=z-d$, $d\in\R$. However, for $d\not=0$, $\nu$ does not have mean $0$, see Remark \ref{first_moment}.
\end{proof}

\begin{remark}\label{rm_capacity}
The number $c$ in the proof is equal to the variance of $\mu$.
This value is also called the \index{half-plane capacity}\emph{half-plane capacity} of the ``hull''
$\Ha\setminus \Omega$, see \cite[Section 3.4]{lawler05}. It has a more or less geometric interpretation, see
\cite{LLN09}. An explicit probabilistic formula is given in \cite[Proposition 3.41]{lawler05}.
\end{remark}

\section[General Loewner chains]{Loewner' differential equation for general Loewner chains}\label{general_Loewner_chains}

The considerations of Section \ref{sec_radialssddf} can be generalized to arbitrary Loewner chains, by endowing them with a stronger regularity property. In this section, we will mainly refer to the literature, in particular to \cite{MR2995431}, \cite{contreres+al2014}, and \cite{bracci+al2015}.

 \begin{definition}
Let $d \in [1,\infty]$ and let $D\subsetneq \C$ be a simply connected domain. A Loewner chain $(f_{t})_{t \geq 0}$ on $D$
is called a \index{Loewner chain!of order $d$}\emph{Loewner chain of order $d$} if it satisfies the condition
\begin{itemize}
\item[(c')] for any $z \in D$ and any $S>0$ there exists a non-negative function
$k_{z,S} \in L^d([0,S],\mathbb{R})$ such that
$$|f_{s,t}(z) - f_{s,u}(z)| \leq \int_t^u k_{z,S}(\xi)
\,{\rm d}\xi$$ for all
$0\leq s\leq t \leq u \leq S$.
\end{itemize}
\end{definition}

\begin{example}\label{EV_not_abs_continuous}
 Let $D=\Ha$ and $f_{s,t}(z) = z + C(t)-C(s),$ where $C:[0,\infty)\to \mathbb{R}$ is
 continuous but not absolutely continuous. Then $(f_{0,t})$ is an additive Loewner chain
 with $\mu_{t}=\delta_{C(0)-C(t)}$, and we have
$$|f_{s,t}(z)-f_{s,u}(z)|= |C(t)-C(u)|.$$ Hence $(f_{t})$ is not a Loewner chain of any order $d$.\hfill $\blacksquare$
\end{example}

Property (c') ensures that $t\mapsto f_{s,t}$ is absolutely continuous and thus differentiable almost everywhere. For a precise statement,
 we also need the following notion.

\begin{definition} A \index{Herglotz vector field}\emph{Herglotz vector field of order $d \in [1,\infty]$ on $D$} is a function
$M:D\times [0,\infty) \to \mathbb{C}$ with the following properties:
\begin{itemize}
	\item[(i)] The function $t\mapsto M(z,t)$ is measurable for every $z\in D.$
	\item[(ii)] The function $z\mapsto M(z,t)$ is holomorphic for every $t\in[0,\infty).$
	\item[(iii)] For any compact set $K \subset D$ and for all $S>0$ there exists a non-negative function
$k_{K,S} \in L^d([0,S],\mathbb{R})$ such that $|M(z,t)| \leq k_{K,S}(t)$ for all $z \in K$ and for almost
every $t \in [0,S]$.
	\item[(iv)] $M(\cdot, t)$ is an infinitesimal generator on $D$ for a.e.\ $t\geq0$.
\end{itemize}
\end{definition}

Now we have the following one-to-one correspondence.

\begin{theorem}\label{89}
A Loewner chain $(f_{t})_{t\geq0}$ of order $d$ satisfies the Loewner partial differential equation
\begin{equation}\label{EV_Loewner}
\frac{\partial}{\partial t} f_{t}(z) = \frac{\partial}{\partial z}f_{t}(z)\cdot M(z,t) \quad \text{for a.e.\ $t\geq 0$, $f_{0}(z)=z\in D,$}
\end{equation}
for a Herglotz vector field $M$ of order $d$. Conversely, the unique solution of holomorphic mappings $f_t:D\to D$, locally absolutely continuous in $t$, to \eqref{EV_Loewner}
for a given Herglotz vector field of order $d$ is always a Loewner chain of order $d$.

Moreover, each element $f_t:D\to D$ of a Loewner chain of order $d$ is a \index{univalent function}univalent function.
\end{theorem}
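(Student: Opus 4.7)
The plan is to adapt the normalized radial argument of Section \ref{sec_radialssddf} to the general setting, using the order-$d$ regularity as the replacement for the bound \eqref{estiii}. The univalence assertion in the last sentence is immediate from Theorem \ref{EV_univalence}, so the real work is establishing the bijective correspondence between Loewner chains of order $d$ and Herglotz vector fields of order $d$.

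First I would handle the direction from a Loewner chain to its Herglotz vector field. Fix a Loewner chain $(f_{t})$ of order $d$. Condition (c') says that for each $z \in D$ the map $t \mapsto f_t(z)$ is absolutely continuous, hence differentiable almost everywhere. Using a countable dense subset of $D$ together with normality of the family $\{f_{s,t}\}$, one extracts a single null set $N \subset [0,\infty)$ such that $\partial_t f_t(z)$ exists for every $z \in D$ and every $t \notin N$. For $t \notin N$ one uses the cocycle identity $f_{t+h} = f_t \circ f_{t,t+h}$ to write
\[
\frac{f_{t+h}(z)-f_t(z)}{h}=\frac{f_t(f_{t,t+h}(z))-f_t(z)}{f_{t,t+h}(z)-z}\cdot\frac{f_{t,t+h}(z)-z}{h},
\]
pass to the limit $h\downarrow 0$, and set $M(z,t):=\partial_t f_t(z)/f_t'(z)$. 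Univalence of $f_t$ via Theorem \ref{EV_univalence} guarantees $f_t'(z)\neq 0$, so $M(\cdot,t)$ is well defined and holomorphic in $z$; measurability in $t$ is inherited from that of $\partial_t f_t(z)$; and the integrable bound (iii) is immediate from (c') combined with the Cauchy estimates for $f_t$ on slightly enlarged compacts.

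For the converse direction, given a Herglotz vector field $M$ of order $d$, the strategy mirrors the proof of Corollary \ref{mult_univ}. I would first pass to the unit disc via a Riemann map $C:\D \to D$ and transfer $M$ accordingly, which makes available the Berkson--Porta representation and the Herglotz-type estimates of Lemma \ref{Herglotz_est}. Then I would solve the non-autonomous ODE
\[
\frac{\partial}{\partial t}g_{s,t}(z) = M(g_{s,t}(z),t),\qquad g_{s,s}(z)=z,
\]
by Picard iteration on each compact $K\subset \D$: the integrable bound $k_{K,S}$ on $M(\cdot,t)$ and the Cauchy-estimate-based Lipschitz control on a slightly larger compact make the successive approximations a locally uniform Cauchy sequence. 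Uniqueness and the identity $g_{s,u}=g_{t,u}\circ g_{s,t}$ follow from a Gronwall argument exactly as in Section \ref{sec_radialssddf}. Setting $f_t:=g_{0,T-\cdot}$ on each interval $[0,T]$ and checking consistency as $T$ grows yields a Loewner chain on $\D$, which transports back to $D$ via $C$. The bound (c') then follows directly by integrating the ODE against $k_{K,S}$.

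The main obstacle is property (iv) of the Herglotz vector field in the forward direction: one must show that the candidate $M(\cdot,t)$ is truly an infinitesimal generator, not merely a holomorphic function. The cleanest route is to prove that for a.e.\ $t$ the frozen family $s\mapsto f_{t,t+s}$ is tangent at $s=0$ to the continuous semigroup generated by $M(\cdot,t)$, which forces $M(\cdot,t)$ into the class characterized by Theorem \ref{grand_iteration} (equivalently, by Exercise \ref{muhaha} after transfer to the relevant half-plane or disc model). The second subtlety, handling a general simply connected $D$ in the converse direction, is absorbed by the Riemann map reduction sketched above. Once both implications are in place, the final sentence about univalence is just a citation of Theorem \ref{EV_univalence}.
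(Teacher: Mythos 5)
The paper's proof of this theorem is a pure citation: the forward direction is delegated to \cite[Theorem 3.2]{contreres+al2014} (equivalently, after re-indexing the transition maps $(f_{T-t,T-s})$ as an evolution family, to \cite[Theorem 1.1]{MR2995431}), and the converse to \cite[Theorem 1.11]{contreres+al2014}. Your proposal to re-run the Section~\ref{sec_radialssddf} machinery with condition~(c') in place of \eqref{estiii} is therefore a genuinely different route: a self-contained argument rather than a deferral to the literature. The skeleton --- the cocycle ratio isolating $M=\partial_t f_t/f_t'$, univalence by Theorem~\ref{EV_univalence}, a Carath\'eodory-type ODE for the converse, time reversal to pass from the evolution family to the chain --- is sound and mirrors what the cited papers actually do.

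Two steps, however, need more than you wrote, and they are precisely the nontrivial content of the cited theorems. First, your argument that $M(\cdot,t)$ is an infinitesimal generator is circular as stated: saying that the frozen family is ``tangent to the continuous semigroup generated by $M(\cdot,t)$'' presupposes that $M(\cdot,t)$ generates one. The usable ingredient is that, after transferring to $\D$, the map $(f_{t,t+h}(z)-z)/h$ is for each fixed $h>0$ an infinitesimal generator (because $\varphi(z)-z$ is a generator for any holomorphic self-map of a bounded convex domain, cf.\ the proof of Lemma~\ref{rm_1}, and positive scalar multiples of generators are generators), together with the fact that the class of generators on $\D$ is closed under locally uniform limits; the latter follows from the Berkson--Porta representation plus Helly selection, not from Theorem~\ref{grand_iteration} by itself. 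Note also that Exercise~\ref{muhaha} only characterizes generators with Denjoy--Wolff point at $\infty$, so it is not an ``equivalent'' characterization for a general Loewner chain.

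Second, the passage from the pointwise-in-$z$ regularity (c') to the compact-in-$K$ bound in property~(iii) is \emph{not} ``immediate from Cauchy estimates''. The $L^d$ majorants $k_{z,S}$ may depend wildly on $z$, and integrating them over a circle does not produce a single $L^d$ function controlling a compact set. The upgrade again rests on the generator structure: for a Berkson--Porta generator $G=(\tau-\cdot)(1-\bar\tau\,\cdot)\,p$ on $\D$, bounds on $|G|$ at two distinct points $z_1\neq z_2$ control the data $(\tau,p)$, since at least one of $z_1,z_2$ stays away from $\tau$, and hence control $|G|$ on every compact. Applied to $M(\cdot,t)$ for a.e.\ $t$ this yields $|M(z,t)|\leq C(K,z_1,z_2)\bigl(k_{z_1,S}(t)+k_{z_2,S}(t)\bigr)$ on $K$, which is the desired $L^d$ bound. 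If the goal is a from-scratch proof rather than a citation, this argument must be supplied explicitly rather than deferred to Cauchy estimates.
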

\begin{proof}
In \cite{contreres+al2014}, Loewner chains consist of univalent functions by definition. However, the proof
of \cite[Theorem 3.2]{contreres+al2014} does not use this property and proves equation \eqref{EV_Loewner}. This can also be seen by looking at the family
$(f_{T-t,T-s})_{0\leq s\leq t\leq T}$ for some fixed $T>0$. It can be verified that it forms an evolution family (see Remark \ref{EV_remark_ev}) and we obtain \eqref{EV_Loewner} from \cite[Theorem 1.1]{MR2995431}. 

Conversely, by \cite[Theorems 1.11]{contreres+al2014}, the unique solution to \eqref{EV_Loewner} yields a Loew\-ner chain of order $d$ consisting of univalent functions.
\end{proof}

\begin{remark}From the relation $f_t = f_s \circ f_{s,t}$ we obtain
\begin{equation*}
\frac{\partial}{\partial t} f_{s,t}(z) = \frac{\partial}{\partial z}f_{s,t}(z)\cdot M(z,t) \quad \text{for a.e.\ $t\geq s$, $f_{s,s}(z)=z\in D.$}
\end{equation*}
Furthermore, we can also differentiate $f_{s,t}$ with respect to $s$ and obtain
 \begin{equation}\label{EV_ord} \frac{\partial}{\partial s} f_{s,t}(z) = -M(f_{s,t}(z), s) \quad \text{for a.e.\ $s\leq t$, $f_{t,t}(z)=z\in D$}.
\end{equation}
Conversely, this equation has a unique solution, which gives the transition mappings of a decreasing Loewner chain of order $d$,
see again \cite[Theorems 1.11 and 3.2]{contreres+al2014}.
\end{remark}

Our special Loewner chains now satisfy the following relationship.

\begin{theorem}\label{is_addi}
 Let $(f_{t})$ be an additive Loewner chain of order $d$.
 Then $(f_t)$ satisfies \eqref{EV_Loewner} with a Herglotz vector field $M$ of order $d$ having the form
\begin{equation}\label{EV_eq:4}
M(z,t)=a_t + \int_\mathbb{R}\frac{1+xz}{x-z} \rho_t({\rm d}x),
\end{equation}
 where $a_t\in\mathbb{R}$ and $\rho_t$ is a finite non-negative Borel measure on $\mathbb{R}$, for a.e.\ $t\geq 0$.\\
 
Conversely, let $M$ be Herglotz vector field of order $d$ of the above form. Then the solution $f_{t}$ to \eqref{EV_Loewner}
is an additive Loewner chain of order $d$.
\end{theorem}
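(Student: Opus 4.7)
The plan is to treat the two directions separately, using Theorem~\ref{89} to supply a general Herglotz vector field for the Loewner PDE and then exploiting the defining additivity condition $\lim_{y\to\infty} f_{s,t}(iy)/(iy)=1$ (equivalently, $f_{s,t}=F_{\mu_{s,t}}$) to pin down the Nevanlinna shape of~$M(\cdot,t)$.

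For the forward direction, I would first apply Theorem~\ref{89} to the given additive Loewner chain of order~$d$ to obtain a Herglotz vector field~$M$ of order~$d$ satisfying \eqref{EV_Loewner}. By the Nevanlinna representation (Theorem~\ref{thm_nevanlinna}), for a.e.~$t$ the generator $M(\cdot,t):\Ha\to\Ha\cup\R$ has a triple $(a_t,b_t,\gamma_t)$ with $b_t\geq0$, and the claim \eqref{EV_eq:4} is exactly the assertion that $b_t=0$ a.e. To prove $b_t=0$ a.e., I would use the reverse evolution equation \eqref{EV_ord} with $z=iy$, integrate from $s$ to $t$ to obtain
\[
\frac{f_{s,t}(iy)}{iy} \;=\; 1 \;+\; \frac{1}{iy}\int_{s}^{t} M\bigl(f_{\sigma,t}(iy),\sigma\bigr)\,d\sigma,
\]
and then invoke the additive condition $f_{s,t}(iy)/(iy)\to1$ as $y\to\infty$. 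This forces
\[
\lim_{y\to\infty}\,\frac{1}{iy}\int_{s}^{t} M\bigl(f_{\sigma,t}(iy),\sigma\bigr)\,d\sigma \;=\; 0 \qquad\text{for all }0\leq s\leq t.
\]
Since each $f_{\sigma,t}$ is an $F$-transform, Exercise~\ref{muhaha2} shows that $f_{\sigma,t}(iy)\to\infty$ non-tangentially as $y\to\infty$, so by Remark~\ref{nontgh} the integrand converges pointwise to $b_\sigma$. Passing the limit inside the integral and using that $\int_s^t b_\sigma\,d\sigma=0$ for every $s\leq t$ yields $b_\sigma=0$ for a.e.~$\sigma$, which is precisely \eqref{EV_eq:4}. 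The main obstacle here is the interchange of limit and integral: it must be justified via dominated convergence, and the required domination is extracted from the order-$d$ bound on $M$ together with the elementary estimate $|M(z,\sigma)/(iy)|\le \bigl(k_{K,t}(\sigma)/y\bigr)$ on compact sets $K\subset\Ha$ that capture the trajectories $\sigma\mapsto f_{\sigma,t}(iy)$ for $y$ large (those trajectories lie in $\{\Im z\ge y\}$ since $\Im M\ge0$ forces $s\mapsto\Im f_{s,t}(iy)$ to be nonincreasing from~$y$).

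For the converse, I would invoke the second half of Theorem~\ref{89} to produce, from a given Herglotz vector field $M$ of order $d$ of the form \eqref{EV_eq:4}, a Loewner chain $(f_t)$ of order $d$ whose transition mappings $f_{s,t}$ satisfy the reverse equation \eqref{EV_ord}. It suffices by Theorem~\ref{Julia}(c) to verify that $\lim_{y\to\infty} f_{s,t}(iy)/(iy)=1$. Integrating \eqref{EV_ord} gives
\[
\frac{f_{s,t}(iy)}{iy} \;=\; 1 \;+\; \frac{1}{iy}\int_{s}^{t} M\bigl(f_{\sigma,t}(iy),\sigma\bigr)\,d\sigma,
\]
and the goal is to show the integral is $o(y)$ as $y\to\infty$. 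Since $M(\cdot,\sigma)$ has Nevanlinna triple $(a_\sigma,0,\gamma_\sigma)$, we have $M(z,\sigma)/z\to0$ as $z\to\infty$ non-tangentially. Because $\Im M\ge0$, the trajectory $\sigma\mapsto f_{\sigma,t}(iy)$ stays in $\{\Im w\ge y\}$; an a priori estimate on the real part of $f_{\sigma,t}(iy)$ (again from the order-$d$ $L^d$-bound on $M$, applied on each compact sub-interval $[s,t]$) confines the trajectory to a non-tangential cone at $\infty$ as $y\to\infty$. A second application of dominated convergence then gives $M(f_{\sigma,t}(iy),\sigma)/(iy)\to 0$ for a.e.~$\sigma$ and hence the integral vanishes in the limit. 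The hard part of the converse, analogous to the forward case, is the non-tangential confinement of the trajectories; once that is in hand, the $b_\sigma=0$ hypothesis does the rest.
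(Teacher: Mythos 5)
Your proposal takes a genuinely different route from the paper. Where the paper obtains both the vanishing of the angular-derivative coefficient $b_t = M'(\infty,t)$ and the normalization $B_t = f_t'(\infty) = 1$ in the converse by citing the boundary Julia--Wolff--Carath\'eodory-type theorem for evolution families from \cite[Thm.\ 1.1]{bracci+al2015} (which gives, in effect, $\log f_{s,t}'(\infty)=\int_s^t M'(\infty,\tau)\,d\tau$), you attempt to derive the same conclusion directly from the integral form of \eqref{EV_ord},
\[
f_{s,t}(iy)=iy+\int_s^t M\bigl(f_{\sigma,t}(iy),\sigma\bigr)\,d\sigma,
\]
combined with dominated convergence. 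The skeleton is sound, and your identification of the pointwise limit of the integrand as $b_\sigma$ (via Remark~\ref{nontgh} and Exercise~\ref{muhaha2}) is correct.

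However, the domination you offer does not hold as stated and leaves a real gap. You assert $|M(z,\sigma)/(iy)|\leq k_{K,t}(\sigma)/y$ ``on compact sets $K$ that capture the trajectories $\sigma\mapsto f_{\sigma,t}(iy)$ for $y$ large,'' but your own observation that these trajectories lie in $\{\Im z\geq y\}$ means they leave every fixed compact subset of $\Ha$ as $y\to\infty$; the order-$d$ bound $|M(z,\sigma)|\leq k_{K,S}(\sigma)$ requires $z\in K$ and therefore supplies no $y$-uniform dominant for the sequence $M(f_{\sigma,t}(iy),\sigma)/(iy)$. What one actually needs is (a) an estimate of $|M(z,\sigma)/z|$ uniform over a non-tangential cone $\{|\Re z|\leq\Im z,\ \Im z\geq 1\}$ and controlled by $|a_\sigma|+b_\sigma+\gamma_\sigma(\R)\lesssim|M(i,\sigma)|\in L^d([0,T])\subset L^1_{\mathrm{loc}}$, and (b) a proof that the trajectories $f_{\sigma,t}(iy)$ remain in such a cone for $y$ large with a threshold \emph{uniform} in $\sigma\in[s,t]$; Exercise~\ref{muhaha2} gives (b) only one $\sigma$ at a time, and the uniform version needs a Gronwall/bootstrap argument controlling $\Im f_{\sigma,t}(iy)/y$ and $\Re f_{\sigma,t}(iy)/y$ simultaneously. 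In the converse direction the difficulty is compounded by a circularity you acknowledge but do not resolve: $f_{\sigma,t}$ is not yet known to be an $F$-transform, so the cone confinement cannot be borrowed from Exercise~\ref{muhaha2} and must be rebuilt from the structure of $M$ alone. These are exactly the estimates that \cite{bracci+al2015} packages and that the paper outsources; without carrying them out, the dominated-convergence step does not close.
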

\begin{proof} ``$\Longrightarrow$'':
{}We have $$\Im(f_{s,u}(z))=
\Im(f_{s,t}(f_{t,u}(z))) \geq \Im(f_{t,u}(z))$$ for all $0\leq s\leq t\leq u$, see Theorem \ref{Julia}. So
$s\mapsto \Im(f_{s,t}(z))$ is non-increasing for every $z\in\Ha$. From \eqref{EV_ord} we see that
$\Im(M(z,t))\geq 0$ for almost every $t\geq 0$ and every $z\in\Ha$. Hence, $M(\cdot, t)$ has the form \eqref{EV_eq:4} for a.e.\ $t\geq 0.$ (See also \cite[Thm. 8.1]{MR2995431}.) 

Assume that $M'(\infty,t)>0$ for a set $I\subset [0,T]$ of positive Lebesgue measure. Then, by \cite[Thm. 1.1]{bracci+al2015}, we obtain that
$ f_{T}'(\infty) >1$, a contradiction. This proves that $M'(\infty,t)=0$ for a.e.\ $t\geq 0$, i.e.\ $M$ is an additive Herglotz vector field.

\vspace{3mm}

``$\Longleftarrow$'': We have to show that every
$f_{t}$ can be written as $f_{t}=F_{\mu_{t}}$ for a probability measure $\mu_{t}$.
 Consider the Nevanlinna representation of $f_{t}$:
$$f_{t}(z) = A_{t} + B_{t} z + \int_\mathbb{R} \frac{1+xz}{x-z} \sigma_{t}({\rm d}x).$$
 As $M(\cdot, t)$ has the form \eqref{EV_eq:4} for a.e.\ $t\geq0,$ $M(\cdot, t)$ has a ``boundary regular null point'' at $\infty$ with dilation $0$ for a.e.\ $t\geq 0$, see \cite[Def. 2.6]{bracci+al2015} which handles the unit disk case.
 
By \cite[Thm. 1.1]{bracci+al2015}, the ``spectral function'' of $f_{t}$ at $\infty$ is equal to 0, which translates in our setting to
$$B_{t} = 1$$ for every $t\geq 0$ (note that $B_{t}=f_{t}'(\infty)$, which corresponds to $f'_{t}(\sigma)$ in
 \cite{bracci+al2015}, must be a non-negative real number by \cite[Thm. 2.2 (vi)]{bracci+al2015}). 
Hence, \eqref{EV_eq:2} implies $f_{t}=F_{\mu_{t}}$ for a probability measure $\mu_{t}$ for every $t\geq 0.$
\end{proof}

\begin{theorem}\label{EV_normal_add}
 Let $(f_{t})$ be an \index{Loewner chain!additive}additive Loewner chain such that the first and second moments
 of all $\mu_{t}$ exist with
 \begin{equation*}
\text{$\int_\mathbb{R} x \mu_{t}({\rm d}x)=0$ and $\int_\mathbb{R} x^2 \mu_{t}({\rm d}x)=t$ for all $t\geq0.$}
 \end{equation*}
 Then $(f_{t})$ satisfies
 \eqref{EV_Loewner} for a Herglotz vector field $M$ of the form
$$
M(z,t)= \int_\mathbb{R}\frac1{u-z}\,\tau_t({\rm d}u),
$$
where $\tau_t$ is a probability measure for a.e.\ $t\geq0$.

Conversely, let $M$ be a  Herglotz vector field of the above form. Then the solution $(f_{t})$ to \eqref{EV_Loewner}
is an additive Loewner chain having the above normalization.
\end{theorem}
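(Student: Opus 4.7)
The strategy is to match the asymptotic behavior of $F_{\mu_t}$ at $\infty$ (which is pinned down by the first two moments of $\mu_t$) against the Nevanlinna representation of the Herglotz vector field supplied by Theorem~\ref{is_addi}. Before invoking that theorem, I would first verify that $(f_t)$ is a Loewner chain of order $d=1$, which is required so that Theorem~\ref{is_addi} applies. The additivity of mean and variance under monotone convolution (the first from $\varphi(X+Y)=\varphi(X)+\varphi(Y)$, the second from $\varphi(XY)=\varphi(X)\varphi(Y)$ for monotonically independent $X,Y$ together with vanishing means), combined with the hemigroup identity $F_{\mu_{0,t}}=F_{\mu_{0,s}}\circ F_{\mu_{s,t}}$, shows that every increment $\mu_{s,t}$ has mean $0$ and variance $t-s$. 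In the compactly supported regime, \eqref{EV_finite_var} gives the bound $|F_{\mu_{s,t}}(z)-z|\le (t-s)/\Im(z)$; writing $f_{s,t}-f_{s,u}=f_{s,u}\circ f_{u,t}-f_{s,u}$ and using local boundedness of $f_{s,u}'$ away from $\R$ furnishes condition (c') with $k_{z,S}$ even constant in $t$. A truncation argument handles general $\mu_t$ with only two moments.

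Once Theorem~\ref{is_addi} applies, $M(z,t)=a_t+\int_\R \frac{1+xz}{x-z}\rho_t(dx)$ for a.e. $t$. Using the identity $\frac{1+xz}{x-z}=-x+\frac{1+x^2}{x-z}$ and setting $\tau_t(dx):=(1+x^2)\rho_t(dx)$, $c_t:=a_t-\int_\R x\,\rho_t(dx)$, one rewrites $M(z,t)=c_t+\int_\R\frac{1}{x-z}\tau_t(dx)$, once the finiteness of the first two moments of $\rho_t$ has been secured (this is forced by the argument that follows). I would then compare non-tangential asymptotics on both sides of the Loewner PDE $\partial_t f_t=f_t'\cdot M$ at $z=\infty$. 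The moment conditions on $\mu_t$ give $f_t(z)=z-t/z+o(1/z)$, hence $\partial_t f_t(z)=-1/z+o(1/z)$ and $f_t'(z)=1+o(1/z)$, while the rewritten $M$ expands as $M(z,t)=c_t-\tau_t(\R)/z+o(1/z)$. Matching coefficients forces $c_t=0$ and $\tau_t(\R)=1$ for a.e. $t\geq 0$, which is exactly the claimed form.

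For the converse, any $M$ of the stated form equals the Cauchy transform $G_{\tau_t}$ in the spatial variable, hence is an infinitesimal generator on $\Ha$ by Remark~\ref{oepf} (its Nevanlinna triple is $(a,0,\gamma)$ with $a=0$), and satisfies the uniform bound $|M(z,t)|\le 1/\Im(z)$ independent of $t$; thus $M$ is a Herglotz vector field of every order. Theorem~\ref{89} produces a Loewner chain, Theorem~\ref{is_addi} identifies it as additive with $f_t=F_{\mu_t}$, and running the asymptotic matching backwards reads the first two Laurent coefficients of $F_{\mu_t}$ at $\infty$ and gives $\int x\,\mu_t(dx)=0$ and $\int x^2\,\mu_t(dx)=t$.

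The main obstacle is legitimating the asymptotic expansions at $\infty$ when $\mu_t$ is not compactly supported but only has its first two moments, together with the implicit claim that $\rho_t$ inherits the integrability against $1+x^2$. These must be handled via controlled non-tangential limits of the quantities $F_{\mu_t}(z)-z$ and $z(F_{\mu_t}(z)-z)+t$ rather than convergent Laurent expansions, and one must pass derivatives through the Nevanlinna integral defining $M$. I expect this analytic bookkeeping to be the principal difficulty; the algebraic matching of the leading coefficients and the converse construction are essentially formal once the order-one regularity and the asymptotic machinery are in place.
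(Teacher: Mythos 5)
Your outline — establish the uniform Lipschitz regularity, invoke Theorem~\ref{is_addi} for the Nevanlinna form of $M$, then pin down the constants by matching asymptotics at $\infty$ — is exactly the route taken in the references the paper cites (Goryainov--Ba and Prop.~3.6 of \cite{monotone_schl}); the paper itself only records the key estimate $|f_{s,t}(z)-f_{s,u}(z)|\le (u-t)/\Im(z)$ (yielding order $\infty$, not merely order $1$) and defers the rest. Two of your intermediate steps, however, are not yet safe as written. First, you invoke \eqref{EV_finite_var}, but in the paper that representation is stated only for \emph{compactly supported} $\mu$ with mean $0$; what is needed is the standard extension ``$\mu$ has mean $0$ and finite variance $\sigma^2$ iff $F_\mu(z)=z+\int(x-z)^{-1}\tau(dx)$ with $\tau(\R)=\sigma^2$'', which is a genuine further claim about the Nevanlinna representation, not something a truncation of $\mu_t$ delivers directly (truncating $\mu_t$ does not produce the truncated hemigroup).

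Second and more seriously, the inference ``$f_t(z)=z-t/z+o(1/z)$, hence $\partial_t f_t(z)=-1/z+o(1/z)$'' does not go through: $\partial_t f_t$ exists only for a.e.\ $t$, and the $o(1/z)$ error has no controlled $t$-dependence, so you cannot differentiate the asymptotic termwise. This also makes the later rewriting of $M$ circular: to expand $M(z,t)=c_t-\tau_t(\R)/z+o(1/z)$ you already need $\int(1+x^2)\,\rho_t(dx)<\infty$, which is precisely what you hoped the expansion would establish. The clean substitute is to work with the forward difference quotient: for a.e.\ $t$, $M(z,t)=\lim_{h\downarrow0}\frac{1}{h}(f_{t,t+h}(z)-z) = \lim_{h\downarrow 0}\int\frac{1}{x-z}\,\frac{\tau_{t,t+h}(dx)}{h}$, where each $\tau_{t,t+h}/h$ is a probability measure. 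A locally uniform limit of such functions is of the form $\alpha_t\int(x-z)^{-1}\tau_t(dx)$ with $\tau_t$ a probability measure and $\alpha_t\in[0,1]$ — and mass genuinely can escape in general (e.g.\ $\nu_h=\tfrac12\delta_0+\tfrac12\delta_{1/h}$), so $\alpha_t=1$ must be \emph{proved}, not assumed. One way: integrate the ODE \eqref{EV_ord}, multiply by $-z$, and let $z\to\infty$ nontangentially with dominated convergence to obtain $t-s=\int_s^t\alpha_u\,du$, which forces $\alpha_u=1$ for a.e.\ $u$. Once $\alpha_t=1$ is secured, comparing the two Nevanlinna representations of $-M(\cdot,t)$ yields $\rho_t(dx)=(1+x^2)^{-1}\tau_t(dx)$, hence $\int(1+x^2)\rho_t=1$ and $a_t-\int x\,\rho_t=0$ for free, and the converse direction you sketched is fine.
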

\begin{proof}
See \cite{MR1201130} or \cite[Prop. 3.6]{monotone_schl}. We note that the normalization implies that
\begin{equation*}\label{EV_regularity2}|f_{s,t}(z) - f_{s,u}(z)| \leq \frac{u-t}{\Im(z)},\end{equation*}
for all $0\leq s \leq t\leq u$ and $z\in \Ha,$ see \cite[p. 1214]{MR1201130}.
Hence, $(f_{t})$ is an additive Loewner chain of order $\infty$.
\end{proof}

The following theorem  implies part (b) of Theorem \ref{main_sec_8}.

\begin{theorem}\label{EV_embed_F}${}$
\begin{enumerate}[\rm(a)]
 \item Let $\mu$ be a probability measure on $\R$ such that \index{F-transform@$F$-transform}$F_\mu$ is univalent.
Then there exists an \index{Loewner chain!additive}additive Loewner chain $(f_{t})_{t\geq 0}$ such that $f_{1} = F_\mu.$
\item Let $\mu$ be a probability measure on $\R$  such that $F_\mu$ is univalent and
$$\int_\mathbb{R} x\, \mu({\rm d}x)=0,\quad  \int_\mathbb{R} x^2\, \mu({\rm d}x)=:T<\infty.$$
 Then there exists an additive Loewner chain $(f_{t})_{t\geq 0}$ having the normalization from Theorem 
 \ref{EV_normal_add} such that  $f_{T} = F_\mu.$
\end{enumerate}
\end{theorem}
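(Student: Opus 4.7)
The strategy is geometric: construct $(f_t)$ as Riemann maps onto a continuously decreasing family of simply connected subdomains $(\Omega_t)_{t\geq 0}$ of $\Ha$ with $\Omega_0=\Ha$, interpolating to the target $F_\mu(\Ha)$ at the desired time. By the uniqueness in Theorem~\ref{Julia}, any conformal map $\Ha\to\Omega_t$ with the hydrodynamic normalization $f_t(iy)/(iy)\to 1$ is automatically an $F$-transform $F_{\mu_t}$ of some probability measure $\mu_t$, so once we check continuity in $t$, the family $(f_t)$ is automatically an additive Loewner chain.

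I would first dispatch (b). Here $\mu$ has finite variance $T$, and the key observation (visible in the proof of Theorem~\ref{ftransform_images} together with Remark~\ref{rm_capacity}) is that for compactly supported $\mu$, the variance of $\mu_t$ equals the half-plane capacity of the hull $\Ha\setminus\Omega_t$. This makes half-plane capacity the canonical time parameter. For $\mu$ with bounded support, the hull $K := \Ha\setminus F_\mu(\Ha)$ is bounded, and by standard chordal Loewner theory (see \cite{lawler05}) it can be grown from $\emptyset$ through an increasing family of bounded hulls $(K_t)_{t\in[0,T]}$ of half-plane capacity $t$, continuous in $t$. Defining $f_t$ as the unique $F$-transform whose image is $\Ha\setminus K_t$ (Theorem~\ref{ftransform_images}) yields the desired additive Loewner chain, and Theorem~\ref{EV_normal_add} identifies the Herglotz vector field in the claimed form. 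The case of finite but non-compact support is handled by approximation: truncate the representing measure in \eqref{EV_finite_var} to $[-n,n]$ to obtain compactly supported $\mu^{(n)}$ whose $F$-transforms are univalent and converge locally uniformly to $F_\mu$, apply the above to each $\mu^{(n)}$, and extract a limit chain via Theorem~\ref{lemmaconvergence0}. Part (a) then follows for measures with finite variance $T>0$ by time-rescaling: $t\mapsto F_{\mu_{tT}}$ is an additive Loewner chain with $f_1=F_\mu$.

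For $\mu$ without finite variance, a further approximation is needed: choose univalent $F$-transforms $F_{\nu_n}$ with finite variance converging locally uniformly to $F_\mu$ on $\Ha$, e.g.\ by thinning the hull $\Ha\setminus F_\mu(\Ha)$ near the real line (so that each approximate hull has finite half-plane capacity). Apply the finite-variance case to each $\nu_n$, with the chain reparametrized to reach $F_{\nu_n}$ at time $1$, and pass to a subsequential limit using normality of holomorphic self-maps of $\Ha$ and Theorem~\ref{lemmaconvergence0} applied to the Nevanlinna data of the Herglotz vector fields.

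\textbf{Main obstacle.} The principal difficulty is this final limit passage: we need the hydrodynamic normalization $f_t(iy)/(iy)\to 1$ to be preserved for every $t$ in the limit. This is a non-tangential condition at the boundary point $\infty$, and is not automatic from locally uniform convergence on $\Ha$. In the language of Nevanlinna pairs (Theorem~\ref{lemmaconvergence0}), this amounts to showing that the finite measures $\hat\gamma^{(n)}_t$ on $\hat\R$ associated with $f^{(n)}_t$ place no limiting mass at $\{\infty\}$, i.e.\ that no Nevanlinna mass escapes to $\infty$ as $n\to\infty$. I would control this by exploiting the structure of the chain: each transition map $f^{(n)}_{s,t}$ is itself an $F$-transform, so the Nevanlinna measures along the chain are dominated by the (fixed) data of the target $F_{\nu_n}$, and the convergence $F_{\nu_n}\to F_\mu$ together with $F_\mu$ being an $F$-transform forces the required uniform tightness.
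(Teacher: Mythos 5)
You take a genuinely different, more hands-on geometric route than the paper: grow the hull $\Ha\setminus F_\mu(\Ha)$ from $\emptyset$ through hulls parametrized by half-plane capacity, and then pass to the general case by approximation. For $\mu$ with compact support this is sound and is essentially the content of the Goryainov--Ba paper \cite{MR1201130} that the text cites for part (b). (Your reference to \cite{lawler05} is slightly off-target: Lawler develops the chordal Loewner equation for hulls generated by curves, not the converse embedding theorem for arbitrary bounded hulls, which is what you actually need; \cite{MR1201130} is the right citation.) Doing (b) first and deducing (a) for finite-variance measures by time-rescaling is a neat reorganization. The paper itself bypasses all of this by quoting the evolution-family embedding theorem of Bracci--Contreras--D\'iaz-Madrigal--Gumenyuk \cite[Thm.\ 1.2]{bracci+al2015}, which directly embeds any univalent self-map with a parabolic boundary regular fixed point at $\infty$ into an evolution family preserving that fixed-point data, so there is no approximation step at all.

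However, the part of your argument that handles non-compactly supported $\mu$ has a real gap, and it is not just the one you flag. The premise that the general case can be reduced to bounded hulls by approximation runs into a basic obstruction: for $\mu$ with non-compact support the hull $\Ha\setminus F_\mu(\Ha)$ is unbounded (concretely, for $\mu$ the standard Cauchy distribution one has $F_\mu(z)=z+i$, with hull the full strip $\{0<\Im z\le 1\}$), so there is no half-plane capacity to parametrize by. The two approximation schemes you sketch both have problems before the limit passage is even attempted. Truncating the representing measure $\tau$ in \eqref{EV_finite_var} to $[-n,n]$ gives $F_{\mu^{(n)}}\to F_\mu$ locally uniformly, but nothing in the construction guarantees that the truncated $F_{\mu^{(n)}}$ remain univalent (Hurwitz runs the wrong way), so you may lose the very hypothesis you need. ``Thinning the hull near $\R$'' must be done so that the resulting domain stays simply connected with bounded complement and so that the Riemann maps converge with the \emph{boundary} normalization at $\infty$ intact; this is the delicate Carath\'eodory-at-a-boundary-point issue, and you are right to call it the main obstacle, but the ``domination'' heuristic you offer does not resolve it. There is no general inequality between the Nevanlinna measure of an intermediate $f_t$ and that of the target $f_1$ along the chain: the identity $f_1=f_t\circ f_{t,1}$ only gives you the multiplicativity of the angular derivative at $\infty$, not domination of the measures on $\hat\R$, and for the infinite-variance case you also lose the $\tau$-bound $\tau_{\mu_t}(\R)\le\tau_{\mu_1}(\R)$ that would have provided tightness. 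So as written, parts (a) and the unbounded-support half of (b) are not proved; this is exactly the hard case that \cite[Thm.\ 1.2]{bracci+al2015} was designed to handle and is why the paper invokes it.
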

\begin{proof}
(a) Theorem 1.2 in \cite{bracci+al2015}, with $\Lambda(t)\equiv 0,$ implies that
we can write $F_\mu = f_{0,1}$ where $\{f_{s,t}\}_{0\leq s\leq t\leq 1}$ is an evolution family in the sense of Remark \ref{EV_remark_ev} and
\begin{itemize}
	\item[(i)] $f_{s,t}$ has a boundary regular fixed point at $\infty$ for all $0\leq s \leq t.$
	\item[(ii)] $f_{s,t}'(\infty) = 1$ for all $0\leq s \leq t.$
\end{itemize}
(Note that \cite[Thm. 1.2]{bracci+al2015} only gives $|f_{s,t}'(\infty)| = 1$. 
However, $f_{s,t}'(\infty)$ must be non-negative as $\infty$ is a fixed point of $f_{s,t},$
see again \cite[Thm. 2.2 (vi)]{bracci+al2015}.) 
We conclude that every $f_{s,t}$ has the form \eqref{EV_eq:2}. 
Finally, the family $(f_t)_{t\geq 0}$ with $f_t = f_{1-t,1}$ for $t\in[0,1]$, $f_t = f_{0,1}$ for $t>1$, is an additive Loewner chain with $f_1 = f_{0,1} = F_\mu.$ 

\vspace{3mm}

(b) This statement follows in a similar way by using \cite[Theorem 5]{MR1201130}.
\end{proof}

\section{Slit mappings}\label{p_p}

For $T>0$, let $\gamma:[0,T]\to \Ha\cup \R$ be continuous with $\gamma(0)\in\R$ and $\gamma(0,T]\subset \Ha$. Such a curve is also called a \emph{slit}.\\  

By Theorem \ref{ftransform_images}, for any $t\in [0,T]$, there exists a unique $\mu_{t}\in \mathcal{P}_c(\R)$ with mean $0$ such that 
$F_{\mu_t}$ is univalent and $F_{\mu_t}(\Ha)=\Ha\setminus \gamma(0,t]$. We will assume that $\gamma[0,T]$ is parametrized such that the variance of $\mu_t$ is equal to $t$. Then we have the normalization
\[ F_{\mu_t}(z) = z -\frac{t}{z} + \ldots \]
at $\infty$. (This parametrization is also called the  \emph{hydrodynamic parametrization} of the slit.) By Theorem \ref{loc_con}, $F_{\mu_t}$ can be extended continuously to $\Ha\cup \R$,  and by Theorem \ref{pomm}, there exists a unique $U(t)\in\R$ such that $F_{\mu_t}(U(t)) = \gamma(t)$. Thus $F_{\mu_t}^{-1}$ can be extended continuously to $(\Ha\setminus \gamma(0,t]) \cup \{\gamma(t)\}$ and 
\[ U(t) =  F_{\mu_t}^{-1}(\gamma(t)).\]

\begin{theorem}\label{slit_Loewner_eq_thm}The family $(F_{\mu_t})_{t\in[0,T]}$ is (a part of) an additive Loewner chain. The function $t\mapsto U(t)$, $t\in[0,T]$, is continuous with $U(0)=\gamma(0)$ and $F_{\mu_t}$ satisfies
\begin{equation}\label{slit_Loewner_eq}
\frac{\partial}{\partial t}F_{\mu_t}(z) = \frac{\partial}{\partial z}F_{\mu_t}(z)  \frac{1}{U(t)-z}
\quad \text{for all $t\in[0,T]$ and $z\in\Ha$.} \end{equation}
\end{theorem}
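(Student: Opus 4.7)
The plan is to establish the three claims in the order they appear: first the Loewner-chain property, then the continuity of $U$, and finally the PDE via an identification of the driving measure in Theorem \ref{EV_normal_add}.

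First I would set up the Loewner chain. Define transition mappings $f_{s,t} := F_{\mu_s}^{-1}\circ F_{\mu_t}$ for $0\le s\le t\le T$. Since $F_{\mu_t}(\Ha)=\Ha\setminus\gamma(0,t]$ is decreasing in $t$, $f_{s,t}$ is a well-defined holomorphic self-map of $\Ha$, and $f_{s,u}=f_{s,t}\circ f_{t,u}$ is immediate. Continuity of $(s,t)\mapsto f_{s,t}$ (locally uniformly) follows from Carath\'eodory kernel convergence applied to the slit domains $\Ha\setminus\gamma(0,t]$ with the (unique) hydrodynamic normalization. Expanding the composition at $\infty$ using $F_{\mu_t}(z)=z-t/z+\LandauO(z^{-2})$ (Remark \ref{first_moment}) gives $f_{s,t}(z)=z-(t-s)/z+\LandauO(z^{-2})$, so $\lim_{y\to\infty}f_{s,t}(iy)/(iy)=1$ and $f_{s,t}=F_{\nu_{s,t}}$ is the $F$-transform of a probability measure $\nu_{s,t}$ with mean $0$ and variance $t-s$ (Theorem \ref{Julia}). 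Hence $(F_{\mu_t})_{t\in[0,T]}$ is an additive Loewner chain satisfying the normalization of Theorem \ref{EV_normal_add}. Continuity of $U$ then follows from continuity of the slit tip $\gamma(t)$ combined with continuity of the boundary extension of $F_{\mu_t}^{-1}$: the slit ensures $\gamma(t)$ is a prime-end boundary point of $\Omega_t$ and the hydrodynamically normalized Riemann maps $F_{\mu_t}^{-1}$ converge uniformly on neighborhoods of $\gamma(t)$ (again by the kernel theorem applied up to the boundary, as in Theorem \ref{fortsetzen} / Theorem \ref{loc_con}). Since $\mu_0=\delta_0$, $U(0)=\gamma(0)$.

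Now I would derive the PDE. By Theorem \ref{EV_normal_add} applied to $(F_{\mu_t})$, there is a Herglotz vector field $M(z,t)=\int_\R\frac{1}{u-z}\tau_t(du)$, with $\tau_t\in\mathcal{P}(\R)$ for a.e.\ $t$, such that $\partial_t F_{\mu_t}(z)=F_{\mu_t}'(z)\,M(z,t)$. To show $\tau_t=\delta_{U(t)}$, I pass to the short-time transition $f_{t,t+\eps}=F_{\nu_{t,t+\eps}}$: by \eqref{EV_finite_var} we may write
\[
f_{t,t+\eps}(z)=z+\int_\R\frac{1}{u-z}\tau_{t,\eps}(du),\qquad \tau_{t,\eps}(\R)=\eps.
\]
The omitted hull $\Ha\setminus f_{t,t+\eps}(\Ha)=F_{\mu_t}^{-1}(\gamma(t,t+\eps])$ is a connected compact subset of $\overline\Ha$ accumulating only at $U(t)$, and by continuity of $F_{\mu_{t+\eps}}^{-1}$ on the boundary it shrinks uniformly to $\{U(t)\}$ as $\eps\downarrow 0$. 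The Stieltjes--Perron inversion \eqref{eq:atom3}/\eqref{Stieltjes3} applied to $-(f_{t,t+\eps}(z)-z)$ identifies $\supp\tau_{t,\eps}$ with the real projection of this hull; hence $\supp\tau_{t,\eps}\subset [U(t)-\delta(\eps),U(t)+\delta(\eps)]$ with $\delta(\eps)\to 0$. Combined with $\tau_{t,\eps}(\R)=\eps$, we obtain $\eps^{-1}\tau_{t,\eps}\wto\delta_{U(t)}$, and therefore
\[
\frac{f_{t,t+\eps}(z)-z}{\eps}=\int_\R\frac{1}{u-z}\,\frac{\tau_{t,\eps}(du)}{\eps}\;\longrightarrow\;\frac{1}{U(t)-z}
\]
locally uniformly in $\Ha$. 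From $F_{\mu_{t+\eps}}=F_{\mu_t}\circ f_{t,t+\eps}$ and a first-order Taylor expansion, this yields $\partial_t F_{\mu_t}(z)=F_{\mu_t}'(z)\cdot \tfrac{1}{U(t)-z}$ for every $t\in[0,T]$, both sides being continuous in $t$.

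The hard part is the support-shrinking step: making precise that $\supp\tau_{t,\eps}$ is contained in a vanishing neighborhood of $U(t)$. This requires a uniform-in-$\eps$ boundary analysis linking the support of a probability measure $\nu$ to the real projection of the hull $\Ha\setminus F_\nu(\Ha)$ via the continuous extension of $F_\nu^{-1}$ to the boundary; the slit geometry (so that the hull is locally connected at its base) is what makes this extension continuous and provides the uniform control needed to pass from Carath\'eodory convergence of domains to uniform convergence of the boundary correspondences. Once this is in hand, everything else is bookkeeping with the composition formula and Theorem \ref{EV_normal_add}.
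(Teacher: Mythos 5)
The paper does not prove this theorem itself; it refers the reader to the reference \cite{dMG16} (del Monaco--Gumenyuk, \emph{Chordal Loewner equation}), so there is no proof in the paper to compare against. Your strategy is the standard one found in that reference and in Lawler's book: set up the transition mappings $f_{s,t}=F_{\mu_s}^{-1}\circ F_{\mu_t}$, verify the additive Loewner chain structure via the hydrodynamic normalization, and then derive the PDE by showing that the short-time driving measure concentrates at $U(t)$.

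Two things deserve comment. First, the detour through Theorem \ref{EV_normal_add} only yields the PDE with a Herglotz vector field $M(z,t)$ for almost every $t$, together with local absolute continuity of $t\mapsto F_{\mu_t}(z)$; your short-time analysis of $(f_{t,t+\eps}(z)-z)/\eps$ in fact produces the one-sided derivative for \emph{every} $t$, which is strictly stronger, so the Theorem \ref{EV_normal_add} step is a detour rather than a simplification. Having the right derivative equal to $F_{\mu_t}'(z)/(U(t)-z)$ for all $t$, with the right-hand side continuous in $t$, already gives the $C^1$ statement by the classical lemma about continuous one-sided derivatives; invoking the a.e.\ PDE and then patching with continuity reaches the same place but makes the logic less transparent. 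Second, in the support-shrinking step you write that the hull $F_{\mu_t}^{-1}(\gamma(t,t+\eps])$ contracts to $\{U(t)\}$ ``by continuity of $F_{\mu_{t+\eps}}^{-1}$ on the boundary''; the relevant map here is the \emph{fixed} map $F_{\mu_t}^{-1}$, continuous near the tip prime end $\gamma(t)$, applied to the shrinking arcs $\gamma(t,t+\eps]$. This is a minor slip but it is worth getting the indices right, because the uniformity of the estimate you need is in $\eps$ with $t$ fixed, and moving $t$ as well would require more care.

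The gap you flag yourself is the real mathematical content: converting ``the hull is contained in a disk of radius $\delta(\eps)$ about $U(t)$'' into ``$\supp\tau_{t,\eps}\subset[U(t)-c\,\delta(\eps),\,U(t)+c\,\delta(\eps)]$''. This is correct and is exactly the place where proofs of the slit Loewner equation do the work; the needed estimate is the standard fact that the real interval swallowed by a hydrodynamically normalized hull map has length comparable to the diameter of the hull (e.g.\ Lawler, \emph{Conformally invariant processes in the plane}, Section 3.4). With that one estimate filled in, the remaining steps (the composition formula $F_{\mu_{t+\eps}}=F_{\mu_t}\circ f_{t,t+\eps}$, division by $\eps$, and locally uniform convergence) are indeed routine and the argument closes.
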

\begin{proof}See \cite{dMG16}.
\end{proof}

$U$ is also called the \emph{driving function} of the slit $\gamma$ and we will call equation \eqref{slit_Loewner_eq} \emph{Loewner's slit equation}.

\begin{example}Let $\mu_t=A(0,t)$, the arcsine distribution with mean $0$ and variance $t$. Then  $F_{\mu_t}=\sqrt{z^2-2t}$ maps $\Ha$ onto the complement of the vertical line segment $[0,i\sqrt{2t}]$, see Example \ref{ex_arcsine}, and $F_{\mu_t}$ satisfies \eqref{slit_Loewner_eq} with $U(t)\equiv 0$. \hfill $\blacksquare$
\end{example}

\begin{remark} Conversely, every continuous function $U:[0,T]\to\R$ generates an additive Loewner chain $F_t$ via equation \eqref{slit_Loewner_eq}. Then $F_{\mu_t}(\Ha)=\Ha\setminus K_t$ for some growing subsets $K_t\subset \Ha$. However, these sets do not necessarily describe a growing slit. This was noted first by Kufarev in \cite{Kuf47}. An example for such a driving function is the function $U:[0,1]\to\R,$ $U(t)=c \sqrt{1-t}$ with $c\geq 4$, see \cite{LMR10}. One can even generate spacefilling curves in this way, see \cite{LR12}.\\
  The set of all continuous driving functions that correspond to slits  is 
not known explicitly.  However, there are several partial results into that direction. 
 Roughly speaking, if $U$ is smooth enough, e.g.\ continuously differentiable, then 
 $K_t$ describe a slit. We refer to \cite{ZZ18} and the references therein for such results.
\end{remark}

We can now prove that, unlike the cases of classical, Boolean and free independence, there are more $\rhd$-hemigroup distributions than $\rhd$-infinitely divisible distributions.

\begin{theorem}There exists a $\rhd$-hemigroup distribution $\mu$ which is not $\rhd$-infinitely divisible.
\end{theorem}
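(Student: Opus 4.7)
The plan is to exhibit an explicit counterexample, namely $\mu = A(c, 1)$, the arcsine distribution with mean $c$ and variance $1$, for any fixed $c \neq 0$. First I would verify that its $F$-transform $F_\mu(z) = \sqrt{(z-c)^2 - 2}$ is univalent on $\Ha$: if $F_\mu(z_1) = F_\mu(z_2)$ then $(z_1-c)^2 = (z_2-c)^2$, so $z_1 = z_2$ or $z_1 = 2c - z_2$, and the latter would force $\Im(z_1) = -\Im(z_2)$, which is impossible. By Theorem \ref{EV_embed_F}(a) combined with Lemma \ref{hem_loe}, univalence of $F_\mu$ embeds it into an additive Loewner chain whose associated $\rhd$-hemigroup contains $\mu$ as $\mu_{0,1}$, so $\mu$ is a $\rhd$-hemigroup distribution.

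It then remains to show that $\mu$ fails to be $\rhd$-infinitely divisible. Suppose, for contradiction, that $\mu$ is embedded into a continuous $\rhd$-semigroup $(\nu_s)_{s\geq 0}$ with $\nu_1 = \mu$ as in Theorem \ref{inf_div_mon}, and set $\tilde F_s = F_{\nu_s}$. The substitution $w = z - c$ shows $F_\mu(\Ha) = \Ha \setminus (0, i\sqrt{2}]$, and the semigroup identity $\tilde F_s \circ \tilde F_{1-s} = \tilde F_1$ forces $\tilde F_s(\Ha) \supset \tilde F_1(\Ha)$, so each hull $\tilde K_s := \Ha \setminus \tilde F_s(\Ha)$ is a closed subset of $(0, i\sqrt{2}]$. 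Since $\tilde F_s(\Ha)$ must be simply connected (as the image of a univalent map on $\Ha$), $\tilde K_s$ cannot contain a compact subarc disjoint from $\R$ (which would create a nontrivial loop in the complement), so $\tilde K_s = (0, ih_s]$ for some $h_s \in [0, \sqrt{2}]$.

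Any conformal map $\Ha \to \Ha \setminus (0, ih_s]$ is unique up to pre-composition with an $\operatorname{Aut}(\Ha)$-element, and the only such automorphisms satisfying the normalization $\lim_{y\to\infty} f(iy)/(iy) = 1$ are translations $z \mapsto z + b$. Hence $\tilde F_s(z) = \sqrt{(z - d_s)^2 - h_s^2}$ for some $d_s \in \R$, that is, $\nu_s = A(d_s, h_s^2/2)$. Additivity of mean and variance under $\rhd$-convolution (both are immediate from monotone independence, with the variance addition following from $\varphi(XY)=\varphi(YX)=\varphi(X)\varphi(Y)$) combined with the semigroup property pins down $h_s^2/2 = s$ and $d_s = sc$, so $\nu_s = A(sc, s)$.

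Finally I would plug this into the semigroup identity $F_{\nu_s} \circ F_{\nu_t} = F_{\nu_{s+t}}$ and square both sides. After routine cancellation, the only surviving requirement is $\sqrt{(z-tc)^2 - 2t} = z - tc$ for all $z \in \Ha$, which fails for any $t > 0$ when $c \neq 0$. Hence no such semigroup $(\nu_s)$ exists and $\mu = A(c,1)$ is not $\rhd$-infinitely divisible, completing the argument. The main obstacle is the simple-connectedness step that pins $\tilde K_s$ down to an initial sub-segment of the vertical slit; once this rigidity is in place, the moment bookkeeping and the final algebraic check are routine.
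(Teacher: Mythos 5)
Your proof is correct and takes a genuinely different route from the paper's. The paper's argument takes a mean-zero measure whose $F$-transform image is a \emph{non-vertical} slit and invokes the slit Loewner equation (Theorem \ref{slit_Loewner_eq_thm}): a $\rhd$-semigroup has a time-independent Herglotz vector field, hence a constant driving function, hence a vertical slit --- contradiction. You instead fix the concrete measure $A(c,1)$, $c\neq 0$, whose image $\Ha\setminus(0,i\sqrt 2]$ \emph{is} a vertical slit but whose mean is non-zero, and run an explicit computation: the inclusion $\tilde K_s\subset\tilde K_1=(0,i\sqrt 2]$ together with simple connectedness of $\tilde F_s(\Ha)$ forces $\tilde K_s=(0,ih_s]$ (your ``no compact subarc'' step is sound and can be phrased precisely by noting that simple connectedness of $\tilde F_s(\Ha)$ is equivalent to connectedness of $\tilde K_s\cup\Ha^-\cup\R\cup\{\infty\}$ in $\hat\C$, which forces $\tilde K_s$ to be a connected sub-interval with $0$ in its closure), hence $\nu_s=A(d_s,h_s^2/2)$ by Remark \ref{rm:1}; additivity of mean and variance under $\rhd$ then pins down $\nu_s=A(sc,s)$, and the semigroup identity reduces to $sc\left(\sqrt{(z-tc)^2-2t}-(z-tc)\right)=0$, impossible for $s,t>0$, $c\neq 0$. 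Your approach avoids the Loewner slit equation machinery entirely, replacing it by a direct algebraic check, and incidentally exhibits $A(c,1)=A(0,1)\rhd\delta_c$ as a monotone product of two $\rhd$-infinitely divisible measures that is not itself $\rhd$-infinitely divisible --- exactly the content of one of the paper's exercises in Section 8. The paper's proof, in exchange, is more structural: it shows at one stroke that \emph{every} mean-zero measure whose $F$-transform carves out a non-vertical slit fails to be $\rhd$-infinitely divisible.
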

\begin{proof}
Choose $\mu$ with mean $0$ such that $F_\mu$ is univalent and $F_\mu(\Ha)=\Ha\setminus \gamma(0,1]$ for a simple curve 
$\gamma:[0,1]\to\overline{\Ha}$ with $\gamma(0)\in\R$ and $\gamma(0,1]\subset \Ha$. This is possible for any such curve due to Theorem \ref{ftransform_images}. Assume that $\gamma$ is not a vertical line segment. 
Then $\mu$ is a $\rhd$-hemigroup distribution due to Theorem \ref{main_sec_8} or Theorem \ref{slit_Loewner_eq_thm}.\\

Let $F_{\mu_t}$ be the corresponding additive Loewner chain with $F_{\mu_t}(z) = z -\frac{t}{z} + \ldots$ 
at $\infty$. Then $F_{\mu_t}$ satisfies \eqref{slit_Loewner_eq}. Any other normalized Loewner chain generating $F_\mu$ clearly corresponds to a time change of the Loewner chain $(F_{\mu_t})$.\\

Now assume that $\mu$ is monotonically infinitely divisible. Then, by Theorem \ref{inf_div_mon}, $F_\mu$ can be embedded into an additive Loewner chain $(F_{\nu_t})_{t\geq0}$ which is a semigroup. We may assume that the first moments of all $\nu_t$ are equal to $0$. (Otherwise, consider the Loewner chain $F_{\nu_t}-\int_\R x \nu_t(dx)$). Then we have $\int_\R x^2 \nu_t(dx)=ct$ for some $c>0$ and 
$F_{\nu_t}=z-\frac{c t}{z}+...$. A time change yields $c=1$ and then $F_{\nu_t}=F_{\mu_t}$ for all $t\geq0$. \\
Now $F_{\nu_t}$ satisfies \eqref{slit_Loewner_eq} and as $(F_{\nu_t})_{t\geq0}$ is a semigroup, the Herglotz vector field $\frac{1}{U(t)-z}$ does not depend on $t$, i.e.\ $U(t)\equiv u\in\R$. In other words, $\gamma[0,1]$ must be a vertical line segment connecting $u$ to some $u+iT$, $T>0$, which is a contradiction to our assumption. Hence, $\mu$ is not monotonically infinitely divisible.
\end{proof}

If the $F$-transform of a probability measure $\mu$ is a univalent slit mapping, then $\mu$ has some special properties.

\begin{theorem}\label{slit_measures} Let $\mu$ be a probability measure on $\R$ such that $F_\mu$ is univalent and maps $\Ha$ conformally onto $\Ha\setminus \gamma$, where $\gamma$ is a slit starting at $C\in\R\setminus\{0\}$. Then  $\mu$ has the following properties:
\begin{itemize}
 \item[(a)] $\supp \mu= \{x_0\}\cup[a,b]$, where $\mu$ 
 has a continuous density $d(x)$ on the compact interval $[a,b]$ and an atom at some $x_0\in \R\setminus [a,b].$  
 Furthermore, $d(a)=d(b)=0$ and $d(x)>0$ in $(a,b)$.
 \item[(b)] $\mathcal{H}_\mu$ is defined and continuous on $\R\setminus\{x_0\}$ with $\mathcal{H}_\mu(a)=\mathcal{H}_\mu(b)=\frac1{\pi C}$.
 \item[(c)] There exists a decreasing homeomorphism $h:[a,b]\to[a,b]$ with 
 \[d(h(x))=d(x) \qquad \text{and} \qquad \mathcal{H}_\mu(h(x))=\mathcal{H}_\mu(x)\]
 for all $x\in [a,b]$. 
 \end{itemize}
\end{theorem}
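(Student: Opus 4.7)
My plan is to read off all of $\mu$ from the boundary behaviour of the conformal map $F_\mu\colon\Ha\to\Ha\setminus\gamma$, using the identity $G_\mu=1/F_\mu$ together with the Stieltjes--Perron inversion formula. First, since $\overline\Ha$ is a Jordan domain in $\hat\C$, the same Carathéodory-type argument used in Theorems \ref{ftransform_images} and \ref{slit_Loewner_eq_thm} gives a continuous extension of $F_\mu$ to $\overline\Ha$. Tracing the prime-end boundary of $\Ha\setminus\gamma$ counter-clockwise from $\infty$ — along $(-\infty,C]$, up the left side of $\gamma$ to the tip $\gamma(T)$, down the right side back to $C$, then along $[C,\infty)$ — and using $F_\mu(z)\sim z$ at $\infty$, I identify real numbers $a<p<b$ such that $F_\mu$ is a homeomorphism $(-\infty,a]\to(-\infty,C]$ and $[b,\infty)\to[C,\infty)$, and maps each of $[a,p]$, $[p,b]$ homeomorphically onto $\gamma$, with $F_\mu(a)=F_\mu(b)=C$ and $F_\mu(p)=\gamma(T)$.

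For part (a), note that on $\R\setminus[a,b]$ we have $F_\mu(x)\in\R$, so $G_\mu=1/F_\mu$ is real-valued wherever $F_\mu\neq 0$. Since $F_\mu(\pm\infty)=\pm\infty$ and $F_\mu(a)=F_\mu(b)=C\neq 0$, the intermediate value theorem gives a unique $x_0\in\R\setminus[a,b]$ with $F_\mu(x_0)=0$; by Schwarz reflection $F_\mu$ is holomorphic in a neighbourhood of $x_0$ with $F_\mu'(x_0)>0$, so $\lim_{y\downarrow 0}iy\,G_\mu(x_0+iy)=1/F_\mu'(x_0)>0$, producing the atom at $x_0$. For $x\in(a,b)$ we have $F_\mu(x)\in\gamma\subset\Ha\setminus\{0\}$, so $G_\mu$ extends continuously across $(a,b)$ and Stieltjes inversion yields the continuous density
\[
d(x)=\frac{1}{\pi}\,\frac{\Im F_\mu(x)}{|F_\mu(x)|^2},
\]
which is strictly positive on $(a,b)$ (since $\gamma(0,T]\subset\Ha$) and vanishes at $a$ and $b$, where $F_\mu=C\in\R$.

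For part (b), the function $\frac{1}{\pi}\Re(1/F_\mu(x))$ is continuous on $\R\setminus\{x_0\}$ and equals $\frac{1}{\pi C}$ at $a$ and $b$. To identify this limit with $\mathcal H_\mu$, decompose $\mu=\mu_{\rm ac}+\mu(\{x_0\})\delta_{x_0}$: for $x\neq x_0$ the atom contributes the non-singular term $\mu(\{x_0\})/\bigl(\pi(x-x_0)\bigr)$ to the principal-value integral, while $\mu_{\rm ac}$ has a density that is continuous on all of $\R$ (extending by zero outside $[a,b]$, including at $a$ and $b$), so Exercise \ref{equal_Hilbert} gives $\mathcal H_{\mu_{\rm ac}}=\hat{\mathcal H}_{\mu_{\rm ac}}$. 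Adding back the atom contribution yields $\mathcal H_\mu=\frac{1}{\pi}\Re(1/F_\mu)$ on $\R\setminus\{x_0\}$, proving (b).

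For part (c), define $h\colon[a,b]\to[a,b]$ by requiring $F_\mu(h(x))=F_\mu(x)$: for $x\in[a,p]$ let $h(x)$ be the unique point of $[p,b]$ with the same image on $\gamma$, and symmetrically for $x\in[p,b]$. Because $F_\mu|_{[a,p]}$ and $F_\mu|_{[p,b]}$ are homeomorphisms onto $\gamma$ with opposite orientations, $h$ is a well-defined decreasing homeomorphism with $h(a)=b$, $h(p)=p$, $h(b)=a$. The identity $G_\mu(x)=1/F_\mu(x)=1/F_\mu(h(x))=G_\mu(h(x))$ then gives $d(h(x))=d(x)$ and $\mathcal H_\mu(h(x))=\mathcal H_\mu(x)$ by taking imaginary and real parts. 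The main technical obstacle is the clean extraction of the prime-end parametrisation at the distinguished point $C$, which is simultaneously on $\R$ and the foot of the slit and so contributes two distinct boundary preimages $a,b$; once this topological picture is in place, every remaining assertion reduces to a direct computation with $G_\mu=1/F_\mu$.
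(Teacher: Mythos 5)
Your proposal is correct and follows essentially the same route as the paper: extend $F_\mu$ continuously via the local connectedness of $\partial(\Ha\setminus\gamma)$, identify $[a,b]$ and the midpoint via prime-end considerations, locate the zero $x_0$ giving the atom, read the density off $-\tfrac1\pi\Im(1/F_\mu)$, decompose $\mu$ into atom plus absolutely continuous part to identify $\mathcal H_\mu$ with $\tfrac1\pi\Re(1/F_\mu)$ via Exercise \ref{equal_Hilbert}, and build the welding homeomorphism $h$ from the two boundary preimages of each point of $\gamma$. The only additions beyond the paper's argument are minor elaborations (explicit boundary orientation, computing the atom mass as $1/F_\mu'(x_0)$ via Schwarz reflection), which are correct but not a genuinely different method.
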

\begin{proof}
As the domain $\Ha\setminus \gamma$ has a locally connected boundary, the mapping $F_\mu$ can be extended continuously
to $\overline{\Ha};$ see Theorem \ref{loc_con}.\\
There exists an interval $[a,b]$ such that $F_\mu([a,b])=\gamma$ and there is a unique $u\in (a,b)$ such that
$F_\mu(u)$ is the tip of the slit. All points $[a,u]$ correspond to the left side, all points $[u,b]$ to the right side 
of $\gamma.$ (This orientation follows from the behavior of $F_\mu(x)$ as  $x\to\pm\infty.$) 
Hence, there
exists a unique homeomorphism $h:[a,b]\to[a,b]$ with $h(u)=u,$ $h[a,u]=[u,b]$ such that 
$F_\mu(h(x))=F_\mu(x)$ for 
all $x\in[a,b].$\\
Furthermore, $F_\mu$ has exactly one zero $x_0\in\R\setminus[a,b]$ on $\R$, as the slit does not start at $0$. As $C=F_\mu(a)=F_\mu(b)$, we have $x_0 < a$ if and only if $C>0$.\\ 

 It follows from the Stieltjes-Perron inversion formula that $\supp \mu=\{x_0\}\cup [a,b]$ and that $\mu$ is absolutely continuous on
 $[a,b]$ and its density 
 $d(x)$ satisfies
 \[d(x) =  \lim_{\eps\to 0} -\frac{1}{\pi}\Im(1/F_\mu(x+i\eps))=-\frac{1}{\pi}\Im(1/F_\mu(x)).\] Hence,
 $d(h(x)) = d(x)$ for all $x\in[a,b]$, $d(x)>0$ on $(a,b)$, and $d(a)=d(b)=0$.\\
Let $\lambda=\mu(\{x_0\})$. Then we have
\[ \frac1{\pi} \Re\left(1/F_\mu(x)\right)
=\hat{\mathcal{H}}_\mu(x)=\hat{\mathcal{H}}_d(x)+\frac{\lambda}{\pi(x-x_0)} = \mathcal{H}_d(x)+\frac{\lambda}{\pi(x-x_0)}=\mathcal{H}_\mu(x)\]
for every $x\in\R\setminus\{x_0\}$ due to Exercise \ref{equal_Hilbert}. Here, $\mathcal{H}_d$ and $\hat{\mathcal{H}}_{d}$ are defined by replacing $\mu(dt)$ by $d(t)dt$ in the integration, and formally, we apply Exercise \ref{equal_Hilbert} to the probability measure defined by the density $d(t)/(1-\lambda)$.\\
Thus $\mathcal{H}_\mu(x)$ is continuous on $\R\setminus\{x_0\}$,
 $\mathcal{H}_\mu(a)=\mathcal{H}_\mu(b)=\frac1{\pi C}$, and $\mathcal{H}_\mu(h(x))=\mathcal{H}_\mu(x)$ on $[a,b]$.

\end{proof}
\begin{remark}\label{unio}
The proof shows that $x_0<a$ if $C>0$ and $x_0>b$ if $C<0$. \\
Furthermore, we note that there is a unique $u\in (a,b)$ with $d(u)=u$. This number is equal to the preimage of
the tip of  $\gamma$ under the map $F_\mu$.\\

Assume that only the density $d$ on $[a,b]$ is known. Then $\lambda:=\mu(\{x_0\})$ can simply be determined by $\lambda=1-\int_a^b d(x)\, dx$. 
Furthermore,  
$\mathcal{H}_{\mu}(x)=\mathcal{H}_d(x)+\frac{\lambda}{\pi(x-x_0)}$. As $\frac1{\pi C}=\mathcal{H}_{\mu}(a)=\mathcal{H}_{\mu}(b)$, we see that $x_0$ satisfies the quadratic equation
$\frac{\lambda(b-a)}{\pi(\mathcal{H}_d(a)-\mathcal{H}_d(b))}=(x_0-a)(x_0-b)$.
\end{remark}

The case of a slit starting at $0$ is quite similar.

\begin{theorem}\label{slit_measures2}Let $\mu$ be a probability measure on $\R$ such that $F_\mu$ is univalent and maps $\Ha$ conformally onto $\Ha\setminus \gamma$, where $\gamma$ is a slit starting at $C=0$. Then  $\mu$ has the following properties:
\begin{itemize}
 \item[(a)] $\supp \mu= [a,b]$, where $\mu$ 
 has a continuous density $d(x)>0$ on $(a,b)$. 
 \item[(b)] $\mathcal{H}_\mu$ is defined and continuous on $\R\setminus\{a,b\}$ with 
  $\lim_{x\downarrow a}|\mathcal{H}_\mu(x)|=\lim_{x\uparrow b}|\mathcal{H}_\mu(x)|=\infty$ or 
 $\lim_{x\downarrow a}d(x)=\lim_{x\uparrow b}d(x)=\infty$.
 \item[(c)] There exists a decreasing homeomorphism $h:[a,b]\to[a,b]$ with 
 \[d(h(x))=d(x) \qquad \text{and} \qquad \mathcal{H}_\mu(h(x))=\mathcal{H}_\mu(x)\]
 for all $x\in (a,b)$.
 \end{itemize}
\end{theorem}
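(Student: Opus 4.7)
My plan is to adapt the proof of Theorem \ref{slit_measures} to the case $C=0$. The key structural change is that the real zero of $F_\mu$, which in the $C\neq 0$ case sat at an isolated point $x_0$ outside $[a,b]$ and produced an atom, now degenerates into the two endpoints $a,b$ of the interval mapping to the slit; both satisfy $F_\mu(a)=F_\mu(b)=0$.

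First I extend $F_\mu$ continuously to $\overline{\Ha}$ via Theorem \ref{loc_con} (the domain $\Ha\setminus\gamma$ has a locally connected boundary, being bounded by $\R$ and a continuous arc). I then identify a closed interval $[a,b]\subset \R$ as the preimage of $\gamma[0,T]$, with $F_\mu(a)=F_\mu(b)=0$ and a unique interior point $u\in(a,b)$ with $F_\mu(u)$ equal to the tip. The arcs $[a,u]$ and $[u,b]$ map continuously and bijectively onto the two sides of $\gamma$ (the orientation is fixed by the behavior of $F_\mu(x)$ as $x\to\pm\infty$). Now apply Stieltjes-Perron (Theorem \ref{sp}): for $x\in\R\setminus[a,b]$, $F_\mu(x)\in\R\setminus\{0\}$, so $1/F_\mu(x)$ is real and $iyG_\mu(x+iy)\to 0$, yielding neither density nor atoms there; for $x\in(a,b)$, $F_\mu(x)\in\Ha$, so $d(x)=-\frac{1}{\pi}\Im(1/F_\mu(x))>0$ is continuous and gives the absolutely continuous density. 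Hence $\supp\mu=[a,b]$ and (a) is proved.

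For (c), I define $h:[a,b]\to[a,b]$ as the unique decreasing involution with $h(a)=b$, $h(b)=a$, $h(u)=u$, and $F_\mu(h(x))=F_\mu(x)$; this is well defined since $F_\mu|_{[a,u]}$ and $F_\mu|_{[u,b]}$ are continuous bijections onto the same arc. The equality $d(h(x))=d(x)$ is then immediate, and applying Exercise \ref{equal_Hilbert} we obtain $\mathcal{H}_\mu(x)=\frac{1}{\pi}\Re(1/F_\mu(x))$ for $x\in(a,b)$, which gives $\mathcal{H}_\mu(h(x))=\mathcal{H}_\mu(x)$. Continuity of $\mathcal{H}_\mu$ on $\R\setminus\{a,b\}$ follows from continuity and non-vanishing of $F_\mu$ off $\{a,b\}$: on $(a,b)$ via $\mathcal{H}_\mu=\frac{1}{\pi}\Re(1/F_\mu)$, and on $\R\setminus[a,b]$ by the same formula since $F_\mu$ is real and non-zero there.

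For (b), the core input is that $F_\mu(x)\to 0$ as $x\to a^+$ or $x\to b^-$ by continuity, hence $|1/F_\mu(x)|\to\infty$. Writing $F_\mu(x)=\rho(x)e^{i\theta(x)}$ with $\rho(x)\to 0$ and $\theta(x)\in(0,\pi)$ on $(a,b)$, we have $\pi d(x)=\sin\theta(x)/\rho(x)$ and $\pi\mathcal{H}_\mu(x)=\cos\theta(x)/\rho(x)$. The dichotomy arises from whether the slit emerges transversally or tangentially at $0$: in the transversal case $\theta(x)$ stays bounded away from $\{0,\pi\}$, so $\sin\theta(x)$ is bounded below and $d(x)\to\infty$; in the tangential case $\sin\theta(x)\to 0$ while $|\cos\theta(x)|\to 1$, giving $|\mathcal{H}_\mu(x)|\to\infty$. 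The $h$-symmetry from (c) transfers the behavior at $a$ to $b$.

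The main obstacle is the final limit analysis, because for a merely continuous slit $\arg F_\mu(x)$ is not guaranteed to converge as $x\to a^+$. The cleanest way around this is to invoke boundary regularity for conformal maps onto slit domains: locally near $a$ one has $F_\mu(z)=c(z-a)^{\alpha}(1+o(1))$ with $\alpha\in(0,1]$ the ratio of the interior angle of $\Ha\setminus\gamma$ at $0$ (on the $a$-side) to $\pi$, and the exponent $\alpha$ cleanly controls which of $d$ or $|\mathcal{H}_\mu|$ explodes. An alternative, more abstract route uses prime ends: the base point $0$ corresponds to two accessible prime ends of $\Ha\setminus\gamma$, and along radial approach within each prime end the argument of $F_\mu$ has a well-defined limit, yielding the dichotomy.
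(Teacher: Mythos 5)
Your plan for parts (a) and (c) is sound and is exactly the natural adaptation of the proof of Theorem \ref{slit_measures} (which the paper invokes implicitly via ``the case of a slit starting at $0$ is quite similar''): extend $F_\mu$ continuously by Theorem \ref{loc_con}, identify the preimage interval $[a,b]$ of the slit with $F_\mu(a)=F_\mu(b)=0$, read off the density on $(a,b)$ from Stieltjes--Perron, and define $h$ by matching the two sides of the slit. One small caveat: Exercise \ref{equal_Hilbert} is stated for a measure with a \emph{globally} continuous density, whereas here $d$ may blow up at $a$ and $b$, so the identity $\mathcal{H}_\mu(x)=\frac{1}{\pi}\Re(1/F_\mu(x))$ on $(a,b)$ needs a brief localization argument that you do not supply. (In the proof of Theorem \ref{slit_measures} this is sidestepped because there the density vanishes at $a$ and $b$; that move is unavailable here.)

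The real gap is in the limit dichotomy of part (b), and you have correctly identified it, but neither of your two proposed repairs closes it. The expansion $F_\mu(z)=c(z-a)^\alpha(1+o(1))$ is a corner asymptotic that requires $\partial(\Ha\setminus\gamma)$ to have a well-defined angle at $0$ on the $a$-side; a merely continuous slit need not have a tangent at its base, and no such power law is available. The prime-end/radial-limit route only recovers $F_\mu(x)\to 0$ as $x\downarrow a$, which you already have from Theorem \ref{loc_con}, and gives no control on $\arg F_\mu(x)$. Consequently your case distinction ``transversal'' ($\theta(x)$ bounded away from $\{0,\pi\}$, hence $d\to\infty$) versus ``tangential'' ($\sin\theta(x)\to 0$, hence $|\mathcal{H}_\mu|\to\infty$) is not exhaustive: $\theta(x)$ may accumulate simultaneously at $0$ and at an interior point of $(0,\pi)$, in which case neither $d$ nor $|\mathcal{H}_\mu|$ need have a limit. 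Indeed, one can construct slits $\gamma(t)=r(t)e^{i\theta(t)}$ with $r$ strictly decreasing to $0$ and $\theta$ oscillating, with rates arranged so that both quantities oscillate between $0$ and $\infty$. The only unconditional fact is $\mathcal{H}_\mu(x)^2 + d(x)^2 = \pi^{-2}|F_\mu(x)|^{-2}\to\infty$, which is weaker than the stated conclusion. To obtain the theorem as written you would need to argue that $\arg\gamma(t)$ either remains in a compact subset of $(0,\pi)$ or converges to $0$ or $\pi$ as $t\downarrow 0$ --- i.e.\ that the slit has a (possibly one-sided) tangent at its base --- which appears to be an extra regularity hypothesis that the statement does not make explicit.
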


\begin{remark}\label{rm:1}
Note that $F_\mu(\Ha)=F_\mu(\Ha-d)=F_{\mu'}(\Ha)$ whenever $\mu'$ is $\mu$ translated by $d\in\R$. \\
Conversely, if we have two univalent $F$-transforms with 
$F_\mu(\Ha)=F_{\mu'}(\Ha)=\Ha\setminus \gamma$, then $\alpha=F_\mu\circ F_{\mu'}^{-1}$ is an automorphism 
of $\Ha$ with $\alpha(\infty)=\infty$ and $\alpha'(\infty)=1$, which implies $\alpha(z)=z+d$ for some 
$d\in\R$. Hence $\mu'$ is a translation of $\mu$.
\end{remark}

\begin{remark}The homeomorphism $h$ is also called the \emph{welding homeomorphism} of 
the slit $\gamma$.\\
A slit $\gamma$ is called \emph{quasislit} if $\gamma$ approaches $\R$ nontangentially and 
$\gamma$ is the image of a line segment under a quasiconformal mapping.
The theory of conformal welding implies: 
$\gamma$ is a quasislit if and only if $h$ is quasisymmetric;
see \cite[Lemma 6]{Lind:2005} and \cite[Lemma 2.2]{MarshallRohde:2005}.\\
In this case, the slit is uniquely determined by $h$ and its starting point $C$. An example of a slit which is not uniquely 
determined by $h$ and $C$ is a slit with positive area.
\end{remark}

\begin{remark}
 Take a simple curve $\gamma:[0,1)\to\overline{\Ha}$ such that $\gamma(0)=0$, $\gamma(0,1)\subset \Ha\setminus[i,2i]$, 
 and the limit points of $\gamma$ as $t\to 1$ form the interval $[i,2i]$, as depicted in the figure 
 below. Let $D=\Ha\setminus (\gamma(0,1) \cup [i,2i])$. Then $D$ is simply connected. Let $F_\mu:\Ha\to D$ 
 be univalent. Then the limit $\lim_{\eps\downarrow 0}F_\mu(x+i\eps)$ exists for every 
 $x\in \R$ due to \cite[Exercises 2.5, 5]{MR1217706} and the fact that the prime end $p$ that corresponds to 
 $[i,2i]$ is accessible, i.e.\ the point $2i$ can be reached by a Jordan curve in $D$.
 In this case, $\mu$ has quite similar properties as in Theorem \ref{slit_measures2}, but the density $d$ is not continuous. The midpoint $u$ corresponds to the preimage of $p$ under $F_\mu$.\\
 If we replace the vertical interval $[i,2i]$ by a horizontal interval like $[i,1+i]$, a similar construction 
 yields a measure $\mu$ satisfying all properties as in Theorem \ref{slit_measures2} except that 
 $\mathcal{H}_\mu$ is not continuous. 
\end{remark}

 \begin{figure}[H]
 \begin{center}
 \includegraphics[width=0.4\textwidth]{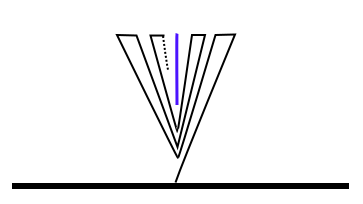}
 \caption{A curve $\gamma$ approaching a vertical line segment (blue).}
 \end{center}
 \end{figure}

\section{Further reading}

The radial Loewner chains correspond exactly to multiplicative $\rhd$-semigroups (see Exercises \ref{exxxa} and \ref{exxxb}) and describe the distributions of multiplicative quantum processes with monotonically independent increments, see \cite{FHS}. \\

In the spirit of geometric function theory, which couples geometric and analytic properties of holomorphic functions, one can translate several geometric properties of univalent $F$-transforms (and $\eta$-transforms) into properties of the probability measures, see \cite[Sections 6 and 7]{FHS}.

\newpage

\section{Exercises}

\begin{exercise}Show that the assumption in Definition \ref{EV_def:evolution_family} that all $f_{s,t}$ are non-constant can be dropped, i.e. if $f_{s,t}:D\to D$, $0\leq s\leq t$, is a family of holomorphic functions satisfying $(a)-(c)$ in Definition \ref{EV_def:evolution_family}, then  all $f_{s,t}$ must be non-constant. 
\end{exercise}

\begin{exercise}[Herglotz representation]\label{Herglotz}
Let $f:\D\to \C$ be holomorphic with $\Re(p(z))\geq 0$ for all 
$z\in \D$ and $f(0)=1$. Show that there exists a probability measure $\rho$ on $\partial \D$ such that 
\[f(z) = \int_{\mathbb{\partial \D}}\frac{u+w}{u-w} \rho(du). \]
Furthermore, consider the power series expansion $f(z)=1+\sum_{n=1}^\infty c_nz^n$ and show that
\[ |c_n|\leq 2 \quad  \text{for all $n\geq 1$.} \]
\end{exercise}

\begin{exercise}Assume that $\mu$ is a probability measure such that $F_\mu(\Ha)=\Ha\setminus \gamma$ 
 for a simple curve $\gamma$. Does $F_\mu$ have to be injective?
\end{exercise}

\begin{exercise}
Find an $F$-transform $F_\mu$ such that $\mu(A)=\mu(-A)$ for all $A\in \mathcal{B}(\R)$ and $\mathcal{H}_\mu(x)$ exists for all $x\in \R$ with 
$\mathcal{H}_\mu(-x)=-\mathcal{H}_\mu(x)$.
\end{exercise}

\begin{exercise}Show that if $\mu,\nu$ are $\rhd$-infinitely divisible, then $\mu\rhd \nu$ is not $\rhd$-infinitely divisible 
in general.
\end{exercise}

\begin{exercise}\label{exxxa}Let $H$ be a complex Hilbert space and let $\varphi:B(H)\to \C$ be an expectation. Let $U\in B(H)$ be a unitary, i.e.\ $UU^*=U^*U = I$. Then there exists a probability measure $\mu$ on $\partial \D$, the distribution of $U$, such that $\varphi(U^n) = \int_{\partial \D} x^n \mu({\rm d}x)$. We define the transforms
\begin{equation*}\label{def:psi-eta}
\psi_\mu(z)=\int_{\partial \D}\frac{xz}{1-xz}\, \mu({\rm d}x) = \varphi\left(\frac{Uz}{1-Uz}\right) \quad \text{and} \quad \eta_\mu(z)=\frac{\psi_\mu(z)}{1+\psi_\mu(z)}, \quad z \in \D.
\end{equation*}
\begin{itemize}
	\item[(a)] Show that $\eta_\mu$ maps $\D$ into $\D$ with $\eta_\mu(0)=0$ and that any holomorphic $f:\D\to \D$ with $f(0)=0$ has the form $f=\eta_\mu$ for some probability measure $\mu$ on $\partial \D$.  
	\item[(b)] Let $U,V\in B(H)$ be unitary with distributions $\mu$ and $\nu$ and let $\alpha$ be the distribution of $UV$. Assume that $(U-I,V)$ is monotonically independent. Show that $\eta_{\alpha} = \eta_\mu \circ \eta_\nu$. (We also write $\alpha=\mu\rhd \nu$.)
\end{itemize}
\end{exercise} 

\begin{exercise}\label{exxxb}Let $(\mu_{s,t})_{0\leq s\leq t}$ be a family of probability measures on $\D$ which form a $\rhd$-hemigroup on $\partial\D$, i.e.\ 
\begin{itemize}
	\item[(1)] $\mu_{s,s}=\delta_1$ for all $s\geq 0$,
	\item[(2)] $(s,t)\mapsto \mu_{s,t}$ is continuous with respect to weak convergence,
	\item[(3)] $\mu_{s,u} = \mu_{s,t} \rhd \mu_{t,u}$ for all $0\leq s\leq t\leq u$.
\end{itemize} 
Show: If  $(\mu_{s,t})_{0\leq s\leq t}$ is such a  $\rhd$-hemigroup, then $\eta_{\mu_{0,t}}$ is a radial Loewner chain with transition mappings 
$\eta_{\mu_{s,t}}$. Conversely, if $f_{t}$ is a radial Loewner chain with transition mappings $f_{s,t}$, then $f_{s,t}=\eta_{\mu_{s,t}}$ for a 
$\rhd$-hemigroup $(\mu_{s,t})_{0\leq s\leq t}$ on $\partial \D$.
\end{exercise}

\begin{exercise}\label{scaling_chordal}
Consider a slit $\gamma:[0,T]\to \Ha\cup \{\gamma(0)\}$ and let $F_{t}=F_{\mu_t}$ be the corresponding mappings satisfying \eqref{slit_Loewner_eq} with $U:[0,T]\to \R$.
Now consider the scaled slit $t\mapsto c\gamma(t)$ for some $c>0$. Find the hydrodynamic parametrization 
$\Gamma:[0,S]\to c \gamma[0,T]$ and find the driving function $V:[0,S]\to \R$ of $\Gamma$.
\end{exercise}

\begin{exercise}For probability measures $\mu, \nu$  on $\partial \D$, the Boolean convolution $\mu\uplus\nu$ is defined by 
\[\eta_{\mu\uplus\nu}(z)/z=\eta_\mu(z)/z \cdot \eta_\nu(z)/z,\] 
which is well-defined by Exercise \ref{exxxa} (a) and due to the Schwarz lemma.\\
Show that a family $(\mu_t)_{t\geq0}$ of probability measures on $\partial\D$ is a continuous $\uplus$-semigroup if and only if there exists a holomorphic function $p:\D\to \C$ with $\Re (p(z)) \geq0$ such that \[\eta_{\mu_t}(z)/z = \exp(-t p(z)).\] 
\end{exercise}

\newpage

\chapter[Free hemigroups and Loewner chains]{Free hemigroups, Loewner chains, and nonlinear resolvents}\label{sec9}

We have seen that $\rhd$-hemigroups correspond to additive Loewner chains and that $\rhd$-hemigroup distributions correspond to univalent $F$-transforms. In this chapter, we will see that the $F$-transforms of a $\boxplus$-hemigroup also form an additive Loewner chain, which shows that distributions of $\boxplus$-hemigroups  (=the freely infinitely divisible distributions), correspond to special univalent $F$-transforms.\\

The Loewner equation of an additive Loewner chain is given by
\begin{equation}\label{intro_1_eq}
\frac{\partial}{\partial t} f_{t}(z) = \frac{\partial}{\partial z}f_{t}(z)\cdot M(z,t) \quad \text{for a.e.\ $t\geq 0$, $f_{0}(z)=z\in \Ha,$}
\end{equation}
with 
\begin{equation*}
M(z,t)=a_t + \int_\mathbb{R}\frac{1+xz}{x-z} \rho_t({\rm d}x).
\end{equation*}

Now replace $M(t,z)$ by $M(t,f_t(z))$ to obtain the following modified equation:
\begin{equation}\label{intro_2_eq}
 \frac{\partial}{\partial t}f_t(z)=\frac{\partial}{\partial z}f_{t}(z)\cdot M(f_t(z),t), \quad \text{for a.e.\ $t\geq 0$, $f_{0}(z)=z\in \Ha.$}
\end{equation}

At first sight, it seems unnatural to look at this equation, for, if $G$ is a generator on $\Ha$ and $\varphi:\Ha\to \Ha$ a holomorphic self-mapping, then $G\circ \varphi$ need not be a generator. However, our generators have the form $G:\Ha\to \Ha\cup \R$, and then also $G\circ \varphi:\Ha\to \Ha\cup \R$ is a generator on $\Ha$ (see Exercise \ref{muhaha}).\\

In fact, the modified equation \eqref{intro_2_eq} is  much simpler than \eqref{intro_1_eq}. 
While the inverse functions $g_t=f_t^{-1}$ of the solution to \eqref{intro_1_eq} satisfy Loewner's ordinary differential equation 
\[\frac{\partial}{\partial t}g_t(z) = - G(t,g_t(z)),\] the inverse functions $g_t$ of a solution to \eqref{intro_2_eq} 
satisfy \[\frac{\partial}{\partial t}g_t(z) = - G(t,z),\quad \text{i.e.} \quad  g_t(z)=z-\int_0^t G(s,z)ds.\]
Note that this implies that $g_t=f_{t}^{-1}$ extends holomorphically to the whole domain $\Ha$. Thus the growing hulls $K_t$ defined by $f_t(\Ha)=\Ha\setminus K_t$ are always bounded by analytic curves in \eqref{intro_2_eq}, while they are arbitrary in \eqref{intro_1_eq}. \\

We will see that \eqref{intro_2_eq} characterizes Loewner chains that arise as $F$-transforms of $\boxplus$-hemigroups.

\section{Nonlinear resolvents}

\begin{definition}
Let $G$ be an infinitesimal generator of a continuous semigroup on a simply connected domain $D\subsetneq \C$. Let $t\geq 0$. If the equation
\begin{equation}\label{nl} w = z - t\cdot G(z) \end{equation}
has a unique solution for all $w\in D$, we call $z=J_t(w)=J_t(w,G)$ the \emph{nonlinear resolvent} (at ``time'' $t$) of $G$. 
\end{definition}

If $J_t$ exists, it is a holomorphic mapping from $D$ into itself. 

\begin{remark}\label{bounded_convex}
  If $D$ is bounded and convex, then nonlinear resolvents exist for all $t\geq 0$, see  \cite[Theorem 1.1]{RS97}. 
	Conversely, if $G:D\to\C$ is holomorphic and bounded such that all nonlinear resolvents in \eqref{nl} exist, then $G$ is an infinitesimal generator, see  \cite[Corollary 1.2]{RS97}.
\end{remark}

\begin{example}Let $D=\D$ and consider the semigroup $F_t(z)=e^{-t} z$. Then $G(z)=-z$ and the solution to the 
equation $w = z + t z$ is given by $J_t(w)=\frac{w}{1+t}$.\hfill $\blacksquare$
\end{example}

On unbounded domains, $J_t$ might exist for some, but in general not for all $t\geq0$.

\begin{example}\label{ex_H}Let $D=\Ha$ and consider the semigroup $F_t(z)=e^t z$. Then $G(z)=z$ and the equation $w = z - t z$ 
has no solution in $\Ha$ for $t\geq 1$, but $J_t(w)=\frac{w}{1-t}$ for $t\in[0,1)$.\hfill $\blacksquare$
\end{example}

\begin{lemma}\label{rm_1}If $D$ is bounded and convex, then $G\circ J_t$ is an infinitesimal generator on $D$ for all 
$t\geq 0$. 
\end{lemma}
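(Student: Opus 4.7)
The plan is to use the characterization of generators on bounded convex domains given in Remark \ref{bounded_convex}: it suffices to show that $\tilde G := G\circ J_t$ is bounded on $D$ and that its nonlinear resolvents $\tilde J_s$ exist for every $s\geq 0$. The case $t=0$ is trivial since $J_0 = \mathrm{id}$ and $\tilde G = G$, so I will fix $t>0$.

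The key computation is an explicit formula for $\tilde J_s$. By definition, $\tilde J_s(w) = z$ means $w = z - s\,G(J_t(z))$. Setting $u := J_t(z)$ (so that $z = u - t\,G(u)$ by the defining equation of $J_t$), this rearranges to
\begin{equation*}
w \;=\; u - t\,G(u) - s\,G(u) \;=\; u - (s+t)\,G(u),
\end{equation*}
which means $u = J_{s+t}(w)$. Since $G(J_{s+t}(w)) = \frac{J_{s+t}(w)-w}{s+t}$, we recover $z = u - t\,G(u)$ as
\begin{equation*}
\tilde J_s(w) \;=\; J_{s+t}(w) \;-\; t\cdot \frac{J_{s+t}(w) - w}{s+t} \;=\; \frac{s}{s+t}\,J_{s+t}(w) \;+\; \frac{t}{s+t}\,w.
\end{equation*}
Uniqueness of the solution $z$ follows because the right-hand side uniquely determines $\tilde J_s(w)$ via $J_{s+t}$, which exists uniquely by Remark \ref{bounded_convex}. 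Moreover, since $D$ is convex, $\tilde J_s(w)$ is a convex combination of the two points $J_{s+t}(w) \in D$ and $w\in D$, hence lies in $D$.

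Next, for boundedness: for any $z\in D$ we have $G(J_t(z)) = \tfrac{1}{t}(J_t(z)-z)$, and since both $J_t(z)$ and $z$ lie in the bounded set $D$, we get $|G(J_t(z))| \leq \mathrm{diam}(D)/t$, so $\tilde G$ is a bounded holomorphic function on $D$. Combining boundedness with the existence (and uniqueness) of $\tilde J_s$ for every $s\geq 0$, the converse part of Remark \ref{bounded_convex} yields that $\tilde G = G\circ J_t$ is an infinitesimal generator on $D$, which is the desired conclusion.

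The only delicate step is the algebraic derivation of the formula for $\tilde J_s$; once that is in place, convexity of $D$ and boundedness of $D$ deliver the two hypotheses of Remark \ref{bounded_convex} essentially for free.
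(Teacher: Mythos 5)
Your proof is correct, but it takes a genuinely different route from the paper's. The paper's argument is shorter and rests on a different key fact: it rewrites $(G\circ J_t)(w)=(J_t(w)-w)/t$ using the defining equation, then invokes \cite[Proposition 4.3]{RS96}, which states that $z\mapsto f(z)-z$ is automatically an infinitesimal generator whenever $f:D\to D$ is holomorphic and $D$ is bounded and convex; combined with the fact that positive scalar multiples of generators are generators, this finishes immediately. You instead use the \emph{converse} characterization in Remark \ref{bounded_convex} (from \cite[Corollary 1.2]{RS97}): you verify that $\tilde G=G\circ J_t$ is bounded (via the same identity $(G\circ J_t)(w)=(J_t(w)-w)/t$) and that all its nonlinear resolvents exist, the latter by the explicit formula $\tilde J_s(w)=\frac{s}{s+t}J_{s+t}(w)+\frac{t}{s+t}w$. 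Your approach costs more work --- notably the resolvent computation, whose existence direction is only sketched (one should verify that the convex-combination candidate really solves $w=z-s\tilde G(z)$ by checking $J_t(z)=J_{s+t}(w)$ via uniqueness of resolvents) --- but it buys a concrete resolvent identity $\tilde J_s = \frac{s}{s+t}J_{s+t}+\frac{t}{s+t}\,\mathrm{id}$, which is interesting in its own right and avoids reliance on the ``$f(z)-z$ is always a generator'' lemma. Both routes are legitimate; the paper's is the more economical.
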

\begin{proof}
This is clearly true for $t=0$. So let $t>0$. Then  $(G\circ J_t)(w)= (J_t(w)-w)/t$.
Now we use the fact that $f(z)-z$, $f:D\to D$ holomorphic, is always an infinitesimal generator on bounded convex domains, see \cite[Proposition 4.3]{RS96}, and the fact that $r\cdot G$ is an infinitesimal generator for every infinitesimal generator $G$ and $r>0$.
\end{proof}

\begin{theorem}\label{result1}${}$
\begin{itemize}
\item[(1)] Let $G$ be an infinitesimal generator on a bounded and convex domain $D\subset \C$ with resolvents $J_t:D\to D$. Then $(J_t)_{t\geq0}$ 
is a decreasing Loewner chain satisfying the Loewner partial differential equation
\begin{equation}\label{m}\frac{\partial}{\partial t}J_t(w)=J_t'(w)\cdot G(J_t(w))\quad \text{for all $t\geq 0$}, \quad J_0(w)\equiv w\in D.\end{equation}
The domains $J_t(D)$ contract to the zero set of $G$, i.e.\ $\bigcap_{t\geq 0} J_t(D) = G^{-1}(0)$.
\item[(2)] Let $D\subset \C$ be a (possibly unbounded) convex domain and let $G$ be an infinitesimal generator on $D$. 
Furthermore, let $(J_t:D\to D)_{t\geq0}$ be a family of holomorphic functions satisfying \eqref{m}. 
Then $J_t$ are the resolvents of $G$ on $D$.
\item[(3)] Let $D\subset \C$ be a bounded and convex domain and let $G:[0,\infty)\times D\to \C$ be such that $z\mapsto G(t,z)$ is holomorphic for a.e.\ $t\geq0$, $t\mapsto G(t,z)$ is locally integrable for all $z\in D$, and $H_t(z):=\int_0^t G(s,z) ds$ 
is an infinitesimal generator on $D$ for every $t\geq 0$. Let $J_t:D\to D$ be the nonlinear resolvent of $H_t$ at time $1$. Then $(J_t)_{t\geq 0}$ is the 
unique solution of holomorphic self-mappings of $D$, locally absolutely continuous in $t$, to  
\begin{equation}\label{m2}\frac{\partial}{\partial t}J_t(z)=J_t'(z)\cdot G(t,J_t(z)) \quad \text{for a.e.\ $t\geq 0$}, \quad J_0(z)\equiv z\in D.\end{equation}
\end{itemize}
\end{theorem}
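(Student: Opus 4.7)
The plan is to reduce all three parts to a single inversion calculation: once each $J_t$ is known to be univalent, differentiating the identity $J_t^{-1}\circ J_t=\mathrm{id}$ in $t$ shows that $\partial_t(J_t^{-1})$ equals the negative of the (possibly time-dependent) generator, and integration in $t$ then identifies $J_t^{-1}$ with the algebraic equation defining the resolvent.

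For part (1), the resolvents $J_t$ exist for all $t\geq 0$ by Remark \ref{bounded_convex}. Starting from the defining identity $J_t(w)-tG(J_t(w))=w$, I differentiate in $w$ to get $1=J_t'(w)(1-tG'(J_t(w)))$ and in $t$ to get $\partial_t J_t(w)(1-tG'(J_t(w)))=G(J_t(w))$; eliminating the common factor $1-tG'(J_t(w))$ (which is nonzero on $J_t(D)$ since $z\mapsto z-tG(z)$ must be injective there) produces the PDE \eqref{m}, and the implicit function theorem simultaneously justifies joint smoothness of $(t,w)\mapsto J_t(w)$. Because $G\circ J_t$ is an infinitesimal generator on $D$ by Lemma \ref{rm_1}, the field $M(z,t):=G(J_t(z))$ is a Herglotz vector field on $D$, so Theorem \ref{89} (after conjugation to $\D$ via a Riemann map) promotes $(J_t)$ to a decreasing Loewner chain of univalent maps. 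Finally, $z\in J_t(D)$ iff $z-tG(z)\in D$; since $D$ is bounded, the ray $t\mapsto z-tG(z)$ eventually escapes $D$ unless $G(z)=0$, which gives $\bigcap_{t\geq 0}J_t(D)=G^{-1}(0)$.

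For part (2), Theorem \ref{89} again applies: \eqref{m} has the form \eqref{EV_Loewner} with Herglotz vector field $(t,z)\mapsto G(J_t(z))$, so each $J_t$ is univalent. Setting $g_t:=J_t^{-1}$ on $J_t(D)$ and differentiating $g_t(J_t(w))=w$ in $t$ yields
\[(\partial_t g_t)(J_t(w))+g_t'(J_t(w))\,\partial_t J_t(w)=0,\]
and substituting \eqref{m} together with $g_t'(J_t(w))J_t'(w)=1$ gives $\partial_t g_t(z)=-G(z)$ on $J_t(D)$. Integrating from $0$ to $t$ with $g_0=\mathrm{id}$ produces $g_t(z)=z-tG(z)$, so $J_t$ is the inverse of $z\mapsto z-tG(z)$, i.e.\ the nonlinear resolvent of $G$ at time $t$. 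For part (3), the same two moves go through in the time-dependent setting. By construction $J_t(w)-H_t(J_t(w))=w$; differentiating in $t$ (using $\partial_t H_t(z)=G(t,z)$) and in $w$, then eliminating $1-H_t'(J_t(w))$ exactly as in part (1), yields \eqref{m2}. Uniqueness among locally absolutely continuous solutions mimics (2): any other solution $\tilde{J}_t$ is univalent by Theorem \ref{89}, its inverse $g_t=\tilde{J}_t^{-1}$ satisfies $\partial_t g_t(z)=-G(t,z)$ for a.e.\ $t$, and integration yields $g_t(z)=z-H_t(z)$, so $\tilde{J}_t=J_t$.

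The main obstacle I anticipate is the verification in parts (2) and (3) that $(t,z)\mapsto G(J_t(z))$ (respectively $G(t,J_t(z))$) satisfies all of the Herglotz vector field axioms on compact subsets of $D$ — in particular $t$-measurability and the local $L^d$ bound — so that Theorem \ref{89} can legitimately be invoked to obtain univalence. For bounded convex $D$ this follows from Lemma \ref{rm_1} combined with continuity of $t\mapsto J_t$, but for unbounded convex $D$ it requires an additional compactness argument, conveniently carried out after pulling $G$ back to $\D$ via the Berkson–Porta representation.
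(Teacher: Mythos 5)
Your part (1) is correct and matches the paper. Parts (2) and (3), however, both hinge on an appeal to Theorem~\ref{89} to conclude that every $J_t$ (resp.\ any competing solution $\tilde J_t$) is univalent, and that appeal is circular. To invoke Theorem~\ref{89} you must know that $(t,z)\mapsto G(J_t(z))$ is a Herglotz vector field — in particular that $G\circ J_t$ is an infinitesimal generator on $D$ for a.e.\ $t$. You flag the measurability and $L^d$ bounds as the obstacle, but the deeper problem is axiom (iv): for a generic holomorphic self-map $f\colon D\to D$ the composition $G\circ f$ need not be a generator (the Berkson--Porta factor $(\tau-z)(1-\overline{\tau}z)$ must sit in the outer variable, not in $f(z)$). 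The only fact in the paper that produces the generator property of $G\circ J_t$ is Lemma~\ref{rm_1}, and its proof is precisely the resolvent identity $G(J_t(w))=(J_t(w)-w)/t$ — i.e.\ it assumes what parts (2) and (3) are trying to establish. So your proposed repair ("Lemma~\ref{rm_1} combined with continuity of $t\mapsto J_t$") does not work, and neither does a Berkson--Porta pullback for unbounded $D$. The paper even remarks, at the start of its proof of (2), that at that stage one does not know $(t,z)\mapsto G(J_t(z))$ to be a Herglotz field.

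The paper's route around this is worth internalizing because it uses no univalence at all. For (2), it introduces the characteristic ODE $\partial_t\varphi_t(z)=-G(J_t(\varphi_t(z)))$, $\varphi_0(z)=z$, which has a local solution by the usual existence theory; computing $\frac{d}{dt}[J_t(\varphi_t(z))]$ shows it vanishes, so $J_t(\varphi_t(z))=z$, whence the ODE collapses to $\partial_t\varphi_t(z)=-G(z)$ and $\varphi_t(z)=z-tG(z)$ — a closed form that then extends globally in $t$ and $z$. From $J_t\circ\varphi_t=\mathrm{id}$ on the set where $\varphi_t$ stays in $D$ (convexity makes this set shrink monotonically), one applies $\varphi_t$ and the identity theorem to get $\varphi_t\circ J_t=\mathrm{id}$ on all of $D$; this identifies $J_t$ as the resolvent with no univalence ever invoked. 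For the uniqueness in (3), the paper again dispenses with global univalence: since $f_0=\mathrm{id}$ and $t\mapsto f_t$ is locally absolutely continuous (hence continuous), $f_t$ is \emph{locally} injective near any $z_0$ for $t\in[0,\eps)$ by continuity of the derivative; inverting on a small disc, integrating $\partial_t g_t(w)=-G(t,w)$, and applying the identity theorem shows $f_t=J_t$ for $t\in[0,\eps]$, and an open–closed argument in $t$ finishes. Your differentiation and elimination computations in all three parts are exactly the paper's, so the mechanics of your proposal are sound; what is missing is the realization that the argument must be organized so that univalence is a \emph{conclusion}, never a premise.
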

\begin{proof}${}$
\begin{itemize}
\item[(1)] 
Let $J_t$ be the resolvents of $G$. From $w = J_t(w) - tG(J_t(w))$ we get by differentiation 
\[J_t'(w) - t G'(J_t(w)) \cdot J_t'(w)\equiv 1,\] i.e.\ $J_t'(w) = (1-t G'(J_t(w)))^{-1}$.
Differentiation with respect to $t$ yields \[0 = (1-t G'(J_t(w)))\frac{\partial}{\partial t}J_t(w) - G(J_t(w)),\] which gives us
\[\frac{\partial}{\partial t}J_t(w)=J_t'(w)\cdot G(J_t(w)).\]

Put $\varphi_t(z)=z-tG(z)$. As ($\varphi_t \circ J_t)(z)=z$, each $J_t$ is clearly a univalent function. Moreover, the continuity of 
$t\mapsto \varphi_t$ implies that also $t\mapsto J_t$ is continuous. \\
Now let $z\in D$. The statement $z\in J_t(D)$ is equivalent to $\varphi_t(z)\in D$. Hence, as $D$ is convex, we see that 
$z\in J_t(D)$ implies $z\in J_s(D)$ for all $s\in[0,t]$. We conclude that $J_t(D)\subset J_s(D)$ whenever $s\leq t$ and that $(J_t)_{t\geq0}$ is a decreasing Loewner chain. Due to Lemma \ref{rm_1}, we see that $(t,z)\mapsto G(J_t(z))$ is a Herglotz vector field and thus equation \eqref{m} is a Loewner partial differential equation of the form \eqref{EV_Loewner}.
[We can also argue as follows: $J_t$ solves \eqref{m} and $G\circ J_t$ is a Herglotz vector field. Hence, Theorem \ref{89} implies that $(J_t)_{t\geq 0}$ is a decreasing Loewner chain.]\\

Let $z\in D$. If $G(z)=0$, then $J_t(z)=z$ for all $t\geq 0$. If $G(z)\not = 0$, then there exists $T>0$ such that $z-t\cdot G(z)\not\in D$ for all 
$t\geq T$ as $D$ is bounded. Hence, $z\not\in J_t(D)$ for all $t\geq T$ and we conclude $\bigcap_{t\geq 0} J_t(D) = G^{-1}(0)$.\\

\item[(2)]
Now let $J_t:D\to D$ be a family of holomorphic functions satisfying \eqref{m}, where $D$ is a convex domain. (Note that now, we do not know yet whether $(t,z)\mapsto G(J_t(z))$ is a Herglotz vector field.)
Consider the differential equation 
\[\frac{\partial}{\partial t}\varphi_t(z) = -G(J_t(\varphi_t(z))), \quad \varphi_0(z)=z.\]

Fix $z\in D$. We can solve this equation at least for $t$ small enough.
A small computation shows that $\frac{d}{dt}[J_t(\varphi_t(z))]=0$, i.e.\ $J_t(\varphi_t(z))$ does not depend on $t$ and $J_t(\varphi_t(z))=J_0(\varphi_0(z))=z$. 
Hence $\frac{\partial}{\partial t}\varphi_t(z) = -G(z)$. We conclude that
$t\mapsto \varphi_t(z)$ simply describes a straight line: \[\varphi_t(z)=z-tG(z).\]

In particular, we can now define $\varphi_t(z)$ for all $z\in D$ and all $t\geq0$ by $\varphi_t(z)=z-tG(z)$.\\
Let $D_t=\{z\in D\,|\, \varphi_t(z)\in D\}$. The convexity of $D$ implies that $D_t\subset D_s$  whenever $s\leq t$. Thus, for all $z\in D_t$, we have 
$J_t(\varphi_t(z))=z$. Applying $\varphi_t$ gives $\varphi_t(J_t(w))=w$ for all $w\in \varphi_t(D_t)$. As the left side extends holomorphically to $D$, we see that $\varphi_t(J_t(w))=w$ for all $w\in D$. Applying $J_t$ gives $J_t(\varphi_t(z))=z$ for all $z\in J_t(D)$. 
We conclude that $J_t$ is the nonlinear resolvent of $G$.

\item[(3)]By definition,  $J_t$ is the inverse of $\varphi_t(z)=z\mapsto z- \int_0^t G(s,z) ds$. Clearly, $t \mapsto \varphi_t(z)$ is locally absolutely continuous for any $z\in D$.\\ 
Furthermore, as $D$ is bounded, the set $\{J_\tau\,|\, \tau\geq 0\}$ is a normal family and thus $\{J'_\tau\,|\, \tau\geq 0\}$ is locally uniformly bounded.\\

Let $s,t\geq 0$ and $z\in D$. Put $w=J_t(z)$. If $s$ is close enough to $t$, then $z'(s)=\varphi_s(w)\in D$. 
We have 
\[ J_t(z) - J_s(z)  = J_t(z) - J_s(z'(s)) +  J_s(z'(s)) - J_s(z)= J_s(z'(s)) - J_s(z).\]
From $z'(s)-z = \varphi_s(w) -\varphi_t(w)$, we see that $\tau \mapsto J_\tau(z)$ is locally absolutely continuous for any $z\in D$.\\ 

$J_t$ solves the equation $w = J_t(w) - \int_0^t G(s,J_t(w)) ds$.  We get by differentiation 
\[J_t'(w) - \int_0^t G'(s,J_t(w)) \cdot J_t'(w) ds = 1,\] i.e.\ $J_t'(w) = (1-\int_0^t G'(s,J_t(w))ds)^{-1}$.
Differentiation with respect to $t$ yields \[0 = (1-\int_0^t G'(s,J_t(w))ds)\frac{\partial}{\partial t}J_t(w) - G(t,J_t(w)),\] which gives us
\[\frac{\partial}{\partial t}J_t(w)=J_t'(w)\cdot G(t, J_t(w)).\]

Next we show uniqueness of the solution. Let $(f_t)_{t\geq 0}$ be another solution of holomorphic mappings $f_t:D\to D$, locally absolutely continuous in $t$, satisfying \eqref{m2}.\\
Choose some $z_0\in D$. Then we find some open disc $B\subset D$ with center $z_0$ and some $\eps>0$ such that 
$f_t$ is injective on $B$ for all $t\in[0,\eps)$. The inverse functions $g_t$ satisfy
\[ \frac{\partial}{\partial t}g_t(w) = -G(t,w)\]
for a.e.\ $t\in[0,\eps]$ and all $w\in \cap_{t\in[0,\eps]} f_t(B)$ (which is non-empty for $\eps$ small enough). This implies $g_t(w) = w-\int_0^t G(s,w) ds$, which shows $f_t=J_t$ on $B$ for all 
$t\in[0,\eps]$. The identity theorem implies $f_t=J_t$ on $D$ for all $t\in[0,\eps]$. \\In this way, we also see that the set of all 
$t\geq 0$ with $f_t=J_t$ is open (in $[0,\infty)$). At the same time, it is also closed, and thus equal to $[0,\infty)$.
\end{itemize}
\end{proof}

\section{Free hemigroups and Loewner chains}

\begin{theorem}\label{add_gen_nonaut} Let $G(t,z)$ be a Herglotz vector field on $\Ha$ such that, for a.e.\ $t\geq0$, 
$z\mapsto G(t,z)$ maps $\Ha$ into $\Ha\cup \R$ and $\lim_{y\to\infty} G(t,iy)/y=0$. 
Then there exists a unique solution $(f_t)_{t\geq0}$, locally absolutely continuous in $t$, to
\begin{equation}\label{op0} \frac{\partial}{\partial t}f_t(z)=f_t'(z)\cdot G(t,f_t(z))\quad \text{for a.e.\ $t\geq 0$}, \quad f_0(z)=z\in \Ha.
\end{equation}
The solution can also be written as $(f_t)_{t\geq0}=(F_{\mu_t})_{t\geq0}$ for a family of $\boxplus$-infinitely divisible probability measures on $\R$ and 
$(f_t)_{t\geq0}$ is a decreasing Loewner chain. \\
If $G(t,z)$ does not depend on $t$, then $G(t,z)=-R_{\mu_1}(1/z)$ and the family $(f_t)_{t\geq0}$ are the resolvents of $-R_{\mu_1}(1/z)$ in this case.
\end{theorem}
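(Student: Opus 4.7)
The plan is to construct the solution by first writing down what must be its holomorphic inverse. Define
\[
H_t(w) := \int_0^t G(s,w)\,ds, \qquad \psi_t(w):=w-H_t(w),\qquad w\in\Ha.
\]
The Herglotz integrability of $G$ makes $H_t$ holomorphic on $\Ha$, and since $\Im G(s,w)\geq 0$ for a.e.\ $s$, $H_t$ maps $\Ha$ into $\Ha\cup\R$. Using the hypothesis $G(s,iy)/y\to 0$ together with the Herglotz bound $|G(s,i)|\in L^1_{\rm loc}$ and dominated convergence gives $H_t(iy)/y\to 0$, so by Theorem \ref{thm_nevanlinna} and Remark \ref{sunny00} the Nevanlinna triple of $H_t$ has the form $(a_t,0,\gamma_t)$. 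Writing this out,
\[
-H_t(z) = -a_t + \int_{\R}\frac{1+xz}{z-x}\,\gamma_t(dx),
\]
which is exactly the form in condition (4) of Theorem \ref{thmBV93}.

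Consequently there is a unique $\boxplus$-infinitely divisible probability measure $\mu_t$ with $R_{\mu_t}(1/w)=-H_t(w)$, and I set $f_t:=F_{\mu_t}$. Using $G_{\mu_t}^{-1}(\zeta)=R_{\mu_t}(\zeta)+1/\zeta$ one computes $F_{\mu_t}^{-1}(w)=R_{\mu_t}(1/w)+w=-H_t(w)+w=\psi_t(w)$, so $\psi_t$ is the holomorphic extension of $f_t^{-1}$ to all of $\Ha$ and $\psi_t\circ f_t=\mathrm{id}$ on $\Ha$. Implicit differentiation of this identity in $t$ yields
\[
-G(t,f_t(z))+\psi_t'(f_t(z))\,\partial_t f_t(z)=0,\qquad \psi_t'(f_t(z))\cdot f_t'(z)=1,
\]
so $\partial_t f_t(z)=f_t'(z)\cdot G(t,f_t(z))$, and local absolute continuity in $t$ is inherited from $H_t$.

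The Loewner-chain property is then automatic from Theorem \ref{89}: setting $M(t,z):=G(t,f_t(z))$, the PDE is the standard Loewner equation $\partial_t f_t=f_t'\cdot M$, and $M$ is a Herglotz vector field of the same order as $G$ (for compact $K\subset\Ha$, the continuous image $\{f_t(z):z\in K,\,t\in[0,T]\}$ lies in a compact subset of $\Ha$, where the Herglotz bound for $G$ applies). Uniqueness is proved exactly as in Theorem \ref{result1}(3): any other locally absolutely continuous solution $\tilde f_t$ is locally univalent near each $z_0$ for small $t$ by a Picard--Lindel\"of argument in the spirit of Corollary \ref{fdsaKOERTZU}, and its local inverse $\tilde\psi_t$ satisfies $\partial_t\tilde\psi_t(w)=-G(t,w)$ with $\tilde\psi_0(w)=w$, forcing $\tilde\psi_t=\psi_t$ and hence $\tilde f_t=f_t$ first on an open set and then on all of $\Ha$ by the identity theorem.

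In the autonomous case, $G(t,\cdot)\equiv G$ yields $H_t=tG$, so $R_{\mu_t}(1/w)=-tG(w)$; specialising to $t=1$ gives $G(w)=-R_{\mu_1}(1/w)$, and $\psi_t(w)=w-tG(w)$ is precisely the defining equation of the nonlinear resolvent of $G$, so $f_t=\psi_t^{-1}=J_t(\cdot,G)$. The main obstacle I expect is the careful handling of the three-way interplay between $f_t:\Ha\to\Ha$, its extended inverse $\psi_t:\Ha\to\C$ (whose image need not lie in $\Ha$), and the measure $\mu_t$: identifying the analytic map $-H_t$ with the $R$-transform of a bona fide probability measure is exactly what the deep part of Theorem \ref{thmBV93} provides, and verifying that $\psi_t\circ f_t=\mathrm{id}$ on all of $\Ha$ (not merely on a neighbourhood of $\infty$) is what allows the implicit-differentiation step and the subsequent comparison with Theorem \ref{89}.
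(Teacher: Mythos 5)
Your proposal follows essentially the same route as the paper's proof: integrate $G$ in time to obtain $H_t$, recognise $-H_t$ as the Voiculescu transform of a $\boxplus$-infinitely divisible $\mu_t$ via Theorem \ref{thmBV93}, set $f_t=F_{\mu_t}$, differentiate the globally-extended inverse relation $\psi_t\circ f_t=\mathrm{id}$ to recover the PDE, invoke the Herglotz-vector-field machinery (the paper via Exercise \ref{muhaha} and Theorem \ref{is_addi}, you via Theorem \ref{89} with a direct check of the $L^d$ bound on $G(t,f_t(\cdot))$) to conclude the Loewner-chain property, and transfer the uniqueness argument from Theorem \ref{result1}(3). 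The sign conventions are consistent ($R_{\mu_t}(1/z)=-\int_0^t G(s,z)\,ds$ in both), the autonomous specialisation is identical, and the dominated-convergence step you use to show $b_t=0$ in the Nevanlinna triple of $H_t$ is a spelled-out version of the paper's unstated claim that $H_t$ is of the form \eqref{form0}.
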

\begin{proof}Consider $H_t(z):=\int_0^t -G(s,z) ds$, $t\geq 0$. This function is also a function of the form \eqref{form0} 
and Theorem \ref{thmBV93} implies that we find $\boxplus$-infinitely divisible probability measures
 $(\mu_t)_{t\geq 0}$ such that $\varphi_{t}:=R_{\mu_t}(1/z) = H_t(z)$. 
Put $f_t=F_{\mu_t}$. Then $f_t(\varphi_{_t}(z)+z)=z$ for all $z\in\Ha$. Differentiation yields
 \[0=\frac{\partial}{\partial t}f_t(\varphi_{_t}(z)+z) + 
f'_t(\varphi_{t}(z)+z)\frac{\partial}{\partial t}\varphi_{t}(z)=\frac{\partial}{\partial t}f_t(\varphi_{t}(z)+z) - 
f'_t(\varphi_{t}(z)+z)G(t,z).\] 
Put $w=z+\varphi_{t}(z)=f_t^{-1}(z)$. Then 
\[\frac{\partial}{\partial t}f_t(w) = f'_t(w) G(t,f_t(w))\]
for all $w$ in the image domain $f_t^{-1}(\Ha)$, and thus, in particular, for all $w\in \Ha$. Clearly, $f_0$ is the identity as 
$\varphi_{\mu_0}=0$. Due to Exercise \ref{muhaha}, $G(t,f_t(z))$ is a Herglotz vector field and Theorem \ref{is_addi} implies that 
$(f_t)$ is a decreasing Loewner chain. 
Uniqueness of the solution is shown as in the proof of Theorem \ref{result1} (3).\\

If $G(t,z)$ does not depend on $t$, then $G(t,z)=-R_{\nu}(1/z)$ for some probability measure $\nu$ and 
$\varphi_{t} = H_t(z) = t R_{\nu}(1/z)$. This shows that $\nu=\mu_1$ and that $(\mu_t)_{t\geq 0}$ is a free semigroup due to 
Theorem \ref{thmBV93}. Theorem \ref{result1} (2) now implies 
that the functions $(J_t)_{t\geq0}$ are the resolvents of the generator $-R_{\mu_1}(1/z)$.
\end{proof}

\begin{corollary}
Let $(\mu_{s,t})_{0\leq s\leq t}$ be a $\boxplus$-hemigroup and let $F_{t}=F_{\mu_{0,t}}$.
\begin{itemize}
	\item[(1)] Then $(F_t)_{t\geq0}$ is an additive Loewner chain on $\Ha$.
	\item[(2)] If $(\mu_{t})_{t\geq 0}$ is a $\boxplus$-semigroup, then $(F_t)_{t\geq0}$ are the nonlinear resolvents of $-R_{\mu_1}(1/z)$.
\end{itemize}
\end{corollary}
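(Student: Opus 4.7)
My approach is to reduce (2) to unwinding definitions and to prove (1) by producing natural transition mappings that turn $(F_t)$ into an additive Loewner chain. First, for (2): if $(\mu_t)_{t\geq 0}$ is a $\boxplus$-semigroup then $R_{\mu_t}(z)=t\,R_{\mu_1}(z)$, so $F_{\mu_t}^{-1}(w)=w+t\,R_{\mu_1}(1/w)$. Substituting $z:=F_{\mu_t}^{-1}(w)$, this becomes $z=w-t\,G(w)$ with $G(w):=-R_{\mu_1}(1/w)$. By Theorem \ref{thmBV93}, $G:\Ha\to\Ha\cup\R$, hence $G$ is an infinitesimal generator on $\Ha$ (Exercise \ref{muhaha}), and $w=F_{\mu_t}(z)$ is by definition the nonlinear resolvent $J_t(z,G)$.

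For (1), I plan to invoke Theorem \ref{measures_incre_free} to see that every $\mu_{s,t}$ in a $\boxplus$-hemigroup is $\boxplus$-infinitely divisible, so Theorem \ref{thmBV93} extends $F_{\mu_{s,t}}^{-1}(w)=w+R_{\mu_{s,t}}(1/w)$ to a holomorphic map on all of $\Ha$; combined with the identity theorem near $\infty$, this forces each $F_{\mu_{s,t}}$ (in particular each $F_t$) to be univalent on $\Ha$. I will then take $f_{s,t}:=F_s^{-1}\circ F_t$ as transition maps. The semigroup properties $f_{s,s}=\mathrm{id}$ and $f_{s,u}=f_{s,t}\circ f_{t,u}$ are then tautological, and continuity of $(s,t)\mapsto f_{s,t}$ with respect to locally uniform convergence follows from weak continuity of the hemigroup via Lemma \ref{lemmaconvergence}. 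For the additivity normalization, writing $f_{s,t}(iy)/(iy)=(F_s^{-1}(F_t(iy))/F_t(iy))\cdot(F_t(iy)/(iy))$, the second factor tends to $1$ because $F_t$ is an $F$-transform, and the first tends to $1$ because $F_s^{-1}(w)-w=R_{\mu_{0,s}}(1/w)$ has a finite limit (the mean of $\mu_{0,s}$) as $|w|\to\infty$.

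The hard part will be checking that $f_{s,t}$ actually takes values in $\Ha$, i.e., that $F_t(\Ha)\subset F_s(\Ha)$ so that the genuine inverse $F_s^{-1}$ can be applied. Here the subordination built into free convolution saves the day: the hemigroup identity $\mu_{0,t}=\mu_{0,s}\boxplus\mu_{s,t}$ translates via $R$-transform additivity into $F_t^{-1}(w)=F_s^{-1}(w)+R_{\mu_{s,t}}(1/w)$ as extended holomorphic functions on $\Ha$. Given $z\in\Ha$ and $w:=F_t(z)$ the left side equals $z$, so $F_s^{-1}(w)=z-R_{\mu_{s,t}}(1/w)$; since $-R_{\mu_{s,t}}(1/w)\in\Ha\cup\R$ by Theorem \ref{thmBV93}, this extended $F_s^{-1}(w)$ lies in $\Ha$. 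Finally, $F_s\circ F_s^{-1}$ equals the identity in a neighborhood of $\infty$ where $F_s^{-1}$ is the genuine inverse, so by the identity theorem applied on the connected open set $F_t(\Ha)$ (which meets any such neighborhood), $F_s\circ F_s^{-1}=\mathrm{id}$ throughout $F_t(\Ha)$; this yields $F_t(\Ha)\subset F_s(\Ha)$ and closes the construction.
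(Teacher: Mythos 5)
Your argument is correct, and for part (1) it takes a genuinely different route from the paper's. The paper establishes univalence of each $F_t$ by invoking Theorem~\ref{add_gen_nonaut} (which runs through the differential equation \eqref{op0}), and then proves the crucial subordination $F_t = F_s\circ G$ by constructing an auxiliary Loewner chain $(F_{\gamma_\tau})$ obtained from concatenating two $\boxplus$-semigroups (one ending at $\mu_{0,s}$, the other starting with $\mu_{s,t}$). Your proof bypasses both: you use Theorem~\ref{thmBV93} directly to extend $F_{\mu_{0,s}}^{-1}(w)=w+R_{\mu_{0,s}}(1/w)=:\phi_s(w)$ holomorphically to $\Ha$, obtain univalence from the left-inverse identity $\phi_s\circ F_s=\operatorname{id}$ (via the identity theorem), and deduce $F_t(\Ha)\subset F_s(\Ha)$ from the pointwise half-plane inclusion $\phi_s(F_t(z))=z-R_{\mu_{s,t}}(1/F_t(z))\in\Ha$, which is immediate from the additivity $R_{\mu_{0,t}}=R_{\mu_{0,s}}+R_{\mu_{s,t}}$ and the sign condition in Theorem~\ref{thmBV93}. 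This is cleaner and more elementary: it trades the construction of an auxiliary hemigroup and the general theory of Theorem~\ref{add_gen_nonaut} for a short complex-analytic computation. For part (2) your unwinding of definitions is also sound and essentially the same as what Theorem~\ref{add_gen_nonaut} (last paragraph) and Theorem~\ref{result1}(2) provide, though to match the \emph{definition} of a nonlinear resolvent you should add half a sentence establishing uniqueness of the solution $w$ to $z = w - tG(w)$: this follows because $\phi_t\circ F_{\mu_t}=\operatorname{id}$ on $\Ha$ makes $\phi_t$ injective. Two other small points worth spelling out: (i) the joint continuity of $(s,t)\mapsto f_{s,t}=\phi_s\circ F_t$ needs one line beyond Lemma~\ref{lemmaconvergence}, namely that the normalized inverses $\phi_s$ also vary continuously in $s$, which follows from weak continuity of $\mu_{0,s}$, normality, and the identity theorem near $\infty$; and (ii) for the normalization $f_{s,t}(iy)/(iy)\to 1$ it is cleaner to argue via Exercise~\ref{muhaha2} that $F_t(iy)$ stays in a Stolz cone at $\infty$ while $\phi_s(w)/w\to 1$ non-tangentially, rather than appealing to a finite mean of $\mu_{0,s}$, which the hemigroup hypothesis does not guarantee.
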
 
\begin{proof}
Each $\mu_{0,t}$ is $\boxplus$-infinitely divisible due to Theorem \ref{measures_incre_free} and Theorem \ref{add_gen_nonaut} implies that $F_t$ is univalent. Clearly, $t\mapsto F_t$ is continuous. It only remains to show that we can write $F_t = F_s \circ G$ for some holomorphic $G:\Ha\to \Ha$. 
 (Then $F_{s,t}:=F_s^{-1}\circ F_t$ defines the transition mappings of the Loewner chain.)\\

Fix $0\leq s\leq t$. Embed $\mu_{0,s}$ into a $\boxplus$-semigroup $(\alpha_t)_{t\geq0}$ with $\alpha_1=\mu_{0,s}$ and $\mu_{s,t}$ into a $\boxplus$-semigroup $(\beta_t)_{t\geq0}$ with $\beta_1=\mu_{s,t}$. Define $R_t = R_{\alpha_t}$ for $0\leq t\leq 1$ and 
$R_t = R_{\alpha_1}+R_{\beta_{t-1}}$ for $t>1$. Then we obtain probability measures $(\gamma_t)_{t\geq 0}$ defined via $R_{\gamma_t}=R_t$ and it is easy to verify that $F_{\gamma_t}$ satisfies \eqref{op0} for all $t\in [0,\infty)\setminus\{1\}$. Thus, $F_{\gamma_t}$ is a decreasing Loewner chain. We have $F_{\gamma_1}=F_{\mu_{0,s}}=F_s$ and 
$F_{\gamma_2}=F_{\mu_{0,s} \boxplus \mu_{s,t}}=F_{\mu_{0,t}} = F_t$.
Thus $F_t = F_s \circ G$ for some holomorphic $G:\Ha\to \Ha$.
\end{proof}

In particular, if $\mu$ is a $\boxplus$-hemigroup distribution, then $F_\mu$ is univalent and $\mu$ is also a $\rhd$-hemigroup distribution.

\begin{remark}
We see that for every $\boxplus$-hemigroup $(\mu_{s,t})_{0\leq s \leq t}$ there  exists a $\rhd$-hemigroup $(\nu_{s,t})_{0\leq s \leq t}$ with $\mu_{0,t}=\nu_{0,t}$ for all $t\geq 0$.\\
The special case $\mu_{s,t}=\mu_{0,t-s}$ corresponds to a free $\boxplus$-semigroup and to free L\'{e}vy processes.\\
If we require instead $\nu_{s,t}=\nu_{0,t-s}$, we obtain another class of processes, called L\'{e}vy processes of the second kind in \cite{Bia98}.
\end{remark}

 \part[Applications]{Applications} 

\chapter{Models with random matrices}

Random matrices $M\in \C^{N\times N}$ are widely used in statistical models. For example, every sample covariance matrix can be seen as a random matrix. Via the expectation
\[ M \mapsto \frac1{N} \mathbb{E}[\tr(M)]\]
we can view random matrices as quantum random variables (as in Example \ref{444}).

\section{A toy example: quantum autoregression}

Modeling a time series $x_0, x_2,..., x_N \in \R$ is all about finding a function $f$ that explains 
the hidden structure of the values $x_n$ such that 
\[ x_n = f(\text{some data available at time $n-1$}) + \text{``noise''}.\]
The data available at time $n-1$ clearly contains all $x_0,...,x_{n-1}$, but there might be further data, maybe from some other time series. 
We would model $x_n$ as a random variable $X_n$ such that
\[ X_n = f(\text{some data available at time $n-1$}) + \eps_n,\]
 and the noise  $(\eps_n)$ is usually modeled as \emph{white noise}: $\eps_0,\eps_1,...$ are \textit{iid} random variables with mean $0$ and variance $\sigma^2<\infty$.\footnote{This noise is called ``white'' because the autocorrelation $\rho(\eps_n,\eps_{n+m})$, which is constant $0$ for $m\in\N$, can be translated into a constant spectral density, which reminds us of white light. This analogy has produced further notions such as red, pink, brown, ... noise.} We can use this model to forecast the time series, which generally means to determine the distribution of some $X_{N+\Delta t}$ in the future. Usually, this is done by giving a point prediction and an prediction interval. Here we also need the distribution of the $\eps_n$. It can be estimated from the residuals of the observations, or it can be checked whether we can put an  additional assumption into our model (e.g.\ that $(\eps_n)$ is Gaussian white noise, i.e.\ $\eps_n$ is $\mathcal{N}(0,\sigma^2)$-distributed).\\

A simple example of this type is an \emph{autoregressive model} of order $p$, $p\geq 1$, denoted by $AR(p)$, where $(X_n)$ is given by
initial values $X_0,...,X_{p-1}$ and \[ X_{n} = c_1 X_{n-1} + \ldots + c_p X_{n-p} + \eps_n, \]
for all $n\geq p$, where $(\eps_n)$ is a white noise with variance $\sigma^2$ and $c_1, \ldots, c_p\in\R$ and $\sigma^2$ are the parameters of the model, see \cite[Chapter 3]{SS11}.  (The process $(X_n)$ is usually assumed to be stationary in the following sense: 
 $\mathbb{E}[X_n]=0$,  $\E[X_n^2]<\infty$ for all $n\in\N$ and the autocovariance does not vary with respect to time, i.e.\ $\E[X_{n}X_{n+m}]$ only depends on $m$. The stationarity now puts a further constraint on the parameters: the roots of the polynomial $1-\sum_{k=1}^p c_k z^{k}$ must lie outside the closed unit disc $\overline{\D}$.)\\

\begin{example}\label{778}
If $X_n$ is an $AR(1)$ process with $X_0=1$, $\sigma^2=1$, then $X_n$ has mean $c_1^n$ and variance 
$1+c_1^2+...+c_1^{2(n-1)}$. We have $c_1^n\to 0$ and $1+c_1^2+...+c_1^{2(n-1)}\to \frac{1}{1-c_1^2}$ if $|c_1|<1$.\\
If $c=1$, we obtain a random walk that approximates a Brownian motion.
  \begin{figure}[H]
 \begin{center}
 \includegraphics[width=0.9\textwidth]{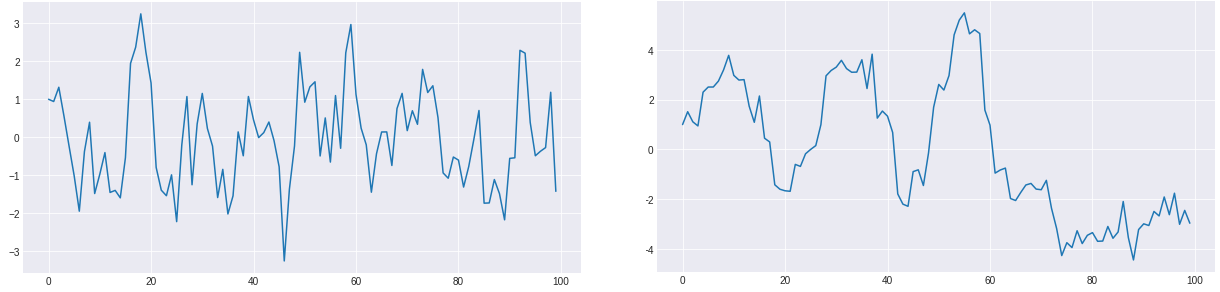}
 \caption{An $AR(1)$ process with $X_0=1$, $\sigma^2=1$, and $c_1=0.5$ (left), $c_1=0.95$ (right).}
 \end{center}
 \end{figure}\vspace{-5mm}\hfill $\blacksquare$
\end{example}

The parameters can be determined, e.g., by the Yule-Walker equations. For $m\in \Z$, let $\gamma_m = \E[X_nX_{n+m}]$. (We have $\gamma_{-m}=\gamma_m$.) Then

 \[ \displaystyle  {\begin{pmatrix}\gamma _{1}\\\gamma _{2}\\\gamma _{3}\\\vdots \\\gamma _{p}\\\end{pmatrix}}={\begin{pmatrix}\gamma _{0}&\gamma _{{-1}}&\gamma _{{-2}}&\dots \\\gamma _{1}&\gamma _{0}&\gamma _{{-1}}&\dots \\\gamma _{2}&\gamma _{{1}}&\gamma _{{0}}&\dots \\\vdots &\vdots &\vdots &\ddots \\\gamma _{{p-1}}&\gamma _{{p-2}}&\gamma _{{p-3}}&\dots \\\end{pmatrix}}{\begin{pmatrix}c_{{1}}\\c_{{2}}\\c_{{3}}\\\vdots \\c_{{p}}\\\end{pmatrix}}\]
In addition, \[ \displaystyle \gamma _{0}=\sum _{{k=1}}^{p}c_{k}\gamma _{{-k}}+\sigma^{2},\]
which can be solved for $\sigma^2$. (The order $p$ can be determined by looking at the partial autocorrelation function.)\\

We can now imitate the classical case to define quantum stationary processes and quantum white noise.

\begin{definition}Let $(B(H), \varphi)$ be a quantum probability space and fix an independence. A family
 $(X_n)_{n\in\N_0}$ of self-adjoint elements in $B(H)$ is called stationary if 
 $\varphi(X_n)=0$ for all $n\in\N_0$ and the autocovariance does not vary with respect to time, i.e.\ $\varphi(X_{n}X_{n+m}),$ $\varphi(X_{n+m}X_{n})$ only depend on $m$.\\
If $X_0, X_1,...$ are \textit{iid}, then $(X_n)_{n\in\N}$ is called a quantum white noise.
\end{definition}

For example, for matrices $X_0,X_1,... \in \C^{N\times N}$, we could consider a ``free quantum autoregressive model'' of order $p$, $p\geq 1$,
  given by the equation
\[ X_{n} = c_1 X_{n-1} + \ldots + c_p X_{n-p} + \eps_n, \]
for all $n\geq p$, where $(\eps)_n$ is a free quantum white noise with respect to $\varphi(X)=\frac1{N}\tr(X)$ with variance $\sigma^2$, which is also freely independent of $X_0,...,X_{p-1}$. The parameters of the model can be determined as in the classical case, namely by using the Yule-Walker equations which only depend on the autocovariance function of $(X_n)$.\\

Let $\mu_n$ be the distribution of $X_n$, $\mu_\eps$ the distribution of $\eps_n$, and denote by $\nu_n$ the distribution of 
$Y=c_1 X_{n-1} + \ldots + c_p X_{n-p}$. The free independence of $\eps_n$ and $Y$ gives us 
\[ \mu_{n} = \nu_n \boxplus \mu_\eps. \]

But is the free independence of matrices of any practical relevance? The answer is yes due to the existence of several strong theorems on the asymptotic behavior of random matrices as their size goes to $\infty$. 

\section{Asymptotically independent random matrices}

Let $N\in \N$ and consider classical independent random variables 
$\{X_{j,k}\}_{1\leq j\leq k\leq N} \cup$\\  $\{Y_{j,k}\}_{1\leq j<k\leq N}$ such that 
each $X_{j,k}$ and each $Y_{j,k}$ has a $\mathcal{N}(0,1/(2N))$ distribution. \\

Then $\mathbb{E}[X_{j,k}+iY_{j,k}]=0$ and $\mathbb{E}[|X_{j,k}+iY_{j,k}|^2]=\frac1{N}$. \\

Now consider the $N\times N$-random matrix $A_N=(A_{N,j,k})_{1\leq j,k\leq N}$ where 
\[\text{$A_{N,j,k} = X_{j,k} + iY_{j,k}$ for $j<k$, 
       $A_{N,j,k} = X_{k,j} - iY_{k,j}$ for $k<j$, and 
			$A_{N,j,j} = X_{j,j}$.}\] 
			Such a matrix is a self-adjoint Gaussian random matrix, also called a \emph{GUE} random matrix (Gaussian unitary ensemble). 
			For a random matrix $X\in \C^{N\times N}$, let $\varphi_N(X)=\frac1{N}\mathbb{E}[\tr(X)]$.
E.\ Wigner, who used random matrices for models in physics, found that the semicircle distribution describes the limit behavior of $A_N$.
	
\begin{theorem}	[\cite{Wig55, Wig58}]		
		\[\lim_{N\to\infty} \varphi_N(A_N^k) = \lim_{N\to\infty}\frac1{N} \mathbb{E}[\tr(A_N^k)] = \frac1{2\pi} \int_{-2}^2 x^k \sqrt{4-x^2} {\rm d}x.\]	
\end{theorem}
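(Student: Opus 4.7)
The plan is to use the method of moments: for each fixed $k \in \N_0$ we compute $\lim_{N\to\infty}\varphi_N(A_N^k)$ and match the answer against the $k$-th moment of the semicircle distribution $W(0,1)$, which by Exercise \ref{semi_exercie} equals $0$ for odd $k$ and the Catalan number $C_m = \frac{1}{m+1}\binom{2m}{m}$ for $k=2m$. So the proof reduces to showing that $\lim_{N\to\infty}\varphi_N(A_N^{2m}) = C_m$ and that odd moments vanish.

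First, I would expand
\[
\varphi_N(A_N^k) = \frac{1}{N}\sum_{i_1,\ldots,i_k=1}^{N} \E\bigl[A_{N,i_1 i_2}\,A_{N,i_2 i_3}\cdots A_{N,i_k i_1}\bigr].
\]
The entries of $A_N$ form a centered jointly Gaussian family, so Wick's formula (Isserlis' theorem) expresses each $k$-fold expectation as a sum over pair partitions $\pi$ of $\{1,\ldots,k\}$ of the product of two-point functions. A direct computation from the definition shows that the two-point function is $\E[A_{N,ij}A_{N,kl}] = \frac{1}{N}\delta_{il}\delta_{jk}$. In particular, if $k$ is odd there are no pair partitions, so the odd moments vanish, matching the semicircle law.

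For $k=2m$, each pairing $\pi$ contributes a factor $N^{-m}$ together with a constrained sum over index tuples. The standard bookkeeping is to associate to $\pi$ two permutations of $\{1,\ldots,2m\}$: the cyclic shift $\gamma=(1\,2\,\cdots\,2m)$ (from the trace) and the involution $\pi$ itself; the index constraints force the indices to be constant on the orbits of the composition $\gamma\pi$. Thus the contribution of $\pi$ is $\frac{1}{N}\cdot N^{-m}\cdot N^{\#\mathrm{orbits}(\gamma\pi)} = N^{\#\mathrm{orbits}(\gamma\pi) - m - 1}$. Gluing a $2m$-gon along the pairing $\pi$ produces an orientable surface whose number of vertices equals $\#\mathrm{orbits}(\gamma\pi)$, and Euler's formula gives $\#\mathrm{orbits}(\gamma\pi) = m + 1 - 2g(\pi)$ with $g(\pi)\in\N_0$. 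Hence the contribution is $N^{-2g(\pi)}$, and only the genus-zero pairings, i.e.\ the non-crossing pair partitions, survive as $N\to\infty$. Each such pairing contributes exactly $1$, and the number of non-crossing pair partitions of $\{1,\ldots,2m\}$ is $C_m$.

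Putting everything together yields $\lim_{N\to\infty}\varphi_N(A_N^{2m}) = C_m$, which matches the moments of $W(0,1)$, and the identity in the theorem follows. The main obstacle is the genus bookkeeping: one has to argue carefully that the exponent of $N$ in each term is $-2g(\pi)$ and that genus zero corresponds exactly to non-crossing pairings. This is a purely combinatorial lemma (sometimes called the genus expansion, due to 't Hooft and Harer-Zagier in the GUE setting), and it is the only non-routine input; the rest is Wick's formula together with an identification of Catalan numbers with even moments of the semicircle law.
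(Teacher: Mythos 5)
The paper gives no proof of this theorem; it simply cites Wigner's original papers. Your proposed argument is the standard modern proof of the GUE semicircle law via the moment method and genus expansion, and it is essentially correct, so there is no in-text proof to compare against; I evaluate the proposal on its own merits.

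Two small points deserve attention. First, your claimed covariance formula $\E[A_{N,ij}A_{N,kl}] = \frac{1}{N}\delta_{il}\delta_{jk}$ does not hold exactly for the paper's normalization: the diagonal entries $A_{N,jj}=X_{j,j}$ are built from a single $\mathcal{N}(0,1/(2N))$ Gaussian, so $\E[A_{N,jj}^2]=\tfrac{1}{2N}$ rather than $\tfrac{1}{N}$ (the usual GUE convention doubles the diagonal variance precisely to make the covariance formula uniform). This is harmless in the limit --- any term in the expansion of $\tfrac{1}{N}\E[\tr(A_N^k)]$ with a coincidence $i_j=i_{j+1}$ loses an extra factor of $N$ and vanishes as $N\to\infty$ --- but as written the formula is false on the diagonal, and the discrepancy should either be noted and absorbed into the error term or handled by treating the diagonal contributions separately. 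Second, you appeal to Exercise \ref{semi_exercie} for the identification of the $2m$-th moment of $W(0,1)$ with the Catalan number $C_m$, but that exercise only produces the Cauchy transform of $W(0,1)$; extracting the moments requires one further step (expand $G_{W(0,1)}(z)=\tfrac{z-\sqrt{z^2-4}}{2}$ at $\infty$, or substitute $x=2\sin\theta$ in $\tfrac{1}{2\pi}\int_{-2}^2 x^{2m}\sqrt{4-x^2}\,\mathrm{d}x$). Neither of these is a genuine gap, and the genus bookkeeping you outline --- Wick pairings as gluings of a $2m$-gon, Euler's formula giving $\#\mathrm{orbits}(\gamma\pi)-m-1=-2g(\pi)$, genus zero corresponding exactly to non-crossing pairings, and the count $C_m$ of the latter --- is precisely the heart of the argument. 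Once the two details above are filled in, the proof is complete.
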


In other words, if $N$ is large, we will expect that the number of eigenvalues of $A_N$ in some interval $[a,b]$ is equal to 
\[\frac{N}{2\pi} \int_{a}^b \sqrt{4-x^2}\textbf{1}_{[-2,2]}(x) {\rm d}x.\]

  \begin{figure}[H]
 \begin{center}
 \includegraphics[width=0.9\textwidth]{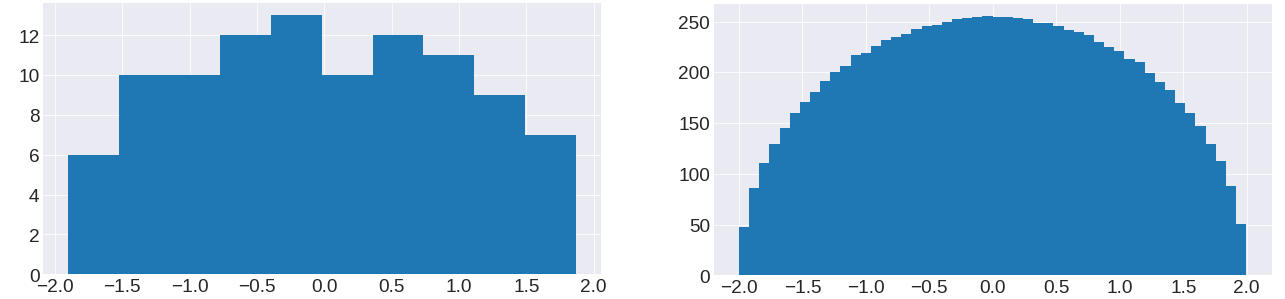}
 \caption{Histograms for the eigenvalues of simulated GUE random matrices for $N=100$ (left) and $N=10,000$ (right).}
 \end{center}
 \end{figure}

Voiculescu found that GUE random matrices are related to free independence.

\begin{theorem}[\cite{Voi91}] For every $N\in \N$,  let  $A_{N,1},...,A_{N,k}$ be independent GUE random matrices. 
Then $A_{N,1},...,A_{N,k}$ are asymptotically freely independent, i.e.\
\begin{equation}\label{injkoo}\varphi_N( (p_1(a_{N,i_1})-\varphi_N( p_1(a_{N,i_1})))\cdots  (p_m(a_{N,i_m})-\varphi_N( p_1(a_{N,i_m})))) \to 0 \end{equation}
as $N\to\infty$ for any polynomials $p_1,...,p_m$ and indices $i_1,...,i_m\in\{1,...,k\}$ with $i_j\not= i_{j+1}$.
\end{theorem}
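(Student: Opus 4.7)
The plan is to verify \eqref{injkoo} by computing the $N\to\infty$ limit of all joint mixed moments of $A_{N,1},\ldots,A_{N,k}$ and matching them with the joint moments of a free semicircular family $s_1,\ldots,s_k$ of standard semicircular elements. By expanding each $p_j$ and using multilinearity, the claim reduces to showing
\[
\lim_{N\to\infty}\varphi_N\bigl(A_{N,i_1}^{n_1}\cdots A_{N,i_m}^{n_m}\bigr)=\varphi\bigl(s_{i_1}^{n_1}\cdots s_{i_m}^{n_m}\bigr)
\]
for all $m\geq 1$, exponents $n_r\geq 1$, and indices $i_r\in\{1,\ldots,k\}$. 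Once this moment convergence is established, freeness of $s_1,\ldots,s_k$ together with Definition \ref{111}(2) yields \eqref{injkoo}.

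First I would write out $\tr(A_{N,i_1}^{n_1}\cdots A_{N,i_m}^{n_m})$ entrywise. Setting $n=n_1+\cdots+n_m$ and letting $\iota\colon\{1,\ldots,n\}\to\{1,\ldots,k\}$ be the coloring recording which matrix appears in the $r$-th factor,
\[
\varphi_N\bigl(A_{N,i_1}^{n_1}\cdots A_{N,i_m}^{n_m}\bigr)=\frac{1}{N}\sum_{j_1,\ldots,j_n=1}^{N}\E\Bigl[\prod_{r=1}^{n}A_{N,\iota(r)}(j_r,j_{r+1})\Bigr],
\]
with $j_{n+1}:=j_1$. The entries are jointly centered Gaussians, and the matrices with distinct indices are independent, so the Wick (Isserlis) formula expresses this expectation as a sum over pair partitions $\pi$ of $\{1,\ldots,n\}$ of products of pairwise covariances. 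Pairs $\{r,s\}$ with $\iota(r)\neq\iota(s)$ contribute zero, leaving only the $\iota$-monochromatic partitions; for such a pair the GUE covariance equals $\tfrac{1}{N}\delta_{j_r,j_{s+1}}\delta_{j_{r+1},j_s}$, up to a diagonal correction that is negligible as $N\to\infty$. Summing the Kronecker deltas over $(j_1,\ldots,j_n)$ yields
\[
\varphi_N\bigl(A_{N,i_1}^{n_1}\cdots A_{N,i_m}^{n_m}\bigr)=\sum_{\pi}N^{c(\pi)-1-n/2}+o(1),
\]
where the sum runs over $\iota$-monochromatic pair partitions of $\{1,\ldots,n\}$ and $c(\pi)$ is the number of cycles of a certain permutation built from $\pi$ and the cyclic shift $r\mapsto r+1\pmod n$.

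The decisive combinatorial input is that $c(\pi)-1-n/2=-2g(\pi)\leq 0$, where $g(\pi)\geq 0$ is the genus of the ribbon graph obtained by attaching a strip for each pair of $\pi$ to the boundary $n$-gon of the trace. Consequently $g(\pi)=0$ if and only if $\pi$ is non-crossing, so in the limit $N\to\infty$ only $\iota$-monochromatic non-crossing pair partitions survive, each contributing $1$. On the other hand, the free cumulants of the standard semicircle distribution $W(0,1)$ vanish except for $\kappa_2(s_i,s_i)=1$, and by the assumed freeness the mixed cumulants $\kappa_2(s_i,s_j)$ with $i\neq j$ vanish as well. The moment--cumulant formula therefore identifies $\varphi(s_{i_1}^{n_1}\cdots s_{i_m}^{n_m})$ with the number of $\iota$-monochromatic non-crossing pair partitions of $\{1,\ldots,n\}$, which matches the limit above.

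The main obstacle is the genus bound $c(\pi)\leq n/2+1$ with equality exactly on non-crossing $\pi$. The cleanest route is to associate to each $\pi$ a closed orientable surface whose vertices are the cycles of $\pi\cdot\sigma$ (with $\sigma$ the cyclic shift), whose edges correspond to the pairs of $\pi$, and whose faces arise from the trace boundary and the pair glueings; Euler's formula then converts the bound on $c(\pi)$ into non-negativity of the genus. Everything else is either algebraic bookkeeping or the classical equivalence between the vanishing of mixed free cumulants and the centered-alternating moment condition used to define freeness in Definition \ref{111}(2).
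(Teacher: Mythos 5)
The paper states this theorem without proof, citing \cite{Voi91} (and pointing to \cite{MS17} for more on asymptotic freeness of random matrices), so there is no in-text argument to compare against. Your proposal reconstructs the standard Wick-calculus / genus-expansion proof that those references give, and its structure is correct: reduce to convergence of mixed moments; expand $\varphi_N$ of a word in the $A_{N,i}$ via the Wick formula into a sum over pair partitions of the word's letters; kill non-monochromatic pairings by independence; observe that the covariance $\E[A(j,l)A(j',l')]=\tfrac1N\delta_{j,l'}\delta_{l,j'}$ leaves a contribution $N^{c(\pi)-1-n/2}$ from each surviving pairing $\pi$, where $c(\pi)$ is the number of cycles of the composition of $\pi$ with the cyclic shift; and use the Euler/genus bound $c(\pi)-1-n/2=-2g(\pi)\le 0$, with equality iff $\pi$ is non-crossing, to isolate the leading term and match it to the free moment--cumulant count for a semicircular family.

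Two spots that a full write-up must handle explicitly rather than by fiat. First, in the paper's normalization the diagonal entries $A_{N,j,j}$ have variance $1/(2N)$ rather than $1/N$, so the displayed covariance formula is only exact off the diagonal; you need to show that a Wick term in which the Kronecker constraints force a diagonal entry loses one free summation index and is therefore suppressed by an extra factor $N^{-1}$ before you discard the correction. Second, the genus inequality is the real combinatorial heart of the theorem: construct the fatgraph (one $n$-valent vertex from the trace, an edge for each pair of $\pi$, and a face for each index loop), verify that it defines a closed orientable surface, apply Euler's formula to obtain $c(\pi)=n/2+1-2g$, and then prove that $g=0$ precisely when $\pi$ is non-crossing. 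Leaving this as ``attaching a strip for each pair'' understates what must actually be checked.
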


Thus, for large $N$, the classically independent GUE matrices $A_{N,1},...,A_{N,k}$ can be treated as freely independent semicircle distributed random variables via the expectation $\frac1{N}\mathbb{E}[\tr(X)]$.\\

Let us take a look at a further result of this type. 

The set of all unitary $N\times N$-matrices $\mathcal{U}(N)=\{U\in \C^{N\times N}\,|\, UU^*=I\}$ forms a group and the normalized Haar measure is the unique Borel probability measure on $\mathcal{U}(N)$ which is invariant with respect to the group operation.
If $M\in \C^{N\times N}$ and $U$ is a Haar unitary $N\times N$-random matrix, then $UMU^*$ can be thought of as a random rotation of $M$. 

\begin{theorem}[See Section 4.3 in \cite{MS17}]\label{haar}Let $(A_N)_{N\in\N}$ and $(B_N)_{N\in\N}$ be sequences of (deterministic) $N\times N$-matrices such that $(A_N)_N$ and  $(B_N)_N$ converge in distribution with respect to $X\to \frac1{N}\tr(X)$ as $N\to \infty$. 
 Furthermore, let $(U_N)_{N\in\N}$ be a sequence of Haar unitary $N\times N$-random matrices. 
Then $A_N$ and $U_N B_N U^*_N$ are asymptotically freely independent in the sense of \ref{injkoo} as $N\to \infty$.
\end{theorem}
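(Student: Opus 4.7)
The plan is to verify the defining algebraic condition of free independence from Definition \ref{111} directly, by computing alternating moments via the Weingarten calculus for the Haar measure on $\mathcal{U}(N)$. The first observation is that $q(U_N B_N U_N^{*})=U_N\,q(B_N)\,U_N^{*}$, so $\varphi_N(q(U_N B_N U_N^{*}))=\varphi_N(q(B_N))$ and the centering can be absorbed into $B_N$ itself. Thus asymptotic freeness reduces to showing that for all polynomials $p_i,q_i$ and all $n\geq 1$,
$$\lim_{N\to\infty}\mathbb{E}\bigl[\varphi_N\bigl(\tilde A_1\,U_N\tilde B_1 U_N^{*}\,\tilde A_2\,U_N\tilde B_2 U_N^{*}\cdots\tilde A_n\,U_N\tilde B_n U_N^{*}\bigr)\bigr]=0,$$
where $\tilde A_i:=p_i(A_N)-\varphi_N(p_i(A_N))I$ and $\tilde B_i:=q_i(B_N)-\varphi_N(q_i(B_N))I$ satisfy $\varphi_N(\tilde A_i),\varphi_N(\tilde B_i)\to 0$ and have all normalized trace-moments of arbitrary products bounded uniformly in $N$ by the standing convergence assumption on $(A_N)$ and $(B_N)$.

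Writing the trace as a sum over matrix indices and exchanging sum with expectation, each term contains the unitary monomial $U_{i_1j_1}\cdots U_{i_nj_n}\bar U_{i'_1j'_1}\cdots \bar U_{i'_nj'_n}$, whose expectation is given by the Weingarten formula
$$\mathbb{E}[U_{i_1j_1}\cdots \bar U_{i'_nj'_n}]=\sum_{\sigma,\tau\in S_n}\mathrm{Wg}(\sigma\tau^{-1},N)\prod_{k=1}^{n}\delta_{i_k,i'_{\sigma(k)}}\,\delta_{j_k,j'_{\tau(k)}},$$
with the asymptotic behaviour $\mathrm{Wg}(\pi,N)=O(N^{-n-|\pi|})$, where $|\pi|$ denotes the minimal number of transpositions in $\pi$. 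Substituting this in, each pair $(\sigma,\tau)$ glues the $2n$ strands of matrix indices into a collection of closed loops; each loop contributes one normalized trace of a concatenated word in the $\tilde A_i$'s or in the $\tilde B_i$'s, and the full expression becomes a weighted sum over $(\sigma,\tau)$ of such trace products.

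A genus-type count (the Euler characteristic of a disc with $2n$ boundary points glued according to $(\sigma,\tau)$) shows that the only pairs contributing at order $N^{0}$ are those with $\sigma=\tau$ and $\sigma$ non-crossing with respect to the natural cyclic order on the $2n$ $U$-positions; all other pairs yield $O(N^{-2})$, which is absorbed by the uniform bound on trace-moments. For each surviving non-crossing $\sigma$ the alternation $\tilde A\tilde B\tilde A\tilde B\cdots$ around the cycle forces at least one block to isolate a single $\tilde A_i$ or $\tilde B_i$; the corresponding factor $\varphi_N(\tilde A_i)$ or $\varphi_N(\tilde B_i)$ tends to $0$ by construction, so the whole limit is $0$. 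The main obstacle will be precisely this combinatorial bookkeeping: verifying rigorously that every non-crossing pairing isolates a centered element, and controlling the $O(N^{-2})$ remainder uniformly against the factorial growth of the number of pairings. The complete argument, phrased in terms of non-crossing partitions and free cumulants, is carried out in \cite{MS17}.
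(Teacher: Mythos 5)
The paper does not present a proof of this theorem; it only cites Section 4.3 of Mingo--Speicher. Your sketch reproduces exactly the Weingarten-calculus strategy carried out in that reference: reduce to the alternating centered moment condition, expand the Haar expectation via the Weingarten formula, do the power-of-$N$ genus count to single out non-crossing contributions, and observe that the alternation forces each surviving non-crossing diagram to isolate a singleton factor $\varphi_N(\tilde A_i)$ or $\varphi_N(\tilde B_i)$, which vanishes by centering. So you are following the same route as the cited source, and the outline is correct.

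Two small inaccuracies are worth flagging. First, your stated worry about ``controlling the $O(N^{-2})$ remainder uniformly against the factorial growth of the number of pairings'' is not a real obstacle here: in the definition of asymptotic freeness the length $n$ of the alternating word is \emph{fixed}, so the sum over $(\sigma,\tau)\in S_n\times S_n$ has a fixed finite number $(n!)^2$ of terms, and the remainder is handled term-by-term; no uniform control in $n$ is needed. Second, the precise statement of which $(\sigma,\tau)$ survive at order $N^0$ is that both $\sigma$ and $\tau$ lie on the geodesic between the identity and the relevant $n$-cycle (equivalently, are non-crossing), with $\sigma\tau^{-1}=e$ giving the top-order Weingarten weight; asserting bluntly ``$\sigma=\tau$ and $\sigma$ non-crossing'' conflates the geodesic constraint on each permutation with the constraint on their product and should be derived, not assumed. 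Neither point invalidates the approach, but both would need to be made precise in a full write-up.
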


There are many further results on limit distributions of random matrices and on asymptotic free independence of random matrices, see the books \cite{Tao12}, \cite{MS17}. The other independences also appear in the asymptotic behavior of random matrices, see \cite{Len11, Len15}.\\

These results can be used in applications as follows: Assume we have $N\times N$-matrices $A_1,...,A_k$, and $N$ is large. If $A_1,...,A_k$ are (asymptotically) freely (or tensor, Boolean, ...) independent, and we can either calculate or model the eigenvalue distribution of these matrices, then we can calculate the eigenvalue distribution of sums and products of these matrices; without knowing the eigenvectors.

\begin{example}In \cite{ER06}, free probability theory is used to obtain a formula for the covariance matrix of a model for a random signal. Let $y\in \R^N$ be random vector modeled as 
\[y = Ax + w, \]
where $A \in \R^{N\times L}$, $x\in \R^L$ is a signal vector and $w\in R^N$ is a noise vector. If $x$ and $w$ are modeled as independent Gaussian vectors with identity covariance matrix, then the covariance matrix of $y$  is given by 
\[ R = \mathbb{E}[yy^T]  = AA^T + I.\]
Often,  the eigenvalues of $R$ are of interest in applications. However, $R$ is not known to us directly and we would rather estimate it by observing samples $y^1,...,y^n\in \R^N$ and taking, e.g., 
\[ \hat{R} = \frac1{N}\sum_{k=1}^N y_k y_k^T. \]
In our model $\hat{R}$ would converge to $R$ as $n\to\infty$ for fixed $N$. However, sometimes both $n,N\to\infty$, and then we cannot use the approximation $\hat{R}\approx R$ any longer. \\
Instead, under the assumption $N/n \to c>0$, one can calculate the eigenvalue distribution of $R$ by writing $R = \hat{R}^{1/2}(\frac1{N}GG^T) \hat{R}^{1/2}$, where $G$ is an $N\times n$-random matrix with independent entries, all $\mathcal{N}(0,1)$-distributed. \\
Under the assumption that $N/n \to c>0$, the distribution of $\frac1{N}GG^T$ converges to a Marchenko--Pastur distribution, and the matrix whose columns are the eigenvectors of $\frac1{N}GG^T$ is a Haar unitary $N\times N$-random matrix. (So $\frac1{N}GG^T\approx U_N B_N U^*_N$ as in Theorem \ref{haar}, where $B_N$ is a diagonal matrix whose eigenvalues approximate the limit distribution of $\frac1{N}GG^T$.)\\
Thus the eigenvalues of $R$ can now be calculated from the eigenvalues of $\hat{R}$ and the limit distribution of $\frac1{N}GG^T$. \hfill $\blacksquare$
\end{example}
 
\begin{example}
In \cite{PSG17}, the authors consider the product of (asymptotically freely independent) weight matrices in a neuronal net to obtain a more clever initialization of the weights before the training algorithm starts; see also \cite{LQ19}.  \hfill $\blacksquare$
\end{example}

Let us revisit our toy example. Define random matrices $(X_n)_{n\in\N_0}\subset \C^{N\times N}$, where $X_0,...,X_{p-1}$ are deterministic and
\[ X_{n} = c_1 X_{n-1} + \ldots + c_p X_{n-p} + \eps_n\] 
for $n\geq p$, where $(\eps_n)$ are classically independent GUE random matrices. If $N$ is large, we can consider $(\eps_n)$ as a free quantum noise and we obtain a model for a quantum $AR(p)$ process. 

\begin{example}
Consider an $AR(1)$ process in $\C^{N\times N}$ with 
\[X_0=I \quad \text{and} \quad X_n = 0.5 X_{n-1} + \eps_n,\] where  $(\eps_n)$ are independent GUE random matrices. 
We simulate this process for $N=500$ and $n=0,...,100=m$. With $\varphi(X)=\frac1{N}\tr(X)$, we obtain 
\[\varphi(X_{m-1})=-0.001...,\quad \varphi(\eps_m) = -0.001...,\quad \varphi(X_{m-1}\eps_mX_{m-1}\eps_m)=0.003...,\]
which is in accordance with free independence. For the variances we obtain
 $\varphi(\eps_m^2) = 0.997$ and  $\varphi(X_{m-1}^2) = 1.332...$ (Note that $\frac1{1-0.5^2}=\frac{4}{3}$, see Example \ref{778}.)\\
If we change our model to ``squared GUE noise''
\[X_0=I \quad \text{and} \quad X_n = 0.5 X_{n-1} + (\eps_n^2-\varphi(\eps_n^2)),\]
then we should still expect free independence. With $E_n=\eps_n^2-\varphi(\eps_n^2)$ we obtain 
\[\varphi(X_{m-1})=0.0000..., \quad \varphi(E_m)=0.0000...,\quad \varphi(X_{m-1}E_mX_{m-1}E_m) = 0.0001...\]
  \begin{figure}[H]
 \begin{center}
 \includegraphics[width=0.9\textwidth]{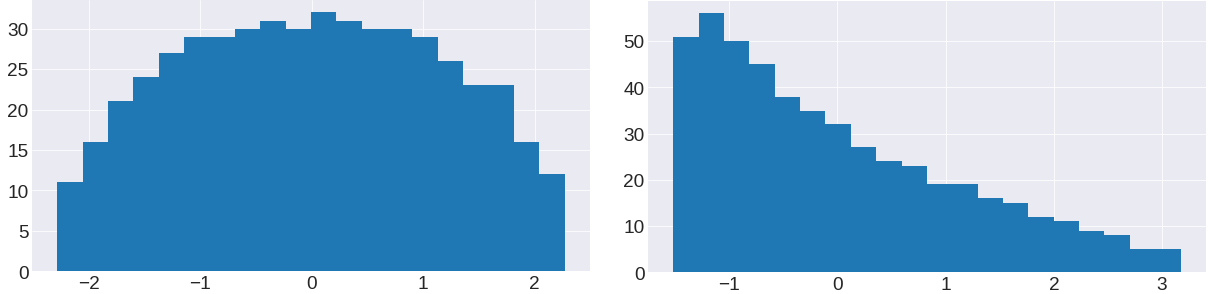}
 \caption{The eigenvalues of the last matrix $X_m$ for GUE noise (left, approx. a semicircle distribution) and squared GUE noise (right, approx. a Marchenko--Pastur distribution).}
 \end{center}
 \end{figure}
\vspace{-8mm}\hfill $\blacksquare$
\end{example}

\chapter{Reinforcement learning}

Reinforcement learning is a special case of machine learning where an agent interacts with an environment and collects rewards such that it can learn to improve its actions (to increase the expected reward). Markov processes yield a powerful model for this situation.

\section{Markov reward processes}

Let $\mathcal{S}=\{s_1,...,s_N\}$ be a finite set of states and consider a time-homogeneous Markov chain $(M_n)_{n\in\N_0}$ on $S$. The Markov chain is uniquely determined by the distribution $\mu_0$ of $M_0$ and the transition matrix $P=(p_{s,s'})$, where 
$p_{s,s'} = \mathbb{P}[M_{n+1}=s|M_n=s']$ a.s. \\

Let us look at a concrete example. We think of a basketball match and assume that team 1 consists of only two players $P_1$ and $P_2$ which attacks team 2. In order to model this offense, we define $\mathcal{S}=\{P_1, P_2, S, L\}$, where $P_j$, $j=1,2$, means that $P_j$ currently possesses the ball, $S$ means the offense ends by a score (we don't differentiate between scoring 2 or 3 points), and $L$ means that the ball is lost (because of a steal by a defensive player, an offensive foul, the ball is out-of-bounds and $P_1$ touched it last, etc.) The transition probabilities might be given by

\begin{center}
\begin{tabular}{llll}
$p_{P_1, P_1} = 0$,& $p_{P_1, P_2} = 0.1$,& $p_{P_1, S} = 0$,& $p_{P_1, L} = 0$,\\ 
$p_{P_2, P_1} = 0.6$, & $p_{P_2, P_2} = 0$,& $p_{P_2, S} = 0$,& $p_{P_2, L} = 0$,\\ 
$p_{S, P_1} = 0.2$,& $p_{S, P_2} = 0.8$,& $p_{S, S} = 1$,& $p_{S, L} = 0$,\\ 
$p_{L, P_1} = 0.2$,& $p_{L, P_2} = 0.1$,& $p_{L, S} = 0$,& $p_{L, L} = 1$. 
\end{tabular} 
\end{center}

This whole setting can be neatly summarized as a graph:

\begin{center}
\scalebox{1.2}{
 \begin{tikzpicture}[auto,node distance=8mm,>=latex,font=\small]

   \tikzset{vertex/.style = {shape=circle,draw,shade,top color=blue,fill=white,minimum size=1.5em}}
	  \tikzset{vertex1/.style = {shape=circle,draw,shade,top color=red,minimum size=1.5em}}
		\tikzset{vertex2/.style = {shape=circle,draw,shade,top color=green,minimum size=1.5em}}
\tikzset{edge/.style = {->,> = latex'}}

 \node[vertex] (a) at  (0,0) {$P_2$};
label={[red]right:B}, 
      label={[blue]135:A},
\node[vertex1] (b) at  (4,0) {$L_{}$};
\node[vertex2] (c) at  (4,4) {$S_{}$};
\node[vertex] (d) at  (0,4) {$P_1$};

 \begin{scope}[>=latex,
                        nodes={font=\footnotesize,circle,inner sep=2pt},
                        auto=right,
                        every edge/.style={draw=black,thick}]
           \draw [->, bend left] (a) edge node { $0.1$} (d);
					 \draw [->,bend right=20] (a) edge node { $0.8$} (c);
					 \draw [->] (a) edge node { $0.1$} (b);
				  	\draw [->, bend left] (d) edge node { $0.6$} (a);
				   \draw [->] (d) edge node { $0.2$} (c);
					 \draw [->,bend left=20] (d) edge node { $0.2$} (b);
					 \path[->] (b)   edge [loop right] node {$1$} ();
					 \path[->] (c)   edge [loop right] node {$1$} ();
						\end{scope}

\end{tikzpicture}
}
\end{center}

We see that player $P_1$ likely passes to $P_2$ and loses the ball with probability 0.2, while player $P_2$ scores with probability 0.8. 
Both $S$ an $L$ are terminal states, i.e.\ once we arrive in this state, we stay there, which can be seen as finishing the corresponding Markov chain.  If we start the offense with player $P_1$ having possession of the ball, we obtain a Markov chain modelling the offense. Possible instances of this chain might be:
$$ P_1 \rightarrow P_2 \rightarrow P_1 \rightarrow L \;(\rightarrow L \rightarrow L, ...),\; \text{or} $$
$$ P_1 \rightarrow S \;(\rightarrow S \rightarrow S, ...) $$

\begin{definition}A Markov reward process is a tuple 
$(\mathcal{S}, P, \gamma, R)$ consisting of a state space $\mathcal{S}=\{s_1,...,s_N\}$, a transition matrix $P=(p_{s,s'})_{s,s'\in\mathcal{S}}$  for $\mathcal{S}$, a discount factor $\gamma\in[0,1)$, and a reward function $R:\mathcal{S}\to \R$.
\end{definition}
 The reward function $R:\mathcal{S}\to \R$ assigns to each state 
a reward of having arrived in this state. 
 The rewards and the transition probabilities define the expected reward function $\mathcal{R}:\mathcal{S}\to \R$ given by 
$\mathcal{R}(s) = \sum_{s'\in \mathcal{S}} p_{s',s}R(s')$.\\

In our basketball example, we might define $R(S)=+1$, $R(L)=-1$, $R(P_1)=R(P_2) = 0$. Then a small calculation yields 
\[ \mathcal{R}(P_1) = 0, \quad \mathcal{R}(P_2) = 0.7,\quad \mathcal{R}(S)=+1,\quad \mathcal{R}(L)=-1. \]

Another way to represent $\mathcal{R}(s)$ is using a Markov chain $(M_k)_{k\in\N_0}$ under the condition $M_0=s$. We have 
$\mathcal{R}(s)=\mathbb{E}[R(M_1)]=\mathbb{E}[R(M_1)|M_0=s]$. More generally, we can now look further into the future and define 
\[\mathcal{R}(s,k):=\mathbb{E}[R(M_k)|M_0=s],\]
 which is the expected reward after $k$ time steps.\\

Finally, we can sum up all these expected rewards and use $\gamma$ to control how important future rewards are. We define the 
state value function $v:\mathcal{S}\to \R$ by

\[v(s) := \mathcal{R}(s,1) + \gamma \mathcal{R}(s,2) + ... = \sum_{k=0}^\infty \gamma^k \mathcal{R}(s,k+1). \] 

Note that our assumption $\gamma\in[0,1)$ ensures convergence of this infinite series.
However, it looks very complicated to actually compute $v(s)$ as this involves calculating infinitely many expectations.
This problem can be solved by splitting the sum after the first term:

\begin{eqnarray*}v(s)&=&\mathbb{E}[R(M_1)+\gamma R(M_2)+...|M_0=s]\\
&=&\mathbb{E}[R(M_1)+\gamma (R(M_2)+\gamma R(M_3)+...)|M_0=s]=\mathcal{R}(s,1) + \gamma\sum_{s'\in \mathcal{S}} p_{s',s} v(s').
\end{eqnarray*}

Now consider $v$ as a (row) vector $v=(v(s_1),...,v(s_N))$ and let $\mathcal{R}=(\mathcal{R}(s_1,1),...,\mathcal{R}(s_N,1))$. Then the above equation can be written as 
\begin{equation}\label{BE} v = \mathcal{R} + \gamma v P, \end{equation}
where $P$ is the transition matrix and $v P$ denotes the vector-matrix multiplication of $v$ and $P$. 
Equation \eqref{BE} is known in the field of dynamic programming as the \emph{Bellman equation}.\\

Denote by $I$ the identity matrix. Then $I-\gamma P$ is always invertible and the solution to \eqref{BE} is given by 
\[ v = \mathcal{R}(I-\gamma P)^{-1}. \]

We now come back to our basketball example and solve this equation for some fixed values of $\gamma$. If $\gamma=0$, then we simply have 
$v=R$, so 
\[\gamma = 0:\quad v(P_1) = 0, \quad v(P_2) = 0.7,\quad v(S)=1,\quad v(L)=-1. \]
Numerical solutions for two other values of $\gamma$ give the following values:
\begin{eqnarray*}&&\gamma = 0.5:\quad v(P_1) = 0.43,\quad v(P_2) = 1.42,\quad v(S)=2,\quad v(L)=-2;\\
&&\gamma = 0.9:\quad v(P_1) = 3.97,\quad v(P_2) = 7.36,\quad v(S)=10,\quad v(L)=-10. \end{eqnarray*}

\begin{remark}
The fact that we still collect rewards after having arrived in $S$ or $L$ can be avoided by adding an absorbing state $Ab$ with 
$p_{Ab, S}=1, p_{Ab, L}=1, p_{Ab,Ab}=1$ and $R(Ab)=0$.
\end{remark}

\section{Markov decision processes}

\begin{definition}A Markov decision process is a tuple 
$(\mathcal{S}, \mathcal{A}, (p_{s',(s,a)})_{s,s'\in\mathcal{S}, a\in \mathcal{A}}, \gamma, R)$ consisting of a finite state space $\mathcal{S}=\{s_1,...,s_N\}$, a finite set of actions $\mathcal{A}=\{a_1,...,a_m\}$, transition probabilities $(p_{s',(s,a)})$, a discount factor $\gamma\in[0,1)$, and a reward function $R:\mathcal{S}\to \R$.
\end{definition}
In every state $s\in \mathcal{S}$ we can choose an action $a\in \mathcal{A}$ and the transition probability $p_{s',(s,a)}$ is the probability of arriving in state $s'$ when action $a$ is applied in state $s$. We can thus define a conditional expected reward 
\[\mathcal{R}(s,a) = \sum_{s'\in\mathcal{S}} p_{s',(s,a)} R(s').\]

 In our example $\mathcal{S}=\{P_1,P_2,S, L\}$, $R(P_1)=R(P_2)=0, R(S)=+1, R(L)=-1$, we might have the actions 
\[\mathcal{A}=\{\text{pass}, \text{shoot}\}.\] 
If player $P_1$ wants to pass, he/she would like to change the state $P_1$ to $P_2$, which will often be successful, but sometimes go wrong and actually lead to $(P_1, \text{pass})\to L$, the ball is lost to the opponent. Very rarely, we might see that the player is trying to pass but actually scores, i.e.\ $(P_1, \text{pass})\to S$. We could have the following probabilities:

\begin{center}
\begin{tabular}{llll}
$p_{P_1, (P_1, \text{pass})} = 0$,& $p_{P_2, (P_1, \text{pass})} = 0.89$,& $p_{S, (P_1,\text{pass})} = 0.01$,& 
$p_{L, (P_1,\text{pass})} = 0.1$.
\end{tabular} 
\end{center}

While the Markov reward process of the previous section corresponds to modeling a match from the point of view of a spectator, the Markov decision process now put's us in the position of the coach of team 1.  We see all options a player has during a game (the action set $\mathcal{A}$) and the strengths and weaknesses of each player (the probabilities $p_{s', (s,a)}$).\\
Obviously, we are now facing the problem of giving the players a strategy, i.e.\ for each state we must find the best action. It is more convenient to allow still some randomness here, which leads to the notion of a policy.
 
\begin{definition}
A policy $\pi$ assigns to each state $s\in \mathcal{S}$ a probability distribution $\pi_s$ on the action set $\mathcal{A}$. 
We denote the probability $\pi_s(\{a\})$ by $\pi_{s,a}$.
\end{definition}

Once we have defined a policy $\pi$, the Markov decision process becomes a Markov reward process on the state space $\mathcal{S}\times \mathcal{A}$. (Translated to our basketball example, this means that once a strategy is defined, we can watch a basketball match consisting of a series of states and actions.) The probability of passing from $(s,a)$ to $(s',a')$ is given by 
\[p_{s',(s,a)}\cdot \pi_{s',a'}.\]

We now obtain an expected reward function $\mathcal{R}_\pi:\mathcal{S}\to\R$ and a state-value function $v_\pi:\mathcal{S}\to\R$ depending on $\pi$. The Markov decision problem is now to find the policy which maximizes $v_\pi$. The coach has to find the best strategy for the team.\\

Define the pointwise maximum of $v_\pi$ by $v^{*}$, i.e.
\[ v^{*}(s) = \max_{\pi} v_\pi(s). \]
This function is called the optimal value function. 

\begin{theorem}${}$
\begin{itemize}
\item[(a)]  The optimal value function satisfies the \emph{Bellman optimality equation}
\[v^*(s) = \max_{a\in \mathcal{A}} \left(\mathcal{R}(s,a) + \gamma \sum_{s'\in \mathcal{S}} p_{s', (s,a)} v^*(s')\right), \quad s\in \mathcal{S}. \]
\item[(b)] There exists an optimal policy $\pi^*$ such that 
$v_{\pi^*}(s)=v^*(s)$ for all $s\in \mathcal{S}$.
\item[(c)] There is always a deterministic policy $\pi^d$ which is optimal.
\end{itemize}
\end{theorem}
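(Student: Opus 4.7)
The plan is to reduce all three statements to a fixed-point argument for the Bellman operator, and then construct a greedy deterministic policy from the optimal value function. Define the Bellman optimality operator $T \colon \R^{\mathcal{S}} \to \R^{\mathcal{S}}$ by
\[
T(v)(s) \;=\; \max_{a \in \mathcal{A}} \Bigl(\mathcal{R}(s,a) + \gamma \sum_{s' \in \mathcal{S}} p_{s',(s,a)}\, v(s')\Bigr),
\]
and, for each policy $\pi$, the policy-evaluation operator
\[
T_\pi(v)(s) \;=\; \sum_{a \in \mathcal{A}} \pi_{s,a}\Bigl(\mathcal{R}(s,a) + \gamma \sum_{s' \in \mathcal{S}} p_{s',(s,a)}\, v(s')\Bigr).
\]
First I would check that both $T$ and $T_\pi$ are $\gamma$-contractions on $(\R^{\mathcal{S}},\|\cdot\|_\infty)$. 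For $T_\pi$ this is immediate since stochastic matrices act as $\ell^\infty$-contractions; for $T$ one uses $|\max_a f(a)-\max_a g(a)| \leq \max_a |f(a)-g(a)|$. Since $\gamma \in [0,1)$, Banach's fixed point theorem gives unique fixed points $v^\dagger$ of $T$ and, for each $\pi$, a unique fixed point $v_{\pi}^{\mathrm{fix}}$ of $T_\pi$. A telescoping computation shows $v_{\pi}^{\mathrm{fix}} = v_\pi$, so $v_\pi$ is characterised by the Bellman equation $v_\pi = T_\pi(v_\pi)$.

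Next I would compare $T$ and $T_\pi$ pointwise: since a convex combination of real numbers is at most their maximum, $T_\pi(v) \leq T(v)$ componentwise for every $v \in \R^{\mathcal{S}}$ and every $\pi$. Both operators are order-preserving (coefficients $\gamma p_{s',(s,a)} \geq 0$), so by induction $T_\pi^n(v) \leq T^n(v)$. Applying this to $v=v_\pi$ and letting $n \to \infty$, the left side is constant $v_\pi$ while the right side converges to $v^\dagger$ by contraction; hence $v_\pi \leq v^\dagger$ for every $\pi$, and thus $v^*(s) = \sup_\pi v_\pi(s) \leq v^\dagger(s)$ for all $s$.

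For the reverse inequality, and simultaneously for (b) and (c), I would build a deterministic policy by greedy selection on $v^\dagger$: since $\mathcal{A}$ is finite, for each $s$ there exists
\[
a^*(s) \in \arg\max_{a \in \mathcal{A}} \Bigl(\mathcal{R}(s,a) + \gamma \sum_{s' \in \mathcal{S}} p_{s',(s,a)}\, v^\dagger(s')\Bigr),
\]
and define the deterministic policy $\pi^d$ by $\pi^d_{s,a^*(s)} = 1$. Then by construction $T_{\pi^d}(v^\dagger) = T(v^\dagger) = v^\dagger$, so $v^\dagger$ is the (unique) fixed point of the contraction $T_{\pi^d}$, whence $v_{\pi^d} = v^\dagger$. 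This gives $v^*(s) \geq v_{\pi^d}(s) = v^\dagger(s)$, so $v^* = v^\dagger$. Statement (a) is precisely the fixed-point equation $v^* = T(v^*)$; statement (b) is witnessed by $\pi^* := \pi^d$; and (c) follows because $\pi^d$ is deterministic.

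The technical core of the argument is verifying the contractivity of $T$ (the max-inequality is the one nontrivial step) together with the order-preserving property that lets us compare $T^n$ and $T_\pi^n$; the rest is bookkeeping. Finiteness of $\mathcal{A}$ enters only to guarantee that the $\arg\max$ defining $\pi^d$ is nonempty — without it one would have to argue with approximate maximisers and a limiting policy, which would complicate (c) considerably.
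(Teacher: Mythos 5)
Your proof is correct, and it takes a genuinely different route than the paper's. Both arguments start with the same Banach fixed-point setup: $T$ and $T_\pi$ are $\gamma$-contractions with fixed points $v^\dagger$ and $v_\pi$. The divergence is in how $v^*=\sup_\pi v_\pi$ is connected to the fixed point $v^\dagger$. The paper runs \emph{policy iteration}: it defines a greedy improvement operator $Q$, builds a monotonically improving sequence $\pi_k=Q(v_{\pi_{k-1}})$, shows $v_{\pi_k}\uparrow v^\dagger$ regardless of the starting policy, and then invokes an $\eps$-argument (start from a policy $\pi_{0,\eps}$ that is uniformly within $\eps$ of $v^*$) to conclude $v^*=v^\dagger$. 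You instead prove $v_\pi\leq v^\dagger$ for every $\pi$ directly via the two-line observation that $T_\pi\leq T$ pointwise and both operators are order-preserving, so iterating and passing to the limit squeezes $v_\pi=T_\pi^n(v_\pi)\leq T^n(v_\pi)\to v^\dagger$; you then exhibit a single greedy policy $\pi^d$ with $v_{\pi^d}=v^\dagger$ to get the reverse inequality. Your route is shorter and avoids two soft spots in the paper's argument: it does not need to construct or analyze a convergent sequence of policies, and in particular it does not need the existence of a \emph{single} policy $\pi_{0,\eps}$ that is $\eps$-optimal uniformly over all states, a step the paper asserts but does not justify (the pointwise supremum $v^*(s)=\sup_\pi v_\pi(s)$ a priori might be approached by a different policy at each state). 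Your construction of the deterministic greedy $\pi^d$ makes parts (b) and (c) immediate in one stroke, whereas the paper obtains (b) as a limit and then separately records $Q(v^*)$ as the deterministic witness for (c). The paper's approach does have the pedagogical advantage of exhibiting policy iteration, which is itself an algorithm of interest, but as a proof of the theorem your argument is the more economical one.
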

\begin{proof}
For $v,w:\mathcal{S}\to\R$ we write $v\leq w$ if $v(s)\leq w(s)$ for all $s\in \mathcal{S}$. For any function $v:\mathcal{S}\to\R$, let 
\[(B^*v)(s) = \max_{a\in \mathcal{A}} \left(\mathcal{R}(s,a) + \gamma \sum_{s'\in \mathcal{S}} p_{s', (s,a)} v(s')\right).\]

Next define the operator $Q$ which maps a function $v:\mathcal{S}\to\R$ to a deterministic policy $Q(v)=\pi$ with $\pi_{s,a}=1$ for 
\[ a = \underset{a\in \mathcal{A}}{\text{argmax}}  \left(\mathcal{R}(s,a) + \gamma \sum_{s'\in \mathcal{S}} p_{s', (s,a)} v(s')\right). \]
(As there might exist several $a_{j_1}, a_{j_2}, ...$ that maximize this expression, we choose the one with the smallest index. Then $Q$  is well-defined.)\\ 

Finally, for a policy $\pi$, let \[B_\pi(v)=\mathcal{R}_\pi + \gamma  v P_\pi.\]
The definitions immediately imply $B_{Q(v)}=B^*(v)$ for any $v:\mathcal{S}\to\R$.\\

Let $N$ be the cardinality of $\mathcal{S}$. Then both $B_\pi$, $\pi$ fixed, and $B^*$ are mappings from $\R^N$ into itself. 
We equip $\R^N$ with the maximum norm $\|\cdot\|_{max}$. Then both mappings are $\gamma$-contractions, i.e.\ 
\[\|B_\pi(v)-B_\pi(w)\|_{max}\leq \gamma \|v-w\|_{max}, \quad \text{and} \quad \|B^*(v)-B^*(w)\|_{max}\leq \gamma \|v-w\|_{max}, \] 
and Banach's fixed point theorem implies that each mapping has a unique fixed point, namely
\[B_\pi(v_\pi)=v_\pi \quad \text{(Bellman equation)}\, \quad \text{and}\quad  B^*(v_{opt})=v_{opt}\]
for some function $v_{opt}:\mathcal{S}\to\R$.\\

Now let $\pi_0$ be any policy. We define a sequence $(\pi_k)_{k\in\N}$ 
of policies by $\pi_k = Q(v_{\pi_{k-1}})$.  Then we have
\begin{equation}\label{nui0}v_{\pi_{k-1}} = B_{\pi_{k-1}}(v_{\pi_{k-1}}) \leq B^*(v_{\pi_{k-1}}) = B_{Q(v_{\pi_{k-1}})}(v_{\pi_{k-1}}) = 
 B_{\pi_k}(v_{\pi_{k-1}}),\end{equation}
and thus  $0\leq B_{\pi_k}(B_{\pi_k}(v_{\pi_{k-1}})-v_{\pi_{k-1}})=B^2_{\pi_k}(v_{\pi_{k-1}})-B_{\pi_k}(v_{\pi_{k-1}})$. Hence, 
 $ v_{\pi_{k-1}} \leq B^n_{\pi_k}(v_{\pi_{k-1}})$ for all $n\in\N$. The limit $n\to\infty$ yields the fixed point of $B_{\pi_k}$ on the right side, which is $v_{\pi_k}$, thus
\[ v_{\pi_{k-1}} \leq  v_{\pi_k},\]
i.e.\ the sequence of policies improves the value function. As $v_{\pi_k}(s)$ is monotonically increasing and bounded, 
$v_{\pi_k}(s)\to v_L(s)$ for a limit function $v_L$. Let $\pi_L=Q(v_L)$. Then we have equality in \eqref{nui0} and thus 
$v_{\pi_L} = B_{\pi_{L}}(v_{\pi_{L}}) = B^*(v_{\pi_{L}})$, which implies $v_{\pi_L} = v_{opt}$. \\

Now we prove (a): For any value function $v_\pi$ we have $v_\pi = B_\pi(v_\pi)\leq B^*(v_\pi)$. 
 
Fix $\eps>0$. Then we find $\pi_{0,\eps}$ such that $v^*(s)-v_{\pi_{0,\eps}}(s) \leq \eps$ for all $s\in \mathcal{S}$.
Consider the sequence $\pi_{k,\eps}$ of policies starting with $\pi_{0,\eps}$. 
This sequence is monotonically increasing and we conclude $0 \leq v^*(s)-v_L(s) \leq \eps$ for all $s\in \mathcal{S}$. 
But as $\eps$ can be chosen arbitrarily small, we have $v^* = v_L$. Hence, we have proven (a) and (b).\\

A deterministic policy that maximizes $v_\pi$ is given by $\pi=Q(v^*)$. 
\end{proof}

\begin{remark}
In applications, an optimal policy can be calculated by various numerical methods. For example, we can construct a neuronal net which takes a state $s$ as input and gives a probability distribution on $A$ as output. We initialize the neuronal net with random weights and thus obtain a policy $\pi_0$.\\
We can now interact with the environment (we play many basketball games) and collect returns for $(s,a)$ pairs, i.e.\ we start in some state $s_0$, choose an action $a_0$, we land in state $s_1$ with return $R(s_1)$, we choose an action $a_1$, etc. In this way, we obtain a collection 
of data: 
\[(s_0,a_0), (s_1,a_1, R(s_1)), (s_1,a_1, R(s_2)), \ldots\]
Here, we choose our action in state $s$ randomly with probabilities given by $\pi_0(s)$.\\

The critical step is now: how should we update the weights of the neuronal net (the training step) such that we finally come close to the maximum of the value function after several iterations of interacting with the environment and training? Usually, one defines a loss function for the ``true outcomes'' (labels) and applies the  gradient descent method.\\

Here, the Policy Gradient Theorem helps, see \cite[13.2]{SB18}.
It expresses the gradient of the value function $v(s)$ with respect to the weights such that ``gradient descent for $v(s)$'' is the same as:
``Pretend that the sampled actions $a_1,a_2,...$ are the labels of $s_1,s_2,...$ and use the cross-entropy loss function with weight of $(s_k,a_k)$ eqal to $\sum_{n\geq 0} \gamma^n R(s_{k+n})$.'' See also \cite{Bah19}.
\end{remark}

\section{Quantum Markov decision processes}

Machine learning can be coupled with  quantum mechanics to obtain quantum versions of mathematical models and algorithms. These are, in particular, relevant for quantum computing. We refer to \cite{DB18} for an overview on quantum machine learning.\\ 

Recall that a quantum channel $\mathcal{T}:\C^{N\times N}\to\C^{N\times N}$ is a linear mapping of the form $\mathcal{T}(X)=\sum_{j=1}^M E_j X E_j^*$ for matrices $E_1,...,E_n$ with $\sum_j E_j^* E_j=I$, and that $\mathcal{T}(\rho)$ is a density matrix whenever $\rho$ is a density matrix. \\

As in \cite{BBA14}, a \emph{quantum  observable  Markov decision  process} on $\C^{N\times N}$ can be defined as the following collection:
\begin{itemize}
	\item  a self-adjoint $S\in\C^{N\times N}$ (the state space),
	\item  a density matrix $\rho_0\in \C^{N\times N}$ (initial state),
	\item a set $\mathcal{A}_1, ..., \mathcal{A}_{n_a}$ of quantum channels on $\C^{N\times N}$ (the set of actions),
 \item a set $R_1, ..., R_{n_a}\subset \C^{N\times N}$ of self-adjoint matrices (reward operators),
\item and a discount factor $\gamma\in[0,1)$.
\end{itemize}

Now an agent chooses an action $\mathcal{A}_j$ and then makes a random observation $o_1\in\R$ described by the quantum random variable $S$ with respect to the state $X\mapsto \tr(X\mathcal{T}_j(\rho_0))$. The outcome $o_1$ is an eigenvalue of $S$. The expected reward 
``$\mathcal{R}(s,a)$'' from the classical case is now replaced by the expectation of the reward operator $R_j$, i.e.\ $\tr(R_j \mathcal{T}_j(\rho_0))$.\\

One can now define policies and state-value functions and let an agent interact with this quantum environment.

\chapter{A Markovian look at the Ising model}

The Ising model is one of the most important models in statistical physics. On the one hand, it is easy to define, and on the other hand, it is already complicated enough to be able to model the phase transition of a ferromagnetic material occurring at its Curie temperature. We will see how Markov processes enter the study of the Ising model at two completely different points.

\section{The two-dimensional Ising model}\label{Ising_chap}

Phase transitions are state changes of materials due to the variation of external conditions like the temperature, pressure, magnetic field, etc. The most common example is liquid water, which becomes solid at $0^\circ\text{C}$ (and standard pressure) and boils at $100^\circ\text{C}$.\\
Mathematically, phase transitions can be described by discontinuities of certain macroscopic quantities (or their derivatives). 
When water freezes, its volume increases discontinuously with respect to the temperature. Another example:\\
At a temperature $T<T_C=768^\circ\text{C}$, iron is ferromagnetic. When we place a piece of iron in a magnetic field and then draw it out, it will have 
a magnet field with the same direction, due to the parallelization of the magnetic moments of the iron atoms. Above the critical temperature $T_C$, also called Curie temperature, this property suddenly disappears and there is no magnet field left. Here, iron is paramagnetic.

\begin{remark}For $T<T_C$, the magnet attracts the piece of iron. This is also true for $T>T_C$, but the attraction is much weaker. 
There are also diamagnetic materials, which are repelled by the magnetic field. 
\end{remark}

In order to explain the phase transition ferromagnetic $\longrightarrow$ paramagnetic mathematically, Ernst Ising studied a statistical model proposed by 
 Wilhelm Lenz in his PhD thesis (1924). 

\begin{remark} Ising considered the model in one dimension and showed that there is no phase transition, i.e.\ there is no ferromagnetism in this model. 
He conjectured that the same should be true also in higher dimensions, which (fortunately) turned out to be wrong. 
First, however, physicists investigated other, more complicated models (e.g.\ the Heisenberg model). In 1936,  Rudolf Peierls showed that 
the Ising model has in fact a phase transition in two dimensions. 
Afterwards, the Ising model has been studied intensively. In fact, Ising had not noticed this for a long time (see \cite{Kobe} for his biography).\\
Because of its simplicity and the ability to model phase transitions, the Ising model is regarded as one of the most important models in statistical physics.
\end{remark}

Let $D\subset \C$ be a Jordan domain, representing a two-dimensional piece of iron.\\
First, we discretize the domain $D$. For $\delta>0$, we define the lattice $$\C_{\delta}=\{\delta k+i \delta l\,|\, k,l\in\Z\}.$$ 
Here, we consider $\C_\delta$ as an undirected graph, where $\C_\delta$ represents the set of all vertices, and $x,y\in \C_\delta$ are connected by an edge, written as $x\sim y,$ if and only if $|x-y| = \delta.$\\ 
A vertex in $\C_\delta$ has $4$ neighbors.
Furthermore, we let $D_\delta = D\cap \C_\delta$. (More precisely, we consider the subgraph of $\C_\delta$ induced by $D\cap \C_\delta$, i.e.\ $x,y\in D_\delta$ are connected
within this subgraph if and only if they are connected in $\C_\delta$.) This graph might not be connected anymore. So we define
$\Omega_\delta$ as the largest connected subgraph of $D_\delta$ (uniquely determined for $\delta$ small enough).\\

In the Ising model, every vertex $x\in\Omega_\delta$ represents an atom carrying a spin $\sigma_x \in \{-1,1\}$ and 
so we define a configuration as a function $\sigma:\Omega_\delta\to \{-1,1\}.$ Let $\Sigma_\delta$ be the set of all configurations for $\Omega_\delta$. A configuration $\sigma$ 
is now generated randomly as follows: \\
First, we define the energy $H_{\delta,B}(\sigma)$ for $\sigma\in \Sigma_\delta$ by 
\begin{equation}\label{ising}
H_{\delta,B}(\sigma) = -J\sum_{x\sim y} \sigma_x\sigma_y - B \sum_{x} \sigma_x, \quad J>0, B\geq 0,
\end{equation}
where we sum over all edges of $\Omega_\delta$ in the first sum. The value $B$ stands for an external magnetic field and $J$ describes the coupling of connected vertices.\\
We let $T>0$ be the absolute temperature and we define the partition function $Z_{\delta,T,B}$ by
$$Z_{\delta,T,B} = \sum_{\sigma \in \Sigma_\delta} e^{-\beta H_{\delta,B}(\sigma)},$$
where $\beta = \frac1{k_B T}$ and $k_B$ is the Boltzmann constant.\\ 
Now we define the probability for $\sigma \in \Sigma_\delta$ by 
$$\mathbb{P}_{\delta,T,B}(\{\sigma\}) = \frac1{Z_{\delta,T,B}}e^{-\beta H_{\delta,B}(\sigma)}.$$

The most likely configurations are those having the smallest energy $H_{\delta,B}(\sigma),$ i.e.\ all spins have the same direction (and are equal to $+1$ provided that $B>0$). 
This tendency to order is countered by the thermal energy, which is represented here only by the variable $T$.\\  
If we let $T\to\infty$, the distribution of the configurations converges to a uniform distribution, i.e.\ each configuration then has the same probability.\\

By looking at the behavior of $Z_{\delta,T,B}$ as $\delta\to 0,$ one can derive macroscopic quantities describing the phase transition. 
The magnetization per spin $M(\delta,T,B)$ is defined by 
$$M(\delta,T,B) = \mathbb{E}\left[\frac1{N}\sum_{x\in \Omega_\delta} \sigma_x\right] = 
\sum_{\sigma \in \Sigma_\delta}  \left(\frac1{N}\sum_{x\in \Omega_\delta} \sigma_x\right)\cdot  \mathbb{P}_{\delta,T,B}(\{\sigma\}), $$
where $N$ is the number of vertices in $\Omega_\delta$. If we look at this quantity only for $B>0$ and define $M(T,B)=\lim_{\delta \to 0} M(\delta,T,B),$ then
$$M_{0^+}(T):=\lim_{B\downarrow 0}M(T,B) = \begin{cases} \left(1-\sinh ^{-4}(2J/(k_B T))\right)^{1/8}  &T<T_C,\\ 0, & T \geq T_C,\end{cases}$$
where the critical temperature $T_C$ is given by
$$  T_C = \frac{2J}{k_B\log(1+\sqrt{2})} \approx \text{2.2692} \frac{J}{k_B};$$
see \cite[p.118]{Bax89}. The function $M_{0^+}(T)$ shows that there remains a rest magnetization for $T<T_C$ when we remove the piece of iron from the magnetic field $B$. It is continuous at $T=T_C$, but not differentiable. (In physics language: The derivative is not continuous.)

\begin{figure}[H]
\rule{0pt}{0pt}
\centering
\includegraphics[width=8cm]{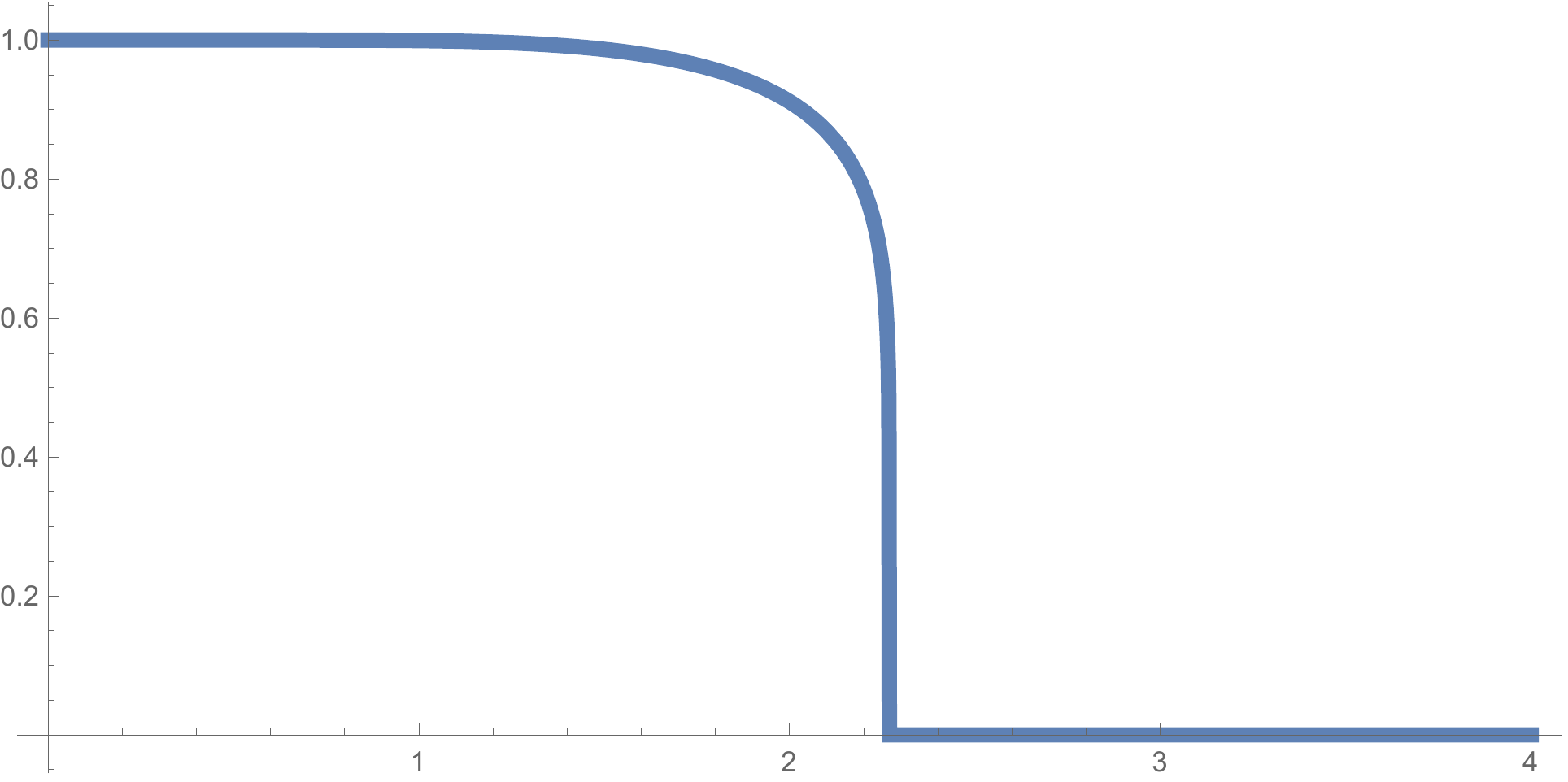}
\caption{$M_{0^+}(T)$ for $J=k_B=1,$ $T_C=\frac{2}{\ln(1+\sqrt{2})} \approx \text{2.2692}.$}
\end{figure}

The following figure shows three random configurations of the model \eqref{ising} for the values $J=1, B=0,$ and $\beta=$ 0.2 (paramagnetic), 
$\beta=\log(1+\sqrt{2})/2\approx$ 0.4407 (critical) and $\beta=100$ (ferromagnetic) on a square lattice with $200 \cdot 200$ vertices. 
The method used to simulate theses configurations is explained in the next section.

\begin{figure}[H]
\centering
\includegraphics[width=12cm]{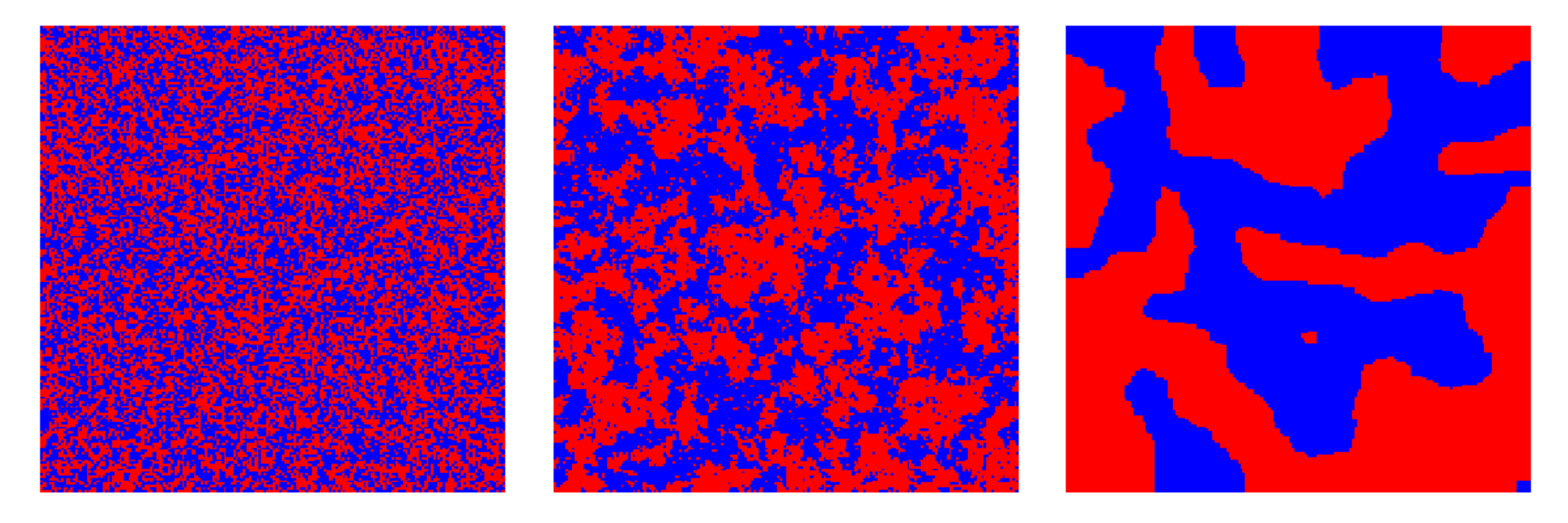}\label{betas}
\caption{Ising model for $\beta=$ 0.2 (left); $1/2\log(\sqrt{2}+1)$ (middle); 100 (right).}
\end{figure}

We can also impose some boundary restrictions on the configurations. The following configurations are based on the same values as before, but now the spins on the left and right half of the boundary are kept constant. 

\begin{figure}[H]
\centering
\includegraphics[width=12cm]{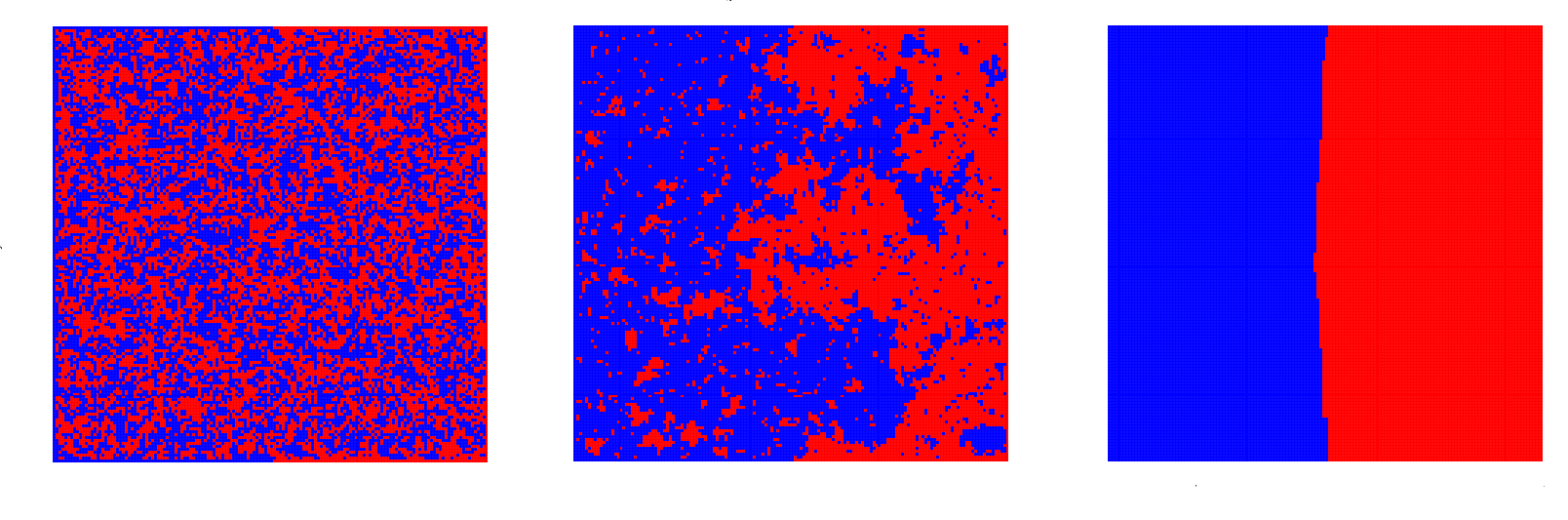}\label{betas2}
\caption{The simulations with boundary conditions.}
\end{figure}

In the case of a high temperature, the sum of all spins in some subregion is close to $0$, in contrast to the case of a low temperature. 
The critical Ising model, i.e.\ $B=0$ and $T=T_C$, shows interesting self-similar patterns. In fact, this model is conformally invariant.
Let $D,E\subset \C$ be Jordan domains and let $f:D\to E$ be conformal. If we let $\delta\to 0$ for the critical Ising model on the discretization of $D$ and then map the model to $E$ via $f$, we obtain the same as taking the limit $\delta\to 0$ for the critical Ising model on the discretization of $E$.

\begin{remark}We take a look at a concrete example.\\  
Let $B=0$, $J=1$, and $T=T_C$. Choose $a\in D$ and let $x_\delta, y_\delta \in \Omega_\delta$ be connected by a horizontal edge lying closest to $a$. Then  
$$\lim_{\delta\to 0} \mathbb{E}[\sigma_{x_\delta} \sigma_{y_\delta}] = \frac1{\sqrt{2}}.$$

Hence, the quantity $\mathbb{E}[\sigma_{x_\delta} \sigma_{y_\delta}-\frac1{\sqrt{2}}]$ converges to $0$. 
By scaling with $\delta^{-1},$ one basically obtains the hyperbolic metric $\lambda_D$ of $D$ as $\delta \to 0$, \cite[Theorem 1]{HoSm}: 
$$ \mathbb{E}[\sigma_{x_\delta} \sigma_{y_\delta}] = \frac1{\sqrt{2}}-\frac{\lambda_D(a)}{2\pi}\delta + \Landauo(\delta) \qquad \text{as} \quad \delta\to0.$$
\end{remark}

\section{Metropolis algorithm}\label{metro}

A distribution on a discrete sample space $\Omega=\{\omega_1,...,\omega_N\}$ is usually simulated as follows. 
Construct a partition of the interval $[0,1]$ into intervals $I_1,...,I_N$ where the length of $I_k$ is equal to $\Pro(\{\omega_k\})$. 
Now we simulate a random number $p$ uniformly distributed on $[0,1]$ and we choose the sample $\omega_k$ according to the interval $I_k$ satisfying $p\in I_k$.\\ 

In Figure 12.2, we have $200\cdot 200 = 40,000$ vertices and thus there are $2^{40,000}$ possible configurations, each with a positive probability. Thus the usual simulation method does not work anymore. In fact we cannot even calculate the partition function $Z_{\delta,T,B}$.
  The \emph{Metropolis algorithm}, introduced in \cite{MRRTT53} and extended by Hastings in \cite{Has70}, provides a different method to simulate the distribution on $\Sigma_\delta$.\\   

First, choose an initial configuration $\sigma_0\in\Sigma_\delta$. For $n\in\N$ we now generate $\sigma_{n}$ from $\sigma_{n-1}$ as follows.

\begin{itemize}
	\item Step 1: Let $x\in  \Omega_\delta$ be a vertex drawn randomly with a uniform distribution on the set of all vertices. Let $\sigma'$ be the configuration that we obtain from $\sigma_{n-1}$ by flipping the spin at $x$. Let $\Delta H=H_{\delta,B}(\sigma')-H_{\delta,B}(\sigma)$ be the energy difference between the two configurations. 
	\item Step 2: If $\Delta H\leq 0$, then $\sigma_{n} = \sigma'$. If $\Delta H> 0$, put $p=\exp(-\beta\Delta H)$ and let $\sigma_{n} = \sigma'$ with probability $p$ and $\sigma_n = \sigma_{n-1}$ with probability $1-p$.
\end{itemize}

This algorithm defines a stochastic process $(\sigma_n)_{n\in \N_0}$ on $\Sigma_\delta$ and it is clear that it has the Markov property. 
It is easy to see that it is irreducible and aperiodic. So we would like to apply Theorem \ref{limit_Markov} to show that the limit distribution on $\Sigma_\delta$ coincides with the distribution from the Ising model.
 To this end, we only need to show that the distribution $\mathbb{P}_{\delta,T,B}$ is stationary for the Markov process $(\sigma_n)$.\\

Let $P_{\sigma', \sigma} = \mathbb{P}[\sigma_{n+1} = \sigma'| \sigma_n = \sigma]$ be the transition probabilities for the process.
Furthermore, consider the modified Markov process on $\Sigma_\delta$, where we pick a vertex $x$ as in step 1, but then flip the spin at $x$ with probability $1$, and let $Q_{\sigma', \sigma}$ be the transition probabilities for this modified process.	We clearly have $Q_{\sigma', \sigma}=Q_{\sigma, \sigma'}$ for all $\sigma, \sigma' \in \Sigma_\delta$.\\

For $\sigma'\not=\sigma$,  we have
\[P_{\sigma', \sigma} = Q_{\sigma', \sigma} \cdot 
 \min\left(1, \exp \left(\beta H_{\delta,B}(\sigma)-\beta H_{\delta,B}(\sigma') \right)\right)\]

and consequently
\begin{eqnarray*}
&&
P_{\sigma', \sigma} \cdot \mathbb{P}_{\delta,T,B}(\{\sigma\}) = 
Q_{\sigma', \sigma} \cdot 
 \min\left(1, \exp(\beta H_{\delta,B}(\sigma)-\beta H_{\delta,B}(\sigma'))\right) \cdot \frac{\exp(-\beta H_{\delta,B}(\sigma))}{Z_{\delta,T,B}} \\
&=& Q_{\sigma, \sigma'} \cdot 
 \min\left(\exp(-\beta H_{\delta,B}(\sigma)), \exp(-\beta H_{\delta,B}(\sigma'))\right) \cdot \frac{1}{Z_{\delta,T,B}} \\
&=& Q_{\sigma, \sigma'} \cdot 
 \min\left(\exp(\beta H_{\delta,B}(\sigma')-\beta H_{\delta,B}(\sigma)), 1 \right) \cdot \frac{\exp(-\beta H_{\delta,B}(\sigma'))}{Z_{\delta,T,B}} \\
&=& P_{\sigma, \sigma'} \cdot \mathbb{P}_{\delta,T,B}(\{\sigma'\}).
\end{eqnarray*}

Thus $\mathbb{P}_{\delta,T,B}$ is a stationary distribution of the Markov process due to Exercise \ref{balanced}.\\

The Metropolis algorithm is a beautiful example of the power of Markov processes in applications. It can also be stated for more general settings and in turn it is a special case of \emph{Markov Chain Monte Carlo Methods}, see \cite{BGJM11}.

\section{Schramm-Loewner evolution}\label{schramm}

Consider the Ising model at the critical temperature, the critical Ising model, which is known to be conformally invariant. 
  The interface curves, i.e.\ the random curves that separate +1 from -1 spin clusters, seem to have a fractal-like shape. 

\begin{figure}[h]
\centering
\includegraphics[width=5cm]{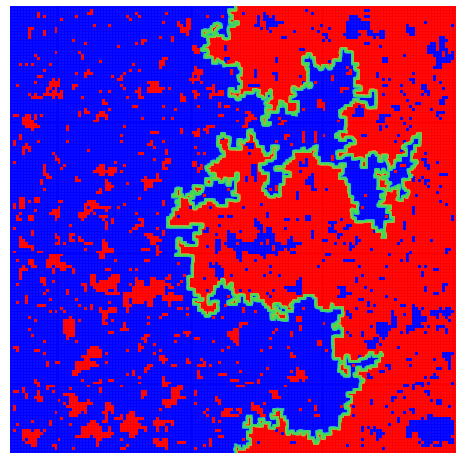}
\caption{An interface curve in the critical Ising model.}
\end{figure}

How can we investigate properties of these random curves? If we think of all the steps needed to define this curve, a calculation of its distribution seems to be beyond hope.\\

There are further conformally invariant models from statistical mechanics and stochastic geometry that generate random curves of a similar type.
Let us look at another example. Consider a random walk on the lattice $\{a+bi\,|\, a,b\in\Z\}$ which starts at $z=0$ and at each step it goes to one of the $4$ neighboring points, to each with probability $1/4$.  If we stop such a random walk after $N$ steps, we can define the corresponding \emph{loop erased random walk} (LERW) as the simple curve obtained by removing all loops in chronological order.

\begin{figure}[H]
\centering
\includegraphics[width=12cm]{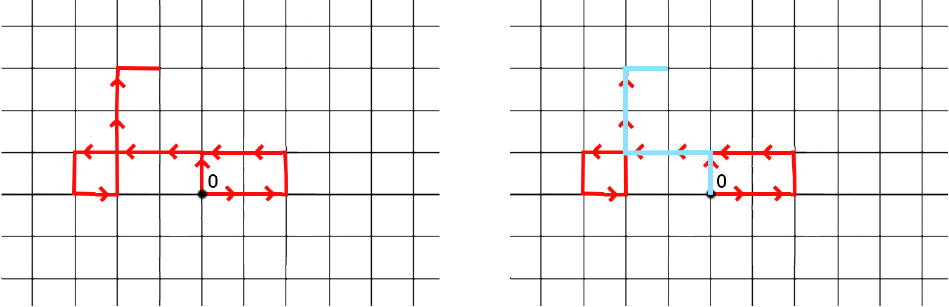}
\caption{Random walk (left) and the corresponding loop erased random walk (right).}
\end{figure}

For every $N\in\N$, let $S_N:[0,1]\to\C$ be the linear interpolation of a random walk with $N$ steps. Then $S_N/\sqrt{N}$ converges to a two-dimensional Brownian motion 
as $N\to \infty$ (Donsker's theorem).\footnote{If we equip $C([0,1],\C)$ with the topology induced by the $\sup$-norm, then 
 $S_N/\sqrt{N}$ converges in distribution with respect to this topology to a two-dimensional Brownian motion $B:[0,1]\to \C.$} 
Provided the limit exists, how can we describe the distribution of the limit of the loop erased random walk? In contrast to the random walk $(M_n)_n$, its loop erased subcurve $(M_{n_k})_k$ is far away from being Markov. If we know the first $N$ points, then $(M_{n_k})_{k>N}$ must be disjoint from $\{M_{n_1},...,M_{n_N}\}$.

\begin{figure}[h]
\centering
\includegraphics[width=12cm]{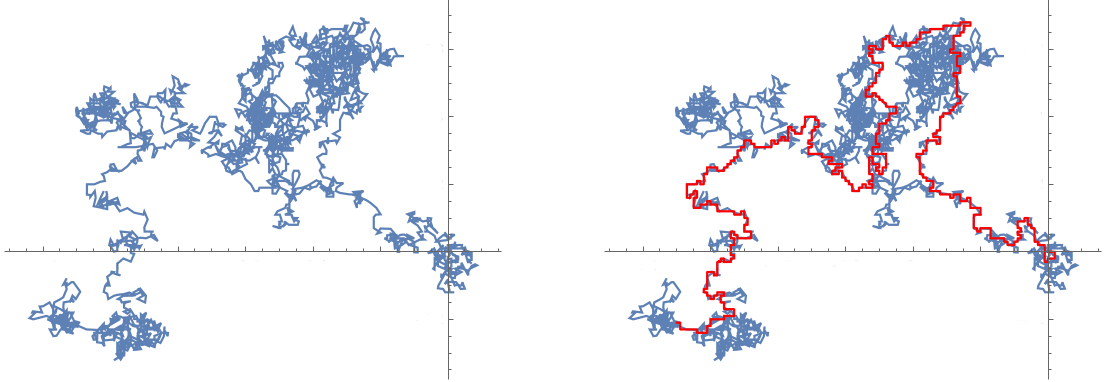}
\caption{A simulation of a random walk with $N=10,000$ steps and the corresponding loop erased random walk.}
\end{figure}

O. Schramm had a beautiful idea to solve such problems, published 2000 in \cite{MR1776084}. The highly non-Markovian curves in the complex plane can be encoded by the real-valued driving functions of Loewner's (slit-)differential equation, and it turns out that these functions become Markov processes:

\[ \text{(non-Markov) curves from the Ising model, LERW, etc.}\]
 \[  \underset{\text{Loewner equation}}{\Longrightarrow}\quad  \text{Markov process on $[0,\infty)$.} \]
For a simply connected domain $D\subsetneq \C$, we denote by $\partial_\infty D$ its boundary with respect to $\hat{\C}$.

Consider a triple $(D,x,y)$ where $D\subsetneq \C$ is a simply connected domain, and $x,y\in \partial_\infty D$ are two different points such that 
$\partial_\infty D$ is locally connected in neighborhoods of $x$ and $y$. If $D$ is a Jordan domain, we can choose any two $x,y$ from the locally connected boundary. It will soon become clear that we need this more general setting.\\

If $(D',x',y')$ is another triple of this kind, then the Riemann mapping theorem provides us a conformal mapping 
$f:D\to D'$ and we can choose $f$ such that it extends continuously to $x$ and $y$ with $f(x)=x'$ and $f(y)=y'$. For such a mapping, we simply say that $f:(D,x,y)\to (D',x',y')$ is conformal.

\begin{lemma}For two triples there exists a conformal mapping $f:(D,x,y)\to (D',x',y')$.
\end{lemma}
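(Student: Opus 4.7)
The plan is to reduce both sides to a common model, namely the upper half-plane $\Ha$ with two distinguished boundary points, and then produce an automorphism of $\Ha$ that matches up the distinguished data.

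First I would invoke the Riemann mapping theorem to obtain conformal isomorphisms $\varphi:D\to\Ha$ and $\psi:D'\to\Ha$. These are a priori only defined in the interior; to make sense of $\varphi(x)$ and $\varphi(y)$ I need a continuous extension of $\varphi$ to $x$ and $y$, which is where the local connectivity of $\partial_\infty D$ in neighborhoods of $x$ and $y$ enters. Local connectivity of $\partial_\infty D$ at the single boundary point $x$ forces the prime end of $D$ lying over $x$ to consist of the single point $x$, and in that situation a standard prime-end argument (Carath\'eodory's continuity theorem localized at one boundary point, cf.\ \cite{MR1217706}) shows that $\varphi$ extends continuously to $x$. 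Applying this at $x$ and $y$, and similarly to $\psi$ at $x'$ and $y'$, produces well-defined boundary values $a:=\varphi(x),\,b:=\varphi(y)\in\partial_\infty\Ha=\R\cup\{\infty\}$ with $a\neq b$, and analogously $a',b'\in\partial_\infty\Ha$ with $a'\neq b'$.

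Next I would construct an $\alpha\in\operatorname{Aut}(\Ha)$ with $\alpha(a)=a'$ and $\alpha(b)=b'$. Since $\operatorname{Aut}(\Ha)$ consists precisely of the M\"obius transformations $z\mapsto(Az+B)/(Cz+D)$ with $A,B,C,D\in\R$ and $AD-BC>0$, it acts transitively on ordered triples of distinct points in $\hat\R$ respecting the induced cyclic orientation (Exercise \ref{boundary_auto} gives the $\D$-analogue, and conjugating by the Cayley transform transports it to $\Ha$). Picking any third point $c\in\hat\R\setminus\{a,b\}$ and any $c'\in\hat\R\setminus\{a',b'\}$ compatible with orientation, this yields $\alpha$ with $\alpha(a)=a'$, $\alpha(b)=b'$, $\alpha(c)=c'$. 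Then $f:=\psi^{-1}\circ\alpha\circ\varphi$ is a conformal map $D\to D'$ whose continuous extension (composed from the continuous extensions of $\varphi$, $\alpha$, and $\psi^{-1}$ at the relevant points) sends $x\mapsto x'$ and $y\mapsto y'$, as required.

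The main obstacle is the localized boundary extension in the first step. The versions of Carath\'eodory's theorem already quoted in the excerpt, namely Theorem \ref{loc_con} and Theorem \ref{fortsetzen}, respectively assume local connectivity of the \emph{whole} boundary or that both domains are Jordan, whereas here only local connectivity of $\partial_\infty D$ near the two distinguished points is available. I would therefore need the one-point version of Carath\'eodory's extension theorem, phrased in the language of prime ends. This is a routine strengthening of the results already invoked and introduces no new analytic ingredient; once it is in hand, the passage to $\Ha$ and the application of a suitable M\"obius transformation complete the argument immediately.
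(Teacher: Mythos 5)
Your proof is correct and follows essentially the same approach as the paper's: Riemann-map both domains to a standard model, extend the Riemann maps continuously to the two distinguished boundary points via a localized Carath\'eodory argument, and compose with a boundary automorphism (Exercise \ref{boundary_auto}) to match the data. The paper works in $\D$ where you work in $\Ha$, and the paper likewise glosses over the localization of Theorem \ref{loc_con} with the phrase ``applied locally,'' so your explicit flagging of that gap, while commendable, reflects the same underlying argument rather than a departure from it.
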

\begin{proof}Let $g:D\to \D$ be conformal. Due to Theorem \ref{loc_con} (applied locally), $g$ extends continuously to $x$ and $y$ with $g(x)=\alpha$, $g(x)=\beta$, and $\alpha, \beta$ are two points on $\partial \D$. In the same way we obtain a conformal mapping $g':D'\to\D$ with $g'(x')=\alpha'$, $g'(y')=\beta'$. Now there exists an automorphism $h:\D\to\D$ with $h(\alpha)=\alpha'$ and $h(\beta)=\beta'$, see Exercise \ref{boundary_auto}. Finally, $f = g'^{-1}\circ h\circ g$ satisfies the required conditions.
\end{proof}

Let $S(D,x,y)$ be the set of all $\gamma([0,1])$, where $\gamma:[0,1]\to D\cup \partial_\infty D$ is injective and continuous with $\gamma(0,1)\subset D$, $\gamma(0)=x$, and $\gamma(1)=y$. We equip $S(D,x,y)$ with some metric and the corresponding Borel $\sigma$-algebra.\\ 

We would like to find a probability measure $\mu_{D,x,y}$ on $S(D,x,y)$ having some special properties. The first property:
\begin{itemize}
	\item[$(i)$] Conformal invariance: if $f:(D,x,y)\to (D',x',y')$ is conformal, then the pullback $f^* \mu_{D',x',y'}$ of $\mu_{D',x',y'}$ with respect to $f$ is equal to $\mu_{D,x,y}$.
\end{itemize}

Due to $(i)$, it is sufficient to consider only the case $D=\Ha$, $x=0$, $y=\infty$. The automorphisms of $\Ha$ fixing $0$ and $\infty$ are the linear mappings $z\mapsto cz$, $c>0$. So $(i)$ implies further that $\gamma$ and $c\gamma$ have the same distribution. We see that $(i)$ corresponds to a scale-invariant random simple curve in $\Ha$ from $0$ to $\infty$.\\

The second property, the domain Markov property, is usually stated as follows. For every $t\in(0,1)$, the conditional distribution of $\gamma([t,1])$ given $\gamma([0,t])$ is a.s.\ equal to $\mu_{D\setminus \gamma([0,t]), \gamma(t),y}$. Note that $D\setminus \gamma([0,t])$ is not a Jordan domain, which explains why we consider more general domains. (Some conditions on the metric are needed such that we can induce Borel probability measures on the subcurves $\gamma([0,t])$, $\gamma([t,1])$.)

 \begin{figure}[H]
\rule{0pt}{0pt}
\centering
\includegraphics[width=7.5cm]{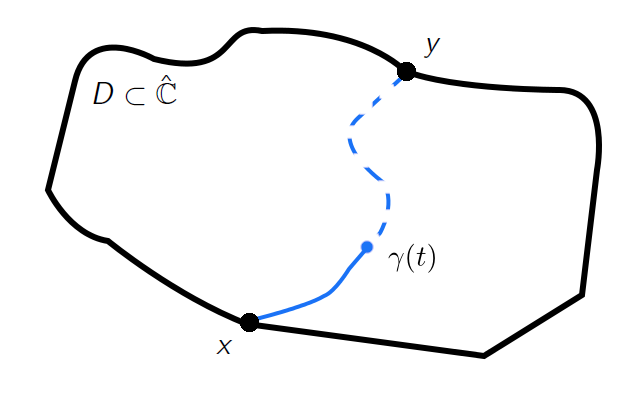}
\caption{The curve $\gamma([0,t])$ produces the triple $(D\setminus \gamma([0,t]), \gamma(t),y)$.}
\end{figure}

We aim for a slightly different definition. Consider again the case $D=\Ha$, $x=0$, $y=\infty$ and reparametrize $\gamma$ such that the conformal mapping $f_t:\Ha\to \Ha\setminus  \gamma([0,t])$ has hydrodynamic normalization, i.e.\ $f_t(z)=z-\frac{t}{z}+...$ at $\infty$. Due to Theorem \ref{slit_Loewner_eq_thm}, $f_t$ satisfies 

\begin{equation}\label{nhgggggg}
\frac{\partial}{\partial t} f_{t}(z) = \frac{\partial}{\partial z}f_{t}(z)\cdot \frac{1}{U(t)-z} \quad \text{for all $t\geq 0$, $f_{0}(z)=z\in \Ha$,}
\end{equation}
where $U:[0,\infty)\to\R$ is continuous with $U(0)=x=0$.

\begin{itemize}
	\item[$(ii)$] Assume that the metric on $S(D,x,y)$ is chosen such that, for each $t\geq 0$, the mapping $S(D,x,y)\ni \Gamma \mapsto U(t)$ 
	is continuous (e.g.\ the Hausdorff metric as in \cite{MR1776084}).
\end{itemize}

Under this assumption, the probability measure $\mu_{D,x,y}$ induces a stochastic process $(U(t))_{t\geq0}$ 
on $(\R, \mathcal{B}(\R))$.  The conformal mapping $f_t^{-1}-U(t)$ maps $\gamma[t,\infty]$ onto a curve $\hat{\gamma}[t,\infty]$ from $0$ to $\infty$ within $\Ha$. The driving function of this curve is simply given by $s\mapsto U(t+s)-U(t)$. We require:

\begin{itemize}
	\item[$(iii)$] Domain Markov property: For any $0\leq s,t$, $\sigma(U(t+s)-U(t))$ and $\sigma(\{U(\tau)\,|\, \tau\leq t\})$ are independent and 
	 the distribution of $U(t+s)-U(t)$ only depends $s$.
\end{itemize}

\begin{theorem}Under the assumptions (i)-(iii), there exists $\kappa\geq0$ such that $U(t) = \sqrt{\kappa/2}B_t$, where $B_t$ is a Brownian motion.
\end{theorem}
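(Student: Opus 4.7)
The plan is to combine the three assumptions to show that $(U(t))_{t \geq 0}$ satisfies the hypotheses of Theorem \ref{cont_levy} (a continuous L\'evy process on $\R$ must be $at + \sigma B_t$) and then kill the drift via the scaling symmetry inherited from conformal invariance.

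First, I would collect the regularity and independence structure of $U$. By Theorem \ref{slit_Loewner_eq_thm} applied to the slits $\gamma([0,t])$, the function $t \mapsto U(t)$ is automatically continuous with $U(0) = x = 0$, so no extra work is needed for sample-path continuity. Assumption (ii) ensures that $U(t)$ is a genuine $\R$-valued random variable on the underlying probability space. Assumption (iii) then reads exactly as the definition of a process with stationary and independent increments (the increment $U(t+s) - U(t)$ is independent of the past $\sigma(U(\tau)\colon \tau \leq t)$ and its distribution depends only on $s$). Therefore $U$ is a continuous L\'evy process in the sense of Definition \ref{addi_def} plus property (4), and Theorem \ref{cont_levy} yields the representation
\begin{equation*}
U(t) = at + \sigma B_t, \qquad a \in \R, \ \sigma \geq 0,
\end{equation*}
where $B_t$ is a standard Brownian motion.

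Next I would use conformal invariance (i) to force $a = 0$. The automorphisms of $\Ha$ that fix both $0$ and $\infty$ are precisely the dilations $z \mapsto cz$ with $c > 0$. By (i) applied to $f(z) = cz$, the pushforward of $\gamma$ under $f$ has the same law as $\gamma$. Exercise \ref{scaling_chordal} tells us exactly how the driving function transforms under such a scaling: the curve $c\gamma$, reparametrized by half-plane capacity, has driving function $t \mapsto c\,U(t/c^2)$. Hence the two processes
\begin{equation*}
\bigl(c\,U(t/c^2)\bigr)_{t \geq 0} \quad \text{and} \quad \bigl(U(t)\bigr)_{t \geq 0}
\end{equation*}
have the same finite-dimensional distributions for every $c > 0$. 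Since $c B_{t/c^2}$ is again a standard Brownian motion, the left-hand process equals $(a/c)\, t + \sigma B_t$ in law; comparing drifts across all $c > 0$ forces $a = 0$. Writing $\sigma = \sqrt{\kappa/2}$ with $\kappa := 2\sigma^2 \geq 0$ then gives $U(t) = \sqrt{\kappa/2}\, B_t$.

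The only genuinely delicate step is justifying that the driving function inherits the domain Markov and scaling properties cleanly: one has to check that reparametrizing $\gamma$ by half-plane capacity is measurable with respect to the $\sigma$-algebra on $S(\Ha, 0, \infty)$ provided by assumption (ii), and that the driving function of the ``shifted'' curve $f_t^{-1}(\gamma[t,\infty)) - U(t)$ is indeed $s \mapsto U(t+s) - U(t)$. Both are consequences of the uniqueness part of Theorem \ref{slit_Loewner_eq_thm} together with the composition rule for hydrodynamically normalized conformal mappings, but they are the points where the otherwise soft argument meets the underlying Loewner theory. Once these measurability and identification issues are settled, the rest of the argument is a short application of the classical L\'evy characterization plus the scaling trick above.
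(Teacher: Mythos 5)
Your proof is correct and follows essentially the same route as the paper: apply Theorem \ref{cont_levy} to the continuous process with stationary independent increments guaranteed by (iii), then use conformal invariance via the scaling relation from Exercise \ref{scaling_chordal} to eliminate the drift. The only difference is that you flag the measurability/identification issues behind (ii) and the domain Markov reformulation more explicitly than the paper, which simply takes them for granted.
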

\begin{proof}The domain Markov property implies that $U$ has independent and stationary increments. As $U$ is continuous a.s., Theorem \ref{cont_levy} implies that $U$ has the form $U(t)=at+b B_t$ with $a\in\R, b\geq 0$, and a Brownian motion $B_t$. \\
We apply conformal invariance once more for the automorphisms $z\mapsto cz$ of $\Ha$. Due to Exercise \ref{scaling_chordal}, $U(t)$ has the same distribution as $cU(t/c^2)$ for every $c>0$. Hence we have $a=0$ and we can put $b=\sqrt{\kappa/2}$.
\end{proof}

Conversely, if we fix $\kappa\geq 0$ and solve \eqref{nhgggggg} for $U_t = \sqrt{\kappa/2}B_t$, do we obtain a probability measure $\mu_{D,x,y}$ satisfying $(i)$-$(iii)$?\\
At least we obtain a random Loewner chain $(f_t)_{t\geq 0}$, which corresponds to the growth of some random sets $K_t$ via $f_t(\Ha)=\Ha\setminus K_t$ in $\Ha$. This evolution is called \emph{Schramm-Loewner evolution SLE$(\kappa)$}. If $\kappa \in [0,4]$, then, for any $t>0$, $K_t$ is indeed the image of a simple curve almost surely. If $\kappa\in (4,8)$, then, a.s., 
 $K_t$ is the image of a curve that touches itself plus the compact components of the complement. For $\kappa\geq 8$, $K_t$ is the image of a space-filling curve a.s. We refer the reader to \cite{lawler05}.\\
It follows that all measures $\mu_{D,x,y}$ satisfying $(i)$-$(iii)$ are parametrized by the parameter $\kappa \in [0,4]$ (modulo different choices of the metric).

\begin{remark}SLE as described here is also called \emph{chordal SLE}. 
We can also write $U_t = \sqrt{\kappa/2}B_t = \hat{B}_{\kappa/2 \cdot t}$ for another standard Brownian motion $\hat{B}$.
The factor $\frac1{2}$ is due to a slightly different convention in the literature. The Schramm Loewner evolution is usually described via the Loewner equation   
\[\text{$\frac{\partial}{\partial t} f_{t}(z) = \frac{\partial}{\partial z}f_{t}(z)\cdot \frac{2}{U(t)-z}$,\quad i.e. \quad \ $f_t(z)=z-\frac{2t}{z}+...$ at $\infty$,}\] and here SLE$(\kappa)$ corresponds to $U(t)=B_{\kappa\cdot t}$ for a Brownian motion $B_t$. This different normalization is better suited when chordal SLE is compared to other versions of SLE, e.g.\ radial SLE in the unit disc, see \cite{SW05}.
\end{remark}

 \begin{figure}[H]
\rule{0pt}{0pt}
\centering
\includegraphics[width=0.7\textwidth]{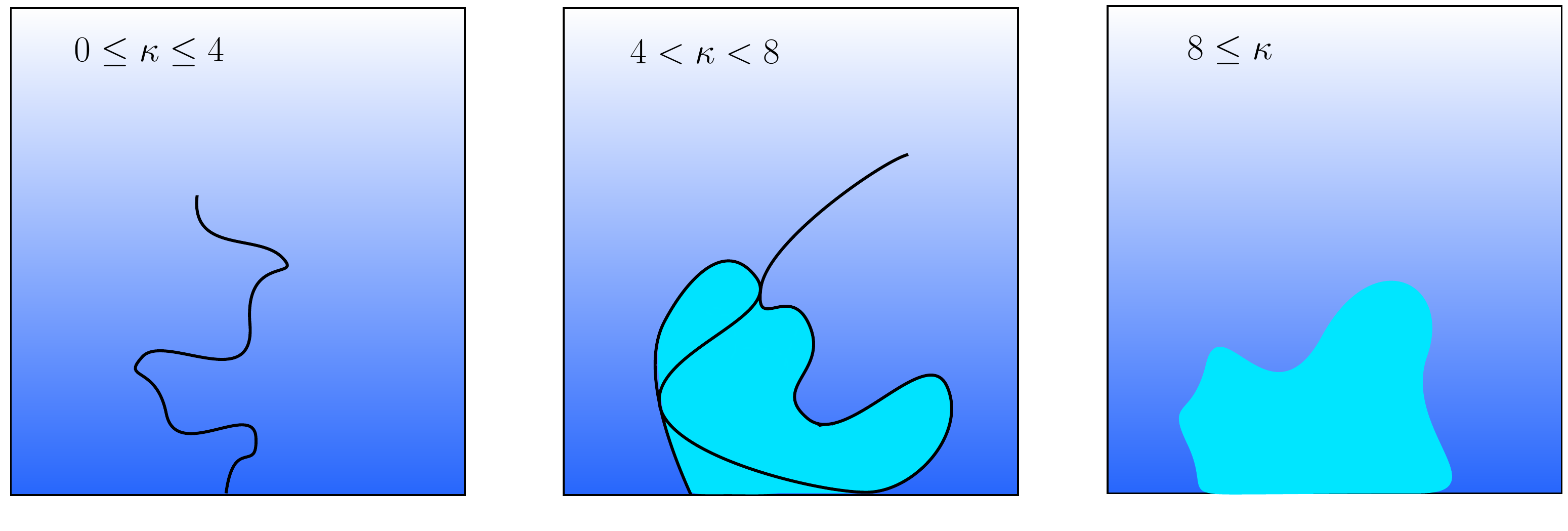}
\caption{A sketch for the three cases of SLE curves.}
\end{figure}

 \begin{figure}[H]
\rule{0pt}{0pt}
\centering
\includegraphics[width=0.9\textwidth]{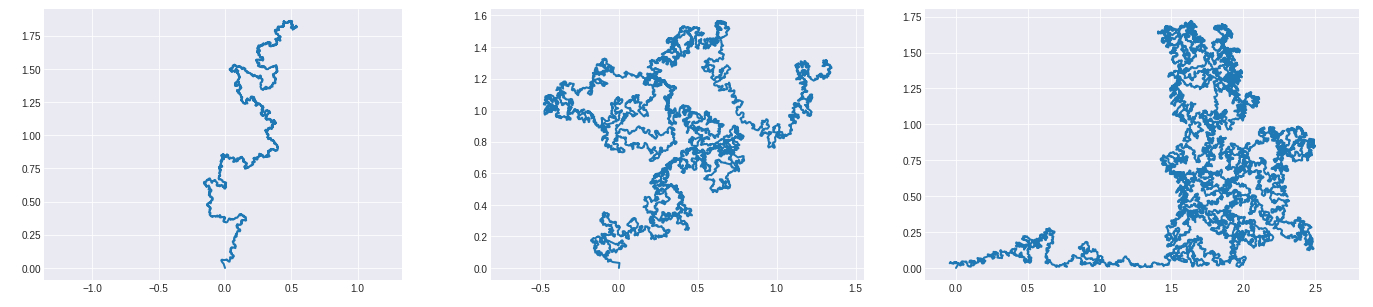}
\caption{Simulations for the cases $\kappa=2,6,9$.}
\end{figure}

The corresponding random growth processes $\{K_t\}_{t\geq0}$ have been shown to be the scaling limits of random curves from different models, depending on the value of $\kappa$:
\begin{itemize}
	\item $SLE(2)$: loop erased random walk
	\item $SLE(3)$: critical Ising model
	\item $SLE(4)$: harmonic explorer
	\item $SLE(4)$: contour lines of the discrete Gaussian free field
	\item $SLE(6)$: critical percolation
	\item $SLE(8)$: uniform spanning tree
\end{itemize}

\begin{remark}
Thinking of the Ising model, it is natural to look for an extension of SLE to a multivariate version, which describes several disjoint interface curves simultaneously. The construction of such a \emph{multiple SLE} requires more work and can also be approached from different points of view, see \cite{Karrila} and the references therein for the historical development and the recent progress.  \\

The works \cite{delMonacoSchleissinger:2016, dMHS18, HK18, HS19} consider the question whether there exists a limit Loewner chain as the number of the slits tends to $\infty$. 
Under certain assumptions (e.g.\ equal growth speed of each curve), the limit does exist and, funnily, it is described by the Loewner equation whose Herglotz vector field is given by the Voiculescu transforms of a free Brownian motion. 
\end{remark}

\chapter[Products of graphs]{Growing graph products as quantum random walks}

Undirected graphs can be represented by their adjacency matrices, which are symmetric and real-valued and thus self-adjoint quantum random variables. The five independences now arise as five products of graphs.\\

In this chapter we take a quantum probabilistic look at graph theory and we will see that certain independent increment processes can be regarded as growing graphs.

\section{Graphs as noncommutative random variables}

Let $V$ be a vertex set, finite or countable infinite, with a distinguished vertex $o\in V$.\\
Let $A:V\times V\to \{0,1\}$ be a symmetric matrix with $A_{xx}=0$ for all $x\in V$. \\

We can interpret $A$ as the adjacency matrix of an undirected, loop-free graph with vertex set $V$,
where $A_{xy}=1$ if and only if $x\sim y$, i.e.\ $x$ and $y$ are connected by an edge.

\begin{definition}
We define a \emph{(rooted)} \emph{graph} as such a triple $G=(V,A,o)$.\\
 For $x\in V$, the degree $\deg(x)$ of $x$ is defined as $\sum_{y\in V} A_{xy}$. The degree of the graph is defined as $\deg(G):=\deg(A):=\sup_{x\in V}\deg(x)$.
\end{definition}

If $\deg(A)<\infty$, then $A$ can be regarded as a bounded self-adjoint operator 
on the Hilbert space $l^2(V)$, see \cite[Theorem 3.2]{Moh82}.
The distinguished vertex $o\in V$ enables us to regard $A$ as a quantum 
random variable on the quantum probability space 
$(B(l^2(V)), \left<\delta_o, \cdot \delta_o\right>)$, where $\delta_o\in l^2(V)$ with $(\delta_o)(o)=1, (\delta_o)(x)=0$
for $x\not=o$. 

\begin{example}Let $V=\{0,1,2\}$ and connect $0$ with $1$ and $2$.\\[-1mm]
\begin{center}
\begin{tikzpicture}
\node[radius=3mm,label={[label]270: $0$}] (E1) at (0,0) [circle, fill=red] {};
\node[radius=3mm,label={[label]270: $1$}] (E2) at (-1,-1) [circle, fill=red] {};
\node[radius=3mm,label={[label]270: $2$}] (E3) at (1,-1) [circle, fill=red] {};
\coordinate (E8) at (3.5,0);
\draw[-] (E1) to (E2);
\draw[-] (E3) to (E1);
\end{tikzpicture}
\end{center}
 We choose $o=0$. The matrix $A$ has eigenvalues $\sqrt{2}$ with eigenvector $v_1=(\sqrt{2},1,1)/\sqrt{4}$, $-\sqrt{2}$ with eigenvector $v_2=(-\sqrt{2},1,1)/\sqrt{4}$, and $0$ with eigenvector $v_3=(0,-1,1)/\sqrt{2}$. Hence, the distribution $\mu$ of $A$ is given by 
\[\mu = \left<\delta_o,v_1\right>^2 \delta_{\sqrt{2}}+   \left<\delta_o,v_2\right>^2 \delta_{-\sqrt{2}}+ \left<\delta_o,v_3\right>^2 \delta_{0} = 
        \frac{1}{2} \delta_{\sqrt{2}}+  \frac{1}{2}  \delta_{-\sqrt{2}}.\]
	\hfill $\blacksquare$			
\end{example}

\newpage
\begin{example}\label{zet}
 Let $V=\Z$ with $A_{jk}=1$ if and only if $|j-k|=1$ and $0$ otherwise and choose $o=0$.\\[-1mm]
\begin{center}
\begin{tikzpicture}
\coordinate[label={[label distance=0mm]180: $\cdots$}] (E0) at (-3.5,0);
\node[radius=3mm,label={[label]270: $-3$}] (E1) at (-3,0) [circle, fill=red] {};
\node[radius=3mm,label={[label]270: $-2$}] (E2) at (-2,0) [circle, fill=red] {};
\node[radius=3mm,label={[label]270: $-1$}] (E3) at (-1,0) [circle, fill=red] {};
\node[radius=3mm,label={[label]270: $0$}] (E4) at (0,0) [circle, fill=red] {};
\node[radius=3mm,label={[label]270: $1$}] (E5) at (1,0) [circle, fill=red] {};
\node[radius=3mm,label={[label]270: $2$}] (E6) at (2,0) [circle, fill=red] {};
\node[radius=3mm,label={[label]270: $3$}] (E7) at (3,0) [circle, fill=red] {};
\coordinate[label={[label distance=0mm]0: $\cdots$}] (E8) at (3.5,0);
\draw[-] (E0) to (E1);
\draw[-] (E1) to (E2);
\draw[-] (E2) to (E3);
\draw[-] (E3) to (E4);
\draw[-] (E4) to (E5);
\draw[-] (E5) to (E6);
\draw[-] (E6) to (E7);
\draw[-] (E7) to (E8);
\end{tikzpicture}
\end{center}
 Then  the distribution of $A$ within the probability space $(B(l^2(\Z)), \left<\delta_o, \cdot \delta_o\right>)$ is given by the arcsine 
 distribution with mean 0 and variance 2, see \cite[Section 6.1]{acc}.\hfill $\blacksquare$
\end{example}

Only special distributions arise from such graph random variables.

\begin{theorem}\label{non_omentd}Let $\mu$ be the distribution of a rooted graph $(V,A,o)$, where $A$ is interpreted as a quantum random variable from $(B(l^2(V)), \left<\delta_o, \cdot \delta_o\right>)$. Then 
\[\int_\R x \mu(dx) = 0\quad \text{and} \quad \int_\R x^n \mu(dx) \in \N_0\quad \text{for all} \quad n\geq 2.\]
\end{theorem}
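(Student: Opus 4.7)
The plan is to interpret the moments $\varphi(A^n) = \langle \delta_o, A^n \delta_o\rangle$ combinatorially as counting closed walks of length $n$ based at the root $o$.

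First I would note that because $\deg(A) < \infty$ (which is what makes $A \in B(l^2(V))$ and allows us to speak of its distribution in the sense of Definition \ref{hjklpppp}), the moment is given by the matrix entry
\[
\int_\R x^n \, \mu(dx) \;=\; \varphi(A^n) \;=\; \langle \delta_o, A^n \delta_o\rangle \;=\; (A^n)_{oo}.
\]
Expanding the matrix product,
\[
(A^n)_{oo} \;=\; \sum_{x_1,\ldots,x_{n-1}\in V} A_{o,x_1}\,A_{x_1,x_2}\cdots A_{x_{n-1},o}.
\]
Each summand is a product of $0$/$1$ entries, hence itself $0$ or $1$, and equals $1$ precisely when $(o,x_1,x_2,\ldots,x_{n-1},o)$ is a closed walk in the graph $G$. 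Thus $(A^n)_{oo}$ equals the number of closed walks of length $n$ starting and ending at $o$.

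The bounded-degree assumption ensures this count is finite: from any vertex there are at most $\deg(G)$ neighbors, so the total number of walks of length $n$ starting at $o$ is at most $\deg(G)^n < \infty$. Therefore $(A^n)_{oo} \in \N_0$ for every $n \geq 1$, which proves the second claim for $n \geq 2$.

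Finally, the first moment is $\varphi(A) = (A)_{oo} = A_{oo} = 0$, since the graph is loop-free by the standing assumption $A_{xx}=0$. This handles the mean-zero statement.

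The proof is essentially immediate once one makes the combinatorial identification; the only ``obstacle'' worth flagging is the need to check convergence of the formal matrix-product sum when $V$ is infinite, which is settled by the bounded-degree hypothesis as above. No deeper machinery is required.
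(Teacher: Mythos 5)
Your proof is correct and follows essentially the same route as the paper: computing $\varphi(A^n)=(A^n)_{oo}$ and observing that the expansion is a sum of $0$s and $1$s, with $A_{oo}=0$ giving the vanishing of the first moment. You make the closed-walk interpretation and the finiteness check (via $\deg(G)<\infty$) explicit, which the paper leaves implicit, but the underlying argument is identical.
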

\begin{proof}
Clearly, $\int_\R x \mu(dx) = \left<\delta_o, A \delta_o\right>= A_{oo}= 0$ and for $n\geq 2$, 
$\int_\R x^n \mu(dx) = \left<\delta_o, A^n \delta_o\right>$ is a sum of $0$s and $1$s and thus belongs to $\N_0$.
\end{proof}

Theorem \ref{non_omentd} provokes the following inverse problem. 

\begin{question}Let $\mu\in \mathcal{P}_c(\R)$ with $\int_\R x \mu(dx) = 0$ and $\int_\R x^n \mu(dx) \in \N_0$ for all $n\geq 2$. 
Is there a graph $(V,A,o)$ with distribution $\mu$?
\end{question}

\begin{remark}
We note that one could generalize our setting by allowing loops and weighted edges. In this way, every $A\in B(l^2(V))$ with $A_{jk}=A_{kj}\in\R$ can be interpreted as a graph. 
\end{remark}

\section{Graph products and independence}

Let $G_1=(V_1, A^1, o_1), G_2=(V_2, A^2, o_2)$ be two graphs. We now construct new graphs with vertex set $V_3=V_1\times V_2$ and distinguished vertex $o_3=(o_1,o_2)$. For the cases other than the comb product, we refer to \cite{hora}.\\

\underline{\textbf{The comb product}}\\

The comb product $G_1 \rhd G_2 = (V_3,A^3,o_3)$ (with respect to $o_2$) 
is defined via 
\begin{equation}\label{comb_p} A^3_{(xx')(yy')} = A^1_{xx'}\delta_{yo_2}\delta_{y'o_2} + \delta_{xx'}A^2_{yy'}. \end{equation}
Here we use the symbol $\delta_{xy}=1$ if $x=y$, $\delta_{xy}=0$ if $x\not=y$.
It can be verified that $(x,y)\sim (x',y')$ if and only if 
\begin{itemize}
 \item $x\sim x'$, $x\not= x'$ and $y=y'=o_2$, or
 \item  $x= x'$, $y=y'=o_2$, and $x\sim x$ or $o_2\sim o_2$, or
 \item $x=x'$ and $y\sim y',$ $(y,y')\not=(o_2,o_2)$.
\end{itemize}

 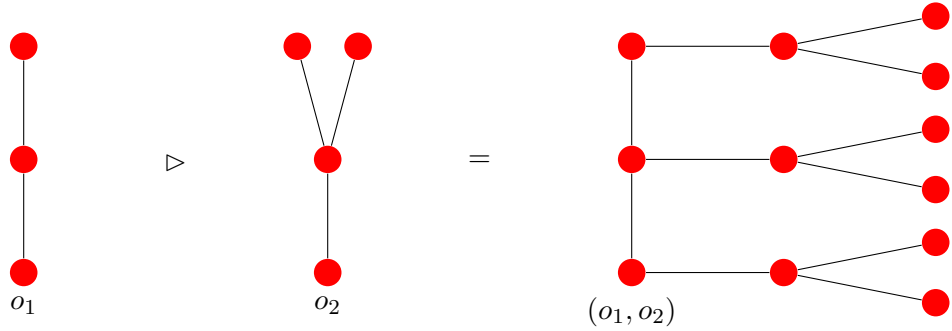
\begin{figure}[H]\vspace{-5mm}
\rule{0pt}{0pt}
\begin{center}
\begin{tikzpicture}
\node[radius=3mm][label={[label]270: $o_1$}] (A1) at (0,0) [circle, fill=red] {};
\node[radius=3mm] (A2) at (0,1.5) [circle, fill=red] {};
\node[radius=3mm] (A3) at (0,3) [circle, fill=red] {};
\coordinate[label={[label distance=-2mm]270: $\rhd$}] (A) at (2,1.5);
\node[radius=3mm][label={[label]270: $o_2$}] (B1) at (4,0) [circle, fill=red] {};
\node[radius=3mm]  (B2) at (4,1.5) [circle, fill=red] {};
\node[radius=3mm]  (B3) at (3.6,3) [circle, fill=red] {};
\node[radius=3mm]  (B4) at (4.4,3) [circle, fill=red] {};
\coordinate[label={[label distance=-2mm]270: $=$}] (AA) at (6,1.5);
\node[radius=3mm][label={[label]270: $(o_1,o_2)$}] (E1) at (8,0) [circle, fill=red] {};
\node[radius=3mm] (E2) at (8,1.5) [circle, fill=red] {};
\node[radius=3mm] (E3) at (8,3) [circle, fill=red] {};
\node[radius=3mm] (E4) at (10,0) [circle, fill=red] {};
\node[radius=3mm] (E5) at (10,1.5) [circle, fill=red] {};
\node[radius=3mm] (E6) at (10,3) [circle, fill=red] {};
\node[radius=3mm] (C1) at (12,0.4) [circle, fill=red] {};
\node[radius=3mm] (C2) at (12,-0.4) [circle, fill=red] {};
\node[radius=3mm] (C3) at (12,1.9) [circle, fill=red] {};
\node[radius=3mm] (C4) at (12,1.1) [circle, fill=red] {};
\node[radius=3mm] (C5) at (12,3.4) [circle, fill=red] {};
\node[radius=3mm] (C6) at (12,2.6) [circle, fill=red] {};
\draw[-] (A1) to (A2);
\draw[-] (A2) to (A3);
\draw[-] (B1) to (B2);
\draw[-] (B2) to (B3);
\draw[-] (B2) to (B4);
\draw[-] (E1) to (E2);
\draw[-] (E2) to (E3);
\draw[-] (E1) to (E4);
\draw[-] (E2) to (E5);
\draw[-] (E3) to (E6);
\draw[-] (E4) to (C1);
\draw[-] (E4) to (C2);
\draw[-] (E5) to (C3);
\draw[-] (E5) to (C4);
\draw[-] (E6) to (C5);
\draw[-] (E6) to (C6);
\end{tikzpicture}\vspace{-5mm}
\end{center}
\caption{The comb product of two graphs.}
\end{figure}

If $\deg(G_1), \deg(G_2)<\infty$, then the adjacency matrix $A^3$ of $G_1 \rhd G_2$ acts on $l^2(V_1\times V_2)\simeq l^2(V_1) \otimes l^2(V_2).$ Denote by $I^1$ the identity on $l^2(V_1)$ and by $P^2$ the projection from $l^2(V_2)$ onto the subspace spanned by 
$\delta_{o_2}$, i.e.\ $(P^2(\psi))(y) = \delta_{yo_2}\psi(o_2).$  Then one can verify that
 \begin{equation*} A_3 =  A^1 \otimes P^{2} + I^1 \otimes A^2.    
 \end{equation*}

More generally, we have the following decomposition. 

\begin{lemma}\label{dreidrei}
Let $G_1=(V_1,A^1,o_1),...,G_n=(V_n,A^n,o_n)$ be graphs. Denote by 
$I^k$ the identity on $l^2(V_k)$ and by $P^k$ the projection from $l^2(V_k)$ onto the subspace spanned by 
$\delta_{o_k}$, i.e.\ $(P^k(\psi))(y) = \delta_{yo_k}\psi(o_k).$  Denote by $B$ the adjacency matrix of the graph 
$G_1 \rhd G_2 \rhd ... \rhd G_n$. 
Then \begin{equation}\label{sumii}
 B= \sum_{j=1}^n I^1 \otimes ... \otimes I^{j-1} \otimes A^j \otimes P^{j+1} \otimes ... \otimes P^n.       
        \end{equation}
\end{lemma}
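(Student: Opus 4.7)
The plan is to prove the lemma by induction on $n$, using the two-factor comb product formula
\[
A^3 = A^1 \otimes P^2 + I^1 \otimes A^2
\]
that was established immediately before the lemma (equation for $A_3$ derived from \eqref{comb_p}). The base case $n=1$ is trivial since the sum reduces to $A^1$, and the case $n=2$ is exactly that two-factor formula.

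For the inductive step, assume the formula holds for $n$, and consider the $(n+1)$-fold comb product $G_1 \rhd G_2 \rhd \cdots \rhd G_{n+1}$. First I would interpret this as $(G_1 \rhd \cdots \rhd G_n) \rhd G_{n+1}$. To apply the two-factor formula, I need to identify the root of $G' := G_1 \rhd \cdots \rhd G_n$ and the associated rank-one projection. By the recursive definition, the root of $G'$ is $o' = (o_1, o_2, \ldots, o_n)$, and the corresponding $\delta$-vector in $l^2(V_1 \times \cdots \times V_n) \simeq l^2(V_1) \otimes \cdots \otimes l^2(V_n)$ is $\delta_{o_1} \otimes \cdots \otimes \delta_{o_n}$. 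Hence the rank-one projection $P'$ onto its span factorizes as $P' = P^1 \otimes P^2 \otimes \cdots \otimes P^n$. The identity on the product Hilbert space is $I' = I^1 \otimes \cdots \otimes I^n$.

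Applying the two-factor comb product formula gives
\[
B \;=\; A' \otimes P^{n+1} \;+\; I' \otimes A^{n+1},
\]
where $A'$ is the adjacency matrix of $G'$. Substituting the inductive hypothesis
\[
A' = \sum_{j=1}^{n} I^1 \otimes \cdots \otimes I^{j-1} \otimes A^j \otimes P^{j+1} \otimes \cdots \otimes P^n
\]
and distributing the tensor product with $P^{n+1}$ yields
\[
A' \otimes P^{n+1} = \sum_{j=1}^{n} I^1 \otimes \cdots \otimes I^{j-1} \otimes A^j \otimes P^{j+1} \otimes \cdots \otimes P^n \otimes P^{n+1},
\]
which together with the term $I^1 \otimes \cdots \otimes I^n \otimes A^{n+1}$ is exactly the claimed expression for $n+1$.

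The only mildly delicate step is the identification of $P'$ with $P^1 \otimes \cdots \otimes P^n$; this just requires checking that $(\delta_{o_1} \otimes \cdots \otimes \delta_{o_n})(x_1, \ldots, x_n) = \delta_{x_1 o_1} \cdots \delta_{x_n o_n}$ under the canonical isomorphism of $l^2$-spaces, which is a direct computation. One should also remark in passing that the two bracketings $(G_1 \rhd \cdots \rhd G_n) \rhd G_{n+1}$ and $G_1 \rhd (G_2 \rhd \cdots \rhd G_{n+1})$ give the same adjacency matrix, so the iterated comb product is unambiguous; but for the proof itself, fixing the left-associated bracketing is enough.
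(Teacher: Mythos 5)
Your induction is correct and is essentially the proof the paper has in mind: the paper simply remarks that the lemma "follows from definition \eqref{comb_p} and by induction," which is exactly the argument you spell out from the two-factor identity $A^3 = A^1 \otimes P^2 + I^1 \otimes A^2$. One minor observation: the factorization $P' = P^1 \otimes \cdots \otimes P^n$ that you flag as delicate is not actually used in the left-associated inductive step (only $I' = I^1 \otimes \cdots \otimes I^n$ and $P^{n+1}$ appear in $B = A' \otimes P^{n+1} + I' \otimes A^{n+1}$), though it would be relevant if you wanted to verify associativity of the iterated comb product, as you note in passing.
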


The lemma is a slightly more general version of \cite[Theorem 3.1]{acc}. Its proof follows 
 from definition \eqref{comb_p} and by induction. \\

The decomposition reminds us of of monotone independence, see Theorem \ref{models}.\\
So assume that $\sup\{\deg(v)\,|\, v\in V_j\}<\infty$ for all $j=1,...,n$. Then 
the adjacency matrix $B$ can be regarded as a quantum random variable in 
\[(B(l^2(V_1\times ... \times V_n)), \left<\delta_{o_1}\otimes ... \otimes \delta_{o_n}, \cdot (\delta_{o_1}\otimes ... \otimes \delta_{o_n})\right>).\] 
By \cite[Proposition 4.1]{acc}, the random variables 
$(I^1 \otimes ... \otimes I^{j-1} 
\otimes A^j \otimes P^{j+1} \otimes ... \otimes P^n)_{j\in(1,...,n)}$ are monotonically independent. 
Thus the distribution of $B$ is given by the monotone convolution of the distributions of 
the summands in \eqref{sumii}.  Furthermore, it is easy to see that 
the moments of 
$I^1 \otimes ... \otimes I^{j-1} \otimes A^j \otimes P^{j+1} \otimes ... \otimes P^n$ agree with the moments 
of $A^j$ within $(B(l^2(V_j)), \left<\delta_{o_j}, \cdot \delta_{o_j}\right>).$ Thus we obtain:

\begin{lemma}\label{viervier}Assume that $\sup\{\deg(v)\,|\, v\in V_j\}<\infty$ for all $j=1,...,n$. 
	Then the random variables 
$(I^1 \otimes ... \otimes I^{j-1} 
\otimes A^j \otimes P^{j+1} \otimes ... \otimes P^n)_{j\in(1,...,n)}$ are monotonically independent in the quantum probability space 
$(B(l^2(V_1\times ... \times V_n)), \left<\delta_{o_1}\otimes ... \otimes \delta_{o_n}, \cdot (\delta_{o_1}\otimes ... \otimes \delta_{o_n})\right>)$. Let $\mu_j$ be the distribution of $A_j$ within $(B(l^2(V_j)), \left<\delta_{o_j}, \cdot \delta_{o_j}\right>)$. Then $B$ has the
distribution 
\[ \mu_1 \rhd \mu_2 \rhd ... \rhd \mu_n.\]
\end{lemma}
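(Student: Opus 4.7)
The plan is to reduce the lemma to checking two things: (a) each summand $X_j := I^1 \otimes \cdots \otimes I^{j-1} \otimes A^j \otimes P^{j+1} \otimes \cdots \otimes P^n$ has distribution $\mu_j$ in the tensor-product state $\varphi(\cdot) = \langle \Omega,\cdot\,\Omega\rangle$ with $\Omega := \delta_{o_1}\otimes\cdots\otimes\delta_{o_n}$, and (b) the tuple $(X_1,\ldots,X_n)$ is monotonically independent. Once (a) and (b) are established, Lemma \ref{dreidrei} gives $B=X_1+\cdots+X_n$, and iterating Theorem \ref{monotone_convolution} (using associativity of $\rhd$, which follows from associativity of composition of $F$-transforms) yields the distribution $\mu_1\rhd\cdots\rhd\mu_n$.

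For (a), I would first observe that $(P^l)^2=P^l$ and $(I^l)^2=I^l$, so the multiplicativity $(A\otimes B)(C\otimes D)=(AC)\otimes(BD)$ gives
\[
X_j^k \;=\; I^1\otimes\cdots\otimes I^{j-1}\otimes (A^j)^k\otimes P^{j+1}\otimes\cdots\otimes P^n
\]
for every $k\geq 1$. Since $P^l\delta_{o_l}=\delta_{o_l}$ and $I^l\delta_{o_l}=\delta_{o_l}$, the operator $X_j^k$ acts on $\Omega$ nontrivially only in the $j$-th slot, so $\varphi(X_j^k)=\langle\delta_{o_j},(A^j)^k\delta_{o_j}\rangle$, which is the $k$-th moment of $\mu_j$.

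For (b), I would verify the two (stronger) conditions (i) and (ii) of Remark \ref{def-mon}. The decisive identity is
\[
P^j(A^j)^m P^j \;=\; \langle\delta_{o_j},(A^j)^m\delta_{o_j}\rangle\,P^j,
\]
which holds because $P^j=|\delta_{o_j}\rangle\langle\delta_{o_j}|$ is the rank-one projection onto $\mathbb{C}\delta_{o_j}$. Condition (i), for $i<j>k$ and $Y$ a polynomial in $X_j$, is then checked slot by slot: in every tensor slot $r\neq j$ the factors of $X_iYX_k$ agree with those of $X_iX_k$, while in slot $j$ a block of the form $P^j(A^j)^m P^j$ appears in $X_iYX_k$ and collapses to $\varphi(Y)\cdot P^j$ — exactly the $j$-th slot of $X_iX_k$ scaled by $\varphi(Y)$. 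Condition (ii) is a cascading version of the same principle: with the valley ordering $i_1>\cdots>i_r>j<k_s<\cdots<k_1$, in slot $r$ the factors coming from an index $>r$ contribute $P^r$, from the unique index $=r$ contribute a polynomial in $A^r$, and from indices $<r$ contribute $I^r$; iterated application of the key identity collapses the word in each slot to a product of scalar moments times a single projector that evaluates to $1$ against $\delta_{o_r}$.

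The main obstacle is the bookkeeping for condition (ii): one must track, for every $r\in\{1,\ldots,n\}$, which of the many factors in the long product contribute $I^r$, $P^r$ or $(A^r)^{\bullet}$, and show that the resulting word in slot $r$ reduces — after the $\varphi$-expectation is taken — to a product of scalars corresponding exactly to the right-hand side of (ii). A cleaner alternative that avoids most of this combinatorics is to induct on $n$ by splitting off the last factor and invoking the binary monotone independence statement of Theorem \ref{models} with $H_1=l^2(V_1\times\cdots\times V_{n-1})$, $H_2=l^2(V_n)$, $v_1=\delta_{o_1}\otimes\cdots\otimes\delta_{o_{n-1}}$, $v_2=\delta_{o_n}$: the partial sum $X_1+\cdots+X_{n-1}$ takes the form $\tilde B\otimes P^n$ (acting trivially on the last factor via $P^n$) while $X_n$ takes the form $I^{1,\ldots,n-1}\otimes A^n$, so $(\tilde B\otimes P^n,\,I\otimes A^n)$ is monotonically independent by Theorem \ref{models}, and the induction hypothesis dispatches the first $n-1$ terms.
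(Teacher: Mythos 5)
Your plan takes a genuinely different route from the paper: the paper dispatches the monotone independence with a one-line citation to \cite[Proposition 4.1]{acc}, whereas you give a direct verification built on the rank-one identity $P^j p(A^j)P^j = \langle\delta_{o_j},p(A^j)\delta_{o_j}\rangle P^j$, which is indeed exactly the right lever, and the distribution part (iterating Theorem~\ref{monotone_convolution} via associativity of $F$-transform composition) matches the paper's implicit use of that theorem. So your proposal is a more self-contained argument than the source. Two places, however, need tightening before the proof closes.

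First, your slot-by-slot verification of Muraki's condition (i) proves an \emph{operator} identity $XYZ=\varphi(Y)XZ$, and this only goes through when $X\in\mathcal{A}_i$, $Z\in\mathcal{A}_k$ are polynomials with zero constant term: only then do $X$ and $Z$ contribute the projection $P^{j}$ in slot $j$. If $X=I$ (or more generally has a constant part), the slot-$j$ factor coming from $X$ is $I^{j}$ rather than $P^{j}$, the slot-$j$ block becomes $(A^{j})^m P^{j}$, and this is \emph{not} a scalar multiple of $P^{j}$ because $(A^{j})^m\delta_{o_j}$ is not a multiple of $\delta_{o_j}$. So the operator identity fails for unital $\mathcal{A}_i$. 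What the lemma actually needs is the expectation identity of Definition~\ref{111}(4); to deduce that from your non-unital operator identity you should first expand each factor by multilinearity into constant part plus centered part and handle the resulting sub-words, taking care that dropping an $I$-factor can disturb the strict-peak condition. Second, in the induction variant the key step is stated but not argued: showing that the full $n$-tuple $(\tilde{X}_1\otimes P^n,\dots,\tilde{X}_{n-1}\otimes P^n, I\otimes A^n)$ is monotonically independent requires a case split on whether the peak index $j_k$ equals $n$. If $j_k=n$, the $n$-th tensor slot of the word contains a block $P^n p_k(A^n)P^n$, which collapses to $\varphi(Y_k)$ times the slot-$n$ product with $Y_k$ removed, while slots $1,\dots,n-1$ are unchanged (the $n$-th-slot factor of $Y_k$ is $I$ on the first $n-1$ slots). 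If $j_k<n$, then the $n$-th-slot product is a string of $P^n$'s both with and without $Y_k$, so slot $n$ contributes $1$ either way, and slots $1,\dots,n-1$ reduce to a word in the $\tilde{X}_j$'s in which position $k$ is still a strict peak, so the induction hypothesis applies. With these two points made explicit your argument is a valid alternative to the paper's citation.
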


\underline{\textbf{The direct product}}\\

The direct product $G_1 \times G_2 = (V_3,A^3,o_3)$ is defined by  $(x,y)\sim (x',y')$ if and only if 
$x=x'$ and $y\sim y'$ or $x\sim x'$ and $y=y'$.

 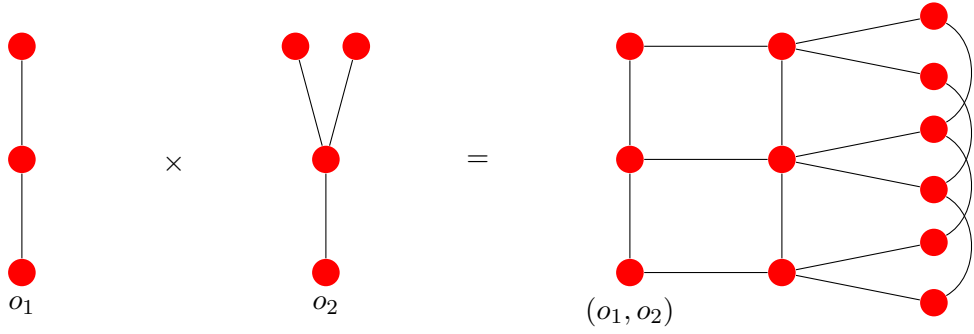
\begin{figure}[H]\vspace{-5mm}
\rule{0pt}{0pt}
\begin{center}
\begin{tikzpicture}
\node[radius=3mm][label={[label]270: $o_1$}] (A1) at (0,0) [circle, fill=red] {};
\node[radius=3mm] (A2) at (0,1.5) [circle, fill=red] {};
\node[radius=3mm] (A3) at (0,3) [circle, fill=red] {};
\coordinate[label={[label distance=-2mm]270: $\times$}] (A) at (2,1.5);
\node[radius=3mm][label={[label]270: $o_2$}] (B1) at (4,0) [circle, fill=red] {};
\node[radius=3mm]  (B2) at (4,1.5) [circle, fill=red] {};
\node[radius=3mm]  (B3) at (3.6,3) [circle, fill=red] {};
\node[radius=3mm]  (B4) at (4.4,3) [circle, fill=red] {};
\coordinate[label={[label distance=-2mm]270: $=$}] (AA) at (6,1.5);
\node[radius=3mm][label={[label]270: $(o_1,o_2)$}] (E1) at (8,0) [circle, fill=red] {};
\node[radius=3mm] (E2) at (8,1.5) [circle, fill=red] {};
\node[radius=3mm] (E3) at (8,3) [circle, fill=red] {};
\node[radius=3mm] (E4) at (10,0) [circle, fill=red] {};
\node[radius=3mm] (E5) at (10,1.5) [circle, fill=red] {};
\node[radius=3mm] (E6) at (10,3) [circle, fill=red] {};
\node[radius=3mm] (C1) at (12,0.4) [circle, fill=red] {};
\node[radius=3mm] (C2) at (12,-0.4) [circle, fill=red] {};
\node[radius=3mm] (C3) at (12,1.9) [circle, fill=red] {};
\node[radius=3mm] (C4) at (12,1.1) [circle, fill=red] {};
\node[radius=3mm] (C5) at (12,3.4) [circle, fill=red] {};
\node[radius=3mm] (C6) at (12,2.6) [circle, fill=red] {};
\draw[-] (A1) to (A2);
\draw[-] (A2) to (A3);
\draw[-] (B1) to (B2);
\draw[-] (B2) to (B3);
\draw[-] (B2) to (B4);
\draw[-] (E1) to (E2);
\draw[-] (E2) to (E3);
\draw[-] (E1) to (E4);
\draw[-] (E2) to (E5);
\draw[-] (E3) to (E6);
\draw[-] (E4) to (C1);
\draw[-] (E4) to (C2);
\draw[-] (E5) to (C3);
\draw[-] (E5) to (C4);
\draw[-] (E6) to (C5);
\draw[-] (E6) to (C6);
\draw[-] (E4) to (E5);
\draw[-] (E5) to (E6);
\draw [-,bend right=60] (C1) edge node {} (C3);
\draw [-,bend right=60] (C3) edge node {} (C5);
\draw [-,bend right=60] (C2) edge node {} (C4);
\draw [-,bend right=60] (C4) edge node {} (C6);
\end{tikzpicture}\vspace{-5mm}
\end{center}
\caption{The direct product of two graphs.}
\end{figure}

\begin{lemma}\label{dreidrei_direct}
Let $G_1=(V_1,A^1,o_1),...,G_n=(V_n,A^n,o_n)$ be graphs. Denote by 
$I^k$ the identity on $l^2(V_k)$ and by $B$ the adjacency matrix of the graph 
$G_1 \times G_2 \times ... \times G_n$. 
Then \begin{equation*}
 B= \sum_{j=1}^n I^1 \otimes ... \otimes I^{j-1} \otimes A^j \otimes I^{j+1} \otimes ... \otimes I^n.       
        \end{equation*}
\end{lemma}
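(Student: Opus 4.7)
The plan is to mimic the proof of Lemma \ref{dreidrei} for the comb product, but with the projections $P^{k}$ replaced by identities, since in the direct product the ``second factor'' is no longer forced to sit at the root. I will first settle the case $n=2$ by a direct entry-by-entry computation, and then extend to arbitrary $n$ by induction, exploiting the associativity of the direct product.

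For $n=2$, the defining relation $(x,y)\sim(x',y')$ iff ($x=x'$, $y\sim y'$) or ($x\sim x'$, $y=y'$) translates on the level of adjacency matrices into
\[
A^{3}_{(x,y)(x',y')} \;=\; \delta_{xx'}\,A^{2}_{yy'} \;+\; A^{1}_{xx'}\,\delta_{yy'}.
\]
Under the canonical isometry $\ell^{2}(V_{1}\times V_{2})\simeq \ell^{2}(V_{1})\otimes\ell^{2}(V_{2})$ we have $(A^{1}\otimes I^{2})_{(x,y)(x',y')}=A^{1}_{xx'}\delta_{yy'}$ and $(I^{1}\otimes A^{2})_{(x,y)(x',y')}=\delta_{xx'}A^{2}_{yy'}$, which yields exactly
\[
B \;=\; A^{1}\otimes I^{2} \;+\; I^{1}\otimes A^{2}.
\]
(Here I should briefly note why $A^{j}$ lifts to a bounded operator at all: this needs $\deg(G_{j})<\infty$, which is the same hypothesis as in Lemma \ref{dreidrei}; otherwise one works formally with symmetric matrices and the identity still holds entrywise.)

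For the inductive step, I would verify that the direct product is associative up to canonical isomorphism: the adjacency relation on $V_{1}\times V_{2}\times V_{3}$ obtained as $(G_{1}\times G_{2})\times G_{3}$ coincides with that of $G_{1}\times(G_{2}\times G_{3})$ (both say ``agree in all coordinates but one, and be adjacent there''). Granting this, assume the claim for $n-1$ factors and apply the $n=2$ case to $(G_{1}\times\cdots\times G_{n-1})\times G_{n}$. Writing $\tilde A$ for the adjacency matrix of $G_{1}\times\cdots\times G_{n-1}$ and $\tilde I$ for the identity on $\ell^{2}(V_{1}\times\cdots\times V_{n-1})\simeq \ell^{2}(V_{1})\otimes\cdots\otimes \ell^{2}(V_{n-1})$, the $n=2$ case gives
\[
B \;=\; \tilde A\otimes I^{n} \;+\; \tilde I\otimes A^{n},
\]
and inserting the induction hypothesis $\tilde A=\sum_{j=1}^{n-1} I^{1}\otimes\cdots\otimes A^{j}\otimes\cdots\otimes I^{n-1}$ together with $\tilde I=I^{1}\otimes\cdots\otimes I^{n-1}$ produces the claimed sum.

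The only thing that genuinely needs care, and which I expect to be the ``main obstacle'' (though a mild one), is the bookkeeping that identifies $\ell^{2}(V_{1}\times\cdots\times V_{n})$ with the iterated tensor product and checks that the two associations $(G_{1}\times G_{2})\times G_{3}$ and $G_{1}\times(G_{2}\times G_{3})$ produce the same adjacency relation under these identifications; everything else is an entrywise computation identical to the $n=2$ case.
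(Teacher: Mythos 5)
Your proposal is correct and follows the same route the paper indicates (for the parallel comb‑product Lemma \ref{dreidrei}, the paper simply says the proof follows from the defining relation and induction, and leaves the direct‑product version unproved): compute the $n=2$ case entrywise from the adjacency rule, note that $\delta_{xx'}A^2_{yy'}$ and $A^1_{xx'}\delta_{yy'}$ cannot both equal $1$ since the graphs are loop‑free, then induct using associativity of $\times$ and the tensor identification. Your remark on boundedness versus a purely formal matrix identity is well placed, since Lemma \ref{dreidrei_direct} itself carries no degree hypothesis while Lemma \ref{viervier_direct} does.
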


\begin{lemma}\label{viervier_direct}Assume that $\sup\{\deg(v)\,|\, v\in V_j\}<\infty$ for all $j=1,...,n$. 
	Then the random variables 
$(I^1 \otimes ... \otimes I^{j-1} 
\otimes A^j \otimes I^{j+1} \otimes ... \otimes I^n)_{j\in(1,...,n)}$ are tensor independent in the quantum probability space 
$(B(l^2(V_1\times ... \times V_n)), \left<\delta_{o_1}\otimes ... \otimes \delta_{o_n}, \cdot (\delta_{o_1}\otimes ... \otimes \delta_{o_n})\right>)$. Let $\mu_j$ be the distribution of $A_j$ within $(B(l^2(V_j)), \left<\delta_{o_j}, \cdot \delta_{o_j}\right>)$. Then $B$ has the
distribution 
\[ \mu_1 * \mu_2 * ... * \mu_n.\]
\end{lemma}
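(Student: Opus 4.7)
The plan is to use the factorization from Lemma \ref{dreidrei_direct} directly, exploiting the fact that tensor factors commute and the state vector $v = \delta_{o_1} \otimes \cdots \otimes \delta_{o_n}$ factorizes.

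First I would verify that each $T_j := I^1 \otimes \cdots \otimes I^{j-1} \otimes A^j \otimes I^{j+1} \otimes \cdots \otimes I^n$ has distribution $\mu_j$ with respect to the state $\varphi := \langle v, \cdot\, v\rangle$. This is a direct computation: since $T_j^k = I^1 \otimes \cdots \otimes (A^j)^k \otimes \cdots \otimes I^n$, we have
\[\varphi(T_j^k) = \langle \delta_{o_1}, I^1\delta_{o_1}\rangle \cdots \langle \delta_{o_j}, (A^j)^k \delta_{o_j}\rangle \cdots \langle \delta_{o_n}, I^n\delta_{o_n}\rangle = \int_{\mathbb R} x^k \,\mu_j(dx),\]
so the distribution of $T_j$ with respect to $\varphi$ coincides with $\mu_j$.

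Next I would establish tensor independence of the family $(T_1,\ldots,T_n)$ by verifying condition (1) of Definition \ref{111}. The key observation is that $T_jT_k = T_kT_j$ whenever $j \neq k$, since the nontrivial tensor slots do not overlap. Consequently, for any polynomials $p_1,\ldots,p_m$ and indices $j_1,\ldots,j_m \in \{1,\ldots,n\}$, the product $p_1(T_{j_1})\cdots p_m(T_{j_m})$ can be rearranged and collected to give a pure tensor $q_1(A^1)\otimes q_2(A^2)\otimes\cdots\otimes q_n(A^n)$, where $q_j$ is the product (in any order) of those $p_k$ with $j_k=j$ (taking $q_j \equiv 1$ if the index does not appear). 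Applying $\varphi$ and exploiting the factorization of the inner product on a tensor product Hilbert space yields
\[\varphi\bigl(p_1(T_{j_1})\cdots p_m(T_{j_m})\bigr) = \prod_{j=1}^n \langle \delta_{o_j}, q_j(A^j)\delta_{o_j}\rangle = \prod_{j=1}^n \varphi\Bigl(\prod_{k : j_k=j} p_k(T_{j_k})\Bigr),\]
which is exactly the defining identity of tensor independence.

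Finally, combining Lemma \ref{dreidrei_direct}, which gives $B = T_1 + \cdots + T_n$, with the tensor independence just proved and the definition of classical additive convolution for tensor independent random variables (classical independence corresponds to tensor independence, and their distributions convolve via $*$), the distribution of $B$ is $\mu_1 * \mu_2 * \cdots * \mu_n$. I do not expect any serious obstacles here: the boundedness assumption $\sup\{\deg(v) : v\in V_j\}<\infty$ is used only to guarantee that $A^j$, and hence each $T_j$ and $B$, is a bounded self-adjoint operator so that all moments and products are well-defined; the actual algebraic structure of the proof is fully captured by the tensor factorization.
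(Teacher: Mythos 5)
Your proof is correct. The paper does not actually give an explicit proof of Lemma~\ref{viervier_direct}: after establishing the corresponding comb-product result (Lemma~\ref{viervier}) by citing an external proposition for monotone independence, the direct-product and star-product lemmas are simply stated. Your argument supplies exactly the implicit content: you verify that each $T_j$ has distribution $\mu_j$ by factorizing the state across the tensor product, you establish the defining identity of tensor independence (Definition~\ref{111}~(1)) directly from commutativity of the tensor slots and factorization of the inner product, and you invoke the standard fact that tensor independence yields $*$-convolution of distributions for sums. This is the same route suggested by the paper's general tensor-independence model (Theorem~\ref{models}), worked out in full for $n$ factors. One small remark worth making explicit: in your final step, the claim that tensor independence of $T_1,\ldots,T_n$ implies that the distribution of $B=T_1+\cdots+T_n$ is $\mu_1*\cdots*\mu_n$ follows from expanding $\varphi(B^m)$ into words and applying the tensor-independence factorization term by term, which reproduces the moments of the classical convolution; you appeal to this but do not spell it out, and a reader might want a sentence noting that the moment identity, together with compactness of the supports (guaranteed by the degree bound), determines the distribution.
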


\underline{\textbf{The star product}}\\

The star product $G_1 \star G_2 = (V_3,A^3,o_3)$ is defined by  $(x,y)\sim (x',y')$ if and only if 
$x=x'=o_1$ and $y\sim y'$ or $x\sim x'$ and $y=y'=o_2$.

 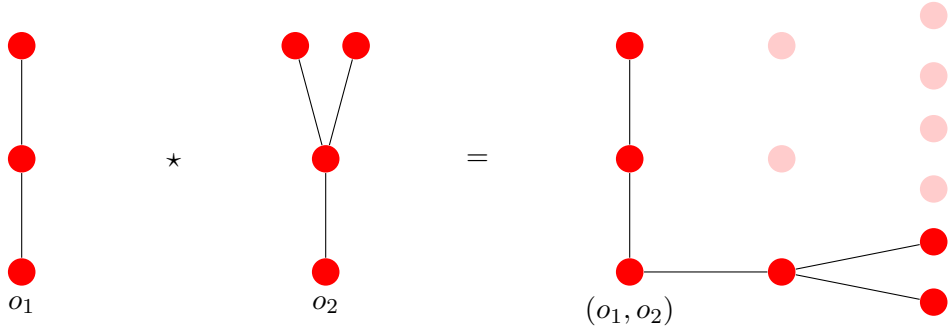
\begin{figure}[H]\vspace{-5mm}
\rule{0pt}{0pt}
\begin{center}
\begin{tikzpicture}
\node[radius=3mm][label={[label]270: $o_1$}] (A1) at (0,0) [circle, fill=red] {};
\node[radius=3mm] (A2) at (0,1.5) [circle, fill=red] {};
\node[radius=3mm] (A3) at (0,3) [circle, fill=red] {};
\coordinate[label={[label distance=-2mm]270: $\star$}] (A) at (2,1.5);
\node[radius=3mm][label={[label]270: $o_2$}] (B1) at (4,0) [circle, fill=red] {};
\node[radius=3mm]  (B2) at (4,1.5) [circle, fill=red] {};
\node[radius=3mm]  (B3) at (3.6,3) [circle, fill=red] {};
\node[radius=3mm]  (B4) at (4.4,3) [circle, fill=red] {};
\coordinate[label={[label distance=-2mm]270: $=$}] (AA) at (6,1.5);
\node[radius=3mm][label={[label]270: $(o_1,o_2)$}] (E1) at (8,0) [circle, fill=red] {};
\node[radius=3mm] (E2) at (8,1.5) [circle, fill=red] {};
\node[radius=3mm] (E3) at (8,3) [circle, fill=red] {};
\node[radius=3mm] (E4) at (10,0) [circle, fill=red] {};
\node[radius=3mm] (E5) at (10,1.5) [circle, fill=red!20] {};
\node[radius=3mm] (E6) at (10,3) [circle, fill=red!20] {};
\node[radius=3mm] (C1) at (12,0.4) [circle, fill=red] {};
\node[radius=3mm] (C2) at (12,-0.4) [circle, fill=red] {};
\node[radius=3mm] (C3) at (12,1.9) [circle, fill=red!20] {};
\node[radius=3mm] (C4) at (12,1.1) [circle, fill=red!20] {};
\node[radius=3mm] (C5) at (12,3.4) [circle, fill=red!20] {};
\node[radius=3mm] (C6) at (12,2.6) [circle, fill=red!20] {};
\draw[-] (A1) to (A2);
\draw[-] (A2) to (A3);
\draw[-] (B1) to (B2);
\draw[-] (B2) to (B3);
\draw[-] (B2) to (B4);
\draw[-] (E1) to (E2);
\draw[-] (E2) to (E3);
\draw[-] (E1) to (E4);
\draw[-] (E4) to (C1);
\draw[-] (E4) to (C2);
\end{tikzpicture}\vspace{-5mm}
\end{center}
\caption{The star product of two graphs.}
\end{figure}

\begin{lemma}\label{dreidrei_star}
Let $G_1=(V_1,A^1,o_1),...,G_n=(V_n,A^n,o_n)$ be graphs. Denote by  $P^k$ the projection from $l^2(V_k)$ onto the subspace spanned by 
$\delta_{o_k}$, i.e.\ $(P^k(\psi))(y) = \delta_{yo_k}\psi(o_k).$  Denote by $B$ the adjacency matrix of the graph 
$G_1 \star G_2 \star ... \star G_n$. 
Then \begin{equation*}
 B= \sum_{j=1}^n P^1 \otimes ... \otimes P^{j-1} \otimes A^j \otimes P^{j+1} \otimes ... \otimes P^n.       
        \end{equation*}
\end{lemma}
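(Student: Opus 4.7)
The plan is to establish the decomposition first in the base case $n=2$ by a direct unpacking of the definition of $\star$, and then bootstrap to general $n$ by induction, using that the distinguished vertex of an iterated star product is itself a simple tensor of the individual roots, so that the projection onto its span factors as a tensor of the $P^k$'s.

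For the base case, I would unwind the definition of $G_1 \star G_2$: the entry $B_{(x,y)(x',y')}$ equals $1$ exactly when either $x = x' = o_1$ and $y \sim y'$ in $G_2$, or $x \sim x'$ in $G_1$ and $y = y' = o_2$. Under the canonical identification $l^2(V_1 \times V_2) \simeq l^2(V_1) \otimes l^2(V_2)$, and noting that the matrix entries of the rank-one projection are $P^k_{u,u'} = \delta_{u,o_k}\delta_{u',o_k}$, this reads
\[B_{(x,y)(x',y')} = A^1_{x,x'}\,\delta_{y,o_2}\delta_{y',o_2} + \delta_{x,o_1}\delta_{x',o_1}\,A^2_{y,y'} = (A^1 \otimes P^2 + P^1 \otimes A^2)_{(x,y)(x',y')},\]
which is the claim for $n=2$.

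For the inductive step, assume the formula holds with $n-1$ factors and set $G' = G_1 \star \cdots \star G_{n-1}$, with root $o' = (o_1,\ldots,o_{n-1})$ and adjacency matrix $B'$. Associativity of $\star$ (an easy check from the definition, since edges in the iterated product link only vertices that differ in a single coordinate, with all other coordinates forced to equal the corresponding root) lets me write $G_1 \star \cdots \star G_n = G' \star G_n$. Applying the base case gives $B = B' \otimes P^n + P' \otimes A^n$, where $P'$ is the projection of $l^2(V')$ onto $\C\delta_{o'}$. Since $\delta_{o'} = \delta_{o_1} \otimes \cdots \otimes \delta_{o_{n-1}}$ under the identification $l^2(V') \simeq l^2(V_1) \otimes \cdots \otimes l^2(V_{n-1})$, this projection factors as $P' = P^1 \otimes \cdots \otimes P^{n-1}$. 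Plugging in the induction hypothesis for $B'$ and distributing the outer $\otimes P^n$ through the $n-1$ summands, together with the extra term $P^1 \otimes \cdots \otimes P^{n-1} \otimes A^n$ coming from $P' \otimes A^n$, yields exactly the $n$-term sum asserted.

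There is no real obstacle here; the proof is essentially bookkeeping. The only points requiring a bit of care are verifying associativity of $\star$ (so that the iterated product is well-defined independent of bracketing) and confirming that the tensor identification of $l^2(V_1 \times \cdots \times V_n)$ sends $\delta_{(o_1,\ldots,o_n)}$ to the simple tensor $\delta_{o_1} \otimes \cdots \otimes \delta_{o_n}$, which is what makes the projection $P'$ factor as claimed. Once these two observations are in hand, the induction runs automatically, exactly in parallel with the proofs of Lemmas \ref{dreidrei} and \ref{dreidrei_direct} for the comb and direct products.
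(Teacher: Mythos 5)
Your proof is correct, and it is essentially the argument the paper has in mind: the paper gives no explicit proof for this lemma but notes after the analogous comb-product Lemma \ref{dreidrei} that the decomposition ``follows from definition \ldots and by induction,'' which is precisely your strategy of unwinding the definition of $\star$ for $n=2$ and then inducting via associativity. The two auxiliary checks you flag (associativity of $\star$, and $\delta_{(o_1,\ldots,o_n)}=\delta_{o_1}\otimes\cdots\otimes\delta_{o_n}$ under the tensor identification so that the projection factors) are exactly the right points of care, and both hold.
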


\begin{lemma}\label{viervier_star}Assume that $\sup\{\deg(v)\,|\, v\in V_j\}<\infty$ for all $j=1,...,n$. 
	Then the random variables 
$(P^1 \otimes ... \otimes P^{j-1} 
\otimes A^j \otimes p^{j+1} \otimes ... \otimes P^n)_{j\in(1,...,n)}$ are Boolean independent in the quantum probability space 
$(B(l^2(V_1\times ... \times V_n)), \left<\delta_{o_1}\otimes ... \otimes \delta_{o_n}, \cdot (\delta_{o_1}\otimes ... \otimes \delta_{o_n})\right>)$. Let $\mu_j$ be the distribution of $A_j$ within $(B(l^2(V_j)), \left<\delta_{o_j}, \cdot \delta_{o_j}\right>)$. Then $B$ has the
distribution 
\[ \mu_1 \uplus \mu_2 \uplus ... \uplus \mu_n.\]
\end{lemma}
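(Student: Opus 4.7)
The plan is to mimic the Boolean case of Theorem \ref{models} iterated to $n$ factors, relying on the rank-one property of each projection $P^l$. For brevity write $X_j := P^1 \otimes \cdots \otimes P^{j-1} \otimes A^j \otimes P^{j+1} \otimes \cdots \otimes P^n$ and $v := \delta_{o_1} \otimes \cdots \otimes \delta_{o_n}$. Since each $P^l$ is a projection, $(P^l)^k = P^l$ for $k\geq 1$, so $X_j^k = P^1 \otimes \cdots \otimes P^{j-1} \otimes (A^j)^k \otimes P^{j+1} \otimes \cdots \otimes P^n$. I will then expand a general alternating product $X_{i_1}^{k_1} X_{i_2}^{k_2} \cdots X_{i_m}^{k_m}$, with $i_r \neq i_{r+1}$ and $k_r \geq 1$, as a single elementary tensor whose factor at position $l$ is the ordered product over $r$ of either $(A^l)^{k_r}$ (when $i_r = l$) or $P^l$ (when $i_r \neq l$).

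Next, I will apply the tensor-product state $\varphi(\cdot) = \langle v, \cdot\, v \rangle$, which factorizes as the product over $l = 1,\ldots,n$ of $\langle \delta_{o_l}, (\text{factor at position } l)\,\delta_{o_l}\rangle$. For any $l \notin \{i_1,\ldots,i_m\}$ the factor is $P^l$, contributing $1$. For $l \in \{i_1,\ldots,i_m\}$, enumerate the indices $r$ with $i_r = l$ as $r_1 < \cdots < r_s$; the non-consecutivity condition $i_r \neq i_{r+1}$ guarantees that between any two consecutive occurrences of $(A^l)^{k_{r_t}}$ there is at least one $P^l$. Invoking the two identities
\[ P^l B P^l = \varphi_l(B)\, P^l \qquad \text{and} \qquad P^l \delta_{o_l} = \delta_{o_l}, \]
where $\varphi_l = \langle \delta_{o_l}, \cdot\, \delta_{o_l}\rangle$, collapses the matrix element at position $l$ telescopically to $\prod_{t=1}^s \varphi_l((A^l)^{k_{r_t}})$.

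Multiplying across all $n$ positions then yields
\[ \varphi(X_{i_1}^{k_1}\cdots X_{i_m}^{k_m}) = \prod_{r=1}^m \varphi_{i_r}((A^{i_r})^{k_r}) = \prod_{r=1}^m \varphi(X_{i_r}^{k_r}), \]
which by linearity extends to all polynomials and is exactly the defining identity of Boolean independence from Definition \ref{111}. For the distribution claim, specializing to $m=1$ shows that the distribution of each $X_j$ with respect to $\varphi$ coincides with $\mu_j$, whence Theorem \ref{Boolean_convolution} (iterated $n-1$ times) identifies the distribution of $B = X_1 + \cdots + X_n$ as $\mu_1 \uplus \cdots \uplus \mu_n$. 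The only (mild) technical point to track will be the boundary cases $i_1 = l$ or $i_m = l$, where no $P^l$ sits on the outside of the string at position $l$; the identity $P^l \delta_{o_l} = \delta_{o_l}$ absorbs these and no extra hypothesis is needed.
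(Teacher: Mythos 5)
Your argument is correct and is exactly the intended one: the paper proves only the two-factor Boolean case in Theorem \ref{models} (via the same elementwise tensor factorization and rank-one collapse of $P^l$), states Lemma \ref{viervier_star} without proof, and your computation is the natural $n$-factor extension using the identity $P^l B P^l = \varphi_l(B) P^l$ combined with $P^l\delta_{o_l}=\delta_{o_l}$. The observation that $X_j^k$ retains the tensor form $P^1\otimes\cdots\otimes(A^j)^k\otimes\cdots\otimes P^n$ only for $k\geq 1$, and the resulting need to treat constant terms via linearity, is handled with the same level of care (or implicit convention) as the paper's own Theorem \ref{models} proof, so nothing is missing relative to the source.
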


\underline{\textbf{The free product}}\\

Free independence can be realized via the free product of graphs, which is more complicated than the other cases. We refer to \cite{acc2}.

\section{Approximation of additive processes}

Fix an independence and its convolution $\star\in\{*,\boxplus,\uplus,\rhd\}$ and let $\star_G$ be the corresponding graph  product. Let $T>0$ and let $(X_t)_{t\in [0,T]}$ be an additive process with distributions $(\mu_t)_{t\in [0,T]}$.\\

Next let $G_{n,k}$, $n\in\N$, $k=1,...,n$, be graphs and consider the discrete quantum process $(Y^n_k)_{k=1,...,n}$ given by the $n$ graphs
\[G_{n,1}, \quad G_{n,1} \star_G G_{n,2},\quad ..., \quad G_{n,1} \star_G G_{n,2} \star_G \cdots \star_G G_{n,n}.\]

Also $Y^n_k$ has independent increments. Can we approximate $X_t$ by $Y^n_k$, such that $G_{n,k}$ approximates the increment $X_{kT/n}-X_{(k-1)T/n}$? If $\mu_{n,k}$ is the distribution of $G_{n,1}  \star_G \cdots \star_G G_{n,k}$, then we would like that
 the scaled distribution $\mu_{n,\lfloor tn/T \rfloor}(c(n)\; \cdot)$ converges to $\mu_t$ as $n\to\infty$, where $c(n)>0$ is some scaling factor.\\ 

We already know a necessary condition. Consider the set $Gr(\star)$ of all $\star$-hemigroup distributions with compact support, mean $0$, and non-negative $n$-th moments for all $n\geq 2$. If $(X_t)_{t\in [0,T]}$ can be approximated in the way described, then necessarily $\mu_t\in Gr(\star)$ for all $t\in[0,T]$ due to Theorem \ref{non_omentd}. Conversely, we can ask the following question.

\begin{question}
Assume that $\mu_t\in Gr(\star)$ for all $t\in[0,T]$. Is it possible to approximate $(X_t)_{t\in [0,T]}$ by the graph products $(Y^n_k)_{k=1,...,n}$?
\end{question}

In the remaining sections we will look at the case of monotone independence and we see that an additional assumption on $(\mu)_{t\in [0,T]}$ guarantees that we can approximate $(X_t)_{t\in [0,T]}$ where all $G_{n,k}$ are special spidernets.

\section{Spidernets}

The main result of the work \cite{acc} (Theorem 5.1) can be interpreted as a discrete approximation of a monotone Brownian motion, a ``monotone quantum random walk'', via adjacency matrices of certain graphs. We now follow \cite{sch0}, which extends this idea.\\ 

First we construct special graphs whose distributions will be related to the Loewner equation.\\
We denote by $d(x,y)$ the length of the shortest walk 
within  a graph connecting $x$ and $y$. For $\eps\in\{-1,0,+1\}$, we define 
for any $x\in V$, 
\[\omega_{\eps}(x)=|\{y\in V\,|\, y\sim x, d(o,y)=d(o,x)+\eps\}|.\]

Let $a\in\N, b\in\N\setminus\{1\}$ and $c\in \N$ with $c \leq b-1$.
A \emph{spidernet with data $(a,b,c)$}, see \cite[Def. 4.25]{hora}), is a graph $(V,A,o)$ with root $o\in V$
such that 
\[ \omega_{+1}(o)=a,\quad \omega_{-1}(o)=\omega_0(o)=0, \quad \text{and} \quad
 \omega_{+1}(x)=c,\quad \omega_{-1}(x)=1,\quad \omega_{0}(x)=b-1-c\]
for all $x\in V\setminus\{o\}$ (and $A_{xy}\in\{0,1\}$ for all $x,y\in V$). 

  \begin{figure}[H]
\rule{0pt}{0pt}
\centering
\includegraphics[width=7cm]{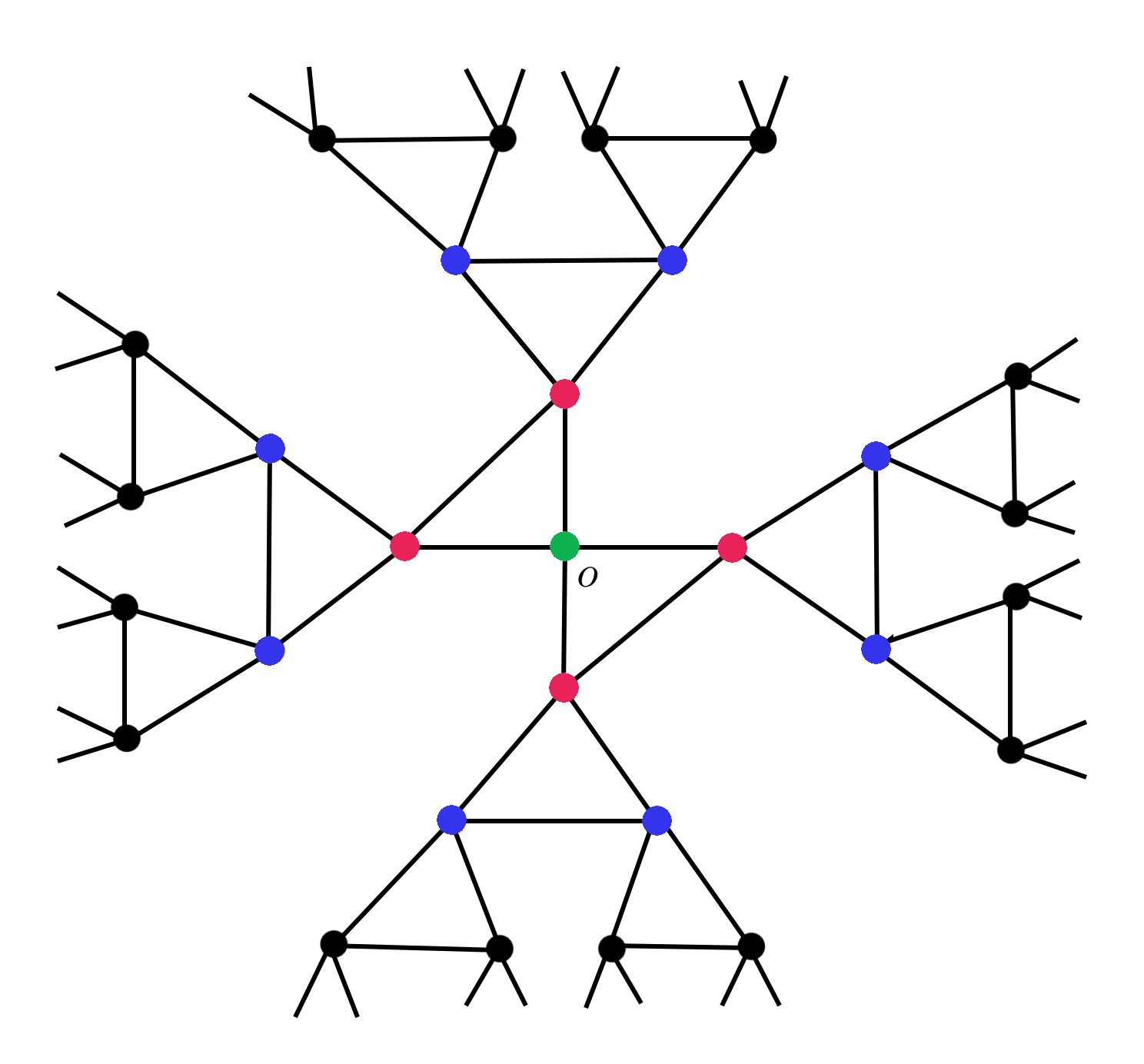}
\includegraphics[width=7cm]{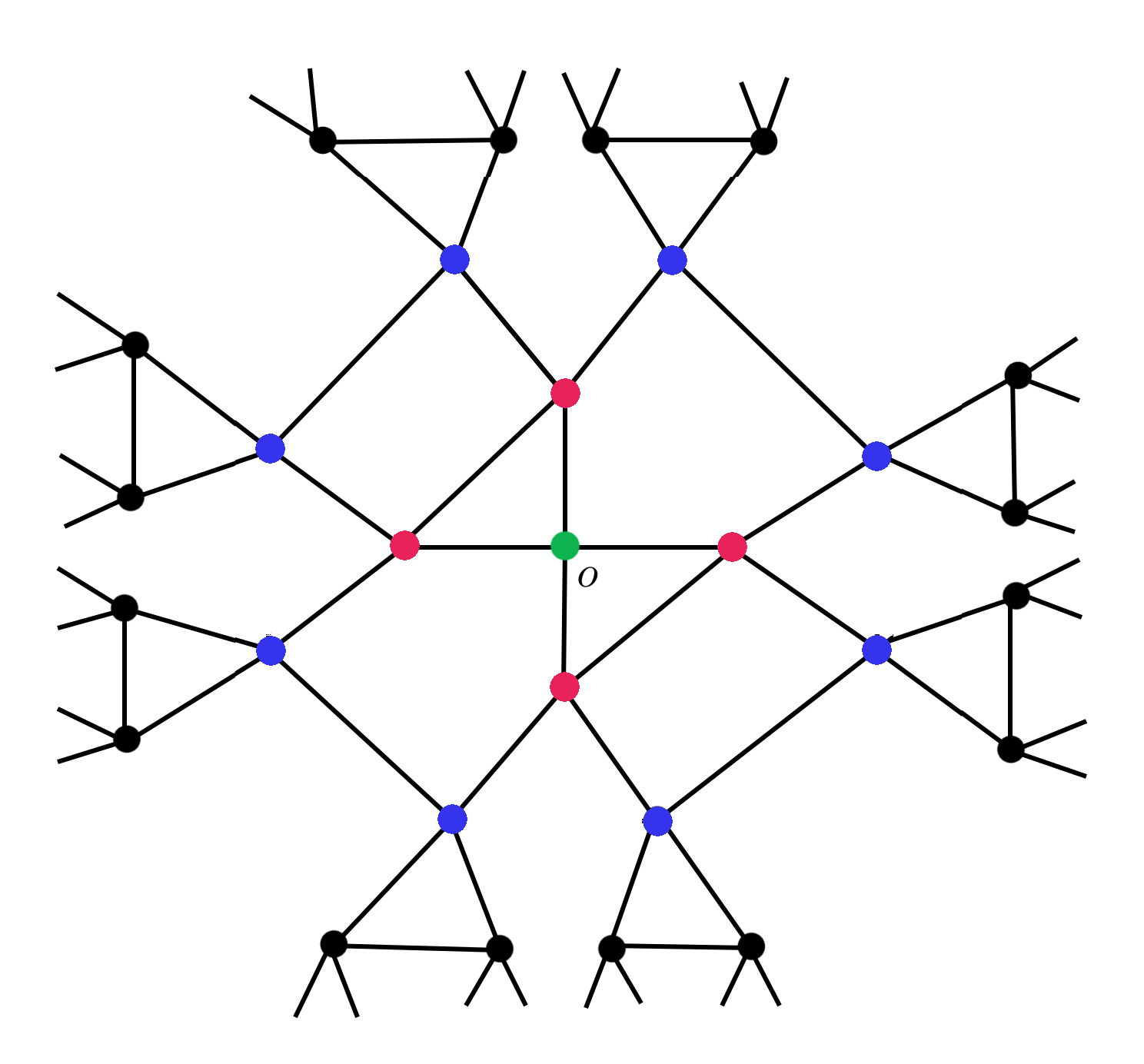}
\caption{Two spidernets with data $(4,4,2).$}
\end{figure}

\begin{example}\label{Wig_graph}The distribution of the spidernet with data $(1,2,1)$ is the Wigner law $\mu_{W,1}$, see \cite[Remark 8.3]{io}. 
This spidernet corresponds to the graph with vertex set $V=\N_0$, $o=0$, and $j\sim k$ if and only if $|j-k|=1$.\\[-5mm]
\begin{center}
\begin{tikzpicture}
\node[radius=3mm,label={[label]270: $0$}] (E4) at (0,0) [circle, fill=red] {};
\node[radius=3mm,label={[label]270: $1$}] (E5) at (1,0) [circle, fill=red] {};
\node[radius=3mm,label={[label]270: $2$}] (E6) at (2,0) [circle, fill=red] {};
\node[radius=3mm,label={[label]270: $3$}] (E7) at (3,0) [circle, fill=red] {};
\coordinate[label={[label distance=0mm]0: $\cdots$}] (E8) at (3.5,0);
\draw[-] (E4) to (E5);
\draw[-] (E5) to (E6);
\draw[-] (E6) to (E7);
\draw[-] (E7) to (E8);
\end{tikzpicture}\vspace{-8mm}
\end{center}\hfill $\blacksquare$
\end{example}

The spectrum of the adjacency matrix of a spidernet with respect to the quantum probability space $(B(l^2(V)), \left<\delta_o,\cdot \delta_o\right>)$ is the free Meixner law 
$m_{a,c,b-1-c}$, see \cite[Thm. 7.3]{io}. This distribution is described explicitly in \cite[Section B]{io}. 
For us it is sufficient to know its $F$-transform for the following special case.

\begin{lemma}\label{lemma000}
Let $n\in\N$ and $u\in\{0,...,2n-1\}$. Then there exists a spidernet $S_{n,u}$ with data $(2n, n+1+u, n)$.\\
The distribution $\mu$ of $S_{n,u}$ (the free Meixner law $m_{2n, n, u}$) has $0$ mean and variance $2n$, and its $F$-transform is given by 
\[F_\mu(z) = \sqrt{(z-u)^2-4n}+u.\] 
\end{lemma}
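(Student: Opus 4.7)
The plan is to construct the graph $S_{n,u}$ explicitly and then extract $F_\mu$ from the quantum decomposition of its adjacency matrix along distance shells. For existence, one builds $S_{n,u}$ inductively along the strata $V_0 = \{o\},\, V_1,\, V_2,\ldots$ with $|V_1| = 2n$ and, for each $k\geq 1$, every $x\in V_k$ having exactly $n$ neighbors in $V_{k+1}$, exactly $u$ neighbors in $V_k\setminus\{x\}$, and exactly one neighbor in $V_{k-1}$. The constraint $c=n\leq n+u=b-1$ is immediate, and the valency counts at each shell are compatible, so a straightforward greedy construction (taking each $|V_{k+1}|$ large enough and performing a uniform matching of ``half-edges'' within and between shells) realizes the required data.

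Next, apply the stratified quantum decomposition. The unit vectors $e_k := |V_k|^{-1/2}\sum_{x\in V_k}\delta_x$ satisfy $e_0=\delta_o$, are orthonormal, and by the spidernet regularity they span an $A$-invariant subspace on which $A$ acts as a Jacobi operator
\[
A e_k = \sqrt{\omega_{k+1}}\,e_{k+1} + \alpha_k\,e_k + \sqrt{\omega_k}\,e_{k-1},
\]
with $\alpha_0 = \omega_0(o) = 0$, $\alpha_k = b-1-c = u$ for $k\geq 1$, $\omega_1 = a = 2n$, and $\omega_k = c = n$ for $k\geq 2$. Since $A$ is bounded, the associated moment problem is determinate, and the distribution of $A$ in the state $\delta_o$ has Cauchy transform equal to the Stieltjes continued fraction
\[
G_\mu(z) = \cfrac{1}{z - \cfrac{2n}{z - u - \cfrac{n}{z - u - \cfrac{n}{z - u - \cdots}}}}\,.
\]

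Finally, the tail is self-similar: let $K(z)$ denote the value of the continued fraction starting at level $2$, so that $K = n/(z-u-K)$, i.e.\ $K^2 - (z-u)K + n = 0$. Choosing the branch that sends $\Ha$ into $\Ha^{-}$ (exactly as in Exercise \ref{semi_exercie} for the semicircle law) yields $K(z) = \tfrac{1}{2}\bigl((z-u) - \sqrt{(z-u)^2-4n}\bigr)$, and a one-line computation gives $2n/(z-u-K(z)) = (z-u) - \sqrt{(z-u)^2-4n}$. Hence
\[
G_\mu(z) = \frac{1}{u + \sqrt{(z-u)^2-4n}}, \qquad F_\mu(z) = \sqrt{(z-u)^2-4n} + u.
\]
Expanding $\sqrt{(z-u)^2-4n} = (z-u) - 2n/(z-u) + \mathcal{O}(1/z^3)$ at $\infty$ gives $F_\mu(z) = z - 2n/z + \mathcal{O}(1/z^2)$, so by Remark \ref{first_moment} the measure $\mu$ has mean $0$ and variance $2n$.

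The main obstacle is verifying the first step cleanly: the shell-by-shell construction of $S_{n,u}$ and the derivation of the Jacobi parameters from the spidernet data require care in the bookkeeping of adjacencies, but both are combinatorial and essentially standard (cf.\ \cite{hora}). Everything after that is algebraic manipulation of a self-similar continued fraction, together with the right branch choice for $\sqrt{(z-u)^2-4n}$ to ensure $G_\mu:\Ha\to\Ha^-$.
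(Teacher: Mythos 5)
Your proof is correct and is considerably more self-contained than the paper's. The paper disposes of the existence question in a sentence ("necessary condition $u\le 2n-1$; conversely, one can verify by induction") and then simply cites Igarashi--Obata \cite{io} (Theorem~7.3 and Equation~(B.1)) to identify the distribution as the free Meixner law $m_{2n,n,u}$ and to read off its Cauchy transform. You instead derive the Cauchy transform from scratch via the stratified quantum decomposition: checking that the shell vectors $e_k=|V_k|^{-1/2}\sum_{x\in V_k}\delta_x$ span an $A$-invariant subspace on which $A$ is tridiagonal, reading off the Jacobi parameters $\omega_1=2n$, $\omega_k=n$ ($k\ge 2$), $\alpha_0=0$, $\alpha_k=u$ ($k\ge1$) from the spidernet degree data, and then solving the self-similar tail of the associated Stieltjes continued fraction. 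I verified the bookkeeping: the parent contribution to $Ae_k$ has coefficient $\sqrt{2n}$ only at level $k=1$ (since $o$ has $2n$ children) and $\sqrt{n}$ thereafter, the same-shell contribution is $u$, and the child contribution matches by symmetry, so your Jacobi matrix is the right one. The branch choice and the algebra $2n/(z-u-K)=(z-u)-\sqrt{(z-u)^2-4n}$ are correct, and the expansion $F_\mu(z)=z-2n/z+\ldots$ together with Remark~\ref{first_moment} gives mean $0$ and variance $2n$. What your route buys: the reader sees exactly where the free Meixner parameters come from and never has to open \cite{io}. What it costs: the existence step still needs a sentence making the greedy construction honest, namely that on each shell $V_k$ (with $|V_k|=2n\cdot n^{k-1}$, which is $\ge u+1$ and of even product $u|V_k|$) one can place a $u$-regular graph, and that the $n$-to-$1$ parent--child matchings between shells are realizable; you gesture at this but it deserves to be stated, since it is precisely what the paper waves away with "by induction."
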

\begin{proof}From looking at the $2n$ vertices with $d(o,x)=1$, we get the necessary condition 
 $b-1-c = u \leq 2n-1$ for the existence of a spidernet with data $(2n, n+1+u, n)$. Conversely, one can verify by induction that for each $n\in\N$ and every $u\in\{0,...,2n-1\}$ there exists a spidernet with data $(2n, n+1+u, n)$. \\
The remaining statements follow from \cite[Thm. 7.3]{io} and the formula for the Cauchy transform of the free Meixner law in \cite[Equation (B.1)]{io}. 
\end{proof}

In the following, we denote by $S_{n,u}$  a fixed spidernet with data $(2n, n+1+u, n)$, $n\in\N$ and $u\in\{0,...,2n-1\}$.\\

Lemma \ref{lemma000} is a lucky coincidence for us. The spidernets $S_{n,u}$ have a distribution whose $F$-transform is univalent, and moreover, $\sqrt{(z-u)^2-4n}+u$ is a simple slit mapping for a vertical line segment in the upper half-plane from $u$ to $u+i2\sqrt{n}$.  
In other words, this mapping is the solution of the slit Loewner equation \eqref{slit_Loewner_eq} with $U(t)\equiv u$ at $t=2n$.  \\

Hence, approximating a non-negative driving function by piecewise constant non-negative driving functions is related to 
approximating the corresponding measures by distributions of spidernets. Together with Lemma \ref{viervier}, we expect that 
the monotone additive process associated to the measures driven by a non-negative driving function can be approximated by a sequence of growing graphs.
 
 \begin{figure}[ht]
\rule{0pt}{0pt}
\centering
\includegraphics[width=7.5cm]{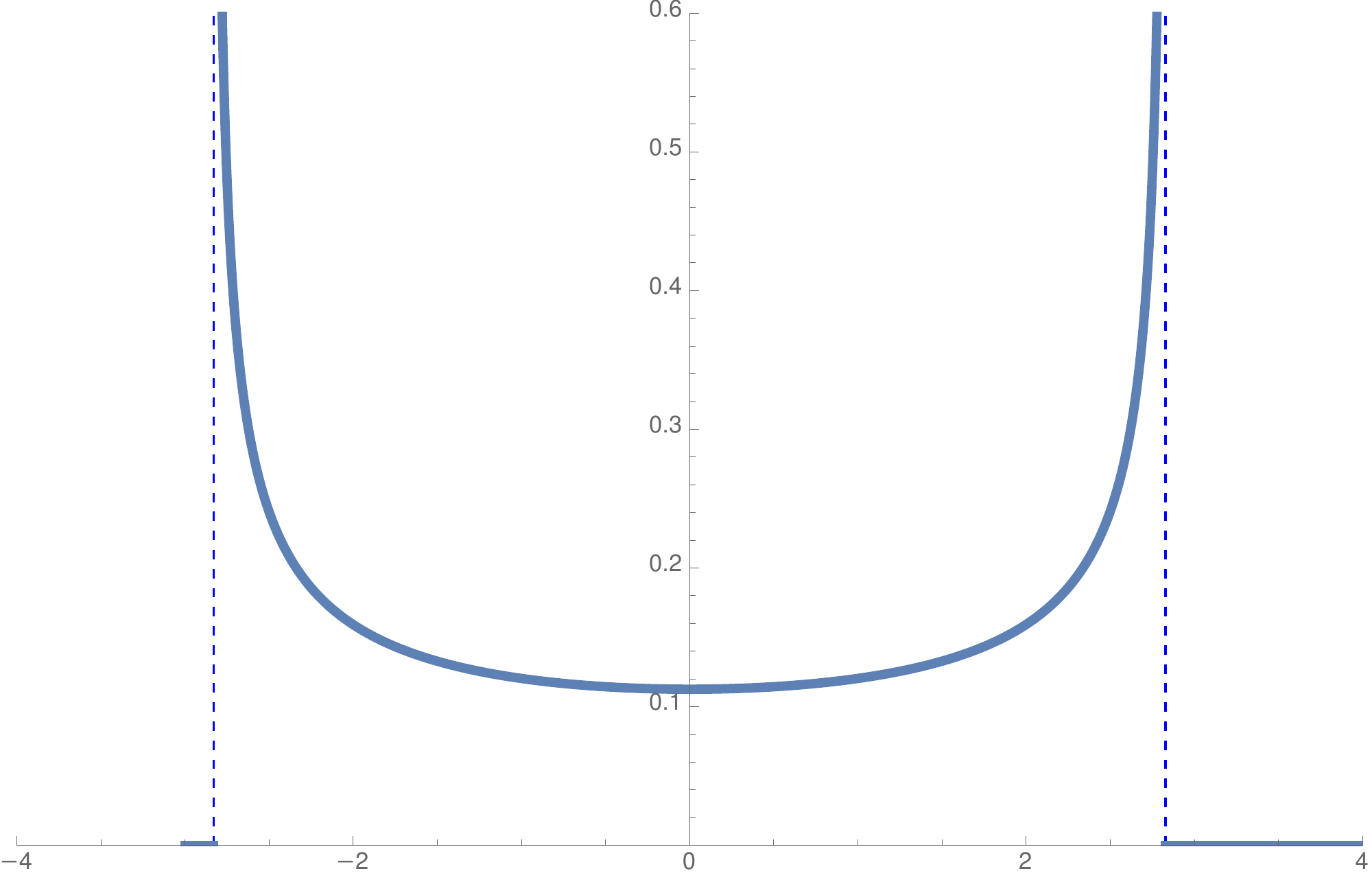}
\includegraphics[width=7.5cm]{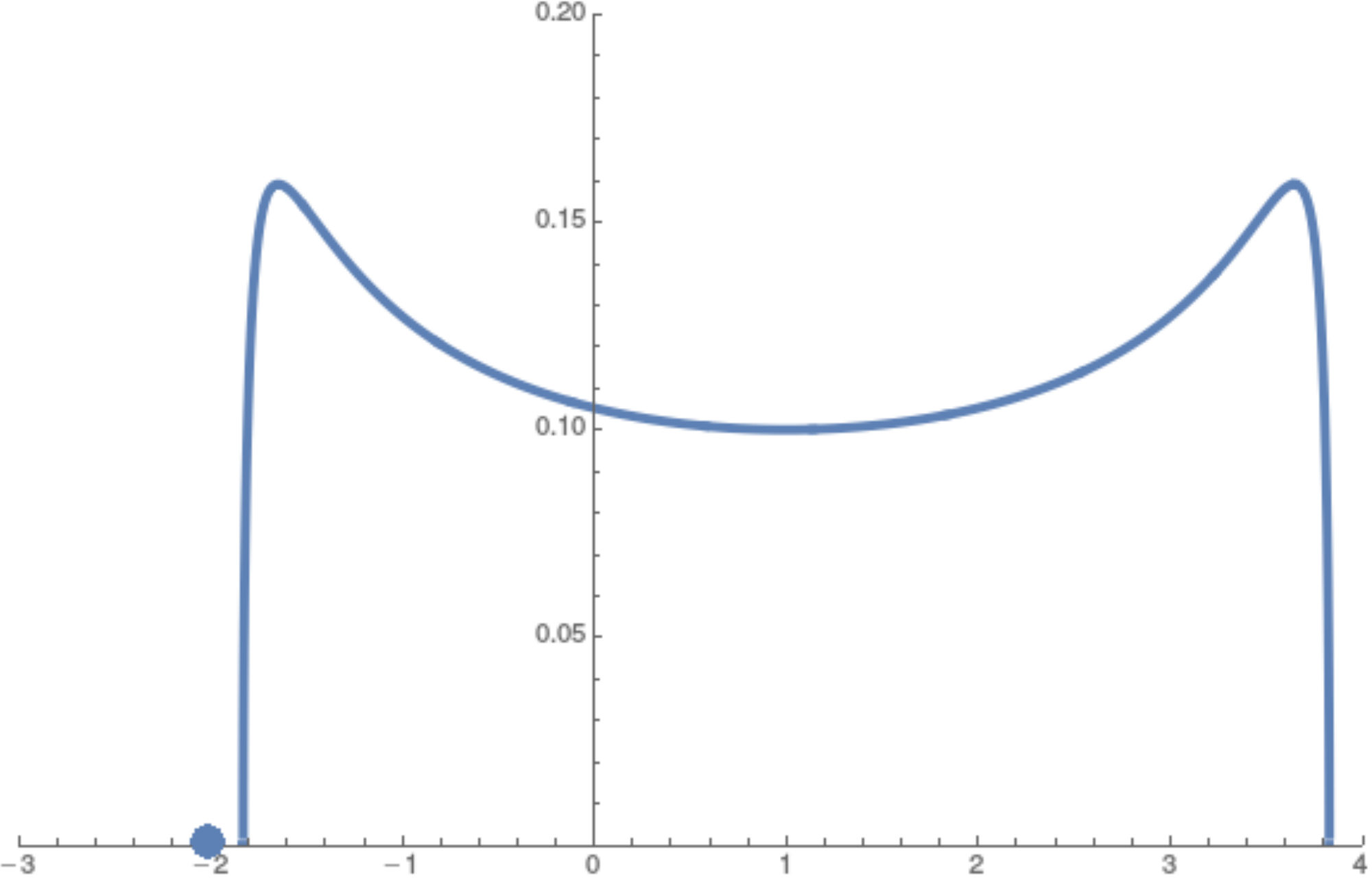}
\caption{Left: The free Meixner law $m_{4,2,0}$ is simply the arcsine distribution $A(0,4)$. Right: The density of $m_{4,2,1}$ in $[1-2\sqrt{2},1+2\sqrt{2}]$ and its atom at $-2$.}
\end{figure}

\section{Auxiliary approximation results}

\begin{definition}
Let $(\nu_t)_{t\geq0}$ be a family of probability 
measures on $\R$ such that $t\mapsto H(t,z):=\int_\R\frac{\nu_t(du)}{z-u}$ is measurable for every $z\in\Ha$, and  assume that there exists $M>0$ such that $\supp \nu_t \subset [-M,M]$ for all $t\geq0$. We denote the set of all 
such functions by $\mathcal{H}_M$.
\end{definition}

For  $H\in \mathcal{H}_M$ we consider the Loewner equation

\begin{equation}\label{slit33}
\frac{\partial}{\partial t} f_{t}(z) = -\frac{\partial}{\partial z}f_{t}(z)\cdot H(t,z)
\quad \text{for a.e.\ $t\geq 0$, $f_{0}(z)=z\in \Ha.$}
\end{equation}

\begin{theorem}\label{Houston} Let $(f_t)_{t\geq0}$ be the solution  of equation \eqref{slit33}. Then  each $f_t$ maps $\Ha$ conformally onto $\Ha\setminus K_t$ for a bounded set 
$K_t\subset {\Ha}$ and there exists a bound $C(t,M)>0$ such that $\sup_{z\in K_t} |z|< C(t,M)$.\\

There exists a unique family $(\mu_t)_{t\geq0}$ of probability measures, each with compact support and mean $0$, such that $f_t=F_{\mu_t}$. Furthermore, there exists a bound $D(t,M)>0$ such that $\supp \mu_t \subset [-D(t,M),D(t,M)]$.
\end{theorem}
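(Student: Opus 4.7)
The plan is to recognize equation \eqref{slit33} as the Loewner PDE associated with a Herglotz vector field of the special form covered by Theorem~\ref{is_addi}, and then to exploit the uniform bound $\supp\nu_t \subset [-M,M]$ to control the growing hull $K_t := \Ha\setminus f_t(\Ha)$.

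First, using the identity $(1+u^2)/(u-z) - u = (1+uz)/(u-z)$, I would rewrite
\[
-H(t,z)=\int_\R \frac{\nu_t(du)}{u-z}= a_t + \int_\R \frac{1+uz}{u-z}\,\rho_t(du),
\]
where $\rho_t(du):=\nu_t(du)/(1+u^2)$ is a finite non-negative measure with support in $[-M,M]$ and $a_t:=\int u/(1+u^2)\,\nu_t(du)$. Since $|H(t,z)|\leq 1/\Im z$, this is a Herglotz vector field of order $\infty$ in the form of Theorem~\ref{is_addi} with vanishing $b$-coefficient, so the unique solution $(f_t)$ of \eqref{slit33} guaranteed by Theorem~\ref{89} is an additive Loewner chain. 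In particular each $f_t$ is univalent (Theorem~\ref{EV_univalence}) and takes the form $f_t=F_{\mu_t}$ for a probability measure $\mu_t\in\mathcal P(\R)$, uniquely determined by $G_{\mu_t}=1/f_t$ through the Stieltjes-Perron inversion formula (Theorem~\ref{sp}).

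Second, I would analyze the inverses $g_t:=f_t^{-1}:\Ha\setminus K_t\to\Ha$, which satisfy the ODE $\dot g_t(w)=H(t,g_t(w))$ with $g_0(w)=w$. Because $H(t,\cdot)$ extends analytically to $\hat{\C}\setminus[-M,M]$ with $|H(t,z)|\leq 1/\dist(z,[-M,M])$, a bootstrap on $\psi(s):=|g_s(w)|^2$ using $|\dot\psi|\leq 4$ whenever $|g_s|\geq 2M$ shows that if $|w|^2>4t+4M^2$ then $|g_s(w)|>2M$ for every $s\in[0,t]$. Combined with $\Im H(t,\cdot)\leq 0$ (which keeps $\Im g_s$ non-increasing) and Schwarz reflection across $\R\setminus[-M,M]$, this forces any such $w\in\Ha$ to lie in $f_t(\Ha)$. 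Consequently $K_t\subset\{|z|\leq C(t,M)\}$ with, for instance, $C(t,M)=2\sqrt{t+M^2}$.

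Finally, expanding the ODE for $g_t$ at infinity yields $g_t(z)=z+t/z+O(1/z^2)$, hence $f_t(z)=z-t/z+O(1/z^2)$; by Remark~\ref{first_moment} this certifies that $\mu_t$ has compact support with mean $0$ (and in fact variance $t$), and the analytic extension of $f_t$ across $\R\setminus[-C(t,M),C(t,M)]$ furnished by the previous step confines $\supp\mu_t$ to $[-D(t,M),D(t,M)]$ with $D(t,M):=C(t,M)$. The main obstacle is the second step: one must verify carefully that the inverse-flow ODE survives up to time $t$ in the extended domain $\C\setminus[-M,M]$, that $\Im g_s(w)$ can only decrease to $0$ at a point of $\R\setminus[-M,M]$ (so the flow cannot escape through the ``bad'' interval), and that the resulting preimage genuinely corresponds to a point of $\Ha$ under $f_t$ rather than to a real preimage under the Schwarz-reflected extension. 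The remaining analytic ingredients are standard applications of the machinery of Chapter~\ref{sec_8}.
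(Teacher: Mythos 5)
Your proposal is correct, and its overall route---use $\supp\nu_t\subset[-M,M]$ to extend $f_t$ across a long interval in $\R$, deduce that the hull is bounded, and then read off mean zero and compact support---is the same as the paper's. The paper simply asserts the extension of $f_t$ across $\R\setminus[-A(t,M),A(t,M)]$ and cites \cite[Inequality (3.14) on p.74]{lawler05} for the radius bound on $K_t$, whereas you prove the radius bound directly by a bootstrap on $|g_s(w)|^2$ using $|H(t,z)|\le 1/\dist(z,[-M,M])$, obtaining the explicit constant $C(t,M)=2\sqrt{t+M^2}$; that is a welcome self-contained substitute for the citation. Your identification of $-H$ as a Herglotz vector field of additive type (via $\rho_t=\nu_t/(1+u^2)$ and the Nevanlinna algebra) and the coefficient expansion $g_t(z)=z+t/z+O(1/z^2)$ also recover mean~$0$, variance~$t$, and compact support a little more directly than the paper's back-reference to the proof of Theorem~\ref{ftransform_images}.

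The worry you flag at the end---that the flow might reach $\R\setminus[-M,M]$ before time $t$---is actually automatic to rule out, so it is not a genuine gap. Along $\dot g_s=H(s,g_s)$ one has
\[
\frac{d}{ds}\Im g_s(w)=-c(s)\,\Im g_s(w),\qquad c(s)=\int_\R \frac{\nu_s(du)}{|g_s(w)-u|^2}\ge 0;
\]
once your bootstrap guarantees $|g_s(w)|>2M$ for all $s\in[0,t]$, one gets $c(s)\le M^{-2}$, hence $\Im g_s(w)\ge \Im w\cdot e^{-t/M^2}>0$ on all of $[0,t]$, so the flow never reaches $\R$. Since $s\mapsto f_s(g_s(w))$ is constant along the characteristics, $f_t(g_t(w))=w$ and therefore $w\in f_t(\Ha)$, i.e.\ $w\notin K_t$, for every $w\in\Ha$ with $|w|>C(t,M)$.
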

\begin{proof}
The condition $\supp \nu_t \subset [-M,M]$ can be used to show that there is a bound $A(t,M)>0$ 
such that every $f_t$ extends conformally onto $I(t,M):=\R\setminus[-A(t,M),A(t,M)]$ with $f_t(I(t,M))\subset \R$.\\
This implies that there exists a bound $C(t,M)>0$ such that $\sup_{z\in K_t} |z|< C(t,M)$, see \cite[Inequality (3.14) on p.74]{lawler05}. \\

As in the proof of Theorem \ref{ftransform_images}, we see that the probability measures $\mu_t$ have compact support and mean $0$.
The existence of the uniform bound $D(t,M)$ follows from the bound $C(t,M)$ and the Stieltjes-Perron inversion formula.
\end{proof}

The following convergence result is standard in Loewner theory, see e.g.\ \cite[Lemma 4.12]{ghkk} for a slightly different setting.

\begin{lemma}\label{aprox_lemma}
Fix $T>0$. For every $n\in\N$, let $H_n(t,z)\in \mathcal{H}_M$. Assume that there exists 
$H(t,z)\in \mathcal{H}_M$ such that  \[\int_0^t H_n(s,z)ds \to \int_0^t H(s,z)ds\]
for every $t\in[0,T]$ locally uniformly in $\Ha$ as $n\to\infty$.\\
Let $f_{n,t}$ and $f_t$ be the solutions to \eqref{slit33} for the Herglotz 
vector fields $H_n(t,z)$ and $H(t,z)$ respectively. Then 
$f_{n,t}\to f_t$ for every $t\in[0,T]$ locally uniformly in $\Ha$.
\end{lemma}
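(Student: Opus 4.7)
My strategy is to combine a normal family argument for the sequence $\{f_{n,t}\}$ with a passage-to-the-limit in the associated reverse flow ODE, exploiting that the hypothesis is precisely an integrated (Carath\'eodory) convergence of driving vector fields.

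First I would establish uniform bounds. Since each $H_n\in\mathcal{H}_M$ with the same constant $M$, Theorem \ref{Houston} produces a uniform bound $C=C(T,M)$ such that $\Ha\setminus f_{n,t}(\Ha)\subset \overline{B(0,C)}$ and $\supp\mu_{n,t}\subset[-D,D]$ with $D=D(T,M)$, all uniform for $n\in\N$ and $t\in[0,T]$. Because each $f_{n,t}$ maps $\Ha$ into $\Ha$, the family $\{f_{n,t}\}_n$ is normal in $\mathrm{Hol}(\Ha,\overline{\Ha})$, and the Nevanlinna condition $f_{n,t}(z)=z+O(1/z)$ at $\infty$ with uniformly bounded expansion coefficients prevents constant limits.

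Next I would identify the limit via the reverse flow. Rewriting \eqref{slit33} according to Theorem \ref{89} and equation \eqref{EV_ord}, the transition mappings $f_{n,s,t}$ satisfy the ODE
\[
\tfrac{\partial}{\partial s} f_{n,s,t}(z)=H_n(s,f_{n,s,t}(z)),\qquad f_{n,t,t}(z)=z,
\]
so that $f_{n,t}=f_{n,0,t}$ is recovered as the value at $s=0$. The uniform support bound $[-M,M]$ for $\nu_{n,s}$ implies that $H_n(s,\cdot)$ is equi-bounded and equi-Lipschitz on compact subsets of $\Ha$, so the solutions $f_{n,s,t}(z)$ stay in a common compact set as long as the flow remains in $\Ha$. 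Extract a subsequence $(n_k)$ with $f_{n_k,t}\to\varphi$ locally uniformly on $\Ha$; I claim $\varphi=f_t$.

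The core step is continuous dependence of Carath\'eodory ODEs on the integrated vector field. The hypothesis gives, locally uniformly in $z$, $\int_0^\sigma H_{n_k}(s,z)\,ds\to\int_0^\sigma H(s,z)\,ds$ for every $\sigma\in[0,T]$; together with the uniform Lipschitz control of $H_n(s,\cdot)$ in $z$, a standard Gronwall-type estimate (as used, e.g., in \cite[Lemma 4.12]{ghkk}) yields $f_{n_k,s,t}(z)\to f_{s,t}(z)$ locally uniformly in $(s,z)\in[0,t]\times\Ha$. Taking $s=0$ gives $\varphi=f_t$ on $\Ha$. Since the subsequence was arbitrary, the full sequence $f_{n,t}$ converges locally uniformly to $f_t$.

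The main obstacle is the third step: the hypothesis is stated only pointwise in $t$, whereas the ODE comparison needs some uniformity in $s\in[0,t]$. I would handle this by first noting that $\sigma\mapsto\int_0^\sigma H_n(s,z)\,ds$ has monotone imaginary part (since each $H_n(s,z)\in\Ha^-$), so pointwise convergence of a monotone family of holomorphic functions to a continuous limit is automatically locally uniform in $\sigma$ as well as in $z$, by a Dini-type argument. This upgrade makes the Carath\'eodory comparison legitimate and closes the argument.
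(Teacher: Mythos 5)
Your proposal follows essentially the same Gronwall-based strategy as the paper: write the flow as an integral equation, exploit the uniform Lipschitz bound on Herglotz vector fields with $\supp\nu\subset[-M,M]$, and run a Gronwall estimate fed by the hypothesis on $\int_0^t H_n(s,\cdot)\,ds$. The paper works directly with the inverse maps $g_{n,t}=f_{n,t}^{-1}$, which obey $g_{n,t}(z)=z+\int_0^t H_n(s,g_{n,s}(z))\,ds$, so the estimate $|g_{n,t}(z)-g_t(z)|\le L(K')\int_0^t|g_{n,s}(z)-g_s(z)|\,ds+\eps_n$ plus Gronwall gives uniform convergence on compacts at once, with no need for a normal-family/subsequence detour; your extra compactness step is harmless but does no work. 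Where the two presentations diverge meaningfully is your final paragraph: you correctly flag that the hypothesis is pointwise in $t$ while the Gronwall term $\int_0^t\bigl(H_n(s,g_s(z))-H(s,g_s(z))\bigr)\,ds$ needs some uniformity along the curve $s\mapsto g_s(z)$ (the paper passes over this silently). However, your Dini fix is incomplete: the monotonicity argument controls only $\Im\int_0^\sigma H_n(s,z)\,ds$, since $H_n(s,z)\in\Ha^-$ gives a decreasing imaginary part in $\sigma$, but $\Re\int_0^\sigma H_n(s,z)\,ds$ is not monotone, so Dini says nothing about it. The cleaner repair is an equicontinuity argument: on a compact $K\subset\Ha$ with $\Im z\ge c>0$ one has the uniform Lipschitz bound $\bigl|\int_{\sigma_1}^{\sigma_2}H_n(s,z)\,ds\bigr|\le|\sigma_2-\sigma_1|/c$ for all $n$ and $z\in K$, and pointwise convergence of an equi-Lipschitz family is automatically uniform in $\sigma\in[0,T]$. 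This handles both real and imaginary parts simultaneously and, combined with the Lipschitz bound on $H_n(s,\cdot)$ and a step-function approximation of $s\mapsto g_s(z)$, justifies $\eps_n\to0$, after which Gronwall closes the argument exactly as you (and the paper) outline.
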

\begin{proof}
Let $\nu_{n,t}$ and $\nu_t$ be the measures associated to $H_n(t,z)$ and $H(t,z)$ respectively. \\
The set 
$\{\int_\R \frac{\nu(du)}{z-u}\,|\, \text{$\nu$ is a prob. measure with $\supp \nu\subset [-M,M]$}\}$ is a normal family. 
Thus, if $G\in \mathcal{H}_M$ and $K\subset \Ha$ is a compact set, then there exists 
$L(K)>0$ such that $|G(t,z)-G(t,w)|\leq L(K)|z-w|$ for all $z,w\in K$ and all $t\in[0,T]$.\\ 
We now look at $g_{n,t}:=f_{n,t}^{-1}$, $g_t:=f_t^{-1}$. These functions satisfy 
\begin{equation*}
\frac{\partial g_{n,t}(z)}{\partial t} =  \int_\R\frac{\nu_{n,t}(du)}{g_t(z)-u},\quad  \frac{\partial g_{t}(z)}{\partial t} =  \int_\R\frac{\nu_t(du)}{g_t(z)-u}
\quad \text{for a.e.\ $t\geq0$, $g_{0}(z)=z\in \Ha,$}
\end{equation*}
 and we have
\begin{equation*}
g_{n,t}(z) =  z + \int_0^t H_n(s,g_{n,s}(z)) ds,\quad 
g_{t}(z) =  z+\int_0^t H(s,g_{s}(z)) ds.
\end{equation*}
Now let $K\subset \Ha$ be a compact set on which all $g_{n,t}$ and $g_t$ are defined. Due to 
Theorem \ref{Houston}, there exists a second compact set 
$K'\subset \Ha$, $K\subset K'$, such that $g_{n,t}(z), g_{t}(z)\in K'$ for all $z\in K$, $n\in\N$, and 
$t\in[0,T]$.\\
 We know that $\int_0^t H_n(s,z)ds$ converges uniformly on $K'$ to 
$\int_0^t H(s,z)ds$ for all $t\in[0,T]$. Now fix $t\in[0,T]$. For $z\in K$ we have 
\[ |g_{n,t}(z)-g_t(z)|\leq \left| \int_0^t H_n(s,g_{n,s}(z))-H_n(s,g_{s}(z)) ds \right| + 
\left| \int_0^t H_n(s,g_{s}(z))-H(s,g_{s}(z)) ds \right| \leq \]
\[ L(K') \int_0^t |g_{n,s}(z)-g_{s}(z)| ds  + 
\eps_n, \] for a sequence $(\eps_n)_n$ converging to $0$.
Gronwall's lemma implies that $g_{n,t}\to g_t$ uniformly on $K$. Hence also $f_{n,t}\to f_t$ locally uniformly in $\Ha$.
\end{proof}

We can now prove the following result, which will reduce our problem of constructing graphs for equation \eqref{slit33} to the slit equation \eqref{slit_Loewner_eq}.

\begin{lemma}\label{Whitney}Let $H(t,z)=\int_\R\frac{\nu_t(du)}{z-u}\in \mathcal{H}_M$ and let $(f_t)_{t\geq0}$ be the corresponding 
solution to \eqref{slit33}.
 Furthermore, assume that $\supp \nu_t \subset [0,M]$ for all $t\geq0$.\\
 Fix $T>0$. Then there exists 
a sequence $U_n:[0,T]\to [0,M]$ of continuous non-negative driving functions such that the corresponding solutions $(f_{n,t})_{t\geq0}$ to \eqref{slit_Loewner_eq} 
converge locally uniformly to $f_t$  for every $t\in[0,T]$ as $n\to\infty$.
\end{lemma}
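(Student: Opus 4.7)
The plan is to apply Lemma \ref{aprox_lemma}: it suffices to construct continuous $U_n:[0,T]\to[0,M]$ such that the Herglotz vector fields $H_n(t,z):=\tfrac{1}{z-U_n(t)}$ satisfy $\int_0^t H_n(s,z)\,ds \to \int_0^t H(s,z)\,ds$ for each $t\in[0,T]$, locally uniformly in $z\in\Ha$. Since $\nu_t$ is a probability measure supported in $[0,M]$, all candidate $H_n$ will automatically belong to $\mathcal{H}_M$, and $[0,M]$-valued driving functions stay in $[0,M]$ under linear interpolation by convexity of $[0,M]$.

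I would carry out the construction in three discretization steps. First, time-average: partition $[0,T]$ into $n$ equal intervals $I_k=[t_{k-1},t_k]$, $t_k=kT/n$, and let $\bar\nu_{n,k}:=\tfrac{n}{T}\int_{I_k}\nu_s\,ds$, which is still a probability measure supported in $[0,M]$. Setting $\tilde H_n(t,z):=\int_\R\tfrac{\bar\nu_{n,k}(du)}{z-u}$ for $t\in I_k$, Fubini gives $\int_0^{t_k}\tilde H_n(s,z)\,ds=\int_0^{t_k}H(s,z)\,ds$ at the breakpoints; for general $t\in I_k$ the difference is bounded by the oscillation of $s\mapsto\int_\R\tfrac{\nu_s(du)}{z-u}$ on $I_k$ times $T/n$, which tends to $0$ locally uniformly in $\Ha$ as $n\to\infty$ (by the uniform bound $|H(s,z)|\le M_K$ on compact $K\subset\Ha$). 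Second, spatially discretize each $\bar\nu_{n,k}$ by the empirical measure $\bar\nu_{n,k}^{(n)}:=\tfrac{1}{n}\sum_{j=1}^n\delta_{x_{n,k,j}}$ where $x_{n,k,j}=F_{\bar\nu_{n,k}}^{-1}(j/(n+1))$ are quantiles in $[0,M]$. Then $\bar\nu_{n,k}^{(n)}\to\bar\nu_{n,k}$ weakly as $n\to\infty$, hence their Cauchy transforms converge locally uniformly in $\Ha$ (by Lemma \ref{lemmaconvergence}); moreover, the estimate $|\int(z-u)^{-1}(d\bar\nu_{n,k}-d\bar\nu_{n,k}^{(n)})(u)|\le C_K/n$ on any compact $K\subset\Ha$ can be obtained from the Lipschitz bound for $(z-\cdot)^{-1}$ on $[0,M]$ and the quantile construction (since quantiles partition $[0,M]$ into intervals of $\bar\nu_{n,k}$-mass $1/n$). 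Subdivide each $I_k$ into $n$ further pieces of length $T/n^2$ and define $V_n(t):=x_{n,k,j}$ on the $j$-th piece; then $V_n:[0,T]\to[0,M]$ is piecewise constant and $\int_0^t\tfrac{1}{z-V_n(s)}\,ds$ agrees with $\int_0^t\tilde H_n(s,z)\,ds$ at the endpoints of each sub-interval, up to an error of order $1/n$.

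Third, enforce continuity: on each sub-interval of length $T/n^2$, replace $V_n$ on a small ``transition'' segment of length $T/n^3$ at the left endpoint by the linear interpolation from the previous value to the current one. This produces a continuous $U_n:[0,T]\to[0,M]$, and on every such transition segment the error in $\int\tfrac{ds}{z-U_n(s)}$ is bounded by $T/n^3\cdot 1/\dist(z,[0,M])$. Summing over the $n^2$ sub-intervals yields a total perturbation of order $T/n\cdot 1/\dist(z,[0,M])$, which again tends to $0$ locally uniformly on $\Ha$. Combining the three steps via the triangle inequality gives $\int_0^t H_n(s,z)\,ds\to\int_0^t H(s,z)\,ds$ uniformly in $t\in[0,T]$ and locally uniformly in $z\in\Ha$, and Lemma \ref{aprox_lemma} finishes the proof, with the slit solutions $(f_{n,t})$ being precisely those from \eqref{slit_Loewner_eq} with driving function $U_n$.

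The main obstacle I expect is the second step: coordinating the two discretizations so that the quantile/empirical approximation error is genuinely small in the sup-norm over $t\in[0,T]$ and locally uniformly in $z\in\Ha$, while also ensuring the support condition $U_n(t)\in[0,M]$ is preserved. The crucial ingredient is that $(z-\cdot)^{-1}$ is uniformly Lipschitz on $[0,M]$ for $z$ in any compact subset of $\Ha$, which converts weak convergence of the empirical measures to quantitative locally uniform convergence of Cauchy transforms at rate $O(1/n)$; without this quantitative bound, one could only obtain convergence at each fixed $t$ rather than uniformly in $t$, which is exactly what Lemma \ref{aprox_lemma} requires.
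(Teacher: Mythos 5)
Your argument is correct and reaches the same destination as the paper, but the route is organized quite differently. The paper works through a chain of abstract reductions in the order piecewise-continuous $\to$ continuous driving function (Step 1), multi-slit with continuous weights $\to$ single piecewise-continuous slit (Step 2), multi-slit with measurable weights $\to$ continuous weights via $L^1$-approximation (Step 3), and finally general $H$ $\to$ multi-slit by cutting $[0,M]$ into $m$ fixed spatial cells with midpoint driving functions $V_{k,m}$ and measurable weights $\lambda_{k,m}(t)=\nu_t(I_{k,m})$ (Step 4). You instead build $U_n$ directly: first time-average $\nu_t$ over equal blocks $I_k$, then spatially discretize each averaged measure by $n$ quantiles, then spread those quantiles evenly over $n$ sub-sub-intervals of $I_k$ to get a piecewise constant $V_n$, and finally smooth. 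One real simplification in your version is that, after time-averaging, the spatial weights are automatically the constant $1/n$, so the paper's Step~3 (approximating \emph{measurable} weight functions by continuous ones) is not needed at all. Your approach also produces explicit $O(1/n)$ rates for $\left|\int_0^t H_n(s,z)\,ds - \int_0^t H(s,z)\,ds\right|$, uniformly in $t$ and locally uniformly in $z$, which is a bit stronger than what Lemma~\ref{aprox_lemma} requires. The paper's version is more modular and reuses the multi-slit equation as a named intermediate object, which fits its exposition elsewhere; yours is shorter and more quantitative.

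Two small points worth tightening. First, for $\bar\nu_{n,k}:=\frac{n}{T}\int_{I_k}\nu_s\,ds$ to be a well-defined probability measure you need $s\mapsto\nu_s(B)$ to be Borel-measurable for every Borel set $B$; as in the paper's Step~4 (and the proof of Theorem~\ref{inf}), this follows from the Stieltjes--Perron inversion formula together with the assumed measurability of $t\mapsto H(t,z)$, but it should be stated. Second, the sentence bounding the Step~1 error by ``the oscillation of $s\mapsto\int_\R\tfrac{\nu_s(du)}{z-u}$ on $I_k$ times $T/n$'' is slightly misleading, since $H(\cdot,z)$ is only measurable and its oscillation on $I_k$ need not shrink; what you actually use, and what does the job, is the crude bound $|\bar h_k - h(s)|\le 2M_K$ on any compact $K\subset\Ha$, so the error is $\le 2M_K\,T/n\to 0$. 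With those two remarks made explicit, the proof is complete.
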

\begin{proof}
Step 1: Assume that $H(t,z)=\frac1{z-U(t)}$ for a piecewise continuous and non-negative driving function $U$. 
Then we can clearly approximate $H(t,z)$ by a sequence 
$H_{n}(t,z)=\frac1{z-U_n(t)}$ with continuous non-negative driving functions $U_n:[0,T]\to [0,M]$
in the sense of Lemma \ref{aprox_lemma}.\\

Step 2: Next we consider the multi-slit equation, i.e.\ $H(t,z) = \sum_{k=1}^N\frac{\lambda_k(t)}{z-V_k(t)}$,
where $\lambda_1,...,\lambda_N:[0,T]\to[0,1]$ are continuous weight functions with
$\sum_{k=1}^N \lambda_k(t) = 1$ for all $t\in[0,T]$, and all driving functions $V_1,...,V_N:[0,T]\to[0,M]$ are continuous.\\

This Herglotz vector field can be approximated by a single-slit equation with a
piecewise continuous non-negative driving function.
We choose $m\in\N$ and divide the interval $[0,T]$ into $m$ intervals
$I_1:=[0,\frac{T}{m}], I_2:=(\frac{T}{m},\frac{T}{m}+\frac{1}{m}],...,I_m:=(T-\frac{1}{m},T]$. 
We define the driving function $U_m$ on $I_1$ as follows:
\begin{eqnarray*}
U_m(t)&=&V_1(t) \text{\quad on \quad} \left[0,T/m\cdot \lambda_1\left(T/m\right)\right],\nonumber\\
U_m(t)&=&V_2(t) \text{\quad on \quad} \left(T/m\cdot \lambda_1(T/m),T/m\cdot (\lambda_1(T/m)+\lambda_2(T/m))\right], ..., \nonumber\\
U_m(t)&=&V_N(t) \text{\quad on \quad} \left(T/m\cdot (\lambda_1(T/m)+...+\lambda_{N-1}(T/m)), T/m\right].
\end{eqnarray*}
We now repeat this construction for $I_2$,...,$I_m$. \\
Define $H_m(t,z)=\frac1{z-U_m(t)}$. Then $H_m(t,z)$ approximates 
$H(t,z)$ 
in the sense of Lemma \ref{aprox_lemma}. Together with step 1, we see that this multi-slit equation can be approximated 
by continuous non-negative driving functions.\\

Step 3:  Next we consider $H(t,z) = \sum_{k=1}^N\frac{\lambda_k(t)}{z-V_k(t)}$, where $\lambda_1,...,\lambda_N:[0,T]\to[0,1]$ are measurable weight functions with
$\sum_{k=1}^N \lambda_k(t) = 1$ for all $t\in[0,T]$, and all driving functions $V_1,...,V_N:[0,T]\to[0,M]$ are continuous.\\
For $m\in\N$, we let $H_m(t,z) = \sum_{k=1}^N\frac{\lambda_{k,m}(t)}{z-V_k(t)}$, 
where each $\lambda_{k,m}:[0,T]\to[0,1]$ is continuous, $\sum_{k=1}^N \lambda_{k,m}(t) = 1$ for all $t\in[0,T]$ and all $m\in\N$, 
and $\lambda_{k,m}\to \lambda_k$ in the $L^1$-norm as $m\to\infty$. Then $H_m(t,z)$ approximates $H(t,z)$ in the sense of Lemma \ref{aprox_lemma} as $m\to\infty$.\\

Step 4: Finally, assume that $H(t,z)=\int_\R\frac{\nu_t(du)}{z-u}\in\mathcal{H}_M$ is a general Herglotz vector field. 
 Divide $[0,M]$ into $m\in\N$ intervals: $I_{1,m}=[0,M/m], I_{2,m}=(M/m,2M/m],...,I_{m,m}=((m-1)M/m,M]$. 
For $k=1,...,m$, define $\lambda_{k,m}(t)=\nu_t(I_{k,m})$ and let $V_{k,m}(t)$ be the midpoint of $I_{k,m}$ for all $t\in[0,T]$. 
Each $\lambda_{k,m}$ is measurable, which follows from the Stieltjes-Perron inversion formula and the fact that $t\mapsto H(t,z)$ is measurable. The Herglotz vector field $H_m(t,z) = \sum_{k=1}^m\frac{\lambda_{k,m}(t)}{z-V_{k,m}(t)}$ approximates $H(t,z)$ in the sense of Lemma \ref{aprox_lemma} as $m\to\infty$.
\end{proof}

Finally, we also need the following quite useful scaling behavior.

\begin{lemma}\label{scale}Let $c,d>0$ and let $f_t=F_{\mu_t}$ be the solution to the slit equation \eqref{slit_Loewner_eq} 
with a piecewise continuous driving function $U(t)$.
 Consider the scaled measures $\nu_t(B)= \mu_{d\cdot t}(c\cdot B)$. Let $h_t=F_{\nu_t}$. Then $h_t$ solves 
\[\frac{\partial}{\partial t}h_{t}(z) = 
\frac{\partial}{\partial z}h_{t}(z)\cdot \frac{d/c^2}{h_{t}(z)-U(d\cdot t)/c}.\]
\end{lemma}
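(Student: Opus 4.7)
The plan is to reduce the claim to a direct application of the chain rule, after expressing $h_t$ explicitly in terms of $f_t$ via the scaling identity for $F$-transforms.

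First I would compute how the Cauchy transform (and hence the $F$-transform) behaves under the scaling $\nu(B)=\mu(cB)$ for a single probability measure $\mu$. A change of variables in the defining integral $G_\nu(z)=\int(z-y)^{-1}\nu(dy)$ yields $G_\nu(z)=cG_\mu(cz)$, so $F_\nu(z)=F_\mu(cz)/c$. Applying this pointwise in $t$ to the scaling $\nu_t(B)=\mu_{dt}(cB)$ gives the explicit representation
\[
h_t(z)=\frac{1}{c}f_{dt}(cz),
\]
which is the only non-trivial preliminary step and is where the two scalings $c$ (space) and $d$ (time) are made visible in the formula.

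Next I would differentiate this identity. The $z$-derivative is immediate: $\partial_z h_t(z)=f'_{dt}(cz)$. For the $t$-derivative, the chain rule combined with the slit Loewner equation \eqref{slit_Loewner_eq} at time $\tau=dt$ and point $w=cz$ gives
\[
\partial_t h_t(z)=\frac{d}{c}\,(\partial_\tau f_\tau)(cz)\big|_{\tau=dt}=\frac{d}{c}\,f'_{dt}(cz)\cdot\frac{1}{U(dt)-cz}.
\]
Substituting $f'_{dt}(cz)=\partial_z h_t(z)$ and pulling the factor $1/c$ into the denominator yields
\[
\partial_t h_t(z)=\partial_z h_t(z)\cdot\frac{d/c^2}{U(dt)/c-z},
\]
which is the desired equation (up to the sign convention in the denominator, which should match the one used in \eqref{slit_Loewner_eq}).

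There is no real obstacle here: the only thing to be careful about is the bookkeeping of factors of $c$ and $d$ between the chain rule and the substitution $\tau=dt$, $w=cz$, and making sure that the differentiation under the integral sign used in the Cauchy transform computation is valid (which holds since $\mu_t$ has compact support, as guaranteed by Theorem \ref{Julia} together with the assumption that $U$ is piecewise continuous). The piecewise continuity of $U$ only enters to the extent that \eqref{slit_Loewner_eq} holds at those $t$ where $U$ is continuous, and the identity $h_t(z)=f_{dt}(cz)/c$ transfers this regularity to $h_t$.
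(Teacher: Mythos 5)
Your proof takes the same route as the paper's: establish $h_t(z)=f_{dt}(cz)/c$ from the change of variables in the Cauchy transform, then differentiate using the chain rule together with \eqref{slit_Loewner_eq}. The bookkeeping you carry out is correct, and the form you obtain,
\[
\partial_t h_t(z)=\partial_z h_t(z)\cdot\frac{d/c^2}{U(dt)/c-z},
\]
is the one consistent with \eqref{slit_Loewner_eq} as written there (check it against the arcsine example with $U\equiv 0$ that immediately follows that equation), and it is also the form that the application inside the proof of Theorem~\ref{theorem10} requires: there $d=c^2$, and one wants $h_t$ to solve \eqref{slit_Loewner_eq} with the new driving function $V(t)=U(dt)/c$, i.e.\ with $\frac{1}{V(t)-z}$ on the right-hand side.

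Where your write-up should be sharper is the closing parenthetical ``(up to the sign convention in the denominator).''\ The stated lemma has $h_t(z)-U(dt)/c$ in the denominator, whereas your derivation produces $U(dt)/c-z$; these differ not only by a sign but also by the variable itself ($h_t(z)$ versus $z$), and no choice of sign convention turns one into the other. If you unravel the paper's own proof of this lemma, you will find that it applies \eqref{slit_Loewner_eq} as though that equation read $\partial_\tau f_\tau(w)=f'_\tau(w)\cdot\frac{1}{f_\tau(w)-U(\tau)}$ rather than $\partial_\tau f_\tau(w)=f'_\tau(w)\cdot\frac{1}{U(\tau)-w}$. So the lemma as printed, and its proof, appear to contain a typo; the formula you derived is the one consistent with \eqref{slit_Loewner_eq} and with the later application, and you should say so explicitly rather than attributing the mismatch to a convention.
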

\begin{proof}
We have
\begin{eqnarray*}h_t(z) &=& \left(\int_\R \frac1{z-u} \mu_{d\cdot t}(c\cdot du)\right)^{-1} = \left(\int_\R 
\frac1{z-u/c} \mu_{d\cdot t}(du)\right)^{-1} \\
&=&
\left(\int_\R \frac{c}{cz-u} \mu_{d\cdot t}(du)\right)^{-1}=f_{dt}(cz)/c.\end{eqnarray*}
Then \eqref{slit_Loewner_eq} leads to
\[\frac{\partial}{\partial t}h_{t}(z) = \frac{d}{c} \frac{\partial}{\partial t}f_{dt}(cz) =
\frac{d}{c}\frac{\partial}{\partial z}f_{dt}(cz)\cdot \frac{1}{f_{dt}(cz)-U(d\cdot t)}=
\frac{\partial}{\partial z}h_{t}(z)\cdot \frac{d/c^2}{h_{t}(z)-U(d\cdot t)/c}.\]
\end{proof}

\section{Approximation via spidernets}

We now consider a driving function $U:[0,\infty)\to\R$ which is continuous and non-negative. \\
Let $(f_t)_{t\geq0}$ be the solution to \eqref{slit_Loewner_eq}  and denote by $(\mu_t)_{t\geq0}$ 
the probability measures with $F_{\mu_t}=f_t$. Furthermore, let $(X_t)_{t\geq 0}$ be a 
corresponding monotone additive process  given by Theorem \ref{inf}.\\

Fix some $T>0$. We would like to approximate $(X_t)_{t\in[0,T]}$ by a discrete quantum process, where each random variable is 
the adjacency matrix of a graph.
By means of the lemmas above, we can now proceed as follows.\\

Choose $n_0\in\N$  such that   
\begin{equation}\label{est0}0 \leq U(t) \leq \sqrt{\frac{T}{2}}\left(2\sqrt{n}-\frac1{\sqrt{n^3}}\right) \quad \text{on $[0,T]$}
\end{equation}
for all $n\geq n_0$.\\

Now assume that $n\geq n_0$. For $k=1,...,n$, we define
\[u_{n,k}= 
\lfloor \sqrt{2T}\sqrt{n}\cdot \frac{U(k/n\cdot T)}{\frac{T}{n}}\rfloor \in 
\{0, ..., 2n^2-1\}.\]
Here, $\lfloor x \rfloor$ denotes the largest $m\in \N_0$ with $m\leq x$. Note that 
\eqref{est0} implies that the spidernet $S_{n^2,u_{n,k}}$ exists for all $k=1,...,n$. 
We denote by $V_{n,k}$ the vertex set and by $o_{n,k}$ the root of $S_{n^2,u_{n,k}}$.

\begin{theorem}\label{theorem10}

For $k=1,...,n$, let $\mathcal{C}_{n,k}$ be the graph
\[ \mathcal{C}_{n,k} := S_{n^2, u_{n,1}} \rhd  S_{n^2, u_{n,2}}  \rhd ... \rhd 
S_{n^2, u_{n,k}}.\] 
 Then $(\mathcal{C}_{n,k})_{k=1,...,n}$ is a an approximation 
of the quantum process $(X_t)_{t\in[0,T]}$ in the following sense:
\begin{itemize}
\item[(a)]Let $A_{n,k}$ be the adjacency matrix of $\mathcal{C}_{n,k}$. Denote by $\mu_{n,k}$
the distribution of $A_{n,k}$ with respect to the 
quantum probability space 
\[(B(l^2(V_{n,1}\times ... \times V_{n,k})),\left<\delta_{o_{n,1}} \otimes ... \otimes \delta_{o_{n,k}}, \cdot (\delta_{o_{n,1}} \otimes ... \otimes \delta_{o_{n,k}})\right>).\] Then 
\[ \lim_{n\to\infty}
 \mu_{n,\lfloor tn/T \rfloor}(\sqrt{2n^3/T}\; \cdot ) = \mu_t(\cdot)
\]
 with respect to  weak convergence for all $t\in[0,T]$. The limit also holds true with respect to 
 the convergence of all moments.
 \item[(b)] 
 Consider 
the quantum probability space \[(B(l^2(V_{n,1}\times ... \times V_{n,n})),\left<\delta_{o_{n,1}} \otimes ... \otimes \delta_{o_{n,n}}\cdot (\delta_{o_{n,1}} \otimes ... \otimes \delta_{o_{n,n}})\right>).\]
Extend $A_{n,k}$ to $l^2(V_{n,1}\times ... \times V_{n,n})$ by $\mathcal{A}_{n,k}:=A_{n,k}\otimes P^{n,k+1}\otimes ... 
\otimes P^{n,n},$ where $P^{n,j}$ denotes the projection in $l^2(V_{n,j})$ onto 
$\delta_{o_{n,j}}$. Then the increments
$(\mathcal{A}_{n,1}, \mathcal{A}_{n,2}-\mathcal{A}_{n,1}, ..., \mathcal{A}_{n,n}-\mathcal{A}_{n,n-1})$ are monotonically independent.

\end{itemize}
\end{theorem}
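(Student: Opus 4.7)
I would handle parts (b) and (a) in turn, with (b) being essentially immediate from the comb-product independence result and (a) reducing to a Loewner-chain approximation argument.

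For part (b), apply Lemma \ref{dreidrei} to the $n$-fold comb product $\mathcal{C}_{n,n}$: the adjacency matrix $A_{n,n}$ decomposes as $\sum_{j=1}^{n} I^1 \otimes \cdots \otimes I^{j-1} \otimes A^j \otimes P^{n,j+1} \otimes \cdots \otimes P^{n,n}$, and the partial sum up to index $k$ agrees with $\mathcal{A}_{n,k}$. Hence $\mathcal{A}_{n,k}-\mathcal{A}_{n,k-1}=I^1\otimes\cdots\otimes I^{k-1}\otimes A^k\otimes P^{n,k+1}\otimes\cdots\otimes P^{n,n}$, which is exactly the $k$-th summand whose monotone independence is guaranteed by Lemma \ref{viervier}. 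This simultaneously settles (b) and yields $\mu_{n,k}=\nu_1\rhd\nu_2\rhd\cdots\rhd\nu_k$, where $\nu_j$ is the distribution of $S_{n^2,u_{n,j}}$ in its own Hilbert space.

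For part (a), Theorem \ref{monotone_convolution} gives $F_{\mu_{n,k}}=F_{\nu_1}\circ\cdots\circ F_{\nu_k}$, and Lemma \ref{lemma000} identifies each $F_{\nu_j}(z)=\sqrt{(z-u_{n,j})^2-4n^2}+u_{n,j}$ as the time-$2n^2$ map of the slit Loewner equation \eqref{slit_Loewner_eq} with constant driving function $u_{n,j}$. Set $c_n:=\sqrt{2n^3/T}$ and apply Lemma \ref{scale} with $c=c_n$, $d=c_n^2$: after rescaling $z\mapsto F_{\nu_j}(c_n z)/c_n$, each $F_{\nu_j}$ becomes the time-$T/n$ flow of the slit Loewner equation with constant driving function $u_{n,j}/c_n$. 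Consequently the piecewise-constant function $V_n(s):=u_{n,j}/c_n$ for $s\in((j-1)T/n,jT/n]$ drives a slit Loewner chain $f^n_{0,\cdot}$ satisfying $f^n_{0,kT/n}(z)=F_{\mu_{n,k}}(c_n z)/c_n$, i.e.\ the rescaled $F$-transform is literally a solution of the Loewner equation \eqref{slit33} with Herglotz vector field $(z-V_n(s))^{-1}$.

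From the defining formula $u_{n,k}=\lfloor c_n\,U(kT/n)\rfloor$ we obtain $|V_n(s)-U(jT/n)|\leq 1/c_n\to 0$ on each subinterval, and uniform continuity of $U$ on $[0,T]$ upgrades this to $V_n\to U$ uniformly on $[0,T]$; hence $\int_0^s(z-V_n(\tau))^{-1}d\tau\to\int_0^s(z-U(\tau))^{-1}d\tau$ locally uniformly in $\Ha$. Lemma \ref{aprox_lemma} then yields $f^n_{0,t}\to F_{\mu_t}$ locally uniformly in $\Ha$ for every fixed $t\in[0,T]$. Passing from $t$ to $t_n:=\lfloor tn/T\rfloor\cdot T/n$ requires a short equicontinuity argument: uniform boundedness of $V_n$ and the hull-size estimate of Theorem \ref{Houston} imply $|f^n_{0,t_n}(z)-f^n_{0,t}(z)|\to 0$ locally uniformly, so $F_{\mu_{n,\lfloor tn/T\rfloor}}(c_n z)/c_n\to F_{\mu_t}(z)$ locally uniformly. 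Lemma \ref{lemmaconvergence} converts this into the weak convergence $\mu_{n,\lfloor tn/T\rfloor}(c_n\,\cdot)\to\mu_t$. Convergence of all moments then follows because Theorem \ref{Houston}, combined with the uniform bound on $V_n$, produces a common compact support bound for both the limiting and the approximating families, so weak convergence automatically upgrades to convergence of every moment. The main obstacle I anticipate is the careful bookkeeping in the scaling step (Lemma \ref{scale}) together with the equicontinuity argument that swaps $t$ for $t_n$; once those are in place, the remainder is a direct assembly of Lemmas \ref{dreidrei}, \ref{viervier}, \ref{lemma000}, \ref{aprox_lemma}, \ref{lemmaconvergence} and Theorems \ref{monotone_convolution}, \ref{slit_Loewner_eq_thm}, \ref{Houston}.
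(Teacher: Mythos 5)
Your proof is correct and follows essentially the same route as the paper: part (b) is derived directly from Lemmas \ref{dreidrei} and \ref{viervier}, while part (a) identifies the rescaled $F$-transforms of $\mu_{n,k}$ with a Loewner chain driven by a piecewise-constant function $V_n$ converging uniformly to $U$, and closes via Lemmas \ref{scale}, \ref{aprox_lemma}, \ref{lemmaconvergence} together with the compact-support bound of Theorem \ref{Houston}. Your explicit equicontinuity step passing from $t$ to $t_n=\lfloor tn/T\rfloor T/n$ in fact makes visible a small gap that the paper's proof elides, writing $\nu_{n,t}\to\mu_t$ but then stating the conclusion for $\mu_{n,\lfloor tn/T\rfloor}(\sqrt{2n^3/T}\cdot)=\nu_{n,t_n}$ without comment.
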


\begin{remark}Note that the graph that corresponds to $\mathcal{A}_{n,k}$ is simply an embedding of $\mathcal{C}_{n,k}$ within 
a larger vertex set.  
\end{remark}

\begin{proof}Statement (b) follows directly from Lemmas \ref{dreidrei} and \ref{viervier}.\\

Let $U_n:[0,2n^3]\to\R$ be the function which is constant $u_{n,1}$ on $[0,2n^2]$, 
constant $u_{n,2}$ on $(2n^2,4n^2]$, etc.\\
 Let $f_{n,t}$ be the 
solution to \eqref{slit_Loewner_eq} with this driving function and define the measures $\alpha_{n,t}$ by 
$F_{\alpha_{n,t}}=f_{n,t}$. 
By Example \ref{ex_1} and Lemma \ref{lemma000} we have 
\[\alpha_{n,2n^2} = m_{2n^2,n^2,u_{n,1}}.\]
Starting the Loewner equation \eqref{slit_Loewner_eq} for $h_t$ at $t=2n^2$ with initial value $h_{2n^2}(z)=z$
and driving function $U_n(t)$ yields the mappings $(h_t)$ that satisfy 
$f_{n,t} = f_{n,2n^2} \circ h_{t}.$ Obviously, $h_{4n^2}=F_{m_{2n^2,n^2,u_{n,2}}}$ and thus 
$\alpha_{n,4n^2} =m_{2n^2,n^2,u_{n,1}}\rhd m_{2n^2,n^2,u_{n,2}}.$ By induction we obtain 
\begin{equation*} \alpha_{n,2kn^2} = \rhd_{j=1}^{k} m_{2n^2,n^2,u_{n,j}}.
\end{equation*}
On the other hand, Lemmas \ref{dreidrei}, \ref{viervier}, \ref{lemma000} imply
\begin{equation}\label{uu00}\mu_{n,k}=\rhd_{j=1}^{k} m_{2n^2,n^2,u_{n,j}}
\end{equation}
for all $k=1,...,n$.

The function $V_n:[0,T]\to\R, V_n(t):=\sqrt{\frac{T}{2n^3}} \cdot U_n(t/T\cdot 2n^3)$ is constant 
 on the intervals $(\frac{(k-1)T}{n}, \frac{kT}{n}],$ $k=1,...,n$. We have 

\begin{eqnarray*}
&& U(k/n\cdot T)- V_n(k/n\cdot T) = 
U(k/n\cdot T)- \sqrt{\frac{T}{2n^3}} \cdot U_n(k\cdot 2n^2)=\nonumber\\
&&U(k/n\cdot T)- \sqrt{\frac{T}{2n^3}} \cdot \lfloor \sqrt{2T}\sqrt{n}\cdot \frac{U(k/n\cdot T)}{\frac{T}{n}}\rfloor \leq 
\sqrt{\frac{T}{2n^3}}.
\end{eqnarray*}

Now let $t\in(\frac{(k-1)T}{n}, \frac{kT}{n})$ and denote by $\omega:[0,T]\to[0,\infty)$ a modulus of continuity of $U$ for $[0,T]$, i.e.\ 
$|U(x)-U(y)|\leq \omega(|x-y|)$ for all $x,y\in[0,T]$, and $\omega$ is increasing, vanishes at $0$, and is continuous at $0$. We have
\begin{eqnarray*}
&&|U(t)-V_n(t)|=|U(t)-V_n(kT/n)|\leq \nonumber \\
&&|U(t)-U(kT/n)| + |U(kT/n)-V_n(kT/n)|\leq \omega\left(\frac{T}{n}\right) + \sqrt{\frac{T}{2n^3}}.
\end{eqnarray*}
Finally, for $t=0$ we have $V_n(0)=V_n(T/n)$ and thus
\[|U(0)-V_n(0)|=|U(0)-U(T/n)| + |U(T/n)-V_n(T/n)|\leq \omega\left(\frac{T}{n}\right) + \sqrt{\frac{T}{2n^3}}.
\]
Hence,  we obtain
\begin{eqnarray}\label{conv0}
\sup_{t\in[0,T]}|U(t)-V_n(t)|\to 0 \quad \text{as $n\to\infty$}.
\end{eqnarray}

Let $(h_{n,t})_{t\in [0,T]}$ be the Loewner chain that corresponds to $V_n$. Define the measures 
$\nu_{n,t}$ by $h_{n,t}=F_{\nu_{n,t}}.$ 
Note that $V_n$ has the form $V_n = U_n(d \cdot t)/c$ with $d=c^2$. Hence, by Lemma \ref{scale} we have 
\[\nu_{n,t}(M) = \alpha_{n,t/T\cdot 2n^3}(\sqrt{2n^3/T}\cdot M)\]
for all $t\geq0$ and all Borel subsets $M\subset \R$. If $t$ has the form $t=kT/n, k=1,...,n,$ then
\eqref{uu00} gives
\begin{eqnarray*}\nu_{n,t}(M) &=& \mu_{n,k}(\sqrt{2n^3/T}\cdot M)  
= (\rhd_{j=1}^{k} m_{2n^2,n^2,u_{n,j}} )(\sqrt{2n^3/T}\cdot M) \nonumber\\ 
&=& 
 (\rhd_{j=1}^{tn/T} m_{2n^2,n^2,u_{n,j}})(\sqrt{2n^3/T}\cdot M).\end{eqnarray*}
For every $t\in[0,T]$ we have $h_{n,t} \to f_t$ locally uniformly because of
\eqref{conv0} and Lemma \ref{aprox_lemma}. 
By Lemma \ref{lemmaconvergence} we have $\nu_{n,t} \to \mu_t$ with respect to weak convergence, or 
\[ \mu_{n,\lfloor tn/T \rfloor}(\sqrt{2n^3/T}\cdot ) = (\rhd_{j=1}^{\lfloor tn/T \rfloor} m_{2n^2,n^2,u_{n,j}})(\sqrt{2n^3/T}\, \cdot )  \to \mu_t(\cdot).\] 
It remains to show that this limit also holds 
with respect to convergence of all moments.\\
As there is a uniform bound for the family $(V_n)_{n}$ on $[0,T]$, Theorem \ref{Houston} implies that there exists 
$D(t)>0$ such that $\supp \nu_{n,t}\subset[-D(t),D(t)]$ for all $n$ and all $t\in[0,T]$. Thus, weak convergence of $\nu_{n,t}$ 
is equivalent to convergence of all its moments. 
\end{proof}

Consider equation \eqref{slit33} with the additional condition that $\supp \nu_t \subset [0,M]$ for all $t\geq0$.
 Let $(f_t)_{t\geq0}$ be the solution to the corresponding Loewner equation and denote by $(\mu_t)_{t\geq0}$ the probability measures with $F_{\mu_t}=f_t$. Furthermore, let $(X_t)_{t\geq 0}$ be a 
corresponding additive process process given by Theorem \ref{inf}. The process $(X_t)_{t\in[0,T]}$ can be approximated by graphs in the following way.

\begin{theorem}\label{theorem11}
Choose $n_0\in\N$ such that $M\leq\sqrt{\frac{T}{2}}\left(2\sqrt{n}-\frac1{\sqrt{n^3}}\right)$ for all  $n\geq n_0$.
There exists a family $(\mathcal{C}_{n,k})_{n\geq n_0, k=1,...,n}$ of rooted graphs such that:
\begin{itemize}
 \item[(a)]  For each $n\geq n_0$,  $(\mathcal{C}_{n,k})_{k=1,...,n}$ can be considered as  graphs with common vertex set $V_n$ and common root $o_n$. 
Let $A_{n,k}$ be the adjacency matrix of $\mathcal{C}_{n,k}$. Then the increments
$(A_{n,1}, A_{n,2}-A_{n,1}, ..., A_{n,n}-A_{n,n-1})$ are monotonically independent with respect to the quantum probability space
 $(B(l^2(V_n)),\left<\delta_{o_n}, \cdot \delta_{o_n}\right>)$.
\item[(b)] Denote by $\mu_{n,k}$ the distribution of $A_{n,k}$. Then 
\[ \lim_{n\to\infty}
 \mu_{n,\lfloor tn/T \rfloor}(\sqrt{2n^3/T}\; \cdot ) = \mu_t(\cdot)
\]
 with respect to  weak convergence for all $t\in[0,T]$. The limit also holds true with respect to 
 the convergence of all moments.
\end{itemize}
\end{theorem}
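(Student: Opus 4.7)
My plan is to reduce Theorem \ref{theorem11} to Theorem \ref{theorem10} via Lemma \ref{Whitney}, and then glue the resulting approximations together with a diagonal argument.

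First I would apply Lemma \ref{Whitney} to the Herglotz vector field $H(t,z)=\int_\R \frac{\nu_t(du)}{z-u}$: since $\supp \nu_t\subset[0,M]$, there exists a sequence of continuous non-negative driving functions $U_m:[0,T]\to[0,M]$ such that the solutions $f^{(m)}_t$ of \eqref{slit_Loewner_eq} with driving function $U_m$ converge locally uniformly in $\Ha$ to $f_t=F_{\mu_t}$ for every $t\in[0,T]$. Writing $f^{(m)}_t=F_{\mu^{(m)}_t}$, Lemma \ref{lemmaconvergence} gives $\mu^{(m)}_t\to\mu_t$ weakly. By Theorem \ref{Houston}, because all $U_m$ and $H$ live in the same bounded ``$M$-class,'' the supports of all $\mu^{(m)}_t$ and of $\mu_t$ lie in a common compact interval $[-D,D]$ depending only on $T$ and $M$; hence weak convergence is equivalent to convergence of all moments.

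Next, for each fixed $m$, I would apply Theorem \ref{theorem10} to the continuous non-negative driving function $U_m$: this produces, for $n$ large, graphs $\mathcal{C}^{(m)}_{n,k}$ built as comb products of spidernets $S_{n^2,u^{(m)}_{n,k}}$, whose adjacency matrices have distributions $\mu^{(m)}_{n,k}$ with
\[
\mu^{(m)}_{n,\lfloor tn/T\rfloor}(\sqrt{2n^3/T}\,\cdot\,)\longrightarrow \mu^{(m)}_t \qquad (n\to\infty)
\]
in the moment topology, for every $t\in[0,T]$, and with monotonically independent increments on a common Hilbert space via the embedding $A\otimes P^{n,k+1}\otimes\cdots\otimes P^{n,n}$ as in Theorem \ref{theorem10}(b).

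Now I would diagonalise. Fix a metric $d$ on probability measures supported in $[-D,D]$ that is compatible with weak/moment convergence. Using the moment convergence from the previous step together with the uniform support bound and the weak continuity of $t\mapsto \mu^{(m)}_t$, I would choose $N(m)\geq n_0$ so large that for every $n\geq N(m)$ and every $j\in\{0,1,\dots,n\}$,
\[
d\bigl(\mu^{(m)}_{n,j}(\sqrt{2n^3/T}\,\cdot\,),\,\mu^{(m)}_{jT/n}\bigr)<\tfrac{1}{m}.
\]
Then I would set $m(n):=\max\{m:N(m)\leq n\}\to\infty$ and define $\mathcal{C}_{n,k}:=\mathcal{C}^{(m(n))}_{n,k}$ on the common vertex set produced by Theorem \ref{theorem10}(b) for the index $m(n)$. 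Part (a) is then inherited directly from Theorem \ref{theorem10}(b), and for part (b) a triangle inequality
\[
d\bigl(\mu_{n,\lfloor tn/T\rfloor}(\sqrt{2n^3/T}\,\cdot\,),\,\mu_t\bigr)\leq \tfrac{1}{m(n)}+d\bigl(\mu^{(m(n))}_{\lfloor tn/T\rfloor\cdot T/n},\,\mu_t\bigr)
\]
yields convergence, since the second term vanishes by combining $\mu^{(m)}_s\to\mu_s$ (Lemma \ref{Whitney}) with the continuity of the Loewner chain $s\mapsto \mu_s$.

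The main obstacle is making the choice of $N(m)$ uniform in the grid point $j$: Theorem \ref{theorem10} only gives pointwise-in-$t$ convergence via a modulus-of-continuity argument for $U_m$. I would handle this by combining the explicit modulus-of-continuity estimate in the proof of Theorem \ref{theorem10} with the uniform support bound of Theorem \ref{Houston}, so that a single choice of $n$ controls the approximation error over the full grid $\{jT/n\}$ simultaneously. Everything else is essentially bookkeeping.
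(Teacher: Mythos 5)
Your proposal follows essentially the same route as the paper: apply Lemma \ref{Whitney} to replace the Herglotz vector field by continuous non-negative slit driving functions $U_m$, use Lemma \ref{lemmaconvergence} and the uniform support bound of Theorem \ref{Houston} to upgrade weak convergence to moment convergence on a common compact interval, feed each $U_m$ into Theorem \ref{theorem10} to obtain the spidernet comb-product graphs, and conclude by a diagonalization in $m(n)$ with respect to a metric compatible with weak convergence. Part (a) is then inherited from Theorem \ref{theorem10}(b) exactly as you say. One small remark: the paper is terser than you about the uniformity-in-$t$ issue you flag as the ``main obstacle'' — it simply invokes ``a diagonalization argument'' and leaves it to the reader — whereas you correctly identify that to make a single $m(n)$ work for all $t\in[0,T]$ simultaneously one needs control over the whole grid $\{jT/n\}$ at once, and your proposal to use the explicit modulus-of-continuity estimate from the proof of Theorem \ref{theorem10} together with Theorem \ref{Houston}'s uniform support bound is a sound way to supply that missing uniformity; this makes your version somewhat more self-contained than the paper's.
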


\begin{proof}
Due to Lemma \ref{Whitney} there exists a sequence of continuous non-negative driving functions 
$U_m:[0,T]\to[0,M]$ such that the corresponding solution $f_{m,t}$ to 
\eqref{slit_Loewner_eq} converges locally uniformly to $f_t$ for all 
$t\geq0$ as $m\to\infty$. Write $f_{m,t}=F_{\mu_{m,t}}$. Then Lemma \ref{lemmaconvergence} implies that $\lim_{m\to\infty}\mu_{m,t}=\mu_t$.\\

Let $\mathcal{C}_{n,k;m}$ be the graphs from Theorem \ref{theorem10} for the driving function 
$U_m$ with distributions $\mu_{n,k;m}$. Note that $n\geq n_0$ and \eqref{est0} together with the bound $U_m(t)\leq M$ imply that $n$ is large enough 
to construct these graphs. Then 
\[  \lim_{n\to\infty} \mu_{n,\lfloor tn/T \rfloor;m}(\sqrt{2n^3/T}\; \cdot ) = \mu_{m,t}(\cdot). \]

A diagonalization argument (note that there is a metric for probability measures on $\R$
which is compatible with weak convergence, e.g.\ the L\'evy-Prokhorov distance) gives us a sequence $m(n)$ converging to $\infty$
  such that
\[  \lim_{n\to\infty} \mu_{n,\lfloor tn/T \rfloor;m(n)}(\sqrt{2n^3/T}\; \cdot ) = \mu_{t}(\cdot). \]

Hence, the graphs $\mathcal{C}_{n,k}:=\mathcal{C}_{n,k;m(n)}$ (where $\mathcal{C}_{n,k}$ is regarded as a 
subgraph of $\mathcal{C}_{n,n}$) satisfy all required conditions.

\end{proof}

\begin{remark}
Each measure $\mu_t$ has a univalent $F$-transform, the first moment of $\mu_t$ is $0$, and all higher odd moments are non-negative, i.e.\ $\mu_t \in Gr(\rhd)$. \\
Unfortunately, not all probability measures from $Gr(\rhd)$ arise via \eqref{slit33} with $\supp \nu_t \subset [0,M]$ for all $t\geq0$ and some $M>0$. Consider a measure $\mu$ with compact support, symmetric with respect to $0$, and having a univalent $F$-transform. Then all odd moments are $0$ and thus $\mu\in Gr(\rhd)$.  One can show that this measure can be generated in our setting only if $\mu$ is an arcsine distribution. 
All further distributions of this kind (e.g.\ the Wigner law $\mu=W(0,1)$) are not covered by \eqref{slit33}.
\end{remark}

\section{Further reading}

\begin{itemize}
\item A survey on spectra of infinite graphs is given in \cite{mw89}.
\item  The quantum probabilistic view on graphs is covered in the books \cite{hora} and \cite{Oba17}, which also treat several further aspects such as the quantum decomposition of graphs. We also refer to \cite{Len19} for further relations between independences and products of graphs.
\end{itemize}

\begin{appendices}

\chapter{Continuous extension of univalent functions}\label{A1}

In this section we regard the question under which conditions a conformal mapping in $\D$ can be extended continuously to the boundary. More on the boundary behavior of conformal mappings can be found in \cite{MR1217706}.\\

We denote by $B_r(z)$, $z\in \C$, $r>0$, the Euclidean disc with center $z$ and radius $r$. If $\gamma$ is a continuously  differentiable curve in $\C$, then $L(\gamma)$ denotes the Euclidean length of $\gamma.$\\

Let $D$ and $E$ be two conformally equivalent subdomains of $\C$. Then the boundaries $\partial D$ and $\partial E$ can be quite different from each other. For example, $\D$ is conformally equivalent to 
$$E=\Ha\setminus \left([0,i] \cup \bigcup_{n\in\N}[1/n,1/n+i] \cup \bigcup_{n\in\N}[-1/n,-1/n+i]\right).$$ 

 \begin{figure}[ht]
\rule{0pt}{0pt}
\centering
\includegraphics[width=7.5cm]{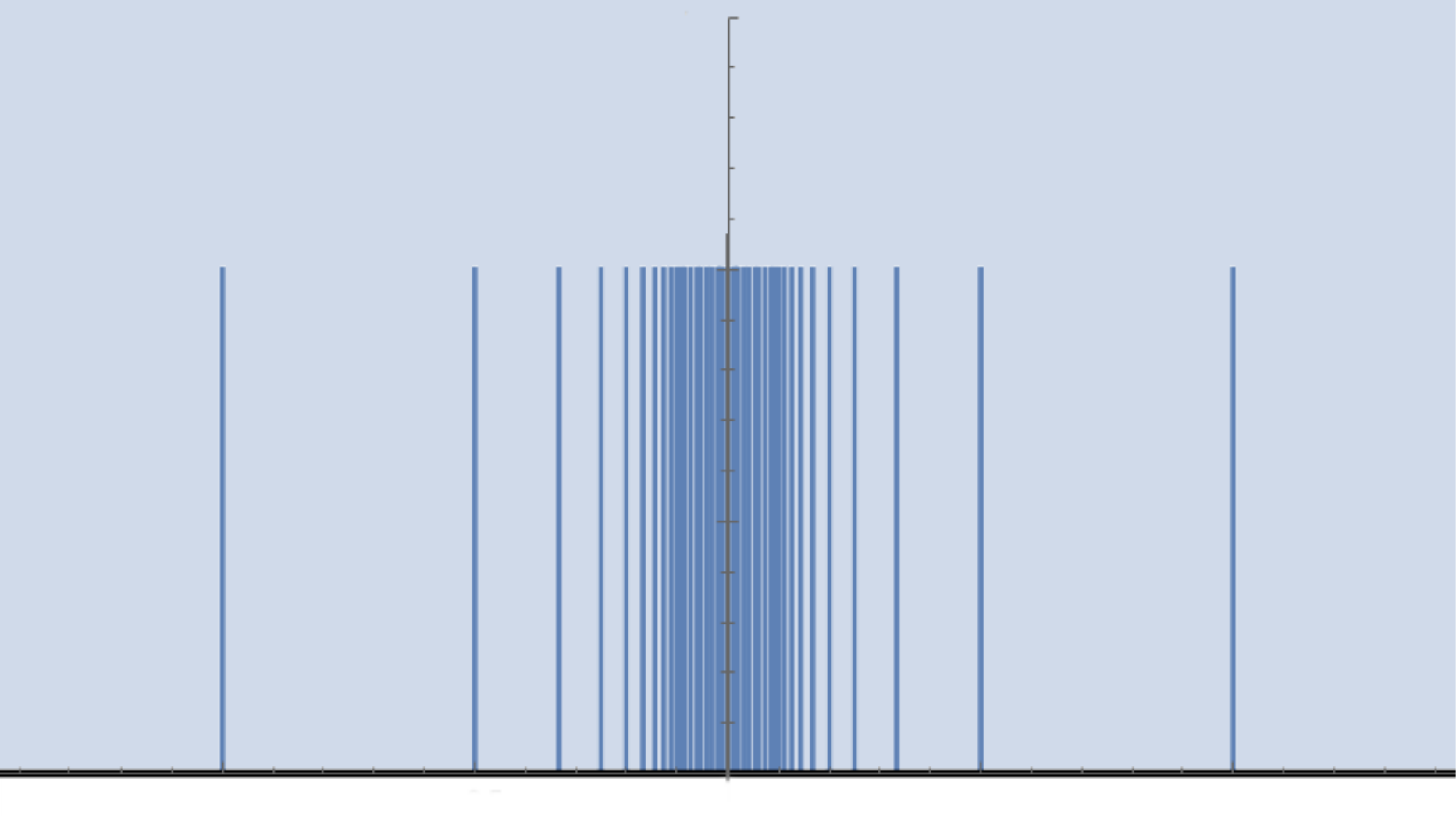}

\end{figure}

Let $f:\D\to E$ be conformal. One can show that there exists a point $q\in\partial \D$, such that every sequence $(z_n)_n\subset E$  with $z_n \to p \in [0,i]$ satisfies $f^{-1}(z_n)\to q.$ This implies that $f$ cannot be extended continuously to $q$.\\
The boundary behavior of a conformal mapping $f:\D\to E$ can be quite irregular, but it can be related to the topology of $\partial E$. \\

\underline{Recall:} A set $X\subset \C$ is called \emph{connected} if $X$ cannot be written as $X=A\cup B,$ where $A,B\not=\emptyset$ are disjoint and open in $X.$\\
A compact set $X\subset \C$ is called \emph{locally connected} if for every $x\in X$ and $r>0$ we can find a set $U\subset X$ which is connected and open in $X$ and satisfies $x\in U\subset B_r(x)$. \\
This is equivalent to: for every $\eps>0$ there exists $\delta >0$ such that for all points $a,b\in X$ with $|a-b|<\delta$ there exists a compact and connected set $B \subset X$ with 
$a,b\in B$ and $\diam B<\eps.$\footnote{For $B\subset \C,$ $\diam B := \sup_{z,w\in B}|z-w|.$}

\newpage 

\begin{theorem}\label{loc_con} Let $f:\D \to D$ be conformal and assume $D$ is bounded. Then the following statements are equivalent:
\begin{itemize}
	\item[a)] $f$ can be extended continuously to $\overline{\D}.$
	\item[b)] $\partial D$ is locally connected.
\end{itemize}
\end{theorem}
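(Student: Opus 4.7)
The direction (a) $\Rightarrow$ (b) is the easier one and I would dispatch it first. Assuming $f$ extends to a continuous map $\tilde f:\overline{\D}\to\overline{D}$, I would first check that $\tilde f(\partial\D)=\partial D$: the inclusion ``$\subset$'' follows because a sequence $z_n\in\D$ with $z_n\to\zeta\in\partial\D$ cannot have $\tilde f(z_n)$ accumulating in $D$ (otherwise $f^{-1}$ of the limit point would not be $\zeta$); the inclusion ``$\supset$'' follows because every $w\in\partial D$ is a limit of $f(z_n)$, and after extracting a convergent subsequence $z_n\to\zeta\in\overline{\D}$ one has $\zeta\in\partial\D$ and $\tilde f(\zeta)=w$. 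Hence $\partial D$ is the continuous image of the circle, and continuous images of compact, locally connected metric spaces are locally connected (this is the easy direction of Hahn--Mazurkiewicz), so (b) holds.

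The substantive direction is (b) $\Rightarrow$ (a). Fix $\zeta_0\in\partial\D$ and, for $0<r<1$, set $C_r=\{z\in\D:|z-\zeta_0|=r\}$ and $U_r=\D\cap B_r(\zeta_0)$. The plan is to show that $\mathrm{diam}\,f(U_r)\to 0$ as $r\to 0$, which implies existence of a limit of $f$ at $\zeta_0$. The key analytic input is the length--area estimate
\[
\int_0^{1}L(f(C_r))^{2}\,\frac{dr}{r}\;\leq\;2\pi\,\mathrm{Area}(D),
\]
obtained by writing $L(f(C_r))=\int_{C_r}|f'(z)|\,|dz|$, applying Cauchy--Schwarz in $|dz|$ and then using the change-of-variables formula $\iint_\D|f'|^2\,dA=\mathrm{Area}(D)$. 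Since $D$ is bounded, the right-hand side is finite, so for every $\rho>0$ small the mean-value theorem yields some $r_\rho\in(\rho,\sqrt\rho)$ with
\[
L(f(C_{r_\rho}))^{2}\;\leq\;\frac{4\pi\,\mathrm{Area}(D)}{|\log\rho|}\;\longrightarrow\;0.
\]

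Next I would exploit this short arc together with local connectedness. Because $f(C_{r_\rho})$ has finite length, its two endpoints $a_\rho,b_\rho$ exist as limits in $\C$, and they necessarily lie in $\partial D$ (they are accumulation points of $f(z)$ as $z\in\D$ approaches $\partial\D$). Since $|a_\rho-b_\rho|\le L(f(C_{r_\rho}))\to 0$, local connectedness of $\partial D$ provides, for any prescribed $\varepsilon>0$, a connected compact set $K_\rho\subset\partial D$ with $a_\rho,b_\rho\in K_\rho$ and $\mathrm{diam}\,K_\rho<\varepsilon$, as soon as $\rho$ is small enough. Then $\Gamma_\rho:=\overline{f(C_{r_\rho})}\cup K_\rho$ is a connected compact set of diameter at most $L(f(C_{r_\rho}))+\varepsilon$.

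The remaining, and technically most delicate, step is to argue that $f(U_{r_\rho})$ is trapped on one side of $\Gamma_\rho$, and in particular has diameter controlled by $\mathrm{diam}\,\Gamma_\rho$. I expect this to be the main obstacle, since it requires identifying the boundary of the open set $f(U_{r_\rho})\subset D$ with a subset of $\overline{f(C_{r_\rho})}\cup(\text{cluster set of }f\text{ on }\partial\D\cap\overline{B_{r_\rho}(\zeta_0)})$, and then observing that this cluster set is contained in a connected component of $\partial D\setminus\{a_\rho,b_\rho\}$, which by the local-connectedness argument lies in (a small enlargement of) $K_\rho$. Granting this, $\mathrm{diam}\,f(U_{r_\rho})\le L(f(C_{r_\rho}))+\varepsilon$, and letting $\rho\to 0$ shows that the cluster set of $f$ at $\zeta_0$ has diameter at most $\varepsilon$. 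Since $\varepsilon>0$ was arbitrary, the cluster set is a single point, i.e.\ $\lim_{z\to\zeta_0}f(z)$ exists. Defining $\tilde f(\zeta_0)$ to be this limit for every $\zeta_0\in\partial\D$, the uniform control provided by the estimate above promotes the pointwise limit to a continuous extension on $\overline\D$, completing the proof.
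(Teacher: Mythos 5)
You have the right skeleton: the easy direction via Hahn--Mazurkiewicz, and for (b) $\Rightarrow$ (a) the length--area estimate on concentric arcs plus local connectedness to join the two endpoints of the short image arc. Both of these match the paper.

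However, there is a genuine gap exactly where you flag one. The claim that $f(U_{r_\rho})$ is ``trapped on one side'' of $\Gamma_\rho = \overline{f(C_{r_\rho})}\cup K_\rho$, and hence has small diameter, is the heart of the matter and your sketch of it does not go through as stated. In particular, the assertion that the cluster set of $f$ on $\partial\D\cap\overline{B_{r_\rho}(\zeta_0)}$ lies in a connected component of $\partial D\setminus\{a_\rho,b_\rho\}$ ``which by the local-connectedness argument lies in (a small enlargement of) $K_\rho$'' is false in general: local connectedness gives you \emph{one} small connected subset of $\partial D$ joining $a_\rho$ and $b_\rho$, but it does not say that removing the two nearby points $a_\rho, b_\rho$ from the full boundary continuum leaves a small component on the relevant side. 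What is actually needed is a planar separation theorem. The paper's proof supplies exactly this tool: Janiszewski's theorem. Concretely, it takes two interior points $z,z'$ close to $\zeta_0$, assumes $|f(z)-f(z')|\geq 2\varepsilon$, and derives a contradiction by noting that $B\cup f(C)\subset B_\varepsilon(a)$ cannot separate $0$ from both $f(z)$ and $f(z')$, that $\partial D$ separates neither from $0$, and that $(B\cup f(C))\cap\partial D = B$ is connected, so Janiszewski forbids $f(C)\cup\partial D$ from separating them -- contradicting that $C\cup\partial\D$ separates $0$ from $z$ in $\D$. This directly establishes uniform continuity of $f$ on $\D$, and continuous extension to $\overline\D$ then follows from the standard extension theorem for uniformly continuous maps; your final step (``promoting the pointwise limit to a continuous extension'') is also left vague, and the uniform-continuity formulation sidesteps it entirely. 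So: right estimate, right use of local connectedness, but the separation step needs Janiszewski (or an equivalent plane-topology result) to be made rigorous, and the argument is cleaner if run at the level of pairs of interior points rather than cluster sets.
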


In the example above, $\partial E$ is not locally connected at $z=i$ (and the unboundedness of $E$ can be neglected as the same statements  also hold for the domain $B_2(0)\cap E$).

\begin{lemma}\label{pomm0} Let $U\subset \C$ be open and let $f:U\to B_R(0)$ be univalent. Furthermore, let $c\in \C$ and $C(r)=U\cap \partial B_r(c).$ Then, for $0<\rho < 1,$ we have:
$$\inf_{\rho < r <\sqrt{\rho}} L(f(C(r))) \leq \frac{2\pi R}{\sqrt{\log 1/ \rho}}. $$
In particular, there is a null sequence $(r_n)_n$ with
$$L(f(C(r_n)))\to 0 \quad \text{for $n\to\infty$}.$$
\end{lemma}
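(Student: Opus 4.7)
The plan is to use a standard length--area argument based on the Cauchy--Schwarz inequality, exploiting the fact that the Euclidean area of $f(U)$ is bounded above by $\pi R^2$ since $f(U) \subset B_R(0)$ and $f$ is univalent.

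First, I would parametrize $C(r)$ using polar coordinates centered at $c$: for each $r$, let $I(r) = \{\theta \in [0,2\pi) : c + re^{i\theta} \in U\}$ and write
\[
L(f(C(r))) = \int_{I(r)} |f'(c+re^{i\theta})|\, r\, d\theta.
\]
Applying Cauchy--Schwarz with weight $r\, d\theta$ and using $|I(r)| \leq 2\pi$, I obtain
\[
L(f(C(r)))^2 \leq \Bigl(\int_{I(r)} r\, d\theta\Bigr)\Bigl(\int_{I(r)} |f'(c+re^{i\theta})|^2 r\, d\theta\Bigr) \leq 2\pi r \int_{I(r)} |f'|^2 r\, d\theta.
\]
Dividing both sides by $r$ gives $L(f(C(r)))^2/r \leq 2\pi \int_{I(r)} |f'|^2 r\, d\theta$.

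The key step is now to integrate in $r$ from $\rho$ to $\sqrt{\rho}$. The right-hand side becomes, up to the factor $2\pi$, the integral of $|f'|^2$ over the annular region $A = \{z \in U : \rho < |z - c| < \sqrt{\rho}\}$ in polar coordinates, which by the change-of-variables formula for univalent maps equals the Euclidean area of $f(A) \subset B_R(0)$, hence is at most $\pi R^2$. Therefore
\[
\int_\rho^{\sqrt{\rho}} \frac{L(f(C(r)))^2}{r}\, dr \leq 2\pi \cdot \pi R^2 = 2\pi^2 R^2.
\]

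Finally, denoting by $\phi_{\min}$ the infimum of $L(f(C(r)))$ over $r \in (\rho, \sqrt{\rho})$, I would bound the left-hand side below by $\phi_{\min}^2 \cdot \int_\rho^{\sqrt{\rho}} dr/r = \phi_{\min}^2 \cdot \tfrac{1}{2}\log(1/\rho)$, which yields
\[
\phi_{\min}^2 \leq \frac{4\pi^2 R^2}{\log(1/\rho)},
\]
and the claimed inequality follows by taking the square root. The ``in particular'' statement then follows immediately by choosing a sequence $\rho_n \downarrow 0$ and picking $r_n \in (\rho_n, \sqrt{\rho_n})$ achieving (or approximating) the infimum, so that $r_n \leq \sqrt{\rho_n} \to 0$ and $L(f(C(r_n))) \to 0$. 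No serious obstacle is expected here; the only delicate point is keeping track of the weights in Cauchy--Schwarz and confirming that the polar-coordinate area formula applies, which is justified by univalence of $f$.
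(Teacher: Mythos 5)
Your proof is correct and follows essentially the same length--area argument as the paper: Cauchy--Schwarz on the circular arcs, integration in $r$ over the annulus $\rho < |z-c| < \sqrt{\rho}$, the area bound $\pi R^2$ from univalence, and extraction of the infimum via the logarithmic weight. The only superficial difference is that the paper first integrates over all $r \in (0,\infty)$ to get the global area bound and then restricts, whereas you restrict to the annulus from the start; this is immaterial.
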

\begin{proof} We have
\begin{eqnarray*}
&&L(f(C(r)))^2 = \left(\int_{C(r)} |f'(z)||dz|\right)^2 \underset{\text{Cauchy-Schwarz inequality}}{\leq} \int_{C(r)} |dz| \int_{C(r)} |f'(z)|^2|dz| \\
&\leq&2\pi r
 \int_{c+re^{i\varphi}\in U} |f'(c+re^{i\varphi})|^2 r d\varphi. 
\end{eqnarray*}
Integration with respect to $r$ yields
\begin{eqnarray*}
&&\int_0^\infty L(f(C(r)))^2/r dr = 2\pi \int_0^\infty\int_{c+re^{i\varphi}\in U} |f'(c+re^{i\varphi})|^2 r d\varphi dr=2\pi F(f(U))\leq 2\pi^2 R^2. 
\end{eqnarray*}
Hence 
\begin{eqnarray*}
&&2\pi^2R^2 \geq \int_\rho^{\sqrt{\rho}} L(f(C(r)))^2/r dr \\&\geq&  
\inf_{\rho < r <\sqrt{\rho}} L(f(C(r)))^2 \cdot \int_\rho^{\sqrt{\rho}} 1/r dr 
= \inf_{\rho < r <\sqrt{\rho}} L(f(C(r)))^2 \cdot \frac1{2}\log(1/\rho). 
\end{eqnarray*}

\end{proof}

For the proof of Theorem \ref{loc_con} we need the following topological statements:
\begin{itemize}
	\item[(i)] Continuous extension theorem: If $f:\D\to \C$ is uniformly continuous (for every $\eps>0$ there is $\delta>0$ such that for every $z,w\in\D$ the inequality $|z-w|<\delta$ implies $|f(z)-f(w)|<\eps$), then $f$ has a continuous extension to $\overline{\D}$.
	\item[(ii)] Janiszewski's theorem: Let $A,B\subset \C$ be closed sets, such that $A\cap B$ is connected. If $a,b\in \C$ are neither disconnected by $A$ nor by $B$ (they lie in the same connected component of the complements $\C\setminus A$ and $\C\setminus B$ resp.), then they are not disconnected by $A\cup B$ either. 
\end{itemize}

\begin{proof}[Proof of Theorem \ref{loc_con}]
Assume a), then  $f(\partial\D)$ is a compact, continuous curve, and thus it is locally connected.\\
Next, assume b), i.e.\ $\partial D$ is locally connected. We assume that $f(0)=0$. So there exists $R_0<R$ with
$$B_{R_0}(0)\subset D \subset B_R(0).$$ 
 Let $0<\eps < R_0$ and $0<\delta < \eps$, such that, for $a,b\in \partial D$ with $|a-b|<\delta$, there exists a compact, connected set $B\subset \partial D$  with
 $a,b\in B$ and $\sup_{u,v\in B}|u-v|<\eps.$\\
Next, we choose  $0<\rho<1/4$ such that $2\pi R/(\sqrt{\log(1/\rho)})<\delta.$ Let $z,z'\in \D$ with $1/2<|z|,1/2<|z'|$ and $|z-z'|<\rho.$ Due to Lemma \ref{pomm0} with $U=\D$ and $c=z$, there exists a $\rho < r<1/2$ such that
\begin{equation}\label{help}\tag{$*$}L(f(C))<\delta < \eps, \quad C=\D\cap \partial B_r(z).\end{equation}
We show that this implies
$$|f(z)-f(z')|<2\eps,$$
i.e.\ $f$ is uniformly continuous and consequently has a continuous extension to $\overline{\D}$.\\
Assume that $|f(z)-f(z')|\geq 2\eps$ with $\overline{C}\cap \partial\D \not=\emptyset.$ The curve $\overline{f(C)}$ has two end points $a,b\in \partial D;$ otherwise, the Euclidean length of 
$f(C)$ would be $\infty$. Because of \eqref{help}, we have $|a-b|<\delta.$ Consequently, there exists a compact, connected set $B\subset\partial D$ with $a,b\in B$ and $\sup_{u,v\in B}|u-v|<\eps.$ Hence
$$B\cup f(C)\subset B_\eps(a), \quad 0\not\in B_\eps(a).$$
Because of $|f(z)-f(z')|\geq 2\eps$, the points $0$ and $f(z)$ (or $0$ and $f(z')$) are not disconnected by $B\cup f(C)$. As $0$ and $f(z)$ are not disconnected by $\partial D$ either, and because  $(B\cup f(C))\cap \partial D = B$ is connected,  Janiszewski's theorem implies that  $f(0)=0$ and $f(z)$ are not disconnected by the set $(B\cup f(C))\cup \partial D= f(C)\cup \partial D$. Then, $0$ and $z$ are not disconnected by  $C\cup \partial\D$, a contradiction.
\end{proof} 

Let $E\subset \C$ be connected. Then $p\in E$ is called a cut point of $E$ if  $E\setminus\{p\}$ is not connected.

\begin{theorem}\label{pomm} Let $f:\D \to D$ be conformal, $D$ bounded and $\partial D$ locally connected. Let $a\in \partial D$ and
$$A=f^{-1}(\{a\}), \quad m=|A|\leq \infty.$$
Then  $a$ is a cut point of $\partial D$ if and only if $m>1$. The components of $\partial D\setminus\{a\}$ have the form $f(I_k),$ where $k=1,...,m$ or $k\in\N$, where
$I_k\subset \partial \D$ are open circular arcs.
\end{theorem}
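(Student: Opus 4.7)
The plan is to use Theorem~\ref{loc_con} to extend $f$ continuously to $\overline{\D}$, then to exploit the Jordan curve theorem on images of chords of $\overline{\D}$. By Theorem~\ref{loc_con}, $f$ extends continuously to $f:\overline{\D}\to\overline{D}$; since $f(\overline{\D})$ is compact and contains $f(\D)=D$, it equals $\overline{D}$, and the injectivity of $f|_\D$ then forces $f(\partial\D)=\partial D$. The fibre $A=f^{-1}(\{a\})$ is a nonempty closed subset of $\partial\D$, and $\partial\D\setminus A$ decomposes into at most countably many open arcs $(I_k)_k$.

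The easy half of the component description is then immediate: each $f(I_k)$ is the continuous image of a connected arc and is disjoint from $\{a\}$, so it lies in a single component of $\partial D\setminus\{a\}$; conversely, any $p\in\partial D\setminus\{a\}$ has some preimage in $\partial\D\setminus A$, hence in some $I_k$, so $\partial D\setminus\{a\}=\bigcup_k f(I_k)$. In particular, if $|A|=1$ then $\partial\D\setminus A$ is a single connected arc and $\partial D\setminus\{a\}=f(I_1)$ is connected, so $a$ is not a cut point. This covers one direction of the equivalence and half of the structural statement.

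The main obstacle is showing that distinct arcs $I_k\neq I_l$ produce distinct components of $\partial D\setminus\{a\}$. I would pick $q,q'\in A$ that cyclically separate $I_k$ from $I_l$ on $\partial\D$ (for instance, the two endpoints of $\overline{I_k}$, which lie in $A$ because $A$ is closed and $I_k$ is a component of its complement). The chord $\lambda=[q,q']$ splits $\overline{\D}$ into two closed half-discs $H_1,H_2$ arranged so that $I_k\subseteq\partial H_1\cap\partial\D$ and $I_l\subseteq\partial H_2\cap\partial\D$. Since $f$ is injective on $\D$ and hence on the open chord, and $f(q)=f(q')=a$, the image $\gamma=f(\lambda)$ is a Jordan curve in $\overline{D}$ based at $a$ whose complement $\gamma\setminus\{a\}$ lies in $D$; in particular $\gamma\cap\partial D=\{a\}$.

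By the Jordan curve theorem $\gamma$ separates $\hat{\C}$ into two domains, and by injectivity of $f|_\D$ the open sets $f(\mathring{H}_1)$ and $f(\mathring{H}_2)$ are the two components of $D\setminus\gamma$, sitting on opposite sides of $\gamma$. Taking closures in $\hat{\C}$ and using $\gamma\cap\partial D=\{a\}$, one obtains $f(I_k)\subseteq\overline{f(\mathring{H}_1)}\cap(\partial D\setminus\{a\})$ on one side of $\gamma$, and $f(I_l)$ on the other. Any path in $\partial D\setminus\{a\}$ from $f(I_k)$ to $f(I_l)$ would have to cross $\gamma$, which is impossible since $\gamma\cap\partial D=\{a\}$. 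Hence $f(I_k)$ and $f(I_l)$ lie in different components of $\partial D\setminus\{a\}$, which combined with the covering property $\partial D\setminus\{a\}=\bigcup_k f(I_k)$ identifies each component with a unique $f(I_k)$ and shows that $a$ is a cut point precisely when $m=|A|\geq 2$.
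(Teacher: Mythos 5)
Your argument is essentially the paper's: both build a Jordan curve $\gamma\subset D\cup\{a\}$ by mapping forward a curve in $\D$ that joins two points of $A$ bracketing $I_k$, then invoke the Jordan curve theorem to place $f(I_k)$ and $f(I_l)$ in different complementary components of $\gamma$ (hence different components of $\partial D\setminus\{a\}$, since $\gamma\cap\partial D=\{a\}$). The one spot where you are thinner than the paper is the assertion that $f(\mathring{H}_1)$ and $f(\mathring{H}_2)$ sit on opposite sides of $\gamma$; the paper makes this precise by introducing an auxiliary curve $C'$ in $\D$ with endpoints in $I_j$ and $I_k$ that crosses $C$ exactly once, so that $f(C')$ crosses $\gamma$ once and the standard crossing/parity property of Jordan curves pins its endpoints in distinct components.
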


\begin{proof} Note that $f$ can be extended continuously to $\overline{\D}$ by Theorem \ref{loc_con}.  \\
If $m<\infty,$ then the set $\partial\D\setminus A$ consists of $m$ pairwise disjoint open circular arcs $I_k$.\\ 
Now let $m=\infty$. 
Then the set $\partial\D\setminus A$ is open in $\partial\D$ and thus it is a countable union of disjoint open circular arcs $I_k$, $k\in\N$.
So
$$\partial D\setminus\{a\} = f(\partial\D\setminus A) = \bigcup_{k=1}^m f(I_k)\quad \text{if $m<\infty$}, \qquad 
\partial D\setminus\{a\} = \bigcup_{k\in\N} f(I_k)\quad \text{if $m=\infty$}.$$
As $I_k$ is connected, also $f(I_k)$ is connected. \\
Let $m>1.$ Then $f(I_j),f(I_k),$ $j\not=k,$ are not connected within $\partial D\setminus\{a\}$:\\
Let $C$ be a curve in $\D,$ that connects the end points of $I_k$. Then $f(C)\cup \{a\}$ is a closed Jordan curve in $D\cup\{a\}$. 
Let $C'$ be a curve in $\D$ with end points in $I_j$ and $I_k$ such that $C$ and $C'$ intersect exactly once. Then $f(C)$ and $f(C')$ intersect exactly once. So  $f(I_j)$ and $f(I_k)$ lie in different components of 
$\C\setminus f(C),$ and consequently they are not connected in $\partial D\setminus\{a\}.$\\
If $m=1,$ then $\partial D\setminus\{a\}=f(\partial \D\setminus\{p\})$ for some $p\in\partial\D.$ Hence $\partial D\setminus\{a\}$ is connected and $a$ is not a cut point of $\partial D$. 
\end{proof}

\begin{theorem}\label{fortsetzen} Let $f:\D \to D$ be a conformal mapping onto a bounded domain $D.$ Then the following statements are equivalent:
\begin{itemize}
	\item[a)] $f$ can be extended to a homeomorphism on $\overline{\D}$.
	\item[b)] $\partial D$ is a Jordan curve.
\end{itemize}
\end{theorem}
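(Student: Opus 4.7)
The plan is to prove the two implications separately, using the results already developed in the appendix. The direction (a) $\Rightarrow$ (b) is essentially immediate: if $\overline{f}:\overline{\D}\to\overline{D}$ is a homeomorphism, then $\partial D = \overline{f}(\partial \D)$ is the image of the Jordan curve $\partial\D$ under the homeomorphism $\overline{f}|_{\partial\D}:\partial\D \to \partial D$, and hence is itself a Jordan curve.

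For the harder direction (b) $\Rightarrow$ (a), I would first observe that since $\partial D$ is homeomorphic to $\partial\D$, it is compact and locally connected. Thus Theorem \ref{loc_con} provides a continuous extension $\overline{f}:\overline{\D}\to\overline{D}$, and it remains to prove that $\overline{f}$ is a homeomorphism. The key step is injectivity on $\partial\D$. Suppose for contradiction that there exist distinct points $p_1,p_2\in\partial\D$ with $\overline{f}(p_1)=\overline{f}(p_2)=a\in\partial D$. Then the set $A = f^{-1}(\{a\})\subset \partial\D$ has cardinality $m\geq 2$, and Theorem \ref{pomm} forces $a$ to be a cut point of $\partial D$. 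But a Jordan curve has no cut points: removing one point from a topological circle leaves a space homeomorphic to an open arc, which is connected. This contradiction yields injectivity of $\overline{f}$ on $\partial\D$.

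To obtain injectivity on all of $\overline{\D}$, note that $f$ is already injective on $\D$, while the maximum principle (applied to $f$, which is non-constant holomorphic into the bounded set $D$) gives $\overline{f}(\partial\D)\subset \partial D$, so the boundary values cannot coincide with interior values. For surjectivity onto $\overline{D}$: the set $\overline{f}(\overline{\D})$ is compact, hence closed, and contains $f(\D)=D$, so it contains $\overline{D}$; the reverse inclusion is clear by continuity. Finally, a continuous bijection from the compact space $\overline{\D}$ onto the Hausdorff space $\overline{D}$ is automatically a homeomorphism.

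The main obstacle is the injectivity on the boundary, but this is resolved cleanly by the cut-point dichotomy provided by Theorem \ref{pomm}; no new complex-analytic input is needed beyond what the appendix has already assembled.
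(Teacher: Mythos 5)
Your proposal is correct and follows essentially the same route as the paper: use Theorem \ref{loc_con} to get the continuous extension and Theorem \ref{pomm} together with the fact that a Jordan curve has no cut points to get boundary injectivity. You fill in a few details the paper leaves implicit (that boundary values land in $\partial D$, surjectivity, and the compact--Hausdorff argument for the inverse), but the core idea and the cited lemmas coincide.
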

\begin{proof} The direction $a) \Longrightarrow b)$ is obvious.\\
Let $\partial D$ be a Jordan curve. Then $f$ can be extended continuously to $\overline{\D}$ because of Theorem \ref{loc_con}. Each $a\in \partial D$ is not a cut point of $\partial D,$ so $|f^{-1}(\{a\})|=1$ due to Theorem \ref{pomm}, i.e.\ $f$ is injective on $\overline{\D}.$
\end{proof}

\begin{remark} In fact, every conformal mapping $f:\D\to D,$ $D$ bounded, can be extended to a homeomorphism on $\overline{\D}$. Here,  $f:\partial\D \to P(D)$ maps the boundary $\partial \D$ onto the set $P(D)$ of all \emph{prime ends} of $D$.\\
A prime end corresponds to a set $M\subset \partial D$ of points and an ``approach direction''. Example: If $f:\D\to \D\setminus [0,1]$ is conformal, then there are two points $a,b\in\partial \D$ with $f(a)=f(b)=1/2.$ One of these points corresponds to the prime end ``$1/2$ from above'' and the other one to ``$1/2$ from below''.     
\end{remark}

\chapter{The Bieberbach conjecture}\label{sec_Bie}

A holomorphic function is locally injective if and only if it's derivative is not $0$. Is it possible to describe the global injectivity of a holomorphic function analytically? We define the class $S$ of all injective holomorphic mappings on $\D$ with normalization of $f(0)$ and $f'(0):$
$$S=\{f:\D\to\C \,|\, \text{$f$ is injective and holomorphic, $f(0)=0, f'(0)=1$}\}.$$
The property ``injective and holomorphic'' is also called \emph{univalent} or \emph{\underline{s}chlicht}.
If $g:\D\to \C$ is univalent, then $(g(z)-g(0))/g'(0)\in S.$

\begin{example} The identity $z\mapsto z$ and the Koebe function $k(z)=\frac{z}{(1-z)^2}$ belong to $S$. The Koebe function maps $\D$ conformally onto $\C\setminus (-\infty,-\frac{1}{4}]$, which can be seen by writing \[k(z)=\frac1{4}\left(\left(\frac{1+z}{1-z}\right)^2-1\right).\] \hfill $\blacksquare$
\end{example}

\begin{example}\label{sqrt_tr} Let $f(z)=z+ \sum_{n=2}^\infty a_nz^n\in S.$ Then there are several transformations generating new functions in $S.$ For example the rotation:\\
For $\alpha\in \R$, we have $$e^{-i\alpha}f(e^{i\alpha}z) = z + \sum_{n=2}^\infty a_n e^{i(n-1)\alpha} z^n \in S.$$ 
The square root transform $\sqrt{f(z^2)}\in S:$\\
Here, we define $\sqrt{\cdot}$ by writing
$$\sqrt{f(z^2)}=g(z):=z \sqrt{1 + a_2z^2 + a_3z^4 + ...}$$ (The function $z\mapsto 1 + a_2z^2 + a_3z^4 + ...$ has no zeros in $\D$. Otherwise, $f$ would have at least two zeros.) Then  $$\sqrt{f(z^2)}=z + \frac{a_2}{2} z^3 + c_5z^5+...,$$ is an odd mapping. It is univalent, for $g(z_1)=g(z_2)$ implies $f(z_1^2)=f(z_2^2),$ thus $z_1=\pm z_2$. If $z_1=-z_2,$ then $g(z_2)=-g(z_2),$ thus $g(z_2)=0,$ i.e.\ $z_2=z_1=0.$\hfill $\blacksquare$
\end{example}

In order to understand $S$ better, we first look at the class $\Sigma$ of all univalent mappings $g$ defined in $\C\setminus\overline{\D}$ with $g(z)=z+\sum_{n=0}^\infty b_n z^{-n}.$ If $f\in S,$ then
\begin{equation}\label{SSigma} \frac1{f(1/z)} = \frac{z}{1+a_2 z^{-1} +...} = z - a_2 + (a_2^2-a_3)z^{-1}+... \in \Sigma.
\end{equation}   

\begin{theorem}[Area theorem] Let $g\in \Sigma$ and $E=\C\setminus g(\C\setminus \overline{\D}).$ Then
$$Area(E) = \pi\left(1-\sum_{n=1}^\infty n|b_n|^{2}\right).$$
\end{theorem}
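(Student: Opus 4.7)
The plan is to compute the area of the bounded hull $D_r$ enclosed by the Jordan curve $C_r:=g(\{|z|=r\})$ for $r>1$ via Green's theorem, substitute the Laurent expansions of $g$ and $g'$ directly, and let $r\downarrow 1$ to recover $\operatorname{Area}(E)$.

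First I would set up the geometry. Since $g$ is univalent on $\C\setminus\overline{\D}$ and $g(z)/z\to 1$ as $|z|\to\infty$, for each $r>1$ the restriction of $g$ to the analytic Jordan curve $|z|=r$ is a continuous injection, hence $C_r$ is a Jordan curve. By the argument principle and the normalization at $\infty$, $g$ maps $\{|z|>r\}$ biholomorphically onto the unbounded component of $\C\setminus C_r$; denote the bounded component by $D_r$. Then $E_r:=\C\setminus g(\{|z|>r\})=\overline{D_r}$ is a compact set whose interior is $D_r$, and as $r\downarrow 1$ the sets $g(\{|z|>r\})$ increase to $g(\C\setminus\overline{\D})$, so $E_r\downarrow E$. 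Because $E_r$ is bounded and $E_{r_0}$ has finite area for any fixed $r_0>1$, downward monotone convergence yields $\operatorname{Area}(E_r)\to\operatorname{Area}(E)$.

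Next I would compute $\operatorname{Area}(D_r)$ by the complex form of Green's theorem,
\[
\operatorname{Area}(D_r)=\frac{1}{2i}\oint_{C_r}\bar w\,dw=\frac{r}{2}\int_0^{2\pi}\overline{g(re^{i\theta})}\,g'(re^{i\theta})\,e^{i\theta}\,d\theta,
\]
with $C_r$ oriented counterclockwise. For $r>1$ both Laurent series
\[
\overline{g(re^{i\theta})}=re^{-i\theta}+\sum_{k=0}^{\infty}\overline{b_k}\,r^{-k}e^{ik\theta},\qquad
g'(re^{i\theta})e^{i\theta}=e^{i\theta}-\sum_{n=1}^{\infty} nb_n\,r^{-n-1}e^{-in\theta}
\]
converge absolutely and uniformly in $\theta$, so the product may be multiplied out and integrated term by term. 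Only the constant Fourier coefficient survives $\int_0^{2\pi}\!\!\cdot\,d\theta$. The $(re^{-i\theta})(e^{i\theta})$ term contributes $r$; the cross terms $re^{-i\theta}\cdot(-nb_n r^{-n-1}e^{-in\theta})$ and $\overline{b_k}r^{-k}e^{i(k+1)\theta}$ have no constant part; and in the double sum
\[
-\sum_{k\geq 0,\,n\geq 1}nb_n\overline{b_k}\,r^{-k-n-1}e^{i(k-n)\theta}
\]
the constant coefficient corresponds to $k=n$, contributing $-\sum_{n=1}^{\infty} n|b_n|^2 r^{-2n-1}$. Thus
\[
\operatorname{Area}(D_r)=\frac{r}{2}\cdot 2\pi\Bigl(r-\sum_{n=1}^{\infty} n|b_n|^2 r^{-2n-1}\Bigr)=\pi r^2-\pi\sum_{n=1}^{\infty} n|b_n|^2 r^{-2n}.
\]

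Finally, I would take $r\downarrow 1$. Since every partial sum satisfies $\pi\sum_{n=1}^{N}n|b_n|^2 r^{-2n}\leq \pi r^2-\operatorname{Area}(D_r)\leq \pi r^2$, letting first $r\downarrow 1$ and then $N\to\infty$ shows that $\sum_{n=1}^\infty n|b_n|^2$ converges and the monotone convergence theorem applies to the series, giving
\[
\operatorname{Area}(E)=\lim_{r\downarrow 1}\operatorname{Area}(D_r)=\pi\Bigl(1-\sum_{n=1}^{\infty} n|b_n|^2\Bigr).
\]
The only delicate points are the justification that $g$ maps $\{|z|>r\}$ onto the unbounded component of $\C\setminus C_r$ (a standard consequence of univalence plus the hydrodynamic normalization at $\infty$) and the monotone passage $E_r\downarrow E$ with finite area; both are routine, and termwise integration poses no trouble because we work strictly outside $|z|=1$ where the Laurent series converges absolutely and uniformly.
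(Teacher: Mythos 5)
Your proof is correct and follows essentially the same approach as the paper: apply Green's theorem to the Jordan curve $g(\{|z|=r\})$, multiply out the Laurent series and integrate termwise to obtain $\operatorname{Area}(D_r)=\pi r^2-\pi\sum_{n\geq 1}n|b_n|^2 r^{-2n}$, and let $r\downarrow 1$. You supply somewhat more detail than the paper on the topological preliminaries (that $g(\{|z|=r\})$ is a Jordan curve bounding $E_r$, and that $E_r\downarrow E$ justifies passing to the limit in the area), but the core computation is identical.
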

\begin{proof}
For $r>1$ we let $E(r) = \C \setminus g(\{z\in \C\,|\, |z|\geq r\})$ and $C_r=\{z\in \C\,|\, |z|= r\}.$ Then $E(r)$ is bounded by the Jordan curve $g(C_r)$. Green's theorem and the observation $\overline{w}dw = (x-iy)(dx+idy)=xdx+ydy+ixdy-iydx$ imply \begin{eqnarray*}&&Area(E(r)) \underset{\text{Green's theorem}}{=} \frac1{2}\int_{g(C_r)} xdy-ydx = \frac1{2i}\int_{g(C_r)} ixdy-iydx\\
&=& \frac1{2i}\int_{g(C_r)} xdx +ydy+ ixdy-iydx=\frac1{2i}\int_{g(C_r)} \overline{w}dw \\
&=& \frac1{2i}\int_{C_r} \overline{g(z)}g'(z)dz=\frac1{2i}\int_0^{2\pi} ire^{it}\overline{g(re^{it})}g'(re^{it})dt\\
&=& \frac1{2}\int_0^{2\pi} (r e^{-it}+\sum_{n=0}^\infty \overline{b_n} r^{-n}e^{int})
(re^{it}-\sum_{n=1}^\infty n b_n r^{-n}e^{-int})dt =\pi( r^2 - \sum_{n=1}^\infty n |b_n|^2 r^{-2n}).
\end{eqnarray*}
As $Area(E(r))\geq0,$ we have $ \sum_{n=1}^N n |b_n|^2 r^{-2n}\leq r^2$. Hence $ \sum_{n=1}^N n |b_n|^2 \leq 1$ for every $N\in\N.$ Consequently, the sum $\sum_{n=1}^\infty n |b_n|^2$ converges and we can take the limit $r\to 1$ to obtain
$$Area(E) = \pi\left( 1 - \sum_{n=1}^\infty n |b_n|^2\right).$$
\end{proof}

\begin{corollary}\label{Palazzone} Let $g(z)=z+\sum_{n=0}^\infty b_n z^{-n}\in \Sigma.$ Then  $$\sum_{n=1}^\infty n|b_n|^2 \leq 1.$$
In particular, $|b_1|\leq 1,$ and equality holds if and only if $g(z)=z+b_0 +a/z$ for some $a\in\partial \D.$
\end{corollary}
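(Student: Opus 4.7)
The inequality $\sum_{n=1}^\infty n|b_n|^2 \leq 1$ is immediate from the Area Theorem: since $\text{Area}(E)=\pi(1-\sum_{n=1}^\infty n|b_n|^2)$ and areas are non-negative, the bound follows. Applied to the $n=1$ term in isolation, this gives $|b_1|^2 \leq 1$, hence $|b_1|\leq 1$.

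For the equality characterization, suppose $|b_1|=1$. Then $\sum_{n=1}^\infty n|b_n|^2 \geq 1$, and combining with the upper bound from the Area Theorem yields $\sum_{n=1}^\infty n|b_n|^2 = 1$. Subtracting the $n=1$ contribution forces $\sum_{n\geq 2} n|b_n|^2 = 0$, so $b_n=0$ for every $n\geq 2$. Therefore $g(z)=z+b_0+b_1/z$ with $a:=b_1\in\partial\D$.

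Conversely, I have to verify that any function of the form $g(z)=z+b_0+a/z$ with $a\in\partial\D$ actually lies in $\Sigma$, so the statement ``equality holds if and only if'' is correctly characterized. This amounts to checking univalence on $\C\setminus\overline{\D}$: if $g(z_1)=g(z_2)$ with $z_1,z_2\in\C\setminus\overline{\D}$, then $z_1-z_2 = a(1/z_2-1/z_1)=a(z_1-z_2)/(z_1z_2)$, so either $z_1=z_2$ or $z_1z_2=a$. The second option is impossible because $|z_1z_2|>1=|a|$. Hence $g$ is univalent, confirming $g\in\Sigma$ with $|b_1|=1$.

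The proof will be very short; there is no real obstacle since the bulk of the work has been done in the Area Theorem. The only step requiring any attention is the converse univalence check for the extremal functions, which is a short direct computation. No further tools are needed.
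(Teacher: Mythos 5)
Your proof is correct and takes essentially the same route as the paper, which dispatches the corollary with a single sentence ("Follows directly from the area theorem, as $\mathrm{Area}(E)\geq 0$"). You spell out the equality analysis and, unlike the paper, explicitly verify the converse --- that $z\mapsto z+b_0+a/z$ with $|a|=1$ is in fact univalent on $\C\setminus\overline{\D}$ --- which is a small but worthwhile addition the paper leaves implicit.
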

\begin{proof} Follows directly from the area theorem, as $Area(E)\geq0.$
\end{proof}

\begin{corollary}\label{body} Let $f(z)=z+\sum_{n=2}^\infty a_nz^{n}\in S.$ Then 
$$|a_2^2-a_3|\leq 1.$$
\end{corollary}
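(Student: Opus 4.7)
The plan is to apply Corollary \ref{Palazzone} to an appropriate element of $\Sigma$ constructed from $f$. The transformation \eqref{SSigma} already gives exactly what we need: starting from $f(z) = z + \sum_{n=2}^\infty a_n z^n \in S$, the function $g(z) := 1/f(1/z)$ lies in $\Sigma$ and has the Laurent expansion
\[
g(z) = z - a_2 + (a_2^2 - a_3)z^{-1} + \cdots
\]
at $\infty$. Thus, comparing with the general form $g(z) = z + \sum_{n=0}^\infty b_n z^{-n}$ of elements of $\Sigma$, we read off $b_0 = -a_2$ and, crucially, $b_1 = a_2^2 - a_3$.

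Applying Corollary \ref{Palazzone} (in particular the bound $|b_1| \leq 1$) immediately yields
\[
|a_2^2 - a_3| = |b_1| \leq 1,
\]
which is the desired estimate. There is really no obstacle here: the content of the statement has been fully loaded into the area theorem and its corollary, and the square-root type manipulation has already been done in \eqref{SSigma}. The only thing to verify is the correct identification of the coefficient $b_1$ in the expansion of $1/f(1/z)$, which is a direct power series computation:
\[
\frac{1}{f(1/z)} = \frac{z}{1 + a_2 z^{-1} + a_3 z^{-2} + \cdots} = z\bigl(1 - (a_2 z^{-1} + a_3 z^{-2} + \cdots) + (a_2 z^{-1} + \cdots)^2 - \cdots\bigr),
\]
and collecting the coefficient of $z^{-1}$ gives $-a_3 + a_2^2$, confirming $b_1 = a_2^2 - a_3$.
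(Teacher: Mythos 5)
Your proof is correct and matches the paper's argument exactly: the paper also derives the bound by passing to $g(z)=1/f(1/z)\in\Sigma$ via \eqref{SSigma}, reading off $b_1=a_2^2-a_3$, and applying the estimate $|b_1|\leq 1$ from Corollary~\ref{Palazzone}. Your explicit verification of the coefficient of $z^{-1}$ is a welcome spelling-out of a step the paper leaves implicit.
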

\begin{proof} Follows directly from Corollary \ref{Palazzone} and \eqref{SSigma}.
\end{proof}

In 1916, L.\ Bieberbach proved the following inequality.

\begin{corollary}[Bieberbach's theorem] Let $f(z)=z+\sum_{n=2}^\infty a_nz^{n}\in S.$ Then
$$|a_2| \leq 2 $$
and equality holds if and only if $f(z)= \frac{z}{(1-e^{i\alpha}z)^2}=e^{-i\alpha}k(e^{i\alpha}z),$ $\alpha\in \R.$
\end{corollary}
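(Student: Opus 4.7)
The plan is to deduce the bound $|a_2|\le 2$ by combining the square root transform of Example \ref{sqrt_tr} with the inversion \eqref{SSigma} and the sharp inequality for the first coefficient in $\Sigma$ from Corollary \ref{Palazzone}. The point is that the square root transform exhibits $a_2/2$ as the first nontrivial coefficient of an odd function in $S$, which can then be fed into the area theorem to capture the factor $2$ that Corollary \ref{body} does not see.

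First I would set $g(z)=\sqrt{f(z^2)}\in S$, which by Example \ref{sqrt_tr} has the expansion
\[
g(z)=z+\frac{a_2}{2}z^3+c_5z^5+\cdots.
\]
Next I would pass to $\Sigma$ via \eqref{SSigma}, i.e. define
\[
h(z)=\frac{1}{g(1/z)}\in\Sigma.
\]
A short power series computation at $\infty$, analogous to the one in \eqref{SSigma}, gives
\[
h(z)=z-\frac{a_2}{2}\,z^{-1}+O(z^{-3}),
\]
so $h$ has Laurent coefficients $b_0=0$, $b_1=-a_2/2$, and $b_2=0$. Corollary \ref{Palazzone} then yields $|b_1|\le 1$, i.e. $|a_2|\le 2$, which is the desired inequality.

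For the equality case, suppose $|a_2|=2$, so $|b_1|=1$. The equality statement in Corollary \ref{Palazzone} forces
\[
h(z)=z+\frac{a}{z}
\]
for some $a\in\partial\D$ (using $b_0=0$). Inverting gives $g(z)=z/(1+az^2)$, and squaring produces
\[
f(z^2)=g(z)^2=\frac{z^2}{(1+az^2)^2},
\]
so $f(w)=w/(1+aw)^2$ for $w\in\D$. Writing $a=-e^{i\alpha}$ with $\alpha\in\R$ yields $f(z)=z/(1-e^{i\alpha}z)^2=e^{-i\alpha}k(e^{i\alpha}z)$. Conversely, these rotations of the Koebe function clearly lie in $S$ and realize $|a_2|=2$, so equality holds precisely for this family.

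The only nontrivial step is justifying that the square root transform is indeed in $S$ (already done in Example \ref{sqrt_tr}); everything else is algebraic manipulation of power series and a direct invocation of Corollary \ref{Palazzone}. There is no real obstacle, but the clever part worth highlighting is that working directly with $f$ would only give $|a_2^2-a_3|\le 1$ (Corollary \ref{body}); the passage through the odd function $g$ is what converts this into the sharp bound on $a_2$ alone.
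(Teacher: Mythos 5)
Your proof is correct and follows exactly the paper's argument: apply the square root transform to get an odd function in $S$, pass to $\Sigma$ via the inversion in \eqref{SSigma}, and read off $|a_2|\leq 2$ from Corollary \ref{Palazzone}, with the equality case traced back through the same chain. You have merely spelled out more explicitly the computation showing $b_0=0$ (forced by oddness) and the algebra recovering the rotated Koebe function, which the paper leaves implicit.
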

\begin{proof}According to Example \ref{sqrt_tr}, the mapping $\sqrt{f(z^2)}$ is in $S,$ and with \eqref{SSigma}, the mapping $(f(1/z^2))^{-1/2}$ is in $\Sigma.$ As $$(f(1/z^2))^{-1/2}=z-\frac{a_2}{2}z^{-1}+...,$$ it follows from Corollary \ref{Palazzone} that $|a_2|\leq 2.$
Equality holds if and only if  $g(z) = z - e^{i\alpha}z^{-1},$  $\alpha\in \R,$ which leads to  $f(z)= \frac{z}{(1-e^{i\alpha}z)^2}$.
\end{proof}

Bieberbach then stated the conjecture that $z+\sum_{n=2}^\infty a_n z^n \in S$ should imply $$|a_n|\leq n \quad \text{for all $n\geq 2.$}$$ Note that the Koebe function $k(z)=\sum_{n=1}^\infty nz^n$ shows that there is no lower bound for $|a_n|$. \\

This conjecture had a strong impact on the development of complex analysis and until the complete proof there have been many partial results: 
\begin{itemize}
\item 1916: $|a_2|\leq 2$ (Bieberbach), 
\item 1917: $|a_n|\leq 1$ for all $f\in S$ whose image domain is convex (Loewner),
\item 1921: $|a_n|\leq n$ for all $f\in S$ whose image domain is starlike w.r.t. 0 (Nevanlinna),
\item 1923: $|a_3|\leq 3$ (Loewner),
\item 1925: $|a_n|< en$ (Littlewood),
\item 1955: $|a_4|\leq 4$ (Garabedian and Schiffer), 
\item 1965: $|a_n| < 1.243 \cdot n$ (Milin)
\item 1968: $|a_6|\leq 6$ (Pederson, Ozawa),
\item 1972:  $|a_n| < \sqrt{7/6} \cdot n = 1.0801... \cdot n$ (FitzGerald)
\item 1972: $|a_5|\leq 5$ (Pederson and Schiffer)
\item 1984: $|a_n|\leq n$ for all $n\geq 2$ and all $f\in S$ (de Branges);
\end{itemize}
 see \cite{Koepf} for the historical development.  Loewner proved his result in \cite{Loewner:1923} by introducing a new dynamical description of univalent functions (Loewner chains \& Loewner's differential equation). Note that the title of \cite{Loewner:1923} ends with an ``I'', expressing the hope that the new method might soon have solved the complete conjecture, but there has never been a successive ``II''. 
The Bieberbach conjecture has finally been proven in 1985 by Louis de Branges, \cite{dBr85}, and indeed, this final proof also uses Loewner's method. 

\end{appendices}

\newpage

\end{document}